\renewcommand{\mathbbm}{\mathbb}
\numberwithin{equation}{section}
\renewcommand\subsubsection{\@secnumfont}{\bfseries}%
\renewcommand\subsubsection{\@startsection{subsubsection}{3}
  \z@{.5\linespacing\@plus.7\linespacing}{-.5em}%
  {\normalfont\bfseries}}
\newtheorem{Def}{Definition}
\newtheorem{Fact}{Fact}
\newtheorem{Thm}{Theorem}
\newtheorem{Rem}{Remark}
\newtheorem{Prop}{Proposition}
\newtheorem{lemma}{Lemma}
\newtheorem{Conjecture}{Conjecture}
\newcommand{\D}{\displaystyle}
\newcommand{\p}{\scriptstyle}
\newtheorem{Cor}{Corollary}
\newcommand{\tr}{\operatorname{tr}}
\newcommand{\Tr}{\operatorname{Tr}}
\newcommand{\supp}{\operatorname{supp}}
\newcommand\textcyr[1]{{\fontencoding{OT2}\fontfamily{wncyr}\selectfont #1}}
\newcommand{\cyrLL}{\text{\textcyr{L}}}
\newcommand{\cyrdd}{\text{\textcyr{d}}}
\newcommand{\Obj}{\mathtt{Obj}}
\newcommand{\thedelta}{{\text{\iminfamily{d}}}}
\newcommand{\deltae}{\delta_{\scaleto{\!\epsilon}{3.14pt}}}
\renewcommand{\deltae}{\text{\iminfamily{d}}_{\scaleto{\!\epsilon}{3.14pt}}}
\newcommand{\deltaee}{\delta_{\scaleto{\!\epsilon\epsilon}{3.14pt}}}
\renewcommand{\deltaee}{\thedelta_{\scaleto{\!\epsilon\epsilon}{3.14pt}}}
\newcommand{\deltaeo}{\delta_{\scaleto{\!\epsilon_1}{3.14pt}}\!}
\renewcommand{\deltaeo}{\thedelta_{\scaleto{\!\epsilon_1}{3.14pt}}\!}
\newcommand{\deltaet}{\delta_{\scaleto{\!\epsilon_2}{3.14pt}}\!}
\renewcommand{\deltaet}{\thedelta_{\scaleto{\!\epsilon_2}{3.14pt}}\!}
\newcommand{\deltaeot}{\delta_{\scaleto{\!\epsilon_1\epsilon_2}{3.14pt}}\!}
\renewcommand{\deltaeot}{\thedelta_{\scaleto{\!\epsilon_1\epsilon_2}{3.14pt}}\!}
\newcommand{\thetan}{\theta_\mathrm{n}}
\newcommand{\thetas}{\theta_\mathrm{s}}
\newcommand{\Deltaconf}{{\underline{\boldsymbol{\Delta}}}}
\newcommand{\uG}{\text{\hminfamily{G}}}
\newcommand{\uT}{\text{\hminfamily{T}}}
\newcommand{\uV}{\text{\hminfamily{V}}}
\newcommand{\uS}{\text{\hminfamily{S}}}
\newcommand{\ue}{\text{\hminfamily{e}}}
\newcommand{\um}{\text{\hminfamily{m}}}
\renewcommand{\ue}{\mathtt{e}}
\newcommand{\uu}{\mathtt{u}}
\newcommand{\uv}{\mathtt{v}}
\newcommand{\uw}{\mathtt{w}}
\newcommand{\un}{\mathtt{n}}
\newcommand{\us}{\mathtt{s}}
\newcommand{\ut}{\mathtt{t}}
\newcommand{\uo}{\mathtt{o}}
\newcommand{\up}{\mathtt{p}}
\newcommand{\uq}{\mathtt{q}}
\newcommand{\ux}{\mathtt{x}}
\newcommand{\uy}{\mathtt{y}}
\newcommand{\ug}{\mathtt{g}}
\newcommand{\uf}{\text{\hminfamily{f}}}
\renewcommand{\uf}{\mathtt{f}}
\newif\ifcomment
\begin{document}
\commenttrue

\title{Perturbing Isoradial Triangulations}

\author{Fran\c cois  David}
\address{François David : Université Paris-Saclay, CNRS, CEA, Institut de physique théorique, 91191 Gif-sur-Yvette Cedex, France}
\email{francois.david@ipht.fr}
\author{Jeanne Scott}
\address{Jeanne Scott : Department of Mathematics, Brandeis University, 415 South Street, Waltham, MA 02453, United States}
\email{jeanne@imsc.res.in}

\begin{abstract}

We consider infinite, planar, Delaunay graphs $\uG_\epsilon$ obtained
by locally deforming the coordinate embedding of a general, isoradial graph $\uG_\mathrm{cr}$, with respect to a real deformation parameter $\epsilon$.
We study three operators 
on $\uG_\epsilon$: the Beltrami-Laplace operator $\Delta(\epsilon)$, a conformal Laplacian $\Deltaconf(\epsilon)$, and the David-Eynard  K\"ahler operator $\mathcal{D}(\epsilon)$. 
The determinant of the later appears is the model of random Delaunay triangulations proposed by \cite{DavidEynard2014}, where it was conjectured  to be a discrete version of the Faddeev-Popov determinant in Polyakov's theory of 2D gravity (Liouville gravity).

All three operators coincide in the $\epsilon\to 0$ limit  with R. Kenyon's critical Laplacian $\Delta_\mathrm{cr}$ on $\uG_\mathrm{cr}$, whose Green's function $\Delta_\mathrm{cr}^{-1}$ 
is known.
Using Kenyon's exact and asymptotic results for $\Delta_\mathrm{cr}^{-1}$ we calculate 
the first and second order terms in the $\epsilon$-expansion of the log-determinant of $\Delta(\epsilon)$, $\mathcal{D}(\epsilon)$ and $\Deltaconf(\epsilon)$, and then study the large distance asymptotics of the \textit{bi-local} part in the second order term.
This entails a careful analysis of edge flips induced by the deformation and the Delaunay constraints.

We show that scaling limits of the second order bi-local term for both $\Delta$ and $\mathcal{D}$ exist and are independent 
 of the choice of initial isoradial graph $\uG_\mathrm{cr}$, while such a scaling limit does not exist in general for $\Deltaconf$ (due to the formation of curvature dipoles).
Moreover we find that the scaling limits for $\Delta$ and $\mathcal{D}$ coincide.
 
We define a discrete analogue of the stress-energy tensor for each of the three operators, and interpret the scaling limits of the bi-local term
for $\Delta$ and $\mathcal{D}$ as an operator product expansion (OPE) in a corresponding conformal field theory (CFT).

This OPE, for $\Delta$, is consistent with a CFT of central charge $c = - 2$ (an expected result), but, for $\mathcal{D}$, is at odds with the value of $c = - 26$ expected by Polyakov's 2D gravity.

Furthermore, while the stress-energy tensor associated to $\Delta$ has a well-defined scaling limit, the stress-energy tensor associated to $\mathcal{D}$ involves some terms (depending on the local geometry of the graph) which are not meaningful in the scaling limit (although its OPE makes sense).

Connections of our work  with some discrete statistical models at criticality are also explored. 

\end{abstract}
\maketitle

\tableofcontents

\section{Introduction}
\label{sIntro}
\subsection{Purpose and motivation}
\label{ssIntPurpMot}
This paper studies deformations of infinite isoradial planar graphs and triangulations, 
and the effect of these deformations on
three discrete Laplace-like operators defined on these graphs, and on their determinants.

Our initial motivation for this study is to better understand  the relation between a  model of Random Delaunay Triangulations (RDT) in the plane, proposed in \cite{DavidEynard2014} by B. Eynard and the first author (FD), and the field theory model of two-dimensional gravity (Liouville gravity) proposed first by A. Polyakov in \cite{Polyakov81}.
The random triangulation model of \cite{DavidEynard2014} is a model where any (finite) distribution of points $\mathbf{z}=\{z_i\}$ in the plane (or the Riemann sphere) is weighted by the determinant $\det[\mathcal{D}]$ of a discrete operator $\mathcal{D}$ defined from the Delaunay triangulation $\mathcal{T}$ associated to $\mathbf{z}$ (hence the terminology Random Delaunay Triangulation Model). This discrete model has very interesting properties.
Thanks to \cite{Rivin1994} and \cite{DavidEynard2014}, it can be viewed as a model of $\mathrm{PSL}(2,\mathbb{C})$ invariant embeddings of 
{random abstract discrete rhombic surfaces}  into the Riemann sphere, and thus is related to random planar maps models and discrete 2d gravity. 
As shown in \cite{CharbonnierDavidEynard2019}, it also provides an alternative description of  the moduli space of the punctured sphere $\mathcal{M}_{0,N}$ equipped with the Weil-Petersson metric, since $\mathcal{D}$ defines a K\"ahler form on the space of Delaunay triangulations.  

The authors of  \cite{DavidEynard2014} pointed out a similarity between the discrete operator $\mathcal{D}$ of the RDT model and the continuous gauge fixing Faddeev-Popov operator $\mathbf{J}$ in Polyakov's model \cite{Polyakov81} (see Appendix \ref{AppBCsystem} and references therein for details), whose determinant gives the famous Liouville action for the conformal factor (the Liouville field, see Appendix \ref{AppBCsystem}). 
Like the discrete RDT model, Liouville gravity is a conformal field theory (CFT), invariant under PSL(2,$\mathbb{C}$) transformations on the sphere. We therefore want to probe this analogy further, and find discrete analogs of CFT structures in the RDT model, such as a stress-energy tensor $T$, some short distance operator product expansion (OPE), and whether such an OPE has a discrete central charge $c$ which can be compared to the central charge of the ghost sector of Liouville gravity, 
famously known to be $c_{\scriptscriptstyle{\mathrm{ghost}}}=-26$. 

The operator $\mathcal{D}$ on a general Delaunay triangulation is a special case of discretized Laplace-like operator (elliptic operator) defined on graphs. These operators can be viewed as discretizations of differential operators defined on Riemannian spaces, with respect to some metric, 
and are related to {some} quantum field theories (QFT), in particular some CFT's. 
They are interesting objects in their own right, both in mathematics (index theorems, Seeley-DeWitt heat kernel expansions, trace formulas) and in physics (conformal field theories, quantum gravity, string theory, statistical mechanics, etc.).
The simplest and perhaps most notable example is the scalar Laplace-Beltrami operator $\Delta$ 
acting on functions $\phi$ over a Riemannian manifold $M$ with metric ${\bf g} = (g_{\mu\nu})$ 
and given by
\begin{equation}
\label{ }
\Delta= -{1\over \sqrt{g}}\partial_\mu\, \sqrt{g}\,g^{\mu\nu}\,\partial_\nu
\end{equation}
with $\partial_\mu$ the standard derivative w.r.t. the local coordinate $x^\mu$ acting on scalar functions. 
The operator $\Delta$ is related to the massless scalar quantum free field theory (i.e. the Gaussian free field, or GFF) 
on a manifold $M$, see Appendix~\ref{AppGFF} for details.
Its functional determinant (properly defined), is related to the GFF partition function $Z$ through
\begin{equation}
\label{ZGFF}
\displaystyle
\det(\Delta)\ =\ Z^{-2}\quad \text{with}\qquad  Z\  =\  \int \frak{D} [ \phi ] \, e^{- S[\phi]}
\end{equation}
with $\phi$ the free field (a random scalar real function) and $S[\phi]$ the GFF action (see ~\ref{BosonAction}).
Both the action $S[\phi]$ and the partition function $Z$ depend 
explicitly on the metric $\mathbf{g}$ on $M$ and
the effect of varying the metric in the action $S[\phi]$ is encoded in the so-called stress-energy tensor $\mathbf{T}=(T^{\mu\nu})$. For a CFT such as the GFF, one has to consider the holomorphic and antiholomorphic components $T=-2\pi\,T_{zz}$ and $\overline T=-2\pi\,T_{\bar z\bar z}$ of $\mathbf{T}$ which encode specifically the effect {of changing the metric by an infinitesimal anti-analytic diffeomorphism} 
\begin{equation}
\label{zepsilonbar}
z\to z+\epsilon\, F(\bar z)
\end{equation} 
with $F$ an anti-analytic function of the complex coordinates $(z,\bar z)$ on $M$ (see Appendix~\ref{ssT2DCFT} for details).
The OPE for $T$
\begin{equation}
\label{OPETone}
 T(z)T(z')={c\over 2} {1\over (z-z')^4}+\cdots
\end{equation}
with $c$ the central charge of the CFT, is of special importance. It implies (see \ref{ssT2DCFT}) that the second variation of the logarithm of the partition function of the CFT, $\log Z$, under \ref{zepsilonbar} is
\begin{equation}
\label{{zepsilonbar}}
{c\over 4\,\pi^2} \iint d^2u\,d^2v \  
{\bar\partial  F(u)\,\bar \partial  F(v)
\over (u-v)^4}+
{\partial  \bar F(u)\, \partial  \bar F(v)\over (\bar u-\bar v)^4}+\text{contact terms}
\end{equation}
{In the cases we are interested in, the central charge $c$ is real, and this can of course be rewritten as the double integral of the real part of ${\bar\partial  F(u)\,\bar \partial  F(v)\over (u-v)^4}$.}

Accordingly, we shall try to define: (i) a discrete analog of the diffeomorphisms \ref{zepsilonbar} for Delaunay triangulations, (ii) a discrete stress-energy tensor $\mathbf{T}$ associated to the operator $\mathcal{D}$ of the RDT model, (iii) an analog for $\mathcal{D}$ of the OPE \ref{OPETone} and of formula \ref{{zepsilonbar}}.
This requires us to introduce and study an appropriate ``scaling limit'' (in the QFT sense)  of the RDT model. 

\medskip

This program turns out to be very difficult for general random Delaunay triangulations. As a first step we shall study deformations of a very specific subclass of Delaunay triangulations, namely \emph{isoradial Delaunay triangulations}.
One reason for this restriction is technical. The analysis is much simpler and explicit calculations can be done, thanks to the fact that on isoradial triangulations, the $\mathcal{D}$ operator is proportional to the critical Laplacian $\Delta_\mathrm{cr}$ considered by Kenyon in \cite{Kenyon2002} (to be defined later). 
Thanks to the methods of discrete analyticity, both the determinant $\det[\Delta_\mathrm{cr}]$ and the Green's function $\Delta_\mathrm{cr}^{-1}$ (the propagator) take simple explicit forms in terms of the geometry of the isoradial triangulation. 
A second reason is that isoradial triangulations can be viewed as  analogs of ``discrete flat metric'' (see section \ref{ssDelGraph}). 
It is therefore natural to study the operator $\mathcal{D}$ and the associated measure for triangulations which are close to but not exactly isoradial, as a means of understanding the relationship between the RDT model and 2d gravity, as suggested in \cite{DavidEynard2014}.

\medskip
This work is rather technical, limited in scope, but it represents a first step in this general program. 
In addition to studying the operator $\mathcal{D}$, we carry out a similar analysis for two related operators, also defined for Delaunay graphs: 
(i) the discrete Laplace-Beltrami operator $\Delta$, and (ii) a conformal Laplacian $\Deltaconf$ with PSL(2,$\mathbb{C}$) invariance properties.

Several issues require a lot of attention: 
(1) Under deformations,
the Delaunay constraints (defined precisely in \ref{sssNotations}) cause the incidence relations of the graph
to change (by edge flips). 
These flips are a potential source of discontinuities and  singularities for the operators and {the} determinants we are interested in.
(2) We want uniform estimates for the variation of operators and determinants, independent of the initial isoradial Delaunay graph. 
This is not always possible. 
(3) We also look for the existence of ``scaling limits'' (in the usual sense of statistical mechanics and quantum field theory), in particular for the discrete analogs of the OPE \ref{OPETone} and of formula \ref{zepsilonbar}, in order to recover a continuous QFT interpretation of our results.
 
Point (1) is treated thoroughly. Whitehead flips are under control for the operators $\mathcal{D}$ and $\Delta$, but are shown to induce discontinuities for $\Deltaconf$.
 
For point (2), uniform estimates are obtained for all three operators, for which
we get discrete analogs of the stress-energy tensor $\mathbf{T}$.
For $\Delta$ the OPE \ref{OPETone} \ref{zepsilonbar} holds with central charge $c=-2$, as expected. For $\mathcal{D}$ {an OPE holds} as well, and unexpectedly we obtain a central charge $c=-2$ too. 
We do not, however, recover formulas \ref{OPETone} and \ref{zepsilonbar} (specific to CFT's) for the conformal Laplacian $\Deltaconf$.

For point (3), good scaling limit results are obtained for $\Delta$ (this was to be expected), we prove similar results  $\mathcal{D}$, but they are valid under some restrictions. 
This program fails for $\Deltaconf$.

\medskip
Let us now be more specific, and summarize: (1) the main concepts and tools used in this paper, (2) the main results, and (3) the detailed plan and content of the paper.

\subsection{The concepts}
\label{ssConcepts}

\subsubsection{Delaunay graphs}
\label{ssDelGraph}
Delaunay triangulations in the plane are models of discrete space which has been studied by many authors, in particular in high energy physics \cite{CHRIST198289} and well as in statistical physics, condensed matter and soft matter physics.
Anticipating the precise definitions and details given in section 
\ref{sCritLapl}, we highlight some notions which are important:

A {\textbf{polyhedral graph}} $\uG$ is a planar graph (with finitely or infinitely many vertices) equipped
with an embedding $z: \mathrm{V}(\uG) \longrightarrow \Bbb{C}$ of its vertex set $\mathrm{V}(\uG)$ such 
that edges are mapped to straight line segments and faces are mapped to 
convex, cyclic polygons.
Accordingly we can associate with each face 
$\uf$ of $\uG$ the circumcircle $C_\uf$, the circumdisk $D_\uf$, and the corresponding circumradius $R(\uf)$
of its cyclic polygon with respect to the embedding.

A \textbf{Delaunay graph}  is a polyhedral graph $\uG$ such that, under the embedding,
(1) the interior of the circumdisk of each face of $\uG$ contains no vertices, and 
(2) no two faces share the same circumdisk. 
Equivalently the dual of Delaunay graph $\uG$ is the 
Voronoi complex $\uV$ associated to the set of (embedded) vertices of $\uG$. 

A \textbf{weak Delaunay graph} is a polyhedral graph $\uG$ such that condition (1) is satisfied.

A \textbf{(weak) Delaunay triangulation} $\uT$ is a (weak) Delaunay graph whose faces are all triangles.

\begin{figure}[h]
\begin{center}
\raisebox{-.8in}{\includegraphics[width=2in]{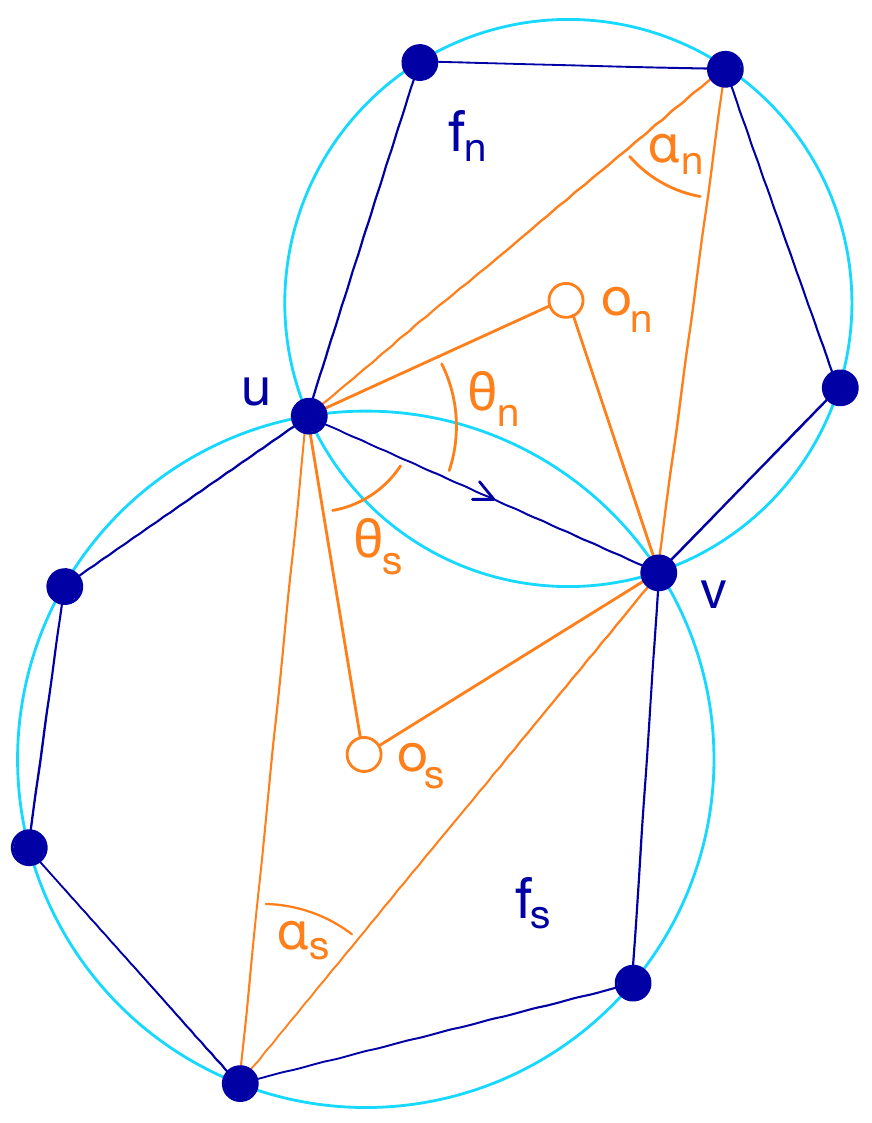}}
\caption{The north and south faces $\uf_\mathrm{n}$ and $\uf_\mathrm{s}$ of 
an (oriented) edge $\vec{\ue}=(\uu,\uv)$ are drawn in dark blue while the corresponding
circumcircles are outlined in light blue. Their
respective circumcenters $\uo_\mathrm{n}$ and $\uo_\mathrm{s}$, as well
the associated north and south angles $\theta_\mathrm{n}(\vec {\ue} \, )$ and $\theta_\mathrm{s}(\vec {\ue} \, )$,
are highlighted in orange.}
\label{triangles}
\end{center}
\end{figure}

Following \cite{DavidEynard2014}, we associate 
to an oriented edge $\vec{\ue}=(\uu,\uv)$ 
``North'' and ``South'' faces 
$\uf_\mathrm{n}$ and $\uf_\mathrm{s}$
along with angles $\theta_\mathrm{n}(\vec{\ue} \, )=\angle{\uv\uu\uo_\mathrm{n}}$ and 
$\theta_\mathrm{s}(\vec {\ue} \, )=\angle{\uo_\mathrm{s}\uu\uv}$ where $\uo_\mathrm{n}$ and $\uo_\mathrm{s}$ 
are the respective circumcenters of 
$\uf_\mathrm{n}$ and $\uf_\mathrm{s}$, as depicted in fig.~\ref{triangles}. 
Reversing the orientation of $\vec{\ue}$ interchanges the roles of North and South.
The \textbf{conformal angle} $\theta(\ue)$  associated to the unoriented edge $\ue$ is defined as %
\footnote{Note that the conformal angle $\boldsymbol{\theta}(\ue)$ considered in \cite{DavidEynard2014} is twice the conformal angle $\theta(\ue)$ defined here by \ref{thetaDSdef}  (i.e. $\boldsymbol{\theta}(\ue)= 2\,\theta(\ue)$). We choose  definition  \ref{thetaDSdef}  for compatibility with Kenyon's notations in \cite{Kenyon2002}.} 
\begin{equation}
\label{thetaDSdef}\theta(\ue)=(\theta_\mathrm{n}(\vec{\ue} \, )+\theta_\mathrm{s}(\vec{\ue} \, ))/2\ .
\end{equation}
The c.w. orientability of the north and south triangles enforces
\begin{equation}
\label{thetaorient} 
-\pi/2<\theta_\mathrm{n}(\vec{\ue})\ \text{and}\ \theta_\mathrm{s}(\vec{\ue})<\pi/2\ .
\end{equation}
The {Delaunay} condition ensures that 
\begin{equation}
\label{DelThetCond}
0 < \theta(\ue)<\pi/2
\end{equation}
while the weak Delaunay condition ensures that $0 \leq \theta(\ue)<\pi/2$.

Finally, as explained in \cite{DavidEynard2014} and in section~\ref{sCritLapl}, to each plane Delaunay graph $\uG$ we can associate an abstract \textbf{rhombic surface}
$\uS_\uG^\lozenge$ obtained by gluing rhombi $\lozenge(\ue)$ associated to the edges $\ue$ of $\uG$ according to the incidence relations of $\uG$. Each rhomb 
$\lozenge(\ue)$ has unit edge length and has a corresponding rhombus angle $2\theta(\ue)$. We view $\uS_\uG^\lozenge$ as a discretized Riemann surface with curvature concentrated at {the} vertices. This rhombic surface $\uS_\uG^\lozenge$ will be ``flat'', i.e. can be isometrically embedded in the plane, \emph{if and only if } for each face $\uf$ of $\uG$, the sum of the conformal angles of the edges $\ue$ which form the boundary of $\uf$ equals $\pi/2$
\begin{equation}
\label{ }
\sum\limits_{\ue \in\partial\uf} \theta(\ue)=\pi/2
\end{equation}

Equivalently, the Delaunay graph $\uG$ is \textbf{isoradial}, i.e. the circumradii $R(\uf)$ are all equal.
Alternatively a Delaunay graph $\uG$ is isoradial if and only if $\uS_\uG^\lozenge$ coincides with the planar 
bipartite \textbf{kite graph} $\uG^\lozenge$ discussed in section~\ref{ssBasicPlanar}.
Isoradial Delaunay graphs are also referred to as \textbf{flat graphs} or \textbf{critical graphs}.

\subsubsection{The random Delaunay triangulation model}
\label{sssRDTm}
The David-Eynard model of \cite{DavidEynard2014} is a theory of random (finite) Delaunay 
graphs which are sampled (with Le\-bes\-gue measure) 
according to the conformal angle values of the corresponding edges. 
By the Voronoi construction, 
a configuration of $N \geq 3$ distinct marked points $\{ z_1,  \dots , z_N \}$ 
in the extended plane $\Bbb{CP}^1$
is equivalent to a Delaunay graph $\uG$ with vertex set $\mathrm{V}(\uG) = \{1, \dots, N \}$ and
embedding $k \mapsto z_k$. 
This correspondence between point configurations and graphs is
$\mathrm{PSL}(2,\Bbb{C})$ equivariant
in the sense that the incidence relations which
define the Delaunay graph are invariant
under the action of $\mathrm{PSL}(2,\Bbb{C})$
by M\"obius transformations.
In this formulation
the relevant measure on the space of configurations of marked points is, fixing three points $(z_1, z_2, z_3)$ thanks to PSL(2,$\mathbb{C}$) invariance

\begin{equation}
\label{theDEmeasure}
\prod_{k = 4}^\mathrm{N}  dz_k^2  \,  { {\det}'\mathcal{D}\over { |z_1 - z_2|^2 |z_2 -z_3|^2 |z_1 - z_3|^2} }
\end{equation}
where $\mathcal{D}$ is the David-Eynard discrete K\"ahler operator 
of the graph $\uG$ as defined in \ref{KaelDelta} below, and ${\det}'\mathcal{D}$ 
is the  $(N-3) \times (N-3)$ principal minor of $\mathcal{D}$ with row and column
set $\{4, \dots, N \}$, see \cite{DavidEynard2014}.
We view ${\det}'\mathcal{D}$ as a {\it reduced determinant}
which suppresses the effect of the zero modes of $\mathcal{D}$.
As shown in \cite{CharbonnierDavidEynard2019}, the measure in \ref{theDEmeasure} 
is $\mathrm{PSL}(2,\Bbb{C})$ invariant and coincides with the Weil-Petersson measure on $\mathcal{M}_{0,N}$.
Points configurations whose corresponding Delaunay graph is a triangulation form a 
Zariski open subset and consequently the subspace of non-triangulations has measure zero.
For this reason we speak of the David-Eynard model as a theory of
random triangulations.

\subsubsection{The operators $\Delta$, $\Deltaconf$ and $\mathcal{D}$}
\label{sss3operators}
In this paper, we are interested in the three discrete operators defined on generic polyhedral graphs $\uG$: 
the Beltrami-Laplace operator $\Delta$, the conformal Laplacian $\Deltaconf$, and the David-Eynard K\"ahler operator $\mathcal{D}$.
All three operators act on the space $\Bbb{C}^{\mathrm{V}(\uG)}$ consisting of complex valued functions supported on the vertices $\mathrm{V}(\uG)$ 
of the graph  $\uG$.
\medskip
\par\noindent$\bullet$ 
The discrete \textbf{Beltrami-Laplace operator} $\Delta$ is defined for $\phi \in \Bbb{C}^{\mathrm{V}(\uG)}$ by 
\begin{equation}
\label{BLDelta}
\Delta \phi(\uu)=\sum_{\mathrm{edges}\ \vec{\ue} \, =(\uu, \uv)} c(\vec{\ue}\,)(\phi(\uu)-\phi(\uv))
\quad,\quad c(\vec{\ue} \,)={1\over 2} \big(\tan \theta_\mathrm{n}(\vec{\ue} \,)+\tan \theta_\mathrm{s}(\vec{\ue}\,) \big)
\end{equation}
This is a standard discretization of the Laplacian in the plane, both in physics  (see e.g. \cite{CHRIST198289}) and in mathematics.
It is a symmetric real operator.
\medskip
\par\noindent$\bullet$ 
The \textbf{conformal Laplacian} $\Deltaconf$, that we introduce here, is defined as
\begin{equation}
\label{ConfDelta}
\Deltaconf \phi(\uu)=\sum_{\mathrm{edges}\ \vec{\ue} \, =(\uu, \uv)} \tan \theta(\ue) \big(\phi(\uu)-\phi( \uv) \big)
\end{equation}
It is invariant under global conformal transformations  $z \stackrel{g}{\mapsto} {az + b \over {cz + d}}$
of the graph embedding $z: \mathrm{V}(\uG) \longrightarrow \Bbb{C}$ for $g \in \mathrm{PSL}_2(\mathbb{C})$.
It's worth noting that $\Deltaconf$ can be viewed as the discrete Laplace-Beltrami operator defined, not on the planar graph $\uG$, but rather on the 
image of $\uG$ inside the rhombic surface 
$\uS_\uG^\lozenge$ (i.e. the black vertices of $\uS_\uG^\lozenge$
where two black vertices are joined by an edge 
if and only if they lie on a common rhomb).
We point the reader \textcolor{blue}{to} a related construction 
in \cite{Mercat2000}.
As such, $\Deltaconf$ is a discretization of the Beltrami-Laplace operator 
on a Riemann surface with respect to a non-flat metric.
It is also a symmetric real operator.
\medskip
\par\noindent$\bullet$ 
The \textbf{K\"ahler operator} $\mathcal{D}$, which we are interested in, has been introduced in \cite{DavidEynard2014}. It is defined in terms of the geometry of the graph $\uG$ as
\begin{equation}
\label{KaelDelta}
\mathcal{D} \phi(\uu)=\sum_{\mathrm{edges}\ \vec{\ue} \, =(\uu, \uv)} 
{1\over 2}\left({\tan\theta_\mathrm{n}(\vec{\ue} \,)+ \mathrm{i} \over R^2_\mathrm{n}(\vec{\ue} \,)}+{\tan\theta_\mathrm{s}(\vec{\ue} \,)
- \mathrm{i} \over R^2_\mathrm{s}(\vec{\ue}\,)}\right)\big( \phi(\uu)-\phi(\uv) \big)
\end{equation}
where $ R_\mathrm{n}(\vec{\ue} \,)$ and $ R_\mathrm{s}(\vec{\ue} \,)$ are the circumradii of the north and south {faces $\uf_\mathrm{n}$ and 
$\uf_\mathrm{s}$} adjacent to $\vec{\ue}$ respectively.
It is a Hermitian complex operator.
Although not obvious from this definition \ref{KaelDelta}, the operator $\mathcal{D}$ transforms covariantly under global conformal PSL(2,$\mathbb{C}$) transformations of the graph embedding, and (even less obviously) it defines a K\"ahler metric $dz_\uu\mathcal{D}_{\uu\uv}d\bar z_\uv$ on the space of 
Delaunay graphs in the plane.

\medskip
These three operators can be defined for any {polyhedral graph} $\uG$.
The {weak} Delaunay condition on $\uG$ ensures that the three operators are positive semi-definite. 
If $\uG$ is isoradial (with common circumradius $R>0$) 
then the operators $\Delta$, $\Deltaconf$ and $R^2\mathcal{D}$ all coincide, and agree with the \textbf{critical Laplacian} $\Delta_\mathrm{cr}$
considered in \cite{Kenyon2002} and 
defined by
\begin{equation}
\label{crDeltaDef}
\Delta_\mathrm{cr} \, \phi(\uu) :=\sum_{\mathrm{edges}\ \vec{\ue} \, =(\uu, \uv)}  \tan \theta(\ue)(\phi(\uu)-\phi(\uv))
\end{equation}
\noindent
This coincidence occurs because $\theta_\mathrm{n}(\vec{\ue} \, )  = \theta_\mathrm{s}(\vec{\ue} \, ) = \theta (\ue)$ for any
(oriented) edge $\vec{\ue}$ in the isoradial case.
The Green's function $\Delta_\mathrm{cr}^{-1}$ 
of the {critical Laplacian} $\Delta_\mathrm{cr}$ (see section \ref{sLaplacians}) 
turns out to be accessible and can be written explicitly in terms of the graph's local structure; furthermore, the log-determinant of the critical Laplacian can be computed as a finite sum of local contributions if in addition one assumes the graph is periodic.

\subsection{The main results of this paper}
\label{ssResults}

\subsubsection{Asymptotics of the critical Green's function}
\label{sssIntroAsymptG}
The asymptotic behaviour of $\Delta_\mathrm{cr}^{-1}$ is essential to us. The leading asymptotics has been worked out by Kenyon in \cite{Kenyon2002}. 
Our first result is a refinement of Kenyon's estimate, which entails isolating subleading terms.
We shall  make use of this series development later in the proof of  Theorem~\ref{discrete-theorem}.

\begin{Prop} For any vertices $\uu$ and $\uv$ in an isoradial Delaunay graph $\uG_\mathrm{cr}$ 
\label{Prop-SharpAsymptotics1}
\begin{equation}
\label{SharpAsymptotics1}
\Big[ \Delta^{-1}_\mathrm{cr} \Big]_{\uu, \uv}  = 
-{1 \over {2 \pi}}  \left( 
\log \Big( 2  \big| p_1( \uu, \uv) \big| \Big)  +  \gamma_\mathrm{euler}  
+  {\frak{Re} \big[ p_3 (\uu,\uv) \big] \over {6 \big| p_1(\uu, \uv) \big|^3} }  
+  \mathrm{O} \Bigg( {1 \over {\big| p_1(\uu, \uv) \big|^4}} \Bigg) \right)
\end{equation}
where $\gamma_\mathrm{euler}$ is the Euler-Mascheroni constant and 
$$p_1(\uu,\uv) = z_\mathrm{cr}(\uv) - z_\mathrm{cr}(\uu)\ .$$
The term $p_3(\uu,\uv)$ is introduced in Def.~\ref{def-p_n} (and written explicitly in \ref{stuff2}) in subsection \ref{ssExpDiscExp}.
It depends on the local geometry of the graph
$\uG_\mathrm{cr}$ between $\uu$ and $\uv$, but is bounded uniformly and linearly by  
$$\big| p_3(\uu, \uv) \big| \ \le \ 3\, 
\big| p_1(\uu,\uv) \big|
$$
\end{Prop}

\begin{Rem}
Proposition  \ref{Prop-SharpAsymptotics1}
sharpens Kenyon's Theorem 7.3  in \cite{Kenyon2002} by identifying and obtaining a uniform bound on the first non-constant  
subdominant term 
$$ { 1 \over 6} \, \big| p_1(\uu, \uv) \big| ^{-3} \,
\frak{Re} \big[ p_3( \uu, \uv) \big]  
\ \leq \ {1 \over 2} \, 
\big| p_1(\uu,\uv) \big|^{-2} 
$$
Proposition \ref{pGasymptotics} in section ~\ref{sAsymptotics} extends these asymptotics to all orders of the large distance asymptotic series expansion of the Green's function, and gives uniform bounds for those terms.
Proposition \ref{Prop-SharpAsymptotics1} follows from Proposition \ref{pGasymptotics}.
\end{Rem}

\subsubsection{Deformations of critical graphs}
\label{sssDeformIntro}
We introduce a scheme for deforming Delaunay graphs (and general polyhedral graphs) and study the response of the corresponding operators supported on the deformed graph.
\begin{Def}
\label{DefOne}
A \textbf{Delaunay deformation} $\uG_\epsilon$ of an initial Delaunay graph $\uG_0$ is defined as follows.
We start by deforming the initial vertex embedding  $\uv\mapsto z_0(\uv)$ for $\uv\in\mathrm{V}(\uG_0)$ by 
\begin{equation}
\label{thezdeform}
z_\epsilon(\uv) \, := \  z_0(\uv)  \, + \, \epsilon \, F(\uv)
\end{equation}
where $\epsilon \geq 0$ is a real parameter, and where $F: \mathrm{V}(\uG_0) \rightarrow \Bbb{C}$ is a displacement function with finite support $\Omega_F\subset  \mathrm{V}(\uG_0) $.
Provided the mapping $\uv \mapsto z_\epsilon(\uv)$ is one-to-one, 
the corresponding Delaunay deformation $\uG_\epsilon$ of $\uG_0$ is defined to be the unique Delaunay graph with vertex set $\mathrm{V}(\uG_\epsilon)=\mathrm{V}(\uG_0)$ and planar graph embedding $\uv\mapsto z_\epsilon(\uv)$. 
For a generic polyhedral graph $\uG$, the \textbf{lattice closure } $\overline\Omega_{F}$ of $\Omega_F$  is
\begin{equation}
\label{omegaclosure}
\overline\Omega_{F}=\{\uv\in \mathrm{V}(\uG):\, \uv\ \text{shares a face $\uf\in F(\uG)$ with a vertex}\ \uu\in\Omega_F \}
\end{equation}
\end{Def}

We first need to control the geometry  of the deformed graph $\uG_\epsilon$ in term of the deformation parameter $\epsilon$.
This is ensured by Lemmas \ref{MFbound},  \ref{LemRigidDef}, \ref{Lemma11} and Prop.~\ref{epsilontildeF}, which we summarize in the following Proposition.

\begin{Prop}
\label{LemmaOne}
Let $\uG_{0}$ be an isoradial Delaunay graph, and $F$ a displacement function as above.
There is a threshold $\tilde{\epsilon}_F>0$ such that whenever $0 \leq \epsilon < \tilde{\epsilon}_F$ 
\begin{enumerate}
  \item $z_\epsilon: \mathrm{V}(\uG_0) \longrightarrow \Bbb{C}$ is an embedding
  \item there is an inclusion of edge sets $\mathrm{E}(\uG_0) \subseteq  \mathrm{E}(\uG_\epsilon)$
  \item the edge sets are stable, i.e. $\mathrm{E}(\uG_{\epsilon_1}) = \mathrm{E}(\uG_{\epsilon_2})$ whenever $0 < \epsilon_1, \epsilon_2 < \tilde{\epsilon}_F$
\end{enumerate}
\end{Prop}

Prop.~\ref{LemmaOne} ensures the existence of a right-sided limit graph when $\epsilon\to 0$ which is  weakly Delaunay, namely

\begin{Def}
\label{defIsoRef}
The \textbf{refinement} $\uG_{0^+}$ of $\uG_0$ 
determined by $F$ is the  weak Delaunay graph
with vertex set $\mathrm{V}(\uG_{0^+}) := \mathrm{V}(\uG_0) $ 
and embedding $z_{0^+} := z_0$ whose edge set is
given by 
\[ \mathrm{E}(\uG_{0^+}) :=  \lim_{\ \ \epsilon \rightarrow 0^+} \mathrm{E}(\uG_\epsilon) \]
\end{Def}
Note that $\uG_{0^+}$ will be a {\bf weak} Delaunay graph precisely when
the inclusion of edge sets is strict, otherwise $\uG_{0^+}$ and $\uG_0$ will coincide.
It will be convenient to \textbf{complete} $\uG_{0^+}$ to a (weak) Delaunay triangulation $\widehat{\uG}_{0^+}$ 
by maximally saturating $\mathrm{E}(\uG_{0^+})$ with additional non-crossing edges  (see Def.~\ref{CompletionTofG}). 
The choice of these additional edges (referred to as {\bf chords}, introduced in Def.~\ref{ChordEdge}) will not
affect our calculations because the weights assigned to these edges 
by the operators $\Delta$, $\mathcal{D}$, and $\Deltaconf$ always vanish.
We want to emphasize that  $\widehat{\uG}_{0^+} = \uG_{0^+} = \uG_0$ 
whenever $\uG_0$ is a triangulation.

\medskip
Our chief interest is when the initial graph $\uG_0$ is a \textbf{critical graph}, i.e. an isoradial Delaunay graph $\uG_\mathrm{cr}$ with isoradius $R_\mathrm{cr}$.
A Delaunay deformation $\uG_\mathrm{cr}\to\uG_\epsilon$ (corresponding to some $F$) supports a K\"ahler operator $\mathcal{D}(\epsilon)$, as well as a Laplace-Beltrami operator $\Delta(\epsilon)$ and a conformal Laplacian $\Deltaconf(\epsilon)$.
All three of these operators degenerate on the critical graph when $\epsilon\to 0$
$$\lim_{\epsilon\to 0}\mathcal{D}(\epsilon)=\Delta_\mathrm{cr}/R_\mathrm{cr}^2\quad,\qquad
\lim_{\epsilon\to 0}\Delta(\epsilon)=\lim_{\epsilon\to 0}\Deltaconf(\epsilon)=\Delta_\mathrm{cr}$$
where $\Delta_\mathrm{cr}$ is the critical Laplacian of Kenyon on $\uG_\mathrm{cr}$.
 
Let $\mathcal{O}$ denote either $\Delta$, $\Deltaconf$, or $\mathcal{D}$.
Accordingly $\mathcal{O}(\epsilon)$ will denote 
the corresponding operator on the perturbed Delaunay graph $\uG_\epsilon$
while $\mathcal{O}_\mathrm{cr}$ will denote the operator
on the critical graph $\uG_\mathrm{cr}$.
We introduce the variation of operators 
\begin{equation}
\label{deltaOdef}
\delta \mathcal{O}(\epsilon) := \mathcal{O}(\epsilon) - \mathcal{O}_\mathrm{cr}
\end{equation}
and formally expand the
log-determinant $\log \det \mathcal{O}(\epsilon)$ using the Green's function 
$\mathcal{O}^{-1}_\mathrm{cr}$ of the critical operator as 
\begin{equation}
\label{log-expansion}
\begin{array}{ll}
\D \log \det \mathcal{O}(\epsilon)
&\D= \  \log \mathrm{det} \mathcal{O}_\mathrm{cr} \ + \ \tr \Big[ \delta \mathcal{O}(\epsilon) \cdot  \mathcal{O}_\mathrm{cr}^{-1}   \Big] 
\ - \ {1 \over 2} \, \tr \Big[ \big(\delta \mathcal{O}(\epsilon) \cdot \mathcal{O}_\mathrm{cr}^{-1} \big)^2 \Big] \ + \ \cdots
\end{array}
\end{equation}
The trace terms occurring on the right-hand side of equation \ref{log-expansion} are well defined 
owing to the fact that the support of the perturbation is compact; consequently the difference
$\log \det \mathcal{O}(\epsilon) \, - \, \log \det \mathcal{O}_\mathrm{cr}$ is well defined and takes a finite real value.

Our most significant results concern the second-order  term 
$\tr \big[ (\delta \mathcal{O}(\underline{\epsilon}) \cdot \mathcal{O}_\mathrm{cr}^{-1})^2 \big]$
arising from a bi-local version of the deformation given in (\ref{thezdeform}),
executed simultaneously at two distant sites and
controlled by a pair $\underline{\epsilon} =(\epsilon_1, \epsilon_2)$ of 
independent deformation parameters.
These results are mainly given by Prop.~\ref{ThDelta} for $\Delta$, Prop.~\ref{The3} for $\mathcal{D}$, 
while the analysis of $\Deltaconf$ is handled in Sects. \ref{ssConfLapl2nd} and \ref{subsubsection-dipole}.
The following theorem summarizes these results.

\begin{Thm}
\label{discrete-theorem}
Consider two complex functions $F_1(z)$ and $F_2(z)$ whose supports $\Omega_1=\supp F_1$ and $\Omega_2=\supp F_2$ in 
the vertex set $\mathrm{V}(\uG_\mathrm{cr})$ are \emph{finite and disjoint} (hence at finite distance),  and a bi-local deformation
of the embedding $\uv \mapsto z_\mathrm{cr}(\uv)$ given by
\[ z_{\underline{\epsilon}}(\uv) \, := \ z_\mathrm{cr}(\uv) \, + \, \epsilon_1 F_1 (\uv ) \, + \, \epsilon_2  F_2 ( \uv)  \]
where $\underline{\epsilon} = (\epsilon_1, \epsilon_2)$ is a pair if independent deformation parameters.

As functions of $\epsilon_1$ and $\epsilon_2$, $\log\det\Delta(\underline\epsilon)$, $\log\det\Deltaconf(\underline\epsilon)$, $\log\det\mathcal{D}(\underline\epsilon)$ are analytic  within the range $0\le\epsilon_1,\epsilon_2< \min(\tilde\epsilon_{F_1},\tilde\epsilon_{F_2})$.

Furthermore, the $\epsilon_1 \epsilon_2$ cross-term in the perturbative expansion of $\log\det \Delta(\underline{\epsilon})$, denoted $\deltaeot \log \det {\Delta}$,  is obtained from 
$\tr \big[{ (\delta \Delta(\underline{\epsilon}) \cdot \Delta_\mathrm{cr}^{-1})}^2 \big]$.
It takes the asymptotic form
\begin{equation}
\label{finalOPElike-discrete}
\begin{split}
&\deltaeot \log \det {\Delta}\ =\ \\
&\hskip .5cm
 - {2\over {\pi^2}} \!\!\! \sum_{\stackrel{\scriptscriptstyle \mathrm{triangles}}{\scriptscriptstyle{\ux_1, \ux_2}} }
\!\! \!A(\ux_1)  A(\ux_2)  
 \Bigg( \frak{Re} \Bigg[{\overline{\nabla} F_1( \ux_1 ) \, 
 \overline{\nabla} F_2 (\ux_2 )\over \big( z_\mathrm{cr}(\ux_1) - z_\mathrm{cr}(\ux_2) \big)^4} \Bigg] \, + \, \mathrm{O} 
 \Big( \,  \big| z_\mathrm{cr}(\ux_1) -  z_\mathrm{cr}(\ux_2) \big|^{-5} \Big) \Bigg)
\end{split}
\end{equation}
where $\ux_i \in \mathrm{F}(\widehat{\uG}_{0^+})$ 
is a triangle having at least one vertex in $\Omega_i$, whose center has coordinate $z_\mathrm{cr}(\ux_i)$, and whose area is $A (\ux_i)$
with $i=1, 2$. 
Formula \ref{finalOPElike-discrete} makes use of the discrete derivative operators $\nabla,\ \overline{\nabla}: \Bbb{C}^{\mathrm{V}(\uT)} \longrightarrow
\Bbb{C}^{\mathrm{F}(\uT)}$ introduced in \cite{DavidEynard2014} for polyhedral triangulations $\uT$; see Def.~\ref{DefNablaBarNabla}.

Likewise, the $\epsilon_1 \epsilon_2$ cross-term in the expansion of $\log\det \mathcal{D}(\underline{\epsilon})$, $\deltaeot \log \det {\mathcal{D}}$, is obtained from 
$\tr \big[{ (\delta \mathcal{D}(\underline{\epsilon}) \cdot \mathcal{D}_\mathrm{cr}^{-1})}^2 \big]$ and takes the same asymptotic form
as formula \ref{finalOPElike-discrete}. 

For the conformal Laplacian $\Deltaconf$ , the $\epsilon_1 \epsilon_2$ cross-term in the expansion of $\log\det \Deltaconf(\underline{\epsilon})$, $\deltaeot \log \det {\Deltaconf}$, does not in general have an asymptotic form given by \ref{finalOPElike-discrete}.
``Anomalous'' chord-to-edge and chord-to-chord terms have to be added to Formula \ref{finalOPElike-discrete} in order to obtain the correct asymptotics.
They can be interpreted as ``curvature defects'' arising from the deformation of the graph.
\end{Thm}

\begin{Rem}
Formula \ref{finalOPElike-discrete} is independent of the 
choice of triangulation $\widehat{\uG}_{0^+}$ used to complete $\uG_{0^+}$,
in light of a discretized 
version of Green's theorem, namely Lemma \ref{greens-theorem} and Corollary \ref{common-refinement}
as detailed in Subsection \ref{FactoLaplNabl}.
\end{Rem}

\subsubsection{Smooth deformations and scaling limits}
\label{sssSmoDefIntro}
In this paper we are interested in the existence and the form of the continuum limit of the results in Theorem \ref{discrete-theorem}. For this purpose we shall consider \textbf{smooth Delaunay deformations} implemented by test functions, defined below.
We aim for results independent of the initial critical graph $\uG_\mathrm{cr}$, and which reconstitute the continuous formula \ref{zepsilonbar} expected from CFT's.

\begin{Def} 
\label{defSmDeform}
Let $F$ be a smooth (non-holomorphic) function  $F: \Bbb{C} \longrightarrow \Bbb{C}$ with compact support $\Omega\subset\mathbb{C}$, and consider its restriction to an initial Delaunay graph $\uG_0$ by declaring
\begin{equation}
\label{smooth-perturbation-0}
F(\uv) \, := \  \, F \big( z_0(\uv) \big) 
\end{equation}
where $\uv \in \mathrm{V}(\uG_0)$ is a vertex.

The  \textbf{smooth Delaunay deformation} $\uG_\epsilon$ of $\uG_0$ corresponding to $F$  is the Delaunay deformation of $\uG_0$ given by Def.~\ref{DefOne} with the function $F(\uv)$ given by \ref{smooth-perturbation-0}.

Moreover, we shall incorporate a parameter $\ell >0$ into our deformation rubric \ref{smooth-perturbation-0} by rescaling the displacement function 
accordingly:
\begin{equation}
\label{smooth-perturbation}
F_\ell(\uv) \, = F_\ell\big(z_0(\uv)\big)\ :=\ \ell \, F \big( z_0(\uv) / \ell \big) 
\end{equation}
Using the construction above, we obtain a rescaled, deformed embedding $z_{\epsilon,\ell}$
and corresponding Delaunay graph $\uG_{\epsilon, \ell}$ together with an 
attending refinement $\uG_{0^+,\ell}$ and completion $\widehat{\uG}_{0^+,\ell}$.
We shall denote by $\Delta(\epsilon, \ell)$, $\mathcal{D}(\epsilon, \ell)$,
and $\Deltaconf(\epsilon , \ell)$ the discrete Beltrami-Laplace operator, K\"ahler operator, 
and conformal Laplacian on the graph $\uG_{\epsilon,\ell}$ respectively.
\end{Def}

The following estimate (see Appendix \ref{prooflemmabound} for proof) explains why 
$\nabla$ and $\overline{\nabla}$ 
should be considered as 
discrete analogues of the holomorphic and anti-holomorphic derivatives $\partial$ and $\overline{\partial}$.

\begin{lemma}
\label{lemmabound}
Given a smooth function $\phi: \Bbb{C} \longrightarrow \Bbb{C}$ and a 
triangle $\uf$ with vertices $z_1$, $z_2$, $z_3$ (listed in counter-clockwise order), circumcenter
$z(\uf)$, and circumradius $R(\uf)$, we have the following estimate

\begin{equation}
\Big| \nabla \phi (\uf )   - \partial \phi (z(\uf) )\Big| \leq R(\uf) \, \Bigg(  \, {3 \over 2} \, 
\sup_{z \in B_\uf}
\big| \partial^2 \phi  \big| \, + \, 
2 \,  \sup_{z \in B_\uf} \big| \partial \overline{\partial} \phi \big| \, + \,  {1 \over 2} \, \sup_{z \in B_\uf}
\big| \overline{\partial}^2  \phi \big| \, \Bigg)
\end{equation}
where $B_\uf$ is the disk bounded by the circumcircle of $\uf$
\begin{equation}
\label{ }
B_\uf=\{z\,;\, |z-z(\uf)]\le R(\uf) \}
\end{equation}
\end{lemma}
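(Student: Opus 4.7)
The plan is to compare $\nabla\phi(\uf)$ with $\partial\phi(z(\uf))$ via a second-order Taylor expansion around the circumcenter, exploiting the fact that all three vertices of $\uf$ lie on the circumcircle, so that $|z_k - z(\uf)| = R(\uf)$ for every $k$. We recall from \cite{DavidEynard2014} that the discrete derivative is uniquely determined by the reproducing conditions $\nabla(1)(\uf) = 0$, $\nabla(\bar z)(\uf) = 0$ and $\nabla(z)(\uf) = 1$, giving for a counter-clockwise oriented triangle with vertices $z_1, z_2, z_3$ the explicit weighted-sum formula $\nabla\phi(\uf) = \frac{i}{4 A(\uf)} \sum_{k=1}^{3} \phi(z_k)(\bar z_{k+1} - \bar z_{k-1})$, with $A(\uf)$ the (positive) triangle area.

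Next, I would Taylor-expand $\phi$ to second order around $z(\uf)$ with integral remainder:
\[
\phi(z) \;=\; \phi(z(\uf)) \,+\, \partial\phi(z(\uf))\,w \,+\, \bar\partial\phi(z(\uf))\,\bar w \,+\, \mathcal{R}(z),
\]
where $w = z - z(\uf)$ and $\mathcal{R}(z)$ is the standard second-order integral remainder involving $\partial^{2}\phi$, $\partial\bar\partial\phi$ and $\bar\partial^{2}\phi$ evaluated along the segment $[z(\uf), z] \subset B_\uf$. Applying $\nabla$ to both sides and using its reproducing property on affine functions, the constant and $\bar z$-linear parts are annihilated while the $z$-linear part returns $\partial\phi(z(\uf))$, yielding the reduction $\nabla\phi(\uf) - \partial\phi(z(\uf)) = \nabla\mathcal{R}(\uf)$.

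To bound $\nabla\mathcal{R}(\uf)$, I would parameterize the vertices as $w_k := z_k - z(\uf) = R\,e^{i\theta_k}$ and use the identity $A(\uf) = \tfrac{R^{2}}{2}\sum_k \sin(\theta_{k+1} - \theta_k)$, so that the $A(\uf)^{-1}$ factor in $\nabla$ conspires with the factor $|w_k|^{2} = R^{2}$ entering $\mathcal{R}(z_k)$ to leave a net contribution of order $R$. The three pieces of $\mathcal{R}(z_k)$ (the pure $\partial^{2}\phi$ piece, the mixed $\partial\bar\partial\phi$ piece, and the pure $\bar\partial^{2}\phi$ piece) are treated separately. For the pure pieces one computes $\nabla(w^{2})(\uf)$ and $\nabla(\bar w^{2})(\uf)$ directly and shows via explicit trigonometric identities that their moduli are bounded by $3R$ and $R$ respectively. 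For the mixed piece, the observation that $|w_k|^{2} = R^{2}$ is constant in $k$ means $\nabla$ picks up only the variation of $\partial\bar\partial\phi$ across the three vertices, which can be controlled uniformly by $\sup_{B_\uf}|\partial\bar\partial\phi|$ alone. Combining these $R$-linear bounds with the weight $\int_{0}^{1} (1-t)\,dt = \tfrac{1}{2}$ coming from the Taylor remainder then produces the stated coefficients $3/2$, $2$, and $1/2$.

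The principal difficulty is obtaining these trigonometric bounds \emph{uniformly} in the shape of $\uf$. A crude term-by-term absolute-value bound on the numerator sums appearing in $\nabla(w^{2})(\uf)$ and $\nabla(\bar w^{2})(\uf)$ would leave an aspect-ratio-dependent factor $R^{2}/A(\uf)$ that blows up on near-degenerate thin triangles, so one has to exploit phase cancellations between the three summands together with the arc-angle constraint $\sum_{k}(\theta_{k+1}-\theta_k) = 2\pi$. Analogous care is needed for the mixed $\partial\bar\partial\phi$ piece in order to express its contribution in terms of $\sup|\partial\bar\partial\phi|$ on $B_\uf$ alone, without inadvertently introducing third derivatives of $\phi$ into the final estimate.
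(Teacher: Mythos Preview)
Your approach has a genuine gap at the step where you pass from the Taylor remainder to the bound. The identities $|\nabla(w^2)(\uf)|\le 3R$ and $|\nabla(\bar w^2)(\uf)|\le R$ are correct (they are instances of the paper's Lemma~\ref{lnablabound}), but they apply only to the \emph{pure monomials} $w_k^2$ and $\bar w_k^2$. In the integral remainder $\mathcal{R}(z_k)$ the second derivatives are evaluated at $z_0+tw_k$, a point that depends on $k$, so the quantity you must bound is $\sum_k \nabla_{\uf\uv_k}\, w_k^2\,\partial^2\phi(z_0+tw_k)$, not $\partial^2\phi(z_0)\cdot\nabla(w^2)(\uf)$. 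The phase cancellations that make $\nabla(w^2)$ uniformly bounded by $3R$ are destroyed once an arbitrary bounded factor $h_k=\partial^2\phi(z_0+tw_k)$ multiplies $w_k^2$; a crude bound then reintroduces the aspect-ratio factor $R^2/A(\uf)$ you were trying to avoid. The mixed piece makes this especially clear: since $|w_k|^2=R^2$ is constant in $k$ and $\nabla(1)=0$, the contribution $\sum_k \nabla_{\uf\uv_k}\,|w_k|^2\,\partial\bar\partial\phi(z_0+tw_k)$ vanishes identically whenever $\partial\bar\partial\phi$ is constant. Hence any nonzero bound on that term must come from the \emph{variation} of $\partial\bar\partial\phi$ across the three points, which is controlled by third derivatives, not by $\sup|\partial\bar\partial\phi|$. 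There is therefore no way to get the coefficient~$2$ on $\sup|\partial\bar\partial\phi|$ out of your scheme without third-derivative input.

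The paper's proof avoids this entirely by a different integral representation. Rather than expanding about the circumcenter, it writes $\phi(z_2)-\phi(z_1)$ and $\phi(z_3)-\phi(z_1)$ as line integrals from $z_1$ and then applies the fundamental theorem of calculus a second time along the transverse segment $z(s,t)=s z_3(t)+(1-s)z_2(t)$. The crucial point is that differentiating the weight $s(z_3-z_1)(\bar z_2-\bar z_1)+(1-s)(z_2-z_1)(\bar z_3-\bar z_1)$ in $s$ yields exactly $4iA(\uf)$, cancelling the $A(\uf)$ in the denominator \emph{before} any second-derivative sup is taken. This produces the uniform bound $|\nabla\phi(\uf)-\iint\partial\phi(z(s,t))\,ds\,dt|\le R(\tfrac12\sup|\partial^2\phi|+\sup|\partial\bar\partial\phi|+\tfrac12\sup|\bar\partial^2\phi|)$, and a second, elementary FTC step from $z(s,t)$ to the circumcenter contributes the remaining $R(\sup|\partial^2\phi|+\sup|\partial\bar\partial\phi|)$, giving the final coefficients $3/2,\,2,\,1/2$.
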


Using Lemma \ref{lemmabound}
we are able to obtain a smooth version of Theorem \ref{discrete-theorem}
involving a scaling parameter $\ell>0$ 
 whose continuum limit 
is consistent with formula \ref{zepsilonbar}

\begin{Thm}
\label{TfinalOPElike1}
Consider two smooth complex functions $F_1(z)$ and $F_2(z)$ whose supports $\Omega_1=\supp F_1$ and $\Omega_2=\supp F_2$ in the plane are \emph{compact and disjoint}(hence at finite distance),  and a bi-local deformation of the embedding given by

\[ z_{\underline{\epsilon}, \ell}(\uv) \, := \  z_\mathrm{cr}(\uv) \, + \, \epsilon_1 
F_{1;\ell} (\uv ) \, + \, \epsilon_2 
F_{2;\ell} ( \uv)  \]

\bigskip
\noindent
where $\ell>0$ is a scaling parameter and $F_{i;\ell}(\uv) := \ell F_i \big(z_\mathrm{cr}(\uv)/ \ell \big)$ 
for $i=1, 2$.
The scaling limit $\ell\to\infty$ of the $\epsilon_1\epsilon_2$ cross term in the expansion of $\log\det\Delta (\underline{\epsilon} \, , \ell)$ 
and of $\log \det \mathcal{D}(\underline{\epsilon} \, , \ell)$ (given by theorem~\ref{discrete-theorem}) exist and are given by
\begin{equation}
\label{finalOPElike1}
\begin{split}
\lim_{\ell \rightarrow \infty} \ \deltaeot \log \det \Delta(\ell) \ = \ &\lim_{\ell \rightarrow \infty} \ \deltaeot \log \det \mathcal{D}(\ell) \\
 =\ &
{1 \over {\pi^2}} \iint_{\Omega_1 \times \Omega_2} dx_1^2  \, dx_2^2 \ \frak{Re} 
 \Bigg[{\overline{\partial} F_1( x_1 ) \, 
 \overline{\partial} F_2 (x_2 )\over (x_1-x_2)^4} \Bigg]
\end{split}
\end{equation}
The limit value in formula \ref{finalOPElike1} is independent of the initial isoradial Delaunay graph $\uG_\mathrm{cr}$.
\end{Thm}

\begin{Rem}
\label{Remark3}
Whenever $\uG_{0^+ \!, \ell}$ contains finitely
many chords (see Def.~\ref{ChordEdge}), the $\ell \rightarrow \infty$ scaling limit of the bi-local formula 
for the $\epsilon_1 \epsilon_2$ cross-term in
$\tr \big[ \delta \Deltaconf (\underline{\epsilon} \, , \ell) \cdot \Delta_\mathrm{cr}^{-1} \big]^2$
of the conformal Laplacian 
(as presented in section \ref{ssConfLApDEl}) agrees
with the limit value in formula \ref{finalOPElike1} of Theorem
\ref{TfinalOPElike1}. 
In general this is not the case, and a scaling limit does not exist. If it exists, the effect of the anomalous terms may be present in the scaling limit, which is not universal. An example is given in Appendix~\ref{tiling-by-cocyclic-quad}.
\end{Rem}

Formula~\ref{finalOPElike1} in Theorem~\ref{TfinalOPElike1} implicitly involves a nested limit where the deformation parameters 
$\underline\epsilon=(\epsilon_1,\epsilon_2)$ are first taken to zero, and subsequently the scaling parameter $\ell$ is taken to $\infty$.
An interesting question is whether the $\underline\epsilon\to 0$ limit and the $\ell\to\infty$ limits can be interchanged.

To study this question, one needs uniform bounds on the variations  $\delta{\Delta}(\epsilon)$ and 
$\delta\mathcal{D}(\epsilon)$  (see \ref{deltaOdef}) with respect to the space of isoradial Delaunay graphs. 
Since the bound $\tilde \epsilon_F$ in Proposition \ref{LemmaOne} depends on the graph, we cannot hope to make a stable deformation simultaneously for all Delaunay graphs. 
This requires us to work with Delaunay deformations beyond the $\tilde \epsilon_F$ threshold, and take into account the occurrence of Whitehead flips as the graph is deformed.
This is addressed in Sect.~\ref{aVarOpFlips}.

\medskip

\noindent{\textbf{Bounding the variation of the circumradii}}\\
In order to bound the operator variations $\delta{\Delta}(\epsilon)$ and $\delta\mathcal{D}(\epsilon)$ it is necessary to track the circumradius $R(\uf_\epsilon)$ of each face  $\uf_\epsilon$ of $\uG_\epsilon$  as a function of $\epsilon$.
In Proposition~\ref{PropReps}, we  bound the radius $R(\uf)$ uniformly for all faces and all initial isoradial Delaunay graph $\uG_\mathrm{cr}$ with isoradius $R_\mathrm{cr}=R_0$.
Specifically, we show there exists $\epsilon_\mathtt{max}(R_\mathrm{cr})$, and two functions $\bar R_-(\epsilon,R_\mathrm{cr})$ and $\bar R_+(\epsilon,R_\mathrm{cr})$ such that for $0\le \epsilon < \epsilon_\mathtt{max}(R_\mathrm{cr})$
\begin{equation}
\label{RepsIneqA}
\bar R_-(\epsilon,R_\mathrm{cr}) \ \le\  R(\uf_\epsilon)\ \le  \bar R_+(\epsilon,R_\mathrm{cr})
\end{equation}
and 
\begin{equation}
\label{ }
\lim_{\epsilon\to 0} \bar R_-(\epsilon,R_\mathrm{cr}) = \lim_{\epsilon\to 0} \bar R_+(\epsilon,R_\mathrm{cr}) = R_\mathrm{cr}
\end{equation}
The quantities $\epsilon_\mathtt{max}(R_\mathrm{cr})$, $\bar R_-(\epsilon,R_\mathrm{cr})$ and $\bar R_+(\epsilon,R_\mathrm{cr})$ depend only on $R_\mathrm{cr}$ and on the smooth displacement function $F$. 
They are given explicitly in Prop.~\ref{PropReps}. 

\medskip
\noindent{\textbf{Results for interchanging the $\epsilon\to 0$ and $\ell\to\infty$ limits}}

The matrix entries of the operators $\Delta(\epsilon)$ and $\mathcal{D}(\epsilon)$ are continuous functions of $\epsilon$,
and using the bounds $\bar R_-$ and $\bar R_+$ of \ref{RepsIneqA}%
, we can show that the derivatives $\Delta'(\epsilon)$ and $\mathcal{D}'(\epsilon)$ with respect to 
$\epsilon$ are piecewise continuous functions of $\epsilon$ and obtain uniform bounds on their matrix entries. This leads us to the following conjecture for $\Delta$.

\begin{Conjecture}
\label{ConjectureLimTr2D}
Let $\uG_\mathrm{cr}$ be an isoradial, Delaunay triangulation
with embedding $\uv \mapsto z_\mathrm{cr}(\uv)$, let $F_1, F_2$ be two smooth displacement functions
with disjoint, compact supports, and let $z_{\underline{\epsilon}, \ell} = z_\mathrm{cr} + \epsilon_1 F_{1;\ell} + \epsilon_2 F_{2; \ell}$
be the corresponding scaled and deformed embedding with 
respect to a pair of independent parameters $\underline{\epsilon} = (\epsilon_1, \epsilon_2)$
and $\ell >0$. Then
\begin{equation}
\label{ }
\begin{split}
&\lim_{\underline{\epsilon}\to 0}\lim_{\ell \to \infty}\mathrm{tr}\left[
{\partial \over {\partial \epsilon_1}}  \Delta (\underline{\epsilon} \, , \ell) 
\cdot \Delta_\mathrm{cr}^{-1} 
\cdot 
{\partial  \over {\partial \epsilon_2}}  \Delta(\underline{\epsilon} \, , \ell)
\cdot \Delta_\mathrm{cr}^{-1}  
\right]
\\
&\qquad=\ \lim_{\ell\to \infty}\lim_{\underline{\epsilon}\to 0}\mathrm{tr}\left[
{\partial \over {\partial \epsilon_1}}  \Delta (\underline{\epsilon} \, , \ell) 
\cdot \Delta_\mathrm{cr}^{-1} 
\cdot 
{\partial  \over {\partial \epsilon_2}}  \Delta(\underline{\epsilon} \, , \ell)
\cdot \Delta_\mathrm{cr}^{-1}  
\right]
\\
&\qquad\qquad
=\ {2 \over \pi^2} \int_{\Omega_1} d^2x_1 \int_{\Omega_2} d^2x_2 \ 
\frak{Re} \Bigg[ {\bar\partial F_1(x_1)\ \bar\partial F_2(x_2)\over (x_1-x_2)^4} \Bigg]
\end{split}
\end{equation}
\end{Conjecture}

\medskip
Conjecture \ref{ConjectureLimTr2D} is a special case of Prop.~\ref{prvsepsComm}, which relies on the rigorous estimates obtained in Sect.~\ref{aVarOpFlips}, and also on Conjecture~\ref{Conf3bound}, the later of which stipulates a bound on $\nabla p_3$ for critical lattices (where $p_3$ is defined in \ref{def-p_n} and appears already in Prop.~\ref{Prop-SharpAsymptotics1}).

\medskip
For $\mathcal{D}$ we do no get such a strong result, but only the following {weaker} conjecture, which follows from Propositions~\ref{NoLimitDprime} and \ref{PropLimTr2D}.

\begin{Conjecture}
\label{ConjectureLimTr2DK}
Let $\uG_\mathrm{cr}$ be an isoradial, Delaunay triangulation
with embedding $\uv \mapsto z_\mathrm{cr}(\uv)$, let $F_1, F_2$ be two smooth displacement functions
with disjoint, compact supports, and let $z_{\underline{\epsilon} , \ell} = z_\mathrm{cr} + \epsilon_1 F_{1;\ell} + \epsilon_2 F_{2; \ell}$
be the corresponding scaled and deformed embedding with 
respect to a pair of independent parameters $\underline{\epsilon} = (\epsilon_1, \epsilon_2)$
and $\ell >0$. 

In general the limit
\begin{equation}
\label{ }
\lim_{\ell \to \infty}\mathrm{tr}\left[
{\partial \over {\partial \epsilon_1}}  \mathcal{D} (\underline{\epsilon} \, , \ell) 
\cdot \mathcal{D}_\mathrm{cr}^{-1} 
\cdot 
{\partial  \over {\partial \epsilon_2}}  \mathcal{D}(\underline{\epsilon} \, , \ell)
\cdot \mathcal{D}_\mathrm{cr}^{-1}  
\right]
\end{equation}
does not exist for non-zero $\underline\epsilon$.

The double ``simultaneous'' limit exists, where $\ell\to \infty$ and $\underline\epsilon\to 0$ such that $\ell\underline\epsilon=\underline{\mathtt{c}}$ {with $\underline{\mathtt{c}}>0$ staying} constant.
Its value is 
\begin{equation}
\label{ }
\begin{split}
\lim_{\substack{\ell\to \infty\\ \ell\underline{\epsilon}=\underline{\mathtt{c}}}} 
&\tr\left[
{\partial \over {\partial \epsilon_1}}  \mathcal{D} (\underline{\epsilon} \, , \ell) 
\cdot \mathcal{D}_\mathrm{cr}^{-1} 
\cdot 
{\partial  \over {\partial \epsilon_2}}  \mathcal{D}(\underline{\epsilon} \, , \ell)
\cdot \mathcal{D}_\mathrm{cr}^{-1} \right] 
\\
& \ =\ \hskip 1.em {2\over \pi^2}\int_{\Omega_1}\! d^2 x_1 \int_{\Omega_2}\! d^2 x_2\ 
\frak{Re}\left[    {\bar\partial F_1(x_1)\,\bar\partial F_2(x_2)\over (x_1-x_2)^4}   \right] 
\end{split}
\end{equation}
\end{Conjecture}

\subsubsection{Interpretation in terms of discrete stress-energy tensors and discrete central charge}
\label{sssIntroTC}
The results presented above can be formulated in the language of CFT in terms of an action, a stress-energy tensor and a central charge.
This is done in Sect.~\ref{sDiscus1stO}.
For the Laplace Beltrami operator $\Delta$ the associated discrete action is
\begin{equation}
\label{SLBDiscrIntro}
S[\Phi,\bar\Phi]=\Phi{\cdot}\Delta\bar\Phi=\sum_{\stackrel{\scriptstyle \mathrm{vertices}}{\ \, \uu, \uv \, \in \, 
\uG}} \Phi_{\uu} \Delta_{\uu \uv} \bar\Phi_{\uv}
\end{equation}
where $(\Phi,\bar\Phi)$  are Grassmann fields supported on vertices of the Delaunay graph $\uG$. The corresponding functional integral is
\begin{equation}
\label{ }
\det(\Delta)=\int \frak{D}[\Phi{,}\bar\Phi]\,\mathrm{e}^{-S[\Phi,\bar\Phi]}
\end{equation}

A general deformation $z\mapsto z+\epsilon\, F$ of the coordinate embedding induces a deformed action $S_\epsilon[\Phi,\bar\Phi]=\Phi{\cdot}\Delta(\epsilon)\bar\Phi$ , which we can develop as $S_\epsilon= S+ \epsilon\,\deltae S + \mathrm{O}(\epsilon^2)$.
Using the variation of $\Delta(\epsilon)$ given by Prop. \ref{PVarDelta}, the linear term $\deltae S$ reads explicitly as 
\begin{equation}
\label{ }
\deltae S[\Phi,\bar\Phi] = - 4\sum_{\stackrel{\scriptstyle \mathrm{faces}}{\ux\in\widehat\uG_{0^+}}} A(\ux)\left(\overline\nabla F(\ux)\, \nabla\Phi(\ux) \nabla\bar\Phi(\ux)\,+\,\mathrm{c.\,c.}\right)
\end{equation}
In analogy with the countinuous case, the components of the discrete stress-energy tensor $\mathbf{T}_{_{\!\Delta}}$ can be identified as
\begin{equation}
\label{TLBDiscrINtro} 
T_{_{\!\Delta}}(\ux)=-4\pi\, \nabla\Phi(\ux) \nabla\bar\Phi(\ux)
\quad\text{and}\quad\overline T_{_{\!\Delta}}(\ux)=-4\pi\, \overline\nabla\Phi(\ux) \overline\nabla\bar\Phi(\ux)
\end{equation}
while $\mathbf{T}_{_{\!\Delta}}$ is traceless, namely $\tr\mathbf{T}_{_{\!\Delta}}(\ux)=0$.
See \ref{TLBDiscr} for details.
Taking vacuum expectation values of the components of the stress-energy tensor 
we recover our results (Prop.~\ref{T1stVar} and Th.~\ref{discrete-theorem}) for the first and second order variations of $\log\det (\Delta)$, in the case of a critical lattice $\uG=\uG_\mathrm{cr}$. 
In  the scaling limit, the discrete $T_{_{\!\Delta}}$ given in \ref{TLBDiscrINtro}  becomes the continuum stress-energy tensor $T= -4\pi\, \partial\Phi\partial\bar\Phi$ for the CFT of a free Grassmann field (see Appendix~\ref{aCFTreminder}  for details).

Theorem~\ref{TfinalOPElike1} shows that, when perturbing a critical lattice, the scaling limit for the second order variation $\deltaeot\log\det\Delta$ of the Laplace-Beltrami operator $\Delta$ exists, and can be calculated in term of the connected vacuum expectation values
\begin{equation}
\label{ }
\langle T_{\!\scriptscriptstyle{\Delta}}(\ux)\, T_{\!\scriptscriptstyle{\Delta}}(\uy)\rangle
\qquad\text{and}\qquad \langle \overline T_{\!\scriptscriptstyle{\Delta}}(\ux)\, \overline T_{\!\scriptscriptstyle{\Delta}}(\uy)\rangle
\end{equation}
This implies that in the scaling limit, we recover a short distance operator product expansion for $T_{_{\!\Delta}}$,
\begin{equation}
\label{OPETDelta}
\langle T_{_{\!\Delta}}\!(u)\,T_{_{\!\Delta}}\!(v)\rangle \ =\ -\,{1\over (u-v)^4}\ +\ \cdots
\end{equation}
which is the OPE for a CFT with central charge $c=-2$ (through \ref{OPETone}).
This is the expected result for a complex Grassmann field, which is indeed a conformal field theory with central charge $c=-2$ (see Appendix~\ref{aCFTreminder} and e.g. \cite{francesco1997conformal}).

\medskip

The same analysis is carried out for the K\"ahler operator $\mathcal{D}$.
From the variation of $\mathcal{D}(\epsilon)$ given by Prop.~\ref{PVarKahler}, we can isolate the components of the corresponding discrete stress-energy tensor $\mathbf{T}_{\!\scriptscriptstyle{\mathcal{D}}}$ (see  \ref{TKdiscr})
\begin{equation}
\label{TKdiscrIntro}
\begin{split}
 T_{\!\scriptscriptstyle{\mathcal{D}}}(\ux) & = 
-\, 4\pi\, {1\over R(\ux)^2}\,\left( \nabla\Phi(\ux)\,\nabla\bar\Phi(\ux)+ C(\ux)\, \overline\nabla \Phi(\ux)\ \nabla\bar\Phi(\ux)\right) \\
 \overline T_{\!\scriptscriptstyle{\mathcal{D}}}(\ux) & = -\, 4\pi\, {1\over R(\ux)^2}\,\left( \overline\nabla\Phi(\ux)\,\overline\nabla\bar\Phi(\ux)+ \bar C(\ux)\, \overline\nabla \Phi(\ux)\ \nabla\bar\Phi(\ux)\right) \\
 \tr \mathbf{T}_{\!\scriptscriptstyle{\mathcal{D}}}(\ux) & =\quad\  8 \  {1\over R(\ux)^2}\,\left( \overline\nabla\Phi(\ux)\,\nabla\bar\Phi(\ux)\right)
\end{split}
\end{equation}
where $R(\ux)$ is the radius of the face $\ux$, and $C(\ux)$ is a  geometrical factor given in formula~\ref{CDef}, which depends on the shape and orientation of the face $\ux$.
Note that $\mathbf{T}_{\!\scriptscriptstyle{\mathcal{D}}}$ is no longer traceless.
$C(\ux)$ 
has no obvious scaling limit $\ell\to\infty$, independent on the details of the Delaunay lattice, so we cannot associate a stress-energy tensor for some continuum QFT to the discrete $\mathbf{T}_{\!\scriptscriptstyle{\mathcal{D}}}$, as we did for $\Delta$ by replacing $\nabla$ with $\partial$.

Surprisingly, Theorem~\ref{TfinalOPElike1} also shows that, when perturbing a critical lattice, the scaling limit for the second order variation $\deltaeot\log\det\mathcal{D}$ of the K\"ahler operator $\mathcal{D}$ still exists, and can still be calculated in term of the connected vacuum expectation values
\begin{equation}
\label{ }
\langle T_{\!\scriptscriptstyle{\mathcal{D}}}(\ux)\, T_{\!\scriptscriptstyle{\mathcal{D}}}(\uy)\rangle
\qquad\text{and}\qquad \langle \overline T_{\!\scriptscriptstyle{\mathcal{D}}}(\ux)\, \overline T_{\!\scriptscriptstyle{\mathcal{D}}}(\uy)\rangle
\end{equation}
Moreover, we recover a short distance OPE for $T_{\!\scriptscriptstyle{\mathcal{D}}}$ which is identical to the OPE for $T_{_{\!\Delta}}$ given by \ref{OPETDelta}
\begin{equation}
\label{OPETKaehler}
\langle T_{\!\scriptscriptstyle{\mathcal{D}}}(u)\,T_{\!\scriptscriptstyle{\mathcal{D}}}(v)\rangle\ =\ -\,{1\over (u-v)^4} +\ \cdots
\end{equation}
and therefore we can associate a ``central charge'' $c_{\!\scriptscriptstyle{\mathcal{D}}}=-2$ with the same value as the central charge 
$c_{\!\scriptscriptstyle{\Delta}}=-2$ for $\Delta$ (see Sect.~\ref{sDiscus1stO} for a more thorough discussion).

Finally, a similar analysis is taken in \ref{sssConfDeltaT} for the conformal Laplacian $\Deltaconf$, and leads to a discrete stress-energy tensor $\mathbf{T}_{_{\Deltaconf}}$. 
The trace term $\tr\left(\mathbf{T}_{_{\Deltaconf}}\right)$ vanishes, and the discretized holomorphic (and anti-holomorphic) component $T_{\!\scriptscriptstyle{\Deltaconf}}$ (and $\bar T_{\!\scriptscriptstyle{\Deltaconf}}$) can be written explicitly as 
\begin{equation}
\label{TconfDiscrSchem}
\text{\cminfamily{A}}(\ux)\nabla\Phi(\ux)\nabla\bar\Phi(\ux)+\text{\cminfamily{B}}(\ux)\nabla\Phi(\ux)\overline\nabla\bar\Phi(\ux)+\text{\cminfamily{C}}(\ux)\overline\nabla\Phi(\ux)\nabla\bar\Phi(\ux)+\text{\cminfamily{D}}(\ux)\overline\nabla\Phi(\ux)\overline\nabla\bar\Phi(\ux)
\end{equation}
The coefficients $\text{\cminfamily{A}}(\ux),\dots, \text{\cminfamily{D}}(\ux)$ depend in a non-trivial way on the geometry of the face $\ux$ and its three neighbouring faces in $\widehat\uG_{0^+}$, and have no meaningful continuum limit.
This is reflected in the fact that the second order variation $\deltaeot\log\det\Deltaconf(\ell)$ has no scaling limit in general, as stated in Remark~\ref{Remark3}.

\subsection{Plan of the paper}
\label{ssPlan}

This paper is organised as follows: 

\medskip 
The present section~\ref{sIntro} is the introduction.

\medskip
Section \ref{sCritLapl} presents 
basic concepts about the geometry of planar graphs which are relevant to the paper. 
Most of the material is standard,
however we introduce the notion of a {\it chord} (see Def.~\ref{ChordEdge})
which allows us to slightly broaden the definition of
an isoradial triangulation (given in \cite{Kenyon2002}) to accommodate
configurations with four or more cocyclic vertices.
Section~\ref{ssBasicPlanar} gives definitions and sets notation for polyhedral graphs, edges and chords, (weak) Delaunay graphs, isoradial graphs, etc. and makes precise the notion of the abstract rhombic surface $\uS_\uG^\lozenge$ associated to a polyhedral graph $\uG$ alluded {\color{blue} to} in 
Section~\ref{ssDelGraph}.
Section~\ref{ssRhomGraph} addresses geometrical concepts and properties of rhombic graphs, 
mainly following the presentations of \cite{Kenyon2002} and \cite{Kenyon14rhombicembeddings}.
In order to help establish the asymptotic formula in Prop.~\ref{Prop-SharpAsymptotics1} 
we undertake in Proposition \ref{semi-circle-property} 
a careful analysis of the interval of possible angles taken by any path in the rhombic graph 
of an isoradial Delaunay graph.

\medskip
In section \ref{sLaplacians} we review the $\nabla$ and $\overline{\nabla}$ operators of \cite{DavidEynard2014}
and how they are used to obtain ``local factorizations'' of the  
Laplace-Beltrami and K\"ahler operators $\Delta$ and $\mathcal{D}$
for a general polyhedral triangulation;
see remarks
\ref{DNablaForm} and \ref{DeltaNablaForm}.
We remark that the conformal Laplacian $\Deltaconf$ however does not 
admit a simple, local factorization.
Following this, we recall two approaches used to define the (normalised) log-determinant of a Laplace-like operator such as 
$\Delta$, $\mathcal{D}$, and $\Deltaconf$ for infinite polyhedral graphs 
which are either (1) doubly periodic or (2) obtained as a nested 
limit of finite graphs each with Dirichlet boundary conditions. Formulae \ref{norm-log-det} and \ref{dirichlet-log-det}
serve respectively as definitions in these two cases. 
We end the section by discussing Kenyon's local formula in \cite{Kenyon2002} for the normalised log-determinant of the critical Laplacian
for doubly periodic, isoradial (weak) Delaunay graphs, as well as its formal extension to the non-periodic case.

\medskip
In section \ref{sAsymptotics} we derive the long-range asymptotic formula for the Green's function of the critical Laplacian (associated to an isoradial Delaunay graph) as stated in Proposition~\ref{Prop-SharpAsymptotics1}. 
We rely on the methods of \cite{Kenyon2002} along with some added improvements, in particular for non-periodic graphs.
Among other things our analysis
provides uniform bounds on the coefficients of the asymptotic expansion (see \ref{lemma-p-estimate} and \ref{def-c-md}) thus sharpening 
the results and approximations in \cite{Kenyon2002}.

\medskip
Section \ref{sVarOp} addresses deformations of Delaunay graphs and corresponding operators.
In section  \ref{ssVarOp} we introduce the notions of {\it Delaunay} and {\it rigid} deformations: In both
cases the coordinate embedding of the graph is perturbed by a local displacement function
together with a deformation parameter $\epsilon  \geq 0$.
Delaunay deformations modify the incidence relations (i.e. the edge and face sets) 
so that the Delaunay constraints are maintained 
while rigid deformations always fix the incidence relations
of the initial graph (and so the resulting graph may cease to be Delaunay).
In the case of a Delaunay deformation, we explain in Lemma~\ref{Lemma11} how to regulate the 
parameter $\epsilon \geq 0$ so that the edges of {the} initial graph are {\it stable} and do not undergo Lawson ``flips''.
A generic Delaunay deformation, however, can break the cyclicity of faces {having four or more vertices}
in the initial graph
and introduce new edges which subdivide these faces.
Nevertheless these additional edges are shown to be stable for values of $\epsilon > 0$ 
which are bounded appropriately.
This follows from Prop.~\ref{epsilontildeF} which 
also proves the existence of a (weak) Delaunay limit graph $\uG_{0^+}$ whose redaction
$\uG_{0^+}^\bullet$ coincides with the initial Delaunay graph $\uG_0$.
In Sect.~\ref{ssVarOp} we study 
the first order variation of the Laplace-Beltrami and K\"ahler operators,  
when the underlying polyhedral triangulation is subject to a rigid deformation.
Results are given in Props. ~\ref{PVarDelta} and \ref{PVarKahler} respectively.
The conformal Laplacian $\Deltaconf$ does not admit a local factorization of the kind presented in Props.~\ref{PVarDelta} and \ref{PVarKahler}
and for this reason there is no analogous formula for its first order variation. Section \ref{ssGenVarNot} sets up notation.

\medskip
The calculations of the first and second order variations of the log-determinant for 
the Beltrami-Laplace operator, the K\"ahler operator, and the conformal Laplacian
are undertaken in Section \ref{sCalculations}.
The first order variation formulae are entirely local, i.e. expressed as sums 
of weights of edges. The second order variations, on the other hand, 
involve long-range effects of the critical Green's function $\Delta_\mathrm{cr}^{-1}$
associated to pairs of distant vertices and, in principle,  
register aspects of the global geometry of the initial isoradial Delaunay graph $\uG_\mathrm{cr}$.

In Propositions \ref{T1stVar} and \ref{T1stVarD} of Section \ref{subsection-first-order} 
we present formulae for the first order variations of the Beltrami-Laplace and K\"ahler operators 
which are valid uniformly for all isoradial Delaunay graphs.
The first-order formula for the conformal Laplacian 
incorporates an additional term which accounts for the effect made by chords in $\uG_{0^+}$
and is given in Proposition \ref{T1stVarC}.
The second order formulae for the variation of the log-determinant of the Beltrami-Laplace and K\"ahler operators
are calculated separately in Propositions \ref{ThDelta} and \ref{The3} of Section 
\ref{subsection-second-order} respectively;
this is the content of Theorem \ref{discrete-theorem}.
In both cases, our approach
relies on the asymptotics of the Green's function
in Proposition \ref{Prop-SharpAsymptotics1} 
and Lemma \ref{lNGN}; the latter makes use of the operator
factorisations in Propositions \ref{PVarDelta} and \ref{PVarKahler}
as well as a novel estimate presented in Lemma \ref{lnablabound}.

Formula \ref{finalOPElike-discrete} of Theorem \ref{discrete-theorem}
is not valid for the conformal Laplacian 
and it must be modified by defect terms which take into account the effect of
chords in $\uG_{0^+}$. See formulae \ref{chord2chord} and \ref{chord2edge}.
We propose that these defects are indicative of a discrete curvature anomaly arising 
from the perturbation. This is examined in Section \ref{subsubsection-dipole}.

\medskip
Section \ref{sScalingLimit} handles the proof of Theorem~\ref{TfinalOPElike1}, which deals with the existence and value of the scaling limit given by formula 
\ref{finalOPElike1} for the Beltrami-Laplace and K\"ahler operators.
Sections~\ref{ssRescaling} and \ref{ssRsBilDef} address 
some technical points about bi-local deformations, scaling limits, and re-summation. 
In Section~\ref{sScalLimit} we prove the existence of the scaling limit of 
\ref{finalOPElike-discrete} in the case of a continuous bi-local deformation and settle Theorem~\ref{TfinalOPElike1}. 
The basic idea is to interpret \ref{finalOPElike-discrete} as a Riemann sum with a mesh 
controlled by the scaling parameter. 
The scaling limit considered in Section~\ref{sScalLimit}  
is taken with respect to an isoradial refinement $\widehat{\uG}_{0^+,\ell}$
associated to a (scaled) deformation of our initial, isoradial Delaunay graph $\uG_\mathrm{cr}$. 
In effect the result is a calculation of a nested limit: First we 
take the deformation parameter limit $\epsilon_1, \epsilon_2 \rightarrow 0$
(bringing us to $\widehat{\uG}_{0^+,\ell}$) and then
we subsequently take the $\ell \rightarrow \infty$ scaling limit.

In section~\ref{ssBeyondCheck} 
we ask whether these two limits can be 
interchanged. This question is
related to whether the scaling limit in Theorem \ref{TfinalOPElike1}
exists for a Delaunay graph (not necessarily isoradial) 
which is obtained as a small deformation of an isoradial Delaunay graph. We 
return to this issue in Section \ref{aVarOpFlips}.

Section~\ref{brutal-approach} addresses the issue of the uniform convergence in the ``flip problem'' for smooth, scaled deformations. 
A first attempt is offered in Lemma~\ref{epsFbound2}, where we introduce a lower bound on the range of conformal angles for an isoradial Delaunay triangulation.
This constraint ensures that no flips occur whenever the deformation parameter $\epsilon$ is bounded above by a threshold $\check\epsilon_F$ 
which is uniform both with respect to the scaling parameter and this proper subclass of isoradial Delaunay triangulations.
\medskip

In Section \ref{aVarOpFlips} we return to the general case of deformations $\uG_\epsilon$ of Delaunay graphs $\uG_\mathrm{cr}$ which may incur edge flips.
We look for uniform bounds on the variation of the corresponding
operators $\Delta(\epsilon)$ and $\mathcal{D}(\epsilon)$
for small but non-zero values of the deformation parameters.
In order to get {uniform bounds} with respect to the choice of the initial graph  $\uG_0$
we obtain in Prop.~\ref{PropReps} estimate for the variation of the radius $R(\uf_\epsilon)$ of an arbitrary triangle $\uf_\epsilon$ of $\uG_\epsilon$ 
as the deformation parameter $\epsilon$ varies.
We deduce strong results (summarized in Conj.~\ref{ConjectureLimTr2D}) on the uniform convergence of 
the scaling limit for $\Delta$ (Prop. \ref{propboundDelta})
and of the scaling limit of the second order bi-local
term (leading to the OPE) (Prop. \ref{PropLimTr2});
the later result depends on a conjectural, uniform
estimate (Conj. \ref{Conf3bound})
on $\nabla p_3 (\uf)$ and $\bar{\nabla} p_3(\uf)$ 
in terms of the radius
$R(\uf)$ of a face $\uf$
and the scaling parameter.
We finish the section by showing that there is a qualitative difference
between $\Delta$ and $\mathcal{D}$, and we obtain a weaker
but interesting ``simultaneous convergence'' result for the scaling limit of the second order
bi-local term for $\mathcal{D}$ (Prop. \ref{PropLimTr2D}), summarized in Conj.~\ref{ConjectureLimTr2D}.

\medskip

Section~\ref{sDiscussion} summarizes our results, and presents them from a more statistical physics point of view.
After reviewing the aims of the paper in \ref{sDiscusAim} we discuss in \ref{sDiscus1stO} the first order variation of the log-determinant for the three operators $\Delta$, $\Deltaconf$ and $\mathcal{D}$
vis-\`a-vis the Gaussian Free Field.
We show that formula \ref{varTrLlLB2} for the Laplace-Beltrami operator $\Delta$ 
can be re-expressed in terms of the vacuum expectation value of 
a discrete stress-energy tensor $\mathrm{T}_{\! \scriptscriptstyle{\Delta}}$ for a Grassmann free field 
theory (for convenience we opt for a fermionic analogue of the Massless Free Field (GFF)) supported on $\uG_\mathrm{cr}$
and whose scaling limit coincides with the standard continuous free field. 
This is not a surprise.
Our results for $\mathcal{D}$ and $\Deltaconf$ are similarly expressed using 
discrete stress-energy tensors $\mathrm{T}_{\! \scriptscriptstyle{\mathcal{D}}}$ and $\mathrm{T}_{\! \scriptscriptstyle{\Deltaconf}}$ 
however neither formula \ref{varTrLlLK1} nor formula \ref{varTrLlC1} have a {obvious} continuous limit relating it to the continuous free field.

In \ref{ssDi2ndOr} we discuss the bi-local second order variation formula and the universal form of 
its scaling limit for $\Delta$ and $\mathcal{D}$ in terms of their respective discrete stress-energy tensors.
Furthermore we address the (in general) non-existence of a scaling limit for $\Deltaconf$.

In \ref{ssRelDiff} we discuss the relation and differences between: (i) the model and the questions addressed for Delaunay graphs in our work, and (ii)  previous studies made by Chelkak et al. on the O($n$) model and by Hongler et al. on the GFF and the Ising model on the hexagonal and square lattices respectively.

Finally, in \ref{ssOQandPE} we briefly list some open questions and some possible extensions of this work.

\color{black}
\medskip

Some standard material, technical derivations of results and matters not central to this work are relegated to appendices.

\medskip

Appendix~\ref{aCFTreminder} presents some standard notations and reminders about QFT, CFT and the stress-energy tensor,  in particular for the free boson and the $\mathbf{b}{-}\mathbf{c}$ ghost theory. 

\medskip

Appendix~\ref{prooflemmabound} gives the derivation of Lemma~\ref{lemmabound}, which is instrumental for Theorem~ \ref{TfinalOPElike1} and the derivation of the scaling limit.

\medskip

Appendix~\ref{tiling-by-cocyclic-quad} examines 
the conformal Laplacian $\Deltaconf$ on a particular 
critical Delaunay graph $\uG$ 
as well as the anomalous terms associated with chords in $\uG_{0^+}$
which arise in the second order variation of 
the log-determinant formula for
$\Deltaconf$ addressed in \ref{ssConfLApDEl}.
The graph $\uG$ is sufficiently regular and $\uG_{0^+}$ has a sufficient density of 
chords to ensure that these anomalous terms have a convergent
scaling limit, which is computed explicitly in Proposition~\ref{continuum-limits-anomalies}.

\newpage
\renewcommand{\imath}{\mathrm{i}}

\section{Planar graphs and rhombic graphs}
\label{sCritLapl}

\subsection{Definitions and properties of the basic objects}
\label{ssBasicPlanar}
Let us first introduce the basic geometrical objects that we shall consider: plane triangulations, plane polyhedral graphs (whose faces are cyclic polygons), Delaunay graphs, rhombic graphs, etc. Most of the notations and properties are standard, and can be found in for instance \cite{deBergetalBook2008},\cite{Gallier2013} or \cite{Aurenhammer2013}. Some notations and concepts on isoradial graphs come from 
\cite{Kenyon2002} and \cite{Kenyon14rhombicembeddings}.

\subsubsection{Plane graphs and Delaunay graphs}
\label{sssNotations}

\begin{Def}\label{EmbPlanGr}\end{Def}An \textbf{{embedded planar graph}} will be -- for the purpose of this article -- a graph $\uG$ given  
by a set of vertices $\mathrm{V}(\uG)$ and a set of edges $\mathrm{E}(\uG)$, together with
an injective map $z: \mathrm{V}(\uG) \longrightarrow \Bbb{C}$. 
For a vertex $\uv \in \mathrm{V}(\uG)$ we shall denote 
its complex coordinate by $z(\uv)$;
if there is no risk of confusion we shall sometimes denote the complex coordinate by the 
vertex label $\uv$ itself. 
Each edge $\ue= \overline{\mathtt{uv}}$ is embedded as a \emph{straight line segment} joining its end-points $z(\uu)$ and $z(\uv)$
while the oriented edge $\vec{\ue} = (\uu, \uv)$ corresponds to the displacement vector
$z(\uv) - z(\uu)$. We require that for any pair of edges the corresponding line segments
are non-crossing (i.e. do not share any interior points).
The embedding determines an abstract set of faces $\mathrm{F}(\uG)$ and we
require that each face
$\uf \in \mathrm{F}(\uG)$ is embedded as a \emph{convex polygon}
endowed with a counter-clockwise orientation (so that no face is folded onto an adjacent face). Furthermore 
the set of faces must cover the plane and they must not 
accumulate in any finite region of the plane (i.e. each open 
disk must contain only finitely many faces).
We shall sometimes elide between the description of $\uG$ as a abstract combinatorial entity (i.e. vertices, edges, faces and their incidence relations) 
and its description as an embedded object in the plane (points, segments, and polygons with the geometrical restrictions described above).

\begin{Def}\label{PolyhGr}\end{Def}A \textbf{polyhedral graph} will be an embedded planar graph such that each face is a \emph{cyclic polygon}, i.e. all the vertices of the face lie on a circle (the circumcircle $C_\uf$ of the face $\uf$), in \emph{cyclic order}. Two faces may have the same circumcircle.

\begin{Def}\label{ChordEdge}\end{Def}An edge $\ue \in \mathrm{E}(\uG)$ of a polyhedral graph $\uG$ is a \textbf{chord} if the two faces $\uf$ and 
$\mathtt{g}$ of $\uG$ adjacent to $\ue$ share the same circumcircle (i.e. the circumcenters of $\uf$ and $\mathtt{g}$ coincide).
An edge which is not a chord is said to be a \textbf{regular edge} of $\uG$. If no ambiguity arises, we shall use the term edge for regular edges only, and chords for the others.

\begin{Def}\label{ChLessPolyGr}\end{Def}A \textbf{chordless polyhedral graph} is a  polyhedral graph without chords, i.e. no pair of faces share the same circumcircle. Obviously chordless polyhedral graphs correspond to a special class of circle patterns in the plane.
In a general polyhedral graph, a face which does not share its circumcircle with another face will be said to be a \textbf{chordless face}. 

\begin{Def}[\textbf{Redacted graph}]
\label{defRegGraph}
\end{Def}
Given a polyhedral graph $\uG$, let $\uG^\bullet$ be the graph with the same vertex set $\mathrm{V}(\uG^\bullet) = \mathrm{V}(\uG)$,
the same embedding $z^\bullet = z$, 
and with edge set $\mathrm{E}(\uG^\bullet) = \mathrm{E}(\uG) - \mathrm{chords}(\uG)$, 
where $\mathrm{chords}(\uG)$ is the set of all chords in $\uG$.
We call $\uG^\bullet$ the \textbf{redaction}, or \textbf{redacted graph}, of $\uG$.

\begin{Def}\label{DelaunayGr}\end{Def}
A \textbf{weak Delaunay graph} is a polyhedral graph $\uG$ such that for any face $\uf$, the \emph{interior} of the circumdisk  $D_\uf$ (the closed disk whose boundary is the circumcircle $C_\uf$) contains no vertex of $\uG$. The circumcircle itself contains the vertices of $\uf$, and possibly other vertices. A \textbf{Delaunay graph} is a chordless weak Delaunay graph 
\footnote{Note that in the literature the term \emph{Delaunay graph} often denotes what we call here a \emph{weak Delaunay graph}}.

\begin{Def}\label{TriangulGr}\end{Def}A \textbf{triangulation} is an embedded planar graph $\uT$ such that each face is a triangle. 
Obviously, a triangulation is a polyhedral graph.
A \textbf{Delaunay triangulation} is a triangulation which is a Delaunay graph.
A \textbf{weak Delaunay triangulation} is a triangulation which is a weak Delaunay graphs. 

\begin{Def}\label{CompletionTofG}\end{Def}
A triangulation $\uT$ is called a \textbf{completion}  of a weak Delaunay graph $\uG$ if $\mathrm{E}(\uG)\subset\mathrm{E}(\uT)$. 
Such a triangulation is necessarily weakly Delaunay, and is obtained by saturating $\uG$ with a maximal collection of non-crossing chords.
Clearly the redactions $\uT^\bullet$ and $\uG^\bullet$ coincide.
Throughout the paper $\widehat\uG$  will denote a choice of completion of a weakly Delaunay graph $\uG$.

\begin{Rem}
\label{RemFinInfTriang}
The concepts of a polygonal and Delaunay graph can be 
extended to finite graphs embedded in the Riemann sphere.
This is done in \cite{DavidEynard2014} for instance.
Such a graph can be visualized either on the sphere or as an embedded planar Delaunay
graph together with edges (represented as infinite rays) joining 
vertices on the boundary of the convex hull 
of the graph to a vertex situated at $\infty$ (if present). 
Likewise, the Voronoi construction, as well as the Lawson flip algorithm \cite{Lawson77},
can be adapted to construct a unique embedded Delaunay graph
from any finite configuration of points in the Riemann sphere.
A similar approach can be undertaken for
(finite) graphs embedded in a compact Riemann surface;
for example, this is done implicitly in \cite{Kenyon2002}
for the torus. 

M\"obius transformations preserve the Delaunay property
for finite graphs embedded in the Riemann sphere,
and so one can incorporate the $\mathrm{PSL}_2(\Bbb{C})$ symmetry
into the model, as done in \cite{DavidEynard2014}.
Our situation is different: We are chiefly interested in infinite Delaunay graphs in the plane
which are locally finite, i.e. having are only finitely many vertices in any open ball. 
Although the application of a $\mathrm{PSL}_2(\Bbb{C})$ transformation preserves the Delaunay property,
the resulting graph may cease to be locally finite, since a neighborhood of $\infty$ can be mapped to a finite 
radius ball containing an infinite number of vertices.  
This is not a problem for our study, since we shall consider graph deformations which are implemented by 
bounded functions with compact support.
\end{Rem}

\subsubsection{Isoradial graphs}\label{sssIsoRad}
\begin{Def}\label{IsoradGr}\end{Def}An \textbf{isoradial graph} is a polyhedral graph $\uG$ such that the \emph{circumradii} $R(\uf)$ (the radius of the circumcircle $C_\uf$ of  $\uf$) of all the faces $\uf$ of $\uG$ are equal.  

\begin{Def}\label{RegularGr}\end{Def}Following \cite{Kenyon2002}, a face $\uf$ whose circumcenter is inside or on the boundary of $\uf$ 
(considered as a cyclic polyhedron) is called a \textbf{regular face}. 
A polyhedral graph such that all its faces are regular is called a \textbf{regular graph}.

\begin{Rem} 
\label{rem7}
Given an oriented edge $\vec{\ue}$ of a polyhedral graph we 
define the corresponding {\bf north} and {\bf south angles} $\theta_\mathrm{n}(\vec {\ue} \, )$ and
$\theta_\mathrm{s}(\vec{\ue} \,)$ through figure \ref{triangles}
in Sect.~\ref{ssDelGraph}. 
By the inscribed angle theorem
$\theta_\mathrm{n}(\vec {\ue} \, )$ does not depend upon
the choice of vertex $\un \in \uf_\mathrm{n}$ in the north face.
Likewise $\theta_\mathrm{s}(\vec{\ue} \,)$ is independent
of the vertex $\us \in \uf_\mathrm{s}$ in the south face.
Note that reversing the orientation of $\vec{\ue}$ exchanges the roles of north and south
and so the {\bf conformal angle} $\theta(\ue) :=(\theta_\mathrm{n}(\vec{\ue} \, ) +  \theta_\mathrm{s}(\vec{\ue} \, ))/2$
is independent of the choice of edge orientation, hence the notation $\theta(\ue)$. 
\end{Rem}

\begin{Rem}
Given an edge $\ue = \overline{\mathtt{uv}}$ with
vertices $\mathtt{u, v} \in \mathrm{V}(\uG)$ the 
value of the conformal angle $\theta(\ue)$ 
equals the argument of the following cross-ratio involving the (coordinates of the) vertices
$\uu$, $\uv$, $\un$, $\us$:

\begin{equation}
\label{ConfAngleDef}
\begin{array}{c}
\theta(\ue) 
\ = \ {1\over 2} \arg\Big(-\big[ z(\uu),  z(\uv) \, ;  z(\un),  z(\us)  \big]\Big)
\\ \ \\ \text{with the anharmonic cross-ratio}\\ \ \\
 \big[ z_1, z_2 \, ; z_3, z_4 \big] 
\ = \ \displaystyle  {(  z_1 - z_3)  
(  z_2 - z_4 )\over 
{(   z_1- z_4 )  ( z_2 - z_3 )} } 
\end{array}
\end{equation} 
Consequently the conformal angle is $\mathrm{SL}_2(\Bbb{C})$-invariant owing to the fact that the cross-ratio is invariant.
\end{Rem}

\begin{Rem}
We want to reiterate the comments in Sect.~\ref{ssDelGraph}, and stress that the Delaunay condition as stated in Def.~\ref{DelaunayGr} is equivalent to the condition that for any edge $\ue$ of a polyhedral graph 
its conformal angle $\theta(\ue)$ is postive and bounded between
$$0 < \theta(\ue)<\pi/2\ .$$ The weak Delaunay condition holds for a polyhedral graph if and only if 
for any edge $\ue$ of the graph the conformal angle $\theta(\ue)$ is non-negative
and bounded between $0 \leq \theta(\ue)<\pi/2$.
\end{Rem}

\begin{Rem}
\label{Rem9}
In a weak Delaunay graph, an edge $\ue$ is a chord iff
$$\theta_\mathrm{n}(\vec{\ue}) +\theta_\mathrm{s}(\vec{\ue})= 2\, \theta(\ue)=0\ ,$$
i.e. iff the conformal angle $\theta(\ue)$ vanishes.
However in this case, the north and south angles $\theta_\mathrm{n}(\vec{\ue})$
and $\theta_\mathrm{s}(\vec{\ue})$ need not to be both zero.
The special case
$$\theta_\mathrm{n}(\vec{\ue}) =\theta_\mathrm{s}(\vec{\ue})= \theta(\ue)=0$$
occurs only if the edge $\ue$ is a diameter of the circumcircle of the 
cyclic quadrilateral $(\uu, \us, \uv, \un)$
\end{Rem}

\begin{Rem}
Note that $\pi-2\theta(\ue)$ is the intersection angle between the c.w. oriented north and south circumcircles 
$C_\mathrm{n}$ and $C_\mathrm{s}$. In a polyhedral graph we have $\theta(\ue)>0$ iff $z(\un)$ lies outside 
the circumdisk of $C_\mathrm{s}$, 
or equivalently iff $z(\us)$ lies outside the circumdisk of $C_\mathrm{n}$.
\end{Rem}

\subsubsection{Some properties}
Regular graphs will be useful when discussing rhombic graphs (following Kenyon's treatment, see \cite{Kenyon2002}) 
in section \ref{s3RhombicG}, thanks to the following simple result.

\begin{lemma}
\label{chordless-lemma}
Let $\uG_\mathrm{cr}$ be a planar, isoradial Delaunay graph with common circumradius $R_\mathrm{cr}$.
Then $\uG_\mathrm{cr}$ is regular.
\end{lemma}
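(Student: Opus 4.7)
The plan is to argue by contradiction: suppose some face $\uf$ of $\uG_\mathrm{cr}$ is not regular, and derive a violation of the (weak) Delaunay condition by constructing an intruder vertex inside the circumdisk $D_\uf$.

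First I would use convex geometry to localise the failure of regularity. Since $\uf$ is a convex cyclic polygon, if its circumcenter $\uo$ lies strictly outside the closed polygonal region, then there is at least one edge $\ue=\overline{\uu\uv}$ of $\uf$ whose supporting line $\ell_\ue$ strictly separates $\uo$ from the interior of $\uf$. Let $\mathtt{g}$ denote the face adjacent to $\uf$ across $\ue$ (which exists because $\uG_\mathrm{cr}$ tiles the plane, and is a bounded polygon because every face has circumradius $R_\mathrm{cr}<\infty$).

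The next step exploits the chordless hypothesis. The two circles of radius $R_\mathrm{cr}$ passing through $\uu$ and $\uv$ are mirror images of one another across $\ell_\ue$; one of them is $C_\uf$, and since $\uG_\mathrm{cr}$ is a (chordless) Delaunay graph, the circumcircle $C_\mathtt{g}$ must be the other one. Consequently the circumcenter $\uo'$ of $\mathtt{g}$ is the reflection of $\uo$ across $\ell_\ue$, and hence sits on the $\uf$-side of $\ell_\ue$ while $\uo$ sits on the $\mathtt{g}$-side. The key geometric observation is then: the arc of $C_\mathtt{g}$ lying on the $\mathtt{g}$-side of $\ell_\ue$ is the \emph{minor} arc of $C_\mathtt{g}$, whereas the arc of $C_\uf$ lying on the $\mathtt{g}$-side is the \emph{major} arc of $C_\uf$. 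Because both arcs start and end at $\uu,\uv$ and belong to circles of equal radii, the minor arc of $C_\mathtt{g}$ is contained strictly inside the lens-shaped region bounded by $\ell_\ue$ and the major arc of $C_\uf$, i.e.\ strictly inside the circumdisk $D_\uf$.

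To conclude, any vertex $\uw\in\mathrm{V}(\mathtt{g})\setminus\{\uu,\uv\}$ must lie on this minor arc of $C_\mathtt{g}$ on the $\mathtt{g}$-side of $\ell_\ue$ (since $\mathtt{g}$ has at least three vertices and its embedding is convex and non-degenerate, such a $\uw$ exists and is strictly separated from $\ell_\ue$). By the previous paragraph, $\uw$ lies in the interior of $D_\uf$, contradicting the weak Delaunay property of $\uG_\mathrm{cr}$. The only step that requires any care beyond the bookkeeping above is verifying that the boundary case $\uo\in\ell_\ue$ (chord $\overline{\uu\uv}$ is a diameter) is truly excluded by chordlessness, which is immediate: if $\ell_\ue$ passes through $\uo$, then $C_\uf=C_\mathtt{g}$ and $\ue$ is a chord per Definition~\ref{ChordEdge}, contradicting that $\uG_\mathrm{cr}$ is a Delaunay graph. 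Thus the assumption fails, every face is regular, and $\uG_\mathrm{cr}$ is a regular graph.
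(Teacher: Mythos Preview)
Your proof is correct and follows essentially the same approach as the paper's: both pick an edge $\ue$ of the irregular face across which the circumcenter lies, use isoradiality to reduce the neighbouring circumcircle $C_\mathtt{g}$ to either $C_\uf$ or its reflection, and then derive a contradiction (chord in the first case, vertex strictly inside $D_\uf$ in the second). Your organisation is slightly cleaner in that you dispose of the chord case upfront via the Delaunay hypothesis, whereas the paper runs the two cases in parallel; your final remark about the boundary case $\uo\in\ell_\ue$ is harmless but unnecessary, since strict separation already forces $\uo\notin\ell_\ue$.
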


\begin{proof}
Suppose by contradiction there exists an irregular face $\uf \in \mathrm{F}(\uG_\mathrm{cr})$.
There exists an edge $\ue \in \partial \uf$ with an orientation $\vec{\ue}$ 
such that $\uf = \uf_\mathrm{s}$ and such that {the} 
face $\uf_\mathrm{s}$ is contained in the intersection of the disks of circles $C_\mathrm{s}$ and 
$C$,
where $C$ is the circle of radius $R_\mathrm{cr}$ obtained by  
reflecting $C_\mathrm{s}$ about the line determined by the edge $\ue$.
In virtue of
isoradiality, the vertices $\uv \in \partial \uf_\mathrm{n}$ with $\uv \notin \partial \ue$
must all lie either (1) on the portion of the circle $C$ 
residing in the interior of the disk of circumcircle $C_\uf$
or else (2) on
the circumcircle $C_\mathrm{s}$.
Case (1) is impossible 
because then any vertex $\uv$ of this kind
would violate the Delaunay property with 
respect to the face $\uf_\mathrm{s}$
because edge $\ue$ would form a chord between faces $\mathtt{\uf_\mathrm{n}}$ and $\uf_\mathrm{s}$.
Likewise case (2) is impossible
because edge $\ue$ would form a chord between faces $\uf_\mathrm{n}$ and $\uf_\mathrm{s}$.
So $\uG_\mathrm{cr}$ must be regular.
\end{proof}

\begin{figure}[h]
\begin{center}
\raisebox{-.05in}{\includegraphics[width=4in]{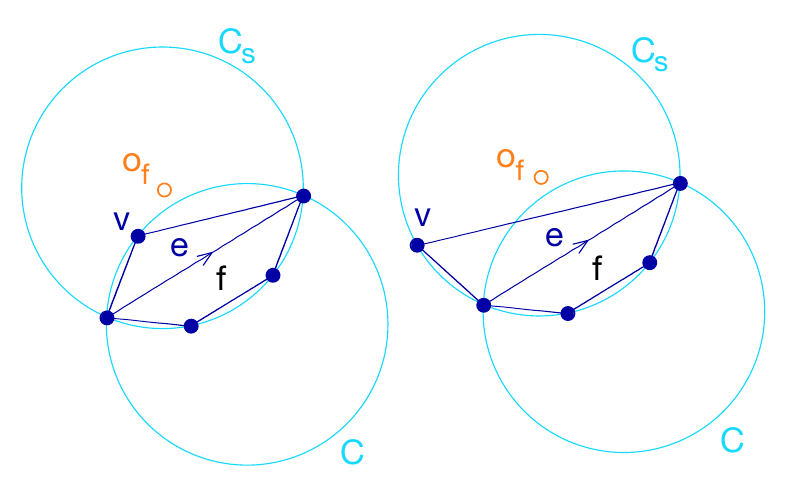}}
\caption{Cases (1) and (2) in the proof of Lemma \ref{chordless-lemma}}
\label{chordless}
\end{center}
\end{figure}

\begin{Cor}
\label{CorChordless}
By Lemma \ref{chordless-lemma} the redacted graph $\uG^\bullet$ is an isoradial, regular Delaunay graph whenever $\uG$ is isoradial and weakly Delaunay.
\end{Cor}

\subsubsection{{Rhombic graphs and abstract rhombic surfaces}}
\label{s3RhombicG} 
We now consider the bipartite kite graph built from the vertices and the face centers of a Delaunay graph, as well as the associated concept of rhombic surface. 
\begin{Def}[\textbf{Kite graphs $\uG^\lozenge$}]
For a Delaunay graph $\uG$ let 
$\uG^\lozenge$ denote the bipartite graph whose vertex set consists of all vertices $\uv$ of $\uG$ (the black vertices $\bullet$)
together with all circumcenters $\mathtt{o}_\uf$ of faces $\uf$ of $\uG$ (the white vertices $\circ$), and whose edges correspond precisely to those
pairs $\{\uv, \mathtt{o}_\uf \}$ for which $\uv \in \partial \uf$. We extend the embedding $z$ of $\uG$ to
$\uG^\lozenge$ by setting $z(\uo_\uf) := z(\uf)$ for each face $\uf \in \mathrm{F}(\uG)$ where 
\begin{equation}
\label{ }
z(\uf) \, := \ {1 \over {4\mathrm{i}}} \,
{|z(\uu)|^2(z(\uv) - z(\uw)) + |z(\uv)|^2(z(\uw) - z(\uu)) + |z(\uw)|^2(z(\uu) - z(\uv)) 
\over {z(\uv)\overline{z}(\uu) - z(\uu)\overline{z}(\uv) + z(\uw)\overline{z}(\uv) - z(\uv)\overline{z}(\uw) +
z(\uu)\overline{z}(\uw) - z(\uw)\overline{z}(\uu)   }}
\end{equation}
is the complex coordinate of the circumcenter of the face $\uf \in \mathrm{F}(\uG)$
with $\uu, \uv, \uw \in \partial \uf$ any choice of three vertices appearing
in counter-clockwise order.
As constructed, each face of the graph $\uG^\lozenge$ 
is quadrilateral (in fact a \emph{kite})
$\lozenge(\overline{\uu\uv}) = (\uu,\mathtt{o}_\us, \uv, \mathtt{o}_\un )$
corresponding to a unique unoriented edge 
$\overline{\mathtt{uv}}$ of the graph. 
\end{Def}

\begin{Rem}
\label{rIsoChless=Rhombic}
For any weak Delaunay graph $\uG$ we define $\uG^\lozenge := (\uG^\bullet)^\lozenge$. 
Clearly $\uG_1^\lozenge = \uG_2^\lozenge$ if and only if
$\uG_1^\bullet= \uG_2^\bullet$ for any two weak Delaunay graphs $\uG_1$ and $\uG_2$.
\end{Rem}

\begin{Def}[\textbf{Rhombic surface $\uS_\uG^\lozenge$}]
\label{dRhomSurf}
Following \cite{DavidEynard2014}, a rhombic surface $\uS_\uG^\lozenge$ can be constructed from a Delaunay {graph} $\uG$ in the following way: 
assign to each unoriented edge $\ue= \overline{\uu \uv}$ a rhombus 
$\lozenge(\ue) = \tilde{\uu} \tilde{\uo}_\mathrm{s} \tilde{\uv}\tilde{\uo}_\mathrm{n}$ 
with unit edge lengths $\ell =1$ and rhombus angle $\angle \tilde{\uo}_\mathrm{s} \tilde{\uu} \tilde{\uo}_\mathrm{n}= 2 \theta(\ue)$
as depicted in fig.~\ref{unit-rhombus}. 

If two edges $\ue_1$ and $\ue_2$ of the graph {share a common vertex and simultaneously} belong to a common face, then
rhombi $\lozenge (\ue_1)$ and $\lozenge(\ue_2)$ are glued together along their common edge.
In this way, we obtain an abstract rhombic surface $\uS_\uG^\lozenge$. 
\begin{figure}[h]
\begin{center}
\raisebox{0.cm}{\includegraphics[width=2.5in]{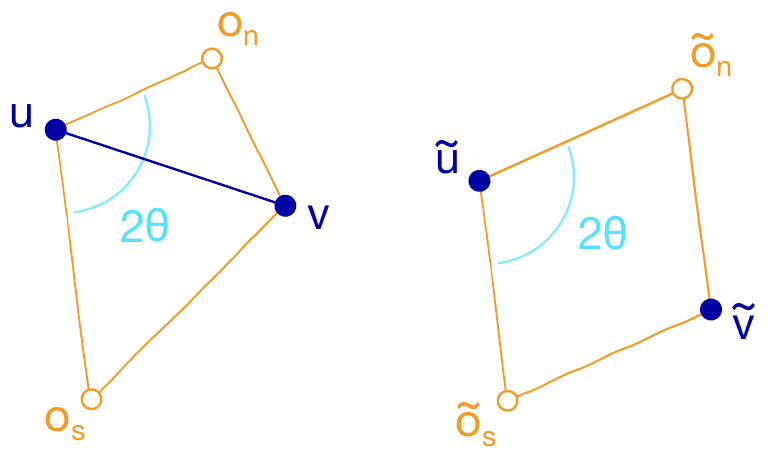}}
\caption{An edge $\ue=\uu\uv$ of $\uG$ and the associated kite in the plane (left), and the associated rhombus $\lozenge(\ue)$ of $\uS_\uG^\lozenge$ (right)}
\label{unit-rhombus}
\end{center}
\end{figure}
\end{Def}

A simple example is depicted in the figure~\ref{fGraphKiteSurf} below. 
In this example an explicit isometric embedding {in $\mathbb{R}^3$} as a tesselated rhombic surface is possible.
(a) is a piece of an Delaunay graph $\uG$, in blue, with the kites associated to each edge (in orange); (b) is the associated kite graph $\uG^\lozenge$ (in orange). (c) is an isometric embedding in $\mathbb{R}^3$ of the associated rhombic surface $\uS_\uG^\lozenge$. In this particular example, the conformal angles $\theta(\ue)$ for each edge of 
$\uG$ equal $\pi/2$, and so the faces of $\uS_\uG^\lozenge$ are in fact squares, and the embedding (c) is a surface in $\mathbb{Z}^3$. In general the rhombic surface 
$\uS_\uG^\lozenge$ cannot be embedded isometrically and rigidly into $\mathbb{R}^3$.

\begin{figure}[h!]
\begin{center}
\includegraphics[width=2.in]{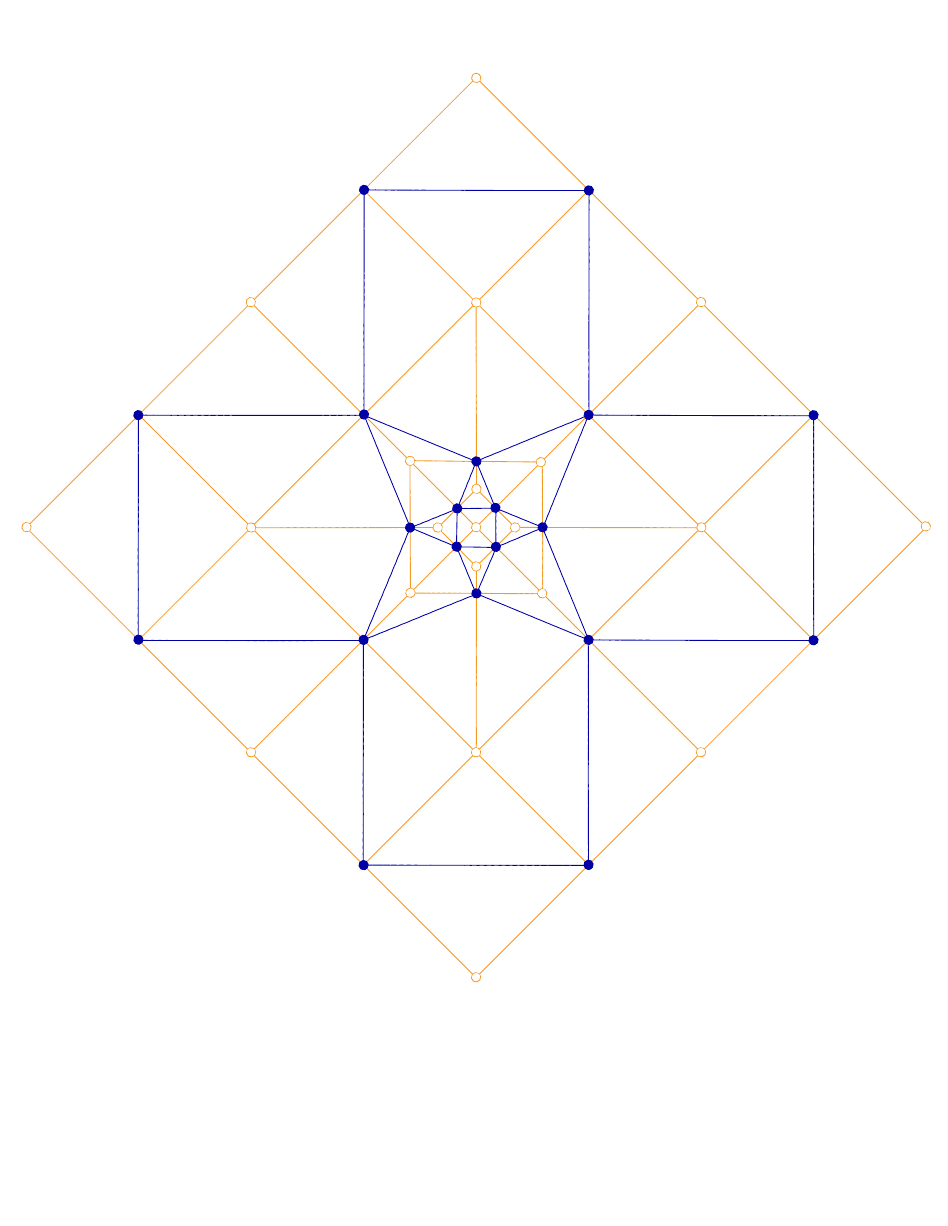}\qquad
\includegraphics[width=2.in]{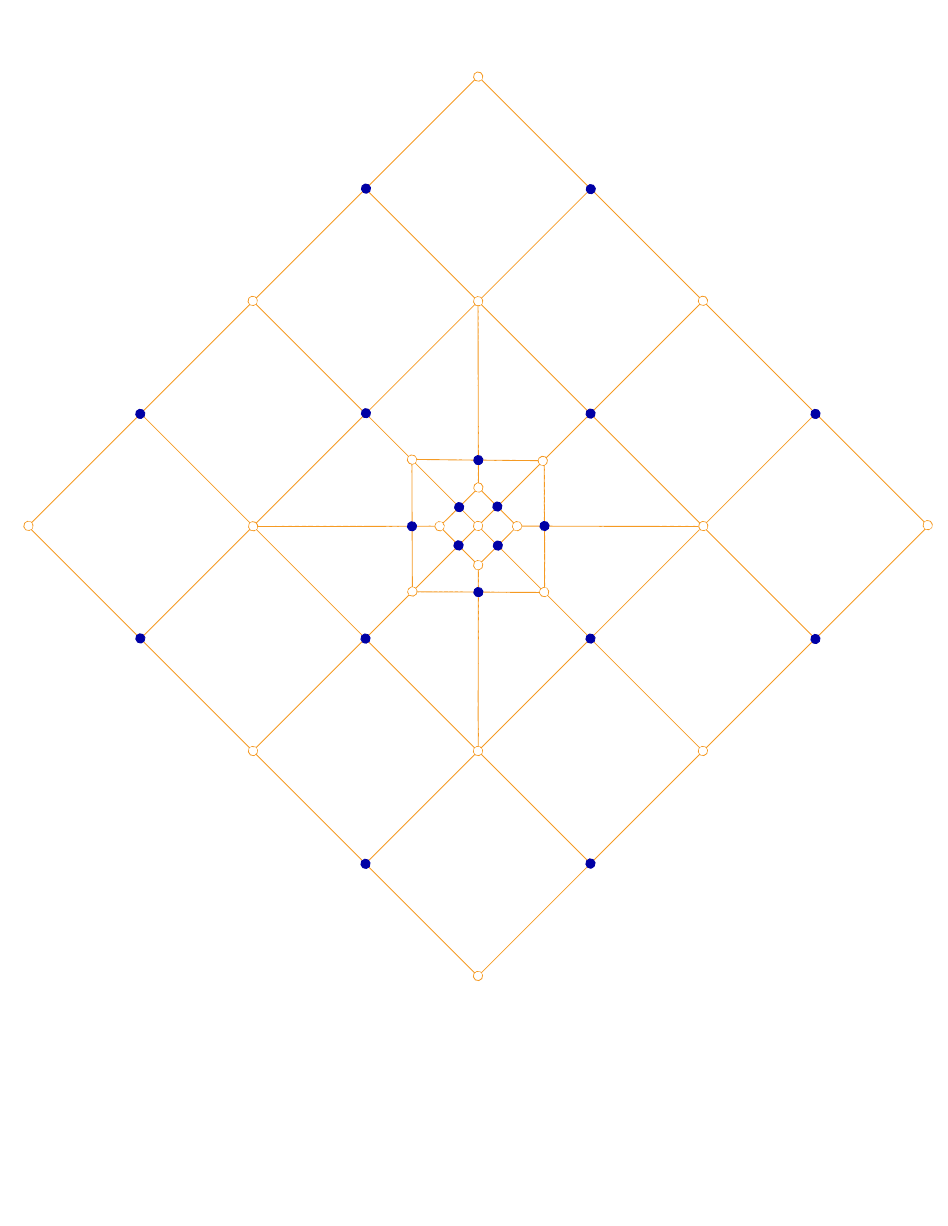}\\
\vskip -.5in (a)\hskip 2.in (b)\\
\vskip -.2 in 
\includegraphics[width=2.5in]{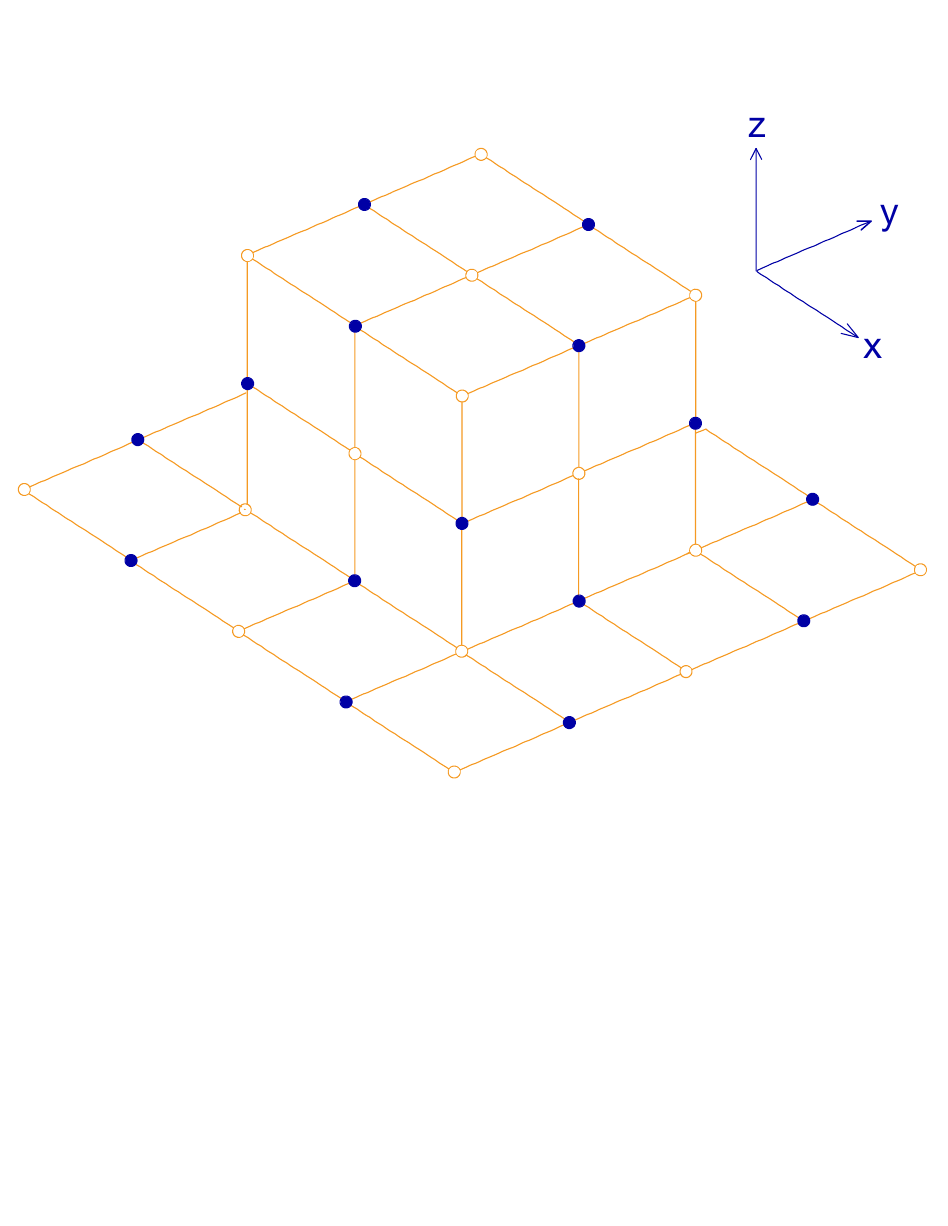}\\
\vskip -.80in(c)
\caption{An example (in blue) of a Delaunay graph $\uG$ (a), of the associated kite graph $\uG^\lozenge$ (b), and of the rhombic surface $\uS_\uG^\lozenge$, here represented as consisting of squares embedding in $\mathbb{R}^3$ (c). The curvature $K$ is localized on the $\circ$ vertices with 3 neighbors $\bullet$ vertices (positive $K$) or 5 neighbors $\bullet$ vertices (negative $K$) .}
\label{fGraphKiteSurf}
\end{center}
\end{figure}

A rhombic surface is flat at each vertex $\tilde{\uu}$ associated to a vertex $\uu$ of $\uG$ but has 
a potential curvature defect at each vertex $\tilde{\uo}_{\uf}$ 
corresponding to a circumcenters $\uo_{\uf}$ of a face $\uf$ of $\uG$, 
with scalar (Ricci) curvature $R_{\scriptscriptstyle{\mathrm{scal}}}$ defined by
\begin{equation}
\label{scalar-curvature}
R_{\scriptscriptstyle{\mathrm{scal}}}(\tilde\uo_\uf) := \ 4\pi - 2 \sum\limits_{ \ue \in\partial \uf} \, \big(\pi - 2 \theta(\ue) \big)
\end{equation}
If $R_{\scriptscriptstyle{\mathrm{scal}}}(\tilde\uo_\uf)=0$ for every {face $\uf$ of the graph}, $\uG$ is said to be \emph{flat}. 
It is easy to see that this is equivalent to saying that the {Delaunay graph} is \emph{isoradial}, namely that all circumradii are equal to some $R$. 
Note that for every oriented edge $\vec{\ue}$ of an {isoradial, polyhedral graph} either $\theta_\mathrm{n} (\vec{\ue} \,)
=\theta_\mathrm{s}( \vec{\ue} \,)= \theta( \ue)>0$ or $\theta_\mathrm{n}(\vec{\ue} \,)= -\theta_\mathrm{s}(\vec{\ue} \,)$ 
in which case $\theta ( \ue)=0$.

When $\uG$ is isoradial (with common circumradius $R$) 
each kite $\lozenge(\overline{\mathtt{uv}})$ will be a rhombus with side length $R$; in this case
we shall refer to $\uG^\lozenge$ as a \textbf{rhombic graph}. Up to a global rescaling $R \to 1$
we have $\uG^\lozenge=\uS_\uG^\lozenge$.
This corresponds to the rhombic graphs discussed in \cite{Kenyon2002}.
\begin{Rem}
\label{DelaunayVsRhombic}
Isoradial Delaunay graphs are in bijection with the rhombic graphs of \cite{Kenyon2002}.
\end{Rem}

\begin{figure}[h]
\begin{center}
\raisebox{-.8in}{\includegraphics[width=2in]{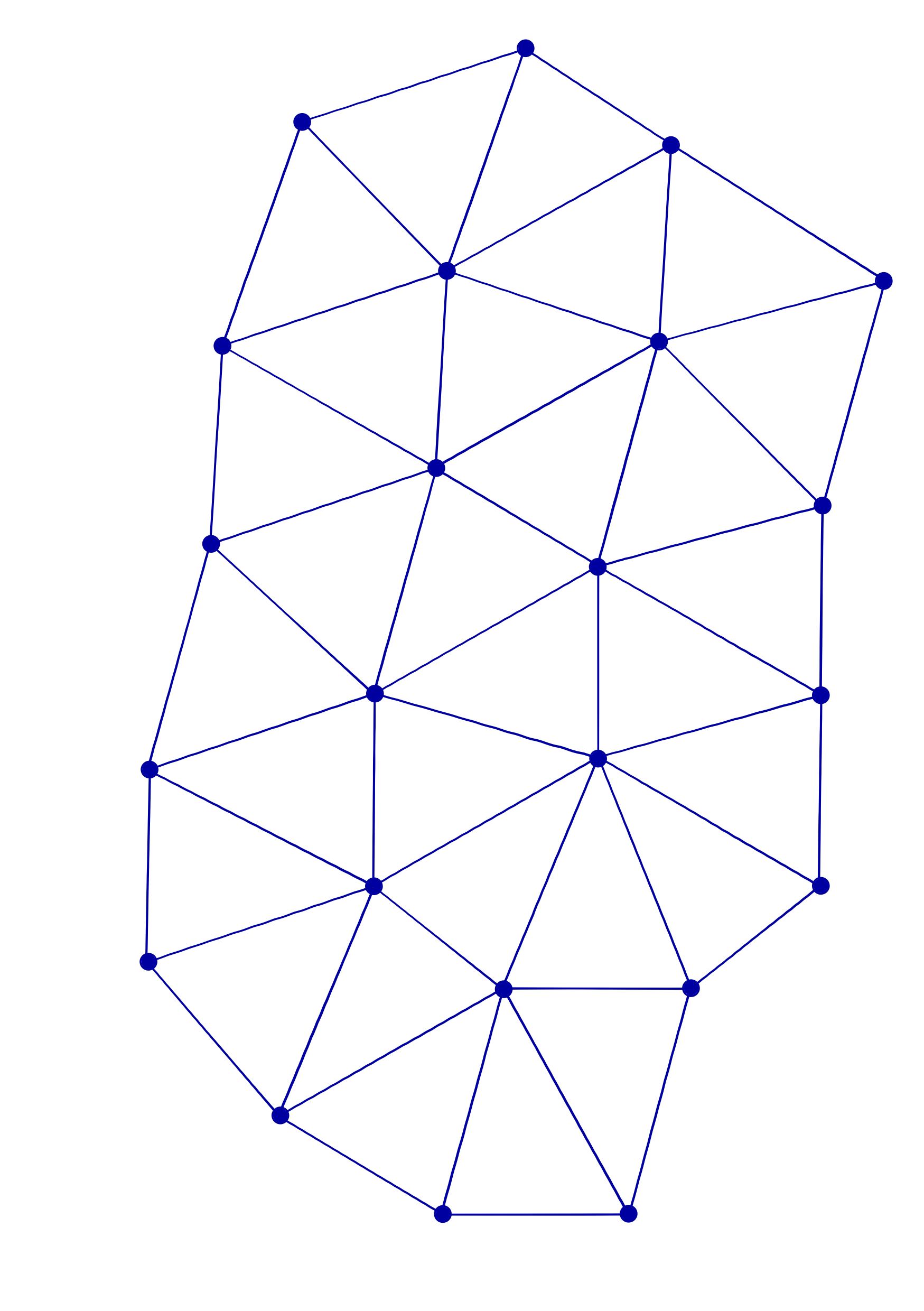}}
\raisebox{-.8in}{\includegraphics[width=2in]{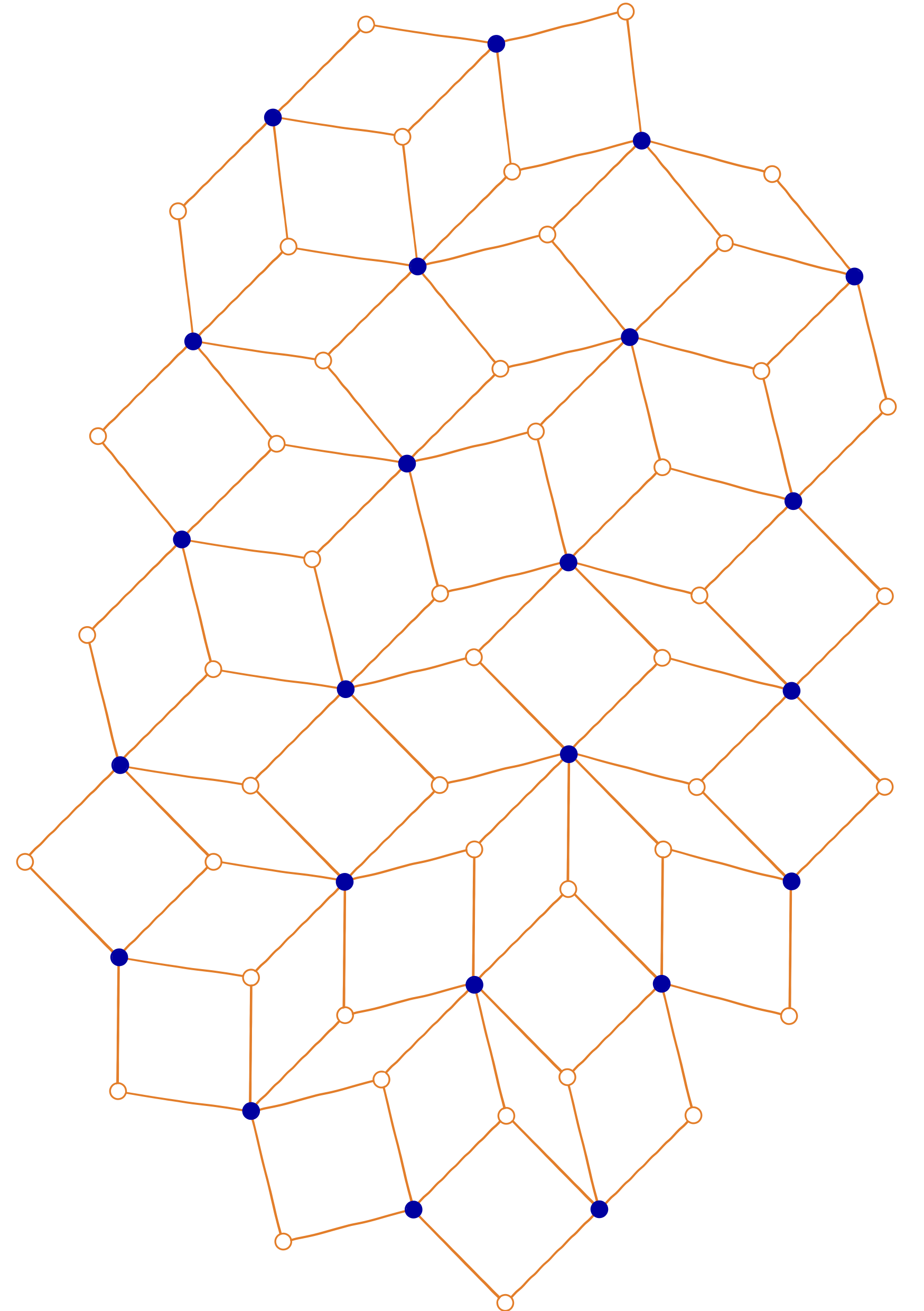}}
\caption{Fragments of an isoradial Delaunay graph
$\uG_\mathrm{cr}$ (on the left) and its rhombic graph $\uG_\mathrm{cr}^\lozenge$ (on the right). }
\label{rhombic-graph}
\end{center}
\end{figure}

\subsection{Geometry on rhombic graphs}
\label{ssRhomGraph}
In the following discussion $\uG_\mathrm{cr}$ will be an isoradial Delaunay graph
with embedding $z_\mathrm{cr}:\mathrm{V}(\uG_\mathrm{cr}) \longrightarrow \Bbb{C}$
and, if not specified otherwise, we shall assume for simplicity that 
the value of the common circumradius is $R_\mathrm{cr} = 1$.
Let us recall some geometrical concepts of \cite{Kenyon2002} and \cite{Kenyon14rhombicembeddings}, with some more material needed in this paper.

\subsubsection{Paths on rhombic graphs}
\label{sssPath}
A path in $\uG_\mathrm{cr}^\lozenge$ is a finite sequence of vertices $\mathbbm{v} =
\big(\uv_0, \dots, \uv_k\big)$ such that for each $1 \leq j \leq k$
 the vertices $\uv_{j-1}$  and $\uv_j$ are joined by an edge $\ue_j$ of $\uG_\mathrm{cr}^\lozenge$; 
in this case we say $\mathbbm{v}$ is a path of length $k$ from $\uv_0$ to $\uv_k$.
Let $\vec{\ue}_j=(\uv_{j-1},\uv_j)$ be the oriented edge corresponding to $\ue_j$, let $\vec{\mathbb{E}}(\mathbbm{v})=(\vec{\ue}_1,\cdots,\vec{\ue}_k)$ be the sequence of oriented edges of $\mathbbm{v}$, and $E \big( \mathbbm{v} \big)=\bigcup_{j}\{\ue_j\}$ the set of edges of $\mathbbm{v}$.
To each edge $\vec{\ue}_j$ of $\mathbbm{v}$ is associated a phase
$e^{\mathrm{i}\theta_j} := z_\mathrm{cr}(\uv_j)  - z_\mathrm{cr}(\uv_{j-1})$. We denote by
$\underline{\theta}(\mathbbm{v})=\big( \theta_1, \dots , \theta_k \big)$ the sequence of angles of these phases.

\begin{figure}[h]
\begin{center}
\raisebox{-.8in}{\includegraphics[width=2.5in]{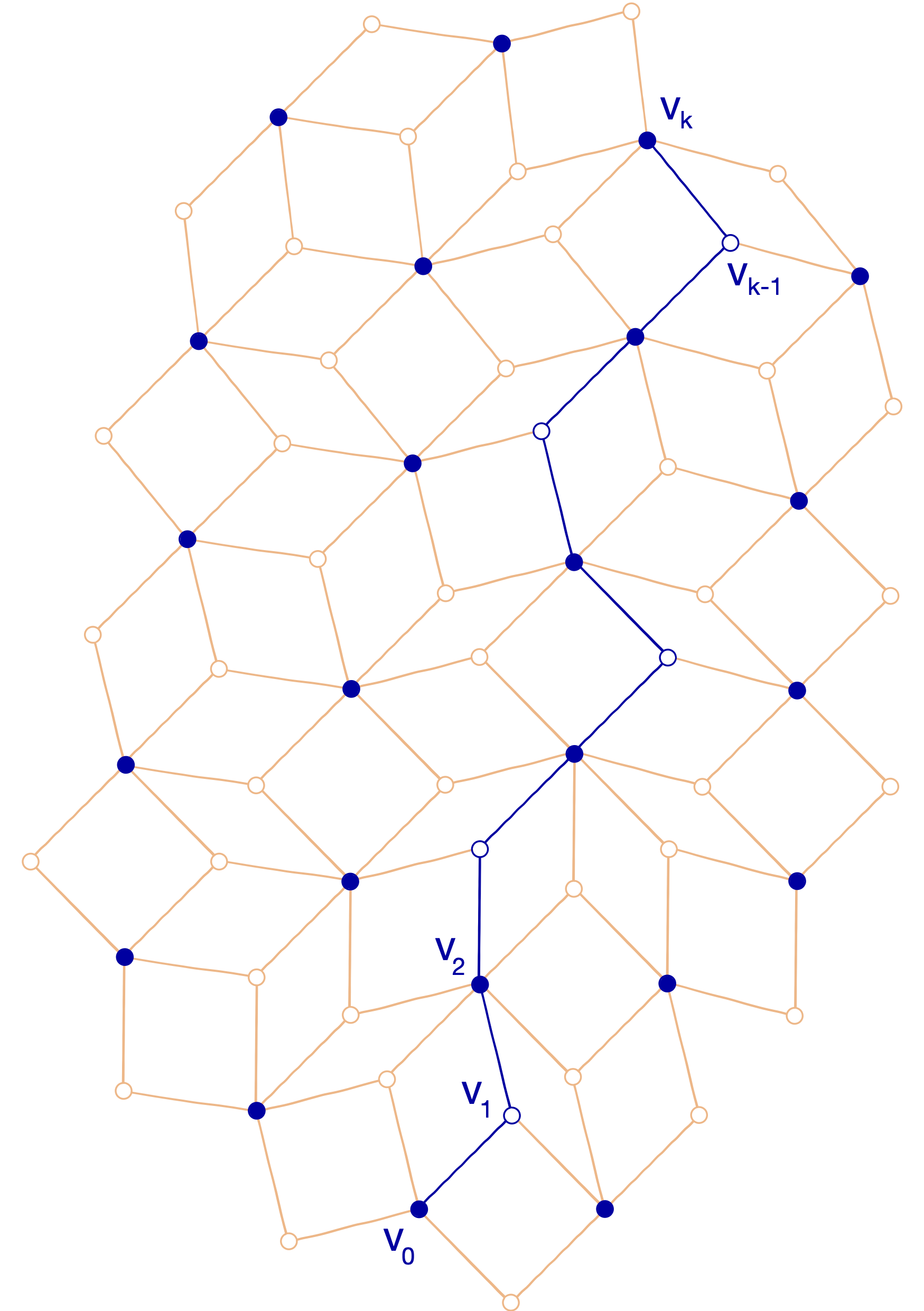}}
\caption{Path $\mathbbm{v}= \big(\uv_0, \dots, \uv_k \big)$ in the rhombic graph $\uG_\mathrm{cr}^{\lozenge}$}
\label{rhombic-path}
\end{center}
\end{figure}

We can regard the rhombic graph $\uG_\mathrm{cr}^\lozenge$ as a cellular decomposition of the plane;
accordingly vertices, oriented edges, and oriented faces of $\uG_\mathrm{cr}^\lozenge$ can be viewed respectively 
as $0$, $1$, and $2$-chains of a cellular complex $\mathcal{X}$ with $\Bbb{Z}$-coefficients. 
For a path $\mathbbm{v}$, let $\vec{\mathbbm{v}}$ denote the $1$-chain 
$\vec{\ue}_1 + \cdots + \vec{\ue}_k$ 
in $\mathrm{C}_1\big(\mathcal{X}; \Bbb{Z} \big)$.
Two paths  $\mathbbm{v}_1$ and $\mathbbm{v}_2$ are said to 
differ by an oriented rhomb $\lozenge^\circlearrowleft$ 
if $\vec{\mathbbm{v}}_2= \vec{\mathbbm{v}}_1 + \partial \lozenge^\circlearrowleft$;
see figure \ref{rhombic-complement} for an example. The vanishing of 
$\mathrm{H}_1\big(\mathcal{X}; \Bbb{Z} \big)$
is equivalent to the fact that any two paths $\vec{\mathbbm{v}}_1$ and $\vec{\mathbbm{v}}_2$
both from a vertex $\uu$ 
to a vertex $\uv$ must differ by a sum of oriented rhombs.

\begin{figure}[h]
\begin{center}
\raisebox{-.1in}{\includegraphics[width=2in]{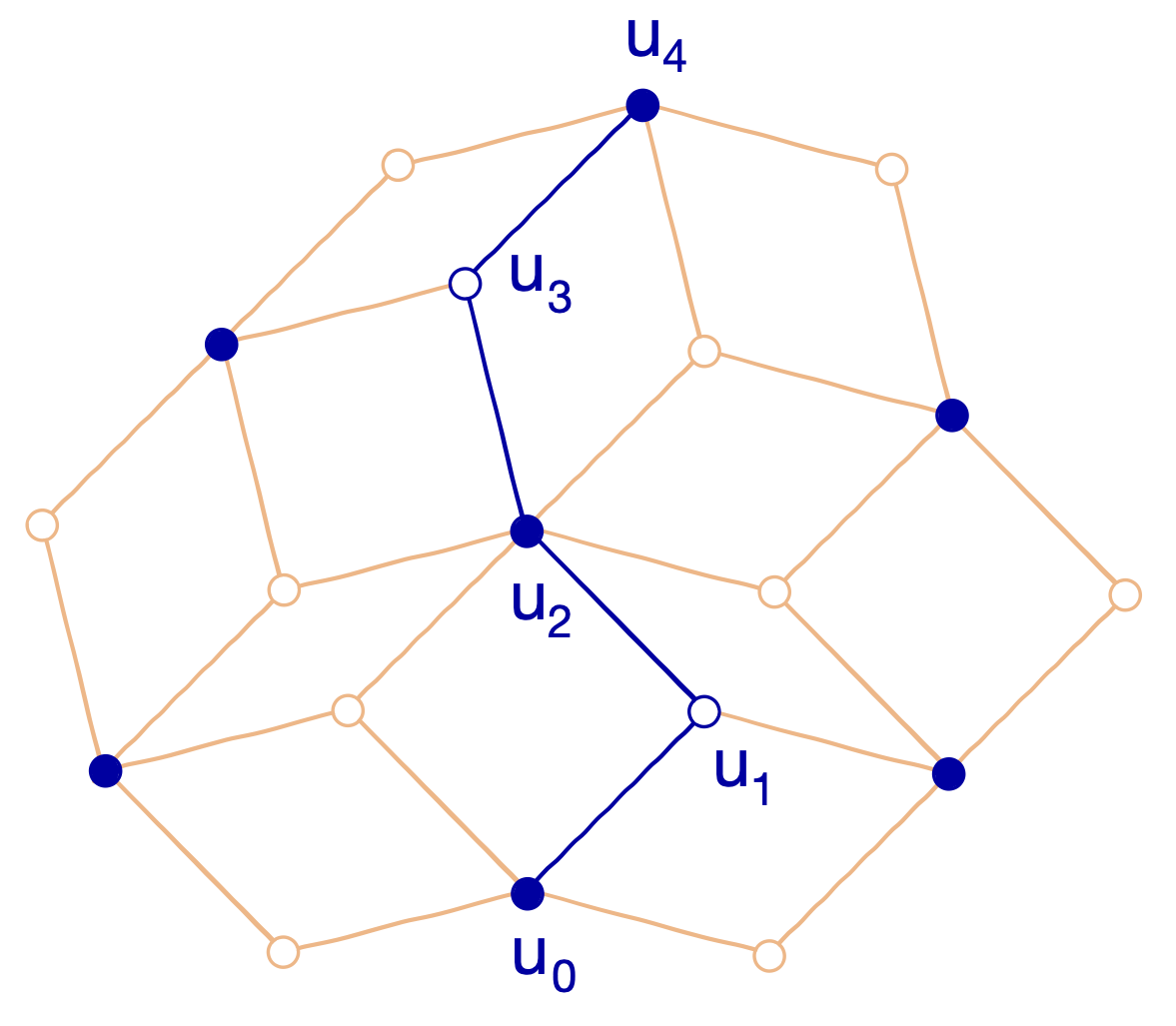}}
\raisebox{-.1in}{\includegraphics[width=2in]{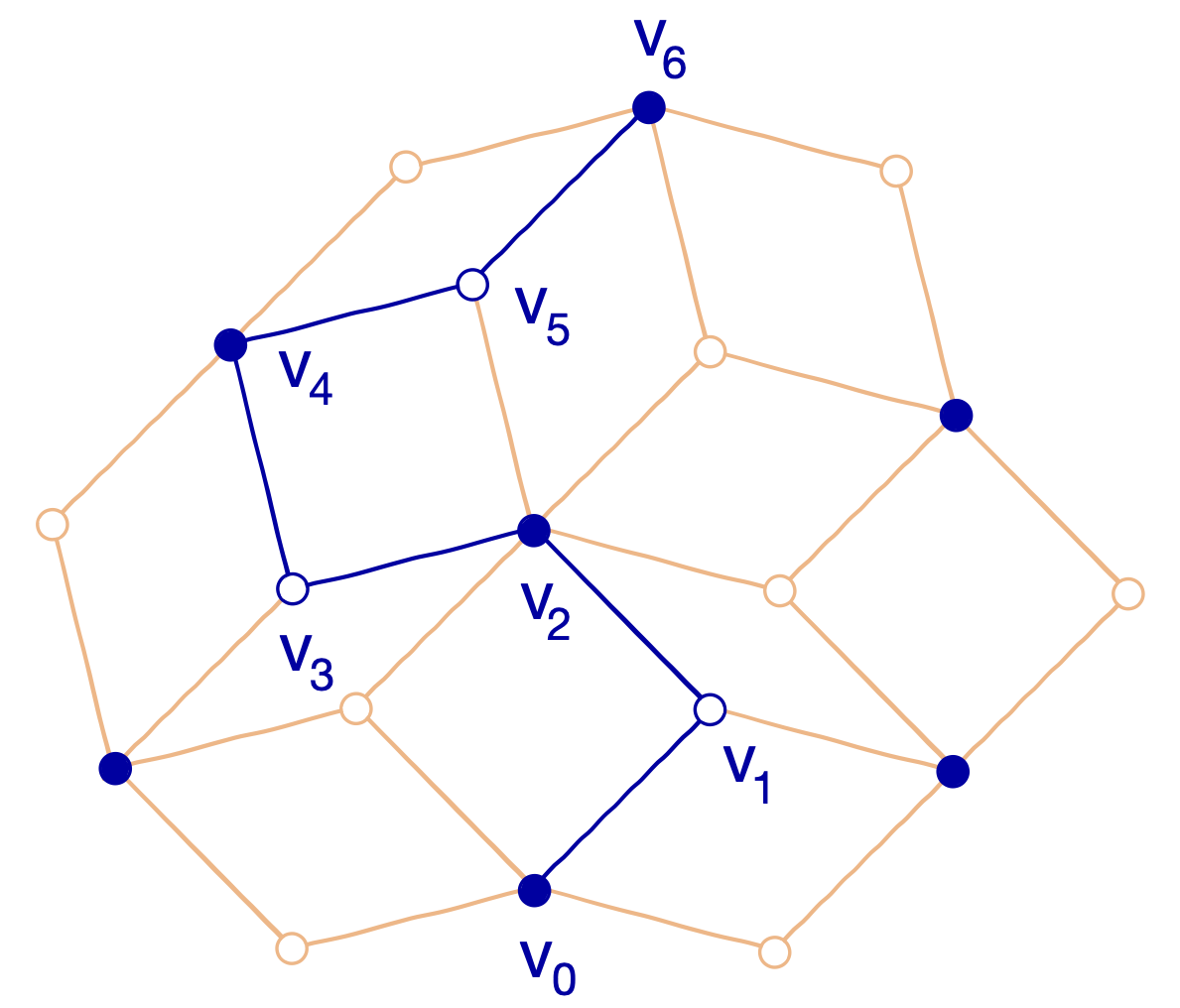}}
\caption{Paths $\mathbbm{u}= \big(\uu_0, \dots, \uu_4 \big)$ 
and $\mathbbm{v}= \big(\uv_0, \dots , \uv_6 \big)$ 
differ by a rhomb.}
\label{rhombic-complement}
\end{center}
\end{figure}

\bigskip
\noindent
For an integer $n$ together with an oriented edge $\vec{\ue}$ joining a vertex $\uu$ to a vertex
$\uv$ set $\lfloor \vec{\ue}  \, \rfloor_n := e^{\mathrm{i} n \theta}$ where $e^{\mathrm{i}\theta} = z_\mathrm{cr}(\uv) - z_\mathrm{cr}(\uu)$
is the phase of the difference of the coordinates of the vertices; extend this by linearity to 1-chains in $\mathrm{C}_1\big(\mathcal{X}; \Bbb{Z} \big)$, and thus define
$\left\lfloor \sum_j  a_j \vec{\ue}_j \right\rfloor_n := \sum_j a_j \, \lfloor \vec{\ue}_j  \rfloor_n$. 
Notice that
$\left\lfloor \lozenge^\circlearrowleft \right\rfloor_n = 0$ for any oriented rhomb $\lozenge^\circlearrowleft$ whenever $n$ is
an odd integer. 
It follows that for any path $\mathbbm{v}$, and for any odd integer $n=2d+1$, $\left\lfloor \vec{\mathbbm{v}} \, \right\rfloor_n$ depends only on the two end-points $( \uv_0,\uv_k )$ of $\mathbbm{v}$.

\begin{Def}
\label{def-p_n} For any pair of vertices $\uu$ and $\uv$ of $\uG_\mathrm{cr}^\lozenge$ and for any odd
integer $n=2d+1$, we define $p_n(\uu,\uv ) := \left\lfloor \vec{\mathbbm{v}} \right\rfloor_n$ where $\mathbbm{v}$
is any path from $\uu$ to $\uv$.
\end{Def}
\bigskip
\noindent
Note that $p_1(\uu,\uv)= z_\mathrm{cr}(\uv) - z_\mathrm{cr}(\uu)$. In addition $p_n(\uu,\uv)= - p_n(\uv,\uu)$.

\subsubsection{Train-tracks}
\label{sssTracks}
\begin{Def} [train-track]
\label{train-track}
A \textbf{train-track} in the rhombic graph $\uG_\mathrm{cr}^\lozenge$
is an infinite sequence of rhombs $\mathbbm{t} = \big( \lozenge_n \, \big| \, n \in \Bbb{Z} \big) $ 
whose consecutive rhombs $\lozenge_n$ and $\lozenge_{n+1}$ are incident along
a common edge $\ue_n$ for each $n \in \Bbb{Z}$
and for which the edges $\ue_n$ and $\ue_{n+1}$ are parallel
for each $n \in \Bbb{Z}$. 
We shall denote these parallel edges ``train-track tie'', or in short ``\textbf{tie}''. 
We consider train-tracks up to shift and inversion, i.e. 
$\mathbbm{t}^{\scriptscriptstyle (1)} = \big\{ \lozenge^{\scriptscriptstyle (1)}_n\, \big| \, n \in \Bbb{Z} \big\}$
is equivalent to 
$\mathbbm{t}^{\scriptscriptstyle (2)} = \big\{ \lozenge^{\scriptscriptstyle (2)}_n\, \big| \, n \in \Bbb{Z} \big\}$ 
if $\lozenge^{\scriptscriptstyle (2)}_n = \lozenge^{\scriptscriptstyle (1)}_{\pm n+d}$ for some $d \in \Bbb{Z}$.
Let $\mathtt{Ties}(\mathbbm{t}) = \big\{ \ue_n \, \big| \, n \in \Bbb{Z} \big\}$ 
denote this set of edges. A train-track $\mathbbm{t}$ has \textbf{inclination} $\theta_\mathbbm{t}
\in [0, \pi)$ if the ties $\ue_n$ are parallel to the phase $\exp \big( \mathrm{i}\theta_\mathbbm{t} \big)$.
\end{Def}

\begin{figure}[h]
\raisebox{-.8in}{\includegraphics[width=2in]{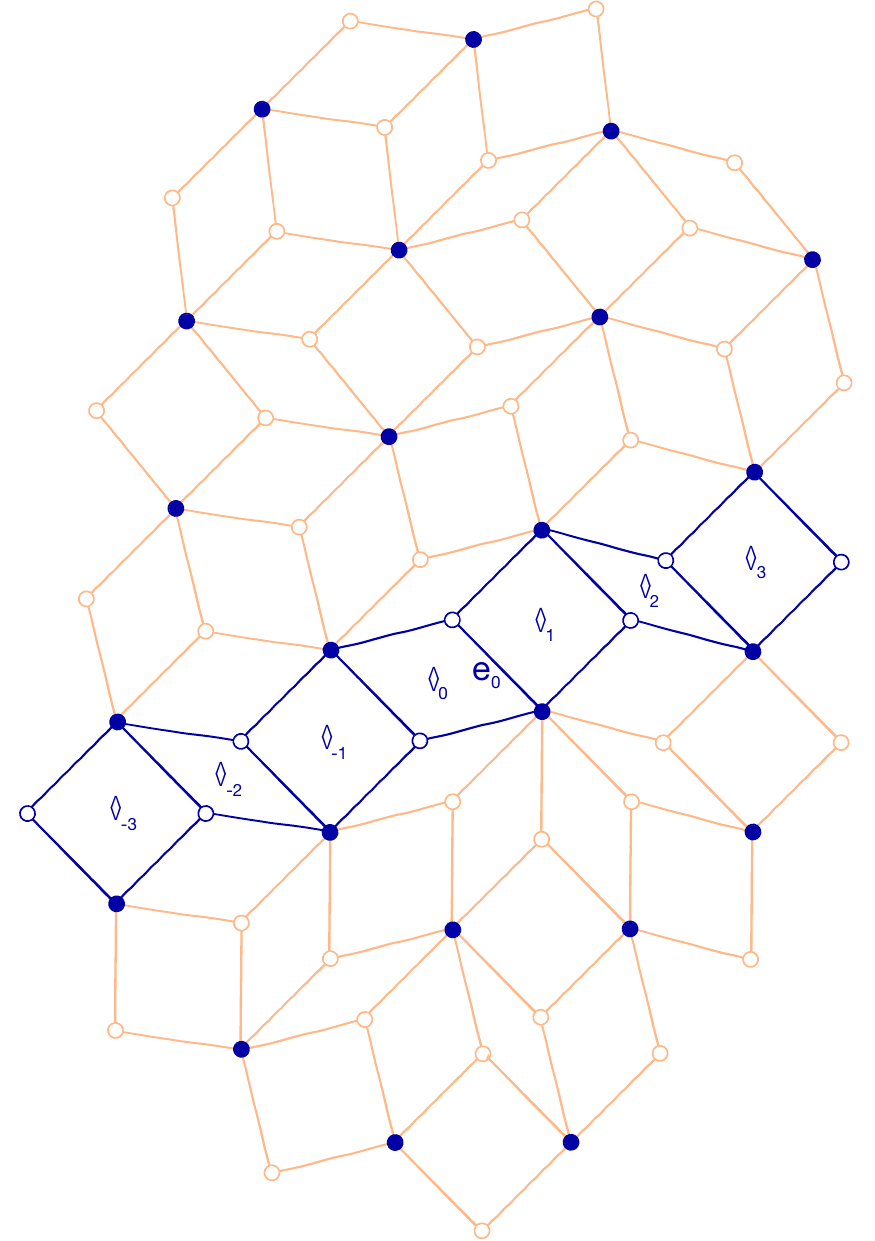}}
\caption{\label{train-track}Train-track $\mathbbm{t}$.}
\end{figure}

Clearly any train-track is determined (but not uniquely) by an initial rhomb $\lozenge_0$ 
together with a choice of one of its edges $\ue_0$.
For any choice of initial edge
$\ue_0$ in $\mathbbm{t}$ the distance of each edge $\ue_n$ from 
the axis determined by $\ue_0$ is monotonically increasing with $n$; i.e. the 
train-track must move forward in the axis perpendicular to $\ue_0$.

We say two train-tracks
$\mathbbm{t}^{\scriptscriptstyle (1)} = \big\{ \lozenge^{\scriptscriptstyle (1)}_n \, \big| \, n \in \Bbb{Z} \big\}$ and 
$\mathbbm{t}^{\scriptscriptstyle (2)} = \big\{ \lozenge^{\scriptscriptstyle (2)}_n\, \big| \, n \in \Bbb{Z} \big\}$ \textbf{intersect} if 
$\lozenge^{\scriptscriptstyle (1)}_m = \lozenge^{\scriptscriptstyle (2)}_n$
for some $m, n \in \Bbb{Z}$.
Two important features of any rhombic-graph
$\uG_\mathrm{cr}^\lozenge$ are

\begin{Fact} 
No train-track can intersect itself, i.e. if 
$\mathbbm{t} =  \big\{ \lozenge_n \, \big| \, n \in \Bbb{Z} \big\}$ then
$\lozenge_m \ne \lozenge_n$  for all integers $m \ne n$. 
\end{Fact}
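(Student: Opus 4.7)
My proof plan is to argue by contradiction: assume some train-track $\mathbbm{t} = \{\lozenge_n\}_{n \in \Bbb{Z}}$ has $\lozenge_m = \lozenge_n$ for some integers $m \ne n$, and derive a contradiction by showing that the midpoints of the successive ties $\ue_n$ drift strictly monotonically in the direction perpendicular to the common tie direction. After rotating so that the tie inclination $\theta_\mathbbm{t}$ is horizontal, the goal is to establish that the $y$-coordinate of the midpoint of $\ue_n$ is a strictly monotonic function of $n$, which immediately rules out any repetition of ties, and thus of rhombi.

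The geometric input comes in two pieces. First, since $\uG_\mathrm{cr}^\lozenge$ is a planar cell complex whose $2$-cells are the (interiors of the) embedded rhombi, any two rhombi sharing a common edge must lie on opposite sides of the line through that edge. Applied to consecutive rhombi $\lozenge_n$ and $\lozenge_{n+1}$ meeting along the tie $\ue_n$, this forces $\lozenge_{n+1}$ to sit on the opposite side of $\ue_n$ from $\lozenge_n$, so the ``forward'' direction of motion perpendicular to the ties is consistently propagated along the train-track. Second, because $\uG_\mathrm{cr}$ is Delaunay, every edge satisfies $0 < \theta(\ue) < \pi$, so each rhomb $\lozenge_n$ is non-degenerate: its non-tie sides make some angle $\theta'_n \in (0,\pi)$ with the horizontal. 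Consequently the displacement from the midpoint of $\ue_{n-1}$ to the midpoint of $\ue_n$ is the unit vector $e^{\mathrm{i} \theta'_n}$, which has strictly positive $y$-component (choosing the sign of the initial perpendicular direction appropriately). Iterating gives the desired strict monotonicity of midpoint $y$-coordinates in $n$.

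Strict monotonicity forces the ties $\ue_n$ to be pairwise distinct horizontal unit segments in the plane. Since each rhomb $\lozenge_n$ is uniquely determined by the ordered pair of its two opposite ties $(\ue_{n-1}, \ue_n)$, the rhombi $\lozenge_n$ must then be pairwise distinct as well, contradicting the assumption. The main obstacle is essentially bookkeeping rather than genuine difficulty: one must justify the ``opposite side'' step cleanly from the planarity of the embedded rhombic complex, and propagate the choice of perpendicular orientation coherently through the inductive step, so that one does not have to argue separately about positive and negative values of $n$. Once these are in hand, the strict monotonicity of the midpoint coordinates makes the contradiction immediate.
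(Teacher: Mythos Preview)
Your argument is correct and matches exactly the geometric observation the paper itself makes just before stating the Fact: ``For any choice of initial edge $\ue_0$ in $\mathbbm{t}$ the distance of each edge $\ue_n$ from the axis determined by $\ue_0$ is monotonically increasing with $n$; i.e.\ the train-track must move forward in the axis perpendicular to $\ue_0$.'' The paper does not give a formal proof of this Fact (it is stated as a known property of rhombic graphs, with a reference to \cite{Kenyon14rhombicembeddings}), so your write-up is effectively filling in the details behind that one-line remark.

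One small point of cleanup: your justification ``because $\uG_\mathrm{cr}$ is Delaunay, every edge satisfies $0 < \theta(\ue) < \pi$, so each rhomb $\lozenge_n$ is non-degenerate'' is slightly misdirected. The rhombus angle is $2\theta(\ue)$, so the relevant bound for non-degeneracy is $0<\theta(\ue)<\pi/2$, not $\pi$. You do not actually need to go through $\theta(\ue)$ at all: by the paper's standing convention (Definition~\ref{EmbPlanGr}), every face of the embedded planar graph $\uG_\mathrm{cr}^\lozenge$ is a convex polygon, hence each rhomb is already a genuine (non-flat) rhombus. That immediately gives the strictly positive $y$-component of the tie-to-tie displacement, and the rest of your monotonicity argument goes through unchanged.
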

\begin{Fact}
Any two distinct train-tracks are either disjoint or else intersect once.
\end{Fact}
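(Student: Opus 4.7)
The plan is to prove Fact~1 by monotonicity of a perpendicular coordinate and Fact~2 by a case analysis culminating in a Jordan-curve/minimality argument.

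For Fact~1, fix the tie direction $\theta := \theta_{\mathbbm{t}}$ and introduce the height $h(z) := \operatorname{Im}\bigl(z\, e^{-i\theta}\bigr)$ on $\Bbb{C}$; by construction $h$ is constant along every tie. Let $c_n$ denote the center of $\lozenge_n$ and $\vec v_n$ its non-tie edge vector oriented from tie $\ue_{n-1}$ to tie $\ue_n$, so that $c_{n+1}-c_n = \tfrac12(\vec v_n + \vec v_{n+1})$. No $\vec v_n$ is parallel to $e^{i\theta}$ (else $\lozenge_n$ would be degenerate), and the rhombs $\lozenge_n$ and $\lozenge_{n+1}$ must lie on opposite sides of their shared tie $\ue_n$ because $\uG^\lozenge_\mathrm{cr}$ is an embedded planar graph with non-overlapping faces. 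Propagating this opposite-sides principle along the track forces $\operatorname{sign} h(\vec v_n)$ to be constant in $n$, so $h(c_n)$ is strictly monotonic, immediately yielding $\lozenge_m \ne \lozenge_n$ for $m \ne n$.

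For Fact~2, suppose $\mathbbm{t}_1 \ne \mathbbm{t}_2$ share a rhomb $\lozenge$. If $\theta_{\mathbbm{t}_1} \equiv \theta_{\mathbbm{t}_2} \pmod \pi$, then both train-tracks use the same tie pair at $\lozenge$ (a rhomb has a unique pair of parallel edges in a given direction); because a train-track is determined by a rhomb together with a tie pair—each successor being the unique planar neighbor across the current tie, with its opposite edge the next tie—we conclude $\mathbbm{t}_1 = \mathbbm{t}_2$, a contradiction. So distinct co-directional train-tracks are disjoint. If instead $\theta_{\mathbbm{t}_1} \not\equiv \theta_{\mathbbm{t}_2} \pmod \pi$, assume for contradiction that the train-tracks meet at two rhombs. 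Associate to each train-track its polygonal axis through consecutive tie-midpoints; by Fact~1 each axis is a simple bi-infinite curve, and at their common rhombs the axes cross transversally (different directions). Among all pairs of distinct train-tracks meeting at two or more rhombs, choose a pair $(\mathbbm{t}_1,\mathbbm{t}_2)$ and intersection rhombs $(\lozenge,\lozenge')$ minimizing the number of rhombs strictly enclosed by the Jordan curve $\Gamma$ formed from the sub-arcs of the two axes between $\lozenge$ and $\lozenge'$.

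If some rhomb lay strictly inside the bounded region $R$ enclosed by $\Gamma$, then any train-track $\mathbbm{t}$ through that rhomb would be a simple bi-infinite monotone curve forced to exit $R$ by crossing $\partial R \subset \mathbbm{t}_1 \cup \mathbbm{t}_2$, producing at least two new intersection rhombs; the sub-pair they determine encloses a strictly smaller number of rhombs, contradicting minimality. Hence $R$ contains no interior rhomb, which forces the sub-arcs of $\mathbbm{t}_1$ and $\mathbbm{t}_2$ between $\lozenge$ and $\lozenge'$ to coincide—incompatible with $\theta_{\mathbbm{t}_1} \not\equiv \theta_{\mathbbm{t}_2} \pmod \pi$.

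The main obstacle is this last minimality step: one must verify carefully that for each possible way a third train-track exits $R$ (namely hitting $\mathbbm{t}_1$ twice, $\mathbbm{t}_2$ twice, or one of each) the resulting enclosed sub-region does have a strictly smaller rhomb-count. A cleaner alternative route would exploit the monotonicity of each axis in its own perpendicular direction, together with a winding-number or signed-area computation on $\Gamma$, to rule out non-trivial enclosed regions directly from the transversality of the two axes.
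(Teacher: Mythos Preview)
The paper does not actually prove Fact~2 (nor Fact~1): both are stated as known structural features of rhombic graphs and attributed to the literature (Kenyon--Schlenker). So there is no ``paper's own proof'' to compare against; you are supplying an argument the paper simply imports.

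Your treatment of Fact~1 is correct and standard. For Fact~2, your same-inclination case is fine, but the different-inclination case has exactly the gap you flag: the minimality descent requires checking that the new enclosed region genuinely has strictly fewer interior rhombs in each of the three exit scenarios, and you have not done this. The issue is not merely bookkeeping---when the third track $\mathbbm{t}$ hits, say, $\mathbbm{t}_1$ twice, the region bounded by the relevant sub-arcs of $\mathbbm{t}$ and $\mathbbm{t}_1$ need not be contained in $R$ without an additional argument about which side of $\mathbbm{t}$ the sub-arc of $\mathbbm{t}_1$ lies on.

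A cleaner route avoids the Jordan-curve minimality entirely. At every intersection rhomb $\lozenge$ the two edge directions are exactly $e^{i\theta_1}$ and $e^{i\theta_2}$, so the non-tie edge of $\mathbbm{t}_1$ at $\lozenge$ is $\pm e^{i\theta_2}$. By your Fact~1 monotonicity (all non-tie edges of $\mathbbm{t}_1$ have the same sign when projected perpendicular to $e^{i\theta_1}$), this sign is the \emph{same} at every intersection rhomb. Hence $\mathbbm{t}_1$ crosses $\mathbbm{t}_2$ always in the same direction---always from one fixed side of $\mathbbm{t}_2$ to the other. Since $\mathbbm{t}_2$ is a simple bi-infinite curve (Fact~1) separating the plane, two consecutive same-direction crossings of $\mathbbm{t}_1$ through $\mathbbm{t}_2$ are impossible: after the first crossing $\mathbbm{t}_1$ is on the ``far'' side and would need to return to the ``near'' side before crossing again, forcing an intermediate crossing with the opposite orientation. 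This yields the contradiction directly, with no enclosed-region bookkeeping.
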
 

The notion of train-track is amenable to any quad-graph (a planar graph
consisting entirely of quadrilateral faces) 
and these two properties characterize rhombic graphs within the broader class of
quad-graphs; specifically any quad-graph satisfying these two properties is a deformation of a rhombic graph
(see \cite{Kenyon14rhombicembeddings}).

\subsubsection{Intersections of train-tracks with paths}
\label{sssIntersections}
A train-track $\mathbbm{t}$ partitions the vertex set $\mathrm{V}\big(\uG_\mathrm{cr}^\lozenge\big)$ into two disjoint subsets $\mathrm{V}'$ 
and $\mathrm{V}''$.
Specifically, the edge set $E\big(\uG_\mathrm{cr}^\lozenge \big) - \mathtt{Ties}(\mathbbm{t})$ 
defines a disconnected subgraph of $\uG_\mathrm{cr}^\lozenge$ with two disjoint components;
$\mathrm{V}'$ and $\mathrm{V}''$ are the respective vertex sets of these components.
Accordingly, we say that two vertices $\uu$ and $\uv$ are \textbf{separated by} $\mathbbm{t}$ if 
they lie in different components; furthermore we say $\mathbbm{t}$ \textbf{separates} the path
$\mathbbm{v}$ if the end-points of the path $\uv_0$ and $\uv_k$ are separated by $\mathbbm{t}$.

Given a path $\mathbbm{v} =\big(\uv_0, \dots \uv_k \big)$ and a train-track $\mathbbm{t}$
let
$I(\mathbbm{v}; \mathbbm{t}) := \big\{ 1 \le j \le k \, \big| \, \ue_j  \in \mathrm{Ties}(\mathbbm{t}) \big\}$ 
be the set of indices of edges common to both $\mathbbm{v}$ and $\mathbbm{t}$. 
If $\mathbbm{t}$ separates $\mathbbm{v}$ then its cardinality $\big| I(\mathbbm{v}; \mathbbm{t}) \big|$ must be odd
due to the fact the path must begin on one side of $\mathbbm{t}$
and end on the other. 
If, on the other hand, $\mathbbm{t}$ does not separate $\mathbbm{v}$ then $\big| I(\mathbbm{v}; \mathbbm{t}) \big|$ is even (and may in fact
be zero if there is no intersection at all).

The edges $\ue_j$ for $j \in I(\mathbbm{v}; \mathbbm{t})$ are clearly parallel (since they all inhabit the train-track $\mathbbm{t}$)
but the oriented edges $\vec{\ue}_j$ for $j \in I(\mathbbm{v}; \mathbbm{t})$ must alternate in direction 
and so their phases $e^{\mathrm{i}\theta_j}$ for $j \in I(\mathbbm{v}; \mathbbm{t})$  must alternate in sign.
Consequently, if $I(\mathbbm{v}; \mathbbm{t}) = \big\{ j_1 < \cdots < j_d \big\} $ and $n$ is odd then
\begin{equation}
\label{stuff1}
\sum_{s=1}^d  e^{\mathrm{i} n \theta_{j_s} } 
= 
\left\{ 
\begin{array}{ll}
e^{\mathrm{i} n \theta_{j_1} } &\text{whenever $\mathbbm{t}$ separates $\mathbbm{v}$ } \\ 
0 &\text{otherwise}
\end{array}
\right.
\end{equation}
If $\mathbbm{t}$ separates $\mathbbm{v}$ their {\it intersection angle} is defined
as $\vartheta(\mathbbm{v}, \mathbbm{t}) := \theta_{j_1}$ and $\Theta(\mathbbm{v}) =
\big\{ \vartheta(\mathbbm{v},\mathbbm{t}) \, \big| \, \mathbbm{t} \ \text{intersects}\ \mathbbm{v} \big\}$ 
is the set of intersections angles of all train-tracks that separate the path $\mathbbm{v}$. 
Define the multiplicity $m_\vartheta := \big| \big\{ \text{$\mathbbm{t}$ separates
$\mathbbm{v}$} \, \big| \, \vartheta = \vartheta(\mathbbm{v},\mathbbm{t}) \big\}  \big|$
for $\vartheta \in \Theta(  \mathbbm{v} )$. It follows from equation \ref{stuff1} that for odd $n$
\begin{equation}
\label{stuff2}
\D p_n(\uu, \uv)
\, = \ \sum_{j=1}^k e^{\mathrm{i} n \theta_j} 
\, = \sum_{ \stackrel{\scriptstyle \text{train-tracks $\mathbbm{t}$} }{\text{separating $\mathbbm{v}$} } } 
e^{\mathrm{i} n \vartheta (\mathbbm{v},\mathbbm{t}) } 
\, =  \sum_{ \vartheta \in \Theta( \mathbbm{v} ) } \, 
m_\vartheta  e^{\mathrm{i} n \vartheta}
\end{equation}
where $\uv_0 = \uu$ and $\uv_k = \uv$ are the beginning and end points of the path $\mathbbm{v}$.
Given a train-track $\mathbbm{t}$ separating $\mathbbm{v}$ with angle
of intersection $\theta=\theta(\mathbbm{v},\mathbbm{t})$ define 
$R^\uu_\theta = z_\mathrm{cr}(\uu) + \Bbb{R}_{>0} \, e^{\mathrm{i} \theta}$ to be the ray (half-line) starting from $z_\mathrm{cr}(\uu)$ in the direction 
$\theta$, and 
$R^\uv_{\theta + \pi} = z_\mathrm{cr}(\uv) + \Bbb{R}_{>0}\, e^{\mathrm{i}(\theta + \pi)}$ be the ray starting from $z_\mathrm{cr}(\uv)$ in the direction $\theta+\pi$.
It is geometrically clear that $\mathbbm{t}$ must intersect the right hand sides of rays $R^\uu_\theta$ and $R^\uv_{\theta + \pi}$, without back tracking in the direction orthogonal to $R^\uu_\theta $ (and without intersecting the opposite rays $R^\uu_{\theta+\pi}$ and $R^\uv_{\theta}$).
See fig.~\ref{Fseparation}.

\begin{figure}[h!]
\raisebox{-.8in}{\includegraphics[width=2in]{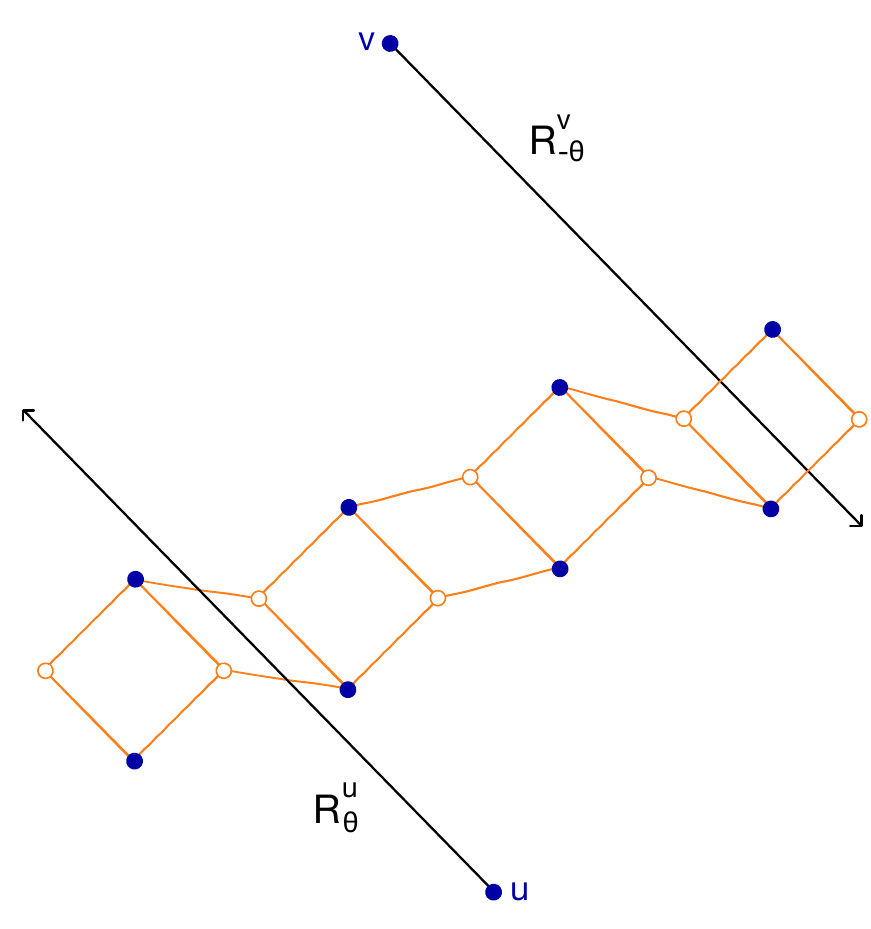}}
\caption{\label{Fseparation}Vertices $\uu$ and $\uv$ separated by a train-track $\mathbbm{t}$.}
\end{figure}

For completeness, one should consider the case where lozenges in $\mathbbm{t}$ become infinitely flat, so that $\mathbbm{t}$ goes to infinity in the $\theta$ direction before intersecting $R^\uu_\theta$ (see Fig.~\ref{asymptotic-train-track-1}).
Then one can consider that $\mathbbm{t}$ crosses $R^\uu_\theta$ at infinity.

\begin{figure}[h]
\raisebox{-.8in}{\includegraphics[width=2in]{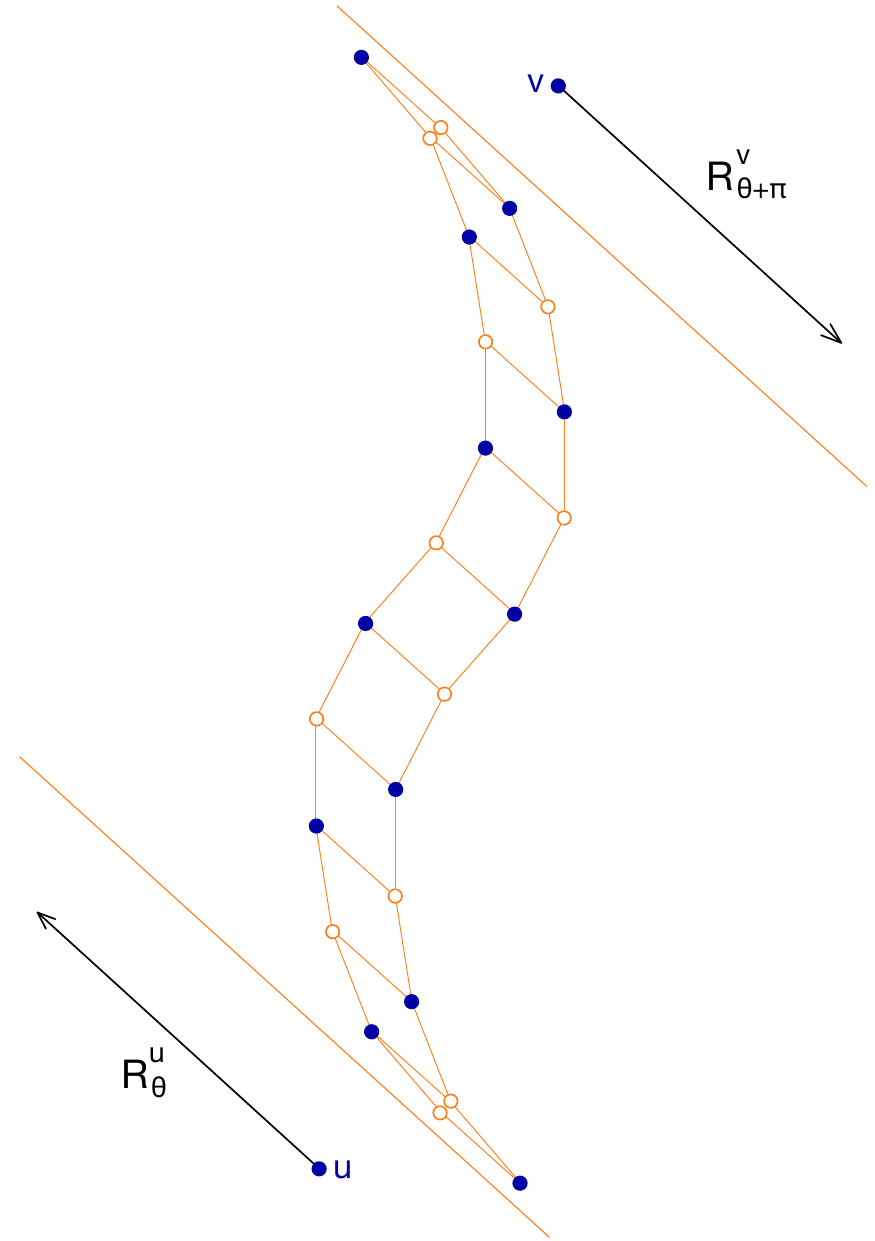}}
\caption{A situation where the vertices $\uu$ and $\uv$ are asymptotically separated by a train-track.}
\label{asymptotic-train-track-1}
\end{figure}

\begin{Prop}
\label{semi-circle-property}
Let $\mathbbm{v}=\big(\uv_0, \dots, \uv_k \big)$ be a path in $\uG_\mathrm{cr}^\lozenge$, let the direction of the path be
$\theta_0 = \arg \big( z_\mathrm{cr}(\uv_k) - z_\mathrm{cr}(\uv_0) \big)$.
Let us fix the determinations of the angles $\vartheta \in\Theta(\mathbbm{v})$ as real numbers in
\begin{equation}
\label{ }
\vartheta \in (\theta_0-\pi,\theta_0+\pi]
\end{equation} and let 
\begin{equation}
\label{ }
\alpha=\max\{\vartheta \in\Theta(\mathbbm{v})\}\ ,\quad \beta=\min\{\vartheta \in\Theta(\mathbbm{v})\}
\end{equation}
Then 
\begin{equation}
\label{ }
\alpha-\beta <\pi\quad\text{and}\quad \beta\le\theta_0\le\alpha
\end{equation}
In other words, the set $\Theta(\mathbbm{v})$ and the angle $\theta_0$ are contained in the open subinterval 
$\big( \theta_\mathbbm{v} - {\pi \over 2} , \theta_\mathbbm{v} + {\pi \over 2} \big) $
where $\theta_\mathbbm{v} = {1 \over 2} (\alpha - \beta )$.
\end{Prop}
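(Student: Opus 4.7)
My plan is to combine the $n=1$ case of equation \ref{stuff2}, namely the displacement identity
\[
|p_1|\, e^{\mathrm{i}\theta_0} \, = \, z_\mathrm{cr}(\uv_k) - z_\mathrm{cr}(\uv_0) \, = \, \sum_{\vartheta \in \Theta(\mathbbm{v})} m_\vartheta\, e^{\mathrm{i}\vartheta}
\]
with $m_\vartheta > 0$ and $|p_1|>0$, together with a geometric constraint from Fig.~\ref{Fseparation} on each individual separating train-track. I would first prove the width bound $\alpha - \beta < \pi$ by showing that $\Theta(\mathbbm{v})$ sits in the open half-arc $(\theta_0-\pi/2,\theta_0+\pi/2)$, and then deduce the containment $\beta \le \theta_0 \le \alpha$ by extracting imaginary parts of the identity.

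For the width bound, the key claim is that every train-track $\mathbbm{t}$ separating $\mathbbm{v}$ with intersection angle $\vartheta$ satisfies $\cos(\theta_0-\vartheta)>0$. The first crossing edge $\vec{\ue}_{j_1}$ of $\mathbbm{v}$ along $\mathbbm{t}$ is a tie of $\mathbbm{t}$ with direction $e^{\mathrm{i}\vartheta}$; because $\mathbbm{t}$ is a strip of rhombs whose two long sides are separated transversally in the tie direction (each tie being a rung connecting a vertex on one side to a vertex on the other), the first crossing moves the path from the pre-crossing side, containing $\uv_0$, to the post-crossing side, which must contain $\uv_k$ whenever $\mathbbm{t}$ separates them. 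By the non-self-intersection property of Section~\ref{sssTracks}, this strip extends monotonically to infinity in both directions perpendicular to its ties, so its two long sides partition the plane globally into two open regions. The net displacement $p_1$ from $\uv_0$ to $\uv_k$ therefore has strictly positive projection onto $e^{\mathrm{i}\vartheta}$, i.e.\ $|p_1|\cos(\theta_0-\vartheta)>0$. Consequently every $\vartheta \in \Theta(\mathbbm{v})$ lies in an open arc of length $\pi$, forcing $\alpha - \beta < \pi$.

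For the containment, I would argue by contradiction. Suppose $\theta_0 > \alpha$. Since $\alpha \in (\theta_0-\pi,\theta_0+\pi]$ we have $0 < \theta_0 - \alpha < \pi$, so taking the imaginary part of $e^{-\mathrm{i}\alpha} p_1$ on the left hand side of the identity yields $|p_1|\sin(\theta_0-\alpha)>0$. On the other hand, the width bound just established forces $\vartheta-\alpha \in [\beta-\alpha,0]\subset(-\pi, 0]$ for every $\vartheta \in \Theta(\mathbbm{v})$, hence $\sin(\vartheta-\alpha)\le 0$, so the same imaginary part read from the right hand side equals $\sum_\vartheta m_\vartheta \sin(\vartheta-\alpha) \le 0$ --- a contradiction. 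Thus $\theta_0 \le \alpha$; the reverse inequality $\theta_0 \ge \beta$ follows by the symmetric argument using $e^{-\mathrm{i}\beta}$ in place of $e^{-\mathrm{i}\alpha}$.

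The main obstacle is the geometric step justifying $\cos(\theta_0-\vartheta)>0$. Train-tracks are only monotone curves perpendicular to their ties, not straight lines, so one has to verify that the two long rails of the strip $\mathbbm{t}$ remain globally separated out to infinity, making the notions of ``$\uv_0$-side'' and ``$\uv_k$-side'' well-defined and persistent. This ultimately rests on Fact~1 of Section~\ref{sssTracks}, which turns $\mathbbm{t}$ into a properly embedded bi-infinite ladder whose two rails are disjoint Jordan arcs; the asymptotic picture of Fig.~\ref{Fseparation}, in which $\mathbbm{t}$ must meet both rays $R^{\uv_0}_\vartheta$ and $R^{\uv_k}_{\vartheta+\pi}$ on their right-hand sides without back-tracking, encodes precisely this sign condition.
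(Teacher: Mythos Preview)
Your key geometric claim---that the net displacement $p_1$ has strictly positive projection onto $e^{\mathrm i\vartheta}$ for every separating train-track---is false, so the width bound $\alpha-\beta<\pi$ is not established. You correctly note that a train-track $\mathbbm t$ partitions the plane into two regions and that the first crossing tie points from the $\uv_0$-side to the $\uv_k$-side; but those two regions are \emph{not} half-planes determined by projection onto the tie direction. The rails of $\mathbbm t$ are monotone only in the direction \emph{perpendicular} to the ties and may drift without bound in the tie direction $e^{\mathrm i\vartheta}$ itself, so a point on the $\uv_k$-side can have smaller $e^{\mathrm i\vartheta}$-projection than a point on the $\uv_0$-side. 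A concrete counterexample: take the two-direction rhombic lattice generated by $1$ and $e^{\mathrm i\,3\pi/4}$, and let $\uu,\uv$ be the vertices at lattice coordinates $(0,0)$ and $(10,1)$. Then $p_1=10+e^{\mathrm i\,3\pi/4}$, $\theta_0\approx 0.076$, and the single train-track with ties $e^{\mathrm i\,3\pi/4}$ separating them has $\vartheta=3\pi/4$, giving $\cos(\theta_0-\vartheta)\approx -0.65<0$. Fact~1 and the ray picture of Fig.~\ref{Fseparation} do not rescue this: in coordinates aligned with the ties, those statements only say the track crosses the vertical rays through $\uu$ and $\uv$, not that $\uv$ has larger tie-projection than~$\uu$.

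The paper's route to the width bound is different and uses the ingredient your argument omits, namely Fact~2: if the extremal angles $\alpha,\beta$ satisfied $\alpha-\beta>\pi$, the two corresponding train-tracks would be forced to intersect at least three times, a contradiction. Your contradiction argument for $\beta\le\theta_0\le\alpha$ via imaginary parts is correct and is essentially the paper's positive-cone argument; but it presupposes the width bound, which you have not proved.
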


\begin{proof}
Set $\theta_0 = \arg \big( z_\mathrm{cr}(\uv_k) - z_\mathrm{cr}(\uv_0) \big) \in \big[0, \pi \big)$.
Each $\vartheta \in \Theta(\mathbbm{v})$ is the intersection angle of at least one 
train-track $\mathbbm{t}$ whose inclination equals
$\vartheta$ (modulo $\pi$) and which separates the vertices $\uv_0$ and $\uv_k$. 

First let us note that the angle $\theta_0 + \pi$ cannot be an element of $\Theta(\mathbbm{v})$. 
Were this the case, there would be train-track joining the righthand sides of the rays $R^\uu_{\theta_0+\pi}$ and $R^\uv_{\theta_0}$ without backtracking. This is impossible, as depicted in figure~\ref{backtracking}.

\begin{figure}[h!]
\includegraphics[width=2in]{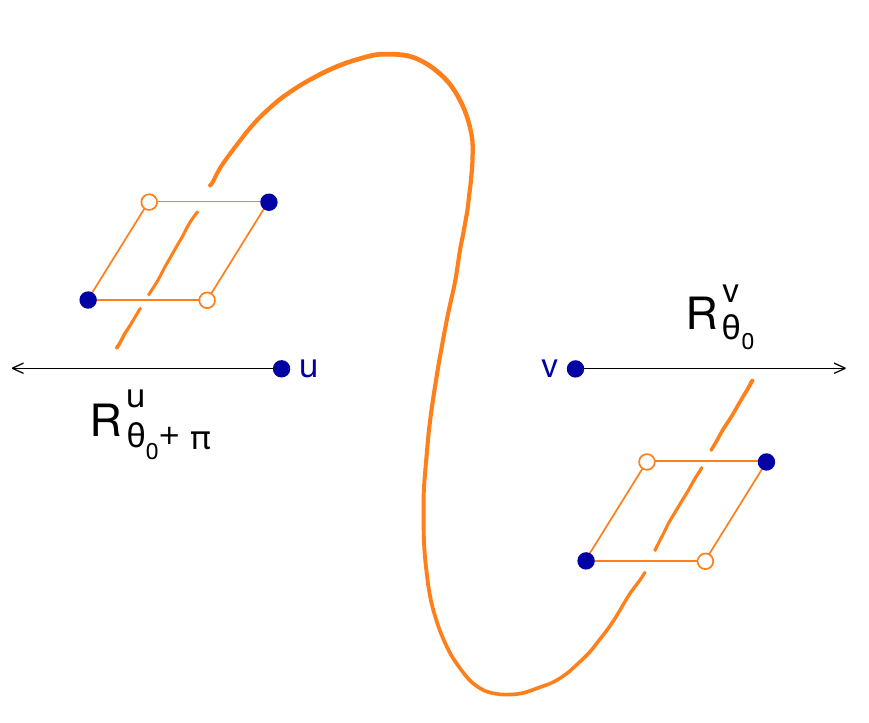}
\caption{\label{backtracking} A track separating $\uu$ and $\uv$ with orientation $\theta_0+\pi$ must backtrack}
\label{ }
\end{figure}

Consequently the angles in $\Theta(\mathbbm{v})$ are in the interval $(\theta_0-\pi, \theta_0+\pi)$. Consider 
$\alpha=\max \Theta(\mathbbm{v})$ and  $\beta=\min \Theta(\mathbbm{v})$. It is enough to prove that $\alpha - \beta \leq \pi$.
Indeed, suppose instead that $\alpha - \beta > \pi$. Both $\alpha$ and $\beta$ are intersection angles for two respective 
train-tracks $\mathbbm{t}_1$ and $\mathbbm{t}_2$ which separate $\uu := \uv_0$ and 
$\uv:= \uv_k$. If we attempt to draw $\mathbbm{t}_1$ and $\mathbbm{t}_2$
bearing in mind monotonicity and their {requisite} intersections with the rays $R_\alpha^\uu$, $R_\beta^\uu$,
$R_{\alpha + \pi}^\uv$, and $R_{\beta + \pi}^\uv$ we will observe that 
the two train-tracks will be forced to intersect at least three times (as depicted on figure~\ref{inconsistent}). 
\begin{figure}[h!]
\raisebox{-.8in}{\includegraphics[width=2in]{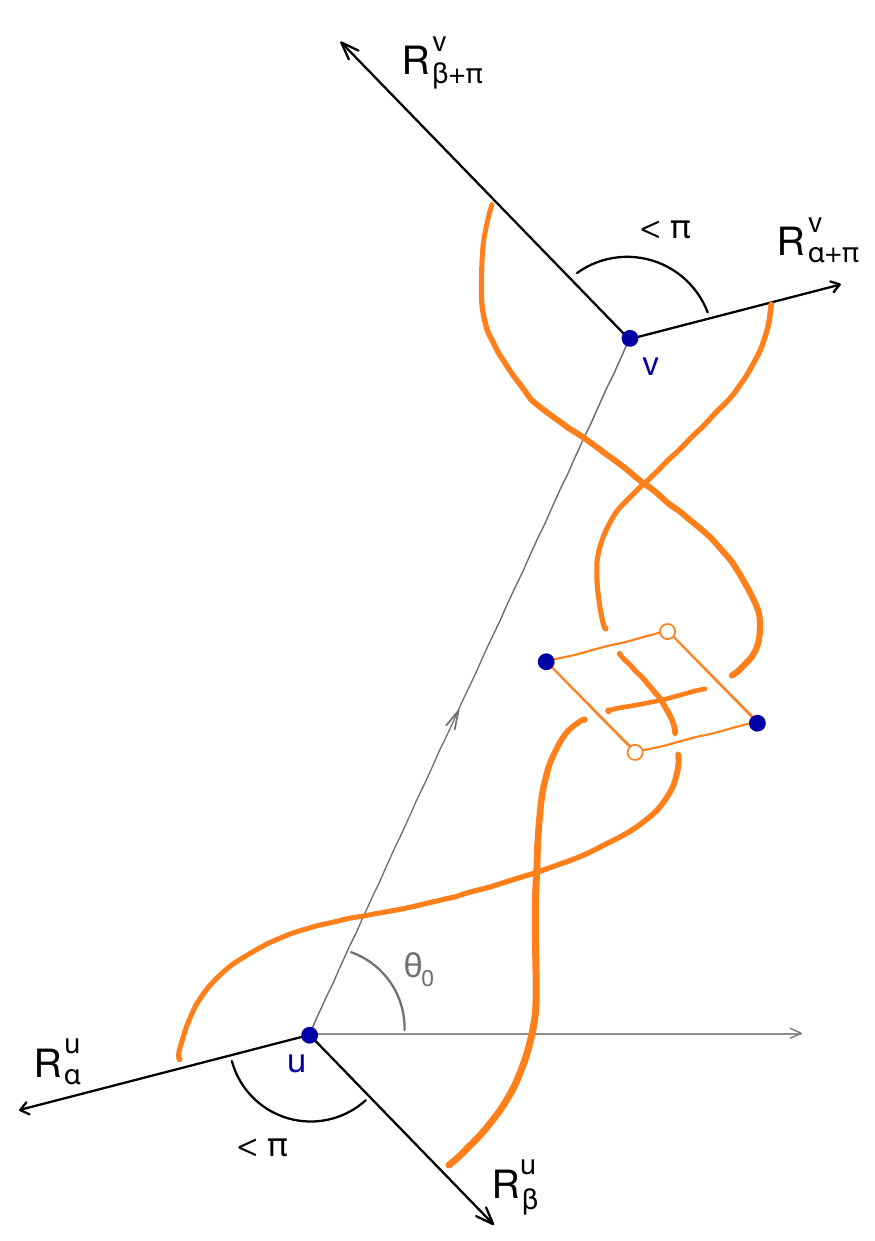}}
\caption{\label{inconsistent}Two train-tracks separating $u$ and $v$ cannot have separating angles differing by more that $\pi$}
\end{figure}

Since two distinct train-tracks may intersect at most once we are forced to conclude that $\alpha - \beta \leq \pi$. 
Finally, by equation \ref{stuff2}, the difference $z_\mathrm{cr}(\uv) - z_\mathrm{cr}(\uu)$ can be written as
$$z_\mathrm{cr}(\uv) - z_\mathrm{cr}(\uu) = \sum_{\vartheta\in\Theta(\mathbbm{v})} m_\vartheta\,e^{\mathrm{i} \vartheta}
\quad\text{with}\qquad m_\vartheta\in\mathbb{Z}_{>0}$$
Any positive combination of phases $e^{\mathrm{i}\vartheta}$ for $\vartheta \in \Theta(\mathbbm{v})$ must lie
in the positive cone $\big\{ ae^{\mathrm{i} \alpha} + be^{ \mathrm{i} \beta} \, \big| \,
a, b \in \Bbb{R}_{>0}  \big\}$ because $\alpha - \beta < \pi$. It follows that $\beta < \theta_0 < \alpha$. 
\end{proof}

For obvious topological reasons the set $\big\{ \text{$\mathbbm{t}$ separates
$\mathbbm{v}$} \, \big| \, \vartheta = \vartheta(\mathbbm{v},\mathbbm{t}) \big\}$
only depends on the end-points $\uv_0$ and $\uv_k$ of the 
path $\mathbbm{v}$. By Proposition \ref{semi-circle-property}
if $\vartheta \in \Theta(\mathbbm{v})$ then $\vartheta + \pi \notin  \Theta(\mathbbm{v})$,
which means that if two distinct train-tracks $\mathbbm{t}_1$ and $\mathbbm{t}_2$ share
the same inclination and both separate $\mathbbm{v}$ then 
$\vartheta(\mathbbm{v}, \mathbbm{t}_1) = \vartheta(\mathbbm{v}, \mathbbm{t}_2)$.
Consequently the set $\Theta(\mathbbm{v})$ together with the multiplicities
$m_\vartheta$ for $\vartheta \in \Theta(\mathbbm{v})$ 
must only depend on the end-points $\uv_0$ and $\uv_k$ of the 
path $\mathbbm{v}$ as well. This observation is consistent with the 
fact that the value of $\left\lfloor \vec{\mathbbm{v}} \, \right\rfloor_n$ 
depends only on the end-points of $\mathbbm{v}$.

\renewcommand{\imath}{\mathrm{i}}

\section{Laplacians and their determinants}
\label{sLaplacians}
\subsection{Laplacians and the critical Laplacian}\ \\
\label{ssLaplacians}
\subsubsection{Laplacians associated to polyhedral graphs and triangulations}\ \\
Given a polyhedral graph $\uG$, we denote by $\Bbb{C}^{\mathrm{V}(\uG)}$, $\Bbb{C}^{\mathrm{E}(\uG)}$, and
$\Bbb{C}^{\mathrm{F}(\uG)}$ the vector spaces of complex-valued functions supported respectively on the vertices, edges, and faces of $\uG$.
The operators $\Delta$, $\Deltaconf$ and $\mathcal{D}$ are associated to a general polyhedral graph {$\uG$}
and were introduced in section~\ref{sss3operators}.
Each operator is a linear map $\Bbb{C}^{\mathrm{V}(\uG)}\to\Bbb{C}^{\mathrm{V}(\uG)}$.
The Laplace-Beltrami operator $\Delta$ is defined as
\begin{equation}
\label{BLDelta1}
\Delta \phi(\uu)=\sum_{\mathrm{edge}\ \vec{\ue}=(\uu, \uv)} c(\vec{\ue}\,) \big(\phi(\uu)-\phi(\uv) \big)
\quad,\quad c(\vec{\ue}\,)= {1\over 2} \big(\tan \theta_\mathrm{n}(\vec{\ue}\,)+\tan\theta_\mathrm{s}(\vec{\ue}\,) \big)
\end{equation}
The conformal Laplacian $\Deltaconf$ is
\begin{equation}
\label{ConfDelta1}
\Deltaconf \phi(\uu)=\sum_{\mathrm{edge}\ \vec{\ue}=(\uu, \uv)} \tan\theta(\ue) \big( \phi(\uu)-\phi(\uv) \big)
\end{equation}
The K\"ahler operator $\mathcal{D}$ is
\begin{equation}
\label{KahlerDelta1}
\mathcal{D} \phi(\uu)=\sum_{\mathrm{edge}\ \vec{\ue}=(\uu, \uv)} 
{1\over 2}\left({\tan \theta_\mathrm{n}(\vec{\ue}\,)+
\mathrm{i} \over R^2_n(\vec{\ue} \,)}+{\tan \theta_\mathrm{n}(\vec{\ue}\,)
-\mathrm{i} \over R^2_\mathrm{s}(\vec{\ue} \,)}\right) \big(\phi(\uu)-\phi(\uv) \big)
\end{equation}
$\theta_\mathrm{n}(\vec{\ue})$, $\theta_\mathrm{s}(\vec{\ue})$ and $\theta(\ue)$ are respectively the north, south and conformal angles associated to the oriented edge $\vec{\ue}=(\uu, \uv)$, while $R_\mathrm{n}(\vec{\ue} \, )$ and $R_\mathrm{s}(\vec{\ue} \, )$ are the circumradii of the respective north $\uf_\mathrm{n}$ and south $\uf_\mathrm{s}$ faces associated to $\vec{\ue}$ (see figure~\ref{triangles}).

\begin{Rem}
The Beltrami-Laplace operator $\Delta$, the conformal Laplacian $\Deltaconf$, 
and the David-Eynard K\"ahler operator $\mathcal{D}$ 
on a polygonal graph $\uG$ agree respectively
with $\Delta$, $\Deltaconf$, and $\mathcal{D}$ when defined on
the associated {\bf redacted} graph $\uG^\bullet$ (see Def.~\ref{defRegGraph}).
By definition the vertex sets of $\uG$ and $\uG^\bullet$ coincide. So
for any pair of vertices $\uu, \uv$ the corresponding matrix entries 
$\Delta_{\uu,\uv}$ and $\Deltaconf_{\uu, \uv}$ and $\mathcal{D}_{\uu,\uv}$ 
are independent of whether we calculate their values with respect
to the graph embedding (and incidence relations) of either $\uG$ or $\uG^\bullet$.
Likewise $\Delta$, $\Deltaconf$, and $\mathcal{D}$
operating on $\uG$ agree with their respective counterparts 
when defined on any {\bf completion} $\widehat{\uG}$ of $\uG$ (see Def.~\ref{CompletionTofG}).
\end{Rem}

\subsubsection{Areas, angles and circumradii formulas}\ \\
We recall some basic geometrical formula for these quantities.
Let $\uf =(\uv_1, \uv_2, \uv_3)$ be a c.c.w. oriented triangle with vertices labelled $\uv_1$, $\uv_2$, $\uv_3$ and respective
coordinates $z_1$, $z_2$, $z_3$ then the area $A(\uf)$ of the triangle is

\begin{equation}
\label{Aform}
A(\uf)={1\over 4 \mathrm{i}}(z_2\bar z_1-z_1\bar z_2+z_3\bar z_2-z_2\bar z_3 + z_1 \bar z_3 - z_3\bar z_1)
\end{equation}
The circumcenter $z(\uf)$ of the triangle is given by
\begin{equation}
\label{Zcform}
z(\uf) ={z_1\bar z_1(z_2-z_3)+ z_2\bar z_2(z_3-z_1)+z_3\bar z_3(z_1-z_2) \over 4 \mathrm{i} A(\uf)}
\end{equation}
and the circumradius $R(\uf)$ of the triangle is given by the trigonometric relation
\begin{equation}
\label{Rcform}
R(\uf) ={|z_1-z_2| |z_2-z_3| |z_3-z_1| \over 4\, A(\uf)}
\end{equation}
while the north angle associated to the oriented edge $\vec{\ue} = (\uv_1, \uv_2)$ is
\begin{equation}
\label{thetaform}
\begin{split}
\theta_\mathrm{n}(\vec{\ue} \, )
=& \, {1\over 2\mathrm{i}}\log\left(-{(\bar z_2-\bar z_3)(z_1-z_3)\over (z_2-z_3)(\bar z_1-\bar z_3)}\right)
\end{split}
\end{equation}
Furthermore $\tan^2\theta_\mathrm{n} (\vec{\ue} \,)$ can be written explicitly in coordinates as
\begin{equation}
\label{TanThetaN}
\begin{split}
\tan^2 \theta_\mathrm{n} (\vec{\ue} \, )&={2+{z_2-z_3\over \bar z_2-\bar z_3}{\bar z_1-\bar z_3\over z_1-z_3}+{z_1-z_3\over \bar z_1-\bar z_3}{\bar z_2-\bar z_3\over z_2-z_3}\over 2-{z_2-z_3\over \bar z_2-\bar z_3}{\bar z_1-\bar z_3\over z_1-z_3}-{z_1-z_3\over \bar z_1-\bar z_3}{\bar z_2-\bar z_3\over z_2-z_3}}\\
&=4\,{|z(\uf) - z_{\scriptscriptstyle{\overline{12}}}|^2\over |z_2-z_1|^2}\qquad\text{with}\quad  z_{\scriptscriptstyle{\overline{12}}}={z_2+z_1\over 2}
\end{split}
\end{equation}
The derivatives of $A(\uf)$, $R(\uf)$ and of the angles $\theta_\mathrm{n}(\vec{\ue} \,)$ under 
a variation of a vertex coordinate are easy to calculate, using for instance 
\begin{equation}
\label{ }
\partial_{z_1} A(\uf)={1\over 4 \mathrm{i}} (\bar z_3-\bar z_2)
\ ,\quad \partial_{z_1} |z_1-z_2|={1\over 2}{\bar z_1-\bar z_2 \over |z_1-z_2|}
\quad\text{with}\quad  \partial_{z_1}={\partial\over\partial z_1}
\end{equation}
and will be discussed later.

\subsubsection{Laplacians on critical (isoradial) graphs}
The critical Laplacian studied by Kenyon in \cite{Kenyon2002} corresponds to the special case of $\Delta$, with edge weight $c(\ue)$ given by \ref{BLDelta1}, defined on a \emph{critical graph} (according to the terminology of \cite{Kenyon2002}), i.e. an  isoradial, Delaunay graph $\uG_\mathrm{cr}$. 
Accordingly we shall use the following terminology for critical Laplacians, given here. 
\begin{Def}
\label{critlaplacian}
 Let $\uG_\mathrm{cr}$ be an isoradial Delaunay graph. The Beltrami-Laplace operator $\Delta$, 
 the conformal Laplacian $\Deltaconf$, and the David-Eynard K\"ahler operator (normalized by the squared isoradius $R_\mathrm{cr}$ of the faces of 
 $\uG_\mathrm{cr}$)
coincide for $\uG_\mathrm{cr}$.
This common operator is called the {\bf critical Laplacian}
associated to $\uG_\mathrm{cr}$ and is denoted $\Delta_\mathrm{cr}$. 
\begin{equation}\label{}
\Delta_\mathrm{cr}= \Delta=\Deltaconf=R^2_\mathrm{cr} \mathcal{D}\qquad\text{on}\quad \uG_\mathrm{cr}
\end{equation}
An explicit formula for $\Delta_\mathrm{cr}$ is also given in equation \ref{crDeltaDef} of Sect.~\ref{sss3operators}.
\end{Def}

\subsection{Factorization of Laplacian using $\nabla$ and $\overline\nabla$ operators}\ \\
\label{FactoLaplNabl}
In the case of a planar triangulation $\uT$, we present an alternative representation of the operators 
$\Delta$ and $\mathcal{D}$ which will be convenient for our calculations. 
We follow the definition and the notations of \cite{DavidEynard2014}.
\begin{Def}
\label{DefNablaBarNabla}
The operators $\nabla$ and $\overline\nabla$ are linear operators from the space of complex-valued functions over the set of vertices 
$\mathrm{V}(\uT)$ of $\uT$, 
onto the space of complex-valued functions over the set of triangles (faces)  $\mathrm{F}(\uT)$ of $\uT$. 
\begin{equation*}
\label{ }
\mathbb{C}^{\mathrm{V}(\uT)} \ \mathop{\longrightarrow}^\nabla\ \mathbb{C}^{\mathrm{F}(\uT)}
\ ,\quad
\mathbb{C}^{\mathrm{V}(\uT)} \ \mathop{\longrightarrow}^{\overline\nabla}\ \mathbb{C}^{\mathrm{F}(\uT)}
\end{equation*}
$\nabla$ is defined as follows.
Given a triangle $\uf$ (a face of the triangulation $\uT$) with vertices $\uv_1$, $\uv_2$, $\uv_3$ (listed in ccw order)
and complex coordinates $z_j := z(\uv_j)$ for $1 \leq j \leq 3$ together with a function $\phi \in \Bbb{C}^{\mathrm{V}(\uT)}$ define

\begin{equation}
\label{nablaDef}
\nabla\phi(\uf)= {\phi(\uv_1)(\bar z_2-\bar z_3)+\phi(\uv_2)(\bar z_3-\bar z_1)+\phi(\uv_3)(\bar z_1-\bar z_2)\over -4 \mathrm{i} \,A(\uf)}
\end{equation}
$\nabla$ corresponds to a discrete linear derivative w.r.t. the embedding $\uv \mapsto z(\uv)$ because 
\begin{equation}
\label{ }
\nabla z =1\ ,\quad \nabla \bar z =0
\end{equation}
Similarly, its conjugate $\overline\nabla$ is defined as
\begin{equation}
\label{barnablaDef}
\overline\nabla\phi(\uf)= {\phi(\uv_1)( z_2- z_3)+\phi(\uv_2)( z_3- z_1)+\phi(\uv_3)( z_1- z_2)\over 4 \mathrm{i} \,A(\uf)}
\end{equation}
and satisfies
\begin{equation}
\label{ }
\overline\nabla z =0\ ,\quad \overline\nabla \bar z =1
\end{equation}
The transpose of these operators are defined accordingly:
\begin{equation*}
\label{ }
\mathbb{C}^{\mathrm{F}(\uT)} \ \mathop{\longrightarrow}^{\nabla^{\!\top}}\ \mathbb{C}^{\mathrm{V}(\uT)}
\ ,\quad
\mathbb{C}^{\mathrm{F}(\uT)} \ \mathop{\longrightarrow}^{\overline\nabla^{\!\top}}\ \mathbb{C}^{\mathrm{V}(\uT)}
\end{equation*}
\end{Def}

\begin{Rem}
\label{remDiffNabla}
It follows from definitions \ref{nablaDef} and \ref{barnablaDef}
and the area formula \ref{Aform} that for any function $\phi \in \Bbb{C}^{\mathrm{V}(\uT)}$
\begin{equation}
\label{diff-formula-nablas}
\begin{array}{ll}
\D \phi(\uv_1) - \phi(\uv_2) 
&\D = \  (z_1 - z_2) \, \nabla \phi(\uf) \ 
+ \ (\overline{z}_1 - \overline{z}_2) \overline{\nabla} \phi(\uf) \\
\end{array}
\end{equation}
\end{Rem}

Note that the discrete derivatives $\nabla$ and $\overline\nabla$ are defined for general triangulations. Even when the triangulation is isoradial,  $\nabla$ and $\overline\nabla$ \emph{do not coincide} with the discrete holomorphic and discrete antiholomorphic derivatives $\partial$ and $\bar\partial$ considered in \cite{Kenyon2002} for isoradial bipartite graphs. Indeed $\nabla$ and $\overline\nabla$ do not even act on the same space of functions than $\partial$ and $\bar\partial$.

Nevertheless, we shall need to bound the difference between the $\nabla \phi$ and the ordinary continuous derivative $\partial \phi$ 
in the case of a smooth complex-valued function $\phi: \Bbb{C} \longrightarrow \Bbb{C}$ with compact support and its restriction
to $\mathrm{V}(\uT)$ given by $\phi(\uv) := \phi(z(\uv))$ where $z: \mathrm{V}(\uT) \longrightarrow \Bbb{C}$ is the embedding of $\uT$. 
This estimate is explained in Lemma \ref{lemmabound} of the introduction and proven in Appendix \ref{prooflemmabound}.

\bigskip
\noindent
In addition, the $\nabla$-operator satisfies a discrete analogue of Green's Theorem 

\[ \iint_\Omega \partial\phi (z, \bar{z}) \, dz \, d\bar{z}  \ = \ \oint_{\partial \Omega} \phi(z, \bar{z}) \, d \bar{z} \] 

\noindent
in complex coordinates, namely:

\begin{lemma}
\label{greens-theorem}
Let $\uT$ be a polyhedral triangulation with embedding $z: \mathrm{V}(\uT) \longrightarrow \Bbb{C}$,
let $\Omega \subset \mathrm{F}(\uT)$ be a finite collection of triangular faces (each taken with
a counter-clockwise orientation), 
let $\partial \Omega \subset \mathrm{E}(\uT)$ be the finite subset of (oriented) edges corresponding to the boundary of 
$\Omega$, and let $\phi \in \Bbb{C}^{\mathrm{V}(\uT)}$ be a complex-valued function, then
\begin{equation}
\label{eDiscrStokes}
\sum_{\mathrm{x} \, \in \, \Omega} A(\mathrm{x}) \nabla \phi (\mathrm{x}) \ = \
\sum_{(\mathrm{u,v})  \, \in \, \partial \Omega} 
\big( \overline{z}(\uv) - \overline{z}(\uu) \big) \, {\phi (\uv) + \phi (\uu)  \over {4\mathrm{i}}}
\end{equation}
\end{lemma}
\begin{proof}
Use definition \ref{nablaDef} for $\nabla$ and observe that the area $A(x)$ defined by \ref{Aform} cancels with the area factor in the denominator of 
$\nabla\phi(x)$ and that for each oriented triangle $x= (\uu,\uv,\uw)$ the term $A(x)\nabla\phi(x)$ can be reorganized as 
$((\bar z(\uv)-\bar z(\uu)(\phi(\uv)+\phi(\uu)) +
((\bar z(\uw)-\bar z(\uv)(\phi(\uw)+\phi(\uv)) +
((\bar z(\uu)-\bar z(\uw)(\phi(\uu)+\phi(\uw)) /(4\mathrm{i})$.
Now sum over the faces of $\Omega$. Note that all internal edges count twice with opposite orientations and cancel, 
and so only the oriented edges on the boundary $\partial\Omega$ contributes and give 
the r.h.s. of \ref{eDiscrStokes}.
\end{proof}
\color{black}
\noindent
The polyhedral condition can in fact be dropped but we assume it to keep the exposition simple. 
Lemma \ref{greens-theorem} implies the following corollary which is relevant to our results.

\begin{Cor}
\label{common-refinement}
Let $\uT_1$ and $\uT_2$ be two polyhedral triangulation which share a common redacted graph $\uG := \uT_1^\bullet = \uT_2^\bullet$.
Given a face $\uf \in \mathrm{F}(\uG)$ with vertex set $\mathrm{V}(\uf)$ 
let $\Omega_i(\uf)$ be the set of triangular faces of $\uT_i$ each of whose vertices 
are in $\mathrm{V}(\uf)$. Then
\begin{equation}
\sum_{x_1  \, \in \, \Omega_1(\uf)} A(\mathrm{x}_1) \nabla \phi(\mathrm{x}_1) \ = \
\sum_{x_2  \, \in \, \Omega_2(\uf)} A(\mathrm{x}_2) \nabla \phi(\mathrm{x}_2)
\end{equation}
\noindent 
for any complex-valued function $\phi \in \Bbb{C}^{\mathrm{V}(\uG)}$.
\end{Cor}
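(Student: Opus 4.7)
The plan is to reduce both sides of the claimed identity to a common boundary integral via Green's theorem (Lemma \ref{greens-theorem}), exploiting the fact that $\Omega_1(\uf)$ and $\Omega_2(\uf)$ are two triangulations of the very same convex polygon $\uf$.

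First I would unpack what $\Omega_i(\uf)$ actually looks like. By hypothesis $\uT_i^{\bullet} = \uG$, which means $\uT_i$ is obtained from $\uG$ by adjoining finitely many chord-edges, each of which must lie inside some face of $\uG$ whose vertices are cocyclic (chords do not introduce new vertices, since $\mathrm{V}(\uT_i^{\bullet}) = \mathrm{V}(\uT_i)$). Because $\uf$ is a convex cyclic polygon, the chords of $\uT_i$ lying in $\uf$ cut $\uf$ into triangles, each of whose vertices is already a vertex of $\uf$; these are precisely the triangles in $\Omega_i(\uf)$. Thus $\Omega_i(\uf)$ is a triangulation of the polygon $\uf$ using only diagonals between vertices of $\uf$, and it provides an oriented $2$-chain whose support is the closed polygon $\uf$.

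Next I would apply Lemma \ref{greens-theorem} to $\Omega = \Omega_i(\uf)$. Each interior chord of the triangulation $\Omega_i(\uf)$ is shared by exactly two adjacent triangles of $\Omega_i(\uf)$, which induce opposite orientations on the chord; hence its contribution to the boundary sum cancels. The only oriented edges that survive are those lying on $\partial \uf$, each traversed once in the counter-clockwise sense induced by $\uf$. Therefore
\begin{equation*}
\sum_{\mathrm{x}\,\in\,\Omega_i(\uf)} A(\mathrm{x})\,\nabla\phi(\mathrm{x}) \;=\; \sum_{(\uu,\uv)\,\in\,\partial\uf}\bigl(\overline{z}(\uv)-\overline{z}(\uu)\bigr)\,\frac{\phi(\uv)+\phi(\uu)}{4\mathrm{i}}.
\end{equation*}
The right-hand side depends only on $\uf$, its cyclic vertex order, and the values of $\phi$ on $\mathrm{V}(\uf)$; in particular it is independent of $i \in \{1,2\}$. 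This proves the identity.

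The only genuine point to be careful about is the cancellation of interior edges, which requires verifying that every chord added inside $\uf$ really is shared by exactly two triangles of $\Omega_i(\uf)$ that induce opposite orientations on it. This is a standard combinatorial fact about triangulations of a convex polygon by non-crossing diagonals, so I do not expect it to be a genuine obstacle; the rest is bookkeeping.
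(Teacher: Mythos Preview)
Your proposal is correct and follows exactly the approach the paper intends: the corollary is stated immediately after Lemma~\ref{greens-theorem} with no further proof, the implicit argument being that Green's theorem reduces each side to the common boundary sum over $\partial\uf$. You have simply written out the details the paper leaves to the reader.
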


\begin{Def}
The diagonal operators $A=\mathrm{diag}(\{A(\uf);\, \uf \in \mathrm{F}(\uG) \})$ 
(with $A(\uf)$ the area of the face $\uf$ defined by \ref{Aform}), %
and $R=\mathrm{diag}(\{R(\uf);\, \uf \in \mathrm{F}(\uG) \})$
(with $R(\uf)$ the circumradius of the face $\uf$ defined by \ref{Rcform})
map $\mathbb{C}^{\mathrm{F}(\uG)}  \to \mathbb{C}^{\mathrm{F}(\uG)}$ and are defined by their action on all $\psi\in\mathbb{C}^{\mathrm{F}(\uG)}$ as 
\begin{equation}
\label{ARdiag}
{A}\psi(\uf)=A(\uf)\psi(\uf)\ ,\quad {R}\psi(\uf) = R(\uf)\psi(\uf)
\end{equation}
\end{Def}

Then we shall heavily use the following local decompositions for the $\mathcal{D}$ and $\Delta$ operators.
\begin{Rem}
\label{RDnabla}
The K\"ahler operator $\mathcal{D}$ can be factored as 
\begin{equation}
\label{DNablaForm}
\mathcal{D}=4\,\overline\nabla^\top\! {{A}\over{R}^2}\nabla
\end{equation}
\end{Rem}
\noindent This decomposition is shown in the proposition 2.2. of section 2.6 of \cite{DavidEynard2014}. Note that $A$ and $R$ commutes.

\begin{Rem} 
\label{RLBnabla}
The Laplace-Beltrami operator $\Delta$ can be factored as
\begin{equation}
\label{DeltaNablaForm}
\Delta=2\left(\overline\nabla^{\!\top}\! {{A}}\,\nabla+ \nabla^{\!\top}\! {{A}}\,\overline\nabla\right)
\end{equation}
\end{Rem}

\noindent This decomposition can be derived easily using the method of \cite{DavidEynard2014}.
Alternatively one can use the formula \ref{Aform} for $A(\uf)$ and  \ref{diff-formula-nablas} to reorganize terms and show that for any $\phi\in\mathbb{C}^{F(G)}$ one has
\begin{equation}
\sum_{\uu,\uv} \overline\phi(\uu)\Delta_{\uu\uv}\phi(\uv)=2 \sum_\uf A(\uf)  \left(\overline\nabla\bar\phi(\uf)  \nabla\phi(\uf)+\nabla\bar\phi(\uf)  \overline\nabla\phi(\uf)\right)
\end{equation}
which amounts to \ref{DeltaNablaForm}.
\color{black}

\begin{Rem}
No similar decomposition holds for the conformal Laplacian $\Deltaconf$, since the weight $\tan\theta(\ue)$ associated to an 
oriented edge 
$\vec{\ue}$ depends non-additively on the north and south angles $\theta_\mathrm{n}(\vec{\ue} \,)$ and $\theta_\mathrm{s}(\vec{\ue} \,)$.
\end{Rem}

\subsection{Making sense of the log-determinant for infinite lattices}
\label{sLogDetInfLat}
\subsubsection{The problems}
\label{ssProbInfLat}
As explained in the introduction, we are interested in studying the variation of the $\log\det\mathcal{O}$ 
under a variation of the coordinates of the triangulation $\uT$, where $\mathcal{O}$ is any of the Laplace-like 
operators $\Delta$, $\Deltaconf$ and $\mathcal{D}$.
Two potential dangers arise:
(1) These operators have zero modes and some care is needed in imposing boundary conditions in order to exclude them. 
(2) We consider infinite polygonal graphs --- and so by any naive account, the 
log-determinant will be infinite. There is a host of standard methods used to handle these issues; below we discuss two 
situations where problem (1) and (2) can be side stepped.

\subsubsection{Using periodic triangulations:}
\label{ssPerTriang}
Consider a polyhedral graph $\uG$ which is periodic with respect to a lattice
$\Bbb{Z} + \tau \Bbb{Z}$ with $\frak{Im} \, \tau > 0$.
This means there is an action of the additive group $ \Lambda = \Bbb{Z}^2$ 
on $\mathrm{V}(\uG)$ denoted $\uv \mapsto \uv + (a,b)$ such that

\bigskip
\indent \indent
(1) $z\big(\uv + (a,b) \big) = z(\uv) + a + \tau b$ 

\smallskip
\indent \indent
(2) $\uu + (a,b)$ and $\uv + (a,b)$ are joined by an edge whenever $\uu$ and $\uv$ 
\newline \indent \indent \indent \   are joined by an edge (moreover the weights of these edges agree)

\bigskip
\noindent
for all $\uu, \uv \in \mathrm{V}(\uG)$ and $(a,b) \in \Lambda$.
Given a choice of an additive subgroup $\Lambda_{mn} := m\Bbb{Z} \times n\Bbb{Z}$ of $\Lambda$ with $m,n \in \Bbb{Z}_{>0}$,
form the quotient graph $\uG/\Lambda_{mn}$, which we can view as a finite graph embedded in
the torus $\Bbb{T}_{mn} := \Bbb{C}/(m\Bbb{Z} + \tau n \Bbb{Z})$.
Since the edge weights are periodic, the operator $\mathcal{O}$ 
descends to an operator $\mathcal{O}_{mn}$ on the quotient graph $\uG/\Lambda_{mn}$; moreover if we identify
the vertices of $\uG/\Lambda_{mn}$ with the subset $\mathrm{V}_{mn}$ consisting of
vertices $\uv \in \mathrm{V}(\uG)$ 
for which $z(\uv) \in \big\{ s + t \tau \, \big| \, (s,t) \in [0,m) \times [0,n) \big\} $ 
then $\mathcal{O}_{mn}$ is a finite dimensional operator acting on vector space of dimension
$\big|\mathrm{V}_{mn} \big|$.

We define the \emph{reduced log-determinant} $\log\det'\mathcal{O}_{mn}$ as the sum of the logarithms of the non-zero eigenvalues of $\mathcal{O}$ 
(the non-zero part of the spectrum is real and positive since $\mathcal{O}$ will be a positive operator in the cases we consider).
Then the \emph{normalized reduced log-determinant} $\log\det'\mathcal{O}$ is defined as
\begin{equation}
\label{logdetprimeO}
{\log\det}'_*\mathcal{O}_{mn}= {1\over |\mathrm{V}_{mn}|}\log{\det}'\mathcal{O}_{mn}
\end{equation}
The \emph{normalized log-determinant} of $\mathcal{O}$, defined for the entire infinite bi-periodic graph $\uG$ 
is defined simply as the limit
\begin{equation}
\label{limit-normalized-reduced}
\log{\det}_*\mathcal{O}=\lim_{m,n\to\infty}{\log\det}'_*\mathcal{O}_{mn}
\end{equation}
So $\log{\det}_*\mathcal{O}$ corresponds to an ``effective action'' density (free energy density) per vertex on the infinite lattice.

Definition (\ref{limit-normalized-reduced}) agrees with 
the log-determinant
considered  by Kenyon in \cite{Kenyon2002}
when $\mathcal{O}$ is the critical Laplacian 
on a bi-periodic infinite isoradial (critical) graph.

In fact, the limit in formula \ref{limit-normalized-reduced} exists and coincides with the following description in terms of matrix-valued symbols: 
Choose complex parameters $z$ and $w$ and for each pair $(m,n)\in \mathbb{Z}_{>0}^2$ 
define the space of quasi-periodic functions
\begin{equation} 
\label{FmnDef}
\mathcal{F}_{mn}(z,w) = \Bigg\{ \phi: \mathrm{V}(\uG) \longrightarrow \Bbb{C}  \, \Bigg| \,
\begin{array}{l}
\phi \big(\uv + (am, bn) \big) = z^aw^b\phi(\uv) \\ 
\text{for all $\uv \in \mathrm{V}(\uG)$ and all $a,b \in \Bbb{Z}$} 
\end{array}
\Bigg\}  
\end{equation}
This is a finite dimensional vector
space of dimension $\dim \mathcal{F}_{mn}(z,w) = \big|  \mathrm{V}_{mn} \big|$. Clearly $\mathcal{O}\phi \in \mathcal{F}_{mn}$
whenever $\phi \in \mathcal{F}_{mn}$, and consequently the  
operator $\mathcal{O}$ restricts
to a finite dimensional linear operator $\sigma^\mathcal{O}_{mn}$ on $\mathcal{F}_{mn}(z,w)$
which is called the symbol of $\mathcal{O}$. 
As a matrix the entries of $\sigma^\mathcal{O}_{mn}$ are Laurent polynomials in $z$ and $w$,
and for generic values of $z$ and $w$ it will be invertible; indeed the
work of Kassel and Kenyon \cite{KasselKenyon2012} implies that its determinant $\det \sigma^{O}_{mn}$ is non-negative for values of $z$ 
and $w$ each having unit modulus.
One checks that the average value of the log-determinant of this symbol 
agrees with normalized log-determinant of $\mathcal{O}$:

\begin{equation}
\label{norm-log-det}
\log \mathrm{det}_* \mathcal{O} = {1 \over {4 \pi^2 }} \, {1 \over {\big| \mathrm{V}_{mn} \big|}} \, \int_0^{2\pi} 
\int_0^{2\pi} d \zeta \, d \omega \, \log \det \sigma^\mathcal{O}_{mn} \big( e^{\mathrm{i}\zeta}, e^{\mathrm{i} \omega} \big) 
\end{equation}

\begin{Rem}
The value of the right hand side of \ref{norm-log-det} can be evaluated using Jensen's formula (twice) and is 
independent of the choice of $m,n \in \Bbb{Z}_{>0}$.
\end{Rem}

\subsubsection{Using Dirichlet boundary conditions:}
Let us propose the following alternative construction.
For a arbitrary polygonal graph $\uG$ (not necessarily periodic) 
one can consider a sequence of truncated operators $\mathcal{O}_n$ obtained from a nested sequence of domains $\Omega_1\subset\cdots \,\Omega_n\subset\Omega_{n+1}\subset\cdots$ whose union is $\mathbb{C}$: for instance, the sequence of $2n\times 2n$ squares
$\Omega_n=\{z ;\, |\mathrm{Re}(z)|<n, |\mathrm{Im}(z)|<n\}$) where $\mathcal{O}_n$ is the restriction of the operator
$\mathcal{O}$ to the subset of vertices $\mathrm{V}_n= \{ \uv \in \mathrm{V}(\uG) \, | \, z(\uv) \in \Omega_n \}$
with Dirichlet boundary conditions imposed on the complement of $\Omega_n$.
This amounts to setting the $(\uu,\uv)$  matrix entry of $\mathcal{O}_n$ to zero, whenever  $z(\uu),\, z(\uv) \notin\Omega_n$.
Thus the non-zero part of $\mathcal{O}_n$ is a $|\mathrm{V}_n| \times|\mathrm{V}_n|$ submatrix. Since we choose Dirichlet boundary conditions on the boundary of  $\Omega_n$ $\$\mathcal{O}_n$ has no zero modes and $\log\det \mathcal{O}_n$ is well defined. Then we expect that the normalized $\infty$-volume log-determinant, defined in analogy with \ref{logdetprimeO} by
\begin{equation}
\label{dirichlet-log-det}
\lim_{n\to\infty} {1\over |\Omega_n|}\ \log\det \mathcal{O}_n
\end{equation}
exists, at least in the case of a non-periodic graph $\uG$ which is sufficiently ``regular/homogeneous''  (e.g. a quasi-periodic lattice), and agrees with the normalized log-determinant $\log\det_*\! \mathcal{O}$ defined above by \ref{limit-normalized-reduced} when the graph is periodic.
This is to be expected on physical grounds by arguments analogous to those leading to the existence of a unique infinite volume thermodynamical limit for simple classical statistical systems, such as a lattice of classical oscillators, or spin systems, in their high temperature phase, independent of the boundary conditions chosen for the system. We shall not elaborate more, nor attempt to present a complete and fully rigorous proof, since this is not needed for the rest of this work.

\subsubsection{Local variation of $\infty$-volume determinants}
The finite variation of $\infty$-volume determinants (by themselves infinite) under local deformation can be defined properly for the two schemes that we have presented above.
Let us explain the idea in the Dirichlet boundary scheme.
We begin with a polyhedral graph $\uG$ and make perturbation
$\uG \to \uG'$ by moving some of its vertices inside a finite size compact domain $\Omega$. The operator $\mathcal{O}$ changes accordingly
$$\mathcal{O} \to\mathcal{O}'=\mathcal{O}+\delta\mathcal{O}$$
If the incidence relations of $\uG$ do not change, the variation $\delta\mathcal{O}$ will be an operator supported on the finite set $\bar\Omega$ 
consisting of all vertices in $\Omega$ plus their nearest neighbouring vertices 
(any vertex which shares a common face with a vertex in $\Omega$).
Considering a nested sequence of domains $\Omega_1\subset\Omega_2\cdots\subset\Omega_n\subset\cdots \to\mathbb{C}$ such that $ \bar\Omega\subset\Omega_1$, it is clear that one can write the variation series expansion for the restriction of $\mathcal{O}$ in each $\Omega_n$
\begin{equation}
\label{log-expansion2n}
\log \det \mathcal{O}'_n= \  \log \det \mathcal{O}_n \ + \ \tr \Big[  \delta \mathcal{O}_n \cdot   \mathcal{O}_n^{-1} \Big] 
\ - \ {1 \over 2} \, \tr \Big[ \left(\delta \mathcal{O}_n \cdot \mathcal{O}_n^{-1} \right)^2\Big] \ + \ \cdots
\end{equation}
In the $n\to\infty$ limit, since the $\delta \mathcal{O}_n$ extended to $\uG$ are equal to $\delta\mathcal{O}$, every term  in the expansion will converge to its $\infty$-volume limit, so that we have for any positive integer $K$
\begin{equation}
\label{ }
\tr \Big[ \left(\delta \mathcal{O}_n \cdot \mathcal{O}_n^{-1}\right)^K \Big]\ \to\ \tr \Big[ \left(\delta \mathcal{O} \cdot \mathcal{O}^{-1}\right)^K \Big]
\qquad K\in \mathbb{N}_+
\end{equation}
so that, although $\log \det \mathcal{O}'$ and $\log \det \mathcal{O}$ are formally infinite, the difference is finite and one can write
\begin{equation}
\label{log-expansion2}
\log \det \mathcal{O'}= \  \log \det \mathcal{O} \ + \ \tr \Big[  \delta \mathcal{O} \cdot   \mathcal{O}^{-1} \Big] 
\ - \ {1 \over 2} \, \tr \Big[ \left(\delta \mathcal{O} \cdot \mathcal{O}^{-1} \right)^2\Big] \ + \ \cdots
\end{equation}
We shall study the perturbation around an isoradial, Delaunay graph $\uG_\mathrm{cr}$, where we have seen that $\mathcal{O}^{-1}_\mathrm{cr}$ (the Green’s function) can be expressed in a simple contour integral form. Moreover we shall consider infinitesimal transformations \ref{thezdeform}, namely
\begin{equation*}
\label{thezdeform*}
z(\uv)\ \to\  z_\epsilon(\uv) \ = \  z(\uv)  \, + \, \epsilon \, F(\uv)
\end{equation*}
and study the general form of the first order term in \ref{log-expansion2}, and some especially interesting terms in the second order term.

\subsection{Kenyon's local formula for $\log\det\Delta_{\mathrm{cr}}$}

\subsubsection{\textbf{Kenyon's formula for a periodic infinite lattice}}
In \cite{Kenyon2002} Kenyon derived an explicit formula for the normalized log-determinant of $\Delta_\mathrm{cr}$
for periodic, isoradial, Delaunay triangulations $\uT_\mathrm{cr}$. The proof of this result
relies only on the structure of the corresponding rhombic graph $\uT_\mathrm{cr}^\lozenge$
and indeed works for any rhombic graph. For this reason Kenyon's formula implicitly extends
to all periodic, isoradial, Delaunay graphs $\uG_\mathrm{cr}$.
The formula reads  
\begin{equation}
\label{kenyon-log-det}
\log \text{det}_* \Delta_\mathrm{cr} \, = \,
{2 \over {\pi \,  | \mathrm{V}_{\scriptscriptstyle 11} | } }  \sum_{\stackrel{\scriptstyle \mathrm{edges} \, \ue }{\mathrm{of} \, \uG_{\mathrm{cr}}/\Lambda_{11}}}
\, \cyrLL \Big( \theta(\ue)  \Big) \, + \,  \cyrLL \Big( {\pi \over 2} - \theta(\ue)  \Big)
\, +  \, \theta(\ue) \log \tan \theta(\ue) 
\end{equation}
where $| \mathrm{V}_{\scriptscriptstyle 11} |$ is the volume (number of vertices) of the elementary domain of the infinite periodic graph (see section \ref{ssPerTriang}), the sum runs over all edges $\ue$ in the quotient toric graph, and  $\cyrLL$ is the Lobachevsky function (related to the Clausen function $\mathrm{Cl}_2$) defined by
\begin{equation}
\label{lobachevsky}
\cyrLL(x)=-\int_0^x dy\,|2\,\log(y)|= \mathrm{Cl}_2(2x)/2
\end{equation}

\subsubsection{\textbf{Extension to general isoradial (weak) Delaunay graphs}}
\label{log-determinant-cocyclic}
Kenyon's formula can be formally extended to express the (formally infinite) un-normalized 
log-determinant $\log\det \Delta_\mathrm{cr}$ for a general isoradial Delaunay graph $\uG_\mathrm{cr}$ 
as a sum over all edges $\ue \in \mathrm{E}(\uG_\mathrm{cr})$, namely:
\begin{equation}
\label{KenyonLogDet2}
\log\det \Delta_{\mathrm{cr}}={2\over \pi} \sum_{\ue \, \in \, \mathrm{E}(\uG_\mathrm{cr})} \mathcal{L}(\theta(\ue))
\end{equation}
with for compactness the function $\mathcal{L}$ of the conformal angles $\theta(\ue)$ given by
\begin{equation}
\label{LfunDef}
\mathcal{L}(\theta(\ue)) = \cyrLL \Big( \theta(\ue)\Big) \, + \,  \cyrLL \Big( {\pi \over 2} - \theta(\ue)  \Big)\, +  \, \theta(\ue) \, \log \tan \theta(\ue) 
\end{equation}
We may further generalize this formula to any isoradial \emph{weak} Delaunay graph $\uG_\cyrdd$ 
obtained from $\uG_\mathrm{cr}$ by adding chords inside the faces of $\uG_\mathrm{cr}$, i.e. any graph such that 
$\uG_\cyrdd^\bullet =\uG_\mathrm{cr}$.
Indeed, if $\ue$ is a chord in $\uG_\cyrdd$ then $\theta_\mathrm{n}(\vec{\ue} \,)=-\theta_\mathrm{s}(\vec{\ue} \,)$ and 
$\mathcal{L}(\theta_\mathrm{n} (\vec{\ue} \, ))= -\mathcal{L}(\theta_\mathrm{s}( \vec{\ue} \, ))$ where the function $\mathcal{L}(\theta)$  
is analytically extended to an \emph{odd function} of $\theta$ over $(-\pi,\pi)$.
For any isoradial weak Delaunay graph $\uG_\cyrdd$ of this kind, formula \ref{KenyonLogDet2} becomes
\begin{equation}
\label{KenyonLogDet3}
\log\det \Delta_{\mathrm{cr}}={1\over \pi} \sum_{\ue \, \in \,  \mathrm{E}(\uG_\cyrdd)} 
\mathcal{L}(\theta_\mathrm{n}( \vec{\ue} \, ) )+\mathcal{L}(\theta_\mathrm{s}(\vec{\ue} \,))
\end{equation}
since the contribution of any chord is zero. This is true in particular for the isoradial, weak Delaunay
graphs $\uG_{0^+}$ and $\widehat{\uG}_{0^+}$ mentioned in definition $\ref{defIsoRef}$
of the introduction. Note that the derivative of $\mathcal{L}$ is 
\begin{equation}
\label{Lprime}
\mathcal{L}'(\theta)={d \over d\theta} \mathcal{L}(\theta)= {\theta\over\sin\theta\cos\theta}
\end{equation}


\section{The critical Green's function and its asymptotics}
\label{sAsymptotics}

\subsection{Kenyon's formula for the critical Green's function}\ \\
\label{green-subsection}

The Green's function $ \Delta_\mathrm{cr}^{-1}$ studied by Kenyon in \cite{Kenyon2002} is a right-inverse of the critical
laplacian $\Delta_\mathrm{cr}$ characterized uniquely by the following three conditions

\smallskip
\indent \indent 1) $\Delta_\mathrm{cr} \, \Delta_\mathrm{cr}^{-1} = \mathbbm{1}$

\smallskip
\indent \indent 2) $\big[\Delta_\mathrm{cr}^{-1}\big]_{\uu,\uv} = 
\mathrm{O} \big( \log |z_\mathrm{cr}(\uu) - z_\mathrm{cr}(\uv) | \, \big)$ for $|z_\mathrm{cr}(\uu) - z_\mathrm{cr}(\uv) | \gg 0$

\smallskip
\indent \indent 3) $\big[\Delta_\mathrm{cr}^{-1}\big]_{\uu,\uu} = 0$

\bigskip
\noindent
Here $\uG_\mathrm{cr}$ is an isoradial Delaunay graph with embedding $z_\mathrm{cr}$
and $\uG_\mathrm{cr}^\lozenge$ its associated rhombic graph (its embedding is also denoted
$z_\mathrm{cr}$). Kenyon showed that this critical Green function $\Delta_\mathrm{cr}^{-1}$ 
on $\uG_\mathrm{cr}$ is expressed by the explicit integral

\begin{equation}
\label{greens-function}
\big[ \Delta_\mathrm{cr}^{-1} \big]_{\uu,\uv} 
\ = \ - {1 \over {8\pi^2 \mathrm{i}}} \, \oint_\mathcal{C} {dw \over w} \, \log(w) \, \mathrm{E}_{\underline{\theta}(\mathbbm{v}) } (w)
\end{equation}

\noindent
where $\mathbbm{v} = \big(\uv_0, \dots, \uv_k \big)$ is any choice of path 
from $\uv_0 = \uu$ to $\uv_{k} = \uv$ on $\uG_\mathrm{cr}^\lozenge$ and where
$\underline{\theta}(\mathbbm{v}) = \big(\theta_1, \dots, \theta_k  \big)$ is the associated sequence of angles.
$\mathrm{E}_{\underline{\theta}}(w)$ is the meromorphic function in $w$ 
\begin{equation}
\label{discrete-exponential}
\mathrm{E}_{\underline{\theta}}(w) \ := \
\prod_{j=1}^k \, 
{w + e^{\mathrm{i}\theta_j} \over {w - e^{\mathrm{i}\theta_j}} }
\end{equation}
The value of $\mathrm{E}_{\underline{\theta}}(w)$ depends only on the 
end points $\uv_0$ and $\uv_k$ of the path; this follows from
an argument similar to the proof in demonstrating 
that the value of $\lfloor \mathbbm{v} \rfloor_n$ for odd positive integers $n$
also depends only on the end points $\uv_0$ and $\uv_k$ of the path.
If we fix $\uv_0$ and allow the end point $\uv = \uv_k$ of the path to vary
then the mapping $\uv \mapsto \mathrm{E}_{\underline{\theta}}(w)$
is an example of a discrete analytic function on $\uG_\mathrm{cr}^\lozenge$ 
as discussed in \cite{Kenyon2002}. 
By Lemma \ref{infinite-product} the restriction of this
mapping to vertices $\uv \in \mathrm{V}(\uG_\mathrm{cr})$
may be viewed as a lattice approximation of the continuous exponential function
\[ z \,  \mapsto \, \exp \big\{ 2w \,  \big[ \overline{z} -  \overline{z}_\mathrm{cr}(\uv_0) \big] \big\} \]
provided $|w|<1$. For this reason $\mathrm{E}_{\underline{\theta}}(w)$ is referred to
as a {\it discrete exponential function}.
Finally $\mathcal{C}$ is any closed, counter-clockwise oriented contour enclosing the finite set
of phases $ \Phi(\mathbbm{v}) := \big\{ e^{\mathrm{i}\vartheta} \, \big| \, \vartheta \in \Theta(\mathbbm{v}) \big\} $.
As explained in Proposition \ref{semi-circle-property}
the set of angles $\Theta(\mathbbm{v})$, and thus $\Phi(\mathbbm{v})$, are finite and 
depend only on the end-points $\uu$ and 
$\uv$ of the path $\mathbbm{v}$. The set of poles of the integrand in 
formula \ref{greens-function} is precisely $\Phi(\mathbbm{v})$ and $e^{-\mathrm{i}\theta_0} \notin \Phi(\mathbbm{v})$,
so a contour $C$ can be chosen which avoids the 
branch cut $-\theta_0= \arg \big( z_\mathrm{cr}(\uu) - \mathrm{cr}(\uv) \big)$ of the logarithm;
see subsection 4.3 below for details.

\begin{Rem}
\label{rotation-invariance}
Formula (\ref{discrete-exponential}) is invariant under both global translation and rotation of the graph $\uG_\mathrm{cr}$. 
\end{Rem}

\begin{Rem}
\label{edge-value-2}
Let us consider an oriented edge $\vec\ue=(\uu\uv)$ of an {\bf{isoradial, weak Delaunay graph}} $\uG_\mathrm{cr}$.
There are two possible situations.
\begin{description}
  \item[1] Either $\vec\ue=(\uu\uv)$ is not a chord (see Fig.~\ref{Cocyclic-1-TX} left) in which case the north and south angles of $\vec\ue$ are equal (and generically non-zero) and both coincide with the conformal angle of the edge $\ue$.
\begin{equation*}
\label{ }
\thetan(\vec\ue)=\thetas(\vec\ue)=\theta(\ue)
\end{equation*}
\item[2] Or $\vec\ue=(\uu\uv)$ is a chord (see Fig.~\ref{Cocyclic-1-TX} right) in which case the north and south angles of $\vec\ue$ are opposite, 
while the conformal angle of $\ue$ is zero.
\begin{equation*}
\label{ }
\thetan(\vec\ue)=-\thetas(\vec\ue)\neq 0\quad,\qquad\theta(\ue)=0
\end{equation*}
\end{description}

\noindent
In {both cases}, Kenyon's formula for the Green's function for this pair of vertices $\uu, \uv$ reads
\begin{equation}
\label{GreenEdgeChord1}
\big[\Delta_\mathrm{cr}^{-1}\big]_{\uu,\uv}= -{1\over\pi} \, \thetan(\vec\ue) \cot \thetan(\vec\ue)= -{1\over\pi} \, \thetas(\vec\ue) \cot \thetas(\vec\ue)
\end{equation}
\end{Rem}
\begin{figure}[h!]
\begin{center}
\includegraphics[width=1.6in]{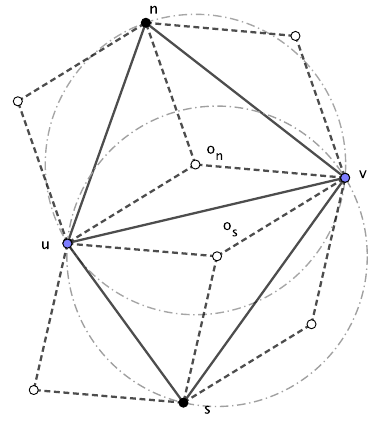}
\hskip 1.cm
\includegraphics[width=1.6in]{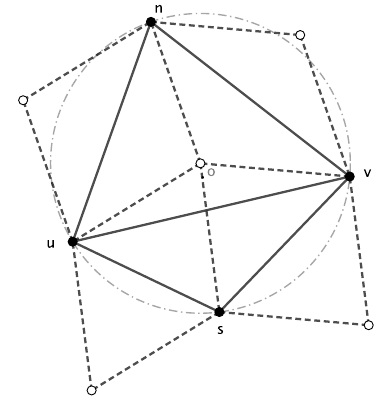}
\caption{ Either the edge $\vec\ue=(\uu\uv)$ is not
a chord, in which case the respective north and south centers $\uo_\mathrm{n}$ and $\uo_\mathrm{s}$ are different and $\theta_\mathrm{n}(\vec\ue)
=\theta_\mathrm{s}(\vec\ue)$ (and generically non-zero). Or else the edge $\vec\ue=(\uu\uv)$ is a chord, in which case the centers 
coincide $\uo_\mathrm{n}=\uo_\mathrm{s}=\uo$ and $\theta_\mathrm{n}( \vec\ue) =-\theta_\mathrm{s}(\vec\ue)$.}
\label{Cocyclic-1-TX}
\end{center}
\end{figure}
\begin{proof}
\newcommand{\uon}{\uo_\mathrm{n}}
\newcommand{\uos}{\uo_\mathrm{s}}
\newcommand{\ufn}{\uf_\mathrm{n}}
Select, for instance, the north face $\ufn$ and let $\uon$ be its center. 
Assume that the isoradius of the critical triangulation is $R_\mathrm{cr}=1$ for simplicity.
Consider the path $\mathbbm{v}=(\uu,\uon,\uv)$.
Set 
$e^{\mathrm{i} \theta_1} = z_\mathrm{cr}(\uon) -z_\mathrm{cr}(\uu) $
and 
$e^{\mathrm{i} \theta_2} = z_\mathrm{cr}(\uv) - z_\mathrm{cr}(\uon)$
and note that 
$\thetan({\overline{\uu \uv}}) = (\theta_1  -\theta_2)/2$. Then
\begin{equation}
\label{green-edge-2} 
\begin{split}
\big[ \Delta_\mathrm{cr}^{-1} \big]_{\uu,\uv}
&=  -{1 \over {8\pi^2\mathrm{i}}} \, \oint_{\mathcal{C}} \, {dw \over w}  \, \log(w) \, {w + e^{\mathrm{i}\theta_1} \over {w - e^{\mathrm{i} \theta_1}} } \, {w + e^{\mathrm{i} \theta_2} \over {w - e^{\mathrm{i} \theta_2}} }  \\ 
&=    -{1 \over {4\pi}} \, \Bigg( 2e^{\mathrm{i} \theta_1} \, {e^{\mathrm{i}\theta_1} + e^{\mathrm{i} \theta_2} \over {e^{\mathrm{i}\theta_1} - e^{\mathrm{i} \theta_2}} } \, {\log\big( e^{\mathrm{i} \theta_1} \big)  \over {e ^{i \theta_1} }} \, + \, 2e^{\mathrm{i} \theta_2} \, {e^{\mathrm{i} \theta_2} + e^{\mathrm{i} \theta_1} \over {e^{\mathrm{i}\theta_2} - e^{\mathrm{i} \theta_1}} } \, {\log\big( e^{\mathrm{i}\theta_2} \big)  \over {e ^{i\theta_2} }} \Bigg)   \\ 
&=     -{1 \over {4\pi}} \, \Bigg(   2 \mathrm{i} (\theta_1-\theta_2) \, {e^{\mathrm{i}\theta_1} + e^{\mathrm{i}\theta_2} \over {e^{\mathrm{i}\theta_1} - e^{\mathrm{i}\theta_2}} } \,   \Bigg) =     -{1 \over {\pi}} \,     {\theta_1-\theta_2\over 2} \cot\left( {\theta_1-\theta_2\over 2}\right)  
\end{split}
\end{equation}  
The calculation with the south face $\uf_\mathrm{s}$ gives the same result, regardless of whether $\uos\neq\uon$ or $\uos=\uon$.
\end{proof}

\subsection{Expansion and bounds for the discrete exponential}
\label{ssExpDiscExp}
\begin{lemma}
\label{lemma-p-estimate}
Consider a finite sequence of angles $\big(\theta_1, \dots,  \theta_k \big)$ contained in the closed
interval of the form $\big[ \vartheta - {\pi \over 2}, \vartheta + {\pi \over 2} \big]$ centered about some
fixed angle $\vartheta$.
{Using Def.~\ref{def-p_n}), consider}
\begin{equation}
\label{p-estimate-def}
p_{2n+1} := \sum_{j=1}^k \, e^{\mathrm{i}(2n+1)\theta_j}
\end{equation}
Then we have the uniform bound
\begin{equation}
\label{p-estimate}
\big| \, p_{2n+1}\big| \, \leq \,
(2n+1) \, \big| \, p_1\big|.
\end{equation}
\end{lemma}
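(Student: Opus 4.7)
The plan is to reduce the problem to the classical termwise inequality
$|\sin(m\phi)| \le m\,|\sin\phi|$,
valid for any real $\phi$ and any positive integer $m$, which itself follows from the factorization
$\sin(m\phi) = \sin\phi \cdot U_{m-1}(\cos\phi)$ together with the Dirichlet–kernel bound $|U_{m-1}| \le m$. This will give me a sharp bound on the real part of $p_{2n+1}$; the full bound on $|p_{2n+1}|$ will then come from a rotation argument.

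First I would use the rotational invariance of the statement: the map $\theta_j \mapsto \theta_j - \vartheta$ preserves both $|p_{2n+1}|$ and $|p_1|$ and sends the semicircle containing the angles to $[-\pi/2, \pi/2]$, so I may assume $\vartheta = 0$ throughout. I then introduce the shifted angles $\phi_j := \theta_j + \pi/2 \in [0,\pi]$, for which $\sin\phi_j \ge 0$. The key trigonometric identity $\sin\!\bigl(A+(2n+1)\tfrac{\pi}{2}\bigr) = (-1)^n\cos A$ (since $(2n+1)\tfrac{\pi}{2} = n\pi+\tfrac{\pi}{2}$) gives
\[
\sin\bigl((2n+1)\phi_j\bigr) = (-1)^n\cos\bigl((2n+1)\theta_j\bigr).
\]
Applying the classical inequality termwise and using $\sin\phi_j = \cos\theta_j \ge 0$ I obtain
\[
\bigl|\cos\bigl((2n+1)\theta_j\bigr)\bigr| \;=\; \bigl|\sin\bigl((2n+1)\phi_j\bigr)\bigr| \;\le\; (2n+1)\,\sin\phi_j \;=\; (2n+1)\cos\theta_j.
\]
Summing over $j$ and using the triangle inequality together with $\cos\theta_j \ge 0$ yields the real-part estimate
\[
\bigl|\mathrm{Re}(p_{2n+1})\bigr| \;\le\; \sum_j\bigl|\cos\bigl((2n+1)\theta_j\bigr)\bigr| \;\le\; (2n+1)\,\mathrm{Re}(p_1).
\]

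The main obstacle is passing from this bound on the real part to a bound on the full modulus $|p_{2n+1}|$. My plan is to exploit the rotation freedom available whenever the $\theta_j$'s can be enclosed in a (possibly different) semicircle. Concretely, for any $\gamma$ in the valid range $[b-\pi,\,a]$ (where $a=\min_j\theta_j$, $b=\max_j\theta_j$), the shifted angles $\theta_j - \gamma$ lie in $[0,\pi]$, and the same derivation applied to this shifted configuration produces
\[
\bigl|\mathrm{Im}\bigl(e^{-\mathrm{i}(2n+1)\gamma}\,p_{2n+1}\bigr)\bigr| \;\le\; (2n+1)\,\mathrm{Im}\bigl(e^{-\mathrm{i}\gamma}p_1\bigr) \;\le\; (2n+1)\,|p_1|.
\]
Writing $|p_{2n+1}| = \sup_\psi|\mathrm{Im}(e^{-\mathrm{i}\psi}p_{2n+1})|$, the bound closes as soon as $(2n+1)\gamma$ sweeps out, mod $2\pi$, the angle $\psi^\star = \arg p_{2n+1} - \pi/2$. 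For sufficiently loose configurations (where $b-a \le (2n-1)\pi/(2n+1)$) the valid $\gamma$-interval has length $\ge 2\pi/(2n+1)$, so $(2n+1)\gamma$ covers every direction modulo $2\pi$ and the conclusion is immediate. For tight configurations (where $b-a$ approaches $\pi$) the valid interval shrinks; here the plan is to close the argument by a continuity/approximation step, approximating the configuration by those with strictly interior $\theta_j$'s and passing to the limit. The delicate point — and the step that I expect will take the most care to write out cleanly — is precisely this boundary handling, ensuring that the optimal projection direction $\psi^\star$ is always accessible from the valid $\gamma$-range (or, failing that, that the approximation argument preserves the bound uniformly).
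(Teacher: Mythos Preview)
Your termwise bound $|\cos((2n+1)\theta_j)| \le (2n+1)\cos\theta_j$ is correct, and so is the resulting real-part inequality $|\mathrm{Re}\,p_{2n+1}| \le (2n+1)\,\mathrm{Re}\,p_1$. The trouble is the passage to the full modulus. Your rotation argument requires that $(2n+1)\gamma$ sweep out (mod $2\pi$) an arc containing $\arg p_{2n+1} \pm \tfrac{\pi}{2}$; since $\gamma$ is confined to $[b-\pi,a]$, the accessible arc has length $(2n+1)(\pi-(b-a))$, and this drops below $\pi$ precisely when $b-a > 2n\pi/(2n+1)$. In that regime there are directions that no admissible $\gamma$ reaches. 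Your proposed fix --- approximating by configurations with strictly interior angles --- does not help: moving boundary angles slightly inward shrinks $b-a$ only infinitesimally, whereas you would need to shrink it by the fixed amount $\pi/(2n+1)$ to enter the regime where the rotation argument closes. So the gap is genuine, not a boundary technicality.

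The paper avoids the issue by replacing the real termwise identity with a complex one. Writing
\[
e^{\mathrm{i}(2n+1)\theta} \;=\; \cos\theta \cdot q_{2n+1}\!\bigl(e^{\mathrm{i}\theta}\bigr) \;+\; (-1)^n\,e^{\mathrm{i}\theta},
\qquad
q_{2n+1}(w) \;=\; \frac{2w^2\bigl(w^{2n}-(-1)^n\bigr)}{w^2+1},
\]
one checks (via the substitution $w\mapsto \mathrm{i}w$) that on the unit circle $q_{2n+1}(e^{\mathrm{i}\theta})$ is a sum of $n$ terms of modulus $2$, whence $|q_{2n+1}(e^{\mathrm{i}\theta})|\le 2n$ uniformly on $\theta\in[-\tfrac{\pi}{2},\tfrac{\pi}{2}]$. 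Summing the identity over $j$ gives
\[
p_{2n+1} \;=\; \sum_j \cos\theta_j \cdot q_{2n+1}\!\bigl(e^{\mathrm{i}\theta_j}\bigr) \;+\; (-1)^n\,p_1,
\]
and a single triangle inequality (using $\cos\theta_j\ge 0$) yields $|p_{2n+1}| \le 2n\,\mathrm{Re}\,p_1 + |p_1| \le (2n+1)\,|p_1|$. This is the complex analogue of your Chebyshev factorization $\sin(m\phi)=\sin\phi\cdot U_{m-1}(\cos\phi)$: the decisive difference is that the remainder $(-1)^n e^{\mathrm{i}\theta}$ is kept as a complex quantity, so that after summation it contributes exactly $|p_1|$ rather than forcing you to control real and imaginary projections separately.
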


\begin{proof}
Clearly, it is enough to verify the lemma in the case of $\vartheta = 0$, otherwise 
we have $p_{2n+1}  = e^{-\mathrm{i}\vartheta} \tilde{p}_{2n+1}$ where 
$\tilde{p}_{2n+1} = \sum_{j=1}^k \, e^{\mathrm{i}(2n+1) \tilde{\theta}_j }$
and where $\tilde{\theta}_j = \theta_j - \vartheta \in \big[ -{\pi \over 2}, {\pi \over 2} \big]$.

\smallskip
\noindent
Begin with the following polynomial $\D q_{2n+1}(w) := \, 2w^2 \Big( w^{2n} - (-1)^n \Big) \big(w^2 + 1 \big)^{-1}$
and notice that 
\[ \begin{array}{rl} \displaystyle
q_{2n+1}(iw) :=& \displaystyle \, 2\big(iw\big)^2 \Bigg( {{\big(iw\big)^{2n} - (-1)^n} \over {\big(iw\big)^2 + 1}} \Bigg) 
=
\D \big(-1 \big)^n \, 2w^2 \, {{w^{2n} -1} \over {w^2-1}}  \\
=&\D \big(-1 \big)^n \, 2 \, \Big( w^{2n} + w^{2n-2} + \cdots + w^2 + 1 \Big) \\
\end{array}
\]
therefore $\D q_{2n+1}(w) = \big(-1 \big)^{n} 2 \Big( 1 - w^2 + w^4 - w^6 + \cdots + \big( -1\big)^n w^{2n} \Big) $.

\bigskip
\noindent
For $w = e^{\mathrm{i}\theta}$ with $\theta \in 
\big[ -{\pi \over 2}, {\pi \over 2} \big]$
the function $\theta \mapsto q_{2n+1}\big( e^{\mathrm{i}\theta}\big)$ is clearly continuous and its modulus takes maximal value 
$\Big| q_{2n+1}\big( \pm i \big) \Big| = 2n$ and so $\big| q_{2n+1}  \big|_\infty = 2n$. By construction
$e^{\mathrm{i} (2n+1) \theta } = \cos (\theta) \, q_{2n+1}\big( e^{\mathrm{i} \theta} \big) \, + \, (-1)^n e^{\mathrm{i} \theta}$ and so
$p_{2n+1} = \sum_{j=1}^k \cos( \theta_j) \, q_{2n+1}\big( e^{\mathrm{i} \theta_j}\big) \, + \, (-1)^n p_1$. We now proceed 
with a yoga of inequalities:
\[
\begin{array}{rl}
\displaystyle \big| p_{2n+1} \big| &\displaystyle 
\leq \ \Big| \sum_{1 \leq j \leq k} \, \cos( \theta_j) \, q_{2n+1} \big( e^{\mathrm{i} \theta_j}\big) \, \Big| \ + \ \big| \, p_1 \big| \\ 
&\displaystyle \leq \
\sum_{1 \leq j \leq k} \, \Big| \cos( \theta_j) \, q_{2n+1} \big( e^{\mathrm{i} \theta_j}\big) \, \Big| \ + \ \big| \, p_1 \big| \\ 
&\displaystyle \leq \
\sum_{1 \leq j \leq k} \, \cos( \theta_j)  \, \Big| \, q_{2n+1} \big( e^{\mathrm{i} \theta_j}\big) \, \Big| \ + \ \big| \, p_1 \big| 
\quad \left( \begin{array}{ll} \text{Note that $\cos(\theta_j) \geq 0$} \\ 
\text{because $-{\pi \over 2}   \leq \theta_j \leq {\pi \over 2}$} \end{array} \right) \\ 
&\displaystyle \leq \
\sum_{1 \leq j \leq k} \, \cos( \theta_j)  \, \big| \, q_{2n+1}  \big|_\infty \ + \ \big| \, p_1 \big| \\
&\displaystyle \leq \
2n \, \frak{Re} \big[ p_1 \big] \ + \  \big| \, p_1 \big| \\
&\displaystyle \leq \ \big( 2n + 1 \big) \, \big| \, p_1 \big| 
\quad \Big( \text{since $0 \leq \frak{Re} \big[ p_1 \big] \leq \big| \, p_1  \big|$ } \Big)
\end{array}
\]
\end{proof}

\begin{lemma}
\label{infinite-product}
Given a finite sequence of angles $\underline{\theta} = \big(\theta_1, \dots, \theta_k \big)$ and $|w|<1$
the following infinite product expansion of $\mathrm{E}_{\underline{\theta}}(w)$ is valid:

\begin{equation}
\begin{array}{ll}
\D \mathrm{E}_{\underline{\theta}}(w) 
&\D  = \   \big( -1\big)^k \, \prod_{n \, \mathrm{odd} } \exp \Bigg(  {2 \over n} w^n \,  \overline{p}_n  \Bigg)  
\quad \text{where $\D p_n = \sum_{1 \leq j \leq k} e^{\mathrm{i} n \theta_j}$. } 
\end{array}
\end{equation}
\end{lemma}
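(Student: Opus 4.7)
The plan is to take logarithms factor-by-factor and expand each $\log$ as an odd-power series in $w$. First I would rewrite each Blaschke-like factor by pulling out $e^{\mathrm{i}\theta_j}$ from numerator and denominator:
\begin{equation*}
\frac{w + e^{\mathrm{i}\theta_j}}{w - e^{\mathrm{i}\theta_j}} \;=\; \frac{e^{\mathrm{i}\theta_j}\!\bigl(1 + w\,e^{-\mathrm{i}\theta_j}\bigr)}{-e^{\mathrm{i}\theta_j}\!\bigl(1 - w\,e^{-\mathrm{i}\theta_j}\bigr)} \;=\; -\,\frac{1 + w\,e^{-\mathrm{i}\theta_j}}{1 - w\,e^{-\mathrm{i}\theta_j}}\,.
\end{equation*}
Taking the product over $j=1,\dots,k$ immediately explains the overall sign $(-1)^k$ and reduces the claim to showing
\begin{equation*}
\prod_{j=1}^{k} \frac{1 + w\,e^{-\mathrm{i}\theta_j}}{1 - w\,e^{-\mathrm{i}\theta_j}} \;=\; \prod_{n\,\mathrm{odd}} \exp\!\Bigl(\tfrac{2}{n}\,w^n\,\overline{p}_n\Bigr).
\end{equation*}

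Next I would use the elementary identity $\log\!\bigl(\tfrac{1+u}{1-u}\bigr) = 2\sum_{n\,\mathrm{odd}} u^n/n$, valid for $|u|<1$. Since $|w e^{-\mathrm{i}\theta_j}|=|w|<1$ uniformly in $j$, applying this factor-by-factor with $u = we^{-\mathrm{i}\theta_j}$ and summing over $j$ gives
\begin{equation*}
\sum_{j=1}^{k} \log\!\frac{1 + w e^{-\mathrm{i}\theta_j}}{1 - w e^{-\mathrm{i}\theta_j}} \;=\; 2\sum_{n\,\mathrm{odd}} \frac{w^n}{n}\sum_{j=1}^{k} e^{-\mathrm{i} n\theta_j} \;=\; 2\sum_{n\,\mathrm{odd}} \frac{w^n\,\overline{p}_n}{n},
\end{equation*}
where in the last step I recognize $\sum_j e^{-\mathrm{i} n\theta_j} = \overline{p_n}$ directly from the definition $p_n = \sum_j e^{\mathrm{i} n\theta_j}$. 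Exponentiating both sides and factoring the exponential term-by-term yields the asserted product.

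The only convergence points to check are the interchange of sum over $j$ with the series in $n$, and the rewriting of $\exp\!\sum_n$ as $\prod_n \exp$; both are immediate because the double series $\sum_{n,j} |w|^n/n$ converges absolutely for $|w|<1$ (with $k$ finite), so Fubini applies and no delicate rearrangement is needed. The uniform bound $|p_n|\le n|p_1|$ from Lemma~\ref{lemma-p-estimate} is not required for this statement per se, but it already confirms that the exponents $\tfrac{2}{n}w^n\overline{p}_n$ tend to zero geometrically, which will matter later when this expansion is used inside the contour integral~\ref{greens-function} to extract the long-range asymptotics of the Green's function.

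I expect no genuine obstacle in the proof itself — the calculation is a one-line manipulation once the $(-1)^k$ sign is cleanly extracted. The only mildly subtle point is the branch choice hidden in the pointwise identity $\log\!\bigl(\tfrac{1+u}{1-u}\bigr) = 2\sum_{n\,\mathrm{odd}} u^n/n$; I would phrase the argument at the level of formal power series in $w$ (or note that both sides are analytic in $|w|<1$ and agree at $w=0$), thereby sidestepping any need to track branches of individual logarithms.
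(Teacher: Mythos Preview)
Your proof is correct and follows essentially the same route as the paper: extract the sign $(-1)^k$ by factoring $e^{\mathrm{i}\theta_j}$ from each Blaschke-like factor, expand $\log\frac{1+u}{1-u}$ as an odd-power series, sum over $j$ to obtain $\overline{p}_n$, and exponentiate. Your added remarks on absolute convergence and branch choices are more careful than the paper's bare computation, but the substance is identical.
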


\begin{proof}
\[ \begin{array}{cl}  \D \prod_{j=1}^k \, {w + e^{\mathrm{i}\theta_j} \over {w - e^{\mathrm{i} \theta_j}} } &\D = \ \big( -1\big)^k \, \prod_{j=1}^k \,
{1 + we^{-\mathrm{i}\theta_j} \over {1 - we^{-\mathrm{i} \theta_j}}  } \\
&\D = \ \big( -1\big)^k \, \exp \, \sum_{j=1}^k \, \log \, \Bigg( {1 + we^{-\mathrm{i}\theta_j} \over {1 - we^{-\mathrm{i} \theta_j}}  } \Bigg) \\
&\D = \   \big( -1\big)^k \, \exp \, \sum_{j=1}^k \, 2\, \Bigg( we^{-\mathrm{i}\theta_j} \, + \,  {1 \over 3} \, w^3e^{-3i \theta_j}  \, + \,
{1 \over 5} \, w^5 e^{-5i \theta_j} \, \cdots  \Bigg) \\ 
&\D = \  \big( -1\big)^k \, \exp \, \Bigg( 2w \, \sum_{j=1}^k \, e^{-\mathrm{i} \theta_j} \ + \  {2 \over 3} \, w^3 \, \sum_{j=1}^k 
\, e^{-3i \theta_j} \ + \
{2 \over 5} \, w^5 \, \sum_{j=1}^k \, e^{-5i \theta_j} \ \cdots  \Bigg) \\
&\D = \   \big( -1\big)^k \, \prod_{n \, \text{odd} } \exp \Bigg(  {2 \over n} w^n \,  \overline{p}_n  \Bigg)  
\end{array}  \]
Note that this can be rewritten as 
$$ \big( -1\big)^k \, \exp \big(  2 w  \, \overline{p}_1  \big) 
\cdot \Bigg( 1 + \sum_{N \geq 3} \, w^N \, \overline{\mathrm{c}}_N \Bigg)$$
with the $\overline{\mathrm{c}}_N $ the coefficients of a series.
\end{proof}

\begin{Rem}
\label{using-p-estimate}
Let $\underline{\theta} = \big(\theta_1, \dots, \theta_n \big)$ be a finite sequence of
angles contained in an interval of the form $\big[ \vartheta - {\pi \over 2}, \vartheta + {\pi \over 2} \big]$
where $n$ is a positive odd integer. Define 
\begin{equation}
\label{def-u-n}
u_n \ = \ {1 \over n} \, {p_n \over {p_1}}
\quad \text{and} \quad 
\D \frak{u}(w) = 
\sum_{\stackrel{\scriptstyle \mathrm{odd}}{n \geq 3} }
\, u_n w^n
\end{equation}
By Lemma \ref{lemma-p-estimate} each $| u_n | \leq 1$ and $\frak{u}(w)$ is analytic in the
unit disk and 
$\mathrm{E}_{\underline{\theta}}(w) =  \big( -1 \big)^k \cdot \exp\big( 2\overline{p}_1w \big) \cdot \exp \big( 2\overline{p}_1 \frak{u}(w) \big)$.
Furthermore we have, through the standard combinatorial vinyasas,  
\begin{equation}
\label{combinatorial-yoga}
\mathrm{E}_{\underline{\theta}}(w) 
\ = \ \big( -1 \big)^k \cdot \exp\big( 2\overline{p}_1w \big) \cdot 
\Bigg( 1 + \sum_{m =1 }^\infty \, \sum_{d = 1}^m \, w^{2m+d} \, \big(  2\overline{p}_1 \big)^d \, \overline{\mathrm{c}}_{m,d} \Bigg)
\end{equation}
with the coefficients $\mathrm{c}_{m,d}$ given by 
\begin{equation}
\label{def-c-md}
\mathrm{c}_{m,d}
= \sum_{\stackrel{\scriptstyle \mathbbm{r} \, \vdash m}{\# (\mathbbm{r})= d} } 
\ \prod_{s \geq 1 } \,
 {1 \over {(r_s)! } } \, (u_{1+2s})^{r_s} 
\end{equation}
and where the sum is taken over infinite tuples $\mathbbm{r} = \big(r_1, r_2, r_3, \dots \big) \in \Bbb{Z}_{\geq 0}^\Bbb{N}$
with $\sum_{s \geq 1} r_s = d$ and such that $\sum_{s \geq 1} \,s\, r_s = m$.
\end{Rem}

Let $\uu$ and $\uv$ be distinct vertices of $\uG_\mathrm{cr}$ and let $\mathbbm{v} =
\big(\uv_0, \dots, \uv_k \big)$ be 
a path from $\uu$ to $\uv$. 
Translation and rotation invariance of the Green's function allows us  
to assume without loss of generality
that $\uu$ is situated at the origin and that
the phases $e^{\mathrm{i} \theta_j} := z_\mathrm{cr}(\uv_j)  - z_\mathrm{cr}(\uv_{j-1})$ of the path
lie in the open interval $\big( {\scriptstyle -}{\pi \over 2}, {\pi \over 2} \big)$; 
if not the embedding of $\uG_\mathrm{cr}$ may be 
shifted $z \mapsto z- z_\mathrm{cr}(\uu)$ and rotated $z \mapsto  z \exp \big(-i\theta_\mathbbm{v} \big)$ 
to achieve these features; see Proposition \ref{semi-circle-property} for 
a definition of $\theta_\mathbbm{v}$.

\subsection{Contour integral for the expansion}
\label{ssContIntExp}

In \cite{Kenyon2002} Kenyon handles the asymptotic behaviour of the Green's function with
respect to the distance $|\uu - \uv|$
using a {\it keyhole} contour $C$ with a corridor of width $\epsilon >0$
avoiding the cut of the logarithm $\arg(w) = -\pi$.
Paraphrasing Kenyon, this contour $C_\epsilon$ runs counter-clockwise along the circle of radius $R$ 
about the origin (connecting $-R \pm i\epsilon$), 
then travels horizontally above the $x$-axis from $-R + i \epsilon$ to $-r+ i \epsilon$, runs clockwise
along the circle of radius $r$ about the origin (connecting $-r \pm i \epsilon$), 
and finally returns horizontally from $-r - i \epsilon$ to $-R - i \epsilon$ below the $x$-axis. 
Here $R \gg |u-v|$ and $r  \ll  |\uu - \uv|^{-1}$ (see figure \ref{keyhole-contour}).
The following lemma allows us to compute the Green's function 
by integrating along the cut of the logarithm provided we subtract off
the logarithmic divergences. 

\begin{figure}[h]
\begin{center}
\raisebox{-.8in}{\includegraphics[width=3in]{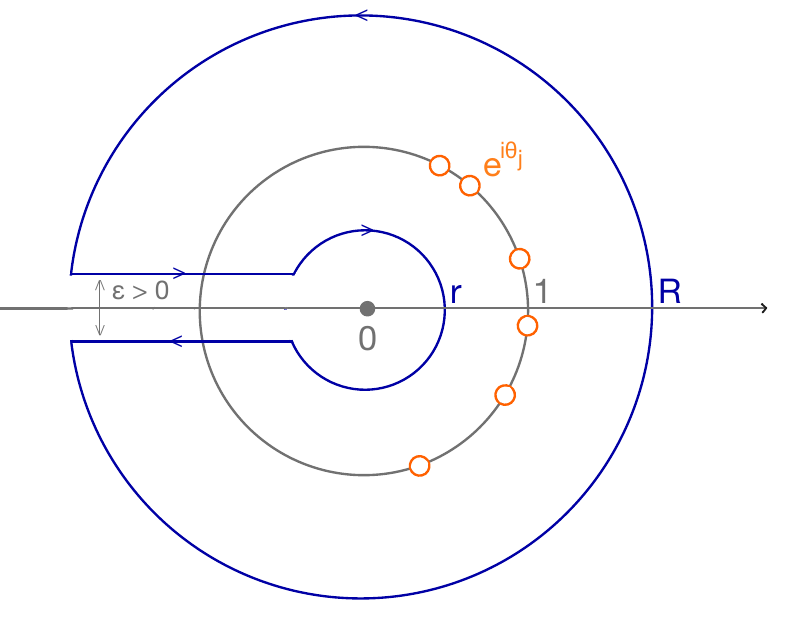}}
\caption{keyhole contour $C$.}
\label{keyhole-contour}
\end{center}
\end{figure}

\begin{lemma}
\label{integrating-on-the-cut}
Let $\mathrm{F}(w)$ be a function which is holomorphic on the extended complex plane $\Bbb{C} \cup \{ \infty\}$
outside a subset $S$ contained in the 
interior of the keyhole contour $C$ for some values of $R$, $r$, and $\epsilon$, and such that
$F(0)=F(\infty)=1$ then

\begin{equation}
\oint_C  {dw \over w} \, \log(w) \, \mathrm{F}(w) \ = \
- 2 \pi i  \int_0^\infty \, \Big( \mathrm{F}({\scriptstyle -}t) - 1\Big) \, {dt \over t}
\end{equation}
\end{lemma}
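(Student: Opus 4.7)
The plan is to decompose the keyhole contour $C$ into its four standard pieces---the outer counter-clockwise circle $|w|=R$, the horizontal segment just above the cut from $-R+i\epsilon$ to $-r+i\epsilon$, the inner clockwise circle $|w|=r$, and the horizontal segment just below the cut from $-r-i\epsilon$ to $-R-i\epsilon$---and to analyze the three limits $\epsilon\to 0^+$, $r\to 0^+$, $R\to\infty$ with the requisite cancellations carried out explicitly \emph{before} taking limits. Since $\mathrm{F}$ is holomorphic on $\Bbb{C}\cup\{\infty\}$ outside $S$, with $S$ sitting strictly inside $C$, every piece of the calculation will stay inside the region of holomorphy throughout.

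First I would combine the two horizontal segments. Parametrizing $w = -t \pm i\epsilon$ for $t \in [r,R]$ and using the boundary values $\log w \to \log t \pm i\pi$ as $\epsilon\to 0^+$, the sum of the two segments reduces to $-2\pi i \int_r^R \mathrm{F}(-t)\, dt/t$; the real part $\log t$ cancels between the two sides and only the $\pm i\pi$ jump of the logarithm survives. Next I would compute the two circular contributions. On $|w|=R$ parametrize $w=Re^{i\theta}$ so that $dw/w = i\, d\theta$ and $\log w = \log R + i\theta$. Writing $\mathrm{F}(Re^{i\theta}) = 1 + (\mathrm{F}-1)$, the ``$1$''-piece integrates (in the $\epsilon\to 0$ limit) to exactly $2\pi i \log R$, since the term odd in $\theta$ vanishes, while the ``$\mathrm{F}-1$''-piece is bounded in modulus by $2\pi(\log R + \pi)\cdot \sup_{|w|=R}|\mathrm{F}(w)-1|$. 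Since $\mathrm{F}$ is holomorphic at $\infty$ with $\mathrm{F}(\infty)=1$, one has $|\mathrm{F}(w)-1| = O(1/|w|)$ for $|w|$ large, so this error vanishes as $R\to\infty$. An identical argument on $|w|=r$ (traversed clockwise)---using that $\mathrm{F}$ is holomorphic at the origin, with $\mathrm{F}(0)=1$ and hence $\mathrm{F}(w)-1 = O(|w|)$---yields $-2\pi i \log r$ plus an error vanishing as $r\to 0$.

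Assembling the four pieces gives
\[
\oint_C \frac{dw}{w}\log(w)\,\mathrm{F}(w) \;=\; 2\pi i \log R \;-\; 2\pi i \log r \;-\; 2\pi i \int_r^R \mathrm{F}(-t)\,\frac{dt}{t}\;+\;o(1).
\]
The punchline is that substituting $\mathrm{F}(-t) = 1 + (\mathrm{F}(-t)-1)$ in the horizontal integral produces $-2\pi i(\log R - \log r)$, which precisely cancels the divergent boundary terms coming from the two circles, leaving
\[
\oint_C \frac{dw}{w}\log(w)\,\mathrm{F}(w) \;=\; -2\pi i \int_r^R (\mathrm{F}(-t)-1)\,\frac{dt}{t}\;+\;o(1).
\]
The one remaining point is to justify passing to the limit inside this last integral, which reduces to absolute integrability of $(\mathrm{F}(-t)-1)/t$ on $(0,\infty)$: near $t=0$ the bound $\mathrm{F}(-t)-1=O(t)$ (from $\mathrm{F}(0)=1$ and holomorphy at the origin) makes the integrand bounded, while near $t=\infty$ the bound $\mathrm{F}(-t)-1=O(1/t)$ (from $\mathrm{F}(\infty)=1$ and holomorphy at infinity) gives an $O(1/t^2)$ decay, which is integrable. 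I do not expect a genuine obstacle here beyond careful bookkeeping of the three limits $\epsilon\to 0^+$, $r\to 0^+$, $R\to\infty$ in the right order; the whole argument is a textbook branch-cut manipulation whose only non-trivial input is the pair of normalisations $\mathrm{F}(0)=\mathrm{F}(\infty)=1$, which is exactly what enables the cancellation of the logarithmic divergences.
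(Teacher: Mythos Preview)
Your proof is correct and follows essentially the same keyhole-contour decomposition as the paper: split into the two circles and the cut, extract the $\pm 2\pi i\log R$ and $\mp 2\pi i\log r$ divergences, and cancel them against the ``$1$'' piece of $\int_r^R \mathrm{F}(-t)\,dt/t$. The only cosmetic difference is that the paper computes the circular contributions via the explicit power-series expansions of $\mathrm{F}$ at $0$ and $\infty$, whereas you use the equivalent $O(|w|)$ and $O(1/|w|)$ bounds on $\mathrm{F}-1$; both amount to the same cancellation mechanism.
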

\begin{proof}

\[
\begin{array}{ll}
\D \oint_C  {dw \over w} \, \log(w) \, \mathrm{F}(w) 
&\D = \ \lim_{  \stackrel{\scriptstyle r \rightarrow 0}{R \rightarrow \infty} } \ \lim_{\epsilon \rightarrow 0} \, 
\oint_C  {dw \over w} \, \log(w) \, \mathrm{F}(w) \\ \\
&\D = 
\lim_{  \stackrel{\scriptstyle \, r \rightarrow 0}{R \rightarrow \infty}    } 
\left\{
\begin{array}{cl}
\D i \int_{\pi}^{-\pi} \log \big( re^{\mathrm{i} \phi} \big)
\, \mathrm{F} \big(re^{\mathrm{i}\phi} \big) \, d \phi &\begin{array}{l} \text{integral (1):} \\
\text{contribution of} \\ \text{circle radius $r$} \end{array}  \\ 
+ & \\ 
\D 2 \pi i  \int_{-R}^{-r}  \, \mathrm{F}(t) \, {dt \over t} & \begin{array}{l} \text{integral (2):} \\
 \text{contribution} \\ \text{along the cut} \end{array} \\ 
+ & \\ 
\D i \int_{-\pi}^\pi \log \big( Re^{\mathrm{i} \phi} \big)  
\, \mathrm{F} \big(Re^{\mathrm{i}\phi} \big) \, d \phi &\begin{array}{l} \text{integral (3):} \\
\text{contribution of} \\ \text{circle radius $R$} \end{array} \\ 
\end{array} 
\right. \\ \\
&\D =
\lim_{  \stackrel{\scriptstyle \, r \rightarrow 0}{R \rightarrow \infty}    } 
\left\{
\begin{array}{c}
\D -2 \pi i  \log(r) - 2 \pi i  \sum_{N \geq 1} \,  {1 \over N} \big( {\scriptstyle-}r\big)^N  a_N \\ 
 +  \\ 
\D -2 \pi i  \int_{r}^{R}  \, \mathrm{F}(-t)  \, {dt \over t} \\ 
+  \\ 
\D  2 \pi i \log(R) + 2 \pi i  \sum_{N \geq 1} \,  {1 \over N} \big( {\scriptstyle-}R\big)^{-N}  b_N
\end{array} 
\right. \\ \\
&\D = - 2 \pi i  \int_0^{\infty} \, {dt \over t} \, \Big( \mathrm{F}(-t) - 1 \Big)
\end{array}
\]
where $1 + \sum_{N \geq 1} a_N w^N$ and $1 + \sum_{N \geq 1} b_N w^N  $ are the power series
expansions of $\mathrm{F}(w)$ at $0$ and $\infty$ respectively.
\end{proof}

\begin{Cor}
\label{integral-over-(0,1)}
For vertices $\uu$ and $\uv$ in $\uG_\mathrm{cr}$ the value of the Green's function is
\begin{equation}
\label{integral-over-(0,1)}
\D \big[\Delta_\mathrm{cr}^{-1}\big]_{\uu,\uv} \ = \
{1 \over {2\pi}} \, \frak{Re} \int_0^1 \, \Big( \mathrm{E}_{\underline{\theta}(\mathbbm{v})}(-t) \, - \,  1\Big) \, {{dt} \over t}
\end{equation}
\end{Cor}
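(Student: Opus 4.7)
The plan is to transform Kenyon's contour integral \ref{greens-function} into an integral on the negative real axis via Lemma \ref{integrating-on-the-cut}, and then fold the result from $(0,\infty)$ down to $(0,1)$ using the involution $t \mapsto 1/t$.

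First, I would check that Lemma \ref{integrating-on-the-cut} applies with $F = \mathrm{E}_{\underline{\theta}(\mathbbm{v})}$. By Remark \ref{rotation-invariance} and Proposition \ref{semi-circle-property}, I may assume after a rotation that all the intersection angles lie in $(-\pi/2,\pi/2)$, so the pole set $\Phi(\mathbbm{v})$ sits in the right half plane and a keyhole contour of the kind used in the lemma strictly encloses it. Then $\mathrm{E}_{\underline{\theta}}(\infty)=1$ is immediate from the definition, while $\mathrm{E}_{\underline{\theta}}(0)=(-1)^{k}$. The key observation here is that $\uG_{\mathrm{cr}}^{\lozenge}$ is bipartite with the black vertices being exactly $\mathrm{V}(\uG_{\mathrm{cr}})$, so any path $\mathbbm{v}$ from $\uu$ to $\uv$ (both black) in $\uG_{\mathrm{cr}}^{\lozenge}$ has \emph{even} length $k$, forcing $\mathrm{E}_{\underline{\theta}}(0)=1$. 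Plugging Lemma \ref{integrating-on-the-cut} into \ref{greens-function} then yields
\[
\big[\Delta_{\mathrm{cr}}^{-1}\big]_{\uu,\uv} \ = \ \frac{1}{4\pi}\int_{0}^{\infty}\big(\mathrm{E}_{\underline{\theta}(\mathbbm{v})}(-t)-1\big)\,\frac{dt}{t},
\]
where integrability at $t=0$ uses $\mathrm{E}(-t)-1 = O(t)$ (a consequence of $k$ being even) and integrability at $t=\infty$ follows from $\mathrm{E}(-t)-1 = O(1/t)$.

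Next I would fold the integral by splitting $\int_{0}^{\infty} = \int_{0}^{1}+\int_{1}^{\infty}$ and substituting $t\mapsto 1/t$ in the second piece. A direct manipulation of the product definition of $\mathrm{E}_{\underline{\theta}}$ gives the symmetry
\[
\mathrm{E}_{\underline{\theta}}(-1/t) \ = \ (-1)^{k}\,\overline{\mathrm{E}_{\underline{\theta}}(-t)} \qquad \text{for } t>0,
\]
and once again the bipartite parity $k$ even promotes this to the cleaner identity $\mathrm{E}_{\underline{\theta}}(-1/t) = \overline{\mathrm{E}_{\underline{\theta}}(-t)}$. Combining the two half-integrals gives
\[
\int_{0}^{\infty}\big(\mathrm{E}_{\underline{\theta}}(-t)-1\big)\,\frac{dt}{t} \ = \ \int_{0}^{1}\Big(\mathrm{E}_{\underline{\theta}}(-t)+\overline{\mathrm{E}_{\underline{\theta}}(-t)}-2\Big)\,\frac{dt}{t} \ = \ 2\,\frak{Re}\int_{0}^{1}\big(\mathrm{E}_{\underline{\theta}}(-t)-1\big)\,\frac{dt}{t},
\]
and dividing by $4\pi$ delivers the corollary.

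There is really no deep obstacle here; the proof is a bookkeeping exercise. The only point that could trip one up is the parity-of-$k$ argument, which does double duty: it makes the keyhole calculation in Lemma \ref{integrating-on-the-cut} compatible with $\mathrm{E}_{\underline{\theta}}$ (via $\mathrm{E}(0)=1$, necessary for cancellation of the $\log r$ divergence from the small circle), and it promotes the inversion symmetry to the complex conjugation needed to fold the integral and produce the real part. The placement of the poles off the negative real axis after the initial rotation, needed so the keyhole contour in Lemma \ref{integrating-on-the-cut} is admissible, is guaranteed by Proposition \ref{semi-circle-property}.
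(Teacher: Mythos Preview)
Your proof is correct and follows essentially the same route as the paper's: apply Lemma~\ref{integrating-on-the-cut} to pass from the contour integral to an integral over $(0,\infty)$, then fold onto $(0,1)$ via the inversion symmetry $\mathrm{E}_{\underline{\theta}}(w^{-1}) = (-1)^k\,\overline{\mathrm{E}}_{\underline{\theta}}(w)$, with the bipartite parity $k$ even doing the work in both steps. Your explicit verification that $\mathrm{E}_{\underline{\theta}}(0)=\mathrm{E}_{\underline{\theta}}(\infty)=1$ and the integrability remarks are welcome additions that the paper leaves implicit.
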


\begin{proof}
We begin with the observation that $\mathrm{E}_{\underline{\theta}}\big(w^{-1} \big) 
= \big( -1 \big)^k \, \overline{\mathrm{E}}_{\underline{\theta}}(w)$ for any finite sequence
of angles $\underline{\theta} = \big(\theta_1, \dots, \theta_k \big)$. 
Since $\uu$
and $\uv$ are vertices in $\uG_\mathrm{cr}$, the length
$k$ of any path $\mathbbm{v}= \big(\uv_0, \dots, \uv_k \big)$ 
from $\uv_0= \uu$ to $\uv_k= \uv$ in $\uG_\mathrm{cr}^\lozenge$
must be even. Thus $\mathrm{E}_{\underline{\theta}}\big(w^{-1} \big) 
= \overline{\mathrm{E}}_{\underline{\theta}(\mathbbm{v}) }(w)$.  Then
\[
\begin{array}{ll} 
\D \big[ \Delta^{-1}_\mathrm{cr} \big]_{\uu,\uv}  &\D = 
- {1 \over {8 \pi^2 i}} \, \oint_C  {dw \over w} \, \log(w) \, \mathrm{E}_{\underline{\theta}(\mathbbm{v})}(w) \\
&\D = {1 \over {4 \pi}} \int_0^\infty \Big( \mathrm{E}_{\underline{\theta}(\mathbbm{v})}(-t) -1 \Big) \, {dt \over t} \\
&\D =  {1 \over {4 \pi}}  \int_0^1 \, \Big( \mathrm{E}_{\underline{\theta}(\mathbbm{v})}(-t) -1 \Big) \, {dt \over t} \ 
+ \  {1 \over {4 \pi}} \int_1^\infty \, \Big( \mathrm{E}_{\underline{\theta}(\mathbbm{v})}(-t)  -1 \Big)
\, {dt \over t} \\
&\D =  {1 \over {4 \pi}}  \int_0^1 \, \Big( \mathrm{E}_{\underline{\theta}(\mathbbm{v})}(-t) - 1 \Big) \, {dt \over t} \ + 
\  {1 \over {4 \pi}} \int_0^1 \, \Big( \overline{\mathrm{E}}_{\underline{\theta}(\mathbbm{v})}(-t) 
-1 \Big) \, {dt \over t} \\
&\D =  {1 \over {2 \pi}} \,  \frak{Re} \Bigg[ \int_0^1 \, \Big( \mathrm{E}_{\underline{\theta}(\mathbbm{v})}(-t)  - 1 \Big)  
\,  {{dt} \over t} \, \Bigg] \\
\end{array}
\]
\end{proof} 

\begin{Rem}
Since $|t| < 1$ in formula \ref{integral-over-(0,1)} we may use the presentation of $\mathrm{E}_{\underline{\theta}(\mathbbm{v})}(t)$
given in Remark \ref{using-p-estimate} and write
\[
\D \big[ \Delta^{-1}_\mathrm{cr} \big]_{\uu,\uv}  \ = \
{1 \over {2\pi}} \, \frak{Re} \int_0^1 \, \Big(    
\exp\big( -2p_1t \big) \cdot \exp \big( 2p_1 \frak{u}(-t) \big) 
 \, - \,  1\Big) \, {{dt} \over t} 
\]
where $\frak{u}(t)=\sum\limits_{n>0} u_{2n+1} (t)^{2n+1}$ is the function defined from the momenta $p_{2n+1}$ in eq. \ref{def-u-n}.
 
 \noindent
We {may} adopt the view that $p_1$ and $\overline{p}_1$ are independent variables on the plane
and that $\big[ \Delta^{-1}_\mathrm{cr} \big]_{\uu,\uv} $ is a smooth function of
 $p_1$ and $\overline{p}_1$.
\end{Rem}

\subsection{The general asymptotics}

\begin{Prop}
\label{pGasymptotics}
The Green's function $\big[ \Delta^{-1}_\mathrm{cr} \big]_{\uu,\uv}$ has a series expansion
at $\infty$ given by:
\begin{equation}
-{1 \over {2\pi}} \Bigg( \log \big(2 | p_1|\big) \, + \,  \gamma_\mathrm{euler} 
\, - \sum_{m \geq d \geq 1}  \, \big( {\scriptstyle -} 1 \big)^d  \big(2m+d-1 \big)! \, \frak{Re} \Big[ 
\mathrm{c}_{m,d} \, \big(2p_1\big)^{-2m}  \Big] \Bigg)
\end{equation}
\noindent
where the coefficients $\mathrm{c}_{m,d}$ are defined in equation \ref{def-c-md} in terms of the $u_{1+2s}$ defined by \ref{def-u-n}, which are themselves bounded in terms of $p_1$ by Lemma \ref{lemma-p-estimate}.
\end{Prop}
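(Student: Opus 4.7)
The plan is to combine the integral formula in Corollary~\ref{integral-over-(0,1)} with the factorised expansion of the discrete exponential from Remark~\ref{using-p-estimate}, and then to read off each term of the series by identifying it as a classical special-function integral.

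Since every path $\mathbbm{v}$ joining two vertices of $\uG_{\mathrm{cr}}$ has even length, $(-1)^k = 1$ in Lemma~\ref{infinite-product}, and Remark~\ref{using-p-estimate} gives the clean splitting
\begin{equation*}
\mathrm{E}_{\underline{\theta}(\mathbbm{v})}(-t) - 1 \ = \ \bigl(e^{-2\overline{p}_1 t} - 1\bigr) \ + \ e^{-2\overline{p}_1 t} \sum_{m \geq d \geq 1} (-1)^d\,t^{2m+d}\,(2\overline{p}_1)^d\,\overline{\mathrm{c}}_{m,d}.
\end{equation*}
Inserting this into the integral formula separates $\bigl[\Delta_\mathrm{cr}^{-1}\bigr]_{\uu,\uv}$ into two pieces. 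The first, $\tfrac{1}{2\pi}\,\mathfrak{Re}\int_0^1(e^{-2\overline{p}_1 t} - 1)\,dt/t$, equals (after the substitution $u = 2\overline{p}_1 t$) $-\tfrac{1}{2\pi}\,\mathfrak{Re}\,\mathrm{Ein}(2\overline{p}_1)$, and the classical large-argument expansion $\mathrm{Ein}(s) = \log s + \gamma_{\mathrm{euler}} + \mathcal{O}(e^{-\mathfrak{Re}\,s}/|s|)$ produces precisely the $-\tfrac{1}{2\pi}\bigl(\log(2|p_1|) + \gamma_{\mathrm{euler}}\bigr)$ part of the answer.

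Each term of the second piece reduces, via $\int_0^1 t^{N-1}e^{-st}\,dt = s^{-N}\gamma(N,s)$ together with $\gamma(N,s) = (N-1)! + \mathcal{O}(s^{N-1}e^{-\mathfrak{Re}\,s})$ for large $|s|$, to $(-1)^d(2m+d-1)!\,\overline{\mathrm{c}}_{m,d}(2\overline{p}_1)^{-2m}$. Taking $\tfrac{1}{2\pi}\,\mathfrak{Re}$ turns $\overline{\mathrm{c}}_{m,d}(2\overline{p}_1)^{-2m}$ into $\mathrm{c}_{m,d}(2p_1)^{-2m}$ inside the real part, while the real scalars $(-1)^d(2m+d-1)!$ pass outside; collating with the first piece and the overall minus sign yields the formula in the proposition.

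The main obstacle is the rigorous interchange of the sum over $(m,d)$ with the integral, together with the control of the truncation error needed to show the resulting series is genuinely asymptotic in $|p_1|$. For the interchange, the uniform bound $|u_n| \leq 1$ from Lemma~\ref{lemma-p-estimate} gives $|\mathfrak{u}(w)| \leq |w|^3/(1-|w|^2)$ on the open unit disk, so $\exp(2\overline{p}_1\,\mathfrak{u}(-t))$ is absolutely convergent on compact subintervals of $[0,1)$, and the $t \to 1$ endpoint contributes only an $\mathcal{O}(e^{-2\mathfrak{Re}\,\overline{p}_1})$ correction which is absorbed into the remainder. For the truncation, cutting the Taylor series of $\exp(2\overline{p}_1\,\mathfrak{u}(-t))$ at order $w^{2M}$ and repeating the Laplace-type argument produces a remainder of order $\mathcal{O}(|p_1|^{-2M-1})$, with the combinatorial growth of the $\mathrm{c}_{m,d}$ already controlled by Lemma~\ref{lemma-p-estimate}. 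I expect the bookkeeping of this truncation, rather than any qualitative obstruction, to be the most delicate part of the argument; the $M=1$ case is precisely Proposition~\ref{Prop-SharpAsymptotics1}.
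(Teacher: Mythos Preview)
Your approach is essentially the same as the paper's: both start from Corollary~\ref{integral-over-(0,1)}, insert the factorised expansion \eqref{combinatorial-yoga} from Remark~\ref{using-p-estimate}, and evaluate the two resulting pieces separately. Where you invoke the named special functions $\mathrm{Ein}$ and the lower incomplete gamma $\gamma(N,s)$, the paper instead writes out the same integrals explicitly (the first via $\int_{2p_1}^\infty e^{-t}\,dt/t$, the second via repeated integration by parts producing $\sum_{i=0}^{2m+d-1}\frac{(2m+d-1)!}{i!}(2p_1 t)^i e^{-2p_1 t}\big|_0^1$), and then simply annotates the exponentially small boundary contributions as having ``null power series development at $\infty$''. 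The algebraic content is identical.

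One remark: the paper treats the proposition as a purely formal asymptotic-series identity and does not carry out the interchange or truncation arguments you outline in your last paragraph. Your bound $|\mathfrak{u}(-t)|\leq t^3/(1-t^2)$ is too crude near $t=1$ to justify the claimed $\mathcal{O}(e^{-2\mathfrak{Re}\,\overline p_1})$ endpoint control directly; if you want to make that step rigorous you should instead use that $\mathrm{E}_{\underline\theta}(-t)$ is a \emph{finite} rational product with no poles on $[-1,0]$ (by Proposition~\ref{semi-circle-property} the $e^{i\theta_j}$ avoid $-1$), so the full integrand is bounded on $[0,1]$ independently of any series expansion, and the tail $\int_{1-\delta}^1$ can be estimated directly before expanding. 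The paper sidesteps this by working formally throughout.
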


\begin{proof} 
\begin{align} 
\D \big[ \Delta^{-1}_\mathrm{cr} \big]_{\uu,\uv}  
&\D = 
{1 \over {2 \pi}} \,  \frak{Re} \Bigg[  \int_0^1 \Big( \mathrm{E}_{\underline{\theta}(\mathbbm{v})}(-t) - 1 \Big) \, {dt \over t} \, \Bigg] \nonumber\\
&\D = 
\left\{
\begin{array}{c}
\D {1 \over {2 \pi}} \,  \frak{Re} \Bigg[  \int_0^1 \, \Big( \exp \big( {\scriptstyle-}2 p_1t \big) -1 \Big) \, {dt \over t} \, \Bigg] \\
\ \\
+ \\
\ \\
\D {1 \over {2 \pi}}  \sum_{m \geq d \geq 1}      
\, \frak{Re} \Bigg[  
\, \mathrm{c}_{m,d} \big(2p_1\big)^d 
\int_0^1
\, {\scriptstyle -} \big({\scriptstyle -} t\big)^{2m+d-1} \exp \big( {\scriptstyle -}2 p_1t \big)  \, dt  \, \Bigg]
\end{array} 
\right. \nonumber\\
&\D =
\left\{
\begin{array}{c}
\D -{1 \over {2\pi}} \, \frak{Re} \Bigg[ 
\log\big( 2p_1 \big) + \gamma_\mathrm{euler}  +
\underbrace{\int_{2p_1}^\infty \, \exp(-t) \,  {dt \over t}}_{ \stackrel{\scriptstyle \text{null power series}}{\scriptstyle \text{development at $\infty$}}} 
\, \Bigg]
\ \\
+ \\
\ \\
\D -{1 \over {2 \pi}}  \sum_{m \geq d \geq 1} \, \frak{Re} \Bigg[ \,
\mathrm{c}_{m,d}  \big({\scriptstyle -} 1 \big)^d \, \big(2 p_1\big)^{-2m}
\sum_{i=0}^{2m+d-1} 
{{\big( 2m + d {\scriptstyle -} 1 \big)!} \over {i!}} \, 
\underbrace{\big( 2p_1 t \big)^i \, \exp\big( {\scriptstyle -} 2p_1t \big)}_{ \stackrel{\scriptstyle \text{null power series}}{\scriptstyle \text{development at $\infty$}}} 
\, \Bigg] \, \Bigg|_0^1
\end{array} 
\right. \nonumber\\
&\D = -{1 \over {2\pi}} \Bigg( \log \big(2 | p_1|\big) + \gamma_\mathrm{euler} 
- \sum_{m \geq d \geq 1}  \, \big( {\scriptstyle -} 1 \big)^d  \big(2m+d-1 \big)! \, \frak{Re} \Big[ 
\mathrm{c}_{m,d} \, \big(2p_1\big)^{-2m}  \Big] \Bigg)
\nonumber\\
\end{align}
\end{proof}


\section{Deforming Delaunay lattices and operators}
\label{sVarOp}

\subsection{Setup and problems for deformations of isoradial Delaunay graphs}
\label{sswwococycl}
\renewcommand{\imath}{\mathrm{i}}
We start to address the main problem of this work, which is to study geometric deformations of isoradial Delaunay graphs and their associated operators defined in Sect.~\ref{sss3operators}. 

Begin with an initial (not necessarily isoradial) \textbf{Delaunay graph} $\uG_0$ as defined in Def.~\ref{DelaunayGr} with vertex set $\mathrm{V}(\uG_0)$, edge set $\mathrm{E}(\uG_0)$, and face set $\mathrm{F}(\uG_0)$.
We deform the initial vertex embedding $\uv \mapsto z_0(\uv)$ for $\uv \in \mathrm{V}(\uG_0)$ by
\begin{equation}
\label{zDeform}
 z_\epsilon(\uv) \, := \ z_0(\uv) \ + \ \epsilon \, F(\uv)
\end{equation}
where $\epsilon$ a positive real parameter and the displacements 
$F(\uv)$ are implemented by a
complex-valued function
$$F:\ \mathrm{V}(\uG_0)\to\mathbb{C}$$
with {\bf finite support}, i.e. a 
finite subset $\boldsymbol{\Omega}_F\subset \mathrm{V}(\uG_0)$ such that  $\uv\in\boldsymbol{\Omega}_F\iff F(\uv)\neq 0$.

If the deformation parameter $\epsilon$ is unconstrained displaced vertices may potentially collide, i.e. the mapping $\uv \mapsto z_\epsilon(\uv)$ may fail to be one-to-one. The following simple lemma allows us 
to avoid this situation.
\begin{lemma}
\label{MFbound}
For any pair of distinct vertices
$\uu,\uv \in \uG_0$ the corresponding perturbed coordinates
$z_\epsilon(\uu)$ and $z_\epsilon(\uv)$ will always remain
distinct provided 
\begin{equation}
\label{ }
0 \leq \epsilon < \epsilon'_F=M_F^{-1} 
\end{equation}
where
\begin{equation}
\label{MFDef}
M_F = \ \max_{\uu \ne \uv } \, 
\Big| dF\big( \overline{\uu \uv} \big) \Big| 
\quad \text{with} \quad
\D d F \big( \uu, \uv \big) \, = \ 
 { {F(\uu) - F(\uv) }  \over {z_0(\uu) -z_0(\uv)}} 
\end{equation}
\end{lemma}
\begin{proof}
The mapping $\uv \mapsto F(\uv)$ has finite support, so the set of pairs $\uu, \uv \in \uG_0$ 
such that $dF(\overline{\uu\uv})\neq 0$ is finite, and $M_F$ is well defined and finite.
The coordinates 
$z_\epsilon(\uu)$ and $z_\epsilon(\uv)$ are distinct so
$$|z_\epsilon(\uu)-z_\epsilon(\uv)| > 0$$
provided
$1 + \epsilon \, dF(\uu,\uv)$ is non-zero, which is clearly the case 
whenever $\epsilon \leq M_F^{-1}$. 
\end{proof}

\begin{Def}
\label{DefDelDeform}
Let $\uG_0$ be a Delaunay graph with embedding $\uv \mapsto z_0(\uv)$ and let $F:\mathrm{V}(\uG_0) \rightarrow \Bbb{C}$
be a displacement function as above. Let $\epsilon \geq 0$ be a value for which the 
mapping $\uv \mapsto z_\epsilon(\uv)$ given by \ref{zDeform} is one-to-one.
The corresponding \textbf{Delaunay deformation} $\uG_\epsilon$ of $\uG_0$ is the unique Delaunay graph with vertex set 
$\mathrm{V}(\uG_\epsilon) =\mathrm{V}(\uG_0)$ for which the map  $\uv \to z_\epsilon(\uv)$ is a
planar graph embedding.
\end{Def}

Generically, the edge set of $\uG_\epsilon$ differs from 
the edge set of the initial graph $\uG_0$. This is caused by the Delaunay constraints, and this difference 
can occur spontaneously for $\epsilon>0$.
We offer two (not unrelated) examples.
Consider first a cyclic face $\uf$ of $\uG_0$ with $n>3$ vertices. As soon as $\epsilon>0$ these vertices may cease to be concyclic. In this case the Delaunay condition imposed on $\uG_\epsilon$ will force the appearance of new edges which will subdivide 
the initial face $\uf$ into new cyclic sub-faces. An example is depicted on Fig.~\ref{fmultiplesplit}.
In the limit $\epsilon\to 0_+$ these new edges would become \emph{chords} of the original face $\uf$ 
if they were adjoined to edge set of $\uG_0$ (see  Def.~\ref{ChordEdge} for the concept of chords and edges of a weak Delaunay graph).
\begin{figure}[h!]
\begin{center}
\includegraphics[height=1.6in]{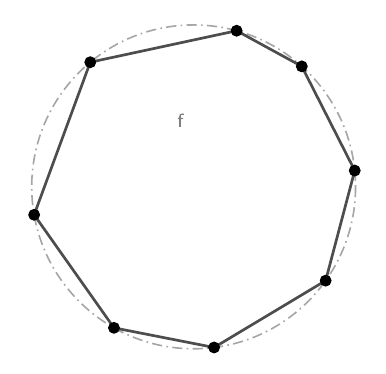}
\qquad \raisebox{.75 in}{$\longrightarrow$}
\qquad
\includegraphics[height=1.6in]{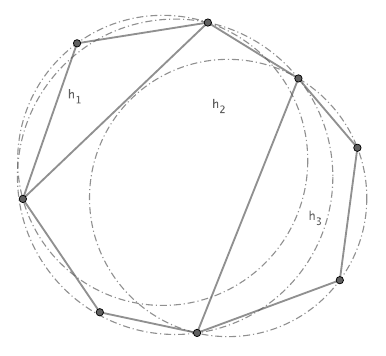}
\caption{Example of deformation of a general cyclic face of the Delaunay graph $\uG_0$ into several cyclic faces; here a cyclic octogon $f$ ($n=8$) splits into 3 cyclic polygons $h_1$, $h_2$ an $h_3$, a triangle ($n_1=3$), a pentagon ($n_2=5$) and a quadrilateral ($n_3=4$).}
\label{fmultiplesplit}
\end{center}
\end{figure}

This phenomenon can also occur around intermediate thresholds $\epsilon_0 > 0$ of the deformation parameter. 
Two (or more) faces of $\uG_\epsilon$ which are distinct
for $\epsilon < \epsilon_0$ may become concyclic and merge into a single face (the
boundary edges having vanished) when $\epsilon = \epsilon_0$. 
For $\epsilon>\epsilon_0$ this larger face may cease to be cyclic and instantaneously split into sub-faces caused by the appearance of new edges, 
possibly different from those which existed for $\epsilon < \epsilon_0$.
The prototypical example is depicted on Fig.~\ref{fLawsonFlip}. Two triangular faces for $\epsilon<\epsilon_0$ merge into a cyclic quadrilateral at 
$\epsilon=\epsilon_0$ and split again along the opposite diagonal of the quadrilateral for $\epsilon>\epsilon_0$. This is, of course, an example of a \textbf{Lawson flip} (well known from flip algorithm used to construct Delaunay triangulations) or, more generally, of a \textbf{Pachner move} on a two-dimensional simplicial complex.
\begin{figure}[h]
\begin{center}
\includegraphics[height=1.5in]{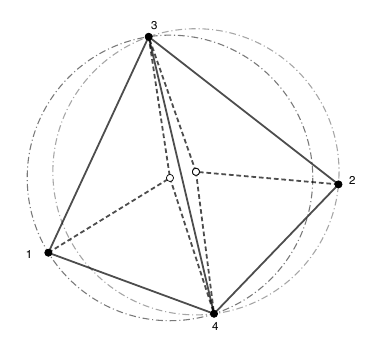}
\includegraphics[height=1.5in]{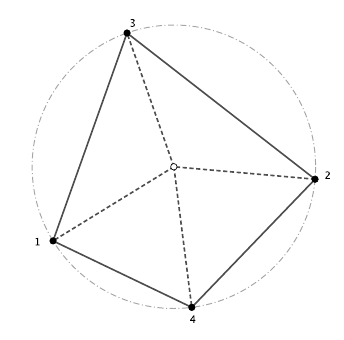}
\includegraphics[height=1.5in]{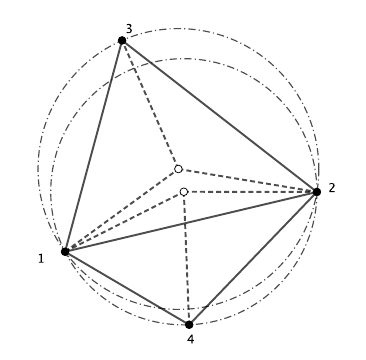}
\caption{Example of flip when two triangular faces become concyclic}
\label{fLawsonFlip}
\end{center}
\end{figure}

In the next section, we shall discuss how to control this phenomenon of face splittings and edge flips.
Let us first introduce two other concepts of graph deformations, which shall be used later.
\begin{Def}
\label{DefRigidDef}
Let $\uT_0$ be an initial, planar triangulation with embedding $\uv \mapsto z_0(\uv)$ 
and let $F\!:\mathrm{V}(\uT_0) \to\mathbb{C}$ be a displacement function as above.
For $\epsilon \geq 0$ the triangulation $\uT_{:\epsilon}$ with vertex set $\mathrm{V}(\uT_{:\epsilon})=\mathrm{V}(\uT_0)$
and edge set $\mathrm{E}(\uT_{:\epsilon})=\mathrm{E}(\uT_0)$ is called a \textbf{rigid deformation}
of $\uT_0$ if the mapping $\uv \mapsto z_\epsilon(\uv)$ given by \ref{zDeform} defines a planar embedding of $\uT_{:\epsilon}$. 
In particular this implies that the face set $\mathrm{F}(\uT_{:\epsilon})$ induced by the 
embedding coincides with the initial face set $\mathrm{F}(\uT_0)$.
Stated simply, no flips are allowed during a rigid deformation.
\end{Def}

For a Delaunay deformation
the mapping $\uv \mapsto z_\epsilon(\uv)$ will be an embedding
provided it is one-to-one since, by construction,
the edges determined by the Delaunay constraints
will never cross in the plane.
Injectivity can be achieved, for example, by
bounding the deformation parameter $0 \leq \epsilon < \epsilon_F' $ 
as prescribed in Lemma  \ref{MFbound}. For rigid deformations
the mapping $\uv \mapsto z_\epsilon(\uv)$ must be a planar embedding
with respect to a predetermined edge set $\mathrm{E}(\uT_0)$. 
One way to ensure this is to
regulate the deformation parameter $\epsilon \geq 0$
so that the area of a triangle $(z_\epsilon(\uu),z_\epsilon(\uv),z_\epsilon(\uw))$
given by formula (\ref{Aform}) remains positive
whenever $\uf=(\uu,\uv,\uw)$ is a triangle of $\uT_0$. This is addressed
in the following lemma:

\begin{lemma}
\label{LemRigidDef}
For an embedded planar triangulation $\uT_0$ and a displacement function $F$ as above, let
\begin{equation}
\label{M'FDef}
M'_F= \max_{\uf\in \mathrm{F}(\uT_0)}\ \max \big\{ |\nabla F(\uf)|,|\overline\nabla F(\uf)| \big\}
\end{equation}
where $\nabla$ and $\overline\nabla$ are the discrete derivative operators which map $\mathbb{C}^{\mathrm{V}(\uT_0)}\to\mathbb{C}^{\mathrm{F}(\uT_0)}$ as defined by \ref{nablaDef} and \ref{barnablaDef} in Sect.~\ref{FactoLaplNabl}.
Then the rigid deformation $\uT_{:\epsilon}$ is an embedded planar triangulation if
\begin{equation}
\label{RigidBound}
0\le\epsilon < \epsilon''_F={1\over 2\,M'_F}
\end{equation}
\end{lemma}
\begin{proof}
Let us consider a face $\uf_0$ of $\uT_0$  with embedding $(z_0(\uu),z_0(\uv),z_0(\uw))$ and the deformed face $\uf_\epsilon$ with embedding $(z_\epsilon(\uu),z_\epsilon(\uv),z_\epsilon(\uw))$.
Using formula \ref{Aform} for the area $A$ of a triangle, and the result \ref{diff-formula-nablas} of Rem.~\ref{remDiffNabla}, it is easy to show that the area of $\uf_\epsilon$ is related to the area of $\uf_0$ by
\begin{equation}
\label{ }
\begin{split}
A(\uf_\epsilon)&=A(\uf_0)\,( 1 +\epsilon (\nabla\! F(\uf_0) + \overline\nabla\! \bar F(\uf_0)) + \epsilon^2 (\nabla\! F(\uf_0) \overline\nabla\! \bar F(\uf_0) -\overline\nabla\! F(\uf_0) \nabla\! \bar F(\uf_0)))
\\
&=A(\uf_0)\,\left( ( 1 +\epsilon \nabla\! F(\uf_0) ) (1 +\epsilon\overline\nabla\! \bar F(\uf_0)) - \epsilon^2 \overline\nabla\! F(\uf_0) \nabla\! \bar F(\uf_0) \right)
\end{split}
\end{equation}
$A(\uf_0)$ is positive and for $0\le\epsilon<1/M'_F$ we have the inequality
\begin{equation}
\label{ }
A(\uf_\epsilon)\ge A(\uf_0)\left( ( 1- \epsilon\, M'_F)^2 - \epsilon^2 \, {M'}_F^2  \right)= A(\uf_0) (1-2\, \epsilon\,M'_F)
\end{equation}
Therefore $2\,\epsilon\, M'_F<1$ implies that $A(\uf_\epsilon) > 0$, so that the face $\uf_\epsilon$ is c.w. oriented as $\uf_0$. This bound is valid for all the faces of $\uT_{:\epsilon}$. This ends the proof.
\end{proof}
Note that the concept of rigid deformation can be extended from triangulations to more general embedded planar graphs, but we shall not need it. We shall, however, use the following concept.

\begin{Def}
\label{StableDef}
Let $\uG_0$ be a Delaunay graph and let $F$ a displacement function as above. 
A Delaunay deformation $\uG_\epsilon$ of $\uG_0$ as defined in Def.~\ref{DefDelDeform}
is said to be \textbf{stable} 
iff $\uG_\varepsilon$ is a Delaunay deformation of $\uG_0$ for all $0 \leq \varepsilon \leq \epsilon$ and
\begin{equation}
\label{StableDefCond}
\mathrm{E}(\uG_0) \subset\ \mathrm{E}(\uG_{\varepsilon}) \quad \text{and} \quad \mathrm{E}(\uG_{\varepsilon}) = \mathrm{E}(\uG_\epsilon)\quad 
\text{for all}\ 0<\varepsilon\le\epsilon
\end{equation}
This means that edges of the initial graph $\uG_0$ remain edges in $\uG_\varepsilon$ and
that the deformation creates a common set of new edges 
in $\uG_\varepsilon$ which persist (i.e. do not flip) within the range $0 < \varepsilon \leq \epsilon$. \end{Def}

\subsection{Keeping control of stable deformations}

\label{ssEpsilonBounds}
We now give results which allows controlling stable deformations of Delaunay graphs.

\begin{lemma}
\label{epsilonFboundT}
Let $\uT_0$ be an initial planar triangulation with embedding $\uv \mapsto z_0(\uv)$ 
and let $F\!:\mathrm{V}(\uT_0) \to\mathbb{C}$ be a displacement function
with finite support $\boldsymbol{\Omega}_F\subset \mathrm{V}(\uG_0)$, as in the previous section.
Let $\vec\ue=(\uu,\uv)$ be a given oriented edge of $\uT_0$, and let $\theta_0(\ue)$ be its conformal angle defined by  \ref{ConfAngleDef}. We do
not assume $\uT_0$ to be Delaunay, so the conformal angle $\theta_0(\ue)$ can be positive, zero, or negative.
Let  $\uf_\mathrm{n}=(\uu,\uv,\un)$ and  $\uf_\mathrm{s}=(\uv,\un,\us)$ be the adjacent north and south triangular faces of $\vec\ue$, and let
\begin{equation}
\label{MFEDef}
M_F(\ue)=\max\left\{  |dF(\uv,\un)|,|dF(\uv,\us)|,|dF(\uu,\un)|,|dF(\uu,\us)|\right\}
\end{equation}
with $dF(\uu_1,\uu_2)=(F(\uu_1)-F(\uu_2))/(z_0(\uu_1)-z_0(\uu_2))$ as in Lemma~\ref{MFbound}.
Take $\epsilon$ such that  $$0<\epsilon<\epsilon''_F$$ with $\epsilon''_F$ defined as in Lemma~\ref{LemRigidDef} and let us consider the rigid deformation $\uT_{:\epsilon}$ of $\uT_0$ (which is an embedded planar triangulation by Lemma~\ref{LemRigidDef}).
Let
$\theta_{\epsilon}(\ue)$ be the deformed conformal angle of the edge $\ue$ in $\uT_{:\epsilon}$.
Then we have the bound
\begin{equation}
\label{theLemma10}
0<\epsilon\,M_F(\ue) <\mathtt{b}\ \implies\ 
\left| \theta_{\epsilon}(\ue) - \theta_{0}(\ue) \right| \ <\ {1\over 2}\,\arcsin\left(\epsilon\,M_F(\ue)/\mathtt{b}\right)
\end{equation}
with the constant
\begin{equation}
\label{ }
\mathtt{b}= \sqrt{10}-3 =0.162278\cdots
\end{equation}
\end{lemma}

\begin{proof}
Consider the triangulation $\uT_0$ and an oriented edge $\vec{\ue}_0=(\uu,\uv)$ with adjacent north and south faces $\uf_\mathrm{n}=(\uu,\uv,\un)$ 
and $\uf_\mathrm{s}=(\uv,\un,\us)$.
Let $\uT_{:\epsilon}$ be the rigid deformation and let
$$z_\epsilon(\uu)=z_0(\uu) + \epsilon F(\uu)\ $$
be the corresponding embedding. By \ref{ConfAngleDef} the conformal angle of the edge $\ue$ in the initial triangulation 
can be expressed as 
\begin{equation}
\label{ }
\theta_0(\ue)={1\over 2}\,\arg\left(-{(z_0(\uu)-z_0(\un) (z_0(\uv)-z_0(\us))\over (z_0(\uu)-z_0(\us) (z_0(\uv)-z_0(\un))}\right)
\end{equation}
and the deformed conformal angle in $\uT_{:\epsilon}$ is given by
\begin{equation}
\label{thetaepsfrom0}
\begin{split}
\theta_\epsilon(\ue)&={1\over 2}\,\arg\left(-{(z_\epsilon(\uu)-z_\epsilon(\un) (z_\epsilon(\uv)-z_\epsilon(\us))\over (z_\epsilon(\uu)-z_\epsilon(\us) (z_\epsilon(\uv)-z_\epsilon(\un))}\right)
\\
&= \theta_0(\ue)+{1\over 2}\,\arg\left[ 
{
 \left( 1+\epsilon{F(\uu)-F(\un)\over z_0(\uu)-z_0(\un)}\right) 
 {
 \left( 1+\epsilon{F(\uv)-F(\us)\over z_0(\uv)-z_0(\us)}\right)
}
\over
{
 \left( 1+\epsilon{F(\uu)-F(\us)\over z_0(\uu)-z_0(\us)}\right)
}
{
 \left( 1+\epsilon{F(\uv)-F(\un)\over z_0(\uv)-z_0(\un)}\right)
}
}
\right]
\\
&= \theta_0(\ue)+{1\over 2}\,\arg\left[1+X(\ue)\right]
\end{split}
\end{equation}
where
\begin{equation}
\label{ }
X(\ue) = {
\epsilon\,
X_1(\ue)
+
\epsilon^2\,
X_2(\ue)
\over
{
 \left( 1+\epsilon{F(\uu)-F(\us)\over z_0(\uu)-z_0(\us)}\right)
}
{
 \left( 1+\epsilon{F(\uv)-F(\un)\over z_0(\uv)-z_0(\un)}\right)
}
}
\end{equation}
with
\begin{equation}
\label{ }
X_1(\ue) =  {F(\uu)-F(\un)\over z_0(\uu)-z_0(\un)} + {F(\uv)-F(\us)\over z_0(\uv)-z_0(\us)} - {F(\uu)-F(\us)\over z_0(\uu)-z_0(\us)} -  {F(\uv)-F(\un)\over z_0(\uv)-z_0(\un)}
\end{equation}
and
\begin{equation}
\label{ }
X_2(\ue) = {F(\uu)-F(\un)\over z_0(\uu)-z_0(\un)} \,{F(\uv)-F(\us)\over z_0(\uv)-z_0(\us)} - {F(\uu)-F(\us)\over z_0(\uu)-z_0(\us)} \, {F(\uv)-F(\un)\over z_0(\uv)-z_0(\un)}
\end{equation}
Consider $M_F(\ue)$ defined by \ref{MFEDef}.
Provided that $\epsilon\,M_F(\ue)<1$ we have
\begin{equation}
\label{boundX}
|X_1(\ue)|\le 4 M_F(\ue)\ ,\ \ |X_2(\ue)| \le 2 M_F(\ue)^2\ \implies\ \ |X(\ue)|\le 
{ 4\,\epsilon\,M_F(\ue) +2\,\epsilon^2\,M_F(\ue)^2 \over (1-\epsilon M_F(\ue))^2}
\end{equation}Define the function $Y(x)$ by
\begin{equation}
\label{YofXdef}
Y(x)= {4x + 2 x^2\over (1-x)^2}
\end{equation}
It is a monotone, convex function on the interval $x\in [0,1)$ with $Y(0)=0$ 
satisfying
\begin{equation}
\label{bttdef}
0\le x \le \mathtt{b}=\sqrt{10}-3 = 0.162278\cdots\quad\implies\quad 0\le Y(x)\le  x/\mathtt{b}
\end{equation}
Now we use the fact that for any complex number $x \in \Bbb{C}$ 
\begin{equation}
\label{arg1+Xbound}
|x|\le 1\ \implies\ | \!\!\arg(1+x)|\ \le\ \arcsin(|x|)
\end{equation}
Combining these inequalities, we deduce that
\begin{equation}
\label{CrucialBoundLemma}
\epsilon M_F(\ue) \le  \mathtt{b}\quad\implies\quad \left|\arg\left(1+X(\ue))\right] \right|\le \arcsin\left(Y(\epsilon M_F(\ue))\right)
\ \le\ \arcsin\left(\epsilon M_F(\ue)/\mathtt{b}\right)
\end{equation}
Combining this with \ref{thetaepsfrom0} we get \ref{theLemma10}.
\end{proof}
We now use this {lemma} to get our first result for a Delaunay deformation of a Delaunay graph.
\begin{lemma}
\label{Lemma11}
Let $\uG_0$ be a Delaunay graph and let $F$ be a displacement function 
$F:\,\mathrm{V}(\uG_0)\to\mathbb{C}$ with finite support $\boldsymbol{\Omega}_F\subset \mathrm{V}(\uG_0)$, as above.
To each edge $\ue\in \mathrm{E}(\uG_0)$ of $\uG_0$ we associate its conformal angle $\theta(\ue)$ defined by \ref{ConfAngleDef}.
Define $\vartheta_F $ as
\begin{equation}
\label{varthetaDef}
\vartheta_F  = \ \min \Big\{ \theta(\ue)  \, \Big| \,\ue= \overline{\uu\uv} \in \mathrm{E}(\uG_0) \ \text{such that}\ \uu\ \text{or}\ \uv\in\boldsymbol{\Omega}_F\Big\}
\end{equation}
and $M_F$ as defined by \ref{MFDef} in Lemma~\ref{MFbound}. 
Let $\uG_\epsilon$ be the Delaunay deformation of $\uG_0$ as defined in Def.~\ref{DefDelDeform}.
Then the following bound ensures that the edges of $\uG_0$ remain edges of $\uG_\epsilon$, namely:
\begin{equation}
\label{equLemma11}
\epsilon < \bar\epsilon_F=\sin(2\,\vartheta_F)\,{\mathtt{b}\over M_F}\ \implies\ \mathrm{E}(\uG_0)\subset\mathrm{E}(\uG_\epsilon)
\end{equation}
\end{lemma}

\begin{proof}
The proof uses Lemma~\ref{epsilonFboundT} and the Lawson flip algorithm.

Given our initial Delaunay graph $\uG_0$, let us consider a \textbf{triangular completion} $\uT_0$ of $\uG_0$, as introduced in Def.~\ref{CompletionTofG}.
In otherwords 
\begin{equation}
\label{ }
\uT_0\ \text{is a triangulation and} \ \, \mathrm{E}(\uG_0)\subset \mathrm{E}(\uT_0)
\end{equation}
Any completion $\uT_0$ is a weak Delaunay graph (see Def.~\ref{DelaunayGr})
and the edges of $\uT_0$ which are not edges of $\uG_0$ are chords; 
consequently 
\begin{equation}
\label{ }
\ue\notin\mathrm{E}(\uG_0)
\ \iff\  \theta_0(\ue)=0\ ,\quad \ue\in\mathrm{E}(\uG_0)
\ \iff\  \theta_0(\ue)>0
\end{equation}
for any edge  $\ue \in \mathrm{E}(\uT_0)$

In general $\uG_0$ may have multiple (possibly infinitely many) triangular completions. Let {\Fontauri{T}}$\,(\uG_0)$ denote the set of triangular completions of $\uG_0$, and let us extend the bounds $M'_F$ and $\epsilon''_F$ of Lemma~\ref{LemRigidDef} (valid for triangulations) to Delaunay graphs:
\begin{equation}
\label{MFDef2}
M'_F = \max_{\uT_0\in\,\text{\Fontauri{T}}\,(\uG_0)}\,\max_{\uf\in F(\uT_0)}\ \max \big\{ |\nabla F(\uf)|,|\overline\nabla F(\uf)| \big\} 
\end{equation}
and then as in Lemma~\ref{LemRigidDef}
\begin{equation}
\label{RigidBound2}
\epsilon''_F={1\over 2\,M'_F}
\end{equation}

Now we start the proof.
We choose an arbitrary triangular completion $\uT_0$ of $\uG_0$, and consider the rigid deformation $\uT_{:\epsilon}$ of $\uT_0$ for 
$\epsilon>0$ bounded by
\begin{equation}
\label{epsboundsLemma11}
\epsilon < \sin(2\,\vartheta_F)\,{\mathtt{b}\over M_F}\quad\text{and}\quad \epsilon < \epsilon''_F
\end{equation}
For any edge $\ue$ of $\uT_0$ notice that
\begin{equation}
\label{ }
\epsilon < \sin(2\,\vartheta_F)\,{\mathtt{b}\over M_F}\ \implies\ \epsilon< {\mathtt{b}\over M_F(\ue)}
\end{equation} 
and by Lemma~\ref{epsilonFboundT} we have
\begin{equation}
\label{ }
\theta_\epsilon(\ue) > \theta_0(\ue)-{1\over 2}\ \arcsin(\epsilon\,M_F(\ue)/\mathtt{b})
\end{equation}
If the edge $\ue$ of $\uT_{:\epsilon}$ is also an edge of $\uG_0$ then $\theta_0(\ue)\ge \vartheta_F$.
Cleary $M_F(\ue)\le M_F$ and therefore
\begin{equation}
\label{theteps>0}
\theta_\epsilon(\ue)>\vartheta_F-{1\over 2}\ \arcsin(\epsilon\,M_F/\mathtt{b})>0
\end{equation}
So the initial edges of $\uG_0$ still satisfy the Delaunay condition in $\uT_{:\epsilon}$.

Now we consider whether or not the deformed triangulation $\uT_{:\epsilon}$ is weakly Delaunay, i.e. if
\begin{equation*}
\label{ }
\theta_\epsilon(\ue)\ge 0\quad\text{for all}\quad \ue\in \mathrm{E}(\uT_{:\epsilon})
\end{equation*}
If $\uT_{:\epsilon}$ is weakly Delaunay, it is sufficient to remove all its chords, namely all edges such that
$\theta_\epsilon(\ue)= 0$. We obtain the redacted graph $\uT_{:\epsilon}^\bullet$ (see Def.~\ref{defRegGraph}) which is a Delaunay graph 
with the same vertex set as $\uG_0$ and with embedding $\uv\to z_\epsilon(\uv)= z_0(\uv)+\epsilon\,F(\uv)$. Hence it is the Delaunay deformation 
$\uG_\epsilon$ of $\uG_0$, and it contains the original edges of $\uG_0$ in light of \ref{theteps>0}. In short
\begin{equation}
\label{ }
\uG_0\to\uT_0\to\uT_{:\epsilon}\to \uT_{:\epsilon}^\bullet = \uG_\epsilon
\end{equation}
If $\uT_{:\epsilon}$ is not weakly Delaunay, there must exist edges $\ue$ of $\uT_{:\epsilon}$ such that
$$\theta_\epsilon(\ue)<0\ .$$
In this case we recursively apply the Lawson flip algorithm to construct from $\uT_{:\epsilon}$ 
a weak Delaunay triangulation $\uT^{\scriptscriptstyle{\mathtt{Del}}}_{:\epsilon}$ which still completes $\uG_0$ (see \cite{Lawson77} and standard textbooks such as  \cite{deBergetalBook2008} or \cite{Gartner-Hoffmann}). Let us describe the first iterative step.
\begin{enumerate}
  \item Choose an edge $\ue$ of $\uT_{:\epsilon}$ such that $\theta_\epsilon(\ue)<0$, and consider the quadrilateral $(\uu,\us,\uv,\un)$ made of its north and south faces.
  \item  {Flip} the edge $\ue$, i.e. perform the replacement
  $$\ue= \overline{\uu\uv} \ \to\ \ue'=  \overline{\un\us}$$ so that one obtains a new triangulation $\uT'_{:\epsilon}$.
\end{enumerate}
Some conformal angles in  $\uT'_{:\epsilon}$ have changed. Specifically
$$
\theta'_\epsilon(\ue')=-\theta_\epsilon(\ue)>0
$$
and the conformal angles $\theta'_\epsilon$ of the edges $\overline{\uu\us}$,  $\overline{\uv\un}$, $\overline{\uv\us}$, and $\overline{\uv\un}$ 
as measured in $\uT'_{:\epsilon}$ may differ from their corresponding measures $\theta_\epsilon$ in $\uT_{:\epsilon}$.
The new triangulation $\uT'_{:\epsilon}$ is the rigid deformation of \emph{another triangular completion} $\uT'_0$ of $\uG_0$, namely the 
triangulation with edge set
$$\mathrm{E}(\uT'_{0}) = \mathrm{E}(\uT_{0})\backslash\{\ue\}\cup\{\ue'\}\ .$$
Therefore $\mathrm{E}(\uG_0)\subset\mathrm{E}(\uT'_{:\epsilon})$ and inequality \ref{theteps>0} is still valid for the edges of $\uG_0$, i.e.
$$\ue\in\mathrm{E}(\uG_0)\ \implies\ \theta'_\epsilon(\ue)>0\ .$$

The Lawson flip algorithm entails iterating this process: Choose an edge $\ue$ in $\uT'_{:\epsilon}$ such that $\theta'_\epsilon(\ue) < 0$ and 
perform the edge flip $\ue \to \ue''$ to
obtain a new triangulation $\uT''_{:\epsilon}$ with $\theta''_\epsilon(\ue'') > 0$. Repeat. This process is known to stop after a finite number of iterations, 
and the final triangulation $\uT^{\scriptscriptstyle{\mathtt{Del}}}_{:\epsilon}$ 
will have no edge $\ue$ with ${\theta_{:\epsilon}}^{\!\!\!\!\scriptscriptstyle{\mathtt{Del}}}(\ue)<0$, and so it will be weakly Delaunay.
Clearly $\uT^{\scriptscriptstyle{\mathtt{Del}}}_{:\epsilon}$ is the 
rigid deformation of a triangular completion $\uT^{\scriptscriptstyle{\mathtt{Del}}}_{0}$ of $\uG_0$. Now
take the redaction ${\uT^{{\scriptscriptstyle{\mathtt{Del}}}\,\bullet}_{:\epsilon}}$ by removing any chords. Schematically 
$$\uG_0\to\uT_{:\epsilon}\to \uT'_{:\epsilon}\to\cdots\to \uT^{\scriptscriptstyle{\mathtt{Del}}}_{:\epsilon}\to {\uT^{{\scriptscriptstyle{\mathtt{Del}}}\,\bullet}_{:\epsilon}}=\uG_\epsilon$$
The redacted graph ${\uT^{{\scriptscriptstyle{\mathtt{Del}}}\,\bullet}_{:\epsilon}}$ is a 
Delaunay deformation of $\uG_0$ which coincides with $\uG_\epsilon$ and, to be sure,
$$\mathrm{E}(\uG_0)\subset \mathrm{E}(\uG_\epsilon)\ $$
as long as the initial bounds \ref{epsboundsLemma11} on $\epsilon$ are satisfied.
\end{proof}

Lemma \ref{Lemma11} says nothing about the additional edges which can appear and flip within the initial faces of $\uG_0$ during the deformation. 
The following proposition establishes that these additional edges are themselves stable, i.e. undergo no flips, for 
values of the deformation parameter $\epsilon > 0$ which are sufficiently small.

\begin{Prop}
\label{epsilontildeF}
Let $\uG_0$ be a Delaunay graph and $F$ a displacement function as above. There exists a deformation threshold $\tilde{\epsilon}_F > 0$ such that
for any $0<\epsilon<\tilde{\epsilon}_F$ the deformation $\uG_{\epsilon}$ is stable (see Def.~\ref{StableDef}). 
As a consequence, the limit of the Delaunay graph $\uG_\epsilon$ when $\epsilon\to 0^+$ is unambiguously defined, 
and is denoted $\uG_{0^+}$
\begin{equation}
\label{G0+def}
\uG_{0^+}=\lim_{\epsilon\to 0^+} \uG_\epsilon
\end{equation} 
$\uG_{0^+}$ is a weak-Delaunay graph sharing the same vertex set and embedding as $\uG_0$. Its redacted 
graph (see Def.~\ref{defRegGraph}) is the initial Delaunay graph, i.e.  $\uG_{0^+}^\bullet=\uG_0$.
\end{Prop}

\begin{proof}
Let us consider $\uT_0$ be a triangular completion of $\uG_0$ (an element of {\Fontauri{T}}$\,(\uG_0)$) and an edge $\ue= \overline{\uu\uv}$ of 
$\uT_0$ which is not an edge of $\uG_0$ (i.e. a chord such that its conformal angle is $\theta_0(\ue)=0$.
Now as in the proof of Lemma~\ref{Lemma11} consider the rigid deformation $\uT_{:\epsilon}$ of $\uT_0$. The deformed conformal angle of $\ue$ is given by
\begin{align}
\label{thetaepsilon}
 \theta_{:\epsilon} (\ue)
&\D= \ {1\over 2}\mathrm{Arg}\left(-[z_\epsilon(\uu),z_\epsilon(\uv);z_\epsilon(\un),z_\epsilon(\us)]\right)\nonumber\\
&\D= \ 
{1\over 2}
\mathrm{Arg}
\left[{ {\Big( 1 \, + \, \epsilon \, dF \big( \uu,\un \big)  \Big) \cdot 
\Big( 1 \, + \, \epsilon \, dF \big( \uv,\us \big)  \Big)} \over 
{\Big( 1 \, + \, \epsilon \, dF \big( \uu, \us \big)  \Big) \cdot 
\Big( 1 \, + \, \epsilon \, dF \big( \uv, \un \big)  \Big)} } \right]
\end{align}
with $\un$ and $\us$ the north and south vertices for the north and south faces $\uf_\mathrm{n}$ and $\uf_\mathrm{s}$ of the edge $\ue$ in $\uT_{:\epsilon}$ (remember that for $\epsilon=0$ this is zero).

$\theta_{:\epsilon}$ is a regular function of $\epsilon$ (for $\epsilon$ small enough). We are interested in the values of $\epsilon$ 
for which $\theta_{:\epsilon} (\ue)$ vanishes. Clearly this occurs if ($\epsilon$ is taken real)
\begin{equation}
\label{ }
{ {\Big( 1  +  \epsilon \, dF \big( \uu,\un \big)  \Big) 
\Big( 1  +  \epsilon \, dF \big( \uv,\us \big)  \Big)} \over 
{\Big( 1  +  \epsilon \, dF \big( \uu, \us \big)  \Big) 
\Big( 1  +  \epsilon \, dF \big( \uv, \un \big)  \Big)} } 
{ {\Big( 1  +  \epsilon \, \overline{dF} \big( \uu,\us \big)  \Big) 
\Big( 1  +  \epsilon \, \overline{dF} \big( \uv,\un \big)  \Big)} \over 
{\Big( 1  +  \epsilon \, \overline{dF} \big( \uu, \un \big)  \Big) 
\Big( 1  +  \epsilon \, \overline{dF} \big( \uv, \us \big)  \Big)} } 
=1
\end{equation}
This amounts to solving a quartic real polynomial equation in $\epsilon$ of the form
\begin{equation}
\label{ }
\text{\Fontauri{P}}_4(\epsilon)=0\ ,\quad \text{\Fontauri{P}}_4\ \text{a degree 4 real polynomial}
\end{equation}
with
\begin{equation}
\label{ }
\begin{split}
\text{\Fontauri{P}}_4(\epsilon)=&{\Big( 1  +  \epsilon \, dF \big( \uu,\un \big)  \Big) 
\Big( 1  +  \epsilon \, dF \big( \uv,\us \big)  \Big)}
{\Big( 1  +  \epsilon \, \overline{dF} \big( \uu,\us \big)  \Big) 
\Big( 1  +  \epsilon \, \overline{dF} \big( \uv,\un \big)  \Big)}\\
-&{\Big( 1  +  \epsilon \, dF \big( \uu, \us \big)  \Big) 
\Big( 1  +  \epsilon \, dF \big( \uv, \un \big)  \Big)}
{\Big( 1  +  \epsilon \, \overline{dF} \big( \uu, \un \big)  \Big) 
\Big( 1  +  \epsilon \, \overline{dF} \big( \uv, \us \big)  \Big)}
\end{split}
\end{equation}
{\Fontauri{P}}$_4$ has at least one zero (with multiplicity) at $\epsilon=0$, and at most three other zeros, unless it is identically zero.
Let us define $\epsilon_c(\ue)$ by
\begin{equation}
\label{ }
\epsilon_c(\ue)=
\begin{cases}
      &+\infty\  \text{if {\Fontauri{P}}$_4$ is identically zero}, \\
      & \text{the smallest strictly positive root of {\Fontauri{P}}$_4$, if it exists}\\
      &+\infty\ \text{if it does not exist}.
\end{cases}
\end{equation}
For the completion $\uT_0$ define
\begin{equation}
\label{ }
\epsilon_c(\uT_0)=\min_{\ue\in\mathrm{E}(\uT_0)} \epsilon_c(\ue)
\end{equation}

$\epsilon_c(\uT_0)$ is strictly positive (possibly infinite) since 
$F$ has finite support on $\mathrm{V}(\uT_0)=\mathrm{V}(\uG_0)$ and there are only finitely many 
chords $\ue$ affected by the deformation.
For the initial Delaunay graph $\uG_0$ define
\begin{equation}
\label{ }
\epsilon_c(\uG_0)=\min_{\uT_0\in\text{\Fontauri{T}}\,(\uG_0)} \epsilon_c(\uT_0)
\end{equation}
Again $F$ has finite support on $\mathrm{V}(\uG_0)$ and since the $\epsilon_c(\uT_0)$'s can only take a 
finite number of distinct (strictly positive) values it must be the case that $\epsilon_c(\uG_0)$ is strictly positive.
\begin{equation}
\label{ }
\epsilon_c(\uG_0)>0
\end{equation}
Now define 
\begin{equation}
\label{ }
\tilde\epsilon_F=\min\{\epsilon_c(\uG_0), \epsilon''_F,\bar\epsilon_F \}
\end{equation}
Take $\epsilon$ within the range
\begin{equation}
\label{ }
0<\epsilon<\tilde\epsilon_F
\end{equation}
and construct $\uT^{\mathtt{Del}}_{:\epsilon}$ according to the proof of Lemma~\ref{Lemma11}. 
It is weakly Delaunay, hence $\theta_\epsilon(\ue)\ge 0$ for each edge. By the argument above, the conformal angles cannot change sign 
in the interval $0 \leq \epsilon \leq \epsilon_c(\uG_0)$. 
Therefore, for any $\epsilon'\le \epsilon$,  each conformal angle must stay nonnegative, 
and so  $\uT^{\mathtt{Del}}_{:\epsilon'}$ remains weakly Delaunay.
This implies that its redacted graph is the Delaunay deformation $\uG_{\epsilon'}$ of $\uG_0$ and that 
$\mathrm{E}(\uG_{\epsilon'}) = \mathrm{E}(\uG_\epsilon)$.
In other words, $\uG_{\epsilon}$ is a stable deformation of $\uG_0$.

The graph $\uG_{0^+}$ stipulated in \ref{G0+def}, exists: 
{it shares the same vertex set and embedding as $\uG_0$, while its edge set
coincides with $\mathrm{E}(\uG_\epsilon)$ for any $0 < \epsilon < \tilde{\epsilon}_F$ by 
stability. In particular each edge $\ue$ of $\uG_0$ is an edge of $\uG_{0^+}$ while
the remaining edges of $\uG_{0^+}$ are all chords.}
\end{proof}

\begin{Rem}
\label{tildEpsvsEps}
The bound $\tilde\epsilon_F$ (which defines an interval $0<\epsilon<\tilde\epsilon_F$ where no flips occur) may be much smaller than $\epsilon_F$. In fact, even for a fixed initial Delaunay graph $\uG_0$ and a generic displacement function $F$, the threshold $\tilde\epsilon_F$ may be arbitrarily small w.r.t. $\epsilon_F$. This point will become relevant when discussing the scaling limit and the problem of obtaining uniform bounds with respect to the choice of 
$\uG_0$. We return to this issue in Sect.~\ref{aVarOpFlips}.
\end{Rem}

\begin{Rem}
As discussed in the proof of Prop.~\ref{epsilontildeF}, the edge sets
$\mathrm{E}(\uG_\epsilon)$ and $\mathrm{E}(\uG_{0^+})$ coincide for $0\le\epsilon<\tilde\epsilon_F$.
Consequently any stable Delaunay deformation $\uG_\epsilon$ of $\uG_0$
is also a rigid deformation of the corresponding limit graph $\uG_{0^+}$
within the range $0\le\epsilon<\tilde\epsilon_F$. Accordingly
the notions of stable and rigid deformation agree for the limit graph 
$\uG_{0^+}$ provided we work with sufficiently small values of the 
deformation parameter.
\end{Rem}

\begin{Rem}
Since $\mathrm{E}(\uG_\epsilon)=\mathrm{E}(\uG_{0^+})$ for $0\le\epsilon<\tilde\epsilon_F$ and since the faces of $\uG_{0^+}$ are cyclic polygons, the conformal angle $\theta_\epsilon(\ue)$ of any edge $\ue \in \mathrm{E}(\uG_\epsilon)$ is unambiguously defined and strictly positive
as $\epsilon$ varies in the interval $0\le\epsilon<\tilde\epsilon_F$.
\end{Rem}

\subsection{Variation of operators under rigid deformations}
\label{ssVarOp}
We now study the variations of the operators $\Delta(\epsilon)$, $\mathcal{D}(\epsilon)$ and $\Deltaconf(\epsilon)$
and of the associated local geometrical quantities arising from {\bf rigid} deformations (see Def.~\ref{DefRigidDef}) of triangulations.
{It will not be necessary to assume that the triangulations are Delaunay at \underline{this} stage.}

Let $\uT$ be an initial triangulation (possibly Delaunay) with vertex set $\mathrm{V}(\uT)$, edge set $\mathrm{E}(\uT)$ and  face set (triangles) $\mathrm{F}(\uT)$. Let $\uT_{:\epsilon}$ be the rigid deformation of $\uT$
induced by the deformed embedding
\begin{equation}
\label{zPertF}
z_\epsilon(\uv) \, = \  z(\uv) + \epsilon \, F(\uv)
\end{equation}
where $F \in \mathbb{C}^{\mathrm{V}(\uT)}$ is a displacement function
and where $\epsilon \geq 0$ is bounded by the threshold $\epsilon_F''$ 
defined in Lemma~\ref{LemRigidDef}.
Recall that $\uT_{:\epsilon}$ and $\uT$ share the same set of vertices, edges and faces.
The deformed discrete differential operators are denoted $\nabla_\epsilon, \overline{\nabla}_\epsilon: 
\Bbb{C}^{\mathrm{V}(\uT)} \longrightarrow \Bbb{C}^{\mathrm{F}(\uT)}$ while
the deformed area and radius operators are denoted
$A_\epsilon, R_\epsilon:  \Bbb{C}^{\mathrm{F}(\uT)}
\longrightarrow  \Bbb{C}^{\mathrm{F}(\uT)}$. They are obtained by
making the substitution $z \mapsto z_\epsilon$
in formulae \ref{nablaDef}, \ref{barnablaDef}, \ref{Aform},
and \ref{Rcform} respectively. This allows us to
unambiguously define deformed versions
$\Delta(\epsilon)$ and $\mathcal{D}(\epsilon)$
of the Beltrami-Laplace and discrete K\"ahler 
operators using the factorizations
\ref{DeltaNablaForm} and \ref{DNablaForm}, namely:
\begin{equation}
\label{deformed-factorizations}
\Delta(\epsilon) = 2 \left(\overline\nabla_\epsilon^{\!\top}\! {{A_\epsilon}}\,\nabla_\epsilon
+ \nabla_\epsilon^{\!\top}\! {{A_\epsilon}}\,\overline{\nabla}_\epsilon\right)
\quad \text{and} \quad \mathcal{D}(\epsilon) = 4\,\overline{\nabla}_\epsilon^\top\! {{A_\epsilon}\over{R_\epsilon^2}} \, \nabla_\epsilon
\end{equation}
We may expand all the relevant operators 
as (formal) series in $\epsilon$ (they are in fact meromorphic in $\epsilon$). Up to first order in $\epsilon$, the terms in these
developments can be compactly expressed using the discrete derivatives $\nabla F$
and $\overline{\nabla} F$ with respect to the triangulation $\uT$.

\begin{Prop}
\label{PVarDelta}
The variation of the Laplace-Beltrami operator is
\begin{equation}
\label{VarDelta}
\Delta(\epsilon) \ = \ \Delta - 4 \epsilon \left( \nabla^{\!\top} (A\,\overline\nabla F)\,\nabla + \overline\nabla^{\!\top}(A\,\nabla \bar F)\,\overline\nabla\, \right)+\mathrm{O}(\epsilon^2)
\end{equation}
\end{Prop}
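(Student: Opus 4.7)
The plan is to exploit the factorisation $\Delta = 2\bigl(\overline\nabla^{\!\top} A\,\nabla + \nabla^{\!\top} A\,\overline\nabla\bigr)$ from Remark \ref{DeltaNablaForm} together with the deformed factorisation \ref{deformed-factorizations}, and then compute the first order variations of each of the three building blocks $A$, $\nabla$, and $\overline\nabla$ separately. Since none of the combinatorial data of $\uT$ changes, every operator is a smooth function of $\epsilon$ and the first order term of $\Delta(\epsilon)$ can be assembled by Leibniz's rule.

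First I would compute $\delta A$, $\delta\nabla$, and $\delta\overline\nabla$ pointwise on each triangle $\uf=(\uv_1,\uv_2,\uv_3)$. From the area formula \ref{Aform} and the definitions \ref{nablaDef}, \ref{barnablaDef} applied to the function $F$ itself one checks the identities
\[
\tfrac{1}{4\mathrm{i}}\sum_j F_j(\bar z_{j-1}-\bar z_{j+1}) = A(\uf)\,\nabla F(\uf),\qquad
\tfrac{1}{4\mathrm{i}}\sum_j \bar F_j(z_{j+1}-z_{j-1}) = A(\uf)\,\overline\nabla \bar F(\uf),
\]
from which I obtain the compact formula $\delta A = A\cdot(\nabla F + \overline\nabla\bar F)$ as a diagonal operator on faces. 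For $\delta\nabla$ and $\delta\overline\nabla$ applied to an arbitrary $\phi\in\Bbb{C}^{\mathrm{V}(\uT)}$, expanding the denominator $A_\epsilon^{-1}$ produces a term $-(\delta A/A)\,\nabla\phi$, while the perturbation of the numerator produces cross-terms such as $\frac{\mathrm{i}}{4A}\sum_j \phi_j(\bar F_{j+1}-\bar F_{j-1})$. These cross-terms can be reorganised by the symmetry $\sum_j \phi_j(\bar F_{j+1}-\bar F_{j-1})=-\sum_j \bar F_j(\phi_{j+1}-\phi_{j-1})$ and the discrete Leibniz identity \ref{diff-formula-nablas} (applied to $\phi$) to yield expressions purely in terms of $\nabla\phi,\overline\nabla\phi,\nabla F,\overline\nabla F,\nabla\bar F,\overline\nabla\bar F$. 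After the dust settles one finds
\[
\delta\nabla = -(\nabla F)\,\nabla - (\nabla \bar F)\,\overline\nabla,\qquad
\delta\overline\nabla = -(\overline\nabla\bar F)\,\overline\nabla - (\overline\nabla F)\,\nabla,
\]
where the values on faces act as diagonal multiplication operators on $\mathbb{C}^{\mathrm{F}(\uT)}$.

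Then I would substitute these six ingredients into the Leibniz expansion
\[
\tfrac{1}{2}\delta\Delta = (\delta\overline\nabla)^{\!\top}\!A\,\nabla + \overline\nabla^{\!\top}\!(\delta A)\,\nabla + \overline\nabla^{\!\top}\!A\,\delta\nabla + (\delta\nabla)^{\!\top}\!A\,\overline\nabla + \nabla^{\!\top}\!(\delta A)\,\overline\nabla + \nabla^{\!\top}\!A\,\delta\overline\nabla,
\]
using the fact that transposition reverses operator order while fixing diagonal face-multiplications, so e.g.\ $(\delta\nabla)^{\!\top}=-\nabla^{\!\top}(\nabla F)-\overline\nabla^{\!\top}(\nabla\bar F)$. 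Classifying the resulting twelve terms according to whether they are of the form $\overline\nabla^{\!\top}(\,\cdot\,)\nabla$, $\nabla^{\!\top}(\,\cdot\,)\overline\nabla$, $\overline\nabla^{\!\top}(\,\cdot\,)\overline\nabla$, or $\nabla^{\!\top}(\,\cdot\,)\nabla$, I expect the four ``mixed-chirality'' classes (those involving $A\,\nabla F$ or $A\,\overline\nabla\bar F$ sandwiched between operators of opposite chirality) to cancel in pairs—one contribution from the area variation, one from the numerator variation—while only the purely chiral pieces $\nabla^{\!\top}(A\,\overline\nabla F)\,\nabla$ and $\overline\nabla^{\!\top}(A\,\nabla \bar F)\,\overline\nabla$ survive, each with coefficient $-2$. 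Multiplying by the overall $2$ produces the factor of $-4$ in formula \ref{VarDelta}.

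The main obstacle will be the bookkeeping of these cancellations: six terms contribute, and the cancellations rely on correctly identifying which multiplication-on-faces factors coincide after transposition. The key conceptual input making them work is the discrete Leibniz identity \ref{diff-formula-nablas}, which is the mechanism by which apparent $\nabla F$ and $\overline\nabla\bar F$ contributions (the ``divergence'' of the deformation) drop out, leaving only the conformally chiral pieces $\nabla F$ paired with $\nabla$ (through its transpose sandwich) and $\overline\nabla F$ paired with $\nabla,\overline\nabla$ in the asymmetric ``$A\,\overline\nabla F$'' combination that appears in the proposition. The $\mathrm{O}(\epsilon^2)$ remainder is automatic from the smooth dependence of each building block on $\epsilon$ on finite-dimensional data.
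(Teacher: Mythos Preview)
Your proposal is correct and follows essentially the same route as the paper: compute the first-order variations $\delta A = A(\nabla F+\overline\nabla\bar F)$, $\delta\nabla = -(\nabla F)\nabla-(\nabla\bar F)\overline\nabla$, $\delta\overline\nabla = -(\overline\nabla\bar F)\overline\nabla-(\overline\nabla F)\nabla$ (these are exactly formulae \ref{var1A} and \ref{VarNabla} in the paper), then apply the Leibniz rule to the factorisation \ref{deformed-factorizations}. The paper is terser at the final step (``Combining this with \ref{deformed-factorizations} and the Leibnitz product rule we get \ref{VarDelta}''), whereas you spell out the cancellation bookkeeping explicitly; both arguments are the same in substance.
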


\begin{Prop}
\label{PVarKahler}
The variation of the K\"ahler operator is
\begin{equation}
\label{VarD}
\begin{split}
\mathcal{D}(\epsilon) \ = \ \mathcal{D}- 4 \epsilon \left[\overline\nabla^{\!\top} {A\over R^2} \left(\nabla F +\overline\nabla \bar F + C\,\overline\nabla F + \bar C\,\nabla\bar F\right) \nabla \right.\ &
\\
+\left. \nabla^{\!\top} {A\over R^2} (\overline\nabla F)\nabla+\overline\nabla^{\!\top} {A\over R^2} (\nabla \bar F)\overline\nabla\right]
&+\mathrm{O}(\epsilon^2)
\end{split}
\end{equation}
with the diagonal function $C \in \mathbb{C}^{\mathrm{F}(\uT)}$ and its conjugate $\bar C$ which are given for a triangle 
$\uf =(\uu, \uv, \uw)$ by
\begin{equation}
\label{CDef}
C(\uf)=\left( {\bar{z}(\uu) -\bar{z}(\uv) \over {z(\uu) -z(\uv)}} +  {\bar{z}(\uv) -\bar{z}(\uw) \over {z(\uv) -z(\uw)}}  
+ {\bar{z}(\uw) -\bar{z}(\uu) \over {z(\uw) -z(\uu)}} \right)
\ , \quad
\bar C(\uf)= \overline{C(\uf)}
\end{equation}
\end{Prop}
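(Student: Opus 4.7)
The plan is to start from the local factorization $\mathcal{D} = 4\,\overline{\nabla}^{\!\top}(A/R^2)\nabla$ of \ref{DNablaForm} and differentiate each factor at $\epsilon = 0$, then recombine the pieces by the Leibniz rule. The calculation thus reduces to four subproblems: computing the first-order variations $\delta \nabla$, $\delta \overline{\nabla}$, $\delta A$, and $\delta(A/R^2)$, where the last two are viewed as diagonal operators on $\mathbb{C}^{\mathrm{F}(\uT)}$.

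First I would derive $\delta \nabla$ and $\delta \overline{\nabla}$ by exploiting the fact that the local identity \ref{diff-formula-nablas}, namely
$$\phi(\uv_a) - \phi(\uv_b) = (z_a - z_b)\,\nabla \phi(\uf) + (\bar z_a - \bar z_b)\,\overline{\nabla}\phi(\uf),$$
must hold also for the deformed embedding $z_\epsilon$. Differentiating in $\epsilon$ at $\epsilon = 0$ (for a fixed $\phi$) and using \ref{diff-formula-nablas} a second time to rewrite $F_a - F_b$ and $\bar F_a - \bar F_b$ in terms of $\nabla F$, $\overline{\nabla}F$, $\nabla \bar F$, $\overline{\nabla}\bar F$, the linear independence of $(z_a - z_b, \bar z_a - \bar z_b)$ over a non-degenerate triangle yields the operator identities $\delta \nabla = -(\nabla F)\nabla - (\nabla \bar F)\overline{\nabla}$ and $\delta \overline{\nabla} = -(\overline{\nabla}F)\nabla - (\overline{\nabla}\bar F)\overline{\nabla}$, where $\nabla F$ and its conjugates act as diagonal multiplications on $\mathbb{C}^{\mathrm{F}(\uT)}$.

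Next I would compute the variations of the area and radius factors. A direct expansion of \ref{Aform} yields $\delta A = A(\nabla F + \overline{\nabla}\bar F)$, once the cyclic sums appearing in $4\mathrm{i}\,\delta A$ are identified via \ref{nablaDef}--\ref{barnablaDef} as $4\mathrm{i}A\,\nabla F$ and $4\mathrm{i}A\,\overline{\nabla}\bar F$. The main obstacle is the variation of $R^2$, because this is exactly where the function $C$ of \ref{CDef} must emerge in its cyclic form. Writing $\log R^2 = \sum_{\mathrm{cyc}}\log|z_i - z_{i+1}|^2 - 2\log(4A)$ and expanding $\delta|z_a - z_b|^2 = (F_a - F_b)(\bar z_a - \bar z_b) + (z_a - z_b)(\bar F_a - \bar F_b)$ using \ref{diff-formula-nablas}, I would obtain
$$\delta\log|z_a - z_b|^2 = (\nabla F + \overline{\nabla}\bar F) + \frac{\bar z_a - \bar z_b}{z_a - z_b}\,\overline{\nabla}F + \frac{z_a - z_b}{\bar z_a - \bar z_b}\,\nabla \bar F;$$
summing cyclically over the three sides of the triangle produces exactly $3(\nabla F + \overline{\nabla}\bar F) + C\,\overline{\nabla}F + \bar C\,\nabla \bar F$. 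Subtracting $2\,\delta\log A = 2(\nabla F + \overline{\nabla}\bar F)$ cancels the first terms in $\delta\log(A/R^2)$, leaving $\delta(A/R^2) = -(A/R^2)(C\,\overline{\nabla}F + \bar C\,\nabla \bar F)$.

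Finally, I would assemble the three Leibniz contributions, using $(\delta\overline{\nabla})^{\!\top} = -\nabla^{\!\top}(\overline{\nabla}F) - \overline{\nabla}^{\!\top}(\overline{\nabla}\bar F)$ and the commutativity of the diagonal multipliers with $A/R^2$. The piece $\delta\overline{\nabla}^{\!\top}\cdot(A/R^2)\nabla$ produces $-\nabla^{\!\top}(A/R^2)(\overline{\nabla}F)\nabla - \overline{\nabla}^{\!\top}(A/R^2)(\overline{\nabla}\bar F)\nabla$; the piece $\overline{\nabla}^{\!\top}(A/R^2)\,\delta\nabla$ produces $-\overline{\nabla}^{\!\top}(A/R^2)(\nabla F)\nabla - \overline{\nabla}^{\!\top}(A/R^2)(\nabla \bar F)\overline{\nabla}$; and $\overline{\nabla}^{\!\top}\,\delta(A/R^2)\,\nabla$ produces $-\overline{\nabla}^{\!\top}(A/R^2)(C\,\overline{\nabla}F + \bar C\,\nabla \bar F)\nabla$. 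Grouping the $\overline{\nabla}^{\!\top}(A/R^2)(\,\cdot\,)\nabla$ sandwiches collects $\nabla F + \overline{\nabla}\bar F + C\,\overline{\nabla}F + \bar C\,\nabla\bar F$ inside the brackets, matching the first term in the bracket of \ref{VarD}; the two leftover pieces reproduce the remaining $\nabla^{\!\top}(A/R^2)(\overline{\nabla}F)\nabla$ and $\overline{\nabla}^{\!\top}(A/R^2)(\nabla\bar F)\overline{\nabla}$, completing the derivation of \ref{VarD}.
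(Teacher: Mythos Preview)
Your proposal is correct and follows essentially the same route as the paper: derive the first-order variations of $\nabla$, $\overline\nabla$, $A$, and $A/R^2$ from \ref{diff-formula-nablas} and \ref{Rcform}, then combine them via the Leibniz rule applied to the factorization \ref{DNablaForm}. Your treatment of $\delta(A/R^2)$ via the logarithmic derivative $\log R^2=\sum_{\mathrm{cyc}}\log|z_i-z_{i+1}|^2-2\log(4A)$ is in fact more explicit than the paper's terse ``Using \ref{Rcform} we can write the variation of the circumradius'', and makes transparent exactly how the cyclic sum $C$ of \ref{CDef} appears.
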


Before deriving these two equations, let us note that the variation for $\Delta$ is rather simple, while the variation for $\mathcal{D}$ is more complicated, since we have not found a simple interpretation for the quantities $C$ and $\overline C$ in terms of the geometry of the triangle $\uf$.

\begin{proof}
From \ref{diff-formula-nablas}, 
for a pair of vertices $\uu$ and $\uv$ of a triangle $\uf = (\uu, \uv, \uw)$ in $\mathrm{F}(\uT)$
\begin{equation}
\label{varZ}
\begin{split}
z_\epsilon(\uu) - z_\epsilon(\uv)
&= z(\uu) -z(\uv) +\epsilon \, \Big( (z(\uu) - z(\uv))\nabla F(\uf) + (\bar{z}(\uu) -\bar{z}(\uv)) \overline\nabla F(\uf) \Big) \\
\bar{z}_\epsilon(\uu) - \bar{z}_\epsilon(\uv)
&= \bar{z}(\uu) - \bar{z}(\uv) + \epsilon \, \Big( (z(\uu) -z(\uv)) \nabla \bar F(\uf) + (\bar{z}(\uu) -\bar{z}(\uv))\overline\nabla \bar F(\uf) \Big)
\end{split}
\end{equation}
Inserting this in \ref{Aform} gives the variation of the area of the triangle $\uf$
\textcolor{black}{
\begin{equation}
\label{varAfirst}
A_\epsilon(\uf) = A(\uf)+\epsilon\, A(\uf) \big(\nabla F(\uf)+{\overline\nabla}\bar F(\uf) \big) 
+ \mathrm{O}(\epsilon^2)
\end{equation}}
which we can succinctly express as
\textcolor{black}{
\begin{equation}
\label{var1A}
A_\epsilon = A+\epsilon\, A (\nabla\! F+{\overline\nabla}\!\bar F) 
+ \mathrm{O}(\epsilon^2)
\end{equation}}
where we view $A$, $\nabla\!F$, and  $\overline\nabla\!\bar F$
as functions in $\Bbb{C}^{\mathrm{F}(\uT)}$ or, alternatively,
as diagonal operators mapping $\Bbb{C}^{\mathrm{F}(\uT)} \rightarrow \Bbb{C}^{\mathrm{F}(\uT)}$.

Using \ref{Rcform} we can write the variation of the circumradius $R(\uf)$ of the face $\uf$. 
We write only the leading term of order $\mathrm{O}(\epsilon)$ with the same compact notation and with $C$, and $\bar C$ defined by \ref{CDef}
\begin{equation}
\label{var1AR2}
{A_\epsilon \over R_\epsilon^2} = {A\over R^2}-\epsilon\, {A\over R^2} \left(C\,\overline\nabla\! F +\bar C\, \nabla\! \bar F \right)+\mathrm{O}(\epsilon^2)
\end{equation}
Similarly, we get the variation of the matrix elements of the $\nabla$ operator.  At first order
\begin{equation}
\label{VarNablaExpl}
\big[\nabla_\epsilon\big]_{\uf,\uv} =  \nabla_{\uf, \uv}-\epsilon\, \left(\nabla\! F(\uf)\, \nabla_{\uf, \uv} + \nabla\!\bar F(\uf)\, \overline\nabla_{\uf,\uv} \right)+\mathrm{O}(\epsilon^2)
\end{equation}
When read as operators, formula~\ref{VarNablaExpl} for $\nabla_\epsilon$
and its complex conjugate become
\begin{equation}
\label{VarNabla}
\begin{split}
\nabla_\epsilon = \nabla-\epsilon\, \left(\nabla\! F \,\nabla + \nabla\!\bar F \,\overline\nabla \right)
+\mathrm{O}(\epsilon^2)\\
\overline{\nabla}_\epsilon =  \overline\nabla- \epsilon\, \left(\overline\nabla\! \bar F \,\overline\nabla + \overline\nabla\! F \,\nabla \right)
+\mathrm{O}(\epsilon^2)
\end{split}
\end{equation}
Combining this with \ref{deformed-factorizations} and the Leibnitz product rule
we get \ref{VarDelta} and \ref{VarD}. 
\end{proof}

\begin{Rem}
Note that the exact formulae (to all orders in $\epsilon$) for these variations \ref{var1A}-\ref{VarNabla} are derived in Section~\ref{aVarOpFlips} 
(see in particular Eqn. \ref{full-variation-nabla-epsilon} in section \ref{ssFullVar}).
\end{Rem}

\begin{Rem}
\label{rVArConf}
There is no such a compact expression for the variation of the conformal Laplacian $\Deltaconf$ in the general case. In particular, the variation of the weight associated to an edge $\ue= \overline{\uu\uv}$ will depend on the discrete derivatives of $F$ both at the north triangle $\uf_\mathrm{n}$ and the south triangle 
$\uf_\mathrm{s}$ of the oriented edge $\vec{\ue} = (\uu, \uv)$, which are a priori independent (see figure~\ref{triangles}).
\end{Rem}
We can, of course, make the substitution $z \mapsto z_\epsilon$ in formula \ref{thetaform} for the north and south 
angles (which express the angles as a difference of arguments of edge vectors) 
\begin{equation}
\begin{array}{ll}
\D \theta_\mathrm{n}(\vec{\ue}, \epsilon ) 
&\D := \ 
{1\over 2\imath} \, \log \left(-{(\bar{z}_\epsilon(\uv) - \bar{z}_\epsilon(\un))( z_\epsilon(\uu)  
- z_\epsilon(\un))\over (z_\epsilon(\uv) -z_\epsilon(\un)) (\bar{z}_\epsilon(\uu) - \bar{z}_\epsilon(\un))}\right) \\ \\
\D \theta_\mathrm{s}(\vec{\ue}, \epsilon ) 
&\D := \ 
{1\over 2\imath} \, \log \left(-{(\bar{z}_\epsilon(\uu) - \bar{z}_\epsilon(\us))( z_\epsilon(\uv)  
- z_\epsilon(\us))\over (z_\epsilon(\uu) -z_\epsilon(\us)) (\bar{z}_\epsilon(\uv) - \bar{z}_\epsilon(\us))}\right)
\end{array}
\end{equation}
where $\un \in \uf_\mathrm{n}$ and $\us \in \uf_\mathrm{s}$ are the
respective north and south vertices of the adjacent triangles $\uf_\mathrm{n}$ and $\uf_\mathrm{s}$ to the edge $\vec\ue=(\uu,\uv)$ as depicted on Fig.~\ref{triangles}.
The order zero and order one terms in the formal $\epsilon$ series expansion read from \ref{varZ}
\begin{equation}
\label{VarThetaNS}
\begin{array}{ll}
\D \theta_\mathrm{n}(\vec{\ue}, \epsilon) \ = \
&\D \theta_\mathrm{n}(\vec{\ue} \, ) \, + \, \epsilon \, {\imath\over 2}
\Big( \overline\nabla F(\uf_\mathrm{n} ) \, \mathcal{E}_\mathrm{n}(\vec{\ue} \,)
-\nabla\bar F(\uf_\mathrm{n}) \,  \overline{\mathcal{E}}_\mathrm{n}(\vec{\ue} \,) \Big)
\, + \, \mathrm{O}(\epsilon^2)  \\ \\
\D \theta_\mathrm{s}(\vec{\ue}, \epsilon) \ = \
&\D \theta_\mathrm{s}(\vec{\ue} \, ) \, + \, \epsilon \, {\imath\over 2}
\Big( \overline\nabla F(\uf_\mathrm{s} ) \, \mathcal{E}_\mathrm{s}(\vec{\ue} \,)
-\nabla\bar F(\uf_\mathrm{s}) \,  \overline{\mathcal{E}}_\mathrm{s}(\vec{\ue} \,) \Big)
\, + \, \mathrm{O}(\epsilon^2) 
\end{array}
\end{equation}
with complex coefficients $\mathcal{E}_\mathrm{n}(\vec{\ue})$ and $\mathcal{E}_\mathrm{s}(\vec{\ue})$ given by
\begin{equation}
\label{mathcalE-terms}
\begin{array}{lll}
\D \mathcal{E}_\mathrm{n}(\vec{\ue} \, ) 
&\D := \ { \overline{z}(\uv) - \overline{z}(\un) \over {z(\uv) - z(\un) } } - 
{ \overline{z}(\uu) - \overline{z}(\un) \over {z(\uu) - z(\un) } } 
&\D = \ {-4 A(\uf_\mathrm{n}) \over {\big(z(\uv) - z(\un) \big) \big(z(\uu) - z(\un) \big)}}
\\ \\
\D \mathcal{E}_\mathrm{s}(\vec{\ue} \, ) 
&\D := \ { \overline{z}(\uu) - \overline{z}(\us) \over {z(\uu) - z(\us) } } - 
{ \overline{z}(\uv) - \overline{z}(\us) \over {z(\uv) - z(\us) } } 
&\D = \ {-4 A(\uf_\mathrm{s}) \over {\big(z(\uv) - z(\us) \big) \big(z(\uu) - z(\us) \big)}}
\\ \\
\end{array}
\end{equation}
The corresponding first order variation of the edge weight $\tan\theta(\ue)$ with $\theta(\ue)=(\theta_\mathrm{n}(\vec{\ue})+\theta_\mathrm{s}(\vec{\ue}))/2$  can be written explicitly in term of
the discrete derivatives $\overline\nabla F(\uf_\mathrm{n})$ and $\overline\nabla F(\uf_\mathrm{s})$, the coefficients $\mathcal{E}_\mathrm{n}(\vec{\ue})$ and
$\mathcal{E}_\mathrm{s}(\vec{\ue})$, and their complex conjugates.
{We shall not write the formula here, since it lacks the simplicity and geometrical interpretation 
of our results for $\Delta$ and $\mathcal{D}$.}

\subsection{Generic notation for derivatives under graph deformations} 
\label{ssGenVarNot}
We shall use the following compact notation for derivatives and variations of general objects $\Obj$ associated to a rigid deformation 
$\uG\to\uG_\epsilon$ of a polygonal (Delaunay) graph induced by a deformed coordinate embedding $z\to z_\epsilon=z+\epsilon\, F$ 
as defined in Def.~\ref{DefRigidDef}.
The object $\Obj$ can be a local quantity such as the angle $\theta$, $\theta_{\mathrm{n}}$, $\theta_{\mathrm{s}}$ associated to oriented edge $\vec\ue$ 
of $\uG$ or the area $A$ and circumradius $R$ of a face $\uf$. Other objects include the operators $\Delta$, $\Deltaconf$ and 
$\mathcal{D}$.

If the object $\Obj$ is defined on the unperturbed graph $\uG$, the corresponding object on the deformed graph $\uG_\epsilon$ is denoted
\begin{equation}
\label{ }
\Obj(\epsilon)\quad\text{or sometimes}\quad\Obj_\epsilon\ \text{ (for clarity or brevity)}
\end{equation}
This is consistent with the notations of Sect.~\ref{ssVarOp}.
The variation of $\Obj$ for finite $\epsilon$ is denoted
\begin{equation}
\label{ }
\delta\Obj(\epsilon)=\Obj(\epsilon)-\Obj
\end{equation}
The initial derivatives w.r.t. $\epsilon$ are denoted
\begin{equation}
\label{ }
{\partial\over\partial\epsilon}\Obj(\epsilon) = \deltae\Obj(\epsilon)
\ ,\quad
{\partial^2\over\partial\epsilon^2}\Obj(\epsilon) = \deltaee\Obj(\epsilon)
\ ,\quad
\text{etc.}
\end{equation}
while their evaluations at zero are denoted
\begin{equation}
\label{ }
\left.{\partial\over\partial\epsilon}\Obj(\epsilon)\right|_{\epsilon=0} = \deltae\Obj
\ ,\quad
\left.{\partial^2\over\partial\epsilon^2}\Obj(\epsilon)\right|_{\epsilon=0} = \deltaee\Obj
\ ,\quad
\text{etc.}
\end{equation}
Accordingly, the Taylor expansion of $\Obj$ reads
\begin{equation}
\label{ }
\Obj(\epsilon)=\Obj+\epsilon \deltae\Obj+{1\over 2}\epsilon^2 \deltaee\Obj+\mathrm{O}(\epsilon^3)
\end{equation}
The terms of order $\epsilon$ obtained in the previous section \ref{ssVarOp} give the explicit formula of the first derivatives $\deltae$ for the objects considered there. We do not rewrite them explicitly.

\newpage 

\section{Variations of log-determinants}
\label{sCalculations}

\subsection{First order variations of determinants}
\label{subsection-first-order}

\subsubsection{The setup}
Here we compute the first order term in the $\epsilon$-expansion of the (formally infinite) 
logarithm of the determinant of $\mathcal{O}(\epsilon)$. This first order term is on general grounds

\begin{equation}
\label{ }
\delta \log \det \mathcal{O}(\epsilon) =\tr\left[\delta\mathcal{O}(\epsilon) \cdot\mathcal{O}_{\mathrm{cr}}^{-1}\right]
\end{equation}

\noindent
The results are each expressed as a sum of 
local terms over the weak Delaunay graph $\uG_{0^+}$
arising from the critical graph $\uG_\mathrm{cr}$ and the displacement function $F$. 
For both the Laplace-Beltrami and K\"ahler operators,
there is a local term associated to each edge of $\uG_{0^+}$; there
is an additional local term attached to each face
of $\uG_{0^+}$ for the K\"ahler operator.
In the case of the conformal Laplacian the local terms associated to chords 
of $\uG_{0^+}$ differ from local terms of the regular edges of $\uG_{0^+}$. For this reason 
formula \ref{varTrLlC1} 
is expressed as two sums: one over the regular edges $\ue\in\mathrm{E}(\uG_{0^+}^\bullet)= \mathrm{E}(\uG_\mathrm{cr})$
and another over the set of chords $\ue\in\mathrm{C}(\uG_{0^+})=\mathrm{E}(\uG_{0^+}) \backslash \mathrm{E}(\uG_\mathrm{cr})$.

\subsubsection{The results for first order variations}
We first give the results; their derivations are given in the following sections.

\begin{Prop}
\label{T1stVar}
\textbf{Laplace-Beltrami.}
For the Laplace-Beltrami operator $\Delta(\epsilon)$, the first order variation of $\log \det \Delta(\epsilon)$ with respect to the deformation \ref{zDeform} can be 
expressed simply in terms of the variations of the  north and south angles $\theta_\mathrm{n}(\vec{\ue}, \epsilon )$ and
$\theta_\mathrm{s}(\vec{\ue}, \epsilon)$
of edges $\ue \in \mathrm{E}(\uG_{0^+})$.
\begin{equation}
\label{varTrLlLB1}
\tr\left[\delta\Delta(\epsilon)\cdot\Delta_{\mathrm{cr}}^{-1}\right]=
\ {\epsilon\over\pi}
\sum_{\stackrel{\scriptstyle \mathrm{edges}}{\ \, \ue \in \uG_{0^+}}}
\deltae \theta_\mathrm{n}(\vec{\ue} \,)\, {\mathcal{L}'(\theta_\mathrm{n}(\vec{\ue} \, )}) + \deltae\theta_\mathrm{s}(\vec{\ue} \,) 
\,{\mathcal{L}'(\theta_\mathrm{s}(\vec{\ue} \,) )}\quad +\quad \mathrm{O}(\epsilon^2)
\end{equation}
The function $\mathcal{L}'$, given by \ref{Lprime}, is the derivative of the function $\mathcal{L}$ given by \ref{LfunDef}.
$\theta_\mathrm{n}(\vec{\ue}, \epsilon)$  and $\theta_\mathrm{s}(\vec{\ue}, \epsilon)$ are given by \ref{VarThetaNS}.
\end{Prop}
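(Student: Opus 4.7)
The plan is to exploit the rank-one edge decomposition of $\Delta$ together with the extended cocyclic Green's function formula \ref{GreenEdgeChord} for $\Delta_{\mathrm{cr}}^{-1}$. By Lemma~\ref{epsilontildeF}, for $0 < \epsilon < \tilde{\epsilon}_F$ the edge set $\mathrm{E}(\uG_\epsilon)$ is constant and coincides with $\mathrm{E}(\uG_{0^+})$, so I may write
$$\Delta(\epsilon) \, = \, \sum_{\ue \in \mathrm{E}(\uG_{0^+})} c(\vec{\ue},\epsilon)\, L_\ue,$$
where $L_\ue := (\delta_\uu - \delta_\uv)(\delta_\uu - \delta_\uv)^{\!\top}$ is the rank-one symmetric matrix attached to the unoriented edge $\ue = \overline{\uu\uv}$. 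Regular edges of $\uG_{0^+}$ already lie in $\mathrm{E}(\uG_{\mathrm{cr}})$ and satisfy $\theta_\mathrm{n}(\vec{\ue},0) = \theta_\mathrm{s}(\vec{\ue},0) > 0$; chords of $\uG_{0^+}$ satisfy $\theta_\mathrm{n}(\vec{\ue},0) = -\theta_\mathrm{s}(\vec{\ue},0) \neq 0$ and hence $c(\vec{\ue},0) = 0$, but they still contribute at first order in $\epsilon$.

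The next step is to differentiate the edge weights and then evaluate the trace edge by edge. From \ref{BLDelta1} and the angle expansion \ref{VarThetaNS} one reads off
$$\deltae c(\vec{\ue}) \, = \, \tfrac{1}{2}\bigl(\sec^2\theta_\mathrm{n}(\vec{\ue})\,\deltae\theta_\mathrm{n}(\vec{\ue}) \, + \, \sec^2\theta_\mathrm{s}(\vec{\ue})\,\deltae\theta_\mathrm{s}(\vec{\ue})\bigr),$$
so that $\deltae\Delta = \sum_\ue \deltae c(\vec{\ue})\, L_\ue$. In Kenyon's normalization $[\Delta_{\mathrm{cr}}^{-1}]_{\uw,\uw} = 0$, whence $\tr[L_\ue\,\Delta_{\mathrm{cr}}^{-1}] = -2\,[\Delta_{\mathrm{cr}}^{-1}]_{\uu,\uv}$. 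Since the endpoints $\uu,\uv$ of every edge of $\uG_{0^+}$ are cocyclic in $\uG_{\mathrm{cr}}$, the cocyclic extension \ref{GreenEdgeChord} applies uniformly and yields
$$\tr[L_\ue\,\Delta_{\mathrm{cr}}^{-1}] \, = \, \tfrac{2}{\pi}\,\theta_\mathrm{n}(\vec{\ue})\cot\theta_\mathrm{n}(\vec{\ue}) \, = \, \tfrac{2}{\pi}\,\theta_\mathrm{s}(\vec{\ue})\cot\theta_\mathrm{s}(\vec{\ue}),$$
the two expressions coinciding for both regular edges ($\theta_\mathrm{n} = \theta_\mathrm{s}$) and chords ($\theta_\mathrm{n} = -\theta_\mathrm{s}$) because $\theta\cot\theta$ is even.

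To finish, I invoke the identity $\sec^2\theta\cdot\theta\cot\theta \,=\, \theta/(\sin\theta\cos\theta) \,=\, \mathcal{L}'(\theta)$ from \ref{Lprime}. Distributing the two summands of $\deltae c(\vec{\ue})$ against $\tr[L_\ue\,\Delta_{\mathrm{cr}}^{-1}]$ and using the evenness of $\mathcal{L}'$ (inherited from the odd extension of $\mathcal{L}$ used in \ref{KenyonLogDet3}), one converts $\sec^2\theta_\mathrm{n}\cdot\theta_\mathrm{n}\cot\theta_\mathrm{n}$ into $\mathcal{L}'(\theta_\mathrm{n}(\vec{\ue}))$, and analogously for the south angle. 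Collecting the overall factor $1/\pi$ produces exactly \ref{varTrLlLB1}.

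The only delicate point — and the main obstacle avoided by this setup — is the bookkeeping for the chords of $\uG_{0^+}$: their weight vanishes at $\epsilon = 0$ while their linear variation does not, so they cannot be handled by the naive edge Green's function on $\uG_\mathrm{cr}$. It is precisely the cocyclic extension \ref{GreenEdgeChord} that allows chords and regular edges to be packaged together into the uniform local sum \ref{varTrLlLB1}.
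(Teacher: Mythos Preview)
Your proof is correct and follows essentially the same approach as the paper: differentiate the edge weights $c(\vec{\ue},\epsilon)$, pair them against the cocyclic Green's function value \ref{GreenEdgeChord} (using $[\Delta_\mathrm{cr}^{-1}]_{\uu,\uu}=0$), and simplify via $\sec^2\theta\cdot\theta\cot\theta=\mathcal{L}'(\theta)$ together with the evenness that handles chords and regular edges uniformly. Your rank-one packaging $L_\ue$ is a cosmetic repackaging of the paper's direct sum $\sum_{\uu,\uv}\deltae\Delta_{\uu\uv}[\Delta_\mathrm{cr}^{-1}]_{\uv\uu}$, not a different argument.
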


\begin{Rem}
\label{RTrLoDelta}
Owing to the extended form \ref{KenyonLogDet3} of Kenyon's result for $\log\det \Delta_{\mathrm{cr}}$, it is interesting to note that, up to terms of order 
$\epsilon^2$, $\log\det \Delta(\epsilon)$ can still be written as a sum of local terms involving the local geometry of the deformed Delaunay graph $\uG_{\epsilon}$, similar to Kenyon's result although the graph is not isoradial
\begin{equation}
\label{KenyonLogDetLB}
\log\det \Delta(\epsilon)={1\over \pi}	
\sum_{\stackrel{\scriptstyle \mathrm{edges}}{\ \, \ue \in \uG_{0^+}}}
\!\! 
\mathcal{L}\big(\theta_\mathrm{n}(\ue,\epsilon)\big)
+
\mathcal{L}\big(\theta_\mathrm{s}(\vec{\ue},\epsilon)\big)
\quad+\quad\mathrm{O}(\epsilon^2)
\end{equation}
\end{Rem}

\begin{Rem}
Equivalently, formula \ref{varTrLlLB1} can be written as a sum over triangles $\uf$ of any \emph{triangular completion} 
$\widehat{\uG}_{0^+}$ of $\uG_{0^+}$, namely
\begin{equation}
\label{varTrLlLB2}
\tr\left[\delta\Delta(\epsilon) \cdot\Delta_{\mathrm{cr}}^{-1}\right]=
- 4\epsilon\sum_{\stackrel{\scriptstyle \mathrm{faces}}{\ \ \, \uf \in \widehat{\uG}_{0^+}}} A(\uf)\Big( \overline\nabla\! F(\uf) Q(\uf) + c.c.\Big) 
\quad +\quad \mathrm{O}(\epsilon^2)
\end{equation}
where $Q(\uf)=[\nabla\Delta_{\mathrm{cr}}^{-1}\nabla^\top]_{\uf \uf}$ is a diagonal matrix entry.
{This is a direct consequence of the variational formula in Proposition \ref{PVarDelta}. 
Note that value of \ref{varTrLlLB2} is independent of the choice of triangular completion.} 
\end{Rem}

\begin{Prop}
\label{T1stVarC}
\textbf{Conformal Laplacian.} For the conformal Laplacian $\Deltaconf(\epsilon)$, the first order variation of $\log \det \Deltaconf(\epsilon)$ with
respect to the deformation \ref{zDeform} can also be expressed simply in terms of the variations of the north and south angles 
$\theta_\mathrm{n}(\vec{\ue}, \epsilon)$ and 
$\theta_\mathrm{s}(\vec{\ue}, \epsilon)$
of edges $\ue \in \mathrm{E}(\uG_{0^+})$. 
However, we must distinguish between the contributions made by regular edges versus chords in $\uG_{0^+}$. Keep in mind that
the set of regular edges $\mathrm{E}(\uG_{0^+}^\bullet)$ coincides with the edge set $\mathrm{E}(\uG_\mathrm{cr})$ of 
the critical graph.
\begin{equation}
\label{varTrLlC1}
\begin{split}
\tr\left[\delta \Deltaconf(\epsilon) \cdot \Delta_{\mathrm{cr}}^{-1}\right] = 
\  {2\,\epsilon\over\pi}\!\!\!
\, &\sum_{\stackrel{\scriptstyle \mathrm{edges}}{\  \ue \in \uG_\mathrm{cr}}}
\deltae\theta(\ue) {\mathcal{L}' \big(\theta(\ue) \big)}
\\
+{\epsilon\over \pi}
&\sum_{\stackrel{\scriptstyle \mathrm{chords} }{\, \, \ue \in \uG_{0^+} }}
\deltae\theta_\mathrm{n}(\vec{\ue} \,)\mathcal{H}'\big(\theta_\mathrm{n}(\vec{\ue} \, )\big)
+\deltae\theta_\mathrm{s}(\vec{\ue} \,)\mathcal{H}'\big(\theta_\mathrm{s}(\vec{\ue} \,)\big)
 \ +\ \mathrm{O}(\epsilon^2)
\end{split}   
\end{equation}
where $\mathcal{H}'(\theta)=\theta \cot\theta$ is the derivative of the function
\begin{equation}
\label{ }
\mathcal{H}(\theta)
= 2\,\theta\log(2\sin\theta) + \cyrLL(\theta)
\end{equation}
and where $\cyrLL(\theta)$ is the Lobachevsky function defined in \ref{lobachevsky}.
Remember that $\theta(\ue)=(\theta_\mathrm{n}(\vec{\ue} \, )+\theta_\mathrm{s}(\vec{\ue} \, ))/2$ is the conformal edge angle for general triangulations.
\end{Prop}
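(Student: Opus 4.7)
The plan is to expand $\tr\bigl[\delta\Deltaconf\cdot\Delta^{-1}_{\mathrm{cr}}\bigr]$ edge-by-edge over the weak Delaunay graph $\uG_{0^+}$, exploiting the fact that $\Deltaconf$ is a weighted Laplacian whose edge weight $\tan\theta(\ue)$ depends on the \emph{single} conformal angle $\theta(\ue)$ rather than on $\theta_\mathrm{n}(\vec{\ue}\,)$ and $\theta_\mathrm{s}(\vec{\ue}\,)$ separately. The derivation must then bifurcate into a contribution from the regular edges $\ue\in\mathrm{E}(\uG_\mathrm{cr})$ and a distinct contribution from the chords $\ue\in\mathrm{E}(\uG_{0^+})\setminus\mathrm{E}(\uG_\mathrm{cr})$, since these two classes admit different formulae for the critical Green's function on their endpoints.

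First I would invoke the elementary matrix identity: for any symmetric weighted Laplacian $\mathcal{O}$ with edge weights $w(\ue)$,
\begin{equation*}
\tr\bigl[\delta\mathcal{O}\cdot\Delta^{-1}_{\mathrm{cr}}\bigr]
\ = \
\sum_{\ue=\overline{\uu\uv}}\delta w(\ue)\,\Bigl([\Delta^{-1}_{\mathrm{cr}}]_{\uu,\uu}+[\Delta^{-1}_{\mathrm{cr}}]_{\uv,\uv}-2\,[\Delta^{-1}_{\mathrm{cr}}]_{\uu,\uv}\Bigr),
\end{equation*}
the sum being taken over all edges of the graph on which $\mathcal{O}$ is defined. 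By Lemma \ref{epsilontildeF} the edge set of $\uG_\epsilon$ is stable and equal to $\mathrm{E}(\uG_{0^+})$ for $0<\epsilon<\tilde\epsilon_F$, so the sum runs over $\mathrm{E}(\uG_{0^+})$; Kenyon's normalization $[\Delta^{-1}_\mathrm{cr}]_{\uu,\uu}=0$ then reduces each edge contribution to $-2\,\delta w(\ue)\cdot[\Delta^{-1}_\mathrm{cr}]_{\uu,\uv}$.

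Next I would treat the two edge classes in turn. For a regular edge $\ue\in\mathrm{E}(\uG_\mathrm{cr})$, Remark \ref{edge-value} (formula \ref{GreenEdge}) gives $[\Delta^{-1}_\mathrm{cr}]_{\uu,\uv}=-\pi^{-1}\theta(\ue)\cot\theta(\ue)$, and Taylor expanding $\delta(\tan\theta(\ue))=\sec^2\theta(\ue)\cdot\epsilon\,\deltae\theta(\ue)+\mathrm{O}(\epsilon^2)$ together with $\mathcal{L}'(\theta)=\theta/(\sin\theta\cos\theta)$ assembles into the first sum of \ref{varTrLlC1} with its prefactor $2\epsilon/\pi$. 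For a chord $\ue=\overline{\uu\uv}\in\mathrm{E}(\uG_{0^+})\setminus\mathrm{E}(\uG_\mathrm{cr})$, the endpoints are cocyclic but non-adjacent in $\uG_\mathrm{cr}$, so the extension of Kenyon's formula to cocyclic vertices (equation \ref{GreenEdgeChord}) applies and gives $[\Delta^{-1}_\mathrm{cr}]_{\uu,\uv}=-\pi^{-1}\theta_\mathrm{n}(\vec{\ue}\,)\cot\theta_\mathrm{n}(\vec{\ue}\,)$; because the unperturbed conformal angle vanishes ($\theta_\mathrm{n}=-\theta_\mathrm{s}$, hence $\tan\theta(\ue)=0$ at $\epsilon=0$), one has $\delta(\tan\theta(\ue))=\epsilon\cdot\tfrac{1}{2}(\deltae\theta_\mathrm{n}(\vec{\ue}\,)+\deltae\theta_\mathrm{s}(\vec{\ue}\,))+\mathrm{O}(\epsilon^2)$, and the product symmetrizes thanks to the evenness of $\mathcal{H}'(\theta)=\theta\cot\theta$ to yield the chord sum of \ref{varTrLlC1} with its prefactor $\epsilon/\pi$.

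The calculation itself is a few lines of trigonometry; the essential subtlety is conceptual, namely the insistence on computing $\delta\Deltaconf$ relative to $\uG_{0^+}$ rather than $\uG_\mathrm{cr}$. A chord of $\uG_{0^+}$ is geometrically absent at $\epsilon=0$ yet contributes at linear order in $\epsilon$, precisely because although its critical weight $\tan\theta(\ue)$ vanishes, its derivative does not, and this linearized weight is paired with the nontrivial cocyclic value of the Green's function supplied by \ref{GreenEdgeChord}. This is exactly what makes the formula for $\Deltaconf$ carry an anomalous chord piece absent from the analogous formulae for $\Delta$ and $\mathcal{D}$ of Proposition \ref{T1stVar}: in those operators the weights involve $\tan\theta_\mathrm{n}(\vec{\ue}\,)$ and $\tan\theta_\mathrm{s}(\vec{\ue}\,)$ separately in an additive way that automatically cancels across a chord at linear order.
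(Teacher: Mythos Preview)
Your proof is correct and follows essentially the same route as the paper's own argument in Section~\ref{ssProofConfLap1}: compute the off-diagonal variation $\delta[\Deltaconf]_{\uu\uv}=-\epsilon\sec^2\theta(\ue)\,\deltae\theta(\ue)+\mathrm{O}(\epsilon^2)$, bifurcate into the cases $\theta_\mathrm{n}=\theta_\mathrm{s}$ (regular edge, $\sec^2\theta(\ue)$) versus $\theta_\mathrm{n}=-\theta_\mathrm{s}$ (chord, $\sec^2 0=1$), pair with the appropriate Green's function value \ref{GreenEdge} or \ref{GreenEdgeChord}, and identify $\mathcal{L}'$ and $\mathcal{H}'$. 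Your explicit appeal to the weighted-Laplacian trace identity and the normalization $[\Delta^{-1}_\mathrm{cr}]_{\uu,\uu}=0$ is a slightly more structured presentation of the same step the paper performs by direct summation over vertex pairs.

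One small correction to your closing remark: for $\Delta$ the chord contributions do \emph{not} cancel at linear order---they are present in the sum over $\mathrm{E}(\uG_{0^+})$ in Proposition~\ref{T1stVar}. The point is rather that for $\Delta$ a chord contributes a term of the \emph{same functional form} $\mathcal{L}'(\theta_\mathrm{n})$ as a regular edge (since $\sec^2\theta_\mathrm{n}=\sec^2\theta_\mathrm{s}$ in both cases), whereas for $\Deltaconf$ the chord picks up $\sec^2 0=1$ instead of $\sec^2\theta_\mathrm{n}$, producing the distinct $\mathcal{H}'$ term. This does not affect the validity of your proof.
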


\begin{Rem}
\label{R1stVarC}
Up to order $\epsilon^2$, $\log\det \Deltaconf(\epsilon)$ can still be written as a sum of terms reflecting
the local geometry of the weak Delaunay graph $\uG_{0^+}$.
(see \ref{sswwococycl}).
\begin{equation}
\label{KenyonLogDetC}
\begin{split}
\log\det\,\Deltaconf(\epsilon) \ =\ &
{2\over \pi}
\sum_{\stackrel{ \scriptstyle \mathrm{edges}}{\ \ue \in \uG_\mathrm{cr}}} 
\mathcal{L}\big(\theta(\ue,\epsilon)\big)
\\
+ &{1\over \pi}
\sum_{\stackrel{\scriptstyle \mathrm{chords}}{\, \, \ue \in \uG_{0^+} }} \!\!
\mathcal{H}\big(\theta_\mathrm{n}(\ue,\epsilon)\big)
+
\mathcal{H}\big(\theta_\mathrm{s}(\vec{\ue},\epsilon)\big) \ +\ \mathrm{O}(\epsilon^2)
\end{split}
\end{equation}
\end{Rem}

\begin{Prop}
\label{T1stVarD}
\textbf{K\"ahler operator.}
For the K\"ahler operator $\mathcal{D}(\epsilon)$, a local formula also holds at order $\epsilon$. It involves the variations of the angles 
$\theta_\mathrm{n}(\vec{\ue}, \epsilon )$ 
and $\theta_\mathrm{s}(\vec{\ue}, \epsilon)$ for edges $\ue \in \mathrm{E}(\uG_{0^+})$, but also the variations of the circumradii 
$R(\uf,\epsilon)$ for faces $\uf \in \mathrm{F}(\uG_{0^+})$. 
We note that $R(\uf,\epsilon) = R_\mathrm{cr} + \delta R(\uf, \epsilon)= R_\mathrm{cr}+ \epsilon\,\delta_\epsilon R(\uf)+\mathrm{O}(\epsilon^2)$.
\begin{equation}
\label{varTrLlLK1}
\begin{split}
\tr\left[\delta\mathcal{D}(\epsilon) \cdot\mathcal{D}_{\mathrm{cr}}^{-1}\right]\ =&\ 
{\epsilon\over\pi}\sum_{\stackrel{\scriptstyle \mathrm{edges}}{\ \, \ue \in \uG_{0^+}}} 
\deltae\theta_\mathrm{n}(\vec{\ue} \, )\, {\mathcal{L}'(\theta_\mathrm{n}(\vec{\ue} \, ))}
+ \deltae\theta_\mathrm{s}(\ue) \,{\mathcal{L}'(\theta_\mathrm{s}(\vec{\ue} \, ))}\\
\ -&\ {\epsilon} \ \sum_{\stackrel{\scriptstyle \mathrm{faces}}{\ \ \, \uf \in \uG_{0^+}}}{\deltae R(\uf)\over R_\mathrm{cr}}\quad+\quad\mathrm{O}(\epsilon^2)
\end{split}
\end{equation}
\end{Prop}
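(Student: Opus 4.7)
The plan is to reduce the trace $\tr[\delta\mathcal{D}\cdot\mathcal{D}_{\mathrm{cr}}^{-1}]$ to a sum over edges (and then faces) by exploiting the edge-weighted structure of $\mathcal{D}$, mirroring the proof of Proposition \ref{T1stVar} but carefully tracking the additional circumradius dependence. First, I record the general identity: for any operator of the form $\mathcal{O}\phi(\uu)=\sum_{\vec{\ue}=(\uu,\uv)}w(\vec{\ue})(\phi(\uu)-\phi(\uv))$ and any Green's function $G$,
\begin{equation*}
\tr[\mathcal{O}\cdot G]\ =\ \sum_{\vec{\ue}=(\uu,\uv)}w(\vec{\ue})\bigl(G_{\uu,\uu}-G_{\uu,\uv}\bigr),
\end{equation*}
the sum running over oriented edges. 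Applied to $\mathcal{O}=\delta\mathcal{D}$ and $G=\mathcal{D}_{\mathrm{cr}}^{-1}=R_{\mathrm{cr}}^{2}\,\Delta_{\mathrm{cr}}^{-1}$, and using Kenyon's edge value \ref{GreenEdge} together with its extension to chords \ref{GreenEdgeChord} (both of which cover $\uG_{0^+}$), together with the normalization $[\Delta_{\mathrm{cr}}^{-1}]_{\uu,\uu}=0$, the bracket becomes $R_{\mathrm{cr}}^{2}\,\frac{\theta(\ue)\cot\theta(\ue)}{\pi}$, which is the same factor that appeared in the Laplace-Beltrami calculation.

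Next I expand $\delta w_{\mathcal{D}}(\vec{\ue})$ from \ref{KahlerDelta1}, evaluated at the critical point $\theta_{\mathrm{n}}=\theta_{\mathrm{s}}=\theta$, $R_{\mathrm{n}}=R_{\mathrm{s}}=R_{\mathrm{cr}}$. Summing the two orientations of each edge, the imaginary parts $\pm\mathrm{i}/R^2$ cancel (as they must, since $\mathcal{D}_{\mathrm{cr}}$ is symmetric), leaving
\begin{equation*}
\delta w_{\mathcal{D}}(\vec{\ue})+\delta w_{\mathcal{D}}(\overleftarrow{\ue})\ =\ \frac{\sec^{2}\theta_{\mathrm{n}}\,\deltae\theta_{\mathrm{n}}+\sec^{2}\theta_{\mathrm{s}}\,\deltae\theta_{\mathrm{s}}}{R_{\mathrm{cr}}^{2}}\ -\ \frac{2\tan\theta}{R_{\mathrm{cr}}^{3}}\bigl(\deltae R_{\mathrm{n}}+\deltae R_{\mathrm{s}}\bigr).
\end{equation*}
Multiplying by $R_{\mathrm{cr}}^{2}\theta\cot\theta/\pi$ and using the identity $\sec^{2}\theta\cdot\theta\cot\theta=\theta/(\sin\theta\cos\theta)=\mathcal{L}'(\theta)$ from \ref{Lprime}, the $\delta\theta_{\mathrm{n}}$ and $\delta\theta_{\mathrm{s}}$ contributions reproduce exactly the edge-sum appearing in the Laplace-Beltrami formula \ref{varTrLlLB1}, i.e.\ the first line of \ref{varTrLlLK1}.

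The $\delta R$ contributions require one more step. Reorganizing the sum over oriented edges as a double sum over faces and incident edges,
\begin{equation*}
\sum_{\ue\in\uG_{0^+}}\frac{2\theta(\ue)}{\pi R_{\mathrm{cr}}}\bigl(\deltae R_{\mathrm{n}}+\deltae R_{\mathrm{s}}\bigr)\ =\ \sum_{\uf\in\uG_{0^+}}\frac{2\,\deltae R(\uf)}{\pi R_{\mathrm{cr}}}\sum_{\ue\in\partial\uf}\theta(\ue).
\end{equation*}
Here I invoke the isoradial flatness condition $\sum_{\ue\in\partial\uf}\theta(\ue)=\pi/2$, valid for every cyclic face of $\uG_{0^+}$ (the sub-faces created by chords at the $\epsilon\to 0^{+}$ limit remain cyclic with common circumradius $R_{\mathrm{cr}}$). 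The inner angle sum collapses to $\pi/2$, yielding the face sum $\sum_{\uf}\deltae R(\uf)/R_{\mathrm{cr}}$ with the correct sign, which is the second line of \ref{varTrLlLK1}.

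The main technical obstacle is the careful bookkeeping of chord edges of $\uG_{0^+}$ that are not edges of $\uG_{\mathrm{cr}}$: one must verify that \ref{GreenEdgeChord} applies, that $\mathcal{L}'$ extended as an even function gives the correct $\theta_{\mathrm{s}}=-\theta_{\mathrm{n}}$ contribution, and that the refinement $\uG_{0^+}\to\widehat{\uG}_{0^+}$ (where chords are added inside larger cyclic faces) leaves both sides of \ref{varTrLlLK1} invariant because additional chords carry vanishing K\"ahler weight, contribute zero to the trace, and the adjacent sub-triangles share a common circumradius $R_{\mathrm{cr}}$ so that the face sum is unambiguous at first order in $\epsilon$.
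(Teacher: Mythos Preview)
Your approach is essentially the same as the paper's: split $\delta\mathcal{D}$ into an angle-variation piece and a radius-variation piece, identify the former with the Laplace--Beltrami trace from Proposition~\ref{T1stVar}, and reorganise the latter as a sum over faces using an angle-sum identity. The paper carries this out by passing to the triangulation $\widehat{\uG}_{0^+}$ and keeping $\theta_\mathrm{n}$ and $\theta_\mathrm{s}$ separate throughout; your version works directly on $\uG_{0^+}$ with the conformal angle $\theta(\ue)$.

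There is one genuine slip in your bookkeeping of chords. For an edge $\ue$ of $\uG_{0^+}$ the orientation-symmetrised radius term is
\[
-\frac{2}{R_\mathrm{cr}^{3}}\bigl(\tan\theta_\mathrm{n}\,\deltae R_\mathrm{n}+\tan\theta_\mathrm{s}\,\deltae R_\mathrm{s}\bigr),
\]
not $-\frac{2\tan\theta}{R_\mathrm{cr}^{3}}(\deltae R_\mathrm{n}+\deltae R_\mathrm{s})$. These agree on regular edges ($\theta_\mathrm{n}=\theta_\mathrm{s}=\theta$) but not on chords, where $\theta_\mathrm{s}=-\theta_\mathrm{n}$ and the two adjacent faces acquire \emph{different} circumradius variations once $\epsilon>0$; your version would then give zero, which is wrong. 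Relatedly, the Green's-function bracket for a chord is $R_\mathrm{cr}^{2}\,\theta_\mathrm{n}\cot\theta_\mathrm{n}/\pi$ (equation~\eqref{GreenEdgeChord}), not $R_\mathrm{cr}^{2}\,\theta\cot\theta/\pi$. Finally, the flatness identity you invoke, $\sum_{\ue\in\partial\uf}\theta(\ue)=\pi/2$, holds only for triangular faces; for a cyclic $n$-gon it is $(n-2)\pi/2$. The paper sidesteps all three issues simultaneously by working on the triangulation $\widehat{\uG}_{0^+}$ and grouping the radius contributions triangle by triangle, where the identity $\theta_\mathrm{n}(\vv{\uu\uv})+\theta_\mathrm{n}(\vv{\uv\uw})+\theta_\mathrm{n}(\vv{\uw\uu})=\pi/2$ holds unconditionally and the $\pm\mathrm{i}$ pieces cancel within each triangle. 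If you repair your argument by replacing $\theta$ with $\theta_\mathrm{n}$ (resp.\ $\theta_\mathrm{s}$) attached to the adjacent face and using $\sum_{\vec{\ue}\in\partial\uf}\theta_\mathrm{n}(\vec{\ue})=(n_\uf-2)\pi/2$, you recover exactly the paper's face sum.
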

\begin{Rem}
\label{rLDKvar}
Up to order $\epsilon^2$, $\log\det \mathcal{D}(\epsilon)$ can still be written as a sum of terms reflecting the local geometry of 
$\uG_{0^+}$
\begin{equation}
\label{KenyonLogDetD}
\begin{split}
\log\det \mathcal{D}(\epsilon)=&\ {1\over \pi} \sum_{\stackrel{\scriptstyle \mathrm{edges}}{\ \, \ue \in \uG_{0^+}}}
\mathcal{L}\big(\theta_\mathrm{n}(\ue,\epsilon)\big)
+\mathcal{L}\big(\theta_\mathrm{s}(\ue,\epsilon) \big) \\
&-  \, \sum_{\stackrel{\scriptstyle \mathrm{faces}}{\ \, \, \uf \in \uG_{0^+}}}
\log R(\uf,\epsilon) \quad +\quad \mathrm{O}(\epsilon^2)
\end{split}
\end{equation}
{Again, we obtain a nice local expression involving the angles $\theta_\mathrm{n}(\vec{\ue} \,)$ and $\theta_\mathrm{s}(\vec{\ue} \,)$ and the circumradii 
$R(\uf)$. Like the conformal Laplacian, the global conformal invariance properties of the K\"ahler operator are not evident in the result. However, concyclic configurations and chords do not play any special role.}
\end{Rem}

\subsubsection{Proof of Proposition~\ref{T1stVar}}
We first consider the variation of the Laplace-Beltrami operator $\Delta$ under a deformation of the form
\ref{zDeform}.
One can use \ref{VarDelta} to compute explicitly the first order variation of $\log\det\Delta$, but it is simpler to start from its definition in terms of angles \ref{BLDelta}.
For an edge $\ue= \overline{\uu \uv}$ of $\uG_\epsilon$
\begin{equation}
\label{ }
\big[\Delta(\epsilon) \big]_{\uu \uv}=-c(\ue,\epsilon) = \, - \, {\tan \theta_\mathrm{n}(\vec{\ue}, \epsilon ) + \tan \theta_\mathrm{s}(\vec{\ue}, \epsilon ) \over 2 }
\end{equation}
This implies that the variation is
\begin{equation}
\label{ }
\big[\delta\Delta(\epsilon)\big]_{\uu \uv} =-{\epsilon\over 2}
\Big(\deltae\theta_\mathrm{n}(\vec{\ue} \, )\,\sec^2\theta_\mathrm{n}(\vec{\ue} \, )
+ \deltae\theta_\mathrm{s}(\vec{\ue} \, )\,\sec^2\theta_\mathrm{s}(\vec{\ue} \, )\Big)\  +\ \mathrm{O}(\epsilon^2)
\end{equation}
where $\deltae\theta_\mathrm{n}(\vec{\ue} \, )$ and $\deltae\theta_\mathrm{n}(\vec{\ue} \, )$ are of order $\mathrm{O}(1)$.
The limit graph $\uG_{0^+}$ is weakly Delaunay and isoradial so 
either $\theta_\mathrm{n}(\vec{\ue} \, )=\theta_\mathrm{s}(\vec{\ue} \, )$ or $\theta_\mathrm{n}(\vec{\ue} \, )=-\theta_\mathrm{s}(\vec{\ue} \, )$. In both case $\sec^2\theta_\mathrm{n}(\vec{\ue} \, ) =\sec^2 \theta_\mathrm{s}(\vec{\ue} \, )$ so that at first order
\begin{equation}
\label{ }
\begin{array}{ll}
\D \big[\delta\Delta(\epsilon)\big]_{\uu \uv} 
&\D = -{\epsilon} \, {\deltae\theta_\mathrm{n}(\vec{\ue} \, )+\deltae\theta_\mathrm{s}(\vec{\ue} \, )\over 2}\sec^2\theta_\mathrm{n}(\vec{\ue} \, )
+\mathrm{O}(\epsilon^2) \\ \\
&\D =-{\epsilon} \, {\deltae\theta_\mathrm{n}(\vec{\ue} \, )+\deltae\theta_\mathrm{s}(\vec{\ue} \, )\over 2}\sec^2\theta_\mathrm{s}(\vec{\ue} \, )
+\mathrm{O}(\epsilon^2)
\end{array}
\end{equation}
It remains to combine this with the propagator 
$\left[\Delta^{-1}_\mathrm{cr} \right]_{\uv  \uu}$
which for regular edges $\ue = \overline{\uu \uv}$ of $\uG_{0^+}^\bullet = \uG_\mathrm{cr}$ is 
\begin{equation}
\label{prop1edge}
\left[\Delta^{-1}_\mathrm{cr} \right]_{\uv \uu}= -{1\over\pi} \, \theta(\ue) \cot\theta(\ue)
\end{equation}
A similar relation is in fact valid for chords of $\uG_{0^+}$
\begin{equation}
\label{ }
\left[\Delta^{-1}_\mathrm{cr} \right]_{\uv \uu}=-{1\over\pi} \,
\theta_\mathrm{n}(\vec{\ue} \,) \cot\theta_\mathrm{n}(\vec{\ue} \,)= -{1\over\pi} \, \theta_\mathrm{s}(\vec{\ue} \,) \cot\theta_\mathrm{s}(\vec{\ue} \,)
\end{equation}
Thus the first order variation is
\begin{equation}
\label{bigVarDelta}
\begin{split}
  \tr\left[\deltae \Delta\cdot \Delta^{-1}_\mathrm{cr} \right]  =
&\sum_{\stackrel{\scriptstyle \text{vertices}}{\ \  \, \uu,\uv \, \in \, \uG_{0^+}}} \deltae \Delta_{\uu \uv}\left[\Delta^{-1}_\mathrm{cr} \right]_{\uv \uu}\,\\
=& \  {1\over\pi} \sum_{\stackrel{\scriptstyle \text{edges}}{\ \ \, \ue \, \in \, \uG_{0^+}}} 
\deltae\theta_\mathrm{n}(\vec{\ue} \, )  \theta_\mathrm{n}(\vec{\ue} \,) \cot \theta_\mathrm{n}(\vec{\ue} \,)
\sec^2 \theta_\mathrm{n}( \vec{\ue} \, ) + \delta\theta_\mathrm{s}(\vec{\ue} \,) \theta_\mathrm{s}(\vec{\ue} \,) \cot\theta_\mathrm{s}(\vec{\ue} \,)
\sec^2 \theta_\mathrm{s}(\vec{\ue} \,) \\
=&\ {1\over\pi} \sum_{\stackrel{\scriptstyle \text{edges}}{\ \ \, \ue \, \in \, \uG_{0^+}}} 
\deltae\theta_\mathrm{n}(\vec{\ue} \,) {\theta_\mathrm{n}(\vec{\ue} \,)\over \sin \theta_\mathrm{n}(\vec{\ue} \,)\cos \theta_\mathrm{n}(\vec{\ue} \,) }+ \deltae\theta_\mathrm{s}(\vec{\ue} \,) {\theta_\mathrm{n}(\vec{\ue} \,)\over \sin \theta_\mathrm{s}(\vec{\ue} \,) \cos \theta_\mathrm{s}(\vec{\ue} \,) }\\
=&\ {1\over\pi} \sum_{\stackrel{\scriptstyle \text{edges}}{\ \ \, \ue \, \in \, \uG_{0^+}}} 
\deltae\theta_\mathrm{n}(\vec{\ue} \,) {\mathcal{L}' \big(\theta_\mathrm{n}(\vec{\ue} \,) \big)}+ \deltae\theta_\mathrm{s}(\vec{\ue} \,) {\mathcal{L}' \big( \theta_\mathrm{n}(\vec{\ue} \,) \big) }\\
=&\ \deltae\left[{1\over\pi} \sum_{\stackrel{\scriptstyle \text{edges}}{\ \ \, \ue \, \in \, \uG_{0^+}}} 
\mathcal{L}\big(\theta_\mathrm{n}(\vec{\ue}, \epsilon) \big)+\mathcal{L}\big( \theta_\mathrm{s}(\vec{\ue}, \epsilon) \big) \right]
\end{split}   
\end{equation}
This, together with \ref{KenyonLogDet3}, leads to \ref{KenyonLogDetLB}, and this ends the proof of Proposition~\ref{T1stVar}.


\subsubsection{Proof of Proposition \ref{T1stVarC}}
\label{ssProofConfLap1}
For an edge $\ue= \overline{\uu \uv}$ of $\uG_\epsilon$ the matrix element the conformal Laplacian $\Deltaconf$ is 
\begin{equation}
\label{ }  
\big[\Deltaconf(\epsilon) \big]_{\uu \uv}=-\tan \theta(\ue, \epsilon) \ ,\ \ \theta(\ue, \epsilon)={ \theta_\mathrm{n}(\vec{\ue}, \epsilon)+\theta_\mathrm{s}(\vec{\ue}, \epsilon)  \over 2}
\end{equation}
This implies that the variation is
\begin{equation}
\label{ }
\big[\delta \Deltaconf(\epsilon) \big]_{\uu \uv} =
- \epsilon \, {\deltae\theta_\mathrm{n}(\vec{\ue} \, )+ \deltae\theta_\mathrm{s}(\vec{\ue} \, ) \over 2} 
\, \sec^2\left({\theta_\mathrm{n}(\vec{\ue} \, )+\theta_\mathrm{s}(\vec{\ue} \, )\over 2}\right)\  +\ \mathrm{O}(\epsilon^2)
\end{equation}
Keep in mind that the limit graph $\uG_{0^+}$ is weakly Delaunay and isoradial and so either $\theta_\mathrm{n}(\vec{\ue} \, )=\theta_\mathrm{s}(\vec{\ue} \, )$ 
or $\theta_\mathrm{n}(\vec{\ue} \, )=-\theta_\mathrm{s}(\vec{\ue} \, )$.
The first case corresponds to a regular edge, while
the second case corresponds to a chord.
Thus to first order in $\epsilon$ the matrix entry is
\begin{equation}
\label{varDelConf}
\begin{array}{ll}
\D \big[ \deltae \Deltaconf \big]_{\uu \uv} 
&\D =
\begin{cases}
      \D \ - \, { \deltae\theta_\mathrm{n}(\vec{\ue} \, )\sec^2\theta_\mathrm{n}(\vec{\ue} \, )
      + \deltae\theta_\mathrm{s}(\vec{\ue} \, ) \sec^2\theta_\mathrm{s}(\vec{\ue} \, ) \over 2}
      & \text{if}\ \theta_\mathrm{n}(\vec{\ue} \, )=\theta_\mathrm{s}(\vec{\ue} \, ) \\
      \ \\
      \D \  - \, { \deltae\theta_\mathrm{n}(\vec{\ue} \, )+ \deltae\theta_\mathrm{s}(\vec{\ue} \, ) \over 2}
      & \text{if}\ \theta_\mathrm{n}(\vec{\ue} \, )=-\theta_\mathrm{s}(\vec{\ue} \, )
\end{cases} \\ \\
&\D =
\begin{cases}      
      \D \ \big[\deltae \Delta \big]_{\uu \uv}
      & \D \text{if}\ \theta_\mathrm{n}(\vec{\ue} \, )=\theta_\mathrm{s}(\vec{\ue} \, ) \\
      \ \\
      \D \ \big[\deltae \Delta \big]_{\uu \uv}\!\!\! \ + \ {\deltae\theta_\mathrm{n}(\vec{\ue} \, ) \tan^2\theta_\mathrm{n}(\vec{\ue} \, )
      + \deltae\theta_\mathrm{s}(\vec{\ue} \, ) \tan^2\theta_\mathrm{s}(\vec{\ue} \, ) \over 2}
      & \D \text{if}\ \theta_\mathrm{n}(\vec{\ue} \, )=-\theta_\mathrm{s}(\vec{\ue} \, )
\end{cases} \\ \\
\end{array}
\end{equation}
The first order variation of the log-determinant 
reads as a sum over the edges of $\uG_{0^+}$, but it is different for the edges in $\uG_{0^+}^\bullet=\uG_\mathrm{cr}$ and the chords of  $\uG_{0^+}$.
Combining with \ref{prop1edge} we get at first order
\begin{equation}
\label{}
\begin{split}
  \tr\left[\deltae \Deltaconf\cdot \Delta_{\mathrm{cr}}^{-1}\right]  
  &=  \sum_{\stackrel{\scriptstyle \text{vertices}}{\ \ \, \uu,\uv \, \in \, \uG_{0^+}}} 
\big[\deltae \Deltaconf \big]_{\uu \uv}
\left[\Delta_\mathrm{cr}^{-1}\right]_{\uv \uu}\,\\ 
  =\ {2\over\pi}&
  \sum_{\substack{\mathrm{edges}\\ \ \, \ue \, \in \, \uG_\mathrm{cr}}}\deltae\theta(\ue) {\mathcal{L}'(\theta(\ue))}
  +{1\over \pi}\sum_{\substack{\mathrm{chords}\\ \ \, \ue \, \in \, \uG_{0^+} }}\deltae\theta_\mathrm{n}(\vec{\ue} \,)
  \mathcal{H}'(\theta_\mathrm{n}(\vec{\ue} \,))+\deltae\theta_\mathrm{s}(\vec{\ue} \,)\mathcal{H}'(\theta_\mathrm{s}(\vec{\ue} \,))
\end{split}   
\end{equation}
with the function $\mathcal{H}(\theta)$ given by
\begin{equation}
\label{ }
\mathcal{H}(\theta)=\int_0^\theta dt\, t \, \cot (t) = 2\theta\log(2\sin\theta) + \cyrLL(\theta)
\end{equation}
This leads to \ref{KenyonLogDetC} and the proof of Proposition~\ref{T1stVarC}.

\subsubsection{Proof of Proposition~\ref{T1stVarD}}
The variation of the K\"ahler operator $\mathcal{D}$
starts from the expression of the matrix elements $\mathcal{D}_{\uu,\uv}$ of an edge $\vec{\ue}= (\uu,\uv)$ in terms of the angles 
$\theta_\mathrm{n}(\vec{\ue}, \epsilon)$ and $\theta_\mathrm{s}(\vec{\ue}, \epsilon)$ and of the circumradii $R_\mathrm{n}(\vec{\ue}, \epsilon)$ and 
$R_\mathrm{s}( \vec{\ue}, \epsilon)$ given by 
\ref{ConfDelta}, namely
\begin{equation}
\label{ }
\big[\mathcal{D}(\epsilon) \big]_{\uu \uv}=-{1\over 2}\left({\tan\theta_\mathrm{n}(\vec{\ue}, \epsilon )+i\over R^2_\mathrm{n}(\vec{\ue} , \epsilon) }
+{\tan\theta_\mathrm{s}(\vec{\ue}, \epsilon )-i\over R^2_\mathrm{s}(\vec{\ue}, \epsilon) } \right)
\end{equation}
Its variation is therefore 
\begin{equation}
\label{deltaDuv}
\begin{split}
\big[ \delta\mathcal{D}(\epsilon)\big]_{\uu \uv}=&\ \deltae\mathcal{D}_{\uu \uv}^{(1)}+\deltae\mathcal{D}_{\uu \uv}^{(2)}\ +\ \mathrm{O}(\epsilon^2)\\
\deltae\mathcal{D}_{\uu \uv}^{(1)}=&\  -\left({1\over 2\,R_\mathrm{n}^2(\vec{\ue} \,)}\,\deltae \tan\theta_\mathrm{n}(\vec{\ue} \, )
+{1\over 2\,R_\mathrm{s}^2(\vec{\ue} \,)}\,\deltae \tan\theta_\mathrm{s}(\vec{\ue} \, ) \right)\\
\deltae\mathcal{D}_{\uu \uv}^{(2)}=&\ \left({\tan\theta_\mathrm{n}(\vec{\ue} \, )+ i\over R^3_\mathrm{n}(\vec{\ue} \,) }\,\deltae R_\mathrm{n}(\vec{\ue} \,)+{\tan\theta_\mathrm{s}(\vec{\ue} \, )- i\over R^3_\mathrm{s}(\vec{\ue} \, )}\,\deltae R_\mathrm{s}(\vec{\ue} \,)\right)
\end{split}
\end{equation}
For an isoradial triangulation (critical case), $R_\mathrm{n}(\vec{\ue} \,)=R_\mathrm{s}(\vec{\ue} \, )=R_\mathrm{cr}$, therefore one has $\mathcal{D}_{\mathrm{cr}}=R^{-2}_\mathrm{cr} \Delta_\mathrm{cr}$.
Thus in the critical case, the contribution made by the first term in \ref{deltaDuv} to the variation is
\begin{equation}
\label{ }
\deltae\mathcal{D}_{\uu \uv}^{(1)}=R_\mathrm{cr}^{-2}\, \deltae\Delta_{\uu \uv}\ \implies\quad \tr\left[\deltae\mathcal{D}^{(1)}\cdot \mathcal{D}^{-1}_\mathrm{cr}  \right]=\tr\left[\deltae\Delta\cdot \Delta^{-1}_\mathrm{cr} \right]
\end{equation}
The second term's contribution can be reorganized as a sum over faces of $\widehat{\uG}_{0^+}$, i.e. counter-clockwise oriented 
triangles $\uf=(\uu,\uv,\uw)$
\begin{equation}
\label{ }
\begin{split}
\tr&\left[\deltae\mathcal{D}^{(2)}\cdot \mathcal{D}^{-1}_\mathrm{cr} \right]=
\ \sum_{\stackrel{\scriptstyle \text{vertices}}{\ \, \, \,  \uu, \uv \, \in \, \uG_{0^+}}}
\  \deltae\mathcal{D}^{(2)}_{\uu  \uv}\ \big[\mathcal{D}^{-1}_\mathrm{cr}\big]_{\uv \uu}
\\
&=\sum_{\substack{\scriptstyle \text{triangles}  \\ \uf \, = \, (\uu, \uv, \uw) \\ \text{in $\widehat{\uG}_{0^+}$} }} 
{\deltae R(\uf)\over R_\mathrm{cr}^3}\left( 
\begin{split}&
(\tan\theta_\mathrm{n}( \vv{\uu \uv})+\mathrm{i}) \big[\mathcal{D}^{-1}_\mathrm{cr}\big]_{\uv\uu} \ 
+ (\tan\theta_\mathrm{s}(\vv{\uv \uu})-\mathrm{i}) 
\big[\mathcal{D}^{-1}_\mathrm{cr} \big]_{\uu \uv} \\
+& (\tan\theta_\mathrm{n}(\vv{\uv \uw})+\mathrm{i}) 
 \big[\mathcal{D}^{-1}_\mathrm{cr}\big]_{\uw \uv} \ 
+(\tan\theta_\mathrm{s}(\vv{\uw \uv})-\mathrm{i}) 
\big[\mathcal{D}^{-1}_\mathrm{cr}\big]_{\uv \uw} \\
+& (\tan\theta_\mathrm{n}(\vv{\uw \uu})+\mathrm{i})  
\big[\mathcal{D}^{-1}_\mathrm{cr}\big]_{\uw \uu} \ 
+(\tan\theta_\mathrm{s}(\vv{\uu \uw})-\mathrm{i})  
\big[\mathcal{D}^{-1}_\mathrm{cr}\big]_{\uw \uu}
\end{split}\right)
\end{split}
\end{equation}
Using the fact that $\theta_\mathrm{n}(\vv{\uu \uv})=\theta_\mathrm{s}(\vv{\uv \uu})$ and that for the critical case $$ \big[\mathcal{D}^{-1}_\mathrm{cr}\big]_{\uv \uu}= \big[\mathcal{D}^{-1}_\mathrm{cr}\big]_{\uu \uv}= -{1\over\pi} \, R_\mathrm{cr}^2 \, \theta_\mathrm{n}(\vv{\uu  \uv})\cot\theta_\mathrm{n}(\vv{\uu \uv})$$
and the fact that for a triangle $\uf=(\uu,\uv,\uw)$, one has
$$\theta_\mathrm{n} ( \vv{\uu \uv} ) \ + \ \theta_\mathrm{n} ( \vv{\uv \uw} ) \ +\ \theta_\mathrm{n} ( \vv{\uw \uu})  \ = \ \pi/2$$
we obtain
\begin{equation}
\label{ }
\tr\left[\deltae\mathcal{D}^{(2)}\cdot \mathcal{D}^{-1}_\mathrm{cr} \right]=
-\sum_{\substack{\text{faces} \\ \ \ \ \uf \, \in \, \widehat{\uG}_{0^+} }} 
{\deltae R(\uf)\over R_\mathrm{cr}} = 
-\sum_{\substack{\text{faces} \\ \ \ \ \uf \, \in \, \widehat{\uG}_{0^+} }} 
\deltae \log R(\uf)
\end{equation}
This leads to \ref{KenyonLogDetD} and to Proposition~\ref{T1stVarD}.

\subsection{Second order variations}
\label{subsection-second-order}
\subsubsection{Principle of the calculation}

In order to probe the second trace term in the perturbative expansion 
\ref{log-expansion}, we consider a {\it bi-local} deformation $\uG_{\underline{\epsilon}}$
of the underlying critical graph $\uG_\mathrm{cr}$
with an embedding of the form
\begin{equation}
z_{\underline{\epsilon}}(\uv) \, := \ z_\mathrm{cr}(\uv) \, + \, \epsilon_1 F_1(\uv) \, + \, \epsilon_2 F_2(\uv)
\end{equation}
where $\underline{\epsilon} = (\epsilon_1, \epsilon_2)$ is a pair of independent deformation parameters
and where $F_1, F_2 \in \Bbb{C}^{\mathrm{V}(\uG_\mathrm{cr})}$ 
are two functions, with finite supports $\Omega_1:= \Omega_{F_1}$ and $\Omega_2 := \Omega_{F_2}$
in $\mathrm{V}(\uG_\mathrm{cr})$ and
whose respective {\it lattice closures} $\overline{\Omega}_1$ and $\overline{\Omega}_2$ are 
disjoint; see \ref{omegaclosure}.
To insure that $\uG_{\underline{\epsilon}}$ is a stable Delaunay deformation, we restrict
the parameters $\epsilon_1, \epsilon_2$ within the range $[0, \tilde{\epsilon}_F )$ where
$\tilde{\epsilon}_F := \mathrm{min} (\tilde{\epsilon}_{F_1},  \tilde{\epsilon}_{F_2})$
and $\tilde{\epsilon}_{F_1}, \tilde{\epsilon}_{F_1}$ are the 
thresholds dictated by Proposition \ref{epsilontildeF}.
Furthermore we shall assume that 
the {\it distance} $d$ between the two supports is large, i.e. $d \gg R_\mathrm{cr}$ where
\begin{equation}
d \, =  \ \mathtt{dist}\big( \overline{\Omega}_1, \overline{\Omega}_2 \big) \, :=
\ \inf \Big\{  \big| z_\mathrm{cr} (\uw_1) - z_\mathrm{cr} (\uw_2) \big|  \, \Big| \, \uw_1 \in \overline{\Omega}_1 \, , \, \uw_2 \in \overline{\Omega}_2  \Big\} 
\end{equation}
We want to isolate and then examine the long-range behavior of the $\epsilon_1 \epsilon_2$ cross-term occurring within
the perturbative expansion of the log-determinant, namely in
\[ \log \det \mathcal{O}(\underline{\epsilon}) \, = \,
\log \det \mathcal{O}_\mathrm{cr} \, + \, \mathrm{tr} \Big[ \delta \mathcal{O}(\underline{\epsilon}) \cdot \mathcal{O}_\mathrm{cr}^{-1} \Big] 
\, - \, 
{1 \over 2} \tr \Big[ \big( \delta \mathcal{O}(\underline{\epsilon}) \cdot \mathcal{O}_\mathrm{cr}^{-1} \big)^2 \Big] 
\, + \, \cdots
\]
\noindent
The first trace term $\mathrm{tr} [\delta \mathcal{O}(\underline{\epsilon}) \cdot \mathcal{O}_\mathrm{cr}^{-1} ] $
contributes nothing of order $\epsilon_1 \epsilon_2$ since the lattice closures of the supports
$\overline{\Omega}_1$ and $\overline{\Omega}_2$ are disjoint. The only non-vanishing contribution
comes from the second trace term 
\begin{equation}
\label{bilinear-bilocal}
-{1 \over 2} \tr \Big[ \big( \delta \mathcal{O}(\underline{\epsilon}) \cdot \mathcal{O}_\mathrm{cr}^{-1} \big)^2 \Big] 
\end{equation}
which is bilinear in the total variation $\delta \mathcal{O}(\underline{\epsilon})$. Accordingly,
the coefficient of $\epsilon_1 \epsilon_2$ can be expressed as 
\begin{equation}
\label{ }
\begin{split}
&-\tr \Big[ \deltaeo \mathcal{O}\cdot \mathcal{O}_\mathrm{cr}^{-1} \cdot \deltaet \mathcal{O} \cdot \mathcal{O}_\mathrm{cr}^{-1} \Big]
= - \sum_{\stackrel{\scriptstyle \uu, \uv \, \in \, \overline{\Omega}_1}{\scriptstyle \up, \uq  \, \in \, \overline{\Omega}_2}}
\big[\deltaeo \mathcal{O}\big]_{\uu \uv}\, 
\big[ \mathcal{O}_\mathrm{cr}^{-1} \big]_{\uv \up} \, 
\big[ \deltaet \mathcal{O} \big]_{\up \uq}\, 
\big[\mathcal{O}_\mathrm{cr}^{-1} \big] _{\uq \uu}
\end{split}
\end{equation}
where $\deltaeo \mathcal{O}$ and $\deltaet \mathcal{O}$ are first order variations of the Laplace-like operator
$\mathcal{O}(\underline{\epsilon})$
following the notations set in Sect.~\ref{ssGenVarNot}.
The sum on the left hand side is taken over vertices $\uu$, $\uv$, $\up$, $\uq$ such that both
matrix entries $\big[\deltaeo \mathcal{O}\big]_{\uu \uv}$ and $\big[\deltaet \mathcal{O}\big]_{\up \uq}$
are non-zero. In particular this implies that
$\overline{\uu \uv}$ is an edge in  $\uG_{0^+}$
with vertices $\uu, \uv \in \overline{\Omega}_1$. Likewise
$\overline{\up \uq}$ must be an edge in $\uG_{0^+}$
with vertices $\up, \uq \in \overline{\Omega}_2$.

Provided the two zones of support $\Omega_1$ and $\Omega_2$ are far enough apart,
the matrix entries $\big[\mathcal{O}_\mathrm{cr}^{-1}\big]_{\uv \up} $ 
and $\big[\mathcal{O}_\mathrm{cr}^{-1}\big]_{\uq \uu} $
of the critical Green's function will only involve pairs of vertices
with
$|z_\mathrm{cr}(\uv) -z_\mathrm{cr}(\up) | \simeq d $
and $|z_\mathrm{cr}(\uq) -z_\mathrm{cr}(\uu) | \simeq d$.
Under these circumstances we may estimate the contributions made by these matrix entries
using the asymptotic expansion \ref{SharpAsymptotics1} for the Green's function.

It will be convenient to take a triangular completion $\widehat{\uG}_\epsilon$ of the deformed Delaunay graph 
$\uG_\epsilon$ as defined in Def.~\ref{CompletionTofG}.
Likewise $\widehat{\uG}_{0^+}$ will be the completion of the limit graph $\uG_{0^+}$
induced from $\widehat{\uG}_\epsilon$.
This will allow us to use the
variational formulae 
\ref{PVarDelta}
and \ref{PVarKahler}
for the Laplace-Beltrami and K\"ahler operators.
In general such a completion $\widehat{\uG}_\epsilon$ will not be unique.
Nevertheless the redactions satisfy
$\widehat{\uG}_\epsilon^\bullet= \uG_\epsilon$
and $\widehat{\uG}_{0^+}^\bullet = \uG_\mathrm{cr}$
regardless of the choice of completion.
The Laplace-Beltrami operator, K\"ahler operator, and the conformal Laplacian
will not be affected by this choice, since the weights assigned by the operators to any chords, 
introduced by the completion, must vanish.

\subsubsection{The Laplace-Beltrami operator}
\label{ssLaplSecOrder} 
 \ \\
 The simplest case is the Laplace-Beltrami $\Delta$ operator. We shall need two intermediate  results.
 \begin{lemma}
 \label{lnablabound}
 Let $\uf=({\uv}_1,{\uv}_2,{\uv}_3)$ be a c.c.w. oriented triangle with circumcenter $z_\mathrm{cr}(\uf)$ and circumradius $R=1$. Define
$e^{\mathrm{i}\theta_j} := z_\mathrm{cr}(\uv_j) - z_\mathrm{cr}(\uf)$ for $j=1,2,3$. Let $\nabla_{\uf \uv_j}$ be the matrix elements of the discrete derivative operator $\nabla$ restricted to the triangle $\uf$. For any integer $m\in\mathbb{Z}$ one has the uniform bounds
\begin{equation}
\label{boundnablam}
\left|\sum_{j=1}^3  \,  \nabla_{\! \uf \uv_j} \, e^{\mathrm{i} m\theta_j}\right| \le {m(m+1)\over 2}\quad, \qquad m\in\mathbb{Z}
\end{equation} 
and \begin{equation}
\label{boundnablabarm}
\left|\sum_{j=1}^3  \,  \overline\nabla_{\! \uf \uv_j} \, e^{\mathrm{i} m\theta_j}\right| \le {m(m-1)\over 2}\quad, \qquad m\in\mathbb{Z}
\end{equation} 

\end{lemma}
\begin{proof}
Using the definition \ref{nablaDef} of $\nabla$ one can rewrite
\begin{equation}
\label{nablaexpl}
\begin{split}
\sum_{j=1}^3 \, \nabla_{\! \uf \uv_j}\,e^{\mathrm{i} m\theta_j}&=\det\begin{pmatrix}
1&e^{-\mathrm{i}\theta_1}& e^{\mathrm{i} m\theta_1}   \\
1&e^{-\mathrm{i}\theta_2}& e^{\mathrm{i} m\theta_2} \\
1&e^{-\mathrm{i}\theta_3}&e^{\mathrm{i} m\theta_3}
\end{pmatrix}{\Bigg /}\det\begin{pmatrix}
1&e^{-\mathrm{i}\theta_1}& e^{\mathrm{i} \theta_1}   \\
1&e^{-\mathrm{i}\theta_2}& e^{\mathrm{i} \theta_2} \\
1&e^{-\mathrm{i}\theta_3}&e^{\mathrm{i} \theta_3}
\end{pmatrix}
\end{split}
\end{equation}
For $m>0$ we can rewrite the numerator as
\begin{equation}
\label{ }
e^{-\mathrm{i}(\theta_1+\theta_2+\theta_3)}
\det\begin{pmatrix}
1&e^{\mathrm{i}\theta_1}& e^{\mathrm{i} (m+1)\theta_1}   \\
1&e^{\mathrm{i}\theta_2}& e^{\mathrm{i} (m+1)\theta_2} \\
1&e^{\mathrm{i}\theta_3}&e^{\mathrm{i} (m+1)\theta_3}
\end{pmatrix}
\end{equation}
which involves a special case of the following Vandermonde-like determinant:
\begin{equation}
\label{ }
\det\begin{pmatrix}
1&z_1& z_1^{m+1}   \\
1&z_2& z_2^{m+1}  \\
1&z_3& z_3^{m+1} 
\end{pmatrix}
=
(z_1-z_2)(z_2-z_3)(z_3-z_1)S_{m-1}(z_1,z_2,z_3)
\end{equation}
where $S_n$ is the complete homogeneous symmetric polynomial of degree $n$ (a Schur polynomial), 
\begin{equation}
\label{ }
S_n(z_1,z_2,z_3)=\sum_{\substack{p_1,p_2,p_3\in\mathbb{N}\\p_1+p_2+p_3=n}}z_1^{p_1} z_2^{p_2} z_3^{p_3}
\end{equation}
which consists of $(n+1)(n+2)/2$ monomials. 
The numerator equals the denominator in the r.h.s. of \ref{nablaexpl} when $m=1$
and since $S_0(z_1,z_2,z_3)=1$ we get 
\begin{equation}
\label{ }
\sum_{j=1}^3 \, \nabla_{\! \uf \uv_j}\,e^{\mathrm{i} m\theta_j}= S_{m-1}(e^{\mathrm{i}\theta_1},e^{\mathrm{i}\theta_2},e^{\mathrm{i}\theta_3})
\end{equation}
when $m>0$. It is clear that for $m>0$ we have the bound
\begin{equation}
\label{ }
|S_{m-1}(e^{\mathrm{i}\theta_1},e^{\mathrm{i}\theta_2},e^{\mathrm{i}\theta_2})|\le (m-1+1)(m-1+2)/2=m(m+1)/2
\end{equation}
which is saturated when $\theta_1=\theta_2=\theta_3$.
For $m=0$ or $m=-1$ it is clear that $\sum_{j=1}^3 \, \nabla_{\! \uf \uv_j}\,e^{\mathrm{i} m\theta_j}=0$.
When $m\le -2$, we can rewrite 
\begin{equation}
\label{ }
\sum_{j=1}^3 \, \nabla_{\! \uf \uv_j}\,e^{\mathrm{i} m\theta_j}= 
e^{-\mathrm{i}\theta_1} e^{-\mathrm{i}\theta_2} e^{-\mathrm{i}\theta_3}
\ S_{-m-2}(e^{-\mathrm{i}\theta_1},e^{-\mathrm{i}\theta_2},e^{-\mathrm{i}\theta_3}) 
\end{equation}
by a similar trick. Since $-m-2\ge 0$ we get the bound
\begin{equation}
\label{ }
\left |S_{-m-2}(e^{-\mathrm{i}\theta_1},e^{-\mathrm{i}\theta_2},e^{-\mathrm{i}\theta_3}) \right| \le {(-m-2+1)(-m-2+2)/2}=m (m+1)/2
\end{equation}
Thus we get \ref{boundnablam}.
To obtain \ref{boundnablabarm} one uses simply
\begin{equation}
\label{ }
\sum_{j=1}^3  \,  \overline\nabla_{\! \uf \uv_j} \, e^{\mathrm{i} m\theta_j}=\overline{\sum_{j=1}^3  \,  \nabla_{\! \uf \uv_j} \, e^{-\mathrm{i} m\theta_j}}
\end{equation}
and \ref{boundnablam}.
\end{proof}

Now we can get uniform asymptotic estimates for the discrete derivatives of the Green function.
\begin{lemma}
\label{lNGN}
Let $\Delta_\mathrm{cr}^{-1}$ be the critical Green's function on an isoradial,  weak Delaunay 
triangulation $\uT_\mathrm{cr}$, let $\uf$ and $\ug$ be two faces (triangles),
and let $z_\mathrm{cr}(\uf)$ and $z_\mathrm{cr}(\ug)$ be the complex coordinates of their respective circumcenters $\uo_\uf$ and $\uo_\ug$.
Let $d=|z_\mathrm{cr}(\uf)-z_\mathrm{cr}(\ug)|$ be the distance between the centers.
Then the discrete double derivatives of the Green function have the following large distance asymptotics
\begin{equation}
\label{nnadjbnnt}
\begin{array}{l}
   \D \Big[ \, \nabla\   \Delta^{-1}_\mathrm{cr} \  {\overline\nabla}^\top \Big]_{\uf \ug} \    = 
   \ {1 \over {4 \pi }}  \,  \Bigg( {\prod_{\uv \in \ug} \, e^{\mathrm{i}\theta_\uv} \over { \big(z_\mathrm{cr}(\uf) -z_\mathrm{cr}(\ug) \big)^3}}
   \ - \  {\prod_{\uu \in \uf} \, e^{-\mathrm{i} \theta_\uu} \over { \big( \bar z_\mathrm{cr}(\uf) - \bar z_\mathrm{cr}(\ug) \big)^3}} \Bigg)
   \ + \  \mathrm{O}(d^{-4}) \\ \\
   \D \Big[ \, \overline{\nabla}\  \Delta^{-1}_\mathrm{cr}     \  \nabla^\top \Big]_{\uf \ug} \  = 
   \ {1 \over {4 \pi }}  \,  \Bigg( {\prod_{\uv \in \ug} \, e^{-\mathrm{i}\theta_\uv} \over { \big(\bar z_\mathrm{cr}(\uf) - \bar z_\mathrm{cr}(\ug) \big)^3}}
   \ - \  {\prod_{\uu \in \uf} \, e^{\mathrm{i} \theta_\uu} \over { \big( z_\mathrm{cr}(\uf) - z_\mathrm{cr}(\ug) \big)^3}} \Bigg)
   \ + \  \mathrm{O}(d^{-4}) 
   \end{array}
\end{equation}
and 
\begin{equation}
\label{nntbnnadj}
\begin{split}
\Big[ \, \nabla\   \Delta^{-1}_\mathrm{cr}    \  \nabla^\top \Big]_{\uf \ug} \  &   =\  - {1\over 4 \pi} {1\over (z_\mathrm{cr}(\uf) - z_\mathrm{cr}(\ug))^2} \ + \ \mathrm{O}(d^{-3})\\
\Big[ \, \overline{\nabla}\   \Delta^{-1}_\mathrm{cr}   \  {\overline\nabla}^\top\Big]_{\uf \ug} \  &   = 
\  - {1\over 4 \pi} {1\over (\bar z_\mathrm{cr}(\uf) - \bar z_\mathrm{cr}(\ug) )^2}\ +\ \mathrm{O}(d^{-3})
\end{split}
\end{equation}
 \end{lemma}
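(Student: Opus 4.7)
The plan is to start from the uniform asymptotic expansion of the Green's function supplied by Proposition \ref{Prop-SharpAsymptotics1}/Proposition \ref{pGasymptotics}, namely
\[
\bigl[\Delta_\mathrm{cr}^{-1}\bigr]_{\uu\uv} \ = \ -\frac{1}{4\pi}\bigl(\log p_1 + \log\bar{p}_1\bigr) \ - \ \frac{\gamma_\mathrm{euler}+\log 2}{2\pi} \ + \ \text{(subdominant)},
\]
and to compute each matrix entry by a double Taylor expansion around the circumcenters. I parametrize the vertices by unit vectors $\alpha_\uu := e^{\mathrm{i}\theta_\uu} = z_\mathrm{cr}(\uu)-z_\mathrm{cr}(\uf)$ for $\uu\in\uf$ and $\beta_\uv := e^{\mathrm{i}\theta_\uv} = z_\mathrm{cr}(\uv)-z_\mathrm{cr}(\ug)$ for $\uv\in\ug$ (using $R_\mathrm{cr}=1$), setting $W:=z_\mathrm{cr}(\ug)-z_\mathrm{cr}(\uf)$ so that $p_1(\uu,\uv) = W + \beta_\uv - \alpha_\uu$. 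Writing $\log p_1 = \log W + \sum_{n\geq 1} \tfrac{(-1)^{n+1}}{n}\bigl(\tfrac{\beta_\uv-\alpha_\uu}{W}\bigr)^n$ (and similarly for $\log\bar p_1$) reduces each of the four matrix entries of $\nabla\Delta_\mathrm{cr}^{-1}\bar\nabla^\top$, $\bar\nabla\Delta_\mathrm{cr}^{-1}\nabla^\top$, $\nabla\Delta_\mathrm{cr}^{-1}\nabla^\top$, $\bar\nabla\Delta_\mathrm{cr}^{-1}\bar\nabla^\top$ to a linear combination of the ``$\alpha,\beta$-moments''
\[
S_\uf(k) := \sum_{\uu\in\uf}\nabla_{\uf\uu}\,\alpha_\uu^k, \qquad \bar S_\ug(k) := \sum_{\uv\in\ug}\bar\nabla_{\ug\uv}\,\beta_\uv^k, \qquad k\in\mathbb{Z},
\]
weighted by the Taylor coefficients of $\log p_1$ and $\log\bar p_1$ at $\alpha=\beta=0$.

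The identities $\nabla z = 1$, $\nabla\bar z = 0$, $\bar\nabla z = 0$, $\bar\nabla\bar z = 1$, together with the annihilation of constants, yield the selection rules $S_\uf(0)=S_\uf(-1)=0$, $S_\uf(1)=1$, $\bar S_\ug(0)=\bar S_\ug(1)=0$, $\bar S_\ug(-1)=1$. The central nontrivial identity is
\[
\bar S_\ug(2) \ = \ \prod_{\uv\in\ug} e^{\mathrm{i}\theta_\uv}, \qquad S_\uf(-2) \ = \ \prod_{\uu\in\uf} e^{-\mathrm{i}\theta_\uu},
\]
which I would derive by recognizing $\sum_\uv\beta_\uv^2(z_{\uv+1}-z_{\uv-1}) = (\beta_1-\beta_2)(\beta_1-\beta_3)(\beta_2-\beta_3)$ as the Vandermonde determinant, and combining with the analogous factorization $A(\ug) = (\beta_1-\beta_2)(\beta_1-\beta_3)(\beta_2-\beta_3)/(4\mathrm{i}\,\beta_1\beta_2\beta_3)$. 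Applying these to the expansion of $\log p_1$, the first nonvanishing contribution to $[\nabla\Delta_\mathrm{cr}^{-1}\bar\nabla^\top]_{\uf\ug}$ comes from the coefficient of $\alpha\beta^2$ in $\log p_1$, namely $\partial_\alpha\partial_\beta^2\log p_1|_0 / 2 = -1/W^3$, and from its antiholomorphic partner in $\log\bar p_1$. These produce precisely the two $d^{-3}$ terms of formula \ref{nnadjbnnt} (up to an overall sign that must be verified against the convention in Kenyon's formula and the orientation used in $A(\uf)$). Similarly, for $[\nabla\Delta_\mathrm{cr}^{-1}\nabla^\top]_{\uf\ug}$ the leading term comes from the $\alpha\beta$ coefficient, $\partial_\alpha\partial_\beta\log p_1|_0 = 1/W^2$, giving formula \ref{nntbnnadj}.

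For the error estimate, Lemma \ref{lnablabound} provides the uniform bound $|S_\uf(k)|, |\bar S_\ug(k)| \leq |k|(|k|+1)/2$, so a Taylor monomial of total degree $N = m_\alpha+m_\beta$ contributes at most $|W|^{-N}$ times a polynomially bounded quantity. Summing the resulting absolutely convergent series, every contribution with $N\geq 4$ is $O(d^{-4})$ in the mixed case and $O(d^{-3})$ in the pure case. Finally, I would check that the subdominant Green's function terms involving $p_{2n+1}$ ($n\geq 1$) do not affect the leading order: the decomposition $p_{2n+1}(\uu,\uv) = p_{2n+1}(\uo_\uf,\uo_\ug) + \beta_\uv^{2n+1} - \alpha_\uu^{2n+1}$ shows the vertex-independent ``center'' piece is killed by either $\nabla_\uf$ or $\bar\nabla_\ug$, while the $\alpha^{2n+1},\beta^{2n+1}$ pieces, after expanding the accompanying $1/p_1^{2n+1}$, require at least one additional power of $(\beta-\alpha)/W$ to generate a surviving mixed monomial; combined with the bound $|p_{2n+1}(\uo_\uf,\uo_\ug)| \leq (2n+1)|W|$ of Lemma \ref{lemma-p-estimate}, this traps every such contribution at $O(d^{-5})$ or smaller.

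The main obstacle is the bookkeeping: one must simultaneously manage the Taylor expansion of $\log p_1$, the expansion of $1/p_1^{2n+1}$ in the subdominant terms, the selection rules for $S_\uf(k)$ and $\bar S_\ug(k)$, and the polynomial-in-$k$ bounds of Lemma \ref{lnablabound}, while establishing that all the leftover series are absolutely convergent with remainder $O(d^{-4})$ (respectively $O(d^{-3})$) uniformly in the choice of triangulation $\widehat{\uG}_{0^+}$ and the position of $\uf,\ug$. A secondary subtlety is the sign bookkeeping in the key identity $\bar S_\ug(2) = \prod_{\uv\in\ug}e^{\mathrm{i}\theta_\uv}$, which must be carried out for an arbitrary (not merely equilateral) triangle and tracked consistently through the definitions of $\nabla,\bar\nabla$, the sign convention of $A(\uf)$, and the overall sign of the Green's function in Proposition \ref{Prop-SharpAsymptotics1}.
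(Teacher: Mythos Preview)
Your approach is essentially the paper's: decompose the Green's function into the smooth $\log|p_1|$ piece and the $p_3$-correction, Taylor-expand about the circumcenters, and evaluate via the moment sums $S_\uf(k),\bar S_\ug(k)$ with Lemma~\ref{lnablabound} controlling the tail. The Vandermonde identity you isolate is exactly what the paper uses (with a sign: the paper finds $\sum_{\uv}\bar\nabla_{\ug\uv}e^{2\mathrm{i}\theta_\uv}=-\prod_{\uv}e^{\mathrm{i}\theta_\uv}$, so your acknowledged sign check is indeed needed).

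There is one concrete slip in your handling of the subdominant part. The center piece $p_{2n+1}(\uo_\uf,\uo_\ug)$ is \emph{not} killed by $\nabla_\uf$ or $\bar\nabla_\ug$: it is constant in $\uu,\uv$, but it multiplies $1/p_1(\uu,\uv)^{2n+1}$, which is not. What actually happens (and what the paper does for $n=1$) is that after Taylor-expanding $1/p_1^{2n+1}$ about $W$, the constant term is killed and the first surviving mixed monomial sits at order $|W|^{-(2n+3)}$; multiplying by $|p_{2n+1}^{\mathrm{ctr}}|\le(2n+1)|W|$ from Lemma~\ref{lemma-p-estimate} gives $O(d^{-(2n+2)})$, hence $O(d^{-4})$ for $n=1$. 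The $\alpha^{2n+1},\beta^{2n+1}$ pieces are likewise only $O(d^{-4})$, not $O(d^{-5})$: for example $\sum_\uu\nabla_{\uf\uu}e^{\pm3\mathrm{i}\theta_\uu}$ is bounded by Lemma~\ref{lnablabound} but does not vanish, and the remaining single discrete derivative on $1/p_1^3$ yields $O(d^{-4})$. This weaker bound is still sufficient for both \eqref{nnadjbnnt} and \eqref{nntbnnadj}, so the lemma is unaffected; but as written your sketch misstates where the $|p_{2n+1}^{\mathrm{ctr}}|$ bound is actually used, and overclaims the order of the correction.
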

 \begin{proof}
Let $\uf=(123)$ and $\ug=(456)$ be the vertices of $\uf$ and $\ug$ respectively. The triangulation $\uT_\mathrm{cr}$ is isoradial, so denote 
$z_\mathrm{cr}(\uu) - z_\mathrm{cr}(\uf) = e^{\mathrm{i}\theta_\uu}$ ($\uu=1,2,3$) and $z_\mathrm{cr}(\uv) - z_\mathrm{cr}(\ug) =e^{\mathrm{i}\theta_\uv}$ 
($\uv=4,5,6$). Use \ref{SharpAsymptotics1} to separate the Green function $ \big[\Delta^{-1}_\mathrm{cr} \big]_{\uu \uv}$ into its leading large distance term (continuous limit term) of order $\log d$, its subleading large distance correction of order $d^{-2}$, and the rest of its large distance expansion of order $d^{-4}$.
\begin{equation}
\label{propdecomp}
\big[ \Delta^{-1}_\mathrm{cr}\big]_{\uu \uv} =G_{\uu \uv}^{(0)}+G_{\uu \uv}^{(2)}+G_{\uu \uv}^{(4)}
\end{equation}
with
\begin{equation}
\label{Gparts}
\begin{split}
G_{\uu \uv}^{(0)}=& -{1\over 2\pi}\,\left( \log( 2 |z_\mathrm{cr}(\uu)-z_\mathrm{cr}(\uv)|) +\gamma_{\mathrm{euler}} \right)\\
G_{\uu \uv}^{(2)}=& -{1\over 24 \pi} \left({p_3(\uu, \uv)\over (z_\mathrm{cr}(\uu)-z_\mathrm{cr}(\uv))^3}+ {\bar p_3(\uu, \uv)\over 
(\bar z_\mathrm{cr}(\uu)-\bar z_\mathrm{cr}(\uv))^3}\right)\\
G_{\uu \uv}^{(4)}=& \ \mathrm{O}\left(|z_\mathrm{cr}(\uu)-z_\mathrm{cr}(\uv)|^{-4}\right)
\end{split}
\end{equation}
Begin by writing
\begin{equation}
\label{ }
z_\mathrm{cr}(\uu)-z_\mathrm{cr}(\uv) \ = \ z_\mathrm{cr}(\uf)-z_\mathrm{cr}(\ug) \, + \, e^{\mathrm{i}\theta_\uu}-e^{\mathrm{i}\theta_\uv}
\end{equation}
and expand the logs and powers of $(z_\mathrm{cr}(\uu)-z_\mathrm{cr}(\uv))$ and $(\bar z_\mathrm{cr}(\uu)-\bar z_\mathrm{cr}(\uv))$ 
in formulae \ref{Gparts} as {power} series 
in $(z_\mathrm{cr}(\uf) - z_\mathrm{cr}(\ug))$ and $(\bar z_\mathrm{cr}(\uf) - \bar z_\mathrm{cr}(\ug))$
where $d=|z_\mathrm{cr}(\uf)-z_\mathrm{cr}(\ug)|\gg 1$ is large. For example:
\[ G^{(0)}_{\uu \uv} \ = \
-{1 \over {2 \pi}} \, \Big(
\log \big( 2 | z_\mathrm{cr}(\uf) - z_\mathrm{cr}(\ug) | \big)  \, + \, \gamma_\mathrm{euler} \Big)    
\ + \ {1 \over {2 \pi}} \, \frak{Re} \, \sum_{r \geq 1} \, {1 \over r} \, 
\left( { e^{\mathrm{i}\theta_\uv} - e^{\mathrm{i}\theta_\uu} \over {z_\mathrm{cr}(\uf) - z_\mathrm{cr}(\ug)} } \right)^r
\]
The coefficients in these expansions involve the phases $e^{\mathrm{i}\theta_\uu}$ and $e^{\mathrm{i}\theta_\uv}$
and so the matrix entries in formulae \ref{nnadjbnnt} and \ref{nntbnnadj} can be computed
using the basic identities
\begin{equation}
\label{ }
\sum_{\uu \in \uf}\nabla_{\! \uf \uu}\,e^{\mathrm{i} \theta_\uu}=1 \quad \text{and} \quad \sum_{\uu \in \uf}\nabla_{\! {\uf \uu}} \,e^{-\mathrm{i} \theta_\uu}= \, \sum_{\uu \in \uf}\nabla_{\! \uf \uu} = 0
\end{equation} 
along with values of
$\nabla_{\uf \uu}$, $\overline\nabla_{\uf \uu}$ and $\nabla^{\top}_{\uv \ug}=\nabla_{\ug \uv}$, 
$\nabla^{\dagger}_{\uv \ug}=\overline\nabla_{\ug \uv}$ 
explicitly given in \ref{nablaDef} and \ref{barnablaDef}).
As an illustration:
\[
\begin{array}{ll}
\D \Big[ \, \nabla G^{(0)}  \overline{\nabla}^\top \Big]_{\uf \ug} 
&\D =  \sum_{\uu \in \uf} \sum_{\uv \in \ug} 
\nabla_{\uf \uu} \overline{\nabla}_{\ug \uv} \, G^{(0)}_{\uu \uv} \\ \\
&\D =  
\left\{
\begin{array}{l}
\D  {1 \over {4 \pi }}  \,  \Bigg( {\prod_{\uu \in \uf} \, e^{\mathrm{i}\theta_\uu} \over { \big(z_\mathrm{cr}(\uf) -z_\mathrm{cr}(\ug) \big)^3}}
\ - \  {\prod_{\uv \in \ug} \, e^{-\mathrm{i} \theta_\uv} \over { \big( \bar z_\mathrm{cr}(\uf) - \bar z_\mathrm{cr}(\ug) \big)^3}} \Bigg) \ +  \\ \\
\D {1 \over {2\pi}}  \, \sum_{r \geq 4} 
\sum_{\uu \in \uf} \sum_{\uv \in \ug} 
\nabla_{\uf \uu} \overline{\nabla}_{\ug \uv} \, {1 \over r} \, \frak{Re}
\Big( { e^{\mathrm{i}\theta_\uv} - e^{\mathrm{i}\theta_\uu} \over {z_\mathrm{cr}(\uf) - z_\mathrm{cr}(\ug)} } \Big)^r
\end{array}
\right.
\end{array} 
\]
The vanishing of the coefficients of order $r \leq 2$ is straightforward. We present
the calculation of the coefficient of $(z_\mathrm{cr}(\uf) - z_\mathrm{cr}(\ug))^{-3}$ occurring in
 $\big[ \, \nabla G^{(0)}  \overline{\nabla}^\top \big]_{\uf \ug} $ here:
\[
\begin{array}{ll}
\D  
{1 \over 3} \, \sum_{\uu \in \uf} \sum_{\uv \in \ug} \,
\nabla_{\uf \uu} \overline{\nabla}_{\ug \uv} \, 
\Big( e^{\mathrm{i}\theta_\uv} - e^{\mathrm{i}\theta_\uu} \Big)^3
&\D = 
\left\{
\begin{array}{l}
\D {1 \over 3} \ 
\overbrace{\Big( \sum_{\uu \in \uf}  \, \nabla_{\uf \uu}  \Big)}^{\text{vanishes}} 
\cdot 
\Big(  \sum_{\uv \in \ug} \, \overline{\nabla}_{\ug \uv} \, e^{3i\theta_\uv} \Big) \\ \\
\D - \  
\overbrace{\Big( \sum_{\uu \in \uf}  \, \nabla_{\uf \uu} \, e^{\mathrm{i}\theta_\uu} \Big)}^{\text{equals 1}} 
\cdot 
\overbrace{\Big(  \sum_{\uv \in \ug} \, \overline{\nabla}_{\ug \uv} \, e^{2i\theta_\uv} \Big)}^{
-\prod_{\uv \in \ug}  e^{\mathrm{i} \theta_\uv}  }
\\ \\
\D + \  \Big( \sum_{\uu \in \uf}  \, \nabla_{\uf \uu} \, e^{2i\theta_\uu} \Big) \cdot 
\overbrace{\Big(  \sum_{\uv \in \ug} \, \overline{\nabla}_{\ug \uv} \, e^{\mathrm{i}\theta_\uv} \Big)}^{\text{vanishes}} 
\\ \\
\D - \ {1 \over 3} \, \Big( \sum_{\uu \in \uf}  \, \nabla_{\uf \uu} \, e^{3i\theta_\uu} \Big) \cdot 
\overbrace{\Big(  \sum_{\uv \in \ug} \, \overline{\nabla}_{\ug \uv}  \Big)}^{\text{vanishes}} 
\end{array}
\right. 
\end{array}
\]
Thanks to Lemma~\ref{lnablabound} (or in this case through a direct estimate) its norm is uniformly bounded by  a constant independent 
of the shape of the faces.
For $G^{(0)}$, which is a smooth function of the vertex coordinates, these calculations
amount to replacing $\nabla$ and $\overline\nabla$ by their corresponding continuous derivatives 
$\partial$ and $\bar\partial$, up to subdominant terms of order $\mathrm{0}(d^{-3})$.
This is in agreement with Lemma~\ref{lemmabound}.
The result is that the asymptotics \ref{nnadjbnnt} and \ref{nntbnnadj} are valid for $G^{(0)}$ alone.

To end the proof of the lemma, one must show that the corresponding derivative terms for $G^{(2)}$ and $G^{(4)}$ are $\mathrm{O}(d^{-3})$.
This is clear for $G^{(4)}$, which is itself  $\mathrm{O}(d^{-4})$, hence its discrete derivatives are also $\mathrm{O}(d^{-4})$.
But this is not obvious for $G^{(2)}$ which is only $\mathrm{O}(d^{-2})$. We must use the explicit form of $G^{(2)}$.
Let us consider the term 
\begin{equation*}
\label{ }
\sum_{\uu \in \uf} \sum_{\uv \in \ug}\nabla_{\uf \uu} \left({p_3(\uu, \uv)
\over (z_\mathrm{cr}(\uu)-z_\mathrm{cr}(\uv))^3} \right)\nabla^\top_{\uv \ug}
\end{equation*}
which appears in $\nabla G^{(2)}\nabla^\top$.
One has
$$ p_3(\uu,\uv)=p_3( \uo_\uf, \uo_\ug)+ e^{-3 i \theta_\uu} - e^{- 3 i \theta_\uv}$$
So we have to consider three terms. 
The first term is
\begin{equation*}
\label{ }
\begin{split}
& \sum_{\uu \in \uf} \sum_{\uv \in \ug} \nabla_{\uf \uu} \left({p_3( \uo_\uf, \uo_\ug)
\over (z_\mathrm{cr}(\uu)-z_\mathrm{cr}(\uv))^3}\right)\nabla^\top_{\uv \ug} \\
& =p_3(\uo_\uf, \uo_\ug) \sum_{\uu \in \uf} \sum_{\uv \in \ug} \nabla_{\uf \uu} 
\left({1\over (z_\mathrm{cr}(\uu)-z_\mathrm{cr}(\uv))^3}\right)\nabla^\top_{\uv \ug} \\
& = p_3(\uo_\uf, \uo_\ug) \left( {
-12 
\over (z_\mathrm{cr}(\uf) -z_\mathrm{cr}(\ug))^5} +\mathcal{O}(d^{-6})\right)\\
& = \mathrm{O}(d^{-4})
\end{split}
\end{equation*}
In the last step we used the uniform bound from Lemma~\ref{lemma-p-estimate} $$|p_3(\uo_\uf, \uo_\ug)|\le 3\, 
|z_\mathrm{cr}(\uf)-z_\mathrm{cr}(\ug) | = 3 d$$
The second term is
\begin{equation*}
\label{ }
\begin{split}
& \sum_{\uu \in \uf} \sum_{\uv \in \ug} \nabla_{\uf \uu} 
\left({e^{-3 i \theta_\uu} \over (z_\mathrm{cr}(\uu)-z_\mathrm{cr}(\uv))^3}\right)\nabla^\top_{\uv \ug} 
 =\sum_{\uu \in \uf} \nabla_{\uf \uu} \left({ 3\,e^{-3 i \theta_\uu} 
 \over (z_\mathrm{cr}(\uf) -z_\mathrm{cr}(\ug))^4}+\mathrm{O}(d^{-5})\right) \\
& =  { 3\,\left(\sum\limits_{\uu \in \uf} \nabla_{\uf \uu} \, e^{-3 i \theta_\uu}\right)}
{1\over (z_\mathrm{cr}(\uf)-z_\mathrm{cr}(\ug))^4}+\mathrm{O}(d^{-5}) 
\end{split}
\end{equation*}
From Lemma~\ref{lnablabound}
$$\left|\sum\limits_{\uu \in \uf} \nabla_{\uf \uu} \, e^{-3 i \theta_\uu}\right|\le 6$$ 
hence the second term is of order $\mathrm{O}(d^{-4})$.
By the same argument, the third term is
\begin{equation*}
\label{ }
\begin{split}
& - \sum_{\uu \in \uf} \sum_{\uv \in \ug} \nabla_{\! {\uf \uu}} 
\left( {e^{-3 i \theta_\uv} \over (z_\mathrm{cr}(\uu)-z_\mathrm{cr}(\uv))^3}\right)\nabla^\top_{\uv \ug}  = \mathrm{O}(d^{-4})
\end{split}
\end{equation*}
This ends the derivation of \ref{nntbnnadj} (the second equation is the c.c.).
The derivation of \ref{nnadjbnnt} proceeds in a similar way.
\end{proof}
We are now in a position to state the main result.
\begin{Prop}
\label{ThDelta}
The second order variation for the Laplace-Beltrami operator $\Delta(\underline{\epsilon})$ on an isoradial, Delaunay graph $\uG_\mathrm{cr}$ is
\begin{equation}
\label{OPEDelta}
\begin{split}
   &\tr\left[\deltaeo \Delta\cdot \Delta^{-1}_\mathrm{cr} \cdot \deltaet \Delta\cdot \Delta^{-1}_\mathrm{cr} \right] =  \\
 &\qquad{1\over \pi^2} \sum_{\uf \in \overline{\Omega}_1} \sum_{ \ug \in \overline{\Omega}_2} 
   \,A(\uf)\,A(\ug) \,\left[{\overline\nabla  F_1(\uf)  \overline\nabla  F_2(\ug)\over \big(z_\mathrm{cr}(\uf)-z_\mathrm{cr}(\ug) \big)^4}
   + {\nabla  \bar F_1(\uf)  \nabla  \bar F_2(\ug)\over \big(\bar z_\mathrm{cr}(\uf)-\bar z_\mathrm{cr}(\ug) \big)^4}\right]\ +\ \mathrm{O}(d^{-5})
\end{split}
\end{equation}
where the double sum is taken over pairs of triangles $\uf, \ug \in \mathrm{F}(\widehat{\uG}_{0^+})$ 
such that all vertices of $\uf$ reside in $ \overline{\Omega}_1$ 
and all vertices of $\ug$ reside in $ \overline{\Omega}_2$.
\end{Prop}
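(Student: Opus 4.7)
The plan is to reduce the trace on the left-hand side to a sum of four pieces by inserting the factorized variational formula from Proposition \ref{PVarDelta}, and then to estimate each piece using the asymptotics for the discrete derivatives of the Green's function given in Lemma \ref{lNGN}. First I would write $\deltaeo \Delta = -4(\nabla^\top M_1 \nabla + \overline\nabla^\top \bar M_1 \overline\nabla)$ and $\deltaet \Delta = -4(\nabla^\top M_2 \nabla + \overline\nabla^\top \bar M_2 \overline\nabla)$, with diagonal operators $M_i := A\,\overline\nabla F_i$ and $\bar M_i := A\,\nabla \bar F_i$ supported on the faces of $\widehat{\uG}_{0^+}$ meeting $\Omega_i$. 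Inserting into the trace and using cyclicity yields
\begin{equation*}
\tr\bigl[\deltaeo \Delta \cdot \Delta^{-1}_{\mathrm{cr}} \cdot \deltaet \Delta \cdot \Delta^{-1}_{\mathrm{cr}}\bigr] \ = \ 16\,\bigl(\mathrm{I}_{\nabla\nabla} + \mathrm{I}_{\overline\nabla\overline\nabla} + \mathrm{I}_{\nabla\overline\nabla} + \mathrm{I}_{\overline\nabla\nabla}\bigr),
\end{equation*}
where each $\mathrm{I}_{\bullet\bullet}$ is a double sum over faces $\uf\in\overline\Omega_1$ and $\ug\in\overline\Omega_2$ whose integrand pairs $M_i$'s and $\bar M_i$'s against matrix entries of $\nabla\,\Delta^{-1}_{\mathrm{cr}}\,\nabla^\top$, $\overline\nabla\,\Delta^{-1}_{\mathrm{cr}}\,\overline\nabla^\top$, $\nabla\,\Delta^{-1}_{\mathrm{cr}}\,\overline\nabla^\top$, or $\overline\nabla\,\Delta^{-1}_{\mathrm{cr}}\,\nabla^\top$.

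Next I would apply Lemma \ref{lNGN}. The ``diagonal'' pieces $\mathrm{I}_{\nabla\nabla}$ and $\mathrm{I}_{\overline\nabla\overline\nabla}$ contain two factors which each contribute a leading term of order $d^{-2}$, namely $[\nabla\,\Delta^{-1}_{\mathrm{cr}}\,\nabla^\top]_{\uf\ug} = -\tfrac{1}{4\pi}(z_\mathrm{cr}(\uf)-z_\mathrm{cr}(\ug))^{-2} + \mathrm{O}(d^{-3})$ by \ref{nntbnnadj} and its conjugate. Multiplying these together and combining with the factor of $16$ out front, the leading contribution of $\mathrm{I}_{\nabla\nabla}$ is exactly
\begin{equation*}
\frac{1}{\pi^2}\sum_{\uf\in\overline\Omega_1}\sum_{\ug\in\overline\Omega_2} A(\uf)\,A(\ug)\,\frac{\overline\nabla F_1(\uf)\,\overline\nabla F_2(\ug)}{\bigl(z_\mathrm{cr}(\uf)-z_\mathrm{cr}(\ug)\bigr)^4}
\end{equation*}
with $\mathrm{I}_{\overline\nabla\overline\nabla}$ supplying the complex conjugate term. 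By contrast the mixed pieces $\mathrm{I}_{\nabla\overline\nabla}$ and $\mathrm{I}_{\overline\nabla\nabla}$ pair two factors each of size $\mathrm{O}(d^{-3})$ from \ref{nnadjbnnt}, so they are uniformly of order $\mathrm{O}(d^{-6})$ and can be absorbed into the error.

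Assembling the pieces gives formula \ref{OPEDelta}, with an $\mathrm{O}(d^{-5})$ remainder coming from cross-products of the leading $d^{-2}$ piece with the subleading $d^{-3}$ piece within $\mathrm{I}_{\nabla\nabla}$ and $\mathrm{I}_{\overline\nabla\overline\nabla}$. The main obstacle is making the error estimate genuinely uniform: the matrix entries of $\nabla\Delta^{-1}_{\mathrm{cr}}\nabla^\top$ (and variants) depend not just on the distance $d=|z_\mathrm{cr}(\uf)-z_\mathrm{cr}(\ug)|$ but on the precise shape of the triangles $\uf,\ug$ of $\widehat{\uG}_{0^+}$ and on the discrete-analytic quantities $p_n$ along paths between them. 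This is precisely where the uniform bound $|p_3(\uu,\uv)|\leq 3|z_\mathrm{cr}(\uu)-z_\mathrm{cr}(\uv)|$ from Proposition \ref{Prop-SharpAsymptotics1}, the Schur-polynomial bound of Lemma \ref{lnablabound}, and the careful Taylor expansion of each $G^{(k)}_{\uu\uv}$ carried out in the proof of Lemma \ref{lNGN} all become essential: together they guarantee that the $\mathrm{O}(d^{-3})$ subleading estimates for $[\nabla\,\Delta^{-1}_{\mathrm{cr}}\,\nabla^\top]$ and $[\nabla\,\Delta^{-1}_{\mathrm{cr}}\,\overline\nabla^\top]$ hold with constants independent of the isoradial graph and of the (bounded) discrete derivatives of $F_1,F_2$, giving the stated $\mathrm{O}(d^{-5})$ control.
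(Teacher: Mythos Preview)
Your proposal is correct and follows essentially the same approach as the paper: factorize $\deltae\Delta$ via Proposition~\ref{PVarDelta} as $-4(\nabla^\top A\overline\nabla F\,\nabla + \overline\nabla^\top A\nabla\bar F\,\overline\nabla)$, expand the trace by cyclicity into four face-indexed sums, and then invoke the asymptotics of Lemma~\ref{lNGN} so that the two ``diagonal'' pieces produce the $(z_\mathrm{cr}(\uf)-z_\mathrm{cr}(\ug))^{-4}$ and conjugate terms while the two ``mixed'' pieces are $\mathrm{O}(d^{-6})$. Your discussion of where the uniform $\mathrm{O}(d^{-5})$ remainder comes from (via Lemma~\ref{lnablabound} and the $p_3$ bound) is in fact more explicit than the paper's own proof, which simply cites Lemma~\ref{lNGN} and leaves the error bookkeeping implicit.
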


\begin{proof}
We start from the local form of the $\Delta(\epsilon)$ operator \ref{DeltaNablaForm}, which implies that the first order variation on 
$\Delta(\epsilon)$ is
$$
\deltae\Delta= 2\left(\deltae\overline\nabla^{\!\top} {{A}}\,\nabla+\overline\nabla^{\!\top} {{\deltae A}}\,\nabla+\overline\nabla^{\!\top} {{A}}\,\deltae\nabla
+ \deltae\nabla^{\!\top} {{A}}\,\overline\nabla+ \nabla^{\!\top} {{\deltae A}}\,\overline\nabla+ \nabla^{\!\top} {{A}}\,\deltae\overline\nabla \right)
$$
We use the formula for the variation of $A$
\begin{equation*}
\label{ }
\deltae A= A (\nabla\! F+{\overline\nabla}\!\bar F)
\end{equation*}
as well as the formulae for the variations of the $\nabla$ and $\overline\nabla$ operators given by \ref{VarNabla}, which read
\begin{equation*}\label{VarNabla2}
\begin{split}
\deltae\nabla = -\, \left(\nabla\! F \,\nabla + \nabla\!\bar F \,\overline\nabla \right)\\
\deltae\overline\nabla=- \, \left(\overline\nabla\! \bar F \,\overline\nabla + \overline\nabla\! F \,\nabla \right)
\end{split}
\end{equation*}
to get
\begin{equation}
\label{ }
\deltae\Delta=-4\left( {\overline\nabla}^\top (\nabla\!\bar F) A\,\overline\nabla +\nabla^{\!\top} (\overline\nabla\!F)A\,\nabla\right)
\end{equation}
One uses this and the cyclicity of the trace to rewrite the second order variation as
\begin{equation*}
\begin{split}
\tr\left[\deltaeo \Delta\cdot \Delta^{-1}_\mathrm{cr} {\cdot} \deltaet \Delta{\cdot} \Delta^{-1}_\mathrm{cr} \right] &=  16\ \Big[
\tr\left(A\,\nabla\!\bar F_1\cdot \overline\nabla \Delta^{-1}_\mathrm{cr} {\overline\nabla}^\top\!{\cdot} A\,\nabla\!\bar F_2\cdot \overline\nabla 
\Delta^{-1}_\mathrm{cr} \, {\overline\nabla}^\top\right)\\
&+
\tr\left(A\,\overline\nabla\! F_1\cdot\nabla \Delta^{-1}_\mathrm{cr} {\overline\nabla}^\top\!\cdot A\,\nabla\!\bar F_2\cdot \overline\nabla \Delta^{-1}_\mathrm{cr} \, \nabla^{\!\top}\right)
\\
&+
\tr\left(A\,\nabla\!\bar F_1\cdot\overline\nabla \Delta^{-1}_\mathrm{cr} \nabla^{\!\top}\!\cdot A\,\overline\nabla\! F_2\cdot \nabla \Delta^{-1}_\mathrm{cr} \, {\overline\nabla}^\top\right)
\\&
+
\tr\left(A\,\overline\nabla\! F_1\cdot\nabla \Delta^{-1}_\mathrm{cr} \nabla^\top\!\cdot A\,\overline\nabla\! F_2\cdot \nabla \Delta^{-1}_\mathrm{cr} \, \nabla^{\!\top}\right)
\Big]
\end{split}
\end{equation*}
Note that the trace on the l.h.s. is a sum over vertices, while the trace on the r.h.s. is a sum over faces (triangles).
Using the large distances asymptotics \ref{nnadjbnnt} and \ref{nntbnnadj}, and writing the trace explicitly as a double sum over faces $\uf$ and $\ug$ gives the theorem.
\end{proof}

We now consider the other operators. The case of the conformal Laplacian is more complicated, so let us first discuss the K\"ahler operator.

\subsubsection{The K\"ahler operator $\mathcal{D}$}
\label{ssKalOpSecOrd}
\begin{Prop}
\label{The3}
The second order variation for the K\"ahler operator $\mathcal{D}(\underline{\epsilon})$ on an isoradial, Delaunay graph 
$\uG_\mathrm{cr}$ has the same form as the second order variation 
for the Laplacian $\Delta(\underline{\epsilon})$
\begin{equation}
\label{OPEKaehler}
\begin{split}
   &\tr\left[\deltaeo \mathcal{D}\cdot \mathcal{D}^{-1}_\mathrm{cr} \cdot \deltaet \mathcal{D}\cdot \mathcal{D}^{-1}_\mathrm{cr} \right] =  \\
 &\quad{1\over \pi^2} \sum_{\uf \in \overline{\Omega}_1} \sum_{\ug \in \overline{\Omega}_2} 
   \,A(\uf)\,A(\ug) \,\left[{\overline\nabla  F_1(\uf)  \overline\nabla  F_2(\ug)\over \big( z_\mathrm{cr}(\uf) - z_\mathrm{cr}(\ug) \big)^4}
   + {\nabla  \bar F_1(\uf)  \nabla  \bar F_2(\ug)\over \big( \bar{z}_\mathrm{cr}(\uf) -
   \bar{z}_\mathrm{cr}(\ug) \big)^4}\right]\ +\ \mathrm{O}(d^{-5})
\end{split}
\end{equation}
where the double sum is taken over pairs of triangles $\uf, \ug \in \mathrm{F}(\widehat{\uG}_{0^+})$ 
such all vertices of $\uf$ reside in $ \overline{\Omega}_1$ 
and all vertices of $\ug$ reside in $ \overline{\Omega}_2$ 
\end{Prop}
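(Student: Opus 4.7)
The plan is to reduce the K\"ahler calculation to the Laplace-Beltrami one by exploiting two facts: on the critical isoradial graph the circumradius is constant, so $\mathcal{D}_\mathrm{cr}=R_\mathrm{cr}^{-2}\Delta_\mathrm{cr}$ and hence $\mathcal{D}_\mathrm{cr}^{-1}=R_\mathrm{cr}^{2}\Delta_\mathrm{cr}^{-1}$; and the variation $\deltae \mathcal{D}$ in Proposition \ref{PVarKahler} splits cleanly into a piece structurally identical to $\deltae\Delta$ (times $R_\mathrm{cr}^{-2}$) plus extra ``mixed'' pieces sandwiched between $\overline\nabla^{\!\top}$ and $\nabla$.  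Expanding the trace $\tr[\deltaeo\mathcal{D}\cdot\mathcal{D}_\mathrm{cr}^{-1}\cdot\deltaet\mathcal{D}\cdot\mathcal{D}_\mathrm{cr}^{-1}]$ into a finite list of quadratic combinations of the variational terms, I will classify each contribution by which ``flavor'' of double derivative acts on $\Delta_\mathrm{cr}^{-1}$, and then invoke Lemma \ref{lNGN} to keep only those of leading order $d^{-4}$.

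Concretely, I would write
\[
\deltae\mathcal{D} \;=\; \frac{1}{R_\mathrm{cr}^{2}}\,\deltae\Delta \;+\; \mathcal{R}
\]
on $\uG_\mathrm{cr}$, where the ``remainder'' $\mathcal{R}$ collects all the terms of \ref{VarD} that are of the form $\overline\nabla^{\!\top}(\cdots)\nabla$, namely those involving $\nabla F+\overline\nabla\bar F+C\,\overline\nabla F+\bar C\,\nabla\bar F$. The first piece reproduces the Laplace-Beltrami second-order variation verbatim, since the factors $R_\mathrm{cr}^{-2}$ cancel against the $R_\mathrm{cr}^{2}$ hidden in each copy of $\mathcal{D}_\mathrm{cr}^{-1}$; applying Proposition \ref{ThDelta} already yields the claimed right-hand side of \ref{OPEKaehler}. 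It therefore remains to show that every cross term involving at least one factor of $\mathcal{R}$ is $\mathrm{O}(d^{-5})$ and can be absorbed into the error.

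For the cross-term analysis I would use cyclicity of the trace to rewrite each contribution as a sum over pairs of faces $\uf\in\Omega_1$, $\ug\in\Omega_2$ of products of bounded local weights (area, $R_\mathrm{cr}$, and the functions $F_i$, $\bar F_i$, $\nabla F_i$, $\overline\nabla F_i$, $C$, $\bar C$) with one of the four mixed kernels
\[
[\nabla\Delta_\mathrm{cr}^{-1}\nabla^{\!\top}]_{\uf\ug},\quad
[\overline\nabla\Delta_\mathrm{cr}^{-1}\overline\nabla^{\!\top}]_{\uf\ug},\quad
[\nabla\Delta_\mathrm{cr}^{-1}\overline\nabla^{\!\top}]_{\uf\ug},\quad
[\overline\nabla\Delta_\mathrm{cr}^{-1}\nabla^{\!\top}]_{\uf\ug}.
\]
Lemma \ref{lNGN} gives the estimates $\mathrm{O}(d^{-2})$ for the first two and $\mathrm{O}(d^{-3})$ for the last two. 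A term coming from $\mathcal{R}\otimes \deltae\Delta$ pairs one ``$\overline\nabla^{\!\top}\cdots\nabla$'' kernel with one ``$\nabla^{\!\top}\cdots\nabla$'' (or $\overline\nabla^{\!\top}\cdots\overline\nabla$) kernel, yielding $d^{-3}\cdot d^{-2}=d^{-5}$; a term from $\mathcal{R}\otimes\mathcal{R}$ pairs two such mixed kernels, yielding $d^{-6}$. In both cases the resulting contribution is $\mathrm{O}(d^{-5})$ uniformly in the local data, since $|C|$ and $|\bar C|$ are bounded on $\widehat{\uG}_{0^+}$ and the supports of $F_1,F_2$ are finite.

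The leading piece coming from the $R_\mathrm{cr}^{-2}\deltae\Delta$ part then reproduces \ref{OPEDelta} exactly, proving \ref{OPEKaehler}. The main obstacle is bookkeeping: one must enumerate all six quadratic pairings arising from the three summands in $\deltae\mathcal{D}$ and verify that only the two ``pure'' pairings (involving $\nabla^{\!\top}(\overline\nabla F)\nabla$ and $\overline\nabla^{\!\top}(\nabla\bar F)\overline\nabla$) survive at order $d^{-4}$. A secondary subtlety is that the coefficients $C(\uf)$ may be comparatively large for highly anisotropic triangles in $\widehat{\uG}_{0^+}$; but because $C$ multiplies only mixed-kernel contributions, the improved $d^{-3}$ decay in Lemma \ref{lNGN} is enough to suppress them to the error order, independently of the specific triangulation chosen to complete $\uG_{0^+}$.
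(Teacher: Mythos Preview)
Your proposal is correct and follows essentially the same approach as the paper: decompose $\deltae\mathcal{D}=R_\mathrm{cr}^{-2}\deltae\Delta+\mathcal{R}$ with $\mathcal{R}=-4R_\mathrm{cr}^{-2}\,\overline\nabla^{\!\top}\!\big(A(\nabla F+\overline\nabla\bar F+C\overline\nabla F+\bar C\nabla\bar F)\big)\nabla$, observe that the $R_\mathrm{cr}^{\pm2}$ factors cancel in the trace, and then use Lemma~\ref{lNGN} to see that every pairing involving at least one $\mathcal{R}$ forces a mixed kernel $[\nabla\Delta_\mathrm{cr}^{-1}\overline\nabla^{\!\top}]$ or $[\overline\nabla\Delta_\mathrm{cr}^{-1}\nabla^{\!\top}]$ of order $d^{-3}$, yielding $\mathrm{O}(d^{-5})$ overall. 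Your worry about $C(\uf)$ being large is unfounded: from \eqref{CDef} it is a sum of three unit-modulus phases, so $|C(\uf)|\le 3$ uniformly.
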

\begin{proof}
The derivation goes along the same line. We start from Prop.~\ref{PVarKahler} which gives the explicit form \ref{VarD} of the first order variation of 
$\mathcal{D}(\epsilon)$. The graph $\uG_\mathrm{cr}$ is isoradial, so all circumradii are equal $R(\uf)=R_{\mathrm{cr}}$ and thus
$$\mathcal{D}_\mathrm{cr} = R_\mathrm{cr}^{-2} \Delta_{\mathrm{cr}}$$
This implies that the first order variation of $\mathcal{D}(\epsilon)$ has the special form
\begin{equation}
\label{ }
\deltae\mathcal{D} = R^{-2}_\mathrm{cr} \deltae \Delta
 - 4 R^{-2}_\mathrm{cr} \overline\nabla^\top \Big( 
 A \big( \nabla\! F + \overline\nabla\! \bar F \big) + C\, \overline\nabla\! F + \bar C\, \nabla\!\bar F \Big) \nabla
\end{equation}
with $C$ and $\bar C$ defined by \ref{CDef}.
Formula \ref{OPEKaehler} follows by
repeating the analysis made in the proof of Proposition~\ref{ThDelta}, which relies on the asymptotics of Lemma~\ref{lNGN}.
One can check that the new terms involving $C$ and $\bar C$ do not change the asymptotics  \ref{OPEDelta} obtained for $\Delta$. 
\end{proof}

\subsection{The case of the conformal Laplacian: the anomalous term}
\label{ssConfLapl2nd}
\subsubsection{Second order variation for the conformal Laplacian $\Deltaconf$}
\label{ssConfLApDEl}
\newcommand{\scrz}{\mathscr{\scriptstyle{Z}}}

By formula \ref{varDelConf} in the proof of Proposition \ref{T1stVarC}
the contribution made by regular edges
$\ue \in \mathrm{E}(\uG_{0^+}^\bullet)$ to the first order variation $\deltae\Deltaconf$ of the conformal Laplacian is identical to the variation 
$\deltae\Delta$  of the Laplace-Beltrami Laplacian. There is, however, an additional term in the first order variation $\deltae\Deltaconf$ 
coming from the chords of $\uG_{0^+}$. We call it the ``anomalous term'' and denote  it $\delta \mathbb{A}$:
\begin{equation}
\label{ }
 \deltae \Deltaconf= \deltae \Delta + \deltae \mathbb{A}
\end{equation}
The non-diagonal elements of $\deltae \mathbb{A}$ are non-zero only for chords. 
From \ref{varDelConf}, for vertices $\uu \neq \uv$, they are 
\begin{equation}
\label{dAoff}
\deltae \mathbb{A}( \vec{\ue} \, ) = 
\big[\deltae \mathbb{A}\big]_{\uu \uv}=
\begin{cases}
      {1\over 2}\big(\deltae\theta_\mathrm{n}(\vec{\ue} \, )  \tan^2 \theta_\mathrm{n}(\vec{\ue} \, )
      +\deltae\theta_\mathrm{s}(\vec{\ue} \, ) \tan^2 \theta_\mathrm{s}(\vec{\ue} \, ) \big)
      &\! \!\begin{array}{l} \text{if $\ue = \overline{\uu \uv}$ is a} \\ \text{chord in $\mathrm{E}(\uG_{0^+})$,} \end{array} \\
      \qquad 0& \ \,  \text{otherwise}.
\end{cases}
\end{equation}
Here $\ue = \overline{\uu \uv}$ is an edge of $\uG_{0^+}$ and $\vec{\ue} = (\uu, \uv)$ is an orientation. 
The graph $\uG_{0^+}$ is isoradial and weakly Delaunay and so
$\theta_\mathrm{n}(\vec{\ue} \, ) = \pm \, \theta_\mathrm{s}(\vec{\ue} \, )$ for any edge. 
In particular 
$\tan^2 \theta_\mathrm{n}(\vec{\ue} \,) = \tan^2 \theta_\mathrm{s}(\vec{\ue} \,)$
and so $\deltae \mathbb{A}( \vec{\ue} \, ) = \deltae \mathbb{A}( \vec{\ue}^{\, *}  )$ where
$\vec{\ue}^{\, *} = (\uv, \uu)$ is the opposite orientation. As for the diagonal terms we have
\begin{equation}
\label{dAdiag}
\big[\deltae \mathbb{A}\big]_{\uu \uu}=-\sum_{\uv \neq \uu} \big[\deltae \mathbb{A}\big]_{\uu \uv}
\end{equation}
In the case of a chord  $\vec{\ue}$
we may use \ref{VarThetaNS} for the angle variations
$\deltae\theta_\mathrm{n}(\vec{\ue} \,)$ and $\deltae\theta_\mathrm{s}(\vec{\ue} \, )$
and re-express the anomalous term $\deltae \mathbb{A}(\vec{\ue} \, )$ given in
formula \ref{dAoff} as
\begin{equation}
\label{delAexpl}
\deltae \mathbb{A}(\vec{\ue} \, )={1\over 2} \, \frak{Im} \Big[ \overline\nabla\! F(\uf_\mathrm{n} )\, 
\mathscr{E}_\mathrm{n}(\vec{\ue} \, ) \tan^2 \theta_\mathrm{n}(\vec{\ue} \, ) +
\overline\nabla\! F(\uf_\mathrm{s})\,\mathscr{E}_\mathrm{s}( \vec{\ue} \,) \tan^2 \theta_\mathrm{s}(\vec{\ue} \,) \Big]
\end{equation}
where the functions
$\mathscr{E}_\mathrm{n}(\vec{\ue})$ and $\mathscr{E}_\mathrm{s}(\vec{\ue})$ are defined in
\ref{mathcalE-terms} and where $\uf_\mathrm{n}$ and $\uf_\mathrm{s}$ are the respective 
north and south triangles abutting $\vec{\ue}$ in the triangulation $\widehat{\uG}_{0^+}$ which completes
$\uG_{0^+}$. 

The second order variation 
\begin{equation}
\tr\left[  \deltaeo \, \Deltaconf \cdot \Delta_\mathrm{cr}^{-1}
\cdot  \deltaet \, \Deltaconf  \cdot \Delta_\mathrm{cr}^{-1}\right] 
\end{equation} 
is the sum of the second order variation made by the Laplace-Beltrami Laplacian, namely
\begin{equation}
\label{varDD}
\tr\left[\deltaeo \Delta\cdot \Delta_\mathrm{cr}^{-1}\cdot\deltaet\Delta\cdot \Delta_\mathrm{cr}^{-1}\right]\,
\end{equation}
\noindent
along with three anomalous trace terms which we can express (in light of \ref{dAdiag}) as follows:
\begin{equation}
\label{var-anomalous}
\begin{array}{lr}
\D 
\underbrace{
\tr\left[\deltaeo \mathbb{A}\cdot \Delta_\mathrm{cr}^{-1}
\cdot\deltaet\Delta\cdot \Delta_\mathrm{cr}^{-1}\right]}_{\text{chord-edge term}}
&\D = \ 
\sum_{\substack{\text{chords} \ \vec{\ue}_1 \, \in \, \uG_{0^+} \\ \text{edges} \ \vec{\ue}_2 \, \in \, \widehat{\uG}_{0^+} }}
\deltaeo \mathbb{A}(\vec{\ue}_1) \, K(\vec{\ue}_1, \vec{\ue}_2) \,  \deltaet \Delta (\vec{\ue}_2)  \,  K(\vec{\ue}_2, \vec{\ue}_1) \\ \\
\D \underbrace{
\tr\left[\deltaeo \Delta \cdot \Delta_\mathrm{cr}^{-1}
\cdot\deltaet \mathbb{A} \cdot \Delta_\mathrm{cr}^{-1}\right]}_{\text{edge-chord term}}
&\D = \
\sum_{\substack{\text{edges} \ \vec{\ue}_1 \, \in \, \widehat{\uG}_{0^+} \\ \text{chords} \ \vec{\ue}_2 \, \in \, \uG_{0^+} }}
\deltaeo \Delta (\vec{\ue}_1) \, K(\vec{\ue}_1, \vec{\ue}_2) \,  
\deltaet \mathbb{A}(\vec{\ue}_2) \,  K(\vec{\ue}_2, \vec{\ue}_1 ) \\ \\
\D \underbrace{
\tr\left[\deltaeo \mathbb{A}\cdot \Delta_\mathrm{cr}^{-1}
\cdot\deltaet \mathbb{A}\cdot \Delta_\mathrm{cr}^{-1}\right]}_{\text{chord-chord term}}
&\D = \
\sum_{\substack{\text{chords} \\  \vec{\ue}_1 , \, \vec{\ue}_2 \, \in \,  \uG_{0^+}}}
\ \deltaeo\mathbb{A}(\vec{\ue}_1 )\ K(\vec{\ue}_1, \vec{\ue}_2)
\ \deltaet\,\mathbb{A}(\vec{\ue}_2 )\ K(\vec{\ue}_2, \vec{\ue}_1)
\end{array}
\end{equation}
where $\vec{\ue}_1 = (\uu_1, \uv_1)$ and $\vec{\ue}_2 =(\uu_2, \uv_2)$ are oriented edges
of the triangulation $\widehat{\uG}_{0^+}$ whose vertices $\uu_1, \uv_1$ and $\uu_2, \uv_2$
lie in $\overline{\Omega}_1$ and $\overline{\Omega}_2$ respectively and where
\begin{equation}
\label{ }
K(\vec{\ue}_1, \vec{\ue}_2)
\ := \ \big[\Delta_\mathrm{cr}^{-1}\big]_{\uv_1 \uv_2}-\big[\Delta_\mathrm{cr}^{-1}\big]_{\uu_1 \uv_2}
-\big[\Delta_\mathrm{cr}^{-1}\big]_{\uv_1 \uu_2}+\big[\Delta_\mathrm{cr}^{-1}\big]_{\uu_1 \uu_2} 
\end{equation}
Note that $K(\vec{\ue}_1, \vec{\ue}_2)=K(\vec{\ue}_2, \vec{\ue}_1)=-K(\vec{\ue}^{\, *}_1, \vec{\ue}_2)$ 
where $\vec{\ue}^{\, *}_1=(\uv_1, \uu_1)$ has the reverse orientation. 
Applying two rounds of formula \ref{diff-formula-nablas} we obtain
\begin{equation}
\label{K-formula-nabla}
\begin{array}{ll}
\D K(\vec{\ue}_1, \vec{\ue}_2)
&\D = \ \left\{
\begin{array}{c}
p_1(\uu_2, \uv_2) \Big[ \Delta_\mathrm{cr}^{-1}  \nabla^\top \Big]_{\uu_1 \uf_2}  \  - \
p_1(\uu_2, \uv_2) \Big[ \Delta_\mathrm{cr}^{-1}  \nabla^\top \Big]_{\uv_1 \uf_2}  \\ \\
\D + \\ \\
\D 
\overline{p}_1(\uu_2, \uv_2) \Big[ \Delta_\mathrm{cr}^{-1}  \overline{\nabla}^\top \Big]_{\uu_1 \uf_2}  \  - \
\overline{p}_1(\uu_2, \uv_2) \Big[ \Delta_\mathrm{cr}^{-1}  \overline{\nabla}^\top \Big]_{\uv_1 \uf_2} 
\end{array}
\right. \\ \\
&\D = \ 2 \, \frak{Re}  \left[
\begin{array}{c}
p_1(\uu_1, \uv_1) \, p_1(\uu_2,\uv_2) \Big[ \nabla \Delta_\mathrm{cr}^{-1} \nabla^\top \Big]_{\uf_1 \uf_2} 
\\ \\
\D + \\ \\
p_1(\uu_1, \uv_1) \, \overline{p}_1(\uu_2,\uv_2) \Big[ \nabla  \Delta_\mathrm{cr}^{-1} \overline{\nabla}^\top \Big]_{\uf_1 \uf_2} 
\end{array}
\right]
\end{array}
\end{equation}
where $\uf_i$ is a triangle of $\widehat{\uG}_{0^+}$, north or south, containing the edge $\vec{\ue}_i$
for $i=1,2$.
By assumption 
$\overline{\Omega}_1$ and $\overline{\Omega}_2$
are separated by a large distance $d\gg R_\mathrm{cr}$
and so 
we can estimate
$K(\vec{\ue}_1, \vec{\ue}_2)$ 
as presented in formula \ref{K-formula-nabla}
using 
asymptotic expansions
 \ref{nnadjbnnt} and \ref{nntbnnadj}
of Lemma~\ref{lNGN}. We end up with
\begin{equation}
\label{KAsympt}
K(\vec{\ue}_1, \vec{\ue}_2)\ =\ {1\over 2\pi}\,
\frak{Re}
\left[ { p_1(\uu_1, \uv_1) \, p_1(\uu_2, \uv_2)  \over { \big( 
z_\mathrm{cr}(\uf_1) -  z_\mathrm{cr}(\uf_2) \big)^2 } } \right]  
\ + \ \mathrm{O}\left( {1 \over { \big|  z_\mathrm{cr}(\uf_1) -  z_\mathrm{cr}(\uf_2) \big|^3}} \right)
\end{equation}
where $p_1(\uu, \uv) = z_\mathrm{cr}(\uv) - z_\mathrm{cr}(\uu)$ as introduced in Definition \ref{def-p_n}.

\subsubsection{The chord-chord term}
Let us begin by examining the chord-chord term of \ref{var-anomalous}.
It involves the contribution of two (oriented) chords
$\vec{\ue}_1 = (\uu_1, \uv_1)$ and $\vec{\ue}_2 =( \uu_2, \uv_2)$ 
whose vertices of $\uu_1, \uv_1$ and $\uu_2, \uv_2$
are contained in $\overline{\Omega}_1$ and $\overline{\Omega}_2$ respectively.
Since $\vec{\ue}_i = (\uu_i, \uv_i)$ is a chord for $i=1 ,2$ in $\uG_{0^+}$ 
the corresponding north and south triangles ${\uf_i}_\mathrm{n}$ and ${\uf_i}_\mathrm{s}$ in $\widehat{\uG}_{0^+}$
are concyclic and therefore share a common circumcenter 
whose complex coordinate we denote  $\scrz_\mathrm{cr}(\vec{\ue}_i)=z({\uf_i}_\mathrm{n})=z({\uf_i}_\mathrm{s})$.
This is depicted in Fig~\ref{f2chords}. 

\begin{figure}[h!]
\begin{center}
\includegraphics[width=4.5in,angle=0]{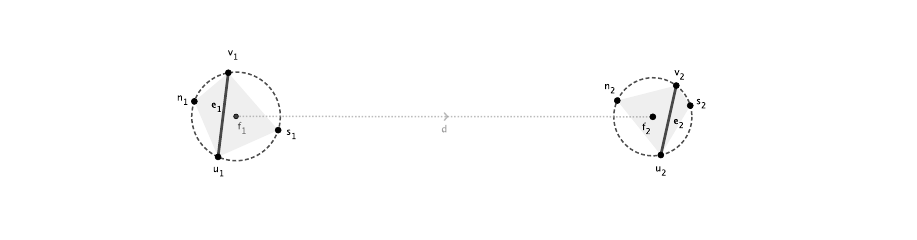}
\caption{Two far apart chords $\vec\ue_1=(\uu_1 \uv_1)$ and $\vec\ue_2=(\uu_2\uv_2)$ at distance $d\gg 1$}
\label{f2chords}
\end{center}
\end{figure}

Putting things together, we see that the contribution made by a pair of (oriented) chords 
$(\vec{\ue}_1, \vec{\ue}_2)$ to the chord-chord anomalous trace term in \ref{var-anomalous} is
\begin{equation}
\label{chord2chord}
{1\over 16\,\pi^2}\,
\deltaeo \mathbb{A}(\vec{\ue}_1) \, 
\deltaet  \mathbb{A}(\vec{\ue}_2) 
\left( \frak{Re}
\left[ { p_1( \uu_1,\uv_1) \, p_1(\uu_2,\uv_2)  \over { \big( 
\scrz_\mathrm{cr}(\vec{\ue}_1) -  \scrz_\mathrm{cr}(\vec{\ue}_2) \big)^2 } } \right]  \right)^2 
+ \, \mathrm{O} \left({1 \over {\big|  \scrz_\mathrm{cr}(\vec{\ue}_1) -  \scrz_\mathrm{cr}(\vec{\ue}_2) \big|^5 }} \right)
\end{equation}
with $\deltaeo \mathbb{A}(\vec{\ue}_1)$ and  $\deltaet  \mathbb{A}(\vec{\ue}_2) $ given by \ref{delAexpl}, that we recall for completeness.
\begin{equation*}
\label{delAexpl*}
\deltae \mathbb{A}(\vec{\ue} \, )={1\over 2} \, \frak{Im} \Big[ \overline\nabla\! F(\uf_\mathrm{n} )\, 
\mathscr{E}_\mathrm{n}(\vec{\ue} \, ) \tan^2 \theta_\mathrm{n}(\vec{\ue} \, ) +
\overline\nabla\! F(\uf_\mathrm{s})\,\mathscr{E}_\mathrm{s}( \vec{\ue} \,) \tan^2 \theta_\mathrm{s}(\vec{\ue} \,) \Big]
\end{equation*}
with
\begin{equation*}
\label{ }
\mathcal{E}_\mathrm{n}(\vec\ue \, ) 
= \ { \overline{z}(\uv) - \overline{z}(\un) \over {z(\uv) - z(\un) } } - 
{ \overline{z}(\uu) - \overline{z}(\un) \over {z(\uu) - z(\un) } } 
= \ {-4 A(\uf_\mathrm{n}) \over {\big(z(\uv) - z(\un) \big) \big(z(\uu) - z(\un) \big)}}
\end{equation*}
and a similar form for $\mathcal{E}_\mathrm{s}(\vec\ue \, )$.
Any triangulation $\widehat{\uG}_{0^+}$ which completes the limit graph $\uG_{0^+}$ is itself isoradial and weakly Delaunay
consequently $\tan^2\theta_\mathrm{n} (\vec{\ue} \,) = \tan^2\theta_\mathrm{s} (\vec{\ue} \,)$ the value of which is given by \ref{TanThetaN}.

The result \ref{chord2chord} for the anomalous chord-chord contribution to the variation of $\log\det\Deltaconf(\underline{\epsilon})$ does not have the same form as the ``regular'' contribution \ref{varDD} which is similar to the variation of the Laplace-Beltrami operator $\Delta$, which is a sum over triangles of terms
\begin{equation*}
\label{tr2trterm*}
 A(\uf_1) A(\uf_2) {\overline\nabla\!F_1(\uf_1)\cdot\overline\nabla\!F_2(\uf_2)\over (z_{\mathrm{cr}}(\uf_1)-z_{\mathrm{cr}}(\uf_2))^4}\ +\  \text{c.c.}
\end{equation*}

\noindent
First, besides harmonic terms in the coordinate of the circumcenters of the form
$$  \big( \scrz_\mathrm{cr}(\vec{\ue}_1) -  \scrz_\mathrm{cr}(\vec{\ue}_2) \big)^{-4} \quad \text{and} \quad 
\big( \overline\scrz_\mathrm{cr}(\vec{\ue}_1) -  \overline\scrz_\mathrm{cr}(\vec{\ue}_2) \big)^{-4}\
$$
it contains non-harmonic terms of the form 
$$
{\big | \scrz_\mathrm{cr}(\vec{\ue}_1) -  \scrz_\mathrm{cr}(\vec{\ue}_2) \big |}^{-4}
$$
which are problematic with conformal invariance and an interpretation in term of CFT, as will be discussed in Sect.~\ref{sDiscussion}.

Second, from the form of $\deltaeo \mathbb{A}(\vec{\ue}_1)$ and  $\deltaet  \mathbb{A}(\vec{\ue}_2) $, it does not contain only terms of the form
$$\overline\nabla F_1(\uf_1) \cdot \overline\nabla F_2(\uf_2)\quad\text{and}\quad \nabla \bar F_1(\uf_2) \cdot \nabla \bar F_2(\uf_2)$$
but also terms of the form
$$\overline\nabla F_1(\uf_1) \cdot \nabla \bar F_2(\uf_2)\quad\text{and}\quad \nabla \bar F_1(\uf_2) \cdot \overline\nabla F_2(\uf_2)$$

Third, the geometric terms associated to the faces (the triangles $\uf_1$ and $\uf_2$) are not simply the area terms $A(\uf_1)$ and $A(\uf_2)$, but they depend of the detailed geometry and orientation of the chords and the triangles through the terms $\mathcal{E}_\mathrm{n/s}(\vec\ue \, )$ and $ \tan^2 \theta_{\mathrm{n/s}}(\vec{\ue} \, )$ .

\subsubsection{The chord-edge term}
We now discuss briefly the chord-edge term 
present in \ref{var-anomalous} which involves the anomalous variation term $[\deltaeo \mathbb{A}]_{\uu_1 \uv_1}$
of a chord $\vec{\ue}_1 = (\uu_1, \uv_1)$ and the ordinary variation term  $[ \deltaet \Delta ]_{\uu_2 \uv_2}$ of  
an edge $\vec{\ue}_2 = (\uu_2 ,\uv_2)$.
It will be simpler to group together the terms
made by a single chord $\vec{\ue}_1 = \vec{\ue} = (\uu, \uv)$ 
and the edges $\vec{\ue}_2$ forming the boundary of a fixed (counter-clockwise oriented) triangle $\uf$ and then sum the contributions
as the chord $\vec{\ue}$ in $\uG_{0^+}$ and triangle $\uf$ in $\widehat{\uG}_{0^+}$ both vary;
see the illustration in Fig.~\ref{fchord-triangle}. 
\begin{figure}[h!]
\begin{center}
\includegraphics[width=5in,angle=2.2]{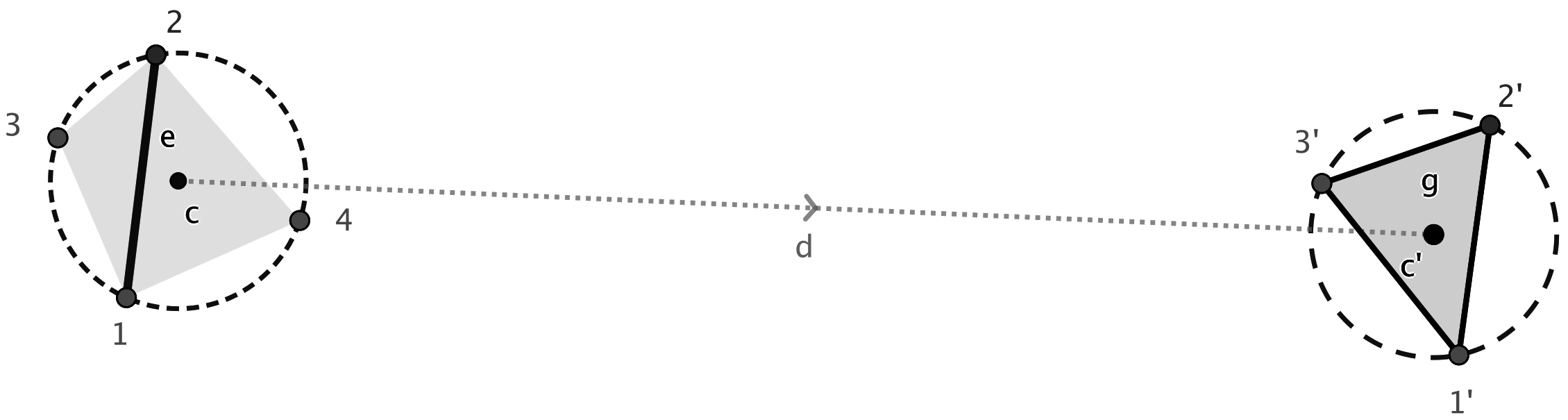}
\caption{A chord $\ue=(12)$ and a triangle $\ug=(1'2'3')$ at distance $d$}
\label{fchord-triangle}
\end{center}
\end{figure}

Accordingly, the contribution
made by a chord-triangle pair $(\vec{\ue}, \uf)$ is found to be
\begin{equation}
\label{chord2edge}
{1\over 4 \pi^2} \, \deltaeo \, \mathbb{A}(\vec{\ue} \, ) \, \frak{Re}
\left[ {p_1^2(\uu, \uv) \, A(\uf)\,\overline\nabla\!F_2(\uf) \over 
{\big(\scrz_\mathrm{cr}(\vec{\ue} \, )- z_\mathrm{cr}(\uf) \big)^4}} \right] \, + 
\,\mathrm{O}
\left({1\over {\big| \scrz_\mathrm{cr}(\vec{\ue})- z_\mathrm{cr}(\uf) \big|^5}}\right)
\end{equation}
\noindent
This term is again different from the regular term. Now it is harmonic in the coordinates of the circumcenters, since it does not contain the non-harmonic term  
$$
{\big | \scrz_\mathrm{cr}(\vec{\ue}_1) -  \scrz_\mathrm{cr}(\vec{\ue}_2) \big |}^{-4}
$$
However, it still contains the terms of the form
$$\overline\nabla F_1(\uf_1) \cdot \nabla \bar F_2(\uf_2)\quad\text{and}\quad \nabla \bar F_1(\uf_2) \cdot \overline\nabla F_2(\uf_2)$$
and it depends on the detailed geometry and orientation of the chord, as for the chord-chord term discussed previously.

\subsubsection{A simplification for specific deformations}\ \\
Finally, let us note that the anomalous term $\deltae \mathbb{A}(\vec{\ue} \,)$ for a chord $\vec{\ue}$  \ref{delAexpl} takes a simpler form in the special case when the discrete derivatives of $F$ coincides on the north and south triangles $\uf_\mathrm{n}(\vec\ue \, )$ and $\uf_\mathrm{s}(\vec\ue \, )$
thanks to the following lemma, 

\begin{lemma}
\label{nabla-and-flip}
Consider two triangles $\mathtt{N} = (\uv_1, \uv_2, \uv_3)$ and $\mathtt{S}= (\uv_2, \uv_1, \uv_4)$ sharing the edge $\overline{\uv_1 \uv_2}$ and the flipped triangles 
$\mathtt{E} =(\uv_3, \uv_4, \uv_2)$ and $\mathtt{W}=(\uv_4, \uv_3, \uv_1)$
sharing the edge $\overline{\uv_3 \uv_4}$
as depicted on Fig.~\ref{2flippedT}. Let $\uv \mapsto F(\uv)$ be a function defined on the vertices. Then the four following equalities are equivalent
\begin{equation}
\label{ }
\nabla\! F(\mathtt{N})=\nabla\! F(\mathtt{S})
\ ,\ \ 
\nabla\! F(\mathtt{E})=\nabla\! F(\mathtt{W})
\ ,\ \ 
\overline\nabla\! F(\mathtt{N})=\overline\nabla\! F(\mathtt{S})
\ ,\ \ 
\overline\nabla\! F(\mathtt{E})=\overline\nabla\! F(\mathtt{W})
\end{equation}
Note that the four points are not necessarily concyclic.
\end{lemma}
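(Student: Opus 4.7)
\bigskip
\noindent
\textbf{Proof plan.} The plan is to reduce all four equalities to the single assertion that $F$ agrees on the four vertices $\uv_1,\uv_2,\uv_3,\uv_4$ with an $\mathbb{R}$-affine function of the form $z \mapsto az + b\bar z + c$.

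First, I would establish the ``local'' equivalence
\[
\nabla F(\mathtt{N}) = \nabla F(\mathtt{S}) \ \Longleftrightarrow\ \overline{\nabla} F(\mathtt{N}) = \overline{\nabla} F(\mathtt{S}),
\]
and similarly for the pair $(\mathtt{E},\mathtt{W})$. Both equivalences follow directly by applying the identity \ref{diff-formula-nablas} to the pair of vertices shared by the two triangles. For $(\mathtt{N},\mathtt{S})$, which share the edge $\overline{\uv_1\uv_2}$, writing the identity once in each triangle and subtracting gives
\begin{equation*}
0 = (z_1-z_2)\bigl[\nabla F(\mathtt{N}) - \nabla F(\mathtt{S})\bigr] + (\bar z_1 - \bar z_2)\bigl[\overline{\nabla} F(\mathtt{N}) - \overline{\nabla} F(\mathtt{S})\bigr],
\end{equation*}
and since $z_1\neq z_2$ the two differences vanish simultaneously. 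The same argument, applied to the shared edge $\overline{\uv_3 \uv_4}$, handles the pair $(\mathtt{E},\mathtt{W})$.

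Next I would link the $(\mathtt{N},\mathtt{S})$ equalities to the $(\mathtt{E},\mathtt{W})$ equalities. Assume
$\nabla F(\mathtt{N}) = \nabla F(\mathtt{S}) =: a$, so by the previous step also $\overline{\nabla} F(\mathtt{N}) = \overline{\nabla} F(\mathtt{S}) =: b$. Applying \ref{diff-formula-nablas} inside $\mathtt{N}$ with base vertex $\uv_1$ expresses $F(\uv_2)$ and $F(\uv_3)$ as $F(\uv_1) + a(z_i - z_1) + b(\bar z_i - \bar z_1)$; applying it inside $\mathtt{S}$ with the same base vertex $\uv_1$ gives the analogous expression for $F(\uv_4)$. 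Thus $F$ coincides on all four vertices with the single affine function $G(z) = F(\uv_1) + a(z - z_1) + b(\bar z - \bar z_1)$. Since $\nabla$ and $\overline{\nabla}$ depend only on the vertex values within a triangle, one computes $\nabla F(\mathtt{E}) = \nabla G = a = \nabla F(\mathtt{W})$ and $\overline{\nabla} F(\mathtt{E}) = \overline{\nabla} G = b = \overline{\nabla} F(\mathtt{W})$.

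The reverse direction, starting from equalities on $(\mathtt{E},\mathtt{W})$, is entirely symmetric (swap the roles of the two triangulations of the quadrilateral $\uv_1 \uv_3 \uv_2 \uv_4$). Chaining the implications closes the cycle and proves that all four equalities are equivalent. The argument uses only \ref{diff-formula-nablas} and the hypothesis $z_1 \ne z_2$, $z_3 \ne z_4$, so no concyclicity of the four points is required; this is the reason the concluding comment of the lemma holds. There is no real obstacle here, the only thing to be careful about is that the ``linking'' step genuinely requires \emph{both} $\nabla$ and $\overline{\nabla}$ equalities on the initial pair, which is why one must invoke the local equivalence step first.
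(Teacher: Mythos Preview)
Your proof is correct and is precisely the expected expansion of the paper's argument. The paper only says the lemma ``follows from the definitions \ref{nablaDef} and \ref{barnablaDef}, and it is left to the reader,'' noting that ``it has a simple geometric interpretation''; your affine-interpolant characterization via \ref{diff-formula-nablas} \emph{is} that geometric interpretation, and the chain of equivalences you build from it is exactly what the authors are pointing to.
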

\begin{proof}
The proof follows from the definitions \ref{nablaDef} and \ref{barnablaDef}, and it is left to the reader. It has a simple geometric interpretation. Again, note that this is valid for any pair of triangles sharing an edge.
\end{proof}
\begin{figure}[h]
\label{2flippedT}
\begin{center}
\includegraphics[width=1.5in,angle=0]{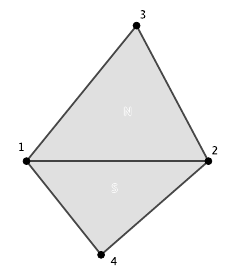}
\quad
\includegraphics[width=1.5in,angle=0]{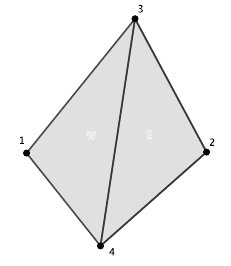}
\caption{\texttt{N}, \texttt{S}, \texttt{E} and \texttt{W} triangles}
\label{ }
\end{center}
\end{figure}
In this case,
a single pair of discrete derivatives $\big(\nabla \!F(\mathtt{c}),\overline\nabla\! F(\mathtt{c}) \big)$ of $F$
is associated to a cocyclic configuration of points, namely a simple cyclic polygon $P=(z_1,z_2,\cdots z_k)$ ($k\ge 4$)
with circumcenter $\mathtt{c}$. The variation $\deltaeo\mathbb{A}(\vec{\ue} \, )$ for a chord $\vec\ue$ is then given by
\begin{equation}
\label{delAexpl2}
\deltaeo\mathbb{A}(\vec{\ue} \, )={1\over 2} \, \frak{Im} \Big[ \overline\nabla\! F (\mathtt{c}) \, 
\big(\mathscr{E}_\mathrm{n}(\vec{\ue} \, )+\mathscr{E}_\mathrm{s}(\vec{\ue} \, ) \big) \Big] \tan^2 \theta_\mathrm{n/s} (\vec{\ue} \, ) 
\end{equation}

\subsection{Curvature dipoles and the anomalous chord term} 
\label{subsubsection-dipole}
Let us discuss a possible explanation of the anomalous terms corresponding to deformations of cocyclic vertex configurations. The adjective "anomalous" indicates that these contributions are not present for either the Laplace-Beltrami operator $\Delta$ or the K\"ahler operator $\mathcal{D}$, both of which admits a smooth continuum limit consistent with the predictions of conformal invariance.

As discussed in the definition \ref{ConfDelta} the conformal Laplacian $\Deltaconf$ for a Delaunay graph $\uG$
can be viewed as the discretized \emph{Laplace-Beltrami operator} on the \emph{rhombic surface} $\uS_\uG^\lozenge$ 
introduced in Def.~\ref{dRhomSurf}. 
The construction of $\uS_\uG^\lozenge$ is illustrated in Fig.~\ref{Kite2LozengeFlat} for an isoradial Delaunay graph $\uG$ 
and in  Fig.~\ref{Kite2Lozenge} for a generic (non-isoradial) Delaunay triangulation $\uG$.

\begin{figure}[h!]
\begin{center}
\includegraphics[scale=.5]{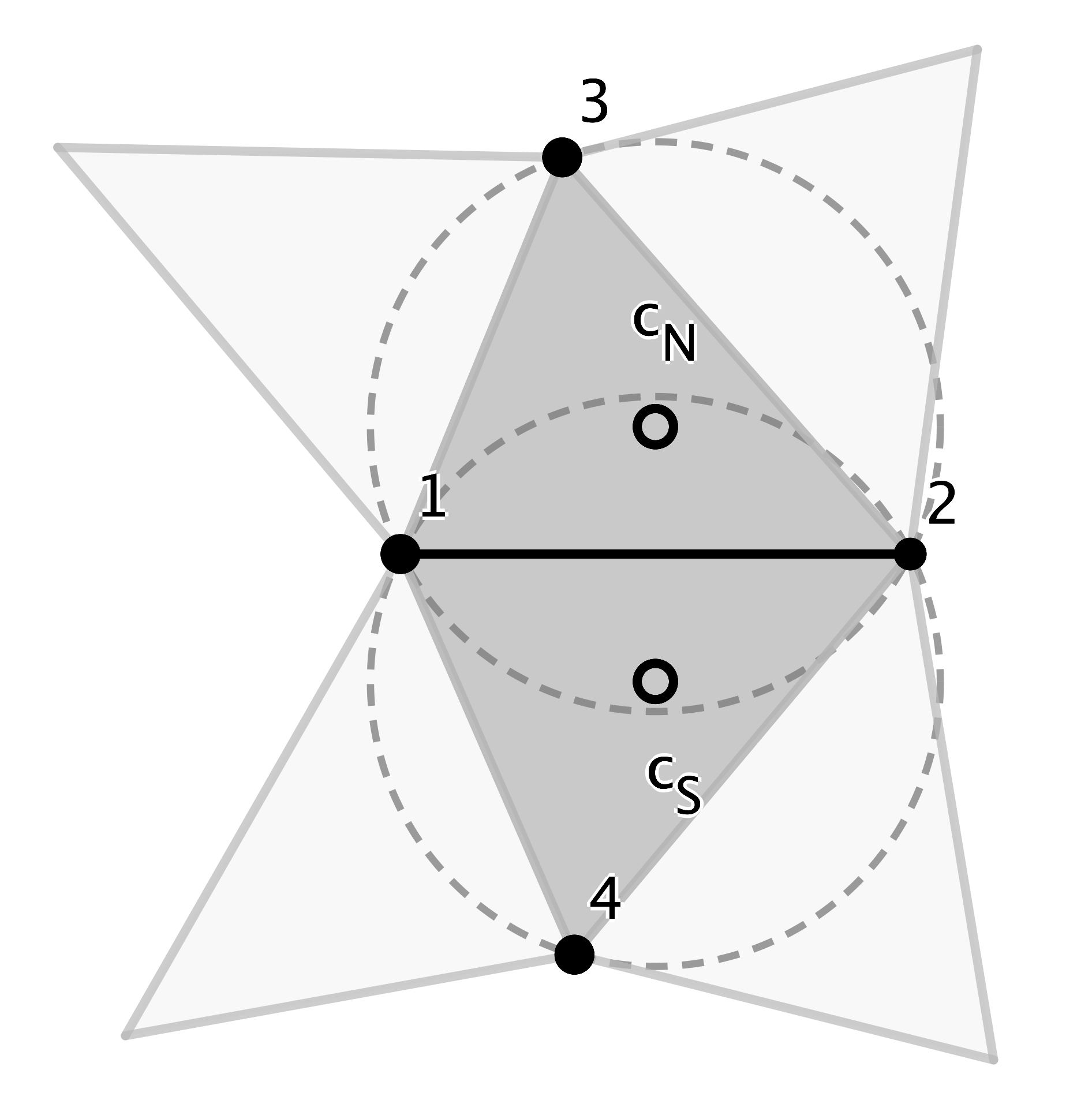}\qquad \includegraphics[scale=.5]{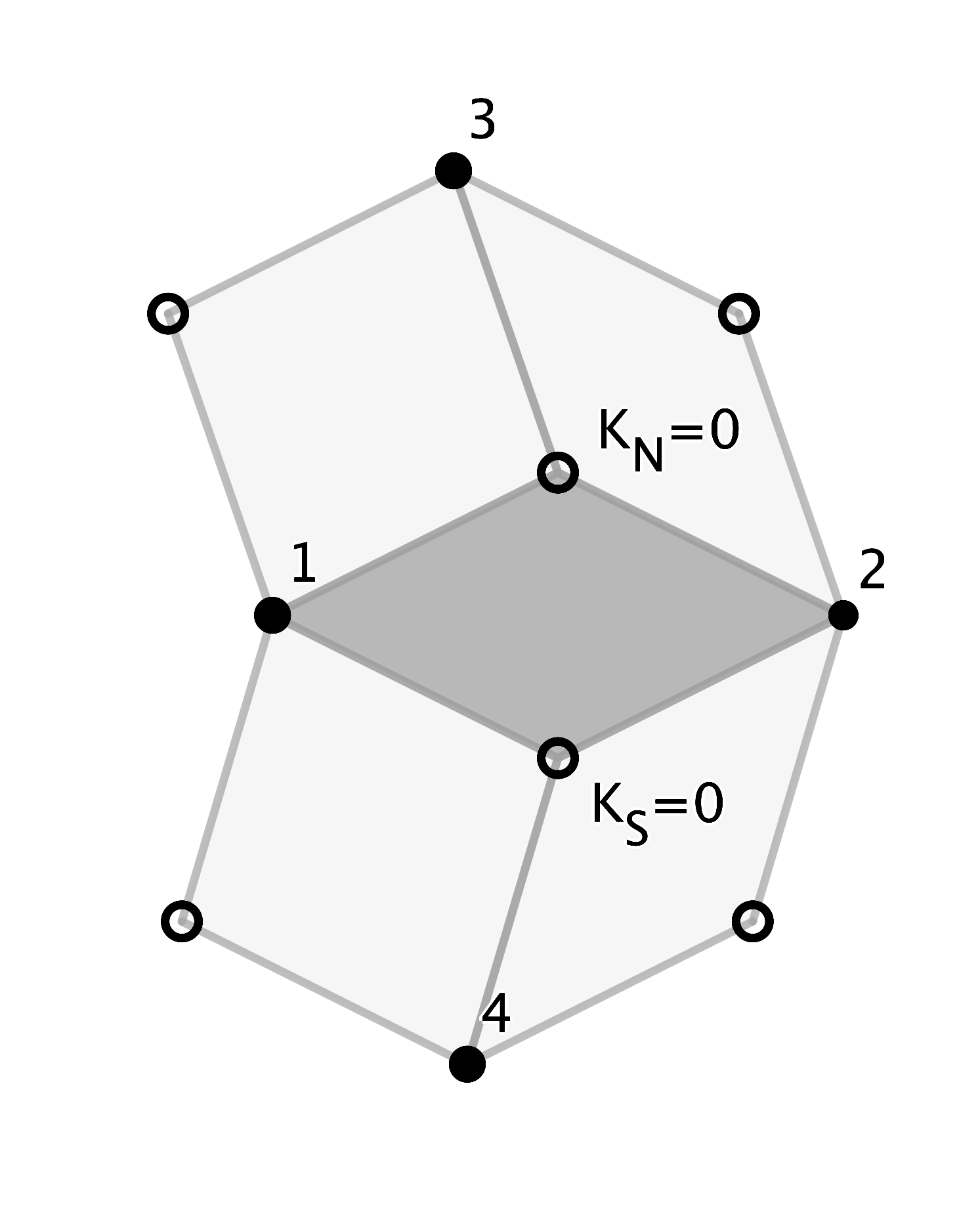}
\caption{A regular edge $\ue=(12)$ of a critical triangulation $\uG$ (left) and its associated rhombic lattice ${\uG}^\lozenge$ (right), the curvature $K$ associated to each face of $\uG$, i.e. its white $\tilde{\uo}$-vertices of ${\uG}^\lozenge$, is zero.}
\label{Kite2LozengeFlat}
\end{center}
\end{figure}
\begin{figure}[h!]
\begin{center}
\includegraphics[scale=.5]{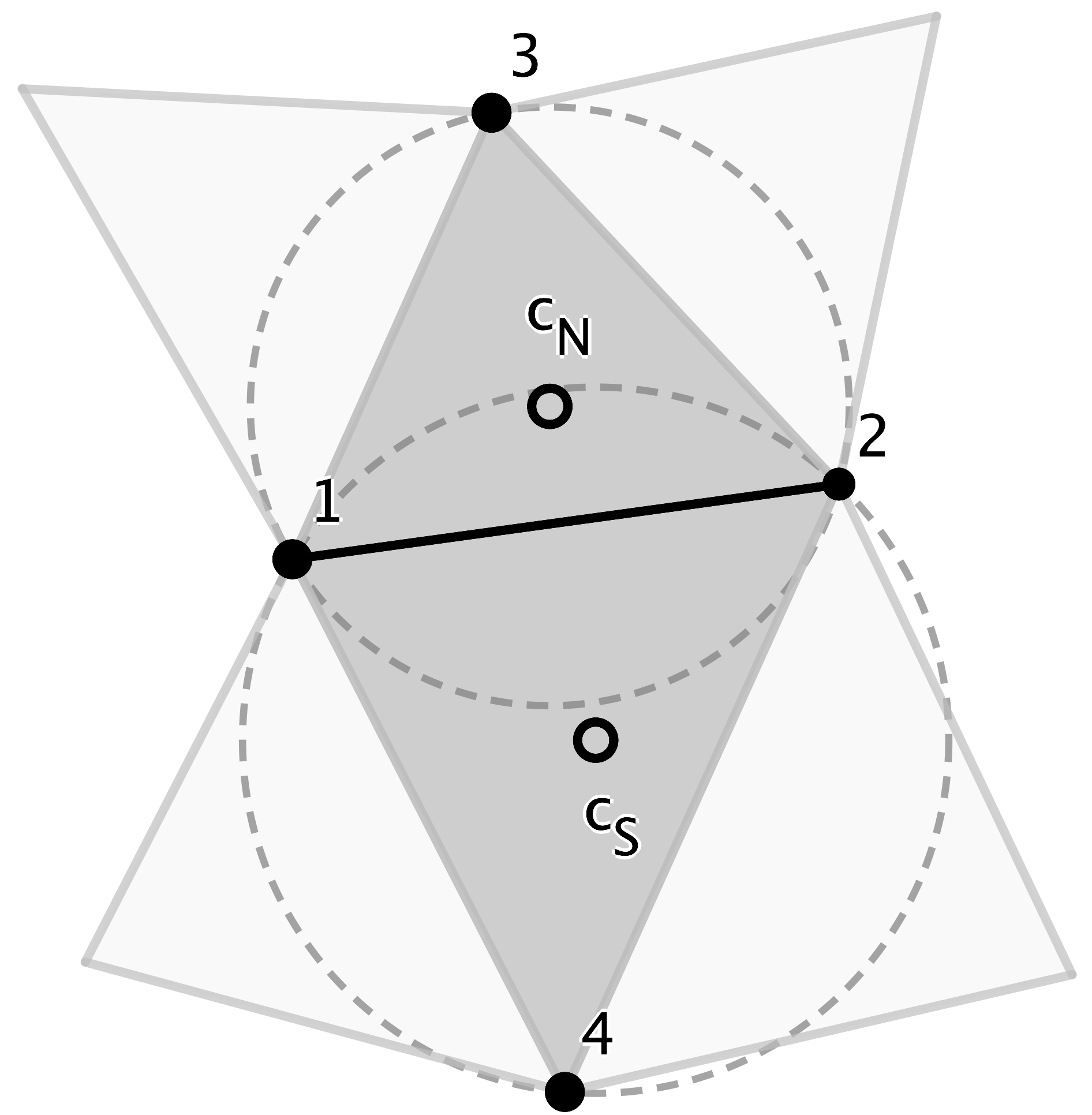}\qquad  \includegraphics[scale=.5]{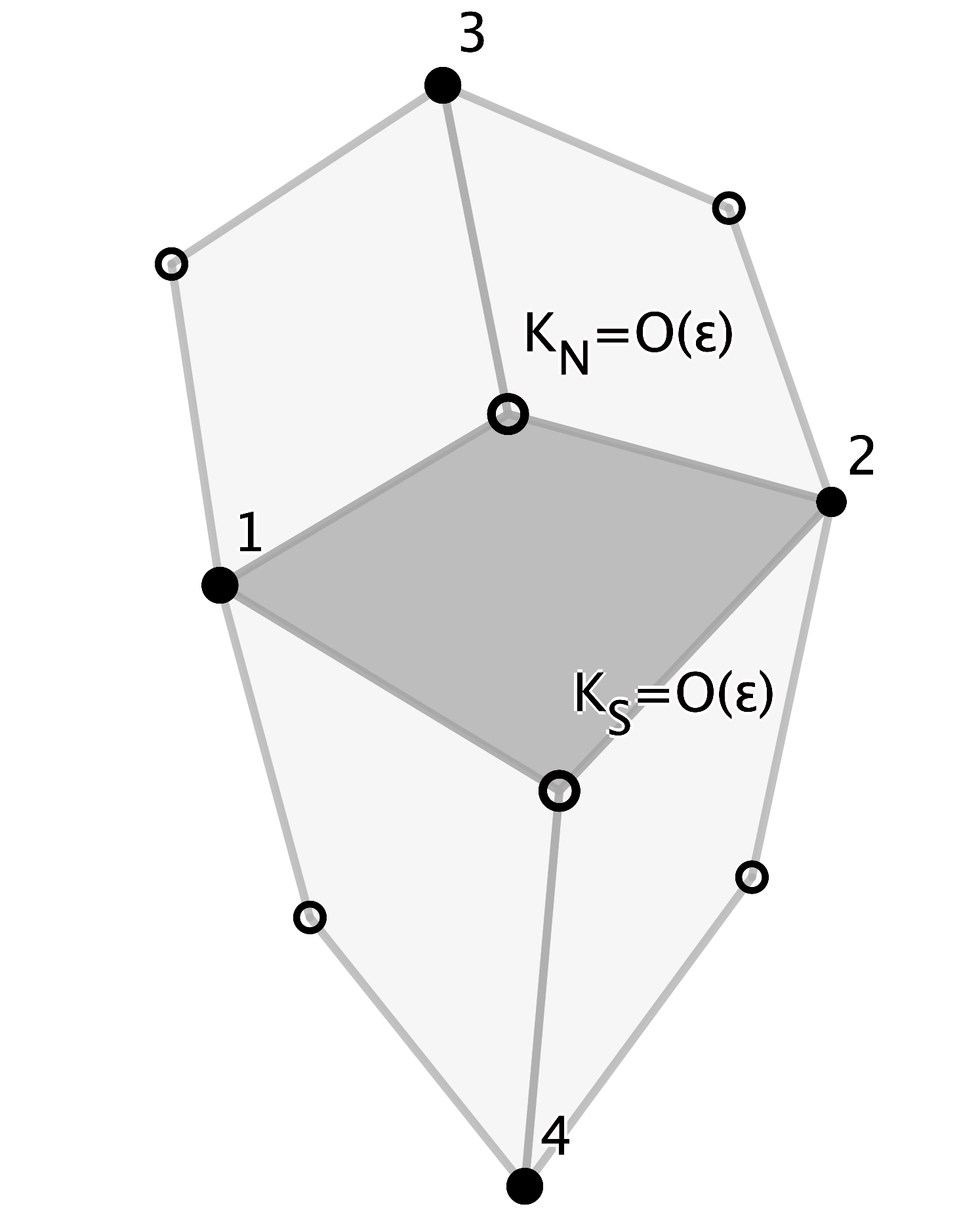}
\caption{ A $\mathrm{O}(\epsilon)$ deformation of $\uG$. The Gauss curvatures $K$ of the $N$ and $S$ faces are non-zero, but of order $\mathrm{O}(\epsilon)$.}
\label{Kite2Lozenge}
\end{center}
\end{figure}
It is easy to see that the surface $\uS_{\uG}^\lozenge$ is piecewise flat, with curvature defects (i.e. conical singularities) localized at the vertices $\tilde{\uo}_\uf$ 
associated to circumcenters of faces $\uf$ in $\uG$. 
The defect angle $K(\uf)$ corresponds to a localized curvature defect at $\tilde{\uo}_\uf$ and its value is given in terms of the 
conformal angles $\theta(\ue)$ of the edges forming the boundary of the face $\uf$.
Recall from \ref{scalar-curvature} that the associated scalar curvature $R_{\scriptscriptstyle{\mathrm{scal}}}(\tilde\uo_\uf)$ at 
a vertex $\tilde{\uo}_\uf$ is 

\begin{equation}
R_{\scriptscriptstyle{\mathrm{scal}}}(\tilde\uo_\uf)=4\pi - 2 \sum\limits_{ \ue \in\partial \uf} \, \big(\pi - 2 \theta(\ue) \big)
\end{equation}
or equivalently 
twice the measure of the defect angle around the circumcenter $\uo_\uf$ of the face $\uf$, i.e.
the Gauss curvature 
\begin{equation}
\label{}
\underbrace{K(\uf) \ :=\ 2\pi - \sum_{\ue \in \uf} \, \big( \pi - 2 \theta(e) \big)}_{\text{discrete Gauss curvature} }
\end{equation}
For an isoradial Delaunay graph $\uG$ the rhombic surface $\uS_{\uG}^\lozenge$ coincides with 
the planar kite graph $\uG^\lozenge$ whose faces, in this case, are all rhombs.
Furthermore, the scalar curvature $R_{\scriptscriptstyle{\mathrm{scal}}}(\tilde\uo_\uf)$ associated to each face 
$\uf$ in $\uG$ is zero.
For a generic Delaunay graph $\uG$ the scalar curvature $R_{\scriptscriptstyle{\mathrm{scal}}}(\tilde\uo_\uf)$ 
will be non-zero (see Fig.~\ref{Kite2Lozenge}).
Indeed, consider a cyclic quadrilateral face $\uf$
in an isoradial triangulation $\uG_\mathrm{cr}$ 
depicted in Fig.~\ref{4PointsLozenge}
and the effects of a generic deformation $\uG_\mathrm{cr} \rightarrow \uG_\epsilon$
depicted in Fig.~\ref{4PointsKite}. In $\uS_{\uG_\mathrm{cr}}^\lozenge$ four lozenges meet at $\tilde{\uo}_\uf$
where the scalar curvature $R_{\scriptscriptstyle{\mathrm{scal}}}(\tilde\uo_\uf)$ vanishes.

\begin{figure}[h!]
\begin{center}
\includegraphics[scale=.3]{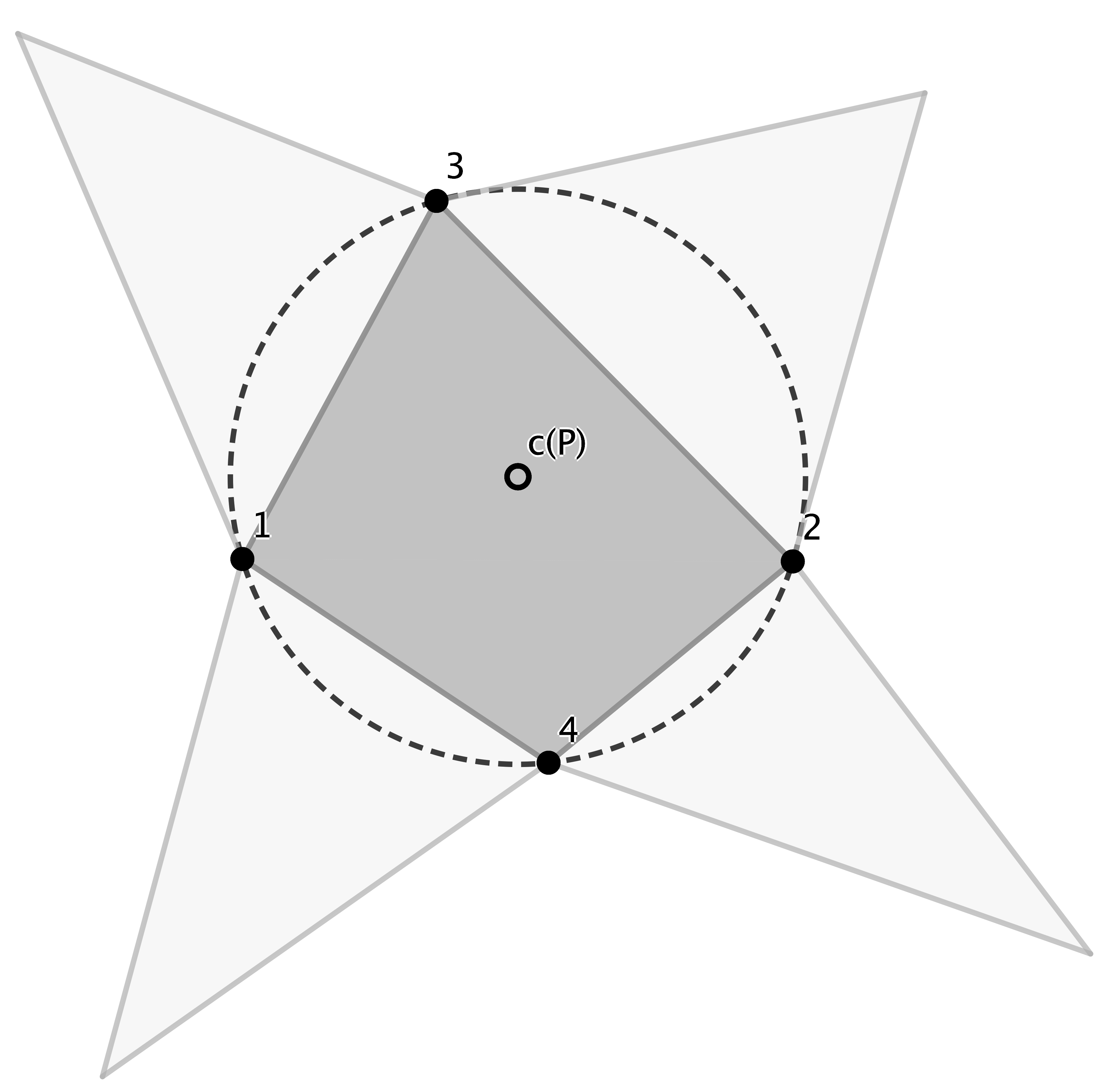}\qquad \raisebox{7.ex}{\includegraphics[scale=.3]{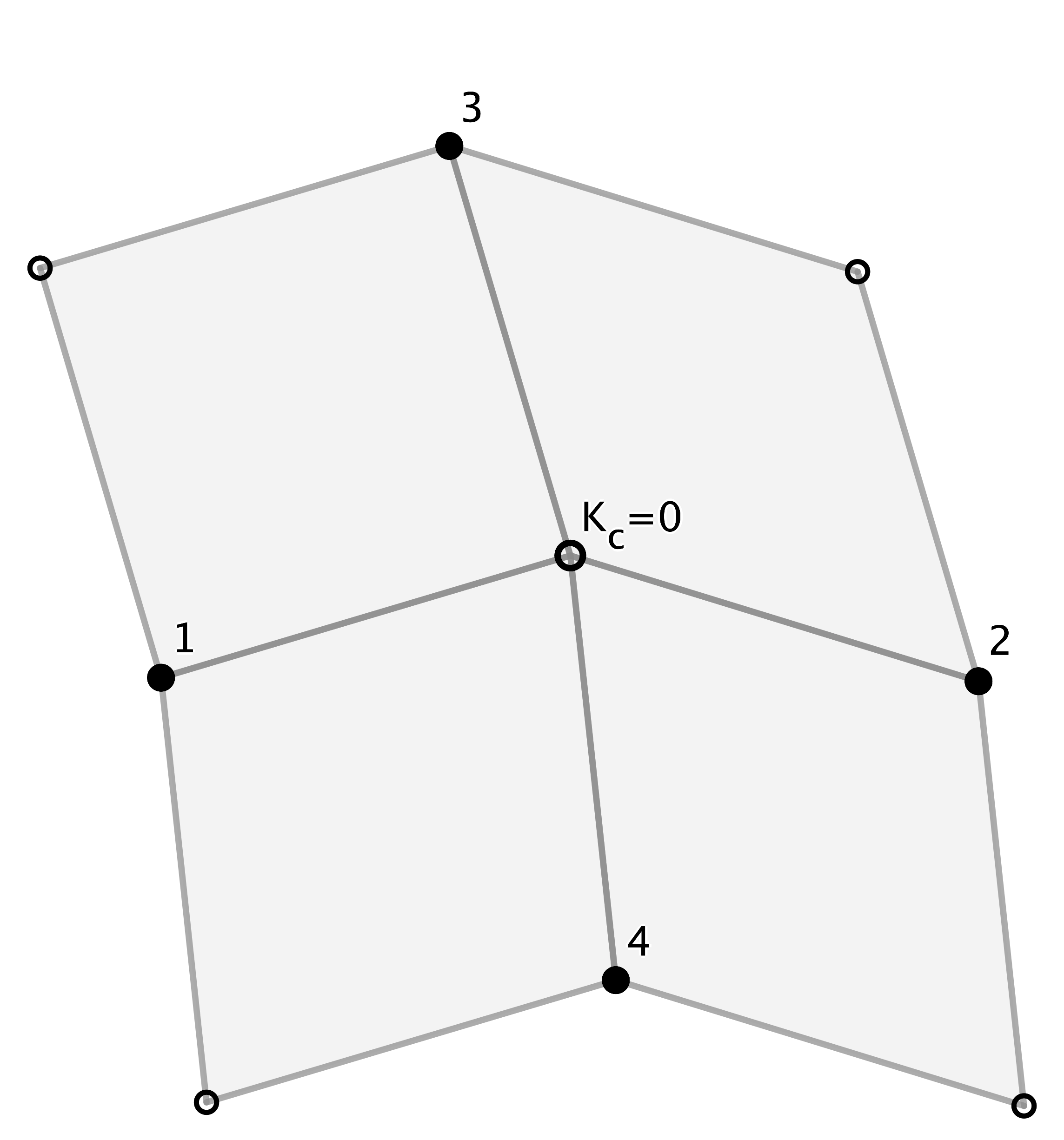}}
\caption{A cocylic face  $P=(1423)$ of a critical triangulation $\uG_\mathrm{cr}$ (left) and its associated rhombic lattice ${\uG}^\lozenge$ (right), the curvature $K$ associated to each face of $\uG$, i.e. its white $\uo$\,-vertices of ${\uG}^\lozenge$, is zero.}
\label{4PointsLozenge}
\end{center}
\end{figure}

As illustrated in Fig.~\ref{4PointsKite}, as soon as we deform this cyclic quadrilateral, a diagonal edge $\ue$ generically emerges in $\uG_\epsilon$ 
(infinitesimally a chord $\ue$ in $\uG_{0^+}$)
which subdivides the quadrilateral $\uf$ into two triangles 
$\uf_\mathrm{n}$ and $\uf_\mathrm{s}$, while the circumcenter $\uo_\uf$ splits into two circumcenters $\uo_\mathrm{n}$ and $\uo_\mathrm{s}$. In the deformed rhombic surface $\uS_{\uG_\epsilon}^\lozenge$ a new lozenge appears between $\tilde{\uo}_\mathrm{n}$ and $\tilde{\uo}_\mathrm{s}$.
However this new lozenge is ``flat'' i.e. to first order in $\epsilon$ its angles are $(0,\pi,0,\pi)$.
Therefore the  Gaussian curvatures $K(\uf_\mathrm{n})$ and 
$K(\uf_\mathrm{s})$ have opposite sign and they are both of order $\mathrm{O}(1)$, not of order $\mathrm{O}(\epsilon)$.
In terms of the north and south angles of the chord $\vec\ue$ they read
\begin{equation}
\label{ }
K(\uf_\mathrm{n})=- 2 \theta_\mathrm{n}(\vec{\ue} \,) \ + \ \mathrm{O}(\epsilon)
\quad,\qquad
K(\uf_\mathrm{s}) 
= - 2 \theta_\mathrm{s}(\vec{\ue} \, ) \ + \ \mathrm{O}(\epsilon) 
\end{equation}

{Thus the deformation produces a \emph{curvature dipole} associated to the chord $\ue$, 
i.e. neighboring curvature defects with non-zero but opposite signs.}
{Said differently, }
the smooth deformation  $\uG_\mathrm{cr} \rightarrow \uG_\epsilon$ manifests a
\emph{discontinuity} in the curvature. 
Generically when one smoothly deforms a cyclic face $\uf$ of $\uG_\mathrm{cr}$ with four or more vertices, a curvature dipole
will emerge for each chord $\ue \in \uG_{0^+}$ which subdivides the face $\uf$.
\begin{figure}[h!]
\begin{center}
\includegraphics[scale=.3]{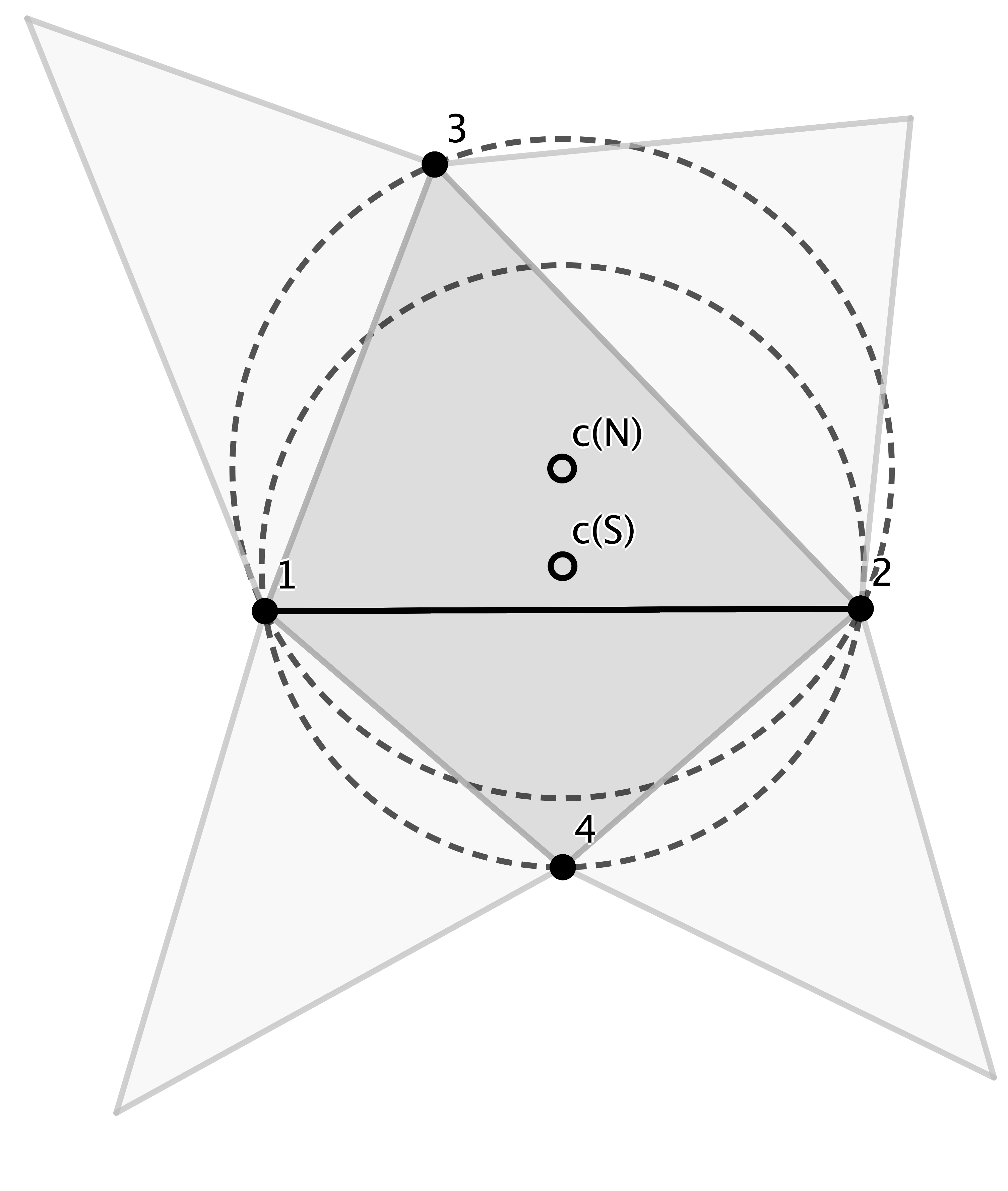}\qquad \raisebox{7.ex}{\includegraphics[scale=.3]{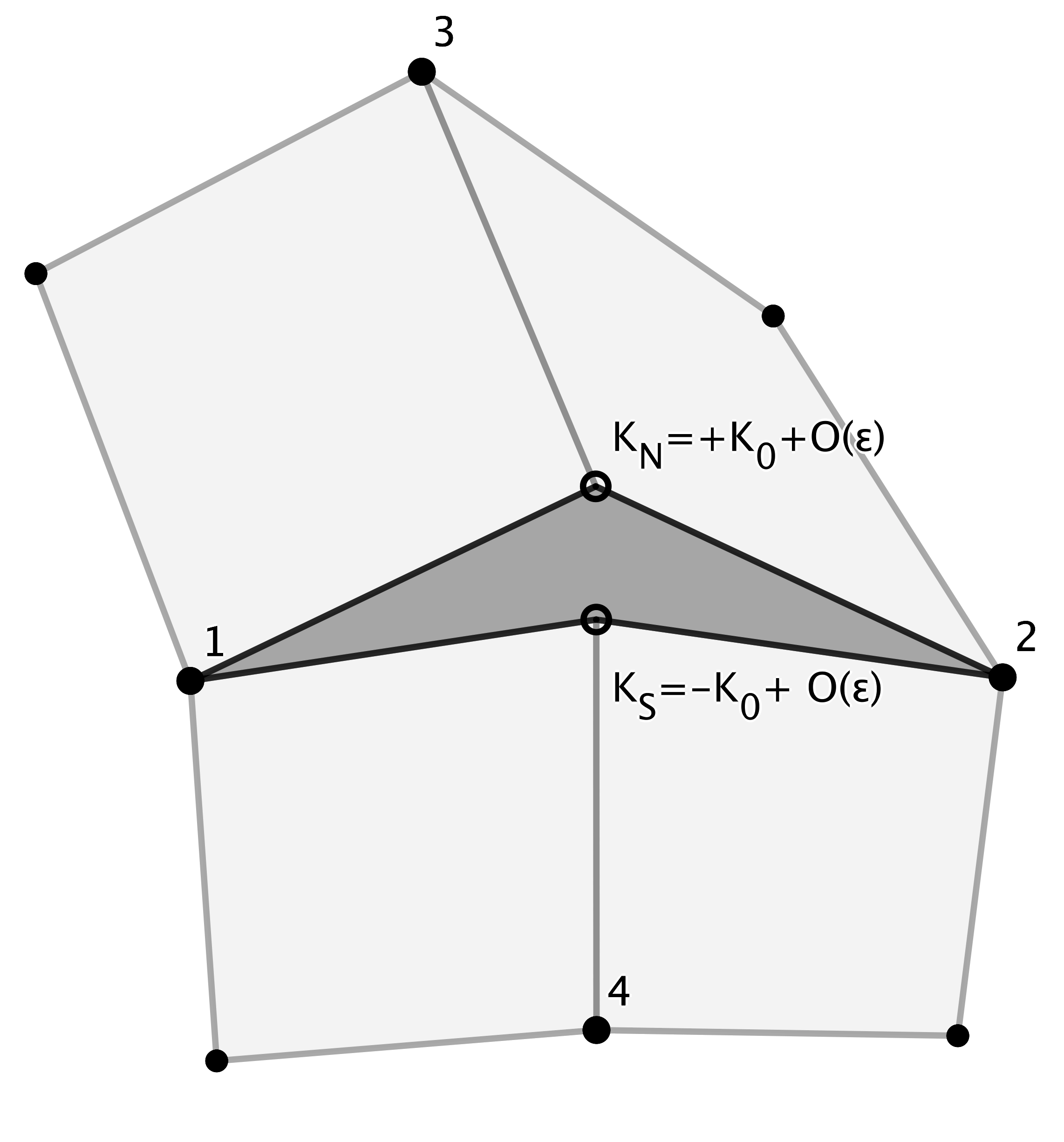}}
\caption{ A $\mathrm{O}(\epsilon)$ deformation of the cocyclic configuration. The Delaunay condition select a chord $\ue=(12)$, which splits the face $P=(1423)$ into two triangles $N=(123)$ and $S=(214)$. A flat lozenge $(1S2N$) appear in the rhombic lattice ${\uG}^\lozenge$. The curvatures $K$ of the $N$ and $S$ faces are non-zero, but of order $\mathrm{O}(1)$ and opposite.
$N$ and $S$ form a ``curvature dipole''.}
\label{4PointsKite}
\end{center}
\end{figure}

Finally, let us stress that a curvature dipole appears if the anomalous term $\deltae \mathbb{A} (\vec{\ue} \, )$ discussed above in \ref{ssConfLApDEl} is non-zero. 
Indeed this anomalous term is proportional to $\tan^2\theta_\mathrm{n}( \vec{\ue} \,)$, while the dipole is proportional to $\theta_\mathrm{n}(\vec{\ue} \, )$. 
Thus for a chord $\ue \in \uG_{0^+}$
with $\theta_\mathrm{n}(\vec{\ue} \, )=\theta_\mathrm{s}(\vec{\ue} \, )=0$, no anomalous term is present and so no curvature dipole appears at first order in the deformation. 
This occurs iff the circumcenter $\uo_\uf$ of the face $\uf$ lies on the edge $\ue$.
Notice that if $\uf$ is a quadrilateral (as in Fig.~\ref{4PointsKite}) where the north and south angles of both $\ue=(12)$ and the flipped edge 
$\ue^*=(34)$ are zero then the face $\uf$ is a rectangle. 
In this case, to first order in $\epsilon$, the deformation is $\mathtt{isoradial}\to\mathtt{isoradial}$, not $\mathtt{isoradial}\to\mathtt{non-isoradial}$.
These $\mathtt{isoradial}\to\mathtt{isoradial}$ deformations are the ones considered by Kenyon in the seminal paper \cite{Kenyon2002}.

\section{The scaling limit of variations}
\label{sScalingLimit}

\subsection{Rescaling smooth deformations}
\label{ssRescaling}

As explained in the introduction, we incorporate a scaling factor
$\ell > 0$ into the deformation in order to define and study a continuum limit.
We may view the scaling parameter $\ell>0$ as imparting a {\it resolution} on the
critical graph, i.e. we get a rescaled 
embedding $z^{\scriptscriptstyle{1\!/\!\ell}}_\mathrm{cr} := z_\mathrm{cr}/\ell$ of $\uG_\mathrm{cr}$, 
under which vertices become closer and denser in any compact region of the plane
as $\ell >0$ increases.
In particular the area $A(\uf)$ of a face $\uf\in \mathrm{F}(\uG_\mathrm{cr})$
shrinks by a factor of $1/\ell^2$ under the rescaled embedding
while its circumcenter coordinate $z_\mathrm{cr}(\uf)$ is rescaled by a factor of $1/\ell$.
In this way, the scaling parameter $\ell >0$ allows us to interpret
the critical graph as a planar partition and can be used to define a Riemann sum.
More specifically, given any continuous
complex-valued function $H: \Bbb{C} \longrightarrow \Bbb{C}$ with compact support 
$\Omega = \supp H$ then

\begin{equation}
\lim_{\ell \rightarrow \infty}
\, \sum_{\ux \in \mathrm{F}(\uG_\mathrm{cr})}
A(\ux)/\ell^2 \cdot H\big(z_\mathrm{cr}(\ux)/\ell \big)
\ = \
\int_{\Omega} \, 
d^2x \, H(x)
\end{equation}
Given a smooth complex-valued function $F: \Bbb{C} \longrightarrow \Bbb{C}$ with compact support, and $\ell >0$ a scaling real parameter, we
set $F_\ell(z) := \ell F(z/\ell)$. When deforming a critical isoradial Delaunay graph $\uG_\mathrm{cr}$ (with unit 
{circumradius}
$R_\mathrm{cr}=1$), we
shall consider the restriction of $F_\ell$ to (the coordinates of) the vertices of the critical graph.
By abuse of notation, we shall write $F_\ell(\mathrm{\uv}) := \, \ell \, F \big(z_\mathrm{cr}(\uv)/\ell \big)$
for each vertex $\uv \in \mathrm{V}(\uG_\mathrm{cr})$. We use $F_\ell$ to displace the
coordinates of the critical graph and define a deformed embedding, namely
$$z_{\epsilon,\ell}(\uv) := \, z_\mathrm{cr}(\uv) \, + \,  \epsilon \, F_\ell \big( \uv \big)  $$

\subsection{Rescaling bi-local deformations}
\label{ssRsBilDef}
Our analysis of second order variations (for the log-determinants which we consider) involve 
a bi-local deformation implemented by two smooth, complex-valued
functions $F_1$ and $F_2: \Bbb{C} \longrightarrow \Bbb{C}$ whose respective supports $\Omega_1$
and $\Omega_2$ 
are {\it compact} and have lattice closures $\overline{\Omega}_1$ and $\overline{\Omega}_2$ which are {\it disjoint}. 
Set 
$$d := \mathtt{dist}(\Omega_1,\Omega_2)=\text{inf} \big\{ |w_1 - w_2 | \, \big| \, w_i \in \Omega_i \big\} $$
to be the distance between the supports $\Omega_1$ and $\Omega_2$.
Obviously $0<d<\infty$.
The corresponding deformed embedding 
$z_{\underline{\epsilon}, \ell}: \mathrm{V}(\uG_\mathrm{cr}) \longrightarrow \Bbb{C}$ of the critical lattice is given by
\[ z_{\underline{\epsilon},\ell} ( \uv) := \, z_\mathrm{cr} (  \uv ) \,+ \, \epsilon_1 F_{1;\ell} (  \uv) 
\, +  \,  \epsilon_2 F_{2;\ell} ( \uv) \]
where $\underline{\epsilon} = (\epsilon_1, \epsilon_2)$ is a pair of deformation parameters $\epsilon_1, \epsilon_2 \geq 0$
and where we use the notation $F_{i ; \ell}(z) := \ell F_i (z/\ell)$ and by abuse of notation
$F_{i;\ell}( \uv) := F_{i;\ell} \big( z_\mathrm{cr}( \uv) \big)$ for a vertex $ \uv \in \mathrm{V}(\uG_\mathrm{cr})$
and $i=1,2$.
The results of Lemma \ref{epsilontildeF}
still hold for the bi-local deformed embedding $z_{\underline{\epsilon}, \ell}$; simply apply the Lemma to $F_1$ and $F_2$ independently
and take $\tilde{\epsilon}_F = \min (\tilde{\epsilon}_{F_1}, \tilde{\epsilon}_{F_2})$. Let us denote by $\uG_{\underline{\epsilon},\ell}$ the Delaunay
graph uniquely determined by the vertex set $\mathrm{V}(\uG_{\underline{\epsilon},\ell}) := \mathrm{V}(\uG_\mathrm{cr})$ 
together with the deformed embedding $z_{\underline{\epsilon},\ell}$. As we have seen, the one-sided limit $\epsilon_i \rightarrow 0^+$
for $i=1,2$ induces the structure of a weak Delaunay graph $\uG_{0^+\!,\ell}$ on the vertex set $\mathrm{V}(\uG_\mathrm{cr})$
with respect to the critical embedding $z_\mathrm{cr}$.
In general, the edge set $\mathrm{E}(\uG_{0^+\!,\ell})$ will vary as the scaling parameter $\ell >0$ evolves; nevertheless
$\mathrm{E}(\uG_\mathrm{cr}) \subseteq \mathrm{E}(\uG_{0^+\!,\ell})$ for all $0 < \ell \leq \infty$. 
For each value of $\ell > 0$ select
a weak Delaunay triangulation $\widehat{\uG}_{0^+\!,\ell}$ which completes $\uG_{0^+\!,\ell}$.
Because $\mathrm{E}(\uG_\mathrm{cr}) \subseteq \mathrm{E}(\uG_{0^+\!,\ell}) \subseteq \mathrm{E}(\widehat{\uG}_{0^+\!,\ell})$
for each $0 < \ell \leq \infty$ we may
always perform the following resummation

\begin{equation}
\sum_{\ux \in \mathrm{F} (\widehat{\uG}_{0^+\! , \ell} )} A(\ux) \, H (z_\mathrm{cr}(\ux))
\ = \
\sum_{\uy \in \mathrm{F} (\uG_\mathrm{cr})} A(\uy) \, H \big(z_\mathrm{cr}(\uy) \big)
\end{equation}
where we combine terms on the left hand side involving
triangles of $\widehat{\uG}_{0^+\!,\ell}$ which share a common circumcenter
and where $H \big(\ux\big)$ is any quantity which depends only upon
the circumcenter $z_\mathrm{cr}(\ux)$ of $\ux \in \mathrm{F} (\widehat{\uG}_{0^+\!,\ell} )$.
Consequently the choice of triangulation $\widehat{\uG}_{0^+\!,\ell}$ completing $\uG_{0^+\!,\ell}$ will not affect our calculations.

\subsection{Scaling limit and derivation of Theorem~\ref{TfinalOPElike1}}
\label{sScalLimit}
We now are in a position to study the scaling limit of the bilocal deformation terms \ref{OPEDelta} (Prop. \ref{ThDelta}) and \ref{OPEKaehler} (Prop.~\ref{The3})) and to derive Theorem~\ref{TfinalOPElike1}.
For $\uG = \uG_\mathrm{cr}$ or $\uG = \widehat{\uG}_{0^+\!,\ell}$ let 
$\mathrm{F}_{\overline{\Omega}_i(\ell)}(\uG)$ denote the subset of faces $\ux$ of $\uG$ whose
vertices belong to the lattice closure $\overline{\Omega}_i(\ell)$ 
of the support $\Omega_i (\ell):= \supp F_{i;\ell}$ for $i=1,2$.

\subsubsection{The initial $\ell$ finite term}
Let $\mathcal{O}(\underline{\epsilon} \, , \ell)$ denote either the Laplace-Beltrami operator $\Delta(\underline{\epsilon} \, , \ell)$ or the K\"ahler operator 
$\mathcal{D}(\underline{\epsilon} \, , \ell)$ 
on the Delaunay graph $\uG_{\underline{\epsilon},\ell}$.
From prop.~\ref{ThDelta} and \ref{The3} the $\epsilon_1 \epsilon_2$ cross-term of $\log \det \mathcal{O}(\underline{\epsilon} \, , \ell)$
is given by the trace term
\begin{equation}
\label{deps1eps2detlog}
 \ =\  -\,\mathrm{tr}\big[ \deltaeo \mathcal{O}(\ell) \cdot \Delta_\mathrm{cr}^{-1} \cdot \deltaet \mathcal{O}(\ell) \cdot \Delta_\mathrm{cr}^{-1}  \big]
\end{equation}

\newcommand{\thebigsumOO}{\sum_{\stackrel{\scriptstyle \ux_1 \in \, \mathrm{F}_{\overline{\Omega}_1(\ell)} \, (\widehat{\uG}_{0^+\!,\ell})}
{\ux_2 \in \, \mathrm{F}_{\overline{\Omega}_2(\ell)} \, (\widehat{\uG}_{0^+\!,\ell})}}} 

\newcommand{\thebigsumOOO}{\sum_{\stackrel{\scriptstyle \ux_1 \in \, \mathrm{F}_{\overline{\Omega}_1(\ell)} \, (\uG_\mathrm{cr})}
{\ux_2 \in \, \mathrm{F}_{\overline{\Omega}_2(\ell)} \, (\uG_\mathrm{cr})}}} 

\noindent
which can be expressed as the following double sum over triangles in $\widehat{\uG}_{0^+\!,\ell}$
\begin{equation}
\label{OPEnabla}
-{2\over {\pi^2}} \!\!\!
\thebigsumOO
\hskip -2.em A(\ux_1)A(\ux_2) \Bigg( 
\frak{Re} \Bigg[ 
{\overline{\nabla} F_{1;\ell} (\ux_1) \, \overline{\nabla} F_{2;\ell} (\ux_2) 
\over { \big( z_\mathrm{cr}(\ux_1) - z_\mathrm{cr}(\ux_2) \big)^4}   }
\Bigg]
+ \mathrm{O} \Big( \, \big| z_\mathrm{cr}(\ux_1) - z_\mathrm{cr}(\ux_2) \big|^{-5} \Big)
\Bigg)
\end{equation}
where $z_\mathrm{cr}(\ux_i)$ is the circumcenter of $\ux_i$ for $i = 1,2$.  
Both $F_1$ and $F_2$ have
compact support so by Lemma~\ref{lemmabound}
we have that $\overline{\nabla}F_{i;\ell} (\ux) = \overline{\partial} F_i \big( z_\mathrm{cr}(\ux)/\ell \big) \, + \, R_\mathrm{cr}/\ell \cdot E_i(\ux )$
where $\big| E_i(\ux) \big|$ is bounded by a constant $B_i>0$ independent of both $\ux$ and $\ell>0$.
We begin by breaking \ref{OPEnabla} into two pieces and evaluate 
their large $\ell$ limits separately.

\subsubsection{The subleading term} The large $\ell$ limit of the second part of \ref{OPEnabla}, vanishes as the following computation shows:
\begin{equation}
\begin{split}
&\D  \ \ \,
\Big|
\thebigsumOO
\, A(\ux_1)  A(\ux_2)  
\cdot \mathrm{O} 
\Big( \,  \big| z_\mathrm{cr}(\ux_1) - z_\mathrm{cr}(\ux_2) \big|^{-5} \Big)
\Big| 
 \\
&\D\leq
\thebigsumOO
\, A(\ux_1)  A(\ux_2) \cdot
\Big| \mathrm{O} 
\Big( \,  \big| z_\mathrm{cr}(\ux_1) - z_\mathrm{cr}(\ux_2) \big|^{-5} \Big)
\Big| 
\\
&\D\leq 
\thebigsumOOO
\, A(\ux_1)  A(\ux_2) \cdot
\Big| \mathrm{O} 
\Big( \,  \big| z_\mathrm{cr}(\ux_1) - z_\mathrm{cr}(\ux_2) \big|^{-5} \Big)
\Big| 
 \\
&\D \leq
{1 \over d} \, {1\over \ell} \  
\thebigsumOOO
\, A( \ux_1)/\ell^2 \,  A( \ux_2)/\ell^2  \cdot 
\Big| \mathrm{O} 
\Big( \,  \big| z_\mathrm{cr}(\ux_1)/\ell - z_\mathrm{cr}(\ux_2)/\ell \big|^{-4} \Big) 
\Big| \\
\end{split}
\end{equation}
\noindent
In the large $\ell$ limit the sum over the triangles becomes a standard Riemann integral
\begin{equation}
\begin{split}
&\D \leq 
\lim_{\ell \rightarrow \infty}
\thebigsumOOO
\, A( \ux_1)/\ell^2 \,  A( \ux_2)/\ell^2 \cdot
\Big| \mathrm{O} 
\Big( \,  \big| z_\mathrm{cr}(\ux_1)/\ell - z_\mathrm{cr}(\ux_2)/\ell \big|^{-4} \Big) \Big|
 \\
&\D =
\ \iint_{\Omega_1  \times \Omega_2}\hskip -1 em d^2x_1 \, d^2x_2 \cdot  \Big| \text{O} \Big( \, \big| x_1 - x_2 \big|^{-4} \Big)  \Big| \  = \ \text{O}(1) 
\end{split}
\end{equation}
Hence
\begin{equation}
\label{ }
\lim_{\ell \rightarrow \infty}
\thebigsumOOO
\, A(\ux_1)  A(\ux_2)  \cdot \mathrm{O} \Big( \,  \big| z_\mathrm{cr}(\ux_1) - z_\mathrm{cr}(\ux_2) \big|^{-5} \Big)\ =\  0
\end{equation}

\subsubsection{The leading term} To evaluate the first part in \ref{OPEnabla}
we consider the norm of the difference between the original term with discrete derivative and the corresponding term with continuous derivatives, and use the previous results to get the bounds

\begin{align}
\label{partI}
&
\left|
\thebigsumOO
\hskip -2.em
A(\ux_1)  A(\ux_2)  
\, \frak{Re} \Bigg[{\overline{\nabla} F_{1;\ell}( \ux_1 ) \, 
\overline{\nabla} F_{2;\ell} (\ux_2 ) - 
\overline{\partial}F_1\big(z_\mathrm{cr}(\ux_1)/\ell \big) \, \overline{\partial}F_2\big(z_\mathrm{cr}(\ux_2)/\ell \big)
\over \big(z_\mathrm{cr}(\ux_1) - z_\mathrm{cr}(\ux_2) \big)^4} \Bigg]  \, 
\right|\nonumber\\
&\ \nonumber\\
&
\leq \left\{
\begin{array}{l}
\D 
\ \ \  {R_\mathrm{cr} \over \ell} \,  
\thebigsumOO
A(\ux_1)/\ell^2 \,  A(\ux_2)/\ell^2  
\,  { \big|E_1( \ux_1) \big| \cdot
\big| \overline{\partial} F_2 \big(z_\mathrm{cr}(\ux_2)/\ell \big) \big| 
\over \big| z_\mathrm{cr}(\ux_1)/\ell - z_\mathrm{cr}(\ux_2)/\ell \big|^4}   \,  \\ \\
\D+ 
\, {R_\mathrm{cr} \over \ell} \, 
\thebigsumOO
A(\ux_1)/\ell^2 \,  A(\ux_2)/\ell^2  
\, { \big| \overline{\partial} F_1\big( z_\mathrm{cr}(\ux_1) /\ell \big) \big| \cdot
\big| E_2 (\ux_2) \big| \over \big| z_\mathrm{cr}(\ux_1)/\ell - z_\mathrm{cr}(\ux_2)/\ell \big|^4} \\ \\
\D+
\, {R^2_\mathrm{cr} \over {\ell^2}} \, 
\thebigsumOO
A(\ux_1)/\ell^2 \,  A(\ux_2)/\ell^2  
\, { \big| E_1( \ux_1) \big| \cdot
\big| E_2 (\ux_2) \big| \over \big| z_\mathrm{cr}(\ux_1)/\ell - z_\mathrm{cr}(\ux_2)/ \ell \big|^4}  
\end{array}
\right. \nonumber\\
& \nonumber\\
&
\leq \left\{
\begin{array}{l}
\ \ \  \D {R_\mathrm{cr} \over \ell} \,  
\thebigsumOO
A(\ux_1)/\ell^2 \,  A(\ux_2)/\ell^2  
\,  {B_1  \cdot
\big| \overline{\partial} F_2 \big(z_\mathrm{cr}(\ux_2)/\ell \big) \big| 
\over \big| z_\mathrm{cr}(\ux_1)/\ell - z_\mathrm{cr}(\ux_2)/\ell \big|^4}   \,  \\ \\
\D+ 
\, {R_\mathrm{cr} \over \ell} \, 
\thebigsumOO
A(\ux_1)/\ell^2 \,  A(\ux_2)/\ell^2  
\, { \big| \overline{\partial} F_1\big( z_\mathrm{cr}(\ux_1) /\ell \big) \big| \cdot
B_2 \over \big| z_\mathrm{cr}(\ux_1)/\ell - z_\mathrm{cr}(\ux_2)/\ell \big|^4} \\ \\
\D +
\, {R^2_\mathrm{cr} \over {\ell^2}} \, 
\thebigsumOO
A(\ux_1)/\ell^2 \,  A(\ux_2)/\ell^2  
\, {B_1 \cdot B_2 \over \big| z_\mathrm{cr}(\ux_1)/\ell - z_\mathrm{cr}(\ux_2)/ \ell \big|^4}  
\end{array}
\right. \nonumber\\
& \nonumber\\
&\leq \left\{
\begin{array}{l}
\D 
\ \ \  {R_\mathrm{cr} \over \ell} \,  
\thebigsumOOO
A(\ux_1)/\ell^2 \,  A(\ux_2)/\ell^2  
\,  {B_1  \cdot
\big| \overline{\partial} F_2 \big(z_\mathrm{cr}(\ux_2)/\ell \big) \big| 
\over \big| z_\mathrm{cr}(\ux_1)/\ell - z_\mathrm{cr}(\ux_2)/\ell \big|^4}   \,  \\ \\
\D+ 
\, {R_\mathrm{cr} \over \ell} \, 
\thebigsumOOO
A(\ux_1)/\ell^2 \,  A(\ux_2)/\ell^2  
\, { \big| \overline{\partial} F_1\big( z_\mathrm{cr}(\ux_1) /\ell \big) \big| \cdot
B_2 \over \big| z_\mathrm{cr}(\ux_1)/\ell - z_\mathrm{cr}(\ux_2)/\ell \big|^4} \\ \\
\D+
\, {R^2_\mathrm{cr} \over {\ell^2}} \, 
\thebigsumOOO
A(\ux_1)/\ell^2 \,  A(\ux_2)/\ell^2  
\, {B_1 \cdot B_2 \over \big| z_\mathrm{cr}(\ux_1)/\ell - z_\mathrm{cr}(\ux_2)/ \ell \big|^4}  
\end{array}
\right. 
\end{align}

In the large $\ell$ limit each sum over triangles becomes a Riemann integral, hence the large $\ell$ limit of the l.h.s. of \ref{partI}  is bounded by
\begin{equation}
\begin{array}{l}
\leq 
\left\{
\begin{array}{l}
\D \ \ \ \, \lim_{\ell \rightarrow \infty}
{c\over {2 \pi^2}}  {B_1R_\mathrm{cr} \over \ell} \cdot \iint_{\Omega_1 \times \Omega_2}
{dx_1 \, dx_2 \over \big| x_1 - x_2 \big|^4} 
\, \big| \overline{\partial} F_2 (x_2) \big| \\ \\
\D + 
\, \lim_{\ell \rightarrow \infty}
{c\over {2 \pi^2}} {B_2R_\mathrm{cr} \over \ell} \cdot \iint_{\Omega_1 \times \Omega_2}  
{dx_1 \, dx_2 \over \big| x_1 - x_2 \big|^4} 
\, \big| \overline{\partial} F_1(x_1) \big| \\ \\
\D + 
\, \lim_{\ell \rightarrow \infty}
{c\over {2 \pi^2}} \, {B_1B_2R^2_\mathrm{cr} \over {\ell^2}}  \cdot \iint_{\Omega_1 \times \Omega_2} 
{dx_1 \, dx_2 \over \big| x_1 - x_2 \big|^4}  
\end{array}
\right. \\ \\
=\  0
\end{array}
\end{equation}

\subsubsection{Summing up}
From this it follows that 
\begin{equation}
\begin{split}
\D 
&\ \lim_{\ell \rightarrow \infty}
\thebigsumOO
A(\ux_1)  A(\ux_2)  
\ \frak{Re} \Bigg[{\overline{\nabla} F_{1;\ell}( \ux_1 ) \, 
\overline{\nabla} F_{2;\ell} (\ux_2 ) 
\over \big( z_\mathrm{cr}(\ux_1)- z_\mathrm{cr}(\ux_2) \big)^4} \Bigg] \\
\D = &\ \lim_{\ell \rightarrow \infty}
\sum_{\stackrel{\scriptstyle \ux_1 \in \, \mathrm{F}(\widehat{\uG}_{0^+\!,\ell} )
}{\ux_2 \in \, \mathrm{F}(\widehat{\uG}_{0^+\!,\ell})}}
A(\ux_1)  A(\ux_2)  
\ \frak{Re} \Bigg[{\overline{\nabla} F_{1;\ell}( \ux_1 ) \, 
\overline{\nabla} F_{2;\ell} (\ux_2 ) 
\over \big( z_\mathrm{cr}(\ux_1)- z_\mathrm{cr}(\ux_2) \big)^4} \Bigg] \\
\D = & \ \lim_{\ell \rightarrow \infty}
 \sum_{\stackrel{\scriptstyle \ux_1 \in \, \mathrm{F}(\widehat{\uG}_{0^+\!,\ell} )
}{\ux_2 \in \, \mathrm{F}(\widehat{\uG}_{0^+\!,\ell})}}
A(\ux_1)/\ell^2 \,  A(\ux_2)/\ell^2  
\ \frak{Re} \Bigg[{\overline{\partial}F_1 \big( z_\mathrm{cr}(\ux_1)/\ell \big) \, 
\overline{\partial}F_2 \big( z_\mathrm{cr}(\ux_2)/\ell \big)
\over \big( z_\mathrm{cr}(\ux_1)/\ell - z_\mathrm{cr}(\ux_2)/\ell \big)^4} \Bigg]  \\
\D = & \ \lim_{\ell \rightarrow \infty}
 \sum_{\stackrel{\scriptstyle \ux_1 \in \, \mathrm{F}(\uG_\mathrm{cr})
}{\ux_2 \in \, \mathrm{F}(\uG_\mathrm{cr})}}
A(\ux_1)/\ell^2 \,  A(\ux_2)/\ell^2   
\ \frak{Re} \Bigg[{\overline{\partial}F_1 \big( z_\mathrm{cr}(\ux_1)/\ell \big) \, 
\overline{\partial}F_2 \big( z_\mathrm{cr}(\ux_2)/\ell \big)
\over \big( z_\mathrm{cr}(\ux_1)/\ell - z_\mathrm{cr}(\ux_2)/\ell \big)^4} \Bigg]  \\
\D
= &\ 
\iint_{\Omega_1 \times \Omega_2}
dx_1 \, dx_2
\  \frak{Re} \Bigg[{
\overline{\partial}F_1(x_1) \, \overline{\partial}F_2(x_2)
\over (x_1-x_2)^4} \Bigg]  \\
\end{split}
\end{equation}

\noindent
Thus we have
\begin{equation}
\label{scalim12LB}
\lim_{\ell\to\infty} \mathrm{tr}\big[ \deltaeo \mathcal{O}(\ell)  \cdot \Delta_\mathrm{cr}^{-1} \cdot \deltaet \mathcal{O}(\ell) \cdot \Delta_\mathrm{cr}^{-1}  \big]=
{2\over \pi^2}\iint_{\Omega_1 \times \Omega_2}
dx_1 \, dx_2
\  \frak{Re} \Bigg[{
\overline{\partial}F_1(x_1) \, \overline{\partial}F_2(x_2)
\over (x_1-x_2)^4} \Bigg] 
\end{equation}
This settles the proof of Theorem~\ref{TfinalOPElike1} by establishing eq.~\ref{finalOPElike1} .

\subsection{Controlling the geometry of the lattice for small deformations}
\label{ssBeyondCheck}
\subsubsection{The limits we considered}
Let us summarize what we did previously, up to sect.~\ref{sScalLimit}. 
We begin with an infinite critical graph $\uG_{\mathrm{cr}}$ and two displacement functions $F_1$ and $F_2$ 
whose respective supports $\Omega_1$ and $\Omega_2$ are compact 
and whose lattice closures $\overline{\Omega}_1$ and $\overline{\Omega}_2$ are disjoint.
We construct the stable Delaunay deformation $\uG_{\underline{\epsilon}}$
with embedding $z_{\underline{\epsilon}} = z_\mathrm{cr} + \epsilon_1 F_1+\epsilon_2 F_2$ 
along with a corresponding deformed operator $\mathcal{O}(\underline{\epsilon})$ where 
$\underline\epsilon=(\epsilon_1,\epsilon_2)$ is a pair of independent parameters. 
We then proceed to isolate the coefficient of $\epsilon_1 \epsilon_2$ in the Taylor series 
of $\log \det \mathcal{O}(\underline{\epsilon})$. Since the lattice closures of the
supports of $F_1$ and $F_2$ are disjoint the first trace term $\mathrm{tr} [\mathcal{O}(\underline{\epsilon}) \cdot 
\mathcal{O}_\mathrm{cr}^{-1} ]$ contributes nothing.
The only non-vanishing contribution to $\epsilon_1 \epsilon_2$ comes from the second trace and can be expressed as
\begin{equation}
\label{limepsto0}
- \tr \Big[ \deltaeo\mathcal{O} \cdot \mathcal{O}_\mathrm{cr}^{-1} 
\cdot \deltaet \mathcal{O}  \cdot \mathcal{O}_\mathrm{cr}^{-1}   \Big]
\end{equation} 
\noindent
defined on the weak Delaunay graph $\widehat{\uG}_{0^+}$ (a completion of the isoradial refinement of the initial graph $\uG_{\mathrm{cr}}$
relative to the deformation).
We then rescale the deformation by $\ell$ and consider the family of deformations 
$z_\mathrm{cr} \to z_\mathrm{cr} + \epsilon_1 F_{1;\ell}+\epsilon_2 F_{2;\ell}$
and show the scaling limit $\ell\to\infty$ of \ref{limepsto0} exists and is independent of the choice of initial critical graph $\uG_{\mathrm{cr}}$. 
Stated simply, we study the nested limit
\begin{equation}
\label{nested-limit}
\lim_{\ell \rightarrow \infty}
\,
\lim_{\stackrel{\scriptstyle \epsilon_1 \rightarrow 0}{\epsilon_2 \rightarrow 0}}
\,  
\left(
\tr \Bigg[ 
\deltaeo\mathcal{O}(\underline{\epsilon} \, , \ell)
\cdot \mathcal{O}_\mathrm{cr}^{-1} 
\cdot 
\deltaet\mathcal{O}(\underline{\epsilon} \, , \ell)
\cdot \mathcal{O}_\mathrm{cr}^{-1}   \Bigg]
\right)
\end{equation}

An interesting question is whether these two limits can be interchanged.
A positive answer would be a first step in understanding if one can define a continuum limit of (the total variation of) 
$\log\det\mathcal{O}(\underline{\epsilon} \, , \ell)$ starting from an infinite Delaunay graph which is not isoradial, but rather obtained by a small, smooth deformation of a Delaunay graph which is isoradial.
A simpler question is the following: We know that for a given critical graph $\uG_{\mathrm{cr}}$, the limit \ref{nested-limit} makes sense when $\epsilon_1,\epsilon_2 \to 0$. Is the convergence 
uniform w.r.t. all critical graphs $\uG_{\mathrm{cr}}$ ?
We return to this issue in Section~\ref{aVarOpFlips}.

\subsubsection{The problem with flips}
The geometrical effects of a finite $\epsilon$-deformation of a Delaunay graph $\uG$ have already been discussed in Sections \ref{sswwococycl} and \ref{ssEpsilonBounds}.
Lemma~\ref{epsilonFboundT} and Prop.~\ref{epsilontildeF} ensure that, for a given initial graph $\uG_\mathrm{cr}$ and a given displacement function $F$ (with compact support), there exist a strictly positive bound $0<\tilde \epsilon_F$ such that no flip occurs in the interval $0<\epsilon <\tilde \epsilon_F$. However 
$\tilde \epsilon_F$ depends non-trivially on $F$ and on the geometry of $\uG_\mathrm{cr}$. Furthermore it is 
clear that such a bound cannot be made uniform w.r.t. all critical graphs $\uG_\mathrm{cr}$.
This means that given any small value $\epsilon > 0$ of the deformation parameter,
flips will occur in $\uG_\epsilon$ for some critical graph $\uG_\mathrm{cr}$
within the class of all critical graphs.
Consequently the (matrix entries of the) operators $\deltae\mathcal{O}(\epsilon)$ are discontinuous functions of $\epsilon$,
and it will be difficult to control them as $\epsilon$ varies.

\subsection{A simple restriction to control small deformations: enforcing a global lower bound on the edge angles}
\label{brutal-approach}
A na\"\i ve but brutal way to manage the ``flip problem'' is to consider only  a subclass of graphs
$\uG_\mathrm{cr}$ such that the bound $\tilde\epsilon_F$ of Prop.~\ref{epsilontildeF}  can be controlled explicitly, so that no flip occurs. 
Similar constraints \ref{varThetGtVarCK} on the  geometry of $\uG_{\mathrm{cr}}$ have been used in the literature for other problems involving isoradial lattices, see e.g. the paper by U. B\"ucking \cite{Bucking}.
Our solution is given by the following Lemma.

\medskip

\begin{lemma}
\label{epsFbound2}
Let $F:\Bbb{C} \longrightarrow \Bbb{C}$ be a non-zero, smooth complex-valued function
with compact support $\Omega_F$. We define
\begin{equation}
\label{MchekF}
\check{M}_F  \,= \, \max_{z\in\mathbb{C}} |\partial F(z)| + \max_{z\in\mathbb{C}} |\overline{\partial}F(z)| 
\end{equation}
This is a simple modification of the bound $M_F$ of Lemma~\ref{MFbound} given by \ref{MFDef}
which is now independent of the triangulation.
For a generic Delaunay \emph{triangulation} $\uT$ 
we define, in analogy with $\vartheta_F$ given by \ref{varthetaDef} in Lemma~\ref{Lemma11},
\begin{equation}
\label{CkVarthetaF}
\check\vartheta(\uT) \,= \, \mathrm{inf} \, \Big\{ \theta( \ue )  \, \Big| \ue \in \mathrm{E}(\uT)  \Big\}
\end{equation}
For a fixed, strictly positive $\check\vartheta >0$, 
define the subset of Delaunay triangulations 
\begin{equation}
\label{varThetGtVarCK}
\text{\calligra{T}}_{\check\vartheta}=\left\{\text{Delaunay triangulation}\ \uT\ :\ \check\vartheta(\uT)\,\ge\, \check\vartheta \right\}
\end{equation}
and the strictly positive bound
\begin{equation}
\label{ }
\check\epsilon_F = \, \mathtt{b}\,{\sin( 2\,\check\vartheta}){{ \check{M}_F}^{-1} }
\end{equation}
with $\mathtt{b}=\sqrt{10}-3$ as in Lemma~\ref{epsilonFboundT}.

For any triangulation $\uT\in \text{\calligra{T}}_{\check\vartheta}$ and any scaling parameter $\ell>0$, the Delaunay deformation 
$z  \to z_{\epsilon;\ell}= z +\epsilon\, F_\ell(z) $ of $\uT$ preserves all the edges of $\uT$ if $0<\epsilon\le\check\epsilon_F$.
\begin{equation}
\label{ }
0<\epsilon\le\check\epsilon_F\ ,\quad\ell>0\quad\text{and}\quad\uT\in \text{\calligra{T}}_{\check\vartheta}\quad\implies\quad
\mathrm{E}(\uT_{\epsilon,\ell})=\mathrm{E}(\uT) 
\end{equation}
In other words, no flip occurs as long as $0<\epsilon\le\check\epsilon_F$.
\end{lemma}

\begin{proof} 
The mapping $z_{\epsilon,\ell}: \mathrm{V}(\uT_{\epsilon,\ell}) \longrightarrow \Bbb{C}$ is an embedding 
provided there are no ``collisions'', that is $z_{\epsilon,\ell}(\uu) \ne z_{\epsilon,\ell}(\uv)$ whenever
$\uu \ne \uv$ are distinct vertices in $\mathrm{V}(\uT_\mathrm{cr})$.
Equivalently $1 + \epsilon \, dF_{(\ell)}(\uu, \uv)$ must not vanish.
Apply the fundamental theorem of calculus using
$\gamma_{\uu\uv}(\tau) := \tau z_\mathrm{cr}(\uu)/\ell + (1-\tau)z_\mathrm{cr}(\uv)/\ell$.
\[
\begin{array}{ll}
\D \big| dF_\ell (\uu, \uv) \big|
&\D = \left| { {F \big(z_\mathrm{cr}(\uu) /\ell \big) - F \big(z_\mathrm{cr}(\uv)/\ell \big)}  \over {z_\mathrm{cr}(\uu)/\ell 
-z_\mathrm{cr}(\uv)/\ell}}  \right| \\ \\
&\D = {1 \over  {| z_\mathrm{cr}(\uu)/\ell -z_\mathrm{cr}(\uv)/\ell  \, | }} \cdot
\left| \,  \int_0^1 \, d\tau \, {d \over {d \tau}} F\big(\gamma_{\uu\uv}(\tau) \big) \,  \right| \\ \\
&\D = \left| \, \int_0^1 \, d\tau \, \partial F(\gamma_{\uu\uv}(\tau)) \ + \ 
{{\overline{z}_\mathrm{cr}(\uu) -\overline{z}_\mathrm{cr}(\uv)} \over {z_\mathrm{cr}(\uu) - z_\mathrm{cr}(\uv) }} 
\, \int_0^1 \, d\tau \, \overline{\partial} 
F\big(\gamma_{\uu\uv}(\tau) \big) \, \right| \\ \\
&\D \leq \left| \, \int_0^1 \, d\tau \, \partial F\big(\gamma_{\uu\uv}(\tau) \big) \, \right|  \ + \ \left| \, 
{{\overline{z}_\mathrm{cr}(\uu) -\overline{z}_\mathrm{cr}(\uv)} \over {z_\mathrm{cr}(\uu) - z_\mathrm{cr}(\uv) }}  \, \right| \cdot
\, \left| \,  \int_0^1 \, d\tau \, \overline{\partial} F\big(\gamma_{\uu\uv}(\tau) \big) \, \right| \\ \\
&\D \leq \, \mathrm{max} \, \big| \partial F \big| \ + \ \mathrm{max} \, \big| \overline{\partial} F \big| \ = \, \check M_F
\end{array}
\]
\noindent
By construction $\check\vartheta\le\vartheta_{F_\ell}$ and taken together with the fact that $M_{F_\ell}\le \check M_{F}$ 
we can conclude that $\check\epsilon_{F}\le \bar\epsilon_{F_\ell}$.
As long as $\epsilon< \check\epsilon_F$ we can apply Lemma~\ref{Lemma11} and conclude that the edge set $\mathrm{E}(\uT)\subset \mathrm{E}(\uT_{\epsilon, \ell})$.
Since $\uT$ is a triangulation no chords appear and hence $\mathrm{E}(\uT)= \mathrm{E}(\uT_{\epsilon, \ell})$.
We stress that this bound on $\epsilon$ is valid and independent on all values of the scaling parameter$\ell >0$, including $\ell = \infty$.

\end{proof}

\section{Finite $\epsilon$ variations, beyond the linear approximation}
\label{aVarOpFlips}

\subsection{Outline of the section}
\label{ssFullVarIntro}

In this section, we now consider deformations of an initial critical lattice $\uG_\mathrm{cr}$ implemented by a local diffeomorphism of the plane
$$ z\ \to z +\epsilon\, F(z,\bar z)
$$ 
for small values of a deformation parameter $\epsilon$, and a fixed smooth (but non-analytic) displacement function $F$  with compact support. 
We shall look for uniform bounds for the variation of the operators $\Delta$ and $\mathcal{D}$ with respect to $\epsilon$, independent of the particular geometry of the initial critical graph  $\uG_\mathrm{cr}$, except for its isoradius $R_\mathrm{cr}$. 

We therefore need to consider generic Delaunay deformations and take into account the occurrence of edge flips in the deformed Delaunay graph $\uG_\mathrm{cr}\to\uG_\epsilon$. 
These flips were avoided in the stable deformation scheme studied in sections \ref{sVarOp}, \ref{sCalculations} and \ref{sScalingLimit}
by imposing tight bounds on the parameter $\epsilon$.

For a fixed smooth displacement function $F$ and a deformation parameter $\epsilon$, it will be necessary to compare the corresponding Delaunay and rigid deformations, as explained in Sect.~\ref{sVarOp}.
We discuss this in sect.~\ref{ssFlipsNoFlips}, as well as the concept of ``backtracking a deformation without flips''.

In Sect.~\ref{ssFullVar} we give explicit variational formulas for the various operators $\nabla$, $\overline\nabla$, $\Delta$ and $\mathcal{D}$ as well as the circumradii $R$ in the case of a rigid deformation of the graph; see Def.~\ref{DefRigidDef}.

In Sect.~\ref{ssFullVarFlip} we derive integral representations of the variations of these objects taking flips into account.

In Sect.~\ref{BoundNablaF} and \ref{ssVArRGen} we give variational formulas for the discrete derivatives $\nabla$ and $\overline\nabla$ as well as the circumradius of a face. The later result,  given in Prop.~\ref{PropReps}, is important and 
leads to uniform bounds on the variations of $\nabla$, $\overline\nabla$, $\Delta$ and $\mathcal{D}$
with respect to $\epsilon$; see Prop.~\ref{propboundswab}.

In Sect.~\ref{ssScalLimD}
we deduce strong results on the uniform convergence of the $\ell\to\infty$ scaling limit for $\Delta$ (prop.~\ref{PropLimDpr} and \ref{propboundDelta})
and of the scaling limit of the corresponding second order bi-local trace term (which leads to the OPE) (prop.~\ref{PropLimTr2}).
 
In sect.~\ref{ssSLbilocDelta} we finally address the problem of interchanging the
$\underline{\epsilon} \to 0$ deformation limit and $\ell \to \infty$ scaling limit
when evaluating the bi-local trace term of $\log\det\Delta(\underline{\epsilon} \, , \ell)$.
Specifically we consider  the scaling limit $\ell\to\infty$ of the
bi-local term in the variation of $\log\det\Delta$ for non-zero deformation parameters. 
The uniformity of this limit depends on a technical bound on the discrete derivatives of the function $p_3(\uu,\uv)$ defined for isoradial graphs by  \ref{stuff2}. We explicate this condition and conjecture that the bound is valid for general isoradial graphs in Conjecture~\ref{Conf3bound}. Provided the bound is satisfied, we prove in Prop.~\ref{PropLimTr2} that the bi-local trace term has a uniform scaling limit, and that the $\ell\to\infty$  scaling limit and the 
 $\underline{\epsilon}\to 0$ deformation parameter limit both exist, are uniform, and commute (see Prop.~\ref{prvsepsComm}).

Finally, in Sect.~\ref{ssScalingKahler} we address the same questions for deformations of the K\"ahler operator $\mathcal{D}$.
Prop.~\ref{BoundOnDprime} gives a uniform bound on the variation of $\mathcal{D}$, but it implies that there is no general scaling limit $\ell\to\infty$ for 
$\mathcal{D}$ for non-zero values of the deformation parameters $\underline{\epsilon}$ (Prop.~\ref{NoLimitDprime}). 
This is different from the situation for $\Delta$. 
We argue that the best uniform convergence result to be expected for the bi-local trace term is a scaling limit where both $\ell\to\infty$ and 
$\underline{\epsilon}\to 0$ simultaneously, keeping $\ell \underline{\epsilon} = \underline{\mathtt{c}}$ constant (Prop.~\ref{PropLimTr2D}).

\subsection{Deforming triangulations with and without flips}
\label{ssFlipsNoFlips}
We now define and compare Delaunay deformations of graphs and connectivity-fixed deformations of the same graphs.

\newcommand{\epszero}{{\epsilon: {\scriptscriptstyle 0}}}
\newcommand{\zedzero}{z_{\scriptscriptstyle{0}}}
\newcommand{\zeroeps}{{{\scriptscriptstyle 0}: \epsilon}}

\subsubsection{Delaunay deformations (with flips)}
\label{aAssflips}

We start from an (isoradial) Delaunay graph $\uG_{0}=\uG_{\mathrm{cr}}$ and then deform its embedding
$\uv \mapsto \zedzero(\uv)$ using a smooth function $F:\Bbb{C} \longrightarrow \Bbb{C}$ with compact support
to obtain a mapping
\begin{equation}
\label{embedding-deformation-again}
\uv \mapsto z_\epsilon(\uv) = \zedzero(\uv) + \epsilon\, F(\zedzero(\uv))
\end{equation}
\noindent
for vertices $\uv$ of $\uG_{0}$. Using the method for proving Lemma~\ref{MFbound}, it is simple to prove that the mapping $\uv \mapsto z_\epsilon(\uv)$ defines an embedding of the vertex set $\mathrm{V}(\uG_0)$
as long as $\epsilon$ is small enough, namely:
\begin{equation}
\label{epsforF}
|\epsilon|< 
\dot{\epsilon}_F= 
(\max(|\partial F |)+\max(|\bar\partial F |))^{-1}
\end{equation}
Indeed we have 
\begin{equation*}
\label{ }
\left|{z_\epsilon(\uu)-z_\epsilon(\uv)\over \zedzero(\uu)-\zedzero(\uv)}\right|= \left|1 -\epsilon\,{F(\zedzero(\uu))-F(\zedzero(\uv))\over \zedzero(\uu)-\zedzero(\uv)}\right| \ge \left|1-\epsilon\left(\max |\partial F|+\max |\bar\partial F|\right)\right|
\end{equation*}
This ensures that if $\uu \neq \uv$, $|z_\epsilon(\uu)- z_\epsilon(\uv)|>0$ at least as long as \ref{epsforF} holds.

As in Def.~\ref{DefDelDeform}, the Delaunay graph $\uG_\epsilon$ is obtained by applying the Delaunay construction to the
set of deformed coordinates $z_\epsilon(\uv)$ for  $\uv \in \uG_{0}$. The vertices of $\uG_\epsilon$
and $\uG_0$ are identical by definition, however 
the edges and the faces of $\uG_\epsilon$ may differ from those of $\uG_{0}$
since the Delaunay constraints may force flips to occur during the deformation. Unlike 
the setup of Lemma~\ref{Lemma11}, 
the inclusion
$\mathrm{E}(\uG_0) \subset \mathrm{E}(\uG_\epsilon)$
may now fail.
Generically $\uG_\epsilon$ will be a triangulation regardless of whether 
the initial graph $\uG_0$ is a triangulation.

$\Delta(\epsilon)$, $\mathcal{D}(\epsilon)$ and ${\Deltaconf}(\epsilon)$ are the Laplacian operators relative to the lattice $\uG_\epsilon$, and  
act on the same space of functions $\Bbb{C}^{\mathrm{V}(\uG_\epsilon)}=\Bbb{C}^{\mathrm{V}(\uG_{0})}$ irrespective of $\epsilon$ since, by construction, the vertex sets $\mathrm{V}(\uG_\epsilon) = \mathrm{V}(\uG_{0})$ agree. 
Similarly, we denote by $\nabla_\epsilon$ and $\overline\nabla_\epsilon$ the discrete derivative operators relative to the faces of $\uG_\epsilon$, both of which are operators $\Bbb{C}^{\mathrm{V}(\uG_\epsilon)} \to \Bbb{C}^{\mathrm{F}(\uG_\epsilon)}$. Note that, in general, the set of deformed and critical faces differ,
i.e. $\mathrm{F}(\uG_\epsilon) \neq \mathrm{F}(\uG_{0})$. 
Similarly we denote by $A_\epsilon$ and $R_\epsilon$ the area and circumradius functions for the faces of $\uG_\epsilon$.

\subsubsection{Geometric Back-Deformation: deforming without flip}

We define the {\it rigid back-deformation} $\uG_\epszero$ of the Delaunay graph $\uG_\epsilon$ to be the 
graph whose vertex set and embedding are identical to
those of our initial (weak) Delaunay graph $\uG_0$, but whose edge and face sets 
coincide with those of the Delaunay graph $\uG_\epsilon$
obtained from $\uG_0$ by a Delaunay deformation.
The construction of  $\uG_\epszero$ can be seen in two stages: 
\begin{enumerate}
\item 
First $\uG_\epsilon$ is the end point of the continuous family of Delaunay deformations 
$$\uG_0\to\uG_\varepsilon\to \uG_\epsilon\ :\ \ 0\to\varepsilon\to\epsilon$$
obtained by continuously deforming the embedding 
$\zedzero \mapsto z_\varepsilon = \zedzero  + \varepsilon \,F(\zedzero)$ 
of the initial graph $\uG_0$ over the range $0\le\varepsilon\le \epsilon$
while maintaining the Delaunay condition (and performing edge flips as required)
at each stage of the deformation.

\item 
Then, starting with $\uG_\epsilon$, reverse the deformation $z_\varepsilon$ by letting $\varepsilon$ move from $\epsilon$ to $0$
$$
\uG_\epsilon\to\uG_{\epsilon:\varepsilon}\to \uG_\epszero\ :\ \ \epsilon\to\varepsilon\to 0
$$ 
but \textbf{without performing any edge flips}. In general
$\uG_{\epsilon:\varepsilon}$ will denote the graph whose vertex, edge, and face sets coincide with $\uG_\epsilon$ but whose embedding is $z_\varepsilon$. 
\end{enumerate}
More schematically 
\begin{equation}
\label{Teps0eps}
\begin{array}{ccccc}
    0  &  \stackrel{\varepsilon}{\longrightarrow} & \epsilon & \stackrel{\varepsilon}{\longrightarrow} & 0  \\
  \uG_0 &  \stackrel{\text{Delaunay}}{\longrightarrow}  &  \uG_\epsilon & \stackrel{\text{rigid}}{\longrightarrow} & \uG_\epszero
\end{array}
\end{equation}
It is clear that $\uG_\epszero$ is a graph (and in general a triangulation) with the vertex set as the original Delaunay graph $\uG_0$, but is generically \emph{not a Delaunay graph}. 

\subsubsection{An illustrative example}\ \\
Let us give a simple but illustrative example of such deformations of a triangulation $\uT_0\to\uT_\epsilon\to\uT_\epszero$.
The original triangulation $\uT_0$ is a biperiodic lattice. Vertices are labelled by $(m,n)\in\mathbb{Z}^2$ with coordinates
$$ \zedzero(m,n)= b(m+n/2)+ \mathrm{i}\, n\quad,\qquad 0<b\ll 1\ \text{a small parameter}$$
Hence the Delaunay triangulation $\uT_0$ is made of ``thin'' up and down triangles such that
$$\mathtt{heigth}\ = \ 1\quad,\qquad \mathtt{basis}\ = \ b$$
We choose as deformation function a simple shear parallel to the real axis, so that the deformed coordinates of vertices are
$$ z_\epsilon(m,n)= b(m+n/2)+ \mathrm{i}\, n\,+\ \epsilon\,n $$
The effect of a Delaunay deformation $\uT_0\to\uT_\epsilon$ is depicted on Fig.~\ref{animationshearA}, on the special case of $b=1/10$, and for $0\le\epsilon\le \epsilon_0=1/10$. 
Note that a flip occurs for every
$$\epsilon= {2k+1\over 2} b\ ,\quad k\in\mathbb{Z}$$
and that the Delaunay deformation $\uT_0\to\uT_\epsilon$ is then periodic
$$\uT_{\epsilon+ k b}=\uT_\epsilon\ ,\quad k\in\mathbb{Z}$$
Note also that if 
$$0\ll b\ll \epsilon\ll 1$$
a large number of flips $N_\mathtt{flip}(\epsilon)\simeq \epsilon/b$ occur, even when $\epsilon$ is small.
\begin{figure}[h!]
\begin{center}
\includegraphics[width=.75in]{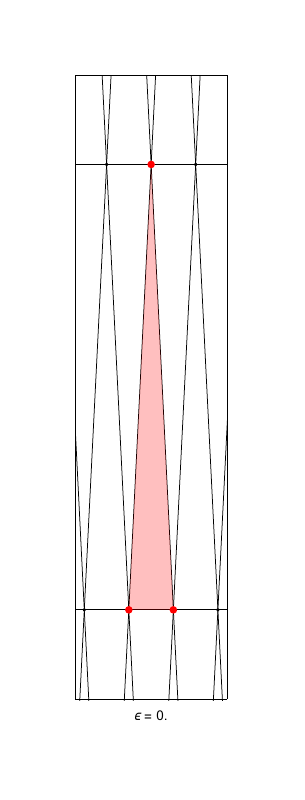}
\includegraphics[width=.75in]{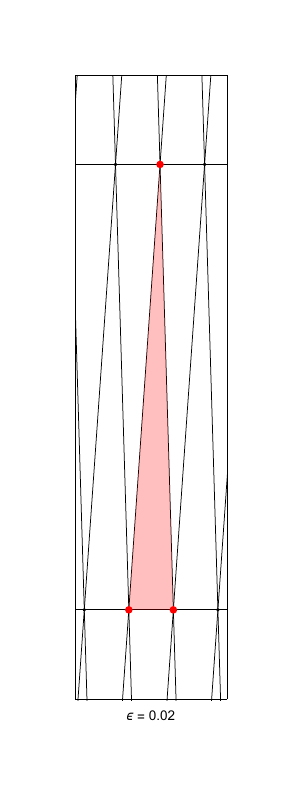}
\includegraphics[width=.75in]{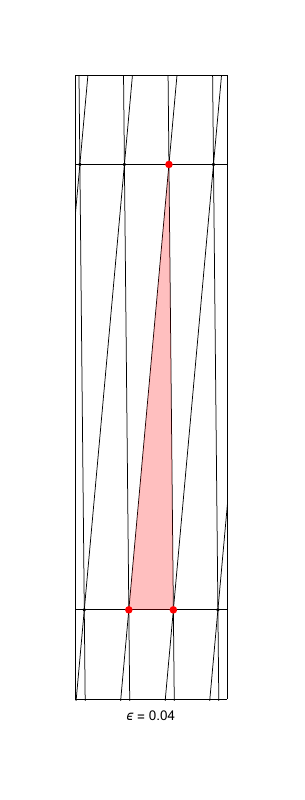}
\includegraphics[width=.75in]{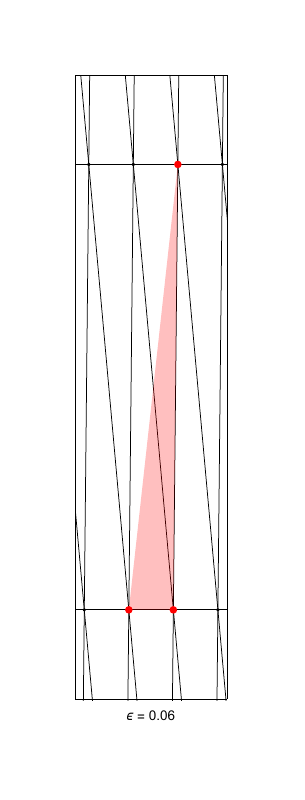}
\includegraphics[width=.75in]{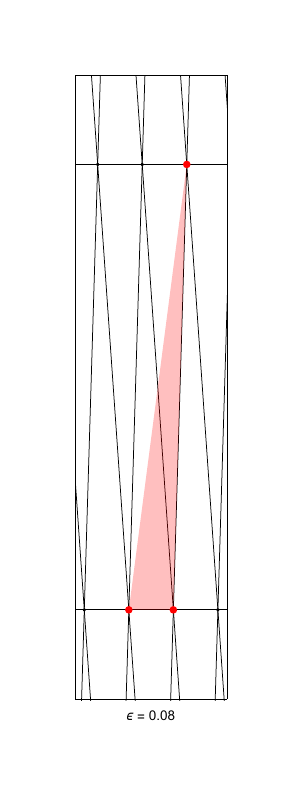}
\includegraphics[width=.75in]{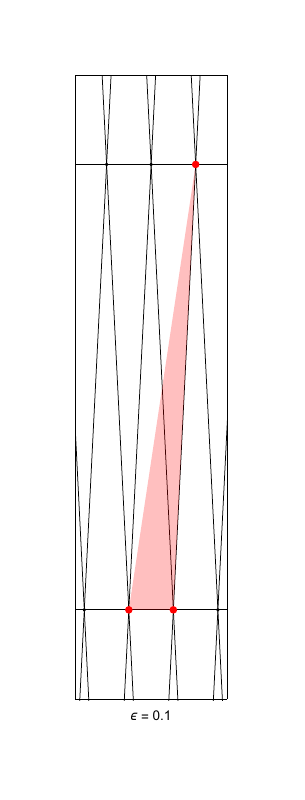}
\caption{Deformation of a periodic isoradial Delaunay triangulation $\uT_{\scriptscriptstyle{0}}\to \uT_\epsilon$ by a global shear $z\to z+\epsilon\, \mathfrak{Im} z$, keeping it Delaunay. 
On this example, the base and the height of the triangles are respectively  $b=1/10$ and $h=1$, so that a flip occur for $\varepsilon=b/2=1/20$, and we choose $\epsilon=b=1/10$. Since a flip occurs at $\epsilon_\mathtt{f}=b/2$, a triangle such has the one depicted in red, which is an original face of $\uT_{\scriptscriptstyle{0}}$, stays a face of $\uT_\epsilon$ for $0<\epsilon<\epsilon_\mathtt{f}$, but us not a face after the flip for $\epsilon>\epsilon_\mathtt{f}$.}
\label{animationshearA}
\end{center}
\end{figure}
The corresponding no-flip back-deformation $\epsilon:\ \epsilon_0\to 0$ which send back $\uT_{\epsilon_0}\to\uT_\epszero$ is depicted on Fig.~\ref{animationshearB}. It is clear on this figure that no back-flip occurs at $\epsilon=.05$, so that an original face of $\uT_{\epsilon_0}$ stays a face of $\uT_\epszero$. However $\uT_\epszero$ is a triangulation which is not Delaunay anymore.
\begin{figure}[h!]
\begin{center}
\includegraphics[width=.75in]{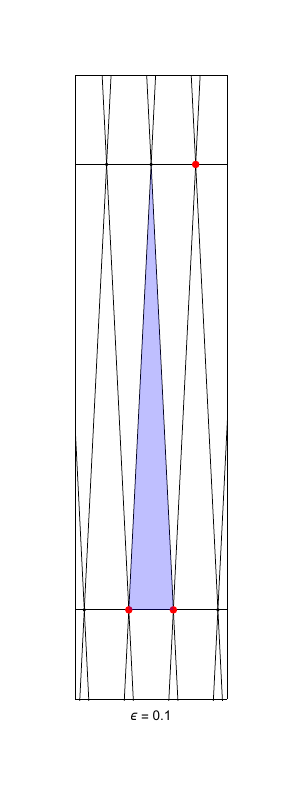}
\includegraphics[width=.75in]{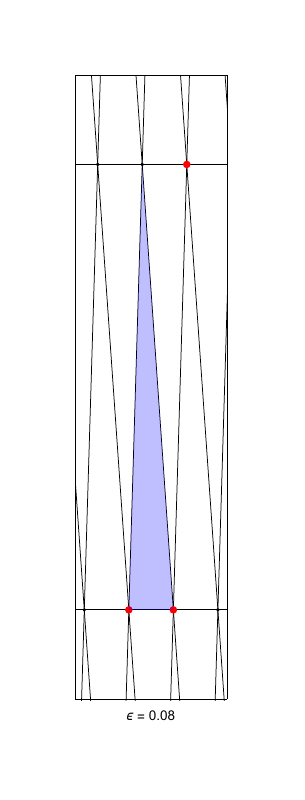}
\includegraphics[width=.75in]{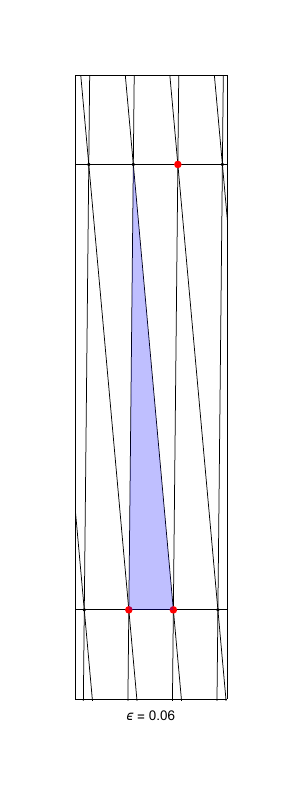}
\includegraphics[width=.75in]{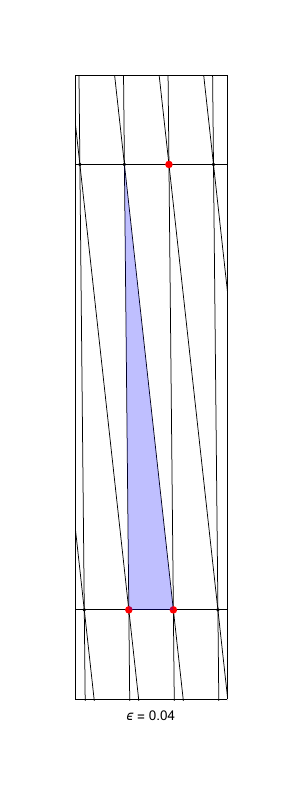}
\includegraphics[width=.75in]{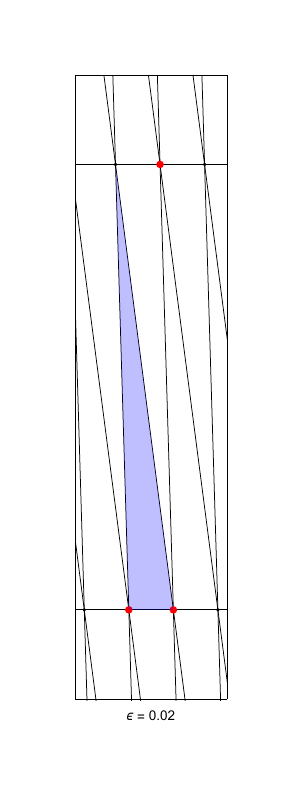}
\includegraphics[width=.75in]{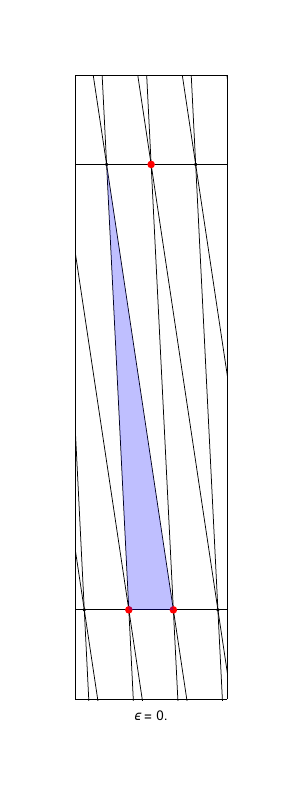}
\caption{The back-deformation of the triangulation of Fig.~\ref{animationshearA} $\uT_\epsilon\to\uT_\epszero$, keeping the edges and faces of the triangulations fixed (no-flips). An original face of $\uT_\epsilon$ (in blue)  stays a face of $\uT_\epszero$. However $\uT_\epszero$ is not Delaunay.}
\label{animationshearB}
\end{center}
\end{figure}
\color{black}

\subsection{Full variation of operators without flips}
\label{ssFullVar}

\subsubsection{Variation of the area}
Consider the variation of the triangulation $\uT \to \uT_\epsilon$ given by deforming the embedding $z(\uu) \to z_\epsilon(\uu)=z(\uu)+\epsilon F(\uu)$ \emph{without flips} (so that in fact $\uT_\epsilon$ should be denoted $\uT_{0:\epsilon}$ with the notations of the previous section).
For a triangle $\uf$ the full variation of its area is from \ref{Aform} and \ref{diff-formula-nablas}
\begin{equation}
\label{AtoAeps}
A\to A_\epsilon=A\left(1+\epsilon (\nabla F+\bar\nabla\bar F)+\epsilon^2 ( \nabla F\,\bar\nabla\bar F- \bar\nabla F\,\nabla\bar F)\right)
\end{equation}
For brevity $D(\epsilon;F)$ will denote the scaling factor 

 \begin{equation}
\label{DepsF}
D(\epsilon;F)=1+\epsilon (\nabla F+\bar\nabla\bar F)+\epsilon^2 ( \nabla F\ \bar\nabla\bar F- \bar\nabla F\ \nabla\bar F)
\end{equation}

\subsubsection{Variation of the discrete derivatives}

{The vertex sets $\mathrm{V}(\uT)$ and $\mathrm{V}(\uT_\epsilon)$ are, by definition, identical
and the face sets $\mathrm{F}(\uT)$ and $\mathrm{F}(\uT_\epsilon)$ agree so long as no flips occur in
the deformation $\uT \to \uT_\epsilon$. Consequently the nabla operators $\nabla$ and $\nabla_\epsilon$ 
(and their conjugates $\bar \nabla$ and $\bar{\nabla}_\epsilon$) share a common range and domain. Accordingly we have:}

\begin{equation}
\label{full-variation-nabla-epsilon}
\begin{split}
\nabla\to\nabla_\epsilon&= {1+\epsilon\bar\nabla\bar F\over 
D(\epsilon;F)
}\,\nabla - {\epsilon\nabla \bar F\over 
D(\epsilon;F)
}\,\bar\nabla
\\
\bar\nabla\to\bar\nabla_\epsilon&= {1+\epsilon\nabla F\over D(\epsilon;F)}\,\bar\nabla -{\epsilon \bar\nabla F\over D(\epsilon;F)}\,\nabla
\end{split}
\end{equation}

\subsubsection{A word of caution: deformations of functions}
{Recall that we may {\it restrict}
a smooth, complex-valued function $G: \Bbb{C} \longrightarrow \Bbb{C}$ 
to the vertex set of the triangulation $\uT$
using its graph embedding $z: \mathrm{V}(\uT) \longrightarrow \Bbb{C}$. 
Bearing some abuse of notation, we define and denote this restriction by $G(\uv) := G( z(\uv))$
for vertices $\uv \in \mathrm{V}(\uT)$. Some care is needed when restricting a smooth function $G$ to the deformed
triangulation $\uT_\epsilon$. The vertex sets of $\uT$ and $\uT_\epsilon$ are identical but, of course, their respective embeddings $z$ and $z_\epsilon$ are not,
and consequently the functions $\uv \mapsto G(z(\uv))$ and $\uv \mapsto G(z_\epsilon(\uv))$ do not
agree. In order to side step this discrepancy we introduce a deformed, smooth function $G_\epsilon: \Bbb{C} \longrightarrow
\Bbb{C}$ defined implicitly by 
\begin{equation}
\label{Gtransport}
G_\epsilon \big(w + \epsilon F(w) \big)=G(w)
\end{equation}
\noindent 
for all $w \in \Bbb{C}$, where $\epsilon \geq 0$ is fixed and sufficiently small. By construction,
\begin{equation}
G_\epsilon(z_\epsilon(\uv)) = G(z(\uv)) =: G(\uv)
\end{equation}
\noindent
To stress the role of the deformed embedding $z_\epsilon$ we shall define and denote $G_\epsilon(\uv) := G_\epsilon( z_\epsilon(\uv) )$
for $\uv \in \mathrm{V}(T_\epsilon)$. When $G = F$ this allows us to write
\begin{equation}
\label{Ftransport}
z_{\epsilon+\epsilon'}(\uv) = z(\uv)+(\epsilon +\epsilon') F(z(\uv)) = z_{\epsilon}+\epsilon' F_{\epsilon}(z_{\epsilon}(\uv))
\end{equation}

\subsubsection{Variation of the circumradii}
The full variation of the circumradius $R(\uf)$ of a face is more complicated. For a face with vertices labelled $1,2,3$ i.e. $\uf=(123)$ using \ref{Rcform} we get
\begin{equation}
\label{ }
R^2 \to R^2_\epsilon = R^2\, {N_{12}(\epsilon;F)N_{23}(\epsilon;F)N_{31}(\epsilon;F)\over D(\epsilon;F)^2}
\end{equation}
with
\begin{equation}
\label{ }
\begin{split}
N_{\uu\uv}(\epsilon;F)= 1&+\epsilon \left( \nabla F+\bar\nabla \bar F+ \bar C_{\uu\uv}\nabla\bar F + C_{\uu\uv}\bar\nabla F\right)
\\
&+ \epsilon^2\left(\nabla F\ \bar\nabla\bar F + \bar\nabla F\  \nabla \bar F + \bar C_{\uu\uv} \nabla F\  \nabla \bar F +  C_{\uu\uv} \bar \nabla F \ \bar \nabla \bar F
\right)
\end{split}
\end{equation}
where $C_{\uu\uv}$ for an (unoriented) edge $\overline{\uu\uv}$ denotes
\begin{equation}
\label{Cuv}
C_{\uu\uv}= {\bar{z}(\uu)-\bar{z}(\uv) \over {z(\uu) - z(\uv)}}
\end{equation}

\subsubsection{Variation of the operators}
Thus we get the variation of the Laplacian operators from
\begin{equation}
\label{ }
\Delta\to\Delta(\epsilon) = 2 \left( \nabla^\top_\epsilon\, A_\epsilon\, \nabla_\epsilon + \bar\nabla^\top_\epsilon\, A_\epsilon\, \bar \nabla_\epsilon\right)
\end{equation}
\begin{equation}
\label{ }
\mathcal{D}\to\mathcal{D}(\epsilon) =  4 \bar \nabla^\top_\epsilon\, {A_\epsilon\over R^2_\epsilon}\, \nabla_\epsilon
\end{equation}
that we do not write explicitly.
Note that all the expression we got are rational functions in $\epsilon$, and that when keeping only the first order in $\epsilon$ in a series expansion, we recover the results of Sect.~\ref{ssVarOp}.

\subsection{Full variation of operators under Delaunay deformations (with flips)}
\label{ssFullVarFlip}
Here we address the case of a critical triangulation $\uT=\uT_\mathrm{cr}$ with isoradius $R_0>0$ 
whose embedding undergoes a 
deformation 
$$z \to z_\epsilon := z + \epsilon F $$
\noindent
where flips are allowed, so that the deformed graph $\uT_\epsilon$ remains Delaunay. As before
the displacement function $F$ is the (restriction) of a smooth complex-valued function 
on the plane with compact support.
We consider the full variation of the operators
associated to the deformation $\uT_\mathrm{cr} \to T_\epsilon$, namely
\begin{equation}
\label{deltaOfull}
\delta\Delta(\epsilon)=\Delta(\epsilon)-\Delta_\mathrm{cr}
\quad,\qquad
\delta\mathcal{D}(\epsilon)=\mathcal{D}(\epsilon)-\mathcal{D}_\mathrm{cr}
\end{equation}
instead of the instantaneous, first order terms $\deltae\Delta$ and $\deltae\mathcal{D}$ in the respective $\epsilon$-expansions
as done in Sect.~\ref{ssVarOp} and \ref{sCalculations}.
We shall need uniform estimates for the $\epsilon \to 0$ limit of terms related to the variations $\delta\Delta(\epsilon)$ and $\delta\mathcal{D}(\epsilon)$
which are independent of the initial critical lattice $\uT_{\mathrm{cr}}$. Furthermore 
uniform estimates for the $R_0 \to 0$ limit will be needed, as this is synonymous with the $\ell \to \infty$ scaling limit.

Unfortunately, the exact results of the previous section \ref{ssFullVar} cannot be directly applied, since flips generically occur within
the continuous family of Delaunay graphs $\uT_\varepsilon$ as
the deformation parameter $\varepsilon$ moves 
from zero to $\epsilon>0$.
Nevertheless, we may write each variation as the integral of a derivative, and then try to get 
uniform bounds on the derivatives. This is what we discuss in the remaining part of this Appendix.

Let us first consider the simpler case of the Laplace-Beltrami operator $\Delta$. We can write
\begin{equation}
\label{delVarDint}
\delta\Delta(\epsilon)=\int_0^\epsilon d\varepsilon\, \Delta'(\varepsilon)\quad\text{with}\qquad \Delta'(\varepsilon)={d\over d\varepsilon}\Delta(\varepsilon)=\thedelta_\varepsilon\Delta(\varepsilon)
\end{equation}
Indeed, since $F$ is smooth with compact support, there is a finite (possibly large) number of flips as $\varepsilon$ increases, and we know that 
$\Delta(\varepsilon)$ is a continuous function of $\varepsilon$, and its derivative exists and is continuous in the interval between the flips.
Therefore the derivative $\Delta'(\varepsilon)$ is bounded and piecewise continuous, so that the integral \ref{delVarDint} makes sense.
For a given value $\varepsilon \geq 0$, the first order term in formula \ref{VarDelta} extends to the case of 
$\Delta(\varepsilon)$ defined on $\uT_\varepsilon$ and w.r.t. the transported displacement function $F_\varepsilon$ in the plane.

\begin{equation}
\label{DeltaPrime}
\Delta'(\varepsilon)\ =\ \nabla_{\!\varepsilon}^{\!\top}{\cdot} A_\varepsilon{\cdot}\textswab{D}_\varepsilon{\cdot}\nabla_{\!\varepsilon}+
   \overline\nabla_{\!\varepsilon}^{\!\top}{\cdot} A_\varepsilon{\cdot}\overline{\textswab{D}}_\varepsilon{\cdot}\overline\nabla_{\!\varepsilon}
\end{equation} 
with
\begin{equation}
\label{swabE}
\textswab{D}_\varepsilon=-4\,\overline\nabla_{\!\varepsilon} F_{\varepsilon}\quad,\qquad
\overline{\textswab{D}}_\varepsilon= -4\,\nabla_{\!\varepsilon} \bar F_{\!\varepsilon}
\end{equation}

Similarly, we can write the variation of the K\"ahler operator as
\begin{equation}
\label{ }
\delta\mathcal{D}(\epsilon)=\int_0^\epsilon d\varepsilon\, \mathcal{D}'(\varepsilon)\quad,\qquad \mathcal{D}'(\varepsilon)={d\over d\varepsilon}\mathcal{D}(\varepsilon) = \thedelta_\varepsilon\mathcal{D}(\varepsilon)
\end{equation}
The results of Section~\ref{ssVarOp} give for the derivative of $\mathcal{D}$
\begin{equation}
\label{KahlerPrime}
\mathcal{D}'\!(\varepsilon)= 
  \overline\nabla_{\!\varepsilon}^{\!\top} {A_{\!\varepsilon}} 
\textswab{K}_\varepsilon
  \nabla_{\!\varepsilon} 
+ \nabla_{\!\varepsilon}^{\!\top} {A_{\!\varepsilon}} 
\textswab{H}_\varepsilon
\nabla_{\!\varepsilon}+\overline\nabla_{\!\varepsilon}^{\!\top} {A_{\!\varepsilon}} 
\overline{\textswab{H}}_\varepsilon
\overline\nabla_{\!\varepsilon}\   
\end{equation}
with 
\begin{equation}
\label{swabf1}
\begin{split}
\textswab{K}_\varepsilon&= -{4\over R_{\!\varepsilon}^2} \left(\nabla_{\!\varepsilon} F_{\!\varepsilon} +\overline\nabla_{\!\varepsilon} \bar F_{\!\varepsilon} + C_{\!\varepsilon}\,\overline\nabla_{\!\varepsilon} F_{\!\varepsilon} + \bar C_{\!\varepsilon}\,\nabla_{\!\varepsilon}\bar F_{\!\varepsilon}\right)\\
\textswab{H}_\varepsilon&= -{4\over R_{\!\varepsilon}^2} \,\overline\nabla_{\!\varepsilon} F_{\varepsilon} \quad ,\qquad 
\overline{\textswab{H}}_\varepsilon = -{4\over R_{\!\varepsilon}^2}\nabla_{\!\varepsilon} \bar F_{\!\varepsilon}
\end{split}
\end{equation}
and with the $C_{\!\varepsilon}$ and $\bar C_{\!\varepsilon}$ defined by \ref{CDef} for faces the triangulation $T_{\!\varepsilon}$, namely for a face $\uf=(123)$,
\begin{equation}
\label{C123def}
C(\uf)=C_{123}={\bar z_1-\bar z_2\over z_1-z_2}+{\bar z_2-\bar z_3\over z_2-z_3}+{\bar z_3-\bar z_1\over z_3-z_1}
\end{equation} 
Note that we can decompose $C(\uf)$ as a sum of the terms $C_{\uu\uv}$ defined in \ref{Cuv} 
for edges $\overline{\uu \uv}$ of $\uf$. Specifically $C(123)=C_{12}+C_{23}+C_{31}$ where $\uf=(123)$.

\subsection{Uniform bounds under Delaunay deformations (with flips)}
\label{BoundNablaF}
\subsubsection{Bounds on continuous derivatives}
Now we study wether it is possible to give uniform bounds w.r.t. $\varepsilon$  and $\uT_\epsilon$ on the various coefficients $A_\epsilon$, $R_\epsilon$, $\textswab{D}_\epsilon$,$\textswab{R}_\epsilon$ and $\textswab{H}_\epsilon$ of the previous section~\ref{ssFullVarFlip}, and on the operators $\nabla_\epsilon$ and $\overline\nabla_\epsilon$.
From now on, let $F:\,\mathbb{C}\to\mathbb{C}$ be a given smooth deformation function with compact support.
Let
\begin{equation}
\label{M1M2Def}
\begin{split}
M_1&=\sup_{z\in\mathbb{C}} \max\left\{|\partial F(z)|,|\bar\partial F(z)|\right\}
\\ 
M_2&=\sup_{z\in\mathbb{C}} \max\left\{|\partial^2 F(z)|,|\partial\bar\partial F(z)|,|\bar\partial^2 F(z)|\right\}
\end{split}
\end{equation}
We will consider the transported function $F_\epsilon$ defined by \ref{Ftransport}, and the transported version of \ref{M1M2Def}
\begin{equation}
\label{M1M2EpsBnd}
\begin{split}
M_1(\epsilon)&=\sup_{z\in\mathbb{C}} \max\left[|\partial F_\epsilon(z)|,|\bar\partial F_\epsilon(z)|\right]
\\ 
M_2(\epsilon)&=\sup_{z\in\mathbb{C}} \max\left[|\partial^2 F_\epsilon(z)|,|\partial\bar\partial F_\epsilon(z)|,|\bar\partial^2 F_\epsilon(z)|\right]
\end{split}
\end{equation}
By differentiating the functional relation \ref{Ftransport} between $F$ and $F_\epsilon$, one gets the general inequalities
\begin{equation}
\label{MepsIneq}
M_1(\epsilon)\le \overline M_1(\epsilon) = {M_1\over 1-2\epsilon\,M_1}
\ ,\quad
M_2(\epsilon)\le \overline M_2(\epsilon) = {M_2\over (1-2\epsilon\,M_1)^3}
\end{equation}
valid as long as $\epsilon$ is small enough, namely 
\begin{equation}
\label{epsbound1}
0\le \epsilon < \check\epsilon_F=1/(2 M_1)
\end{equation}
which ensures that $F_\epsilon$ is not multivalued (and stays smooth with compact support).

\subsubsection{Bounds on discrete derivatives}
\label{sssBndDiscrDer}

Let $\uT_\mathrm{cr}$ be a critical (Delaunay isoradial) triangulation with isoradius $R_0$, and $\uT_\epsilon$ be the Delaunay triangulation $\uT_\epsilon$ obtained by the $\epsilon$-deformation $z\to z+\epsilon F$.
We shall establish bounds on the norm of the discrete derivatives of $F_\epsilon$ on the triangulation $\uT_\epsilon$, as well as inequalities on the radii $R(\uf)$ of the faces of $\uT_\epsilon$.

First we define for a generic triangulation $\uT$ and a generic smooth function $G$ with compact support
\begin{equation}
\label{ }
B_{G}(\uT)=\sup_{\mathrm{faces}\,\uf\in \uT}\max\left(|\nabla G(\uf)|,|\overline\nabla G(\uf)|\right)
\end{equation}
We use Lemma~\ref{lemmabound}, which gives a bound on the difference between the discrete derivative $\nabla G(\uf)$ and the continuous derivate $\partial G$ of $G$ at the circumcenter of $\uf$. This bound involves the circumradius of $\uf$ and the $\mathrm{max}$ of the second derivative of $G$ inside the circumcircle. 
Denote the max of the circumradii of the faces $\uf$ of a triangulation $\uT$
\begin{equation}
\label{RmaxDef}
R_{\mathtt{max}}(\uT)=\max_{\uf\in\uT}\ R(\uf)
\end{equation}
For the initial critical triangulation $\uT_{\mathrm{cr}}$ Lemma~\ref{lemmabound} implies
\begin{equation}
\label{BFTcritbound}
B_F(\uT_{\mathrm{cr}})\le M_1+4\,M_2\, R_{0}
\end{equation}
but for $\uT_\epsilon$ it becomes
\begin{equation}
\label{BFTebound}
B_F(\uT_\epsilon)\le M_1(\epsilon)+4\,M_2(\epsilon)\, R_{\mathtt{max}}(\uT_\epsilon)
\end{equation}
and we need an estimate of $R_{\mathtt{max}}(\uT_\epsilon)$.

\subsection{Inequalities for general variations of circumradii (with or without flips)}
\label{ssVArRGen}
\subsubsection{The problem}
In order to get a bound on $R_{\mathtt{max}}(\uT_\epsilon)$, we now derive a bound on the variation of the circumradius of the faces, of a triangulation under a deformation $z\to z+\epsilon F$.

Let us consider the following general deformation scheme.
We start with an initial Delaunay triangulation $\uT_0$ which need not be isoradial. 
We deform the embedding $z \to z+ \varepsilon\, F(z)$ of $\uT_0$ within the range $0 \leq \varepsilon \leq \epsilon$ (with $\epsilon<\dot\epsilon_F$ defined by \ref{epsforF}).
If at any stage of the deformation the circumradii $R(\uf_1)$ and $R(\uf_2)$ of two  neighboring faces $\uf_1$ and $\uf_2$ 
agree, we may either (i) \emph{perform an edge flip}, so that $\uf_1$, $\uf_2$ are replaced by two new faces $\uf'_1, \uf'_2$ or (2) \emph{not perform the flip}.
Thus we get a family of triangulations $\{\uT_\varepsilon:\,\varepsilon\in[0,\epsilon]$, in general not Delaunay, which share the same vertex set and have vertex embeddings $z_\epsilon=z_0+\epsilon\, F(z_0)$.

Now consider an initial face (triangle) $\uf_0$ of $\uT_0$, with initial circumradius $R(0)=R_0(\uf_0)$. When deforming $\uT_\varepsilon$ from $0$ to $\epsilon$, we can  continuously follow the face $\uf_0$, and when it sustains a flip, we \emph{choose one of the two faces} created by the flip. In this way we get a ``continuous'' family of 
faces $\{\uf_\varepsilon\in\uT_\varepsilon:\,\varepsilon\in[0,\epsilon]\}$, so that $\varepsilon \mapsto R(\uf_\varepsilon)$ is a continuous, piecewise differentiable function  
(this is the crucial point for the following argument).

\subsubsection{Bounds on the derivative of $R$ and consequences}
Now, in between the flips, from \ref{var1A}, \ref{var1AR2} the derivative of the circumradius $R(\uf_\varepsilon)$ of  this face $\uf_\varepsilon$ is
\begin{equation}
\label{varRone}
R'(\uf_\varepsilon)=
{d\over d\varepsilon}R(\uf_\varepsilon)= {R(\uf_\varepsilon) \over 2}
\left( \nabla_\varepsilon F_\varepsilon(\uf_\varepsilon)+\overline\nabla_\varepsilon \bar F_\varepsilon(\uf_\varepsilon)+ C_\varepsilon(\uf_\varepsilon)\overline\nabla_\epsilon F_\varepsilon(\uf_\varepsilon) + \bar C_\varepsilon(\uf_\varepsilon) \nabla_\varepsilon \bar F_\varepsilon(\uf_\varepsilon)\right)
\end{equation}
Using Lemma~\ref{lemmabound} again, for this face $\uf_\varepsilon$ of the triangulation $\uT_\varepsilon$
we get the bound
\begin{equation}
\label{dRdepsbound}
\left|\nabla_\varepsilon F_\varepsilon(\uf_\varepsilon)\right|\ \text{and}\ \left|\bar\nabla_\varepsilon F_\varepsilon(\uf_\varepsilon)\right| \ \le\ M_1(\varepsilon) + 4\, R(\uf_\varepsilon)\,M_2(\varepsilon)
\end{equation}
and from the definition of $C$ \ref{C123def} we have
\begin{equation}
\label{ }
| C_\varepsilon(\uf_\varepsilon)|\le 3
\end{equation}
We thus get the bound
\begin{equation}
\label{boundDerivR}
\left|{d\over d\varepsilon}R(\uf_\varepsilon)\right|\  \le\   4\, \overline M_1(\varepsilon)\, R(\uf_\varepsilon) + 16\, \overline M_2(\varepsilon)\, R(\uf_\varepsilon)^2
\end{equation}
Remember that the functions $\overline M_1(\varepsilon)$ and $\overline M_2(\varepsilon)$ are explicitly known functions of $\varepsilon$ and the constants $M_1$ and $M_2$ associated to the displacement function $F$.
$$\overline M_1(\varepsilon)={M_1\over 1-2\,\varepsilon\,M_1}
\ ,\quad
\overline M_1(\varepsilon)={M_2\over (1-2\,\varepsilon\,M_1)^3}
$$

\subsubsection{Bounds on the circumradii $R(\uf_\epsilon)$.}
\label{sssBndR}
Using the inequality \ref{boundDerivR} we get uniform bounds on the variation of the circumradius of faces $R(\uf_\epsilon)$ under deformations $z\to z_\epsilon=z+\epsilon F(z)$.

\begin{Prop}
\label{PropReps}
The radius of the face $\uf_\epsilon$ satisfy the inequalities
\begin{equation}
\label{RepsIneq}
\bar R_-(\epsilon,R(\uf_0)) \ \le\  R(\uf_\epsilon)\ \le  \bar R_+(\epsilon,R(\uf_0))
\end{equation}
with the functions of the radius variable $R$
\begin{equation}
\label{barRplus}
\bar R_+(\epsilon,R ) = {R\over \left(1+{M_2 R\over M_1}\right)\left(1-2 M_1\,\epsilon\right)^2 -{M_2 R\over M_1}\left(1-2 M_1\,\epsilon\right)^{-2}}
\end{equation}
and 
\begin{equation}
\label{barRminus}
\bar R_-(\epsilon,R)={ R  \left(1-2 M_1 \epsilon\right)^2 \over  1 + {8 M_2 R\over M_1} \log\left({1\over 1-2 M_1 \epsilon}\right)}
\end{equation}
The inequality \ref{RepsIneq} is satisfied at least if 
\begin{equation}
\label{EpsDomain}
0\le \epsilon < \epsilon_{\mathtt{max}}(R(\uf_0))
\ \text{with}\quad
\epsilon_{\mathtt{max}}(R) := {1\over 2 M_1}\left( 1-\left(1+{M_1\over R M_2}\right)^{-1/4}\right)
\end{equation}
the value of $\epsilon$ where $\bar R_+(\epsilon,R)$ diverges.
Note that 
$\epsilon_{\mathtt{max}}(R)<{1/(2 M_1)}$
 and that $\bar R_-(\epsilon,R)$ is positive and well defined for 
 $\epsilon_{\mathtt{max}}(R)<{1/(2 M_1)}$.
 \end{Prop}

\begin{proof}
Let us for simplicity change of variables and consider instead of $\epsilon$ the variable $y$ 
\begin{equation}
y= -\log(1-2 M_1\,\epsilon)
\end{equation}
and the function $V(y)$ defined as 
\begin{equation}
\label{ }
V(y)\ =\ {(1-2 M_1\,\epsilon)^{2}\over R(\uf_\epsilon)}
\end{equation}
and denote\begin{equation}
\label{RtoV}
V_0=V(0)={1\over R(\uf_0)}
\end{equation}
After some algebra the inequality \ref{boundDerivR} becomes a simple linear inequality
\begin{equation}
\label{BoundDerivV}
- A-4\,V(y)\ \le \ {d V(y)\over dy}\ \le\  A
\quad,\qquad A={8 M_2\over M_1}
\end{equation}
The rightmost inequality implies obviously
\begin{equation}
\label{not-sure}
V(y)\le \overline V_{\!-}(y)=V_0+A y
\end{equation}
The leftmost inequality gives for the function
\begin{equation}
T(y)= V(y)\ \mathrm{e}^{-4 y}
\end{equation}
which is such that $T(0)=V_0$, the inequality
\begin{equation}
\label{ }
{d T(y)\over dy} \ge -A\ \mathrm{e}^{4 y}
\end{equation}
which implies
\begin{equation}
\label{ }
T(y)\ge V_0 - {A\over 4} \left(\mathrm{e}^{4 y}-1\right)
\end{equation}
hence
\begin{equation}
\label{ }
V(y)\ge \overline V_{\!+}(y)=\left(V_0+{A\over 4}\right)\mathrm{e}^{-4 y} -{A\over 4}=\overline V_{\!+}(y)
\end{equation}
Note that the functions $\overline V_{\!-}(y)$ and $\overline V_{\!+}(y)$ are the functions which saturate the inequalities \ref{BoundDerivV} for $V$ with the same initial condition
$\overline V_{\!-}(0)=\overline V_{\!+}(0)=V(0)=V_0$.
Going back from $V$ to $R(\uf)$ through \ref{RtoV}, and defining $\overline R_{+}$ and $\overline R_{-}$ through
\begin{equation}
\label{ }
V_{\!+}(y)\ =\ {(1-2 M_1\,\epsilon)^{2}\over \overline R_{+}(\uf_\epsilon)}
\quad\text{and}\quad
V_{\!-}(y)\ =\ {(1-2 M_1\,\epsilon)^{2}\over \overline R_{-}(\uf_\epsilon)}
\end{equation}
we get the results of Prop.~\ref{PropReps}

\end{proof}

Proposition~\ref{PropReps} is the main result of this section.
Note that it does not require the initial triangulation to be Delaunay or isoradial.
It is also completely independent of whether we perform flips or do not perform flips during the deformation.
It depends only on the deformation function $F$ and on the initial radius of the initial face we start from.

Notice that when the initial radius of the initial face becomes very small 
\ref{RepsIneq} implies that
\begin{equation}
\label{ }
(1-2\,\epsilon\, M_1)^{2}\ \le\  \lim_{R(\uf_0) \rightarrow 0} \,
{R(\uf_\epsilon)\over R(\uf_0)}\ \le(1-2\,\epsilon\, M_1)^{-2}
\end{equation}

\subsubsection{Final estimates}
With Prop.~\ref{PropReps} we can complete the estimates of the previous sections \ref{BoundNablaF}.
We start from an initial critical triangulation $\uT_\mathrm{cr}$ with initial radius $R_0$, and deform it into the Delaunay triangulation $\uT_\epsilon$.
The inequality \ref{RepsIneq}  implies that
\begin{equation}
\label{RmaxDef}
R_{\mathtt{max}}(\uT_\epsilon)=\max_{\uf\in\uT_\epsilon}\ R(\uf)\ \le\ \bar R_+(\epsilon,R_0)
\end{equation}
hence
\begin{equation}
\label{BFTebound}
B_F(\uT_\epsilon)=\max_{\uf\in\uT_\epsilon}\left(|\nabla_\epsilon F_\epsilon|,|\bar\nabla_\epsilon F_\epsilon|\right) \le\  \bar M_1(\epsilon)+4\,\bar M_2(\epsilon)\, \bar R_+(\epsilon,R_0)
\end{equation}
We can  bound the coefficients in the derivative w.r.t. $\epsilon$ of the Laplace-Beltrami operator $\Delta(\epsilon)$ (in \ref{DeltaPrime}), and of the 
K\"ahler operator $\mathcal{D}(\epsilon)$ (in \ref{KahlerPrime}).
\begin{equation}
\label{ }
\begin{split}
|\textswab{D}_\epsilon|&\ \le\  4\,\bar M_1(\epsilon)+16\,\bar M_2(\epsilon)\, \bar R_+(\epsilon,R_0)\\ 
|\textswab{K}_\epsilon|&\ \le \ {16 \bar M_1(\epsilon)+64\,\bar M_2(\epsilon)\, \bar R_+(\epsilon,R_0)\over \bar R_-(\epsilon,R_0)^2}\\
|\textswab{H}_\epsilon|&\ \le \ {4 \bar M_1(\epsilon)+16\,\bar M_2(\epsilon)\, \bar R_+(\epsilon,R_0)\over \bar R_-(\epsilon,R_0)^2}
\end{split}
\end{equation}
Using the explicit forms of $\bar M_1(\epsilon)$ and $\bar M_2(\epsilon)$ given by \ref{MepsIneq}, and of $R_+(\epsilon,R_0)$ and $\bar R_-(\epsilon,R_0)$ given by \ref{barRplus} and \ref{barRminus}, one deduces that  $|\textswab{D}_\epsilon|$,  $|\textswab{K}_\epsilon|$ and $|\textswab{H}_\epsilon|$ are uniformly bounded. More precisely we can summarize the estimates we obtained into the following proposition.

\begin{Prop}\label{propboundswab}
Let us choose a smooth displacement function $F$ with bounds $M_1$ and $M_2$ associated to its first and second derivatives.
Let us also choose $\epsilon_\mathrm{b}$ strictly smaller than $\epsilon_{\mathtt{max}}(R_0=1)$ given by
\begin{equation}
\label{epsmax}
0<\epsilon_\mathrm{b}<\epsilon_{\mathtt{max}}(1) =  {1\over 2 M_1}\left( 1-\left(1+{M_1\over  M_2}\right)^{-1/4}\right)
\end{equation}
for instance $\epsilon_\mathrm{b}=\epsilon_{\mathtt{max}}(R_0=1)/2$; see formula (\ref{EpsDomain}) for
a definition of $\epsilon_{\mathtt{max}}(R_0)$.
Then consider an arbitrary initial critical triangulations (isoradial and Delaunay)  $\uT_0$ with circumradius $R_0$, some $\epsilon>0$,  the deformed Delaunay lattice $\uT_\epsilon$ obtained from $\uT_0$ by the deformation $z\to z+\epsilon\, F(z)$, and an arbitrary face $\uf$ of $\uT_\epsilon$.

Then the factors $ \textswab{D}_\epsilon(\uf)$ (given by \ref{swabE}), $ \textswab{K}_\epsilon(\uf)$ and $ \textswab{H}_\epsilon(\uf)$ (given by \ref{swabf1}) for the face $\uf$ are uniformly bounded over the sets of: (i) initial triangulation $\uT_0$ with isoradius $R_0$ less or equal to one, (ii) deformation parameter $\epsilon$ smaller or equal to $\epsilon_\mathrm{b}$, (iii) and faces $\uf$ of $\uT_\epsilon$.
Namely, there exist  constants  $D_0$, $K_0$ and $H_0$ which depend only of $F$ and on the choice of $\epsilon_\mathrm{b}$ such that
\begin{equation}
\label{ }
 |\textswab{D}_\epsilon(\uf)|\le D_0\ ,\quad  |\textswab{K}_\epsilon(\uf)|\le K_0 \ ,\quad  |\textswab{H}_\epsilon(\uf)|\le H_0
\end{equation}
Similarily, there exists a constant $P_0(F;\epsilon_\mathrm{b})$, which depends only of $F$ and on $\epsilon_\mathrm{b}$, which uniformly bounds 
the variation of the radius of the faces 
\begin{equation}
\label{ }
\left | (R(\uf_\epsilon)-R_0)/R_0 \right | \le \epsilon\, P_0(F;\epsilon_\mathrm{b})
\end{equation}
\end{Prop}

\subsection{Consequence for the control of the scaling limit of $\Delta$} 
\label{ssScalLimD}
\subsubsection{The Laplace-Beltrami operator $\Delta$}
To simplify, we use a $2\times 2$ block matrix notation. The $\Delta$ operator and its $\epsilon$-derivative $\Delta'$ on the deformed lattice $\uT_\epsilon$ reads
\begin{equation}
\label{LBmatrixNot}
\Delta(\epsilon) = 
2 \begin{pmatrix}
      \nabla_{\!\epsilon}    \\
      \overline\nabla_{\!\epsilon}  
\end{pmatrix}^\dagger
\begin{pmatrix}
     A_\epsilon &   0  \\
   0   &  A_\epsilon
\end{pmatrix}
\begin{pmatrix}
      \nabla_{\!\epsilon}    \\
      \overline\nabla_{\!\epsilon}  
\end{pmatrix}
\ ,\quad
\Delta'(\epsilon) = 
-4  \begin{pmatrix}
      \nabla_{\!\epsilon}    \\
      \overline\nabla_{\!\epsilon}  
\end{pmatrix}^\dagger
\begin{pmatrix}
     0 &   \!\!A_\epsilon\,\nabla_{\!\epsilon}\overline F_\epsilon  \\
   A_\epsilon\,\overline\nabla_{\!\epsilon} F_\epsilon   &  0
\end{pmatrix}
\begin{pmatrix}
      \nabla_{\!\epsilon}    \\
      \overline\nabla_{\!\epsilon}  
\end{pmatrix}
\end{equation}
Remember that $A_\epsilon$, $\nabla_{\!\epsilon}\overline F_\epsilon$ and $\overline\nabla_{\!\epsilon} F_\epsilon$ are defined for the faces of the deformed triangulation $\uT_\epsilon$, whose vertices have positions $z_\epsilon= z +\epsilon F(z)$, while $\Delta(\epsilon)$ and $\Delta'(\epsilon)$ acts on the functions defined on the vertices of $\uT_\epsilon$. Since $\uT_\epsilon$ is obtained by deforming an initial critical lattice $\uT_0=\uT_{\mathrm{cr}}$, let us rewrite them in terms on objects defined for the ``back-deformed'' lattice $\uT_\epszero$  defined by the procedure introduced in sect.~\ref{ssFlipsNoFlips}
(see \ref{Teps0eps} and the example illustrated in Figs.~\ref{animationshearA} and \ref{animationshearB}).
\begin{equation*}
\label{Teps0eps2}
\underset{}{\uT_{\mathrm{cr}}=\uT_0}\ \overset{\text{Delaunay}}{\longrightarrow}\ \underset{}{\uT_{\epsilon}}\ \stackrel{\text{no flip}}{\longrightarrow}\ \uT_\epszero
\end{equation*}

Again, $\uT_\epszero$ has the same vertices as $\uT_{0}$, but the edges and faces of $\uT_{\epsilon}$. In other words, $\uT_{\epsilon}$ is obtained from $\uT_\epszero$ by the deformation $z\to z_\epsilon= z +\epsilon F(z)$, but without flips. We can therefore express the objects relative to the faces of $\uT_{\epsilon}$ in terms of those relative to the faces of $\uT_\epszero$.
The area $A_\epsilon$ of a face $\uf_\epsilon$ of $\uT_\epsilon$ is related to the area $A$ of the corresponding face $\uf=\uf_\epszero$ of $\uT_\epszero$ by \ref{AtoAeps}, namely 
\begin{equation}
\label{ }
A_\epsilon= D(\epsilon;F)\, A
\end{equation}
with from \ref{DepsF}
\begin{equation}
\label{ }
D(\epsilon;F)=1+\epsilon (\nabla F+\overline\nabla \bar F)+\epsilon^2 ( \nabla F\ \overline\nabla\bar F- \overline\nabla F\ \nabla\bar F)
\end{equation}
Note that the operators $\nabla$ and $\bar\nabla$ refer now to faces of $\uT_\epszero$. In a strict sense they should be denoted $\nabla_\epszero$ and $\bar\nabla_\epszero$. We omit the subscript to simplify notation.
The discrete derivative operators on $\uT_\epsilon$ are expressed in terms of those on $\uT_\epszero$ by \ref{full-variation-nabla-epsilon}, which 
can be expressed in the block matrix notation as
\begin{equation}
\label{ }
\begin{pmatrix}
      \nabla_{\!\epsilon}    \\
      \overline\nabla_{\!\epsilon}  
\end{pmatrix}
= 
{1\over D(\epsilon;F)}
\begin{pmatrix}
  {1+\epsilon\, \overline\nabla\,\overline F}    &  {-\epsilon\, \nabla\overline F} \\
    {-\epsilon\, \overline\nabla F}  &  {1+\epsilon\, \nabla F}
\end{pmatrix}
\begin{pmatrix}
      \nabla    \\
      \overline\nabla  
\end{pmatrix}
\end{equation}
In particular
\begin{equation}
\label{ }
\begin{pmatrix}
      \nabla_{\!\epsilon}  F_{\!\epsilon}  \\
      \overline\nabla_{\!\epsilon}  F_{\!\epsilon} 
\end{pmatrix}
= 
{1\over D(\epsilon;F)}
\begin{pmatrix}
  {1+\epsilon\, \overline\nabla\,\overline F}    &  {-\epsilon\, \nabla\overline F} \\
    {-\epsilon\, \overline\nabla F}  &  {1+\epsilon\, \nabla F}
\end{pmatrix}
\begin{pmatrix}
      \nabla   F  \\
      \overline\nabla   F
\end{pmatrix}
\end{equation}
Again the discrete $\nabla$ and $\overline\nabla$ refer now to faces of $\uT_\epszero$.
Including this in \ref{LBmatrixNot} one gets
\begin{equation}
\label{LBprimeEps0}
\Delta'(\epsilon)= \begin{pmatrix}
      \nabla    \\
      \overline\nabla  
\end{pmatrix}^\dagger
A\ 
\raisebox{-.8ex}{\text{\Huge{$\mathbb{D}$}}}(\epsilon;F)
\begin{pmatrix}
      \nabla     \\
      \overline\nabla  
\end{pmatrix}
\end{equation}
with {\Large{$\mathbb{D}$}} the $2\times 2$ block matrix
\begin{equation}
\label{TheBigDMatrix}
\begin{split}
&\begin{matrix}\text{\Huge{$\mathbb{D}$}}\end{matrix}(\epsilon;F)=\\
&{(-4) \over {D(\epsilon;F)}^2}
\begin{pmatrix}
 -\epsilon \nabla \overline F\,\overline\nabla F \left( 2+\epsilon (\nabla F +\overline\nabla\,\overline F)\right)     & 
 \nabla\overline F\left({(1+\epsilon\,\nabla F)}^2-\epsilon^2 \overline\nabla F\,\nabla\overline F\right)   \\
 \overline\nabla F\left({(1+\epsilon\,\overline\nabla\, \overline F)}^2-\epsilon^2 \nabla \overline F\,\overline\nabla F\right)     &
 -\epsilon \nabla \overline F\,\overline\nabla F \left( 2+\epsilon (\nabla F +\overline\nabla\,\overline F)\right)  
\end{pmatrix}
\end{split}
\end{equation}

\subsubsection{Scaling limit for $\Delta(\epsilon)$}
We can now study the scaling limit of the deformed operator $\Delta(\epsilon)$. We proceed as follows.
As before, we choose a smooth displacement function $F$ with compact support $F:\mathbb{C}\to\mathbb{C}$.
For each $r\in(0,1]$ (or simply a decreasing sequence of $(r_n)_{n \in \Bbb{N}}$ converging to $0$), we associate an arbitrary critical triangulation 
of the plane $\uT^{r}_{\mathrm{cr}}=\uT^{r}_{0}$ with isoradius $r$.
Finally we choose a finite bound $\epsilon_\mathrm{b}'$ such that
\begin{equation}
\label{epsbprimedef}
0<\epsilon_\mathrm{b}'<{1\over 2}\,\epsilon_{\mathtt{max}}(1)
\end{equation}
for the deformation parameter $\epsilon$ where $\epsilon_{\mathtt{max}}(1)$ is given by \ref{epsmax} above.
The calculations leading to the bounds of Prop.~\ref{propboundswab} for the deformation $\uT_0\to\uT_\epsilon$ can be easily repeated for the double deformations $\uT^r_0\to\uT^r_\epsilon\to\uT^r_{\epsilon,0}$.
In particular, the circumradius of each face $\uf$ of $\uT^{r}_{\epsilon,0}$ is bounded uniformly by
\begin{equation}
\label{RR0epsbound}
\epsilon\le \epsilon_\mathrm{b}\ ,\ r\le1\ \implies\ | R(\uf)-r | \le \epsilon\,r\, P_0(F;2\,\epsilon'_\mathrm{b})
\end{equation}
with $P_0$ defined in Prop.~\ref{propboundswab}.
This allows us to uniformly control the $r\to 0$ limit of the discrete derivatives $\nabla$ and $\overline\nabla$ by using Lemma~\ref{lemmabound} combined with the previous ingredients. 

\begin{Prop}
\label{PropLimDpr}
Let $F$ be a smooth displacement function with compact support, fix $\epsilon$, and let $\mathcal{F}=\{\uT^{r}_0\}$ be a  family of critical triangulations as above. 
To each point $z\in\Bbb{C}$ and to each $r$ we associate the face $\uf^r_\epszero(z)$ of the deformed triangulation $\uT^{r}_\epszero$ which contains $z$. 
Note that the set of $z$ which are either vertices or else belong to an edge of the triangulation is a set of measure zero and can be ignored.
Then in the $r\to 0$ limit, the discrete derivative operators $\nabla$ and $\overline\nabla$ for the face  $\uf^r_\epszero(z)$ converge uniformly towards the continuum partial derivative $\partial$ and $\bar\partial$ at the point $z$.
More precisely let $\phi$ be a \emph{smooth function} (or at least of class $C^2$) with compact support $\Omega$ of the plane. Then 
\begin{equation}
\label{ }
\lim_{r\to 0}\nabla\phi(\uf^r_\epszero(z))\ =\ \partial\phi(z)
\ ,\quad
\lim_{r\to 0}\overline\nabla\phi(\uf^r_\epszero(z))\ =\ \bar\partial\phi(z)
\end{equation} 
Moreover, the limit is uniform: Namely there is a constant $\mathtt{C}$ independent of $z\in \Omega$, the choice of the family $\mathcal{F}$ of triangulations, and the value of $\epsilon\in[0,\epsilon'_\mathrm{b}]$ (but still depending on $F$, on $\epsilon'_\mathrm{b}$ and on $\phi$), such that
\begin{equation}
\label{boundProp13}
\left | \nabla\phi(\uf^r_\epszero(z))-\partial\phi(z) \right | \ \text{and}\ \ \left | \overline\nabla\phi(\uf^r_\epszero(z))-\bar \partial\phi(z) \right |\ \le\ \mathtt{C}\ r 
\end{equation}
\end{Prop}
\begin{proof}
Let us apply the bound \ref{eqRobustLemma} obtained in Rem.~\ref{robust-lemma} in the proof of Lemma \ref{lemmabound} in Appendix~\ref{prooflemmabound} to the face $\uf^r_\epszero(z)$, to get
\begin{equation}
\Big| \nabla \phi (\uf^r_\epszero(z) )   - \partial \phi (z) \Big| \leq R(\uf^r_\epszero(z)) \, 
\Bigg(  \, {5 \over 2} \, \sup_{z \in \Omega} \big| \partial^2 \phi  \big| \, + \, 3 \,  \sup_{z \in \Omega} \big| \partial \overline{\partial} \phi \big| \, + \,  {1 \over 2} \, \sup_{z \in \Omega}
\big| \overline{\partial}^2  \phi \big| \, \Bigg)
\end{equation}
We then use \ref{RR0epsbound} to bound uniformly the circumradius of $\uf^r_\epszero(z)$  by
\begin{equation}
\label{ }
R(\uf^r_\epszero(z))\le r \left(1+ \epsilon\, P_0(F; 2\,\epsilon'_\mathrm{b})\right)
\end{equation}
This leads to the bound \ref{boundProp13}. The same argument applies to $\overline\nabla\phi$.
\end{proof}
\color{black}
It follows that the full variation of the discrete Laplace-Beltrami operator $\delta\Delta(\epsilon)=\Delta(\epsilon)-\Delta$ converges \emph{uniformly} towards a local Laplace-like operator which depend on $\epsilon$ and $F$, in the following sense.
\begin{Prop}
\label{propboundDelta}
Let $F$, $\epsilon$  and $\mathcal{F}=\{\uT^{r}_0\}$ as in Prop.~\ref{PropLimDpr} and $\phi$ be  a \emph{smooth function} (or at least of class $C^2$) with compact support $\Omega$ of the plane. 
Then
\begin{equation}
\label{ }
\phi\cdot\delta\Delta(\epsilon)\cdot\phi=\sum_{\uu,\uv\in{\uT}^{r}_0} \bar\phi(\uu)\, \left(\delta\Delta(\epsilon)\right)_{\uu\uv}\, \phi(\uv)
\end{equation}
converges uniformly when $r\to 0$ towards the local quadratic form
\begin{equation}
\label{continuousintegral}
\int_\Omega d^2 z\ 
\begin{pmatrix}
      \partial\phi    \\
      \bar\partial\phi  
\end{pmatrix}^\dagger\,
\text{\raisebox{-.75ex}{\Huge{$\mathbb{E}$}}}(\epsilon;F)\,
\begin{pmatrix}
      \partial\phi    \\
      \bar\partial\phi  
\end{pmatrix}
\end{equation}
with {\Large{$\mathbb{E}$}}$(\epsilon;F)$ the $2\times 2$ matrix
\begin{equation}
\label{bigEbigEp}
\begin{split}
&\text{\raisebox{-.5ex}{\huge{$\mathbb{E}$}}}(\epsilon;F)\ =\ \int_0^\epsilon \!d\varepsilon\  \text{\raisebox{-.5ex}{\huge{$\mathbb{E}$}}}'(\varepsilon;F)\quad\text{with}\\
&\text{\raisebox{-.5ex}{\huge{$\mathbb{E}$}}}'(\varepsilon;F)\ =\ {-4\over ((1+\varepsilon\,\partial F)(1+\varepsilon\,\bar\partial \bar F) -\varepsilon^2\, \bar\partial F\,\partial\bar F)^2}\ \times \\
&\ \begin{pmatrix}
 -\varepsilon \partial \bar F\,\bar\partial F \left( 2+\varepsilon (\partial F +\bar\partial\,\bar F)\right)     & 
 \partial\bar F\left({(1+\varepsilon\,\partial F)}^2-\varepsilon^2 \bar\partial F\,\partial\bar F\right)   \\
 \bar\partial F\left({(1+\varepsilon\,\bar\partial\, \bar F)}^2-\varepsilon^2 \partial \bar F\,\bar\partial F\right)     &
 -\varepsilon \partial \bar F\,\bar\partial F \left( 2+\varepsilon (\partial F +\bar\partial\,\bar F)\right)  
\end{pmatrix}
\end{split}
\end{equation}
\end{Prop}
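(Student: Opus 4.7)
The plan is to write the full variation as a time-ordered integral of the derivative, express the derivative using the back-deformed triangulation $\uT^r_{\varepsilon: 0}$ (whose vertices coincide with those of $\uT^r_0$ while its edges and faces coincide with those of $\uT^r_\varepsilon$), and then pass to the continuum limit by combining the uniform estimates on radii from Proposition~\ref{propboundswab} with the uniform derivative convergence of Proposition~\ref{PropLimDpr}. Concretely, I would first write
\begin{equation*}
\bar\phi\cdot\delta\Delta(\epsilon)\cdot\phi \ = \ \int_0^\epsilon d\varepsilon\ \bar\phi\cdot\Delta'(\varepsilon)\cdot\phi,
\end{equation*}
and then use the factorization \ref{LBprimeEps0} of $\Delta'(\varepsilon)$ with respect to $\uT^r_{\varepsilon:0}$ to rewrite
\begin{equation*}
\bar\phi\cdot\Delta'(\varepsilon)\cdot\phi \ = \ \sum_{\uf\in\mathrm{F}(\uT^r_{\varepsilon:0})} A(\uf)\,
\begin{pmatrix} \nabla\phi(\uf)\\ \overline\nabla\phi(\uf)\end{pmatrix}^{\!\dagger}\
\text{\raisebox{-.4ex}{\Large $\mathbb{D}$}}(\varepsilon;F)(\uf)\
\begin{pmatrix} \nabla\phi(\uf)\\ \overline\nabla\phi(\uf)\end{pmatrix},
\end{equation*}
where $A(\uf)$ and the discrete derivatives now refer to the back-deformed (non-isoradial) triangulation, and where the matrix entries of $\text{\Large $\mathbb{D}$}(\varepsilon;F)(\uf)$ depend only on the quantities $\nabla F(\uf)$, $\overline\nabla F(\uf)$, $\nabla\bar F(\uf)$, $\overline\nabla\bar F(\uf)$ of the same triangulation.

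The key uniform input is this. By Proposition~\ref{propboundswab} the radii of all faces $\uf\in\uT^r_\varepsilon$ satisfy $|R(\uf)-r|\le r\varepsilon P_0$, so the circumradii of $\uT^r_{\varepsilon:0}$ are uniformly $\mathrm{O}(r)$ on the support of the integrand (and the same factors $\textswab{D}_\varepsilon$, $\textswab{K}_\varepsilon$, $\textswab{H}_\varepsilon$ stay uniformly bounded). Applying Lemma~\ref{lemmabound} to $\phi$, $F$, $\bar F$ on each face $\uf$ of $\uT^r_{\varepsilon:0}$, together with $r=\mathrm{O}(1)\cdot r$ and uniform boundedness of $|\partial^2\phi|$, $|\partial\bar\partial\phi|$, $|\bar\partial^2\phi|$ (and likewise for $F$, after using the transport inequality \ref{MepsIneq} to bound $M_2(\varepsilon)$ uniformly on $[0,\epsilon'_{\mathrm b}]$), one obtains, uniformly in $z$ in the supports, in $\varepsilon\in[0,\epsilon]$, and in the choice of family $\mathcal F$:
\begin{equation*}
\nabla\phi(\uf^r_{\varepsilon:0}(z)) \ = \ \partial\phi(z)+\mathrm{O}(r), \qquad
\text{\raisebox{-.4ex}{\Large $\mathbb{D}$}}(\varepsilon;F)(\uf^r_{\varepsilon:0}(z)) \ = \ \text{\raisebox{-.4ex}{\Large $\mathbb{E}$}}'(\varepsilon;F)(z)+\mathrm{O}(r),
\end{equation*}
and the analogous statements for $\overline\nabla$ and $\bar F$.

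Next I would convert the face sum into a Riemann sum. Since the areas of the faces of $\uT^r_{\varepsilon:0}$ covering a compact neighborhood of $\mathrm{supp}(\phi)\cup\mathrm{supp}(F)$ form a partition whose mesh is $\mathrm{O}(r)$ (again by the uniform radius bound), and since the integrand in the RHS of the target formula \ref{continuousintegral} is continuous in $z$ and in $\varepsilon$, the sum
\begin{equation*}
\sum_{\uf\in\mathrm{F}(\uT^r_{\varepsilon:0})} A(\uf)\,H_\varepsilon\bigl(z_{\varepsilon:0}(\uf)\bigr) \ \longrightarrow \ \int_{\mathbb C}d^2z\,H_\varepsilon(z)
\end{equation*}
uniformly in $\varepsilon\in[0,\epsilon]$ and in $\mathcal F$, where $H_\varepsilon(z)$ is the matrix-valued integrand of \ref{continuousintegral}. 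Combining this with the uniform derivative convergence gives uniform convergence of $\bar\phi\cdot\Delta'(\varepsilon)\cdot\phi$ to the $\varepsilon$-integrand of the target integral, uniformly in $\varepsilon\in[0,\epsilon]$. Dominated convergence (or a trivial uniform bound on a compact $\varepsilon$-interval) then lets us exchange the $\varepsilon$-integration with the $r\to 0$ limit, yielding the stated formula with $\text{\Large $\mathbb{E}$}(\epsilon;F)=\int_0^\epsilon\text{\Large $\mathbb{E}$}'(\varepsilon;F)\,d\varepsilon$.

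The main obstacle is the Riemann-sum step: although $\uT^r_0$ is isoradial and amenable to a textbook Riemann sum argument via the associated rhombic graph, the back-deformed triangulation $\uT^r_{\varepsilon:0}$ need not be isoradial (or even Delaunay), and its face structure can change with each flip of $\uT^r_\varepsilon$ as $\varepsilon$ varies. Controlling this requires the uniform circumradius bounds of Proposition~\ref{propboundswab} (so that the mesh remains $\mathrm{O}(r)$) together with a uniform lower bound on the areas forcing the partition to remain non-degenerate; the latter will follow from $A(\uf)=\mathrm{O}(R(\uf)^2)$ combined with the lower bound $\bar R_-(\varepsilon,r)\ge c\,r$ coming from \ref{barRminus}. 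Once these uniform geometric estimates are in hand, the remaining steps are essentially routine bookkeeping.
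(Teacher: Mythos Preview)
Your proposal is correct and follows essentially the same route as the paper: write $\delta\Delta(\epsilon)=\int_0^\epsilon\Delta'(\varepsilon)\,d\varepsilon$, use the back-deformed factorization \ref{LBprimeEps0}--\ref{TheBigDMatrix} to express $\bar\phi\cdot\Delta'(\varepsilon)\cdot\phi$ as a Riemann sum over faces of $\uT^r_{\varepsilon:0}$, and invoke the uniform radius bound \ref{RR0epsbound} together with Proposition~\ref{PropLimDpr} to pass to the limit. Your discussion of the Riemann-sum step is more explicit than the paper's (which simply asserts it), but note that a lower bound on face areas is not actually needed---the convergence of $\sum_\uf A(\uf)\,H(z_\uf)$ to $\int H$ requires only that the mesh (controlled by the circumradius upper bound) go to zero and that $H$ be continuous.
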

\begin{proof}
One just writes $\delta\Delta(\epsilon)$ as
$$\delta\Delta(\epsilon)=\delta\Delta(\epsilon)=\int_0^\epsilon \!d\varepsilon\, \Delta'(\varepsilon)$$
and use the explicit representation \ref{LBprimeEps0} \ref{TheBigDMatrix} for $\Delta'(\varepsilon)$ to write
\begin{equation}
\label{ }
\phi\cdot\Delta'(\varepsilon)\cdot\phi= \sum_{\uf\in\uT^r_{\varepsilon: {\scriptscriptstyle 0}}} A(\uf)\, \begin{pmatrix}
      \nabla\phi (\uf)   \\
      \overline\nabla\phi (\uf) 
\end{pmatrix}^\dagger \cdot
\left[\raisebox{-.75ex}{\text{{\Huge{$\mathbb{D}$}}}}
(\varepsilon;F)\right](\uf)\cdot
\begin{pmatrix}
      \nabla\phi (\uf)   \\
      \overline\nabla\phi (\uf) 
\end{pmatrix}
\end{equation}
which is a Riemann sum. 
Then \ref{RR0epsbound} and Prop.~\ref{PropLimDpr} ensures that in the $r\to 0$ limit  this converges uniformly towards an ordinary integral involving continuous derivatives of $\phi$ and $F$ ($\mathbb{D}$ becoming $\mathbb{E}$). One thus recover \ref{continuousintegral}.
\end{proof}

\subsection{Scaling limit for the bi-local  deformation term for $\Delta$}\ 
\label{ssSLbilocDelta}

These arguments can be repeated for studying the scaling limit $\ell\to\infty$ of the bi-local term
\begin{equation}
\label{ }
\tr\left[\delta_1\Delta(\epsilon_1)\cdot\Delta_\mathrm{cr}^{-1}\cdot\delta_2\Delta(\epsilon_2)\cdot\Delta_\mathrm{cr}^{-1}\right]
\end{equation} 
for finite deformation parameters $\epsilon_{1}$ and $\epsilon_{2}$.
Again we consider two smooth deformation functions $F_1$ and $F_2$ with disjoint compact supports $\Omega_1$ and $\Omega_2$.
$\delta_1\Delta(\epsilon_1)=\Delta(\epsilon_1)-\Delta_\mathrm{cr}$ (resp. $\delta_2\Delta(\epsilon_2)=\Delta(\epsilon_2)-\Delta_\mathrm{cr}$) is the variation of the Laplace-Beltrami operator under the deformation 
$z\to z+\epsilon_1\,F_1(z)$ (resp. $z\to z+\epsilon_2\,F_2(z)$).
As above, instead of considering a fixed initial critical lattice $\uT_\mathrm{cr}$ with isoradius $R_0=1$, and rescaled deformation functions $F_\ell(z)=\ell\,F(z/\ell)$, with $\ell\to\infty$ a rescaling parameter, we consider a family $\mathcal{F}=\{\uT^r\}$ of critical lattices with isodradii $r$, fixed deformation functions $F$'s, and study the limit $r\to 0$. This is equivalent since by a change of variable  $r\sim 1/\ell$.

For a finite $0<r\le 1$, deforming the initial $\uT^r_\mathrm{cr}$ critical lattice, the bi-local deformation term reads as a double sum over the faces of the two non-isoradial lattices $\uT_{\epsilon_1: {\scriptscriptstyle 0}}^r$ and $\uT_{\epsilon_2:{\scriptscriptstyle 0}}^r$, which share the same vertices, but not the same faces, 
with $\uT^r_\mathrm{cr}$, of the explicit form
\begin{equation}
\label{bilocalepsfinite}
\begin{split}
&\Tr\left[\Delta'(\epsilon_1)\cdot\Delta_\mathrm{cr}^{-1}\cdot\Delta'(\epsilon_2)\cdot\Delta_\mathrm{cr}^{-1}\right] \ = 
\sum_{\uf_1\in\uT_{\epsilon_1: {\scriptscriptstyle 0}}^r}\  \sum_{\uf_2\in\uT_{\epsilon_2: {\scriptscriptstyle 0}}^r}
A(\uf_1)
\ A(\uf_2)\\ 
&\tr\left(
{
\left[\raisebox{-.6ex}{\text{\huge{$\mathbb{D}$}}}(\epsilon_1;F_1)\right](\uf_1)
\cdot
\left[
\begin{pmatrix}\nabla\\  \overline\nabla \end{pmatrix}
\Delta_\mathrm{cr}^{-1}
\begin{pmatrix}\nabla\\  \overline\nabla \end{pmatrix}^{\!\dagger}
\right]
_{\uf_1 \uf_2}
\hskip -1.em\cdot
\left[\raisebox{-.6ex}{\text{\huge{$\mathbb{D}$}}}(\epsilon_2;F_2)\right](\uf_2)
\cdot
\left[
\begin{pmatrix}\nabla\\  \overline\nabla \end{pmatrix}
\Delta_\mathrm{cr}^{-1}
\begin{pmatrix}\nabla\\  \overline\nabla \end{pmatrix}^{\!\dagger}
\right]
_{\uf_2 \uf_1}
}\right)
\end{split}
\end{equation}
The trace $\Tr\,[\ ]$ in the l.h.s. of \ref{bilocalepsfinite} is the ``big trace'' over the infinite set of vertices of the critical lattice.
The trace $\tr\,(\ )$ in the r.h.s of \ref{bilocalepsfinite} is a finite trace over a product of $2\times 2$ matrices.
This appears again as a double Riemann discrete sum over the faces of the triangulations $\uT_{\epsilon_1: {\scriptscriptstyle 0}}^r$ and 
$\uT_{\epsilon_2: {\scriptscriptstyle 0}}^r$. 

Studying the scaling limit $r\to 0$ might seem similar to what was done above for $\Delta$.
There is, however, a delicate point.
$\Delta^{-1}_\mathrm{cr}$ is the critical propagator on the critical lattice $\uT^r_\mathrm{cr}$, given by the explicit Kenyon integral representation. But its elements $\left[\Delta^{-1}_\mathrm{cr}\right]_{\uu,\uv}$ are not given by the restriction of a smooth function of the 
positions of the vertices $G(z(\uu),z(\uv))$.

Indeed, the large distance asymptotics of $\Delta^{-1}_\mathrm{cr}$ on a critical lattice with isoradius $R_0=1$ given by Prop.~\ref{pGasymptotics} implies that the propagator $\Delta^{-1}_\mathrm{cr}$ on a lattice $\uT^r_\mathrm{cr}$ can be separated in a dominant smooth part $G_\mathrm{D}$ and a subdominant non-smooth part $G_\mathrm{SD}$.
\begin{equation}
\label{ }
\left[\Delta^{-1}_\mathrm{cr}\right]_{\uu,\uv}=G_\mathrm{D}(\uu,\uv)+G_\mathrm{SD}(\uu,\uv)
\end{equation}
The dominant smooth part is the continuum propagator (note now the $r$ dependence)
\begin{equation}
\label{Gsmooth}
G_\mathrm{D}(\uu,\uv)=-{1\over 2 \pi}\Big( \log \big( 2 \left | z(\uu)-z(\uv) \right|/ r \big)  +\gamma_\mathrm{euler}\Big)
\end{equation}
The subdominant non-smooth part is 

\begin{equation}
\label{Gnonsmooth}
G_\mathrm{SD}(\uu,\uv)={1\over 2 \pi}\left(
\sum_{m\ge d\ge 1} (-1)^d (2m+d-1)!\, \mathfrak{Re}\left( c_{m,d}(\uu,\uv) \left({r/2\over z(\uv)-z(\uu)}\right)^{2 m}\right)
\right)
\end{equation}
with the coefficients $c_{m,d}(\uu,\uv)$ defined by \ref{def-c-md}.
Note that now $p_1(\uu,\uv)=(z(\uv)-z(\uu))/r$.
From Lemma~\ref{lemma-p-estimate} the $c_{m,d}$'s are of order O(1) irrespective of $(\uu,\uv)$, so the sum of the terms given by a fixed $m>0$ is bounded by a $\mathrm{O}(r^{2m})$ in the scaling $r\to 0$ limit, and is indeed subdominant.

In the scaling limit $r\to 0$ the sum over triangles in equation \ref{bilocalepsfinite} becomes a Riemann integral.

\begin{equation}
\label{limMeasure}
\sum_{\uf_1\in\uT_{\epsilon_1: {\scriptscriptstyle 0}}^r}\  \sum_{\uf_2\in\uT_{\epsilon_2: {\scriptscriptstyle 0}}^r} A(\uf_1)\ A(\uf_2)
\quad \longrightarrow\quad 
\int_{\Omega_1}\! d^2 z_1\  \int_{\Omega_2}\! d^2 z_2
\end{equation}
The $\mathbb{D}(\epsilon_a;F_a)(\uf_a)$ ($a=1,2$) in the r.h.s. of Eqn.~\ref{bilocalepsfinite} are easy to control since they converge uniformly to  $\mathbb{E}'(\epsilon_a;F_a)(z_a)$ given by \ref{bigEbigEp}.
Controling the scaling limit of the discrete derivatives of the smooth part of the propagator is also easy by means of Lemma~\ref{lemmabound}. 
We get the uniform limit
\begin{equation}
\label{DderivSmooth}
\left[
\begin{pmatrix}\nabla\\  \overline\nabla \end{pmatrix}
G_\mathrm{s}
\begin{pmatrix}\nabla\\  \overline\nabla \end{pmatrix}^{\!\dagger}
\right]
_{\uf_1 \uf_2}
\quad \xrightarrow[\ r\to 0\ ]{}  \quad 
-{1\over 4 \pi} 
\begin{pmatrix}
   0   &  (z_1-z_2)^{-2}  \\
   (\bar z_1-\bar z_2)^{-2}   & 0 
\end{pmatrix}
\end{equation}
The non-trivial point is to get a uniform bound on the scaling limit of the left+right discrete derivatives of the non-smooth part of the propagator, and to show that it is subdominant. 
This issue has been discussed in detail in Sect.~\ref{subsection-second-order} through Lemmas~\ref{lnablabound} and \ref{lNGN}.
However Lemma~\ref{lnablabound} relies on the fact that the discrete derivatives $\nabla$ and $\overline\nabla$ are relative to the faces $\uf$ of an isoradial triangulation $\uT_0$. This is not the case anymore here, since  the discrete derivatives are relative to the faces of a non-isoradial triangulation $\uT^r_\epszero$ derived from an isoradial one $\uT^r_{0}$ by flips of edges, without moving the position of the vertices.

We can repeat the analysis of Sect.~\ref{subsection-second-order} for this more general case. 
The dangerous contribution which could give a term of order $|z_1-z_2|^{-2}$ is the $m=1$ term in \ref{Gnonsmooth}, which is explicitly proportional to the real part of
\begin{equation*}
\label{ }
{p_3(\uu,\uv)\  r^3\over (z(\uu)-z(\uv))^3}
\end{equation*}
The most dangerous contribution comes from applying left+right discrete derivatives to $p_3(\uu,\uv))$. Generically a naive dimensional analysis shows that each discrete derivative applied on $p_3$ will bring a term of order $r^{-1}$, so that we will get for a pair of triangles
$\uf_1\in\uT^r_{\epsilon_1: {\scriptscriptstyle 0}}\cap\Omega_1$, $\uf_2\in\uT^r_{\epsilon_2: {\scriptscriptstyle 0}}\cap\Omega_2$
\begin{equation*}
\label{ }
\sum_{\uu_1\in\uf_1}\sum_{\uu_2\in\uf_2}
\begin{pmatrix}\nabla\\  \overline\nabla \end{pmatrix}_{\uf_1,\uu_1}
p_3(\uu_1,\uu_2)
\begin{pmatrix}\nabla\\  \overline\nabla \end{pmatrix}^{\!\dagger}_{\uu_2,\uf_1}
\quad\sim\quad 
\mathtt{cst.}\ {r^{-2}}
\end{equation*}
However, we shall see that this estimate is generically \emph{not uniform}. Namely, the $\mathtt{cst.}$ in this estimate can be arbitrarily large !
One should remember that from Lemma~\ref{lnablabound} if $\uf_1$ and $\uf_2$ are faces of the original isoradial triangulation $\uT_0$ then this $\mathtt{cst.}$ is bounded by $\mathtt{cst.}\le 9$.

This is a technical point which comes from the fact that generically, if we start from an isoradial Delaunay triangulation $\uT_0$ with isoradius $r$, and consider an arbitrary triangle $\ut=(\uu_1,\uu_2,\uu_3)$ which is not a face $\uf$ of $\uT_0$, this triangle may have a circumradius $R(\ut)$ very large ($R(\ut)\gg r$), and an area $A(\ut)$ arbitrarily small ($A(\ut)\ll r^{2}$).  
``Experimental mathematics'' studies of such singular cases and some analytical estimates lead us to the following conjecture.

\begin{Conjecture}
\label{Conf3bound}
Let $\uT_0^r$
be an isoradial Delaunay triangulation of the plane with isoradius $r$ and let $p_3(\uu,\uv)$ be the function defined by 
$$p_3(\uu,\uv)=\sum_{j=1}^{2n} \mathrm{e}^{3 \imath \theta_j}\quad,\qquad \theta_j=\arg \big( z(\uv_j) - z(\uv_{j-1} ) \big)$$
for any pair of vertices $(\uu,\uv)$ of $\uT_0^r$ 
where $\mathbbm{v}=(\uv_0, \cdots, \uv_k)$ is a path in the rhombic lattice ${\uT_0^r}^\lozenge$ going from 
$\uv_0= \uu$ to $\uv_k = \uv $ (see Def.~\ref{def-p_n} and \ref{stuff2}).

For any non-degenerate triangle $\ut=(\uu_1,\uu_2,\uu_3)$ in $\uT_0^r$ (not necessarily a face, as illustrated in Fig.~\ref{FigConjecture1}), let $\nabla p_3(\ut)$ and $\overline\nabla p_3(\ut)$ be the discrete derivatives 
of the function $\uu \mapsto p_3(\uu, \uv)$ evaluated at the triangle $\ut$, where the vertex $\uv$ is fixed, according to Defs.~\ref{nablaDef} and \ref{barnablaDef}.

Then there is a uniform bound
\begin{equation}
\label{ }
|\nabla p_3(\ut)| \quad\text{and}\quad |\overline\nabla p_3(\ut)| \ \le\ \mathtt{cst.}\, R(\ut)/r^2
\end{equation}
where the circumradius $R(\ut)$ of the triangle $\ut$ given by formula \ref{Rcform},
and $\mathtt{cst.}$ is a number of order $\mathrm{O}(1)$ independent on the choice of critical triangulation $\uT_0^r$ and of the triangle $\ut$.
Among the examples we have studied, we found $\mathtt{cst.}=6$.
\end{Conjecture}
\begin{figure}[h!]
\begin{center}
\includegraphics[width=3.5in]{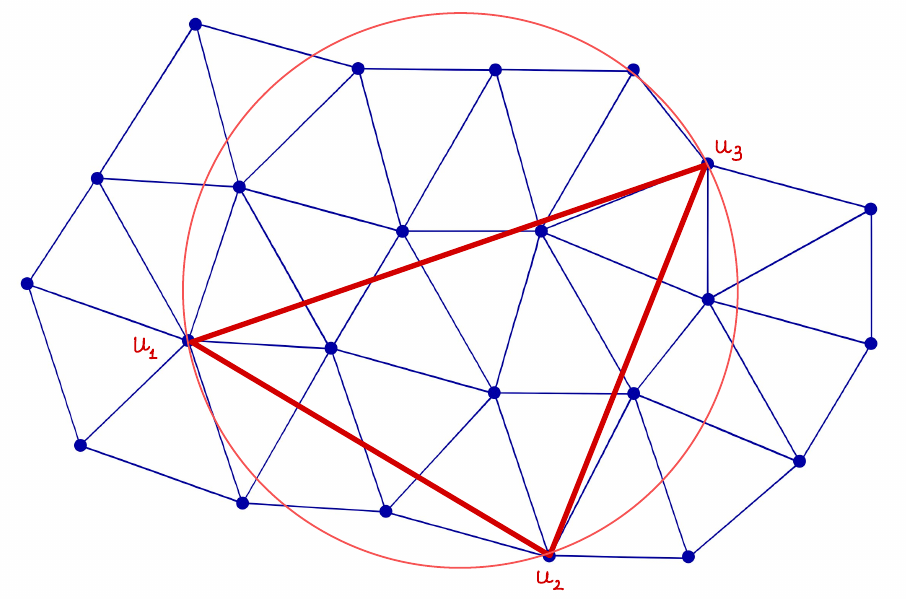}
\caption{Example of a triangle $\ut=(\uu_1,\uu_2,\uu_3)$ in an isoradial graph, with its circumcircle, as considered in Conjecture \ref{Conf3bound}.}
\label{FigConjecture1}
\end{center}
\end{figure}

Assuming the validity of the conjecture, it is easy to adapt the arguments of Sect.~\ref{subsection-second-order}, and to use that fact that the circumradii of the faces $\uf_1$ and $\uf_2$ of the deformed-back-deformed non-isoradial triangulations $\uT_{\epsilon_1: {\scriptscriptstyle 0}}^r$ and $\uT_{\epsilon_2: {\scriptscriptstyle 0}}^r$ are uniformly bounded for $\epsilon_1$ and $\epsilon_2$ small enough by \ref{RR0epsbound}. This leads to

\begin{lemma}
\label{ }
Assuming Conjecture~\ref{Conf3bound}, the left-right discrete derivative of the non-smooth part of the propagator is \emph{uniformly bounded} in the scaling limit $r\to 0$ by
\begin{equation}
\label{ }
\left |
\left[
\begin{pmatrix}\nabla\\  \overline\nabla \end{pmatrix}
G_\mathrm{ns}
\begin{pmatrix}\nabla\\  \overline\nabla \end{pmatrix}^{\!\dagger}
\right]
_{\uf_1 \uf_2}
\right |
\quad\le\quad
\mathtt{cst.}\ {r\over {\left| z(\uf_1)-z(\uf_2) \right|^3}}
\end{equation}
It is therefore subdominant when compared to the contribution of the smooth part of the propagator given by \ref{DderivSmooth}.
\end{lemma}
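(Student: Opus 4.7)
The plan is to decompose $G_{\mathrm{ns}}$ into its leading $(m,d)=(1,1)$ term and a genuinely subdominant remainder, and to estimate the action of $\nabla_{\uf_1}\overline{\nabla}_{\uf_2}$ on each piece, invoking Conjecture~\ref{Conf3bound} at the decisive step. Write $G_{\mathrm{ns}} = G^{(1,1)} + G^{(\ge 2)}$ with
$$G^{(1,1)}(\uu,\uv) = -\frac{r^3}{12\pi}\,\frak{Re}\!\left[\frac{p_3(\uu,\uv)}{(z(\uv)-z(\uu))^3}\right],$$
and $G^{(\ge 2)}$ collecting all $m\ge 2$ contributions. For $G^{(1,1)}$, use additivity of $p_3$ to split $p_3(\uu,\uv) = p_3(\uu,\uu_0)+p_3(\uu_0,\uv_0)+p_3(\uv_0,\uv)$ along reference vertices $\uu_0\in\uf_1,\uv_0\in\uf_2$, and treat the three resulting pieces separately.

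For the constant piece $p_3(\uu_0,\uv_0)(z(\uv)-z(\uu))^{-3}$, both derivatives act only on the denominator and produce a factor of order $|z(\uf_1)-z(\uf_2)|^{-5}$; combined with the bound $|p_3(\uu_0,\uv_0)|\le 3|z(\uu_0)-z(\uv_0)|/r \lesssim |z(\uf_1)-z(\uf_2)|/r$ from Lemma~\ref{lemma-p-estimate}, this yields an estimate of order $r^2/|z(\uf_1)-z(\uf_2)|^4$, hence bounded by $\mathrm{cst}\cdot r/|z(\uf_1)-z(\uf_2)|^3$ in the long-distance regime $r\le|z(\uf_1)-z(\uf_2)|$. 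For the two remaining pieces I would apply a discrete Leibniz expansion of $\nabla_{\uf_1}\overline{\nabla}_{\uf_2}$ acting on $p_3(\uu,\uu_0)(z(\uv)-z(\uu))^{-3}$: the term in which both derivatives hit the denominator reduces to the previous case, while the critical cross-term, in which $\nabla_{\uf_1}$ hits the $p_3$ factor, is estimated via Conjecture~\ref{Conf3bound} as $|\nabla_{\uf_1}p_3(\cdot,\uu_0)|\le \mathrm{cst}\cdot R(\uf_1)/r^2$. The circumradius $R(\uf_1)$ of the back-deformed face $\uf_1\in\uT^r_{\epsilon_1:0}$ is controlled by $R(\uf_1)\le 2r$ for $\epsilon_1\le\epsilon'_{\mathrm{b}}$: Proposition~\ref{PropReps} bounds the radius of the corresponding Delaunay face of $\uT^r_{\epsilon_1}$ between $\bar R_-(\epsilon_1,r)$ and $\bar R_+(\epsilon_1,r)$, and the Lipschitz comparison $|z_\epsilon(\uu)-z_\epsilon(\uv)|\in[(1-\epsilon M_1),(1+\epsilon M_1)]\cdot|z(\uu)-z(\uv)|$ transfers this bound to the critical embedding up to a factor $1+\mathrm{O}(\epsilon)$. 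Together with $|\overline{\nabla}_{\uf_2}(z(\uv)-z(\uu))^{-3}|=\mathrm{O}(|z(\uf_1)-z(\uf_2)|^{-4})$, this cross-term is bounded by $r^3\cdot r^{-1}\cdot |z(\uf_1)-z(\uf_2)|^{-4}=r^2/|z(\uf_1)-z(\uf_2)|^4$, again within the target.

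The remainder $G^{(\ge 2)}$ is handled by resummation. From Lemma~\ref{lemma-p-estimate} the normalised quantities $u_{2s+1}$ satisfy $|u_{2s+1}|\le 1$, so $|c_{m,d}|$ is bounded by a purely combinatorial constant, and the expansion in Remark~\ref{using-p-estimate} gives $|G^{(\ge 2)}(\uu,\uv)|=\mathrm{O}(r^4/|z(\uv)-z(\uu)|^4)$. Two discrete derivatives applied to this long-distance-smooth quantity contribute at most an extra $|z(\uf_1)-z(\uf_2)|^{-2}$, with the $\uu,\uv$-dependence of the $p_{2s+1}$ factors uniformly controlled through further applications of Conjecture~\ref{Conf3bound} combined with the bound $R(\uf_i)\le 2r$; this yields a contribution of order $r^4/|z(\uf_1)-z(\uf_2)|^6$, which is negligible compared to the target bound for $r\le|z(\uf_1)-z(\uf_2)|$. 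Collecting all three pieces delivers the announced inequality, and the uniformity in $r$, $\epsilon_i$ and the choice of initial isoradial graph $\uT^r_0$ is inherited directly from Proposition~\ref{PropReps} and the conjecture.

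The hard part will be ensuring that the discrete Leibniz cross-terms inside $G^{(1,1)}$ are uniformly controlled with respect to the shape of $\uf_1$ and $\uf_2$, even when these triangles are the highly anisotropic ones produced by flips: this uniformity is precisely what Conjecture~\ref{Conf3bound} supplies and what the naive area-based estimate $|\nabla p_3(\uf_1)|\lesssim |p_3|/A(\uf_1)\cdot R(\uf_1)\sim R(\uf_1)|z(\uf_1)-z(\uf_2)|/(r^2 A(\uf_1))$ fails to do when $A(\uf_1)\ll R(\uf_1)^2$, exactly the pathological configuration highlighted in the text preceding the conjecture. A secondary but more routine technical point is to confirm that the transport of the Delaunay radius estimate from $\uT^r_{\epsilon_1}$ to the back-deformed $\uT^r_{\epsilon_1:0}$ is uniform in $r$, which the Lipschitz argument above secures.
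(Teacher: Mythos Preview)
Your proposal is correct and follows essentially the same route the paper intends. The paper's own ``proof'' is a single sentence pointing back to the argument of Lemma~\ref{lNGN} and invoking the radius bound~\eqref{RR0epsbound}; you have faithfully expanded that sketch. The decomposition $G_{\mathrm{ns}}=G^{(1,1)}+G^{(\ge 2)}$, the additive splitting of $p_3$, the appeal to Conjecture~\ref{Conf3bound} at the cross-term, and the use of Proposition~\ref{PropReps} to control $R(\uf_i)$ all match what the paper has in mind. Your replacement of the circumcenter reference points $\uo_{\uf_1},\uo_{\uf_2}$ (used in Lemma~\ref{lNGN}) by vertices $\uu_0\in\uf_1$, $\uv_0\in\uf_2$ is the right adaptation, since for the back-deformed faces the circumcenters are no longer nodes of the rhombic graph $(\uT_0^r)^\lozenge$.

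One small caveat: in your treatment of $G^{(\ge2)}$ you invoke Conjecture~\ref{Conf3bound} to control $\nabla p_{2s+1}$ for $s\ge2$, but as stated the conjecture addresses only $p_3$. The paper does not address this point either---it simply singles out $m=1$ as ``the dangerous contribution'' and regards $m\ge2$ as automatically subdominant. Strictly speaking one needs the analogue of the conjecture for all odd $p_{2s+1}$ (with the same $R(\ut)/r^2$ scaling, possibly with an $s$-dependent constant), or a separate argument that the extra powers of $r^2/|z|^2$ absorb any blowup from the degenerate face geometry. This is a higher-order issue and both you and the paper leave it implicit.
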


Combining the previous results, we can state the following proposition about the existence of the scaling limit of the bi-local term
\begin{Prop}
\label{PropLimTr2}
Assuming Conjecture~\ref{Conf3bound}, the bi-local term 
$\Tr\left[\Delta'(\epsilon_1)\cdot\Delta_\mathrm{cr}^{-1}\cdot\Delta'(\epsilon_2)\cdot\Delta_\mathrm{cr}^{-1}\right]$ defined on critical triangulations $\uT_0^r$ converges uniformly in the scaling limit $r\to 0$ towards the bi-local term
\begin{equation}
\label{ }
\begin{split}
\int_{\Omega_1} \! d^2 z_1 \int_{\Omega_2} \! d^2 z_2\  
&\tr\left[\mathbb{E}'(\epsilon_1;F_1)(z_1)\cdot 
\begin{pmatrix}
   0   &  (z_1-z_2)^{-2}  \\
   (\bar z_1-\bar z_2)^{-2}   & 0 
\end{pmatrix}\cdot
\right.
\\
&\left.\hskip 2.em
{\mathbb{E}'(\epsilon_2;F_2)(z_2)\cdot 
\begin{pmatrix}
   0   &  (z_1-z_2)^{-2}  \\
   (\bar z_1-\bar z_2)^{-2}   & 0 
\end{pmatrix}
}\right]
\end{split}
\end{equation}
\end{Prop}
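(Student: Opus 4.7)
The plan is to build on the explicit double-sum representation \ref{bilocalepsfinite} and pass to the limit term by term. Starting point: rewrite the bilocal trace as a Riemann sum indexed by faces $\uf_1 \in \uT^r_{\epsilon_1:{\scriptscriptstyle 0}} \cap \overline\Omega_1$ and $\uf_2 \in \uT^r_{\epsilon_2:{\scriptscriptstyle 0}} \cap \overline\Omega_2$, with integrand the $2 \times 2$-matrix trace built from $\mathbb{D}(\epsilon_a;F_a)(\uf_a)$ and the left-right discrete derivative of $\Delta^{-1}_\mathrm{cr}$. Because $\supp F_1$ and $\supp F_2$ are disjoint and compact, Proposition~\ref{PropReps} (applied separately to $F_1$ and $F_2$, for $\epsilon_a \le \epsilon'_\mathrm{b}$) gives the uniform control $R(\uf_a)/r \in [\bar R_-/r,\bar R_+/r]$ that we will need in order to apply Lemma~\ref{lemmabound} on these non-isoradial, deformed-back-deformed triangulations.

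Next, split $\Delta^{-1}_\mathrm{cr} = G_\mathrm{s} + G_\mathrm{ns}$ as in \ref{Gsmooth}--\ref{Gnonsmooth}. The smooth piece $G_\mathrm{s}$ is (the restriction of) a smooth function of $z_\mathrm{cr}(\uu) - z_\mathrm{cr}(\uv)$; Lemma~\ref{lemmabound}, combined with the uniform bound on $R(\uf_a)$ from Proposition~\ref{PropReps} and the fact that $|z(\uf_1)-z(\uf_2)|$ stays bounded below by $\mathtt{dist}(\Omega_1,\Omega_2)>0$, yields the uniform limit \ref{DderivSmooth}. Coupled with the uniform convergence $\mathbb{D}(\epsilon_a;F_a) \to \mathbb{E}'(\epsilon_a;F_a)$ (Proposition~\ref{PropLimDpr} extended in the obvious way to a two-parameter deformation) and the Riemann-sum convergence \ref{limMeasure}, this produces precisely the integral asserted in the conclusion, coming from the $G_\mathrm{s}\otimes G_\mathrm{s}$ contribution.

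It then remains to show that all cross terms $G_\mathrm{s}\otimes G_\mathrm{ns}$, $G_\mathrm{ns}\otimes G_\mathrm{s}$ and $G_\mathrm{ns}\otimes G_\mathrm{ns}$ vanish uniformly as $r\to 0$. The higher-$m$ summands in \ref{Gnonsmooth} are $\mathrm{O}(r^{2m})$ with $m\ge 2$, and their two discrete derivatives cost at most $r^{-2}$ by a naive estimate, so they give $\mathrm{O}(r^{2})$ after being paired with an integrable kernel on the compact domain $\Omega_1\times\Omega_2$. The delicate case is the $m=1$ term, which carries $p_3(\uu_1,\uu_2)\, r^3/(z(\uu_1)-z(\uu_2))^3$: here the two factors of $\nabla$ or $\overline\nabla$ act on faces $\uf_1,\uf_2$ of non-isoradial triangulations, and the elementary bounds from Lemma~\ref{lnablabound} no longer apply. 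This is the main obstacle. The plan is to invoke Conjecture~\ref{Conf3bound} to get the uniform estimate $|\nabla p_3(\uf_a)|, |\overline\nabla p_3(\uf_a)| \le \mathtt{cst}\cdot R(\uf_a)/r^2$, and then bound $R(\uf_a)$ by $\bar R_+(\epsilon_a,r)$ via Proposition~\ref{PropReps}; this produces an $\mathrm{O}(r/|z(\uf_1)-z(\uf_2)|^3)$ pointwise bound on the derivative of $G_\mathrm{ns}$, which is integrable on $\Omega_1\times\Omega_2$ and vanishes uniformly in the scaling limit.

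Finally, assembling these pieces and using dominated convergence on the (finite) domain $\Omega_1\times\Omega_2$, the double Riemann sum \ref{bilocalepsfinite} converges uniformly in $r$ to the asserted integral, giving the proposition. The only genuinely non-routine input is the estimate on $\nabla p_3$ and $\overline\nabla p_3$ on arbitrary triangles of $\uT^r_{\epsilon_a:{\scriptscriptstyle 0}}$; everything else reduces to the area/radius bounds of Section~\ref{ssVArRGen}, Lemma~\ref{lemmabound}, and Kenyon's asymptotic formula.
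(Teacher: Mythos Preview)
Your proposal is correct and follows essentially the same route as the paper: start from the double-sum representation \ref{bilocalepsfinite}, split the propagator into $G_\mathrm{s}+G_\mathrm{ns}$, handle the smooth part via Lemma~\ref{lemmabound} together with the radius bounds of Proposition~\ref{PropReps} and the convergence $\mathbb{D}\to\mathbb{E}'$ of Proposition~\ref{PropLimDpr}, and then control the non-smooth part by invoking Conjecture~\ref{Conf3bound} on $\nabla p_3$, $\overline\nabla p_3$ to obtain the $\mathrm{O}(r/|z(\uf_1)-z(\uf_2)|^3)$ bound. The paper packages this last step as a short Lemma and refers back to the three-term analysis of Lemma~\ref{lNGN} for the details; your treatment of the $m\ge 2$ terms by the naive $r^{-2}$ derivative estimate is the same idea made explicit.
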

Note that this term depends on the fours derivatives $\partial F_1$, $\bar\partial F_1$,  $\partial F_2$, $\bar\partial F_2$ and their c.c., and contains both the analytic term $(z_1-z_2)^{-4}$, the anti-analytic term $(\bar z_1-\bar z_2)^{-4}$, and the mixed term $(z_1-z_2)^{-2}(\bar z_1-\bar z_2)^{-2}$.

Finally, from the explicit expression \ref{bigEbigEp}, the limit $\epsilon\to 0$ of $\mathbb{E}'(\epsilon; F)$ exists and is uniform.
\begin{equation}
\label{ }
\lim_{\epsilon\to 0} \mathbb{E}'(\epsilon; F)\ =\ \begin{pmatrix}
    0  &  -4\,\partial \bar F  \\
   -4\,\bar\partial F   &  0
\end{pmatrix}
\end{equation}
Together with Prop.~\ref{PropLimTr2}, this leads to the commutation of limits result for $\Delta$.
\begin{Prop}
\label{prvsepsComm}
Assuming Conjecture~\ref{Conf3bound},
the limit $\epsilon\to 0$ and the scaling limit $r\to 0$ for the bi-local term exist, are uniform, and commute. One recovers the result obtained previously for the scaling limit of the OPE on the lattice for $\Delta$.
\begin{equation}
\label{ }
\begin{split}
&\lim_{\epsilon\to 0}\lim_{r\to 0}\Tr\left[\Delta'(\epsilon_1)\cdot\Delta_\mathrm{cr}^{-1}\cdot\Delta'(\epsilon_2)\cdot\Delta_\mathrm{cr}^{-1}\right]
\\
&\qquad=\ \lim_{r\to 0}\lim_{\epsilon\to 0}\Tr\left[\Delta'(\epsilon_1)\cdot\Delta_\mathrm{cr}^{-1}\cdot\Delta'(\epsilon_2)\cdot\Delta_\mathrm{cr}^{-1}\right]
\\
&\qquad\qquad
=\ {1\over \pi^2} \int_{\Omega_1} d^2 z_1 \int_{\Omega_2} d^2 z_2\ 
{\bar\partial F_1(z_1)\ \bar\partial F_2(z_2)\over (z_1-z_2)^4}
\ +\ 
{\partial \bar F_1(z_1)\ \partial \bar F_2(z_2)\over (\bar z_1-\bar z_2)^4}
\end{split}
\end{equation}
\end{Prop}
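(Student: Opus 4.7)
The plan is to combine Proposition~\ref{PropLimTr2} (which handles the $r\to 0$ limit at fixed $\epsilon_1,\epsilon_2\in[0,\epsilon'_\mathrm{b}]$) with the explicit rational form of the matrix $\mathbb{E}'(\epsilon;F)$ in \ref{bigEbigEp}, so that the two iterated limits can be computed and matched. The strategy is to compute each iterated limit independently, and then invoke a uniform-convergence/continuous-limit argument to justify the interchange.

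First I would treat the order ``$r\to 0$ then $\epsilon\to 0$''. By Proposition~\ref{PropLimTr2} the discrete bilocal trace converges, as $r\to 0$, to
\[
\int_{\Omega_1}\!d^2z_1\int_{\Omega_2}\!d^2z_2\ \tr\!\left[\mathbb{E}'(\epsilon_1;F_1)(z_1)\,\mathbb{G}(z_1,z_2)\,\mathbb{E}'(\epsilon_2;F_2)(z_2)\,\mathbb{G}(z_1,z_2)\right],
\]
where $\mathbb{G}(z_1,z_2)$ is the off-diagonal $2\times 2$ propagator kernel of \ref{DderivSmooth}. From the explicit formula \ref{bigEbigEp} the entries of $\mathbb{E}'(\epsilon;F)$ are continuous rational functions of $\epsilon$ (with denominator bounded away from $0$ on $[0,\epsilon'_\mathrm{b}]$ by Prop.~\ref{propboundswab}), and the pointwise limit as $\epsilon\to 0$ is the off-diagonal matrix with entries $-4\,\partial\bar F$ and $-4\,\bar\partial F$. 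Expanding the $2\times 2$ trace of the product then yields exactly the analytic and anti-analytic pieces $(z_1-z_2)^{-4}$ and $(\bar z_1-\bar z_2)^{-4}$ weighted by $\bar\partial F_1\bar\partial F_2$ and $\partial\bar F_1\partial\bar F_2$; the mixed $|z_1-z_2|^{-4}$ term has coefficient proportional to $\epsilon_1\epsilon_2$ (coming from the diagonal $-\varepsilon\partial\bar F\,\bar\partial F(\,\cdot\,)$ entries of $\mathbb{E}'$) and drops out.

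Second I would treat the order ``$\epsilon\to 0$ then $r\to 0$''. For $r$ fixed, $\Delta'(\epsilon_a)\to\deltaet\Delta$ at $\epsilon=0$ in the matrix-element sense on $\uT_\mathrm{cr}^r$, so that $\Tr[\Delta'(\epsilon_1)\cdot\Delta_\mathrm{cr}^{-1}\cdot\Delta'(\epsilon_2)\cdot\Delta_\mathrm{cr}^{-1}]\to\tr[\deltaeo\Delta\cdot\Delta_\mathrm{cr}^{-1}\cdot\deltaet\Delta\cdot\Delta_\mathrm{cr}^{-1}]$. Proposition~\ref{ThDelta} and Theorem~\ref{TfinalOPElike1} then give the advertised integral with $c=-2$, which coincides term by term with what was computed in the first order.

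Third, to obtain genuine commutation I would invoke the uniform-in-$\epsilon$ scaling convergence asserted in Proposition~\ref{PropLimTr2}: since $\mathbb{D}(\epsilon;F)$ and the radii $R(\uf^r_\epszero(z))$ are uniformly bounded for $\epsilon\in[0,\epsilon'_\mathrm{b}]$, $r\le 1$ (by Prop.~\ref{propboundswab} and \ref{RR0epsbound}), and since the non-smooth part of $\Delta_\mathrm{cr}^{-1}$ contributes $\mathrm{O}(r/|z_1-z_2|^3)$ uniformly under the assumption of Conjecture~\ref{Conf3bound}, the scaling limit is uniform in $(\epsilon_1,\epsilon_2)\in[0,\epsilon'_\mathrm{b}]^2$. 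Combined with the continuity of the limit matrix $\mathbb{E}'(\epsilon;F)$ at $\epsilon=0$, the standard Moore–Osgood principle allows the interchange of the two limits and delivers the claimed identity.

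The principal obstacle is confirming the uniform-in-$\epsilon$ character of the $r\to 0$ convergence, and in particular the subdominance of the derivatives of the non-smooth piece of the critical Green's function across the family $\uT^r_\epszero$; this is exactly the content inherited from Conjecture~\ref{Conf3bound} in Proposition~\ref{PropLimTr2}. Everything else reduces to the explicit trace arithmetic on $\mathbb{E}'(\epsilon;F)$ and continuity of its coefficients, both of which are routine consequences of the formulas already established.
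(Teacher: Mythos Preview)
Your proposal is correct and follows essentially the same approach as the paper: combine the uniform $r\to 0$ convergence of Proposition~\ref{PropLimTr2} with the explicit, uniform $\epsilon\to 0$ limit of $\mathbb{E}'(\epsilon;F)$ read off from \ref{bigEbigEp}. The paper's own argument is in fact just the one-line observation preceding the proposition, so your version---which additionally spells out the second iterated limit via Proposition~\ref{ThDelta} and Theorem~\ref{TfinalOPElike1}, and names the Moore--Osgood principle explicitly---supplies the details the paper leaves implicit.
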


The Conjecture~\ref{ConjectureLimTr2D} in the introduction is a special case of Proposition~\ref{prvsepsComm}.
We simply repeat the arguments given at the beginning of Section~\ref{ssSLbilocDelta}, which show that one can equivalently define the scaling limit by choosing a given deformation function $F$, and letting the isoradius $R_\mathrm{cr}$ of the critical graphs $\uG_\mathrm{cr}$ go to zero $R_\mathrm{cr}\to 0$, or fixing the isoradius $R_\mathrm{cr}$ of the critical graph $\uG_\mathrm{cr}$, but then instead introducing the rescaled displacement function $F_\ell$ and letting $\ell$ go to infinity $\ell\to\infty$. 

\subsection{About the scaling limit of the K\"ahler operator $\mathcal{D}$}
\label{ssScalingKahler}
\newcommand{\bigbbF}{\raisebox{-.6ex}{\text{\huge{$\mathbb{F}$}}}}
We now discuss briefly the deformations of the K\"ahler operator, without giving details of the calculations.
In the block matrix representation, the K\"ahler operator $\mathcal{D}$ and its $\epsilon$-derivative read
\begin{equation}
\label{KahlerBlock}
\mathcal{D}(\epsilon) = 
4 \begin{pmatrix}
      \nabla_{\!\epsilon}   \\
      \overline\nabla_{\!\epsilon}  
\end{pmatrix}^\dagger
\begin{pmatrix}
     A_\epsilon/R_\epsilon^2 &   0  \\
   0   &  0
\end{pmatrix}
\begin{pmatrix}
      \nabla_{\!\epsilon}    \\
      \overline\nabla_{\!\epsilon}  
\end{pmatrix}
\ ,\quad
\mathcal{D}'(\epsilon) = 
 \begin{pmatrix}
      \nabla_{\!\epsilon}    \\
      \overline\nabla_{\!\epsilon}  
\end{pmatrix}^\dagger
\begin{pmatrix}
     A_\epsilon\, \textswab{R}_\epsilon &   A_\epsilon\,\overline{\textswab{H}}_\epsilon  \\
   A_\epsilon\,\textswab{H}_\epsilon   &  0
\end{pmatrix}
\begin{pmatrix}
      \nabla_{\!\epsilon}    \\
      \overline\nabla_{\!\epsilon}  
\end{pmatrix}
\end{equation}
with $A_\epsilon$ and $R_\epsilon$ the areas and circumradii of the faces of the deformed lattice $\uT_\epsilon$, while $\textswab{R}_\epsilon$ and $\textswab{H}_\epsilon$ are given by  \ref{swabf1} and \ref{C123def}.
In order to study $\mathcal{D}(\epsilon)$ at finite epsilon and to compare it to  $\mathcal{D}(0)=\mathcal{D}_\mathrm{cr}$, and its scaling limit, one can try to repeat the argument for $\Delta$ presented in the previous section.
It is enough to consider $\mathcal{D}'(\epsilon)$.
We start from a critical lattice $\uT_{0}^r$ with isoradius $r$, perform the deformation $z\to z+\epsilon F(z)$, and reexpress $\mathcal{D}'(\epsilon)$, defined on the deformed Delaunay lattice $\uT_\epsilon^r$, on the back-deformed lattice $\uT_\epszero^r$.
We can thus rewrite $\mathcal{D}'(\epsilon)$ under a block form similar to \ref{LBprimeEps0}

\begin{equation}
\label{KprimeEps0}
\mathcal{D}'(\epsilon)= \begin{pmatrix}
      \nabla    \\
      \overline\nabla  
\end{pmatrix}^\dagger
A\cdot
\bigbbF'\! (\epsilon;F)
\begin{pmatrix}
      \nabla     \\
      \overline\nabla  
\end{pmatrix}
\end{equation}
with $\mathbb{F}'(\epsilon;F)$ a $2\times 2$ block matrix made of diagonal matrices relative to the faces $\uf$ of $\uT^r_\epszero$, defined implicitly by \ref{KprimeEps0}.
The $2\times 2$ matrix extracted of $\mathbb{F}'$ relative to a face $\uf$, $\left[\mathbb{F}'(\epsilon;F)\right](\uf)$, can be computed explicitly out of the $\nabla F(\uf)$ and $\overline\nabla F(\uf)$, and of the geometry of the face $\uf$, but the result will be quite long and not very illuminating at this stage.
The difference with the previous case of $\Delta$ is that for a face $\uf$ (let us denote its  vertices $(123)$) $\mathbb{F}'$  will depend explicitly of the circumradius $R(\uf)$ of the face, and of the phases $C_\ue$ associated to the unoriented edges $\ue=(12)$, $(23)$ and $(31)$ of $\uf$, defined by \ref{Cuv}. Indeed the coefficient $\textswab{H}(\uf)$ depends explicitly of $R(\uf)$, and the coefficient $\textswab{R}(\uf)$ depends also of the coefficients $C(\uf)=\sum_{\ue\in\uf}C_\ue$.
Moreover the variation of these coefficients under the back-deformation $\uT_\epszero\leftrightarrow\uT_\epsilon$ depends also of these $C_\ue$.

We can now use  Pro.~\ref{propboundswab} which bound the $\textswab{H}(\uf)$ and $\textswab{R}(\uf)$ and $R(\uf)$, and the fact that since the $C_\ue$ are phases so that $|C_\ue|=1$, to bound uniformly the coefficients of the matrices $\left[\mathbb{F}'(\epsilon;F)\right](\uf)$'s w.r.t the deformation parameter $\epsilon$ (small enough) and the triangulations $\uT^r_0$. More precisely

\begin{Prop}
\label{BoundOnDprime}
Let $F$ be a displacement function, $\mathcal{F}=\{\uT^{r}_0:r\in(0,1]\}$ a family of critical triangulations labelled by their isoradius $r$, and $\epsilon\in(0,\epsilon_\mathrm{b}']$ with $\epsilon_\mathrm{b}'$ defined by \ref{epsbprimedef}.
There is a constant $\mathtt{Cst.}$ which depends only on $F$ and the choice of $\epsilon_\mathrm{b}'$ such that there is a uniform bound for the matrix elements of the $\left[\mathbb{F}'(\epsilon;F)\right](\uf)$ matrices
\begin{equation}
\label{ }
\parallel  \left[\mathbb{F}'(\epsilon;F) \right] (\uf) \parallel \ \le\ \mathtt{Cst.} \ r^{-2}
\end{equation}
with $\parallel\cdot\parallel$ the standard operator norm on matrices (for instance).
\end{Prop}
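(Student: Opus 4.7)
The plan is to obtain the closed form of $\mathbb{F}'(\epsilon;F)$ by composing the back‑deformation identities already established, and then bound its entries face‑by‑face using Proposition~\ref{propboundswab} and Proposition~\ref{PropReps}.

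First, I would start from the block representation \ref{KahlerBlock} of $\mathcal{D}'(\epsilon)$ on $\uT^r_\epsilon$ and systematically push every $\epsilon$‑subscripted object back to the back‑deformed lattice $\uT^r_{\epsilon:{\scriptscriptstyle 0}}$. Recall that $\uT^r_{\epsilon:{\scriptscriptstyle 0}}$ has the same vertices as the critical lattice $\uT^r_{\scriptscriptstyle 0}$ but the combinatorics of $\uT^r_\epsilon$, so the discrete derivatives $(\nabla_\epsilon,\overline\nabla_\epsilon)$ on $\uT^r_\epsilon$ and $(\nabla,\overline\nabla)$ on $\uT^r_{\epsilon:{\scriptscriptstyle 0}}$ act on the same space but at different vertex positions. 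The exact transition formulas \ref{full-variation-nabla-epsilon} and the area formula \ref{AtoAeps} then allow me to write
\[
\mathcal{D}'(\epsilon)=\begin{pmatrix}\nabla\\ \overline\nabla\end{pmatrix}^{\!\dagger}\!A\cdot\mathbb{F}'(\epsilon;F)\,\begin{pmatrix}\nabla\\ \overline\nabla\end{pmatrix},
\]
with $\mathbb{F}'(\epsilon;F)$ the product $M(\epsilon;F)^{\top}\cdot\mathbb{G}(\epsilon;F)\cdot M(\epsilon;F)$, where $M(\epsilon;F)$ is the $2\times 2$ matrix appearing in \ref{full-variation-nabla-epsilon} and $\mathbb{G}(\epsilon;F)$ is the central block of \ref{KahlerBlock} divided by $D(\epsilon;F)$; here the single remaining scalar factor $D(\epsilon;F)$ combines the Jacobian $A_\epsilon/A$ with the $D(\epsilon;F)^{-2}$ from the two transition matrices.

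Second, I would bound the central factor $\mathbb{G}(\epsilon;F)$. Its entries are, up to $D(\epsilon;F)$, exactly the quantities $\textswab{K}_\epsilon(\uf)$ and $\textswab{H}_\epsilon(\uf)$ defined in \ref{swabf1}. Each of these contains an explicit prefactor $1/R_\epsilon(\uf)^2$; and $R_\epsilon(\uf)$ is bounded below by $\bar R_-(\epsilon,r)$ from Proposition~\ref{PropReps}, so $R_\epsilon(\uf)^{-2}\le \bar R_-(\epsilon,r)^{-2}=\mathrm{O}(r^{-2})$ uniformly for $\epsilon\in(0,\epsilon'_{\mathrm b}]$ and $r\in(0,1]$. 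The numerators involve $\nabla_\epsilon F_\epsilon(\uf)$, $\overline\nabla_\epsilon F_\epsilon(\uf)$, $\nabla_\epsilon\bar F_\epsilon(\uf)$, $\overline\nabla_\epsilon\bar F_\epsilon(\uf)$, which are bounded by \ref{BFTebound} in terms of $\bar M_1(\epsilon)$, $\bar M_2(\epsilon)$ and $\bar R_+(\epsilon,r)$, and the geometric phases $C_\varepsilon(\uf)$ and $\bar C_\varepsilon(\uf)$; because each summand in \ref{C123def} has unit modulus one has $|C_\varepsilon(\uf)|\le 3$ uniformly, independently of the shape of $\uf$ and of $r$. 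All three quantities $\bar M_1(\epsilon)$, $\bar M_2(\epsilon)\,\bar R_+(\epsilon,r)$ are themselves bounded by constants depending only on $F$ and $\epsilon'_{\mathrm b}$ through the explicit formulas \ref{MepsIneq} and \ref{barRplus}. Consequently $\|\mathbb{G}(\epsilon;F)(\uf)\|\le \mathtt{Cst.}\,r^{-2}$ with a constant depending only on $F$ and $\epsilon'_{\mathrm b}$.

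Third, the transition matrices $M(\epsilon;F)(\uf)$ and the scalar $D(\epsilon;F)(\uf)^{-1}$ must be bounded independently of $r$. Every matrix entry of $M(\epsilon;F)(\uf)$ is a polynomial of degree $\le 1$ in $\epsilon$ with coefficients among $\nabla F(\uf)$, $\overline\nabla F(\uf)$, $\nabla\bar F(\uf)$, $\overline\nabla\bar F(\uf)$, computed on the face $\uf$ of $\uT^r_{\epsilon:{\scriptscriptstyle 0}}$. By Lemma~\ref{lemmabound} these discrete derivatives are bounded by $M_1+\mathrm{O}(R(\uf)M_2)$, and $R(\uf)$ on $\uT^r_{\epsilon:{\scriptscriptstyle 0}}$ is controlled exactly as for $\uT^r_\epsilon$ by the same Proposition~\ref{PropReps} applied to the reverse deformation (the statement and proof of Proposition~\ref{PropReps} do not use the Delaunay property). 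The choice $\epsilon'_{\mathrm b}<\tfrac12\epsilon_{\mathtt{max}}(1)$ ensures $\bar R_+(\epsilon,r)$ stays finite and hence the numerators are $\mathrm{O}(1)$ uniformly. The same argument, combined with $D(\epsilon;F)=(1+\epsilon\nabla F)(1+\epsilon\overline\nabla\bar F)-\epsilon^2\overline\nabla F\nabla\bar F$ and $\epsilon<\check\epsilon_F/2$, yields a uniform lower bound on $|D(\epsilon;F)(\uf)|$. The matrix norm of $M(\epsilon;F)(\uf)/D(\epsilon;F)(\uf)$ is therefore $\mathrm{O}(1)$ uniformly in $\uf$, $r$ and $\epsilon$.

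Combining these three bounds in the product $\mathbb{F}'=M^{\top}\mathbb{G}M/D$ produces the announced estimate $\|\mathbb{F}'(\epsilon;F)(\uf)\|\le\mathtt{Cst.}\,r^{-2}$. The only real obstacle is the one of keeping careful track of the distinction between quantities evaluated on $\uT^r_\epsilon$ (namely $R_\epsilon$, $\textswab{K}_\epsilon$, $\textswab{H}_\epsilon$, from which the $r^{-2}$ singularity originates) and quantities evaluated on $\uT^r_{\epsilon:{\scriptscriptstyle 0}}$ (namely $\nabla$, $\overline\nabla$, $A$, $C$, which are $\mathrm{O}(1)$); the two are related by the exact backward–deformation identities, and once this dictionary is set up the estimates are routine. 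The inverse power $r^{-2}$ cannot be improved and reflects precisely the $1/R^2$ present in the K\"ahler operator itself — this is the qualitative difference with the Laplace–Beltrami case of Section~\ref{ssScalLimD}, where no such factor arises.
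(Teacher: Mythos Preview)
Your proposal is correct and follows essentially the same approach as the paper: expand $\mathbb{F}'(\epsilon;F)$ by composing the exact back-deformation identities \ref{full-variation-nabla-epsilon} and \ref{AtoAeps} with the block form \ref{KahlerBlock}, then bound the resulting entries using Proposition~\ref{propboundswab} (for $\textswab{K}_\epsilon,\textswab{H}_\epsilon$ and $R_\epsilon$), the phase bounds $|C_\ue|=1$, and Proposition~\ref{PropReps} (for radii on the back-deformed lattice). The paper's own proof is extremely terse (``write $\mathbb{F}'$ explicitly \dots\ lengthy but not difficult''), so your factorization $\mathbb{F}'=M^{\top}\mathbb{G}M/D$ is in fact a cleaner organization of the same computation, and your emphasis on distinguishing quantities living on $\uT^r_\epsilon$ from those on $\uT^r_{\epsilon:{\scriptscriptstyle 0}}$ is exactly the right bookkeeping.
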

\begin{proof}
The proof relies on writing explicitly the matrix $\mathbb{F}'$. 
This is lengthy but not difficult. 
Note that the $r^{-2}$ factor, where $r$ is the isoradius of the initial lattice $\uT_0^r$, comes from the $A/R_0^2$ in the initial definition of $\mathcal{D}$ \ref{KahlerBlock}.
\end{proof}

If we look now at the limit $r\to 0$, keeping $\epsilon$ fixed, denoting as in Prop.~\ref{PropLimDpr} the face of $\uT_\epszero^r$ which contains the point $z$ by $\uf^r_\epszero(z)$, there is for a generic family $\mathcal{F}=\{\uT_\epszero^r\}$ no reason that the ratio
$\bar R(\uf^r_\epszero(z))=R(\uf^r_\epszero(z))/r$ and the coefficients $C_\ue(\uf^r_\epszero(z))$ and $C(\uf^r_\epszero(z))$ converge towards fixed values $\bar R(z;\epsilon)$, $C(z;\epsilon)$, $C_\ue(z;\epsilon)$ in the scaling limit $r\to 0$.
Indeed, these quantities depend explicitly on the detailed local geometrical structure of the lattices $\uT_0^r$ in the neighborhood of the point $z$, for each value of $r$. Only for some \emph{very specific sequences} of $\uT_0^r$, for instance iterative isoradial refinements of the initial lattice for $r=1$, can we expect strong correlations leading to the existence of a $r\to 0$ limit for these quantities.
We can therefore state:

\begin{Prop}\label{NoLimitDprime}
Under the hypothesis of Prop.~\ref{BoundOnDprime} the matrix $\mathbb{F}'(\epsilon;F)/r^2$ has generically no local scaling limit for $\epsilon$ finite when $r\to 0$.
\begin{equation}
\label{ }
\lim_{r\to 0} \left[\mathbb{F}'(\epsilon;F)\right](\uf^r_\epszero(z))/r^2\qquad\text{does not exist}
\end{equation}
Of course one must have $z\in\Omega=\mathtt{supp}(F)$, since otherwise this limit exists and is zero.
The same is obviously true for the non-existence of the $r\to 0$ limit of the bi-local term at finite $\epsilon_1$, $\epsilon_2$
\begin{equation}
\label{BilocKaehler}
\lim_{r\to 0} \Tr\left[\mathcal{D}'(\epsilon_1)\cdot\mathcal{D}_\mathrm{cr}^{-1}\cdot\mathcal{D}'(\epsilon_2)\cdot\mathcal{D}_\mathrm{cr}^{-1}\right]
\qquad\text{does not exist}
\end{equation}
\end{Prop}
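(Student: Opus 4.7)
The plan is to show that the $r^{-2}$-rescaled matrix $[\mathbb{F}'(\epsilon;F)](\uf^r_\epszero(z))$ retains an explicit, scale-invariant dependence on the shape of the face containing $z$, and that this shape dependence cannot be removed by the choice of family $\mathcal{F}$. First I would specialize to $\epsilon=0$, where $\uT^r_\epszero = \uT^r_0$ and $\uf^r_\epszero(z)$ is simply the critical face $\uf^r_0(z)$ with circumradius $r$. A direct expansion of \ref{KprimeEps0} using \ref{KahlerPrime} and \ref{swabf1} shows that the leading term takes the schematic form
\begin{equation*}
r^{-2}\,[\mathbb{F}'(0;F)](\uf) \ =\ \frac{A(\uf)}{r^2}\begin{pmatrix}\textswab{R}_0(\uf;F) & \overline{\textswab{H}}_0(\uf;F) \\ \textswab{H}_0(\uf;F) & 0\end{pmatrix}\ +\ \mathrm{O}(r),
\end{equation*}
where $\textswab{R}_0$ depends linearly on $\nabla F(z),\overline\nabla F(z)$ and their conjugates but with coefficients that involve $C(\uf)=\sum_{\ue\in\partial\uf} C_\ue$. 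Since each $C_\ue$ is the unit phase $(\bar z(\uu)-\bar z(\uv))/(z(\uu)-z(\uv))=e^{-2\mathrm{i}\arg(z(\uu)-z(\uv))}$, both $A(\uf)/r^2$ and $C(\uf)$ are scale-invariant shape invariants of the triangle and survive the $r\to 0$ rescaling.

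Next I would construct two explicit families of periodic critical triangulations realizing distinct limits. Take $\mathcal{F}_1=\{\uT^r_0\}$ to be the equilateral lattice of isoradius $r$, for which every face satisfies $A(\uf)/r^2 = 3\sqrt{3}/4$ and $C(\uf)=0$ by symmetry. Take $\mathcal{F}_2=\{\tilde\uT^r_0\}$ to be any periodic isoradial triangulation whose faces are not equilateral (e.g. a lattice of right isosceles triangles inscribed in circles of radius $r$, where the face area satisfies $A(\uf)/r^2 = 1$ and $C(\uf)$ takes a nonzero, explicitly computable value). For any fixed $z\in\Omega=\supp F$ with $\nabla F(z)\neq 0$,
\begin{equation*}
\lim_{r\to 0} r^{-2}[\mathbb{F}'(0;F)](\uf^r_0(z)) \ \neq\ \lim_{r\to 0} r^{-2}[\mathbb{F}'(0;F)](\tilde\uf^r_0(z)),
\end{equation*}
because the two limits are distinct linear combinations of $\nabla F(z),\overline\nabla F(z),\nabla\bar F(z),\overline\nabla\bar F(z)$ dictated by the distinct shape invariants. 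The limit along $\mathcal{F}_1\sqcup\mathcal{F}_2$ therefore fails to exist, proving the ``generic'' non-existence claim.

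To extend the argument from $\epsilon=0$ to finite $\epsilon\in(0,\epsilon_\mathrm{b}']$, I would use Prop.~\ref{BoundOnDprime} together with the smoothness of $\mathbb{F}'(\cdot;F)$ in $\epsilon$ between flips: one has $\|[\mathbb{F}'(\epsilon;F)](\uf)-[\mathbb{F}'(0;F)](\uf)\|\le\epsilon\cdot\mathtt{Cst.}\,r^{-2}$ uniformly in the family, so the discrepancy between $\mathcal{F}_1$ and $\mathcal{F}_2$ established at $\epsilon=0$ persists (up to an $O(\epsilon)$ correction that vanishes as $\epsilon\to 0$ but is smaller than the gap for $\epsilon$ small enough). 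Finally, for the bilocal term \ref{BilocKaehler}, the Riemann-sum limit analogous to \ref{limMeasure} produces an integrand that is pointwise the pairing of $[\mathbb{F}'(\epsilon_1;F_1)](\uf^r_\epszero(z_1))$ and $[\mathbb{F}'(\epsilon_2;F_2)](\uf^r_\epszero(z_2))$ with the Green's-function derivative kernel \ref{DderivSmooth}; since the first factors fail to converge, so does the integrand, and choosing $F_1,F_2$ with disjoint supports meeting $\Omega$ transversally makes the non-convergence survive integration.

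The main obstacle will be Step~2: making sure that the shape-invariants $A(\uf)/r^2$ and $C(\uf)$ really produce linearly independent entries in $[\mathbb{F}'(0;F)](\uf)/r^2$ so that the two families yield distinct limits for the \emph{whole} matrix rather than cancelling in some coincidental way. This requires a careful bookkeeping of the $C_\ue$-dependent contributions inside $\textswab{R}_0$, but since the equilateral family has $C(\uf)\equiv 0$ while a suitably tilted isoradial lattice has $C(\uf)\neq 0$ with nontrivial phase, the obstruction is essentially algebraic and can be checked explicitly once the two families are pinned down.
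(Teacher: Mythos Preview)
Your core observation --- that the rescaled matrix $r^{2}\,[\mathbb{F}'(\epsilon;F)](\uf)$ depends on the scale-invariant shape parameters $C(\uf)$, $C_\ue$, and $R(\uf)/r$, and that these cannot stabilise along a generic family --- is exactly the mechanism the paper invokes. The paper's own argument is in fact just the short heuristic paragraph preceding the proposition: it notes that these shape quantities ``depend explicitly on the detailed local geometrical structure of the lattices $\uT_0^r$'' and therefore cannot be expected to converge unless the family is very special (e.g.\ iterative isoradial refinements). Your approach is considerably more constructive: you exhibit two explicit periodic families (equilateral with $C(\uf)=0$ versus a non-equilateral isoradial lattice with $C(\uf)\neq 0$) and show the $r\to 0$ limits differ. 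That is a genuine strengthening of the paper's informal treatment.

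There is, however, a gap in your Step~3. Specialising first to $\epsilon=0$ and then invoking a bound of the form $\|[\mathbb{F}'(\epsilon;F)](\uf)-[\mathbb{F}'(0;F)](\uf)\|\le \epsilon\cdot\mathtt{Cst}\,r^{-2}$ only establishes non-existence for $\epsilon$ smaller than the gap you produced at $\epsilon=0$; it does not cover an arbitrary fixed $\epsilon\in(0,\epsilon_\mathrm{b}']$ as the proposition claims. (Also, Prop.~\ref{BoundOnDprime} bounds $\|\mathbb{F}'\|$ itself, not the increment in $\epsilon$, so the Lipschitz bound you use would need a separate justification.) The simpler route --- closer to the paper's spirit but retaining your constructiveness --- is to work \emph{directly} at the given finite $\epsilon$: choose both families with conformal angles bounded below so that Lemma~\ref{epsFbound2} applies and no flips occur, hence $\uT^r_\epszero=\uT^r_0$. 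Then $[\mathbb{F}'(\epsilon;F)](\uf)/r^{2}$ is an explicit rational function of $\epsilon$, of the discrete derivatives $\nabla F(\uf),\overline\nabla F(\uf)$ and conjugates (which converge to $\partial F(z),\bar\partial F(z)$ by Lemma~\ref{lemmabound}), and of the scale-free phases $C_\ue(\uf)$; the latter take distinct $r$-independent values on the two families, so the two limits disagree for every $\epsilon$ in the allowed range, not just small ones. A minor bookkeeping correction: in \ref{KprimeEps0} the area $A$ is factored out in front of $\mathbb{F}'$, so your schematic should not carry the prefactor $A(\uf)/r^{2}$ inside the matrix.
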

Therefore, the existence of a scaling limit for $\mathcal{D}$ could make sense in a much more limited setting than for $\Delta$.
Remember that we want to compare (i) the limit $\epsilon\to 0$, which, for $\mathcal{D}'$ as well as for $\Delta'$, has the effect of keeping only the terms linear in $\nabla F$, $\overline \nabla F$ and their c.c.; (ii) the scaling limit $r\to 0$; which allows replacing the discrete derivatives $\nabla$, $\overline \nabla$ by continuous derivatives $\partial$ and $\bar\partial$, and in particular \ref{DderivSmooth}.
In fact the best result we obtain so far concern  the ``simultaneous limit" when $\epsilon$ and $r$ go to zero, and is stated in the following proposition.

\begin{Prop}
\label{PropLimTr2D}
Let $F$ be a displacement function, $\mathcal{F}=\{\uT^{r}_0:r\in(0,1]\}$ a family of critical triangulations labelled by their isoradius $r$, and $\epsilon_\mathrm{b}'$ defined by \ref{epsbprimedef}. We consider the ``simultaneous limit" where
\begin{equation}
\label{ }
r\to 0\quad,\qquad \epsilon_a=\epsilon(r)=r\,\mathtt{c}_a\,\quad\text{with}\quad 0\le\mathtt{c}_a\le \epsilon_\mathrm{b}'\quad\text{for}\ a=1,2
\end{equation}
Assuming the validity of Conjecture~\ref{Conf3bound}, the bi-local term of \ref{BilocKaehler} converges uniformly towards its continuum limit given in Th.~\ref{TfinalOPElike1}.
\begin{equation}
\label{ }
\begin{split}
\lim_{\substack{r\to 0\\ \epsilon_1/r=\mathtt{c}_1 \\ \epsilon_2/r=\mathtt{c}_2}} 
&\Tr\left[\mathcal{D}'(\epsilon_1)\cdot\mathcal{D}_\mathrm{cr}^{-1}\cdot\mathcal{D}'(\epsilon_2)\cdot\mathcal{D}_\mathrm{cr}^{-1}\right]\ =\ \\
& {1\over \pi^2}\int_{\Omega_1}\! d^2 z_1 \int_{\Omega_2}\! d^2 z_2\ \left(
{\bar\partial F_1(z_1)\,\bar\partial F_2(z_2)\over (z_1-z_2)^4} + {\partial \bar F_(z_1)\,\partial \bar F_2(z_2)\over (\bar z_1-\bar z_2)^4} \right)
\end{split}
\end{equation}

\end{Prop}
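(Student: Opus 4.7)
The strategy is to recast the bilocal trace in the block form \ref{KprimeEps0}, exploit the exact cancellation $\mathcal{D}_\mathrm{cr}^{-1}=r^{2}\Delta_\mathrm{cr}^{-1}$, and reduce the simultaneous limit to the linearized bilocal term already controlled by the analysis of Section~\ref{ssKalOpSecOrd} and the OPE scaling of Theorem~\ref{TfinalOPElike1} for $\mathcal{D}$. Specifically, expressing $\mathcal{D}'(\epsilon_{a})$ on the back-deformed lattice $\uT^{r}_{\epsilon_{a}:{\scriptscriptstyle 0}}$ through \ref{KprimeEps0} yields
\begin{equation*}
\Tr\!\left[\mathcal{D}'(\epsilon_{1})\cdot \mathcal{D}_\mathrm{cr}^{-1}\cdot \mathcal{D}'(\epsilon_{2})\cdot \mathcal{D}_\mathrm{cr}^{-1}\right]\,=\, r^{4}\!\!\!\!\sum_{\uf_{1}\in \uT^{r}_{\epsilon_{1}:{\scriptscriptstyle 0}}}\sum_{\uf_{2}\in \uT^{r}_{\epsilon_{2}:{\scriptscriptstyle 0}}}\!\!\!A(\uf_{1})A(\uf_{2})\,\mathtt{T}(\uf_{1},\uf_{2};\epsilon_{1},\epsilon_{2}),
\end{equation*}
where $\mathtt{T}(\uf_{1},\uf_{2};\epsilon_{1},\epsilon_{2})$ is the trace of four $2\times 2$ matrices made of $\mathbb{F}'(\epsilon_{a};F_{a})(\uf_{a})$ and the discrete double derivatives of $\Delta_\mathrm{cr}^{-1}$. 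By Proposition~\ref{BoundOnDprime} the matrices $r^{2}\mathbb{F}'(\epsilon_{a};F_{a})$ are uniformly bounded in $r,\epsilon_{a}$ and $\uf_{a}$, so the $r^{4}$ prefactor exactly compensates the two $1/r^{2}$ singularities of $\mathbb{F}'$.

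The central reduction is the splitting $\mathbb{F}'(\epsilon_{a};F_{a})=\mathbb{F}'(0;F_{a})+\bigl[\mathbb{F}'(\epsilon_{a};F_{a})-\mathbb{F}'(0;F_{a})\bigr]$. The contribution of $\mathbb{F}'(0;F_{a})$ is precisely the linearized bilocal trace $\Tr[\deltaeo\mathcal{D}\cdot\mathcal{D}_\mathrm{cr}^{-1}\cdot\deltaet\mathcal{D}\cdot\mathcal{D}_\mathrm{cr}^{-1}]$ evaluated on $\uT^{r}_{0}$. By Proposition~\ref{The3} and the scaling argument of Section~\ref{sScalLimit} (using Lemma~\ref{lNGN} to replace $\mathcal{G}^{\nabla}_{\uf_{1}\uf_{2}}$ by the continuum $(z_{1}-z_{2})^{-n}$ kernels, Lemma~\ref{lemmabound} to replace $\nabla,\overline{\nabla}$ by $\partial,\bar{\partial}$, and the Riemann sum convergence $r^{4}\sum A(\uf_{1})A(\uf_{2})\to \iint d^{2}z_{1}\,d^{2}z_{2}$), this piece converges uniformly to the right-hand side of the claimed identity.

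It remains to show that the correction term is $o(1)$ in the simultaneous limit. I would decompose the face set of $\uT^{r}_{\epsilon_{a}:{\scriptscriptstyle 0}}$ into \emph{unflipped} faces (those shared with $\uT^{r}_{0}$) and \emph{flipped} faces. On an unflipped face, $\varepsilon\mapsto \mathbb{F}'(\varepsilon;F_{a})(\uf)$ is smooth, and direct differentiation of the explicit rational formula (together with the uniform bounds \ref{M1M2EpsBnd}--\ref{RR0epsbound}) yields
\[
\bigl\|\mathbb{F}'(\epsilon_{a};F_{a})(\uf)-\mathbb{F}'(0;F_{a})(\uf)\bigr\|\,\le\,\mathtt{C}\,\epsilon_{a}/r^{2}\,=\,\mathtt{C}\,\mathtt{c}_{a}/r,
\]
which, when inserted into the above expression, produces a contribution whose leading $r$-scaling is one power weaker than the main term — after combining with the $1/r$ factor gained from \emph{one} additional derivative applied to the Green function (or equivalently, from a residual length scale comparison), this vanishes as $r\to 0$. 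For flipped faces one uses that the flip condition \ref{thetaepsilon} on a cyclic quadrilateral of $\uT^{r}_{0}$ defines a codimension-one subset of $(z,\varepsilon)$; for $\varepsilon\in[0,r\mathtt{c}_{a}]$ the total area of flipped faces in any compact region is at most $\mathrm{O}(r)$, and since $\|\mathbb{F}'\|$ remains bounded by $\mathtt{C}/r^{2}$ on these faces, their contribution is at most $r^{4}\cdot r/r^{2}\cdot r/r^{2}=\mathrm{O}(1/r)$...

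\textbf{Main obstacle.} The genuine difficulty is this last point: controlling flipped faces requires a uniform geometric statement — namely, that along the family $\mathcal{F}=\{\uT_{0}^{r}\}$ the density (per unit plane area) of nearly-cocyclic quadrilaterals whose configuration angle $\theta_{\epsilon}\bigl(\uu,\uv;\un,\us\bigr)$ falls within $[-r\mathtt{c},r\mathtt{c}]$ is itself $\mathrm{O}(r)$. This is analogous to the uniform estimates needed in Conjecture~\ref{Conf3bound} for $\Delta$, but here it is the codimension of flip loci rather than of coincidences of $p_{3}$-phases that matters. Under a mild regularity hypothesis on $\mathcal{F}$ guaranteeing this flip-density bound (satisfied e.g.\ by families built from iterated isoradial refinement of a fixed $\uT_{0}^{1}$, or by a uniform lower bound on $\check{\vartheta}(\uT^{r}_{0})$ as in Lemma~\ref{epsFbound2}), the correction term is $o(1)$, yielding the uniform convergence asserted in Proposition~\ref{PropLimTr2D}.
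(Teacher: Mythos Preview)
The paper does not supply an explicit proof of Proposition~\ref{PropLimTr2D}; it is stated at the end of Section~\ref{ssScalingKahler} as the ``best result obtained so far'' following the bounds of Propositions~\ref{BoundOnDprime}--\ref{NoLimitDprime}. So your proposal must be measured against what the surrounding discussion suggests rather than a written argument.

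Your overall architecture---pass to the block form~\eqref{KprimeEps0}, use $\mathcal{D}_\mathrm{cr}^{-1}=r^{2}\Delta_\mathrm{cr}^{-1}$, and reduce to the linearized trace of Proposition~\ref{The3}---is sound, and the power-counting on \emph{unflipped} faces is correct: $r^{2}\mathbb{F}'$ is uniformly $O(1)$ and its $\epsilon$-derivative is also $O(1)$, so the correction $r^{2}[\mathbb{F}'(\epsilon_a)-\mathbb{F}'(0)]=O(\epsilon_a)=O(r\mathtt{c}_a)$ is one power of $r$ below the main term and vanishes in the limit. (Your sentence about ``a $1/r$ factor gained from one additional derivative applied to the Green function'' is spurious; no extra derivative is needed, and you should delete it.)

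There are two genuine gaps. First, your identification of the $\mathbb{F}'(0)$ piece with the linearized bilocal trace \emph{on $\uT^{r}_{0}$} is only correct on unflipped faces: $\mathbb{F}'(0)(\uf)$ is computed on faces of the back-deformed lattice $\uT^{r}_{\epsilon_a:0}$, which need not be faces of $\uT^{r}_{0}$, and Proposition~\ref{The3} is stated for isoradial graphs. Second---and this is the part you flag yourself---your flipped-face estimate is too weak: your own count gives a contribution of order $1/r$, and the ``codimension-one flip locus'' heuristic does not hold uniformly over an \emph{arbitrary} family $\mathcal{F}$, so the regularity hypotheses you introduce at the end are extra assumptions not present in the statement.

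The approach implicit in the paper avoids the flipped/unflipped dichotomy altogether. Rather than splitting $\mathbb{F}'(\epsilon)=\mathbb{F}'(0)+[\cdots]$, one exploits the \emph{internal} block structure of $\mathcal{D}'$ in~\eqref{KahlerPrime}: the non-convergent geometry (the phases $C_\epsilon$) sits entirely in the $\textswab{K}_\epsilon$ block, which is sandwiched between $\bar\nabla^{\top}$ and $\nabla$ and therefore in the bilocal trace pairs only with $[\nabla\Delta_\mathrm{cr}^{-1}\bar\nabla^{\top}]=O(d^{-3})$ from Lemma~\ref{lNGN}. These contributions are subdominant ($O(d^{-5})$ or $O(d^{-6})$) and vanish in the scaling limit regardless of whether $C_\epsilon$ converges. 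The surviving $\textswab{H}_\epsilon$, $\bar{\textswab{H}}_\epsilon$ blocks contain no $C$; the radius bound~\eqref{RR0epsbound} gives $(r/R_\epsilon)^{2}=1+O(\epsilon_a)=1+O(r\mathtt{c}_a)$ \emph{on every face of $\uT^{r}_{\epsilon_a:0}$, flipped or not}, and Proposition~\ref{PropLimDpr} then yields $r^{2}\textswab{H}_{\epsilon_a}(\uf)\to -4\,\bar\partial F_a$ uniformly. This reproduces the limit in the statement without any separate accounting for flips. Note that, as with Proposition~\ref{PropLimTr2}, controlling the discrete Green-function derivatives on the back-deformed lattice still implicitly leans on the estimate of Conjecture~\ref{Conf3bound}; the paper does not restate this caveat for Proposition~\ref{PropLimTr2D}, but the dependence is the same.
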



\section{Discussion and Perspectives}
\label{sDiscussion}

\subsection{The aim of the study}
\label{sDiscusAim}\ \\
In this work we study properties of the measure on planar graphs introduced by \cite{DavidEynard2014} in order to better understand the relationship between this discrete model and continuum models of random geometries on the plane arising from Conformal Field Theories (CFT), in particular Quantum Liouville Theory. 
The model is defined as an integral over the space of all Delaunay graphs of the plane. 
We do not study as a whole the global properties of this integral and its associated measure. 
Rather, we study the measure in the neighborhood of very specific graphs, namely isoradial Delaunay graphs. 
Our motivation is twofold 
: (i) isoradial graphs can be viewed as a discretization of flat geometry, so this should amount to some ``semiclassical limit''; (ii) deforming the geometry is a way to introduce a stress-energy tensor into the statistical model, whose properties are crucial for conformal theories.

The measure of the model is a K\"ahler measure (in fact equivalent to the Weil-Petersson measure) and its density can be written as the determinant of a Laplacian-like K\"ahler operator $\mathcal{D}$ (defined on the Delaunay graphs), with specific global conformal invariance properties under PSL($2,\mathbb{C}$) transformations. In order to compare our result with other cases, we study in parallel the K\"ahler operator $\mathcal{D}$, the ordinary discrete Laplace-Beltrami operator $\Delta$ (which is not PSL($2,\mathbb{C}$) invariant), and the conformal Laplacian $\Deltaconf$
which, like $\mathcal{D}$, also enjoys a global PSL($2,\mathbb{C}$) invariance property.
 
\subsection{The first-order variations and discretized CFT}
\label{sDiscus1stO}
\subsubsection{The Laplace-Beltrami operator $\Delta$}\ \\ 
The calculation for the first order variation for the discretized Laplace-Beltrami $\Delta$ is easy to discuss in the framework of discretized CFT on the lattice. We refer to Appendix~\ref{aCFTreminder} for a reminder of the definitions and properties of CFT which are needed in this discussion.
Our  result \ref{varTrLlLB2} in Prop.~\ref{T1stVar} states that

\begin{equation}
\label{varTrLlLB3}
\deltae \log\det(\Delta)=
- \!\!\sum_{\substack{\scriptstyle \mathrm{faces} \\ \ \ \ \uf \, \in \, \widehat{\uG}_{0^+}}}\!\! 
4\,A(\uf)\Big( \overline\nabla\! F(\uf) \,  Q(\uf) + \nabla\! \overline{F}(\uf) \,  \overline{Q}(\uf)    \Big) 
\end{equation} 
with
\begin{equation}
\label{Kexplicit}
Q(\uf) := \, [\nabla\Delta^{-1} \nabla^\top]_{\uf \uf} \, = \, \sum_{\uu, \uv} \, 
\nabla_{\! \uf \uu }\nabla_{\! \uf \uv}\,\-{[\Delta_\mathrm{cr}^{-1} ]}_{\uu \uv}
\end{equation}
\noindent
and where the operators $\nabla, \overline{\nabla}$ are defined in
formulae (\ref{nablaDef}) and (\ref{barnablaDef}) respectively.
Equation (\ref{varTrLlLB3}) can be read as the discretized version of the first order variation of the partition function 
under a diffeomorphism for a CFT (see \ref{1stVarLZcomplexA}) given by 
\begin{equation*}
\label{1stVarLZcomplex}
\begin{split}
\deltae \log(Z)=
&-{1 \over \pi}\int d^2x \, \Big( \, \overline{\partial}  F(x)\, \langle T(x)\rangle
 + \, \partial \overline{F}(x)\,  \langle \overline{T}(x) \rangle \Big)
\end{split} 
\end{equation*}
where the sum over faces discretizes the 
integral over the plane, and the derivatives $\nabla\bar F$ and $\overline{\nabla} F$ serve as discrete versions of $\partial\bar F$ and $\bar\partial F$. 

\begin{equation}
\label{Discr2Cont}
\sum_\uf 
 A(\uf)\ \leftrightarrow\ \int d^2x
 \ ,\quad
\nabla\bar F\ \leftrightarrow\ \partial \bar F
 \ ,\quad
\overline{\nabla} F\ \leftrightarrow\ \bar\partial F
\end{equation}
{The term $Q(\uf)$ is given by the vacuum expectation value (v.e.v.) 
\begin{equation}
\label{KtoTLB}
4\pi\,Q(\uf)=\langle T_{_{\!\Delta}}(\uf)\rangle
\end{equation}
of a discretized stress-energy tensor $T_{_{\!\Delta}}$
for a theory with Grassmann fields $(\Phi,\bar\Phi)$ attached to the vertices of the triangulation 
$\uG_\mathrm{cr}$ with discretized action $S$
\begin{equation}
\label{SLBDiscr}
S[\Phi,\bar\Phi]=\Phi{\cdot}\Delta\bar\Phi=\sum_{\stackrel{\scriptstyle \mathrm{vertices}}{\ \, \uu, \uv \, \in \, 
\uG_\mathrm{cr}}} \Phi_{\uu} \Delta_{\uu \uv} \bar\Phi_{\uv}
\end{equation}
and where
\begin{equation}
\label{TLBDiscr} 
T_{_{\!\Delta}}(\uf)=-4\pi \nabla\Phi(\uf) \nabla\bar\Phi(\uf)= - 4 \pi \sum_{\uu, \uv \, \in \, \uf} \nabla_{\! \uf \uu}\Phi_{\uu} \, \nabla_{\! \uf \uv}  \bar\Phi_{\uv}
\end{equation}
for a face (triangle) $\uf$ of the triangulation $\uG_\mathrm{cr}$.} 

Note that this definition \ref{TLBDiscr} for the discrete stress-energy tensor follows directly from \ref{SLBDiscr} and the variation of the discrete Laplace-Beltrami operator $\Delta$ given by Prop.~\ref{PVarDelta} and eq.~\ref{VarDelta}.

The above discussion is valid regardless of whether 
we consider the variation of the Laplace-Bletrami operator defined on an isoradial Delaunay graph $\uG_\mathrm{cr}$ 
or instead on a general Delaunay graph $\uG$. 
Indeed, \ref{TLBDiscr} follows from the general equation \ref{VarDelta} for the variation of $\Delta$ on generic triangulations. 
Note also that the absence of a $\nabla F+\overline\nabla\bar F$ term in the variation of $\Delta$ means $\mathrm{Tr}(\mathbf{T})=T^{z\bar z}=T^{\bar z z}$ is zero, and  that the discrete Laplace-Beltrami operator $\Delta$ has a discrete conformal invariance property.

The interesting result, relevant for the discussion here, is that for an isoradial Delaunay graph $\uG_\mathrm{cr}$ the term $Q(\uf)$, i.e. the v.e.v. of the discretized stress-energy tensor $T$, depends only on the local geometry of the graph,  i.e. on the shape of the triangle $\uf$, as stated in prop.~\ref{T1stVar}. This is not true when $\uG$ is not isoradial; in that case, $\langle T(\uf) \rangle$ will depend on the full geometry of the lattice.

\subsubsection{The K\"ahler operator  $\mathcal{D}$}
\ \\
The first order variation for the K\"ahler operator $\mathcal{D}$ is given by \ref{varTrLlLK1} in Prop.~\ref{T1stVarD}. 
The first term in \ref{varTrLlLK1} is the same as the first order variation for $\Delta$ in \ref{varTrLlLB1}, which is rewritten in \ref{varTrLlLB3} as a sum over the triangles of the lattice involving the discrete derivatives of the deformation $\overline\nabla F$ and $\nabla\bar F$.
The second term in  \ref{varTrLlLK1} involves the first order variation $\deltae R(\uf)$ of the circumradii $R(\uf,\epsilon)$ of a face, which can be obtained from \ref{var1A} and \ref{var1AR2}. 
The final result is
\begin{equation}
\label{var1D2}
\deltae \log \det (\mathcal{D}) =
- \sum_{\substack{\scriptstyle \mathrm{faces} \\ \ \ \uf \in \widehat{\uG}_{0^+}}}
\left(\Big(4\, A(\uf) Q(\uf)+{1\over 2} C(\uf)\Big) \overline\nabla\! F(\uf) +{1\over 2} \nabla\! F(\uf)\right) + \mathrm{c.c.}
\end{equation}
with the geometrical factor $C(\uf)$  for a triangle $\uf$ given by \ref{CDef}, 
while $Q(\uf)$ is given by \ref{Kexplicit}, and corresponds to the v.e.v. of the discretized stress energy tensor $T_{_{\!\Delta}}(\uf)$ defined by \ref{TLBDiscr} for the Laplace-Beltrami theory.

Like for the Laplace-Beltrami operator, the variation \ref{var1D2} can be written in term of a discretized stress-energy tensor $\mathbf{T}_{\scriptscriptstyle{\mathcal{D}}}$ for a theory with discretized action
\begin{equation}
\label{SKDiscr}
S_{\scriptscriptstyle{\mathcal{D}}}[\Phi,\bar\Phi]=\Phi{\cdot}\mathcal{D}\bar\Phi
\end{equation}

\begin{equation}
\label{var1LDKdiscr}
\begin{split}
\deltae \log\det(\mathcal{D})=\tr\left[\deltae \mathcal{D}\cdot\mathcal{D}^{-1}\right]=&-{1 \over\pi} \sum_{\uf} A(\uf) \left(\overline\nabla F(\uf)\, \langle T_{\!\scriptscriptstyle{\mathcal{D}}}(\uf)\rangle+ \nabla \bar F(\uf)\, \langle \bar T_{\!\scriptscriptstyle{\mathcal{D}}}(\uf)\rangle\right)\\
&+{1 \over 2}\sum_{\uf} A(\uf) \left(\nabla F(\uf)+\overline\nabla\bar F(\uf)\right)  \langle \tr \mathbf{T}_{\!\scriptscriptstyle{\mathcal{D}}}(\uf) \rangle 
\end{split}
\end{equation}
where the components of the discretized stress-energy tensor are
\begin{equation}
\label{TKdiscr}
\begin{split}
 T_{\!\scriptscriptstyle{\mathcal{D}}}&= -\, 4\pi\, {1\over R^2}\,\left( \nabla\Phi\,\nabla\bar\Phi+ C\, \overline\nabla \Phi\ \nabla\bar\Phi\right) \\
 \bar T_{\!\scriptscriptstyle{\mathcal{D}}}&= -\, 4\pi\, {1\over R^2}\,\left( \overline\nabla\Phi\,\overline\nabla\bar\Phi+ \bar C\, \overline\nabla \Phi\ \nabla\bar\Phi\right) \\
 \tr \mathbf{T}_{\!\scriptscriptstyle{\mathcal{D}}}&=\ \ \ \ \, 8 \,   {1\over R^2}\,\left( \overline\nabla\Phi\,\nabla\bar\Phi\right)
\end{split}
\end{equation}
One should note the nonzero term $(\overline\nabla F+\nabla\bar F)/2$  in \ref{var1D2} and the non-vanishing of the v.e.v. of the trace of a discrete stress-energy tensor $\tr\mathbf{T}_{_\mathcal{D}}$.
This follows from the fact that the dimension of the matrix elements of $\mathcal{D}$ is $\mathtt{length}^{-2}$.

The definition \ref{TKdiscr} and the variation formula \ref{var1LDKdiscr} remain valid if we replace the isoradial Delaunay graph $\uG_\mathrm{cr}$ by a generic Delaunay graph $\uG$. The additional term $C(\uf)$ in \ref{TKdiscr}, which depends explicitly on the local geometry of the graph in the neighborhood of the triangle $\uf$. 
This term cannot be written simply in the continuum limit $\ell\to\infty$ in terms of continuous derivatives $\partial$ and $\bar\partial$ of a ``smooth'' complex Grassmann field $\Phi(x)$ in the flat continuum plane $\mathbb{R}^2$. This implies that $\mathbf{T}_{_{\!\mathcal{D}}}$ has no direct interpretation in a continuum field theory setting, in contrast with $\mathbf{T}_{\!\scriptscriptstyle{\Delta}}$.
\color{black}

Again, the interesting explicit local form given in  Prop.~\ref{T1stVarD} and in Remark~\ref{rLDKvar} are only valid  for the variation of an isoradial Delaunay graph
$\uG_\mathrm{cr}$.

\subsubsection{The conformal Laplacian $\Deltaconf$}
\label{sssConfDeltaT}\ \\
The result given by Prop.~\ref{T1stVarC} for $\Deltaconf$ admits a similar interpretation. Again the absence of a 
$\nabla F+\bar\nabla\bar F$ term signals the conformal invariance of $\Deltaconf$, which in this case is ensured from start, before one takes the scaling limit. 
The first order variation can still be written as a sum over triangles, of the form
\begin{equation}
\label{var1DeltaConf2}
\deltae \log \det(\Deltaconf) =
- \sum_{\stackrel{\scriptstyle \mathrm{faces}}{\ \ \, \uf \in \widehat{\uG}_{0^+}}}  4\,A(\uf)
\Big( \overline\nabla\! F(\uf) Q_{_\mathrm{conf}}(\uf) + \mathrm{c.c.}\Big) 
\end{equation}
but now the local face term $Q_{_\mathrm{conf}}(\uf)$ differs from $Q(\uf)$ when one or several of the edges of 
the triangle $\uf$ are chords, owing to the additional terms in \ref{varTrLlC1}.
More precisely, the contribution for a chord can be separated into equal contributions for its adjacent ``north'' and ``south'' triangles, so that one writes
\begin{equation}
\label{ }
A(\uf)Q_{_\mathrm{conf}}(\uf)= A(\uf)Q(\uf)+H_{_{\mathrm{anom}}}(\uf)
\end{equation}
with the anomalous term $H_{_{\mathrm{anom}}}(\uf)$ for a (counter-clockwise oriented) face $\uf$ expressed as a sum over 
its (oriented) edges $\vec{\ue}$ which are chords
\begin{equation}
\label{ }
H_\mathrm{\scriptscriptstyle anom}(\uf):= \ \sum_{\stackrel{\scriptstyle \text{chords}}{ \vec{\ue} \, \in \, \partial \uf }}  H( \vec{\ue} , \uf)
\quad \text{with} \quad 
H(\vec{\ue},\uf) := \ {1\over 8 \pi \imath } \, \theta_\mathrm{n} (\vec{\ue} \, ) \cot \theta_\mathrm{n} (\vec{\ue} \,) \,
\mathcal{E}_\mathrm{n}(\vec{\ue} \,)
\end{equation}

\noindent
and where $\mathcal{E}_\mathrm{n}(\vec{\ue} \, )$ is defined 
in \ref{mathcalE-terms}. These explicit results are valid when deforming an isoradial Delaunay graph $\uG_\mathrm{cr}$.

Again, for a deformation of a generic triangulation $\uG_\mathrm{cr}$, the variation \ref{var1DeltaConf2} can be written in terms of a discretized stress-energy tensor $\mathbf{T}_{\!\scriptscriptstyle{\Deltaconf}}$ a theory for a Grassmann field $(\Phi,\bar\Phi)$ with action $S_{_\mathrm{conf}}=\Phi{\cdot}\Deltaconf \bar\Phi$
\begin{equation}
\label{var1ConfDiscr}
\begin{split}
\deltae \log\det(\Deltaconf)=&-{1 \over\pi} \sum_{\uf} A(\uf) \left(\overline\nabla F(\uf) \langle T_{\!\scriptscriptstyle{\Deltaconf}}(\uf)\rangle+ \nabla \bar F(\uf) \langle \bar T_{\!\scriptscriptstyle{\Deltaconf}}(\uf)\rangle\right)\\
&+{1 \over 2}\sum_{\uf} A(\uf) \, \left(\nabla F(\uf)+\overline\nabla\bar F(\uf)\right) \, \langle \tr(\mathbf{T}_{\!\scriptscriptstyle{\Deltaconf}}(\uf))\rangle 
\end{split}
\end{equation}
One has generically $\tr(\mathbf{T}_{\!\scriptscriptstyle{\Deltaconf}})=0$ (conformal invariance).
The discretized analytic and anti-analytic components $T_{\!\scriptscriptstyle{\Deltaconf}}$ and $\bar T_{\!\scriptscriptstyle{\Deltaconf}}$ can be written explicitly, using Section \ref{ssVarOp} and in particular \ref{VarThetaNS} in Remark~\ref{rVArConf}.
We get a generic form for $T_{\!\scriptscriptstyle{\Deltaconf}}$ involving all possible binomials of discrete derivatives of the fields
\begin{equation}
\label{TconfDiscrSchem}
T_{\!\scriptscriptstyle{\Deltaconf}}= \text{\cminfamily{A}}\,\nabla\Phi\,\nabla\bar\Phi+\text{\cminfamily{B}}\,\nabla\Phi\,\overline\nabla\bar\Phi+\text{\cminfamily{C}}\,\overline\nabla\Phi\nabla\bar\Phi+\text{\cminfamily{D}}\,\overline\nabla\Phi\,\overline\nabla\bar\Phi
\end{equation}
The coefficients $\text{\cminfamily{A}}(\uf)$, $\text{\cminfamily{B}}(\uf)$, $\text{\cminfamily{C}}(\uf)$ and $\text{\cminfamily{D}}(\uf)$ depend not only on the geometry of the triangle $\uf$, but also of its three neighbouring triangles $\uf'$, $\uf''$ and $\uf'''$, since they depend explicitly of the conformal angles of the three edges $\ue'$, $\ue''$ and $\ue'''$ of $\uf$. See Fig.~\ref{fFace&Neighbours}.
Like $\mathbf{T}_{_{\mathrm{\!\Delta}}}$, the discrete stress-energy tensor $\mathbf{T}_{\!\scriptscriptstyle{\Deltaconf}}$ 
is quadratic in the local derivatives of the fields $(\Phi,\bar\Phi)$. However, it involves not only the term $\nabla\Phi\,\nabla\bar\Phi$ but three other terms.
Furthermore the coefficient $\text{\cminfamily{A}}(\uf)$ of the $\nabla\Phi\,\nabla\bar\Phi$ term is nonconstant and depends on the geometry $\uf$ and its neighbours.

\begin{figure}[h]
\begin{center}
\includegraphics[width=2in]{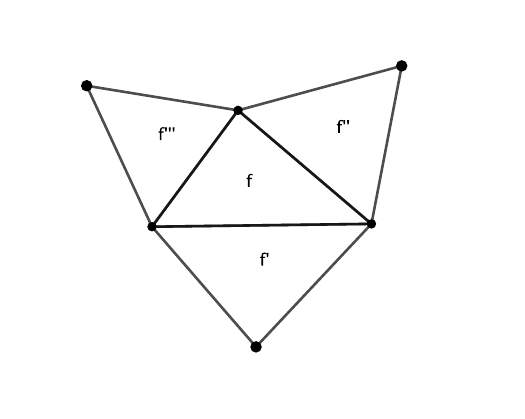}
\caption{A face $f$ (triangle) and its three neighbours}
\label{fFace&Neighbours}
\end{center}
\end{figure}

\subsection{The second order variations and discretized CFT}\ \\
\label{ssDi2ndOr}
We now discuss along the same line our result for the second order variation and its scaling limit.

\color{blue}
\subsubsection{The Laplace-Beltrami operator $\Delta$}
\label{Disc2ndDelta}
\color{black}
Here we consider the Beltrami-Laplace operator of
the Delaunay graph $\uG_\epsilon$ obtained through a bi-local deformation 
$z_{\underline{\epsilon}}(\uv) := z_\mathrm{cr}(\uv) + \epsilon_1 F_1(\uv) + \epsilon_2 F_2(\uv)$
of the critical embedding of an isoradial Delaunay graph $\uG_\mathrm{cr}$. 
The $\epsilon_1 \epsilon_2$ cross-term of $\log \det \Delta(\underline{\epsilon})$ can be calculated exactly using Proposition \ref{VarDelta}
and expressed using the limit graph $\uG_{0^+}$ and any weak Delaunay triangulation $\widehat{\uG}_{0^+}$ 
which completes it. This gives
\begin{equation}
\label{2nd-variation-log-det}
\begin{array}{l}
\deltaeo \, \deltaet \, \log \det \Delta 
\D = \ - \, \tr \Big[ \deltaeo \Delta \cdot \Delta_\mathrm{cr}^{-1} \cdot \deltaet \Delta \cdot \Delta_\mathrm{cr}^{-1} \Big] \\ 
\D = \ - 64\, \tr \Bigg[ \frak{Re} \Big[ \overline{\nabla}^\top (\nabla \bar{F}_1) A \overline{\nabla} \Big] \cdot \Delta_\mathrm{cr}^{-1}
\cdot \frak{Re} \Big[ \overline{\nabla}^\top (\nabla \bar{F}_2) A \overline{\nabla} \Big] \cdot \Delta_\mathrm{cr}^{-1} \Bigg] \\ 
\D = \ \left\{
\begin{array}{c}
\D \underset{\substack{\text{triangles} \\
\ \ \ \, \ux_1, \ux_2 \, \in \, \widehat{\uG}_{0^+} }}{- \sum}
32 \, A(\ux_1) A(\ux_2) \ \frak{Re} \Bigg[ \overline{\nabla}F_1(\ux_1) \overline{\nabla}F_2(\ux_2) 
\left(\big[ \nabla \Delta_\mathrm{cr}^{-1} \nabla^\top  \big]_{\ux_1 \ux_2}\right)^2 \Bigg] \\
  + \\ 
\D \underset{\substack{\text{triangles} \\
\ \ \ \, \ux_1, \ux_2 \, \in \, \widehat{\uG}_{0^+} }}{- \sum}
32 \, A(\ux_1) A(\ux_2) \ \frak{Re} \Bigg[ \overline{\nabla}F_1(\ux_1) \nabla \bar{F}_2(\ux_2) 
\left(\big[ \nabla \Delta_\mathrm{cr}^{-1} \overline{\nabla}^\top  \big]_{\ux_1 \ux_2}\right)^2 \Bigg] 
\end{array} \right.
\end{array}
\end{equation}
\noindent
Using formula \ref{TLBDiscr} for the discrete stress-energy tensor $T_\Delta$ and 
applying Wick's theorem we can express the two-point v.e.v.'s  
\begin{equation}
\label{discrete-two-point-correlators}
\begin{split}
\D {1 \over {32 \pi^2}} \, \Big\langle T_\Delta(\ux_1) \, T_\Delta(\ux_2) \Big\rangle_\mathrm{conn.}
&= \ \left(\big[ \nabla \Delta_\mathrm{cr}^{-1} \nabla^\top  \big]_{\ux_1 \ux_2}\right)^2 \\
\D {1 \over {32 \pi^2}} \, \Big\langle T_\Delta(\ux_1) \, \overline{T}_\Delta(\ux_2) \Big\rangle_\mathrm{conn.}
&= \ \left(\big[ \nabla \Delta_\mathrm{cr}^{-1} \overline{\nabla}^\top  \big]_{\ux_1 \ux_2}\right)^2
\end{split}
\end{equation}
\noindent
and the c.c.
So far we do not require the initial graph to be isoradial: 
We may in fact replace the critical graph $\uG_\mathrm{cr}$ with any Delaunay graph $\uG_0$ 
equipped with its corresponding Beltrami-Laplace operator $\Delta_0$ and Green's function $\Delta_0^{-1}$, 
and the variational formula \ref{2nd-variation-log-det} and the double correlator identity \ref{discrete-two-point-correlators}
remain valid. If, however, we incorporate a scaling parameter $\ell > 0$ and 
consider the bi-local smoothly deformed embedding
$z_{\underline{\epsilon}, \ell}(\uv) := z_\mathrm{cr}(\uv) + \epsilon_1\, \ell\, F_{1;\ell}(\uv) + \epsilon_2\, \ell \,F_{2; \ell}(\uv)$
then the isoradial property (as manifest in the asymptotic expansion \ref{SharpAsymptotics1} for the critical Green's function $\Delta_\mathrm{cr}^{-1}$) 
is sufficient to establish the convergence of the scaling limit of formula \ref{2nd-variation-log-det},
which is consistent with the OPE of a CFT with the expected central charge $c=-2$, namely
\begin{equation}
\label{scaling-limit-2nd-variation-log-det}
\lim_{\ell \rightarrow \infty} \, 
\deltaeo \, \deltaet \, \log \det \Delta(\ell)  \ = \  
{c\over {\pi^2}}\iint_{\Omega_1\times\Omega_2} dx^2_1 \, dx^2_2 \ \, \frak{Re} \Bigg[ {\bar\partial F_1(x_1)\,\bar\partial F_2(x_2)\over (x_1-x_2)^4}  \Bigg]
\end{equation}
As we have seen $\nabla \Delta_0^{-1} \nabla^\top$ and $\nabla \Delta_0^{-1} \overline{\nabla}^\top $
(and their complex conjugates) must decay in accordance with Lemma \ref{lNGN} in order
for \ref{scaling-limit-2nd-variation-log-det} to hold.
Our result is, of course, not surprising, and should be viewed as a check of the validity of our approach.

\subsubsection{The K\"ahler operator  $\mathcal{D}$}
\label{disc2ndKahler}
Prop.~\ref{The3} and its scaling limit given in Section~\ref{sScalLimit} are the novel results of the paper.
They state that the scaling limit of the bi-local second order variation for $\log \det  \mathcal{D}(\underline{\epsilon} \, , \ell)$
and  $\log \det  \Delta (\underline{\epsilon} \, , \ell)$ are identical. 
\begin{equation}
\label{ }
-{1\over \pi^2}\iint_{\Omega_1\times\Omega_2} \!\!\!\! dx_1\,dx_2 \left( {\bar\partial F_1(x_1)\,\bar\partial F_2(x_2)\over (x_1-x_2)^4} + {\partial \bar F_1(x_1)\,\partial \bar F_2(x_2)\over (\bar x_1-\bar x_2)^4}\right) 
\end{equation}
This result is interesting for two reasons.

The $\mathcal{D}$ operator has a different form and even a different scaling dimension than $\Delta$. Its variation \ref{VarD} and the associated stress-energy tensor \ref{TKdiscr} are different. However the second order variation has exactly the same OPE as the second order variation for $\Delta$, and it corresponds to a CFT with the same central charge $$c=-2\ .$$ 

This value for the central charge is in our opinion somehow unexpected, and this is interesting per se. 
Indeed it was suggested by one of us (F.D.) in the original paper \cite{DavidEynard2014} that the measure over triangulations given by $\det(\mathcal{D})$ (later shown in \cite{CharbonnierDavidEynard2019} to coincide with the Weil-Petersson metric over marked complex curve), had a direct relation with the gauge fixing Fadeev-Popov determinant in two-dimensional quantum gravity. If true, it should be related to the so-called \texttt{b-c} ghosts system in Polyakov's formulation as  Liouville theory of 2D gravity and non-critical strings (see \cite{Friedan1982}). 
Then one could have expected a different value for the central charge, since the central charge for the \texttt{b-c} system is $c=-26$, and the central charge for the corresponding Liouville quantum gravity (at $Q=5/\sqrt{6}$ i.e. $\gamma=\sqrt{8/3}$) is $c=26$.

\subsubsection{The conformal laplacian $\Deltaconf$}\ \\
\label{disc2ndConf}
For the conformal Laplacian operator $\Deltaconf$, we do not have such a simple result, and the corresponding OPE cannot be interpreted as coming from a CFT. 
There are additional contributions that come from the chords, which have been studied in section~\ref{ssConfLapl2nd}, and are the \texttt{chord-chord} term given by \ref{chord2chord} and the \texttt{chord-edge} term given by \ref{chord2edge}.
The later \texttt{chord-edge} term has the expected harmonic form (depending only on $(x-x')^{-4}$ and its c.c.), but with  a local geometry dependent coefficient involving both $\overline\nabla\! F_1 \overline\nabla\! F_2$ and $\nabla\! \bar F_1 \overline\nabla\! F_2$ terms. 
The \texttt{chord-chord} term is even more involved and contains a non-harmonic term, proportional to $|x-x'|^{-4}$, with a more complicated geometrical dependence in the geometry of the faces and the chords.
In Appendix~\ref{tiling-by-cocyclic-quad} we give an explicit example of a critical lattice with a finite density of chords where these additional ``anomalous'' terms give a macroscopic anomalous contribution to the second order variation, which precludes an interpretation in terms of conformal field theory in the scaling limit.
Of course this comes from the anomalous terms in the expression of the discretized stress-energy tensor $T_{\scriptscriptstyle{\mathrm{\!conf}}}$ (of general schematic form given in \ref{TconfDiscrSchem}), which does not have a simple universal field theoretical interpretation in the scaling limit. This is also a new - although somehow negative - result.

\subsection{Relations and differences with other discrete models}
\label{ssRelDiff}\ \\
The operators that we study here are defined on planar isoradial Delaunay graphs. 
Isoradial graph embeddings play a very important role in the study of two-dimensional models of statistical mechanics in theoretical physics and in mathematics. 
In particular they are an essential tool in the proof of the conformal invariance of the Ising model at its critical point, and in the study of the conformal invariance of other critical models. They are very important in our study too, since they afford control of the large distance properties of the respective Green’s functions.

However, we stress that there is an important difference in terms of perspective. 
In studies of critical statistical models on such graphs, the underlying graph is fixed
and the proofs of the existence of a scaling limit and of its conformal invariance are undertaken for a fixed lattice. 
The random triangulation model of \cite{DavidEynard2014} is a statistical model \emph{of} planar graphs, rather than a statistical model \emph{on a} planar graph. 
The planar isoradial graphs that we consider here are just some special ``semi-classical'' configurations, which minimize a ``local curvature functional'', as discussed in the Sect.~\ref{s3RhombicG}, Formula~\ref{scalar-curvature}.

There are nevertheless relations between our work and some recent works, especially in regard to defining a notion of a discrete stress-energy tensor. 
Let us briefly discuss two of them.

\subsubsection{Discrete stress-energy tensor in the loop model of Chelkak et al.}
\label{sssChelkak}

In \cite{Chelkak-Glazman-Smirnov2018} Chelkak, Glazman and Smirnov study the famous critical O($n$) loop model \cite{Domanyetal1981} \cite{NienhuisDG} \cite{Kostov1989} on abstract discrete surfaces with boundaries (denotes $G_\delta$) made by gluing together equilateral triangles $\triangle$ and rhombs $\lozenge(\theta)$ 
of unit length $\delta$ where each rhomb has an independent acute angle $\theta$ selected in the range $0 < \theta \leq  {\pi \over 2}$, as depicted in Fig.~\ref{Triangle+Rhomb}.
The surface has in general conical singularities at all of its vertices.
A discrete surface may admit more than one tessellation into triangles and rhombs if some vertices are flat (no conical defect). Two tessellations are equivalent (i.e. they describe the same surface) if one can be transformed into the other by applying a sequence of the following three kinds of local operations: 
(i) \emph{Yang-Baxter transformations} which flips a flat hexagon made up of three rhombs sharing a common vertex, (ii) \emph{pentagonal transformations}
which interchange a triangle and a rhomb which form a flat pentagon with a triangle and two rhombs, (iii) \emph{split transformations} which dissect a rhomb 
$\lozenge({\pi \over 3})$
into a pair of equilateral triangles sharing a common edge; this is depicted in Fig.~\ref{YG+Pent}.

\begin{figure}[h]
\begin{center}
\includegraphics[height=.8in]{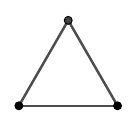}
\qquad
\includegraphics[height=.8in]{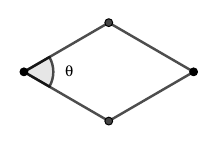}
\caption{The triangles and rhombs of \cite{Chelkak-Glazman-Smirnov2018}}
\label{Triangle+Rhomb}
\end{center}
\end{figure}

\begin{figure}[h]
\begin{center}
\includegraphics[height=1.in]{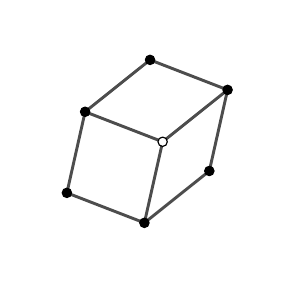}
\hskip -1.em
\raisebox{.4 in}{$\longleftrightarrow$}
\hskip -1.em
\includegraphics[height=1in]{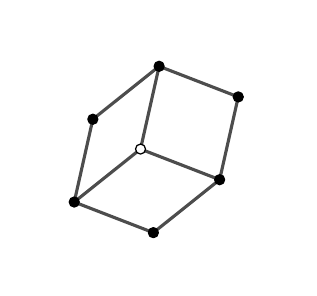}
\raisebox{.4 in}{\ ,\ }
\includegraphics[height=1.2in]{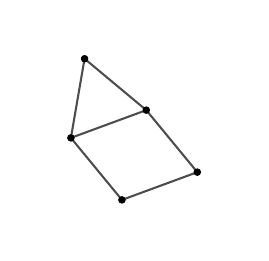}
\hskip -1.em
\raisebox{.4 in}{$\longleftrightarrow$}
\hskip -1.em
\includegraphics[height=1.2in]{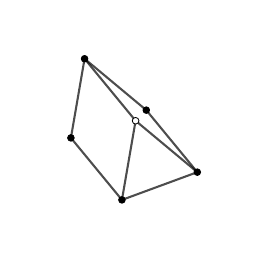}
\includegraphics[height=.55in]{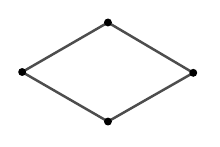}
\hskip -.1em
\raisebox{.25 in}{$\longleftrightarrow$}
\hskip -.1em
\includegraphics[height=.55in]{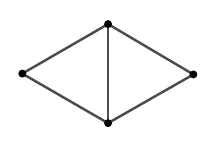}
\caption{Yang-Baxter, pentagonal and split moves of \cite{Chelkak-Glazman-Smirnov2018}; white vertices {\tiny{$\ocircle$}} have to be flat (no conical singularity)}
\label{YG+Pent}
\end{center}
\end{figure}

The states of the $\mathrm{O}(n)$ loop model for a tessellated surface $G_\delta$ are configurations $\gamma$ consisting of non-crossing loops and strands (joining boundary components, if present) drawn on the surface $G_\delta$ which can be obtained by concatenating local arrangements of arcs, one  for each triangle and rhomb in $G_\delta$.  
A local weight $\mathrm{w}_\gamma(\uf)$ is associated to each face $\uf$ of $G_\delta$ which depends on the configuration of the loops on $\uf$, the geometry of the face (hence of angle $\theta$ if $\uf=\lozenge(\theta)$ is a rhomb), and on a parameter $s$ (related to the temperature). 
A factor $n$ (loop fugacity) is associated to each closed loop. 
The  local weight $\mathrm{w}_\gamma(\uf)$ (that we do not discuss here) are taken to have a very specific form in order to satisfy  the Yang-Baxter and pentagonal relations ensuring that the model is the same for equivalent tessellations of the surface. 

The partition function $Z^\frak{b}(G_\delta)$ for the $O(n)$ loop model on a fixed surface $G_\delta$ equipped with a boundary condition ${\frak{b}}$ (specifying 
which boundary edges are joined by arcs), is given by the sum over states (loops+arcs configurations $\gamma$)  by
\begin{equation}
\label{O(n)-partition-function}
Z^\frak{b}(G_\delta)
\,  := \ 
\sum_{\text{$\frak{b}$-configurations} \, \gamma} \, n^{\# \text{loops}(\gamma)} \, \prod_{\substack{\text{faces} \\ \uf \, \in \, G_\delta}} \mathrm{w}_\gamma(\uf)
\end{equation}
In addition, when the specific relation between $n$ (the loop fugacity) and $s$ (the temperature parameter) $$n=-\cos(4\pi\,s /3)$$ holds, then the loop model is critical.

In \cite{Chelkak-Glazman-Smirnov2018} Chelkak et al. consider a \emph{planar version} without conical defects where all rhombs have angle $\theta = {\pi \over 3}$, and such that the discrete surface $G_\delta$ is a compact, connected domain $\Omega$ of the triangular lattice.
In this planar case, they define a \emph{discrete stress-energy tensor} as the response of the model to an infinitesimal $\epsilon$-deformation of the original planar surface into a non-planar surface with conical defects. More precisely, two deformations are considered: 
(i) replacing two adjacent equilateral triangles (forming a rhomb $\lozenge(\pi/3)$) by a rhomb $\lozenge(\theta)$ with angle $\theta={\pi \over 3}+\epsilon$,
(ii) replacing two aligned edges by a ``almost flat'' rhomb $\lozenge(\epsilon)$ (see Fig.~\ref{RhombDeform}).
\begin{figure}[h!]
\begin{center}
\includegraphics[width=1.66in]{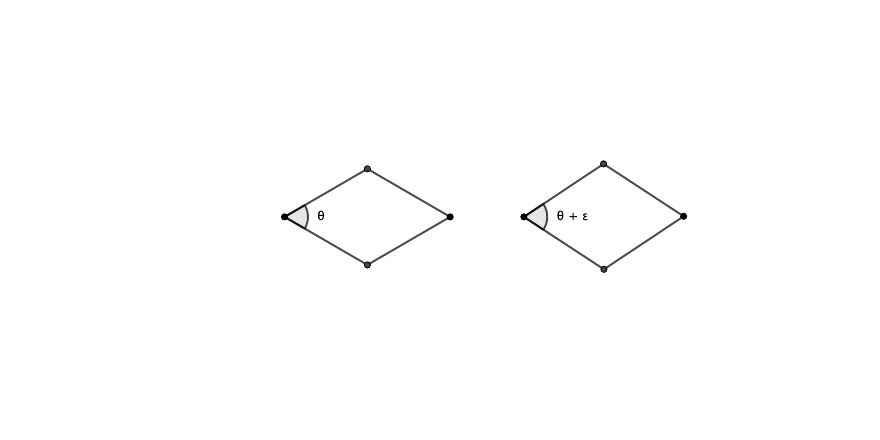}
\raisebox{.5in}{$\longrightarrow$}
\includegraphics[width=1.66in]{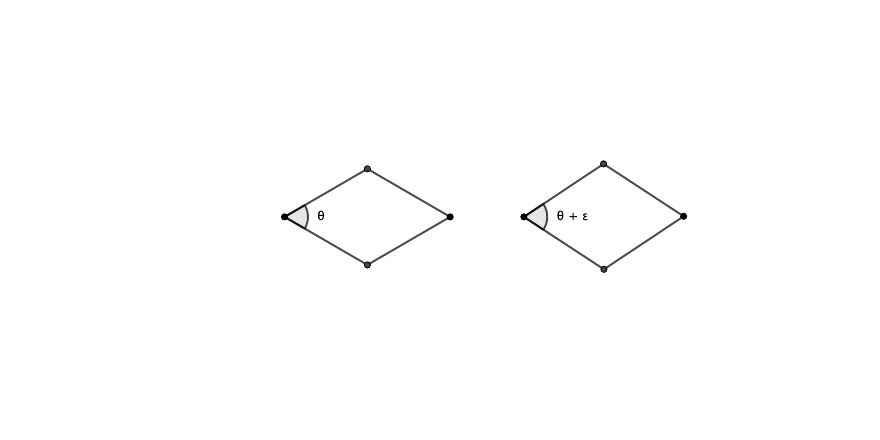}
\\ 
\includegraphics[width=1.66in]{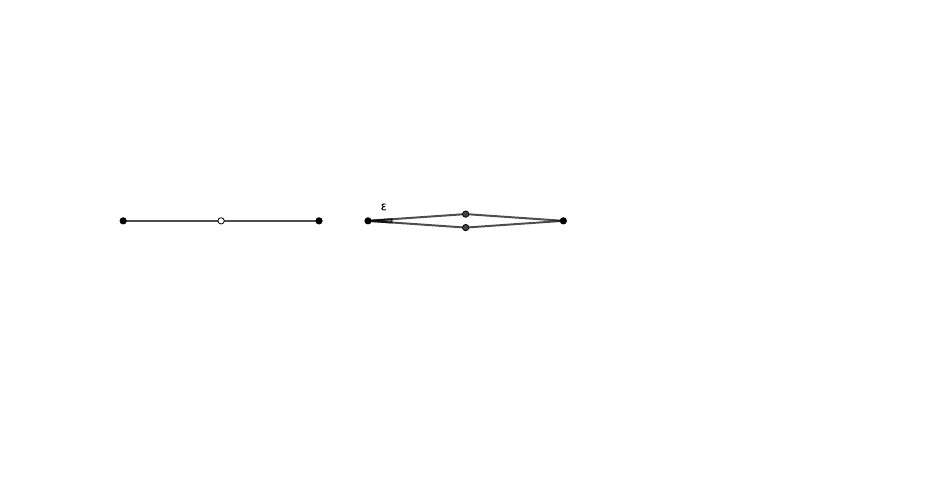}
\raisebox{.52in}{$\longrightarrow$}
\raisebox{-.06in}{\includegraphics[width=1.66in]{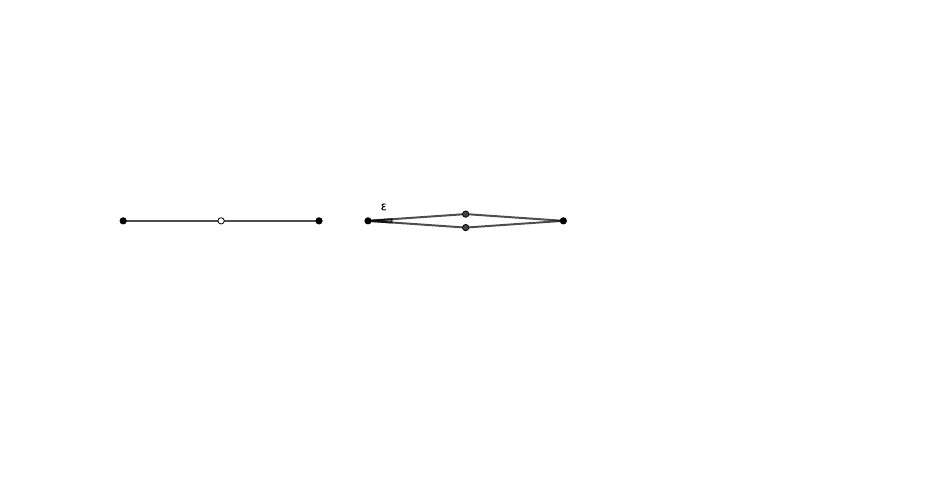}}
\caption{The $\epsilon$-deformations of rhombs in \cite{Chelkak-Glazman-Smirnov2018}}
\label{RhombDeform}
\end{center}
\end{figure}
The variation of the logarithm of the partition function under such $\epsilon$-deformations defines the e.v. of discrete stress-energy tensor $\mathcal{T}_{\ue | \um}$ associated to edges $\ue$ or to midlines $\um$ (of the honeycomb lattice built from the original triangular lattice), and out of these related real objects, a discrete complex stress-energy tensor $\mathcal{T}$ can be associated to the vertices and the faces of the lattice (with relations). In \cite{Chelkak-Glazman-Smirnov2018} it is conjectured that this object is approximately discrete-holomorphic and converges to the stress-energy tensor of the corresponding CFT in the scaling limit.

\subsubsection{Similarities and differences}
There are similarities but also important differences with the approach and results of our study. The discrete conformal Laplacian $\Deltaconf$ defined in \ref{ConfDelta} is also defined with respect to a rhombic tessellated surface $\uS_\uG^\lozenge$ naturally associated to a Delaunay graph $\uG$ in the plane (see Sect.~\ref{ssBasicPlanar} and especially Def.~\ref{dRhomSurf}). 
However $\uS_\uG^\lozenge$ is constructed only out of rhombs $\lozenge(\ue)$ associated to  edges $\ue$ of $\uG$, and contains no equilateral triangles.
Moreover, the rhombic surface $\uS_\uG^\lozenge$ is bipartite, with black and white vertices corresponding to vertices and faces of $\uG$ respectively. 
Finally, and most importantly, the black vertices of $\uS_\uG^\lozenge$ must be flat (they do not carry a conical singularity), while the white vertices may carry 
a conical singularity (corresponding to a non-zero Ricci curvature given by \ref{scalar-curvature}), see Fig.~\ref{rhomb-us-marked}.
\begin{figure}[h!]
\begin{center}
\includegraphics[width=2in]{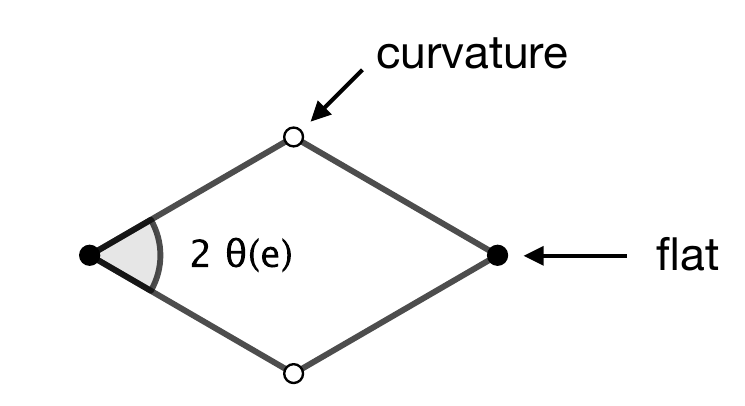}
\caption{The rhombs which build the tessellated surface $\uS_\uG^\lozenge$ in this paper}
\label{rhomb-us-marked}
\end{center}
\end{figure}
Thus our model considers only a subspace of the space of tessellated surfaces of \cite{Chelkak-Glazman-Smirnov2018}.

Like \cite{Chelkak-Glazman-Smirnov2018}, the stress-energy tensor in our study is defined in terms of deformations. However an important difference is that we consider deformations of $\uS_\uG^\lozenge$ which are induced from deformations of the underlying Delaunay graph $\uG$ in the plane. This space of deformations differs from
those considered in \cite{Chelkak-Glazman-Smirnov2018} in two respects. First, our deformations preserve the flatness of the black vertices of $\uS_\uG^\lozenge$. Second, and this is essential, our discrete stress-energy tensor has a specific invariance properties under \emph{global continuous analytic transformations} of the plane, i.e. M\"obius transformations. This holds a priori, independent of the specific geometry of the Delaunay graph $\uG$.

In \cite{Chelkak-Glazman-Smirnov2018} as well as in other studies, the framework is different. One looks for a discrete stress-energy tensor on an isoradial critical graph $\uG$ which has some specific invariance properties under the \emph{discrete analytic and anti-analytic transformations} of $\uG$. Discrete analyticity is a very special and powerful property, but it depends explicitly on the critical graph considered. It is only in the scaling limit that discrete analyticity can be shown to ``converge'' (this is a crude presentation of beautiful and precise results) towards the usual analyticity in the continuum (i.e. in the complex plane $\mathbb{C}$).

Another difference is that our setting includes deformations of ``flat rhombs'' (corresponding to chords) which are not deformations of aligned edges, as 
considered in \cite{Chelkak-Glazman-Smirnov2018} and depicted in Fig.~\ref{RhombDeform}.
These deformations induce the appearance of the ``curvature dipoles'' discussed in Sect.~\ref{subsubsection-dipole}, which complicate the analysis of 
$\Deltaconf$.

The overlap between our work and the results of  \cite{Chelkak-Glazman-Smirnov2018} is restricted to the case of the $\Deltaconf$ operator, which is related to the GFF. Strictly speaking the authors of \cite{Chelkak-Glazman-Smirnov2018} consider the critical O($n$) loop model for $n\in[-2,2]$, but it is known that the GFF can be related to the $n=2$ model, and that there is some relation between the Laplace-Beltrami operator on a graph and the $n=-2$ model.

On the other hand, the Laplace-Beltrami operator $\Delta$ and the K\"ahler operator $\mathcal{D}$, which we would like to study on a general Delaunay graph $\uG$, are not defined in terms of the abstract rhombic surface $\uS_\uG^\lozenge$. 
We do not know how to relate precisely, and in general, their corresponding discrete stress-energy tensors to the construction of the stress-energy tensor given by \cite{Chelkak-Glazman-Smirnov2018}.

\subsubsection{Stress-energy tensor constructions through lattice representations of the Virasoro algebra}
In an approach taken by Hongler et al. in \cite{Hongler2019}, a stress-energy tensor for some lattice models is defined
implicitly by identifying its modes through an action of the Virasoro algebra on an appropriately defined
vector space $\frak{F} := \frak{F}^\mathrm{loc}/\frak{F}^\mathrm{null}$ of {\it lattice local fields} (modulo {\it null fields}) 
supported on the graph. This construction avoids interpreting
the stress-energy tensor as a response to a deformation of the graph embedding. Instead
an intermediate action of the Heisenberg algebra is introduced using a discrete holomorphic
current along with a technique of discrete contour integration and 
a notion of discrete half-integer power functions.
Only the special cases 
of the discrete GFF and of the Ising model on the square lattice $\uG = \Bbb{Z}^2_\delta$ with mesh size $\delta$ 
are handled in \cite{Hongler2019}. 
However, we expect that most of their technology (e.g. the notions of medial and corner graphs, discrete power functions,  and discrete contour integration) is readily adaptable to  arbitrary isoradial graphs (and their rhombic graphs
where the theory of discrete holomorphicity is well behaved).
The space of lattice local fields $\frak{F}^\mathrm{loc}$ of \cite{Hongler2019} depends on the translation properties of $\uG = \Bbb{Z}^2_\delta$. 
Specifically $\frak{F}^\mathrm{loc}$ consists of fields which can be constructed as polynomial expressions of elementary fields $\phi_\delta(z)$  together with their translates $\phi_\delta(z + x \delta)$ for  $x$ in some fixed, finite set  $V \subset \Bbb{Z}^2$ of admissible displacements.
For a general isoradial graph
one would need to specify an adequate vector space of lattice local fields $\frak{F}^\mathrm{loc}$ 
on which a representation of the Virasoro algebra could be supported. 
Bearing this, it would be natural
to examine whether the stress-energy tensor(s) for the operator(s) considered in our paper
can be realized by such putative Virasoro algebra action(s).
For older references of representations of Virasoro algebra in lattice models, see the references in \cite{Hongler2019}.

\subsection{Open questions and possible extensions}
\label{ssOQandPE}\ \\

\noindent{\textbf{1:}} We would like to reiterate the problem of settling Conjecture \ref{Conf3bound} of Sect.~\ref{ssVArRGen}, or in lieu of that, finding another adequate bound on $\displaystyle {{R(\uf)}^{-1}} \nabla p_3(\uf)$ uniform in the faces $\uf$ of $\uT^{(r)}_0$ and the scaling parameter $\ell$ (or $r=1/\ell$), in order to complete the proof of props.~\ref{PropLimTr2} and \ref{prvsepsComm}  as well as \ref{PropLimTr2D}.

\medskip\noindent{\textbf{2:}} Instead of using an isoradial Delaunay graph, we could instead begin with a Delaunay graph which is ``smoothly non-isoradial'', in the sense that the circumradii of the faces $R(\uf)$ vary slowly with the position of the faces in the plane. Studying the Laplace-like operators $\Delta$, $\Deltaconf$ and $\mathcal{D}$ and their deformations on such a graph is an interesting problem which might entail finding asymptotic expansions of the corresponding Green functions.

\medskip\noindent{\textbf{3:}} The properties that make a general isoradial graph $\uG$ so useful as a starting point in our analysis are a reflection of the underlying notion of discrete analyticity supported on the lozenge graph $\uG^\lozenge$. Chelkak, Smirnov and others \cite{ChelkakTalk2018} have introduced the concept of s-holomorphicity and s-embeddings of graphs, and one can try to develop a theory of deformations for such graphs and their associated operators. 

\medskip\noindent{\textbf{4:}} In the scaling limit, random planar graphs are known to be related to Liouville conformal field theory. Finding a notion of discrete Liouville local field, with good properties in the scaling limit, for the model of random Delaunay triangulations is still an open problem. A solution could lead to an alternative discrete stress-energy tensor on a Delaunay graph, different from the one considered here, and with different properties under geometrical deformations of the graphs; in particular having a
discrete central charge different from $c=-2$ (possibly $c=-26$).

\medskip\noindent{\textbf{5:}} It should also be interesting to study the existence and description of a stress-energy tensor for other discrete models on Delaunay graphs, such as Dirac Fermions, the Ising model, the O(N) model, etc. using the approach of our work. It would be fruitful to compare the results with the approaches taken in
\cite{Hongler2019} and \cite{Chelkak-Glazman-Smirnov2018} (see section~\ref{ssRelDiff}).


\newpage
\section*{Acknowledgements}
The authors thank for hospitality Universidad de los Andes, Bogot\'a, Colombia, where this work was started, as well as the Chebyshev Laboratory at St.~Petersburg State University, Russia, the Perimeter Institute, Waterloo, Canada, and the Institute of the Mathematical Sciences of the Americas, U. of Miami, USA, where part of this work was done. In addition to these institutions, J. S. also thanks Institut Henri Poincar\'e in Paris, 
Institut des Hautes \'Etudes Scientifiques in
Bures-sur-Yvette, and Institute de Physique Th\'eorique of Saclay 
for their hospitality; the later for several visits during the course of this project.

The authors thank Dmitry Chelkak for discussions at IHES hosted by Hugo Duminil-Copin.
F. D. thanks Bertrand Eynard and Philippe di Francesco for discussions.
J. S. would also like to thank Peter Zograf and Ga\"etan Borot: 
Peter Zograf for the opportunity to present a short lecture course 
addressing preliminary results of this paper 
at the Chebyshev Laboratory, and 
Ga\"etan Borot for inviting and hosting J. S. at the 2017 IHP trimester program ``Combinatorics and interactions'' 
and later at the Max-Planck-Institut, Bonn for a short visit to discuss aspects of the project.  
{{We} are very grateful to the reviewer of this paper, for {their} perceptive remarks and questions, which led to many improvements.}

This paper is partly a result of the ERC-SyG project, Recursive and Exact New Quantum Theory (ReNewQuantum) which received funding from the European Research Council (ERC) under the European Union's Horizon 2020 research and innovation programme under grant agreement No 810573.

\appendix
\section{Reminders: the stress-energy tensor in QFT and the central charge in 2D CFT}
\label{aCFTreminder}
\subsection{The stress-energy tensor}
For completeness, we recall some textbook material of {quantum field theories} (QFT) and {conformal field theories} (CFT), which can be found for instance in \cite{francesco1997conformal}. A central concept in field theory is the \emph{stress-energy tensor} $\mathbf{T}=(T^{\mu\nu})$ (also denoted the \emph{energy-momentum tensor} in the literature). Firstly, $\mathbf{T}$ can be viewed (in flat space) as the conserved current $\mathbf{J}^\nu=(T_{\mu}^{\  \nu})$ associated to space-time translation invariance, and is defined through Noether's theorem by the action of an infinitesimal local change of coordinates
\begin{equation}
\label{xdiff}
x^\nu \to x^\nu+\xi^\nu(x)
\end{equation} 
on the action $\mathcal{S}$ (classical or quantum) of the theory.
Secondly  $\mathbf{T}$ can be viewed (in a general curved space) as the ``response of the theory'' to an infinitesimal variation of the classical ``background metric'' $\mathbf{g}=(g_{\mu\nu})$
\begin{equation}
\label{varGdG}
g_{\mu\nu}\to g_{\mu\nu}+\delta g_{\mu\nu}
\end{equation} 
of the space-time $M$ where the theory ``lives''.  More precisely $\mathbf{T}$ is defined classically by the functional derivative
of the action $\mathcal S$
\begin{equation}
\label{TasVarG}
T^{\mu\nu}(x)=
- \, {2\over\sqrt{g(x)}}{\delta\mathcal{S}\ \over\delta g_{\mu\nu}(x)}
\end{equation}
For a quantum theory (i.e. a local QFT), $\mathbf{T}$ is now a quantum operator. Its vacuum expectation value (the vacuum-vacuum matrix element) is given by the \emph{first order} variation of the logarithm of the partition function $Z$ of the QFT under an infinitesimal variation of the metric $\delta g_{\mu\nu}$
\begin{equation}
\label{1stVarLZ}
\delta\log Z = {1\over 2} \int_M \!\!dx\,\sqrt{g(x)}\,\delta g_{\mu\nu}(x)\,\langle T^{\mu\nu}(x)\rangle
\ + \ \cdots
\end{equation}
Similarly the first order variation of the vacuum expectation of an observable $\mathcal{O}$, for instance a product of local operators $\mathcal{O}_1(x_1)\cdots\mathcal{O}_n(x_n)$, gives by the connected correlator of $\mathbf{T}$ times $\mathcal{O}$
\begin{equation}
\label{StVarO}
\delta\langle\mathcal{O}\rangle = {1\over 2} \int_M \!\!dx\,\sqrt{g(x)}\ \delta g_{\mu\nu}\ 
\Big({\langle T^{\mu\nu}(x)\,\mathcal{O}\rangle}_{\mathrm{conn.}}
\ +\ \textrm{contact terms}\Big) \,+\,\cdots
\end{equation}
where the so-called ``contact terms'' are present in \ref{StVarO} when the position $x$ of $\mathbf{T}$ coincides with that of some local operators in $\mathcal{O}$.

These two definitions of the stress-energy tensor $\mathbf{T}$ are closely related, and in fact equivalent (with the proper definitions of $\mathbf{T}$), since a diffeomorphism \ref{xdiff} induces a change of metric \begin{equation}
\label{vargdiff}
\delta g_{\mu\nu}=D_{\!\mu}\xi_\nu+D_{\!\nu}\xi_\mu
\end{equation}
with $D_\mu$ the covariant derivative and $\xi_\nu=g_{\nu\rho}\xi^\rho$.

These definitions extend to {the} higher order terms {in $\delta g_{\mu\nu}$} 
{and give}  expectation values of products of $\mathbf{T}$ (correlators). For instance, the second-order term in the variation of $\log Z$ gives the two-point connected correlator
\begin{equation}
\label{2Tcorrel}
{1\over 8} \int_M \!\!\!\!dx\,\sqrt{g(x)}\,\delta g_{\mu\nu}(x)\,
\int_M \!\!\!\!dy\,\sqrt{g(y)}\,\delta g_{\rho\sigma}(y)\,
{\langle T^{\mu\nu}(x)\,T^{\rho\sigma}(y)\rangle}
_{\mathrm{conn.}}
\ +\ \text{contact terms}
\end{equation}
and so on.

\subsection{The stress-energy tensor in two-dimensional CFT}
\label{ssT2DCFT}
In two dimensions, it is standard to work in complex coordinates $z
=x^1+\imath\, x^2$, $\bar z
=x^1-\imath \,x^2$, so that the flat metric is $$g_{zz}=g_{\bar z\bar z}=0\,, \ g_{z\bar z}=g_{\bar zz}=1/2\ .$$
An infinitesimal diffeomorphism
$ z \mapsto z+\epsilon\, F(z,\bar z)$
thus amounts to a variation of the metric 
\begin{equation}
\label{dgdD}
\delta g_{zz}=\epsilon\, \partial \bar F\,,\quad\delta g_{\bar z\bar z}=\epsilon\, \bar \partial F\,,\quad\delta g_{z\bar z}=\delta g_{\bar z z}=\epsilon\, (\partial F+\bar\partial\bar F)/2
\end{equation}

For QFT's in two dimensions (in particular for CFT's), especially important are the holomorphic and antiholomorphic components of the stress-energy tensor $\mathbf{T}$, which are denoted $T$ and $\bar T$ in the literature (see e.g. \cite{francesco1997conformal}). In the flat metric they are
\begin{equation}
\label{TbarTdef}
T=-{\pi\over 2}\,T^{\bar z\bar z}=-2\pi\, T_{zz}
\ ,\quad
\bar T=-{\pi\over 2}\,T^{zz}=-2\pi\, T_{\bar z\bar z}
\end{equation}
The variation of $\log Z$ \ref{1stVarLZ} reads
\begin{equation}
\label{1stVarLZcomplexA}
\begin{split}
\delta\log(Z)=
&-{\epsilon\over \pi}\int d^2x\ \Big(\partial\bar F(x)\, \langle \bar T(x)\rangle
 + \bar\partial F(x)\,  \langle T(x) \rangle \Big)
\\
&+ {\epsilon\over 2}\int d^2x\ \big( \partial F(x)+\bar\partial\bar F(x) \big) \,  \langle \tr \mathbf{T}(x) \rangle 
\ +\ \cdots
\end{split} 
\end{equation}
where $\tr\mathbf{T}= T^\mu_{\ \mu}=T^{\mu\nu} g_{\nu\mu}=T^{z\bar z}=T^{\bar z z}$.

Conformal invariance in 2D implies that  $T^{z\bar z}=T^{\bar z z}=\tr\mathbf{T}=0$ identically vanishes.
For a quantum theory (a CFT) this requires a proper definition of the renormalized stress-energy tensor, and this identity is valid up to very specific contact terms.
The law of conservation for the current $\partial_\mu T^{\mu\nu}=0$ reduces to $\bar\partial T=0$, $\partial \bar T=0$, hence the terminology holomorphic and anti-holomorphic components. 
This is valid for a CFT in a flat metric. 

For a 2D CFT defined on a general surface with a non-flat metric $\mathbf{g}$,  one can still use (local) conformal coordinates where the metric reads $ds^2=\rho(z,\bar z)\, dz d\bar z$, so that the analyticity property of $T$ and $\bar T$ are preserved. $\rho$ is the conformal factor of the metric.
A most important property is that the trace of the stress-energy tensor does not vanish anymore.  
Its expectation value is given by the \emph{trace anomaly}
\begin{equation}
\label{TraceAnomaly}
\langle\tr\mathbf{T}(x)\rangle  = g_{\mu\nu}(x)\langle T^{\mu\nu}(x)\rangle = {c\over 24\,\pi}\, R_{\mathrm{scal}}(x)
\end{equation}
with $c$ the \emph{central charge} of the theory, and $R_{\mathrm{scal}}(x)$ the scalar curvature of the metric $\mathbf{g}$.
The trace anomaly is a quantum anomaly, caused by short distance quantum fluctuations and renormalization effects. See e.g. \cite{Friedan1982} for details.
It can be derived from the \emph{short distance operator product expansion} (OPE) for the stress-energy tensor, which takes the form (for the holomorphic component $T$)
\begin{equation}
\label{TTOPE}
T(z)\,T(z')\ \mathop{=}_{z\to z'}\ {c\over 2} {1\over (z-z')^4}\ +\, \text{subdominant terms}
\end{equation}
\ref{TraceAnomaly} can be obtained from \ref{TTOPE} e.g. by writing $\langle T(z) T(z')\rangle $ as  the functional derivative $\delta\langle T(z)\rangle/\delta g_{\bar z\bar z}(z')$ and comparing with the classical $\delta R_{_\mathrm{scal}}(z)/\delta g_{\bar z\bar z}(z')$ (see e.g. \cite{Friedan1982} or \cite{francesco1997conformal}).

For a discrete statistical model, corresponding to a lattice regularized QFT, conformal invariance is expected {to hold} only at a critical point and in the large distance scaling limit (a famous example is the Ising model). The scaling limit of the model corresponds to a CFT. 
The discretized stress-energy tensor $\mathbf{T}_{\mathrm{reg.}}$ can be defined, but it contains in general short distance UV divergent terms, proportional to negative powers and logarithms of the short distance regulator $a$ (the lattice mesh) or powers the high momentum/energy cut-off $\Lambda\sim 1/a$. By dimensional analysis 
\begin{equation}
\label{ }
\mathbf{T}_{\mathrm{reg.}}\ \propto\,\Lambda^2\sim a^{-2}
\end{equation}
The definition of the continuum limit $a\to 0$ ($\Lambda\to\infty$) requires a renormalization prescription in order to define a renormalized stress-energy tensor $\mathbf{T}$ with the correct properties for conformal invariance (OPE, trace anomaly).

\newcommand{\Sboson}{S_{\scriptscriptstyle{\mathrm{boson}}}}
\newcommand{\Zboson}{Z_{\scriptscriptstyle{\mathrm{boson}}}}
\newcommand{\Tboson}{T_{\scriptscriptstyle{\mathrm{boson}}}}
\newcommand{\Sdelta}{S_{\scriptscriptstyle{\Delta}}}
\newcommand{\Zdelta}{Z_{\scriptscriptstyle{\Delta}}}
\newcommand{\Tdelta}{T_{\scriptscriptstyle{\Delta}}}

\subsection{The two-dimensional boson and the $\Delta$ theory}
\label{AppGFF}
Finally we recall that our results for the Laplace-Beltrami operator $\Delta$ can be interpreted in the framework of the standard free boson CFT {(which has central charge $c=1$)}.
Indeed, {for the classical free boson}, the action $\Sboson$ and the stress-energy tensor are (on a closed Riemannian manifold {$\mathcal{M}$})
\begin{equation}
\label{BosonAction}
\Sboson[\phi]={1\over 2}\int_\mathcal{M} d^2x\,\sqrt{g}\, \partial_\mu\phi\, g^{\mu\nu}\,\partial_\nu\phi={1 \over 2} \, \int_\mathcal{M} d^2x \, \sqrt{g} \, \phi(x)  \Delta_g \phi(x),
\end{equation}
with stress-energy tensor
\begin{equation}
\label{BosonT}
T^{\mu \nu} =
\Big(- {1 \over 2} \, g^{\mu \nu} g^{\rho \sigma} +  g^{\rho \mu} g^{\sigma \nu}  
\Big) \partial_\rho \phi \, \partial_\sigma \phi
\end{equation}
In two-dimensional flat space, using complex coordinates, $\Delta_g= -4\,\partial \bar\partial$. The action and the components of the stress-energy tensor are
\begin{equation}
\label{BosonActionZ}
\Sboson[\phi]= 2\int d^2x\ \partial\phi\,\bar\partial\phi
\end{equation}
\begin{equation}
\label{TBosonZ}
\quad T=- {2\pi} (\partial\phi)^2\ ,\quad\bar T=-{2\pi} (\bar\partial\phi)^2
\ ,\quad\tr \mathbf{T} =T^{z\bar z}=T^{\bar z z}=0
\end{equation}
The last identity shows that the two-dimensional free boson is indeed conformally invariant.
The partition function for the boson is related to the determinant of $\Delta_g$ by the functional integral
\begin{equation}
\label{BosonZ}
\Zboson=\int \frak{D}[\phi]\, \mathrm{e}^{-S[\phi]}=\det (\Delta_g)^{-1/2}
\end{equation}
with $\det(\Delta)$ the properly defined functional determinant of $\Delta$, taking into account renormalization and the zero mode. 

Formally $\det (\Delta_g)=\Zboson^{-2}$ is the partition function of the ``$n=\!-2$ components'' free boson CFT, with $c=-2$. 
Equivalently, a standard trick is to write $\det (\Delta_g)$ as the partition function of a theory for a \emph{scalar complex Grassmann field}:  A spin zero field obeying Fermi-Dirac statistics, described by a pair of conjugate Grassmann (anti-commuting) fields $(\Phi,\bar\Phi)$, where the $\Phi(x)$'s and $\bar\Phi(x)$'s are the generators of an infinite dimensional Grassmann (or exterior) algebra. 
The partition function $\Zdelta$ is given by {a} Berezin functional integral (see e.g. \cite{francesco1997conformal}, \cite{QFandSforM1999} and the original reference \cite{Berezin1966}). It reads, using the Berezin integration notation
\begin{equation}
\label{LBZ}
\Zdelta= \det\Delta_g = \int \frak{D}[\Phi,\!\bar\Phi]\, \mathrm{e}^{-S[\Phi,\bar\Phi]}
\ ,\qquad \frak{D}[\Phi{,}\bar\Phi]=\prod_x d\Phi(x) d\bar\Phi(x)
\end{equation}
with the action $\Sdelta$ (here a degree 2 element of the Grassmann algebra) 
which is simply the Grassmann version of the action for a complex bosonic scalar field
\begin{equation}
\label{LBaction}
\Sdelta[\Phi,\bar\Phi]=4 \int\! d^2x\ \partial\Phi\,\bar\partial \bar\Phi = \int\! d^2x\ \Phi{\cdot}\Delta_g\bar\Phi
\end{equation}
Of course, unlike the bosonic case, the Berezin functional integral cannot be thought in terms of probabilistic averages over random real or complex 
fields ``living'' on a space-time manifold, but as an algebraic construction. 
In the fermionic theory, the two-point functions (the propagator) are (note the anti-commutivity)
\begin{equation}
\label{LBprop}
\langle\bar\Phi(x)\Phi(y)\rangle=-\langle\Phi(x)\bar\Phi(y)\rangle=\left[\Delta_g^{-1} \right]_{xy}
\,,\ \ 
\langle\Phi(x)\Phi(y)\rangle=\langle\bar\Phi(x)\bar\Phi(y)\rangle=0
\end{equation}
The stress-energy tensor components are
\begin{equation}
\label{TLBzzb}
T_{\!_\Delta}=
- {4\pi}\, \partial\Phi\,\partial\bar\Phi\ ,\quad \bar T_{\!_\Delta}=
-{4\pi}\, \bar\partial\Phi\,\bar\partial\bar\Phi
\ ,\quad \tr \mathbf{T}_{\!_\Delta} =0
\end{equation}
As explained in the discussion section \ref{sDiscussion}, our results for the variations of the discretized Laplacians $\Delta$, $\Deltaconf$ and the K\"ahler operator $\mathcal{D}$ (defined on  a triangulation $\uT$) can be easily formulated in terms of  discretized stress-energy tensors attached to the faces of $\uT$. However, only for the Laplace-Beltrami operator $\Delta$ can the discretized stress energy tensor be given a simple continuum limit  formulation as the stress-energy tensor of a continuum QFT.

\subsection{The conformal ghost-antighosts theory}
\label{AppBCsystem}
For completeness we recall what is the ghost-antighost CFT theory for two-dimensional gravity. Two dimensional gravity is a quantum theory for the Riemannian 2d metric tensor $\boldsymbol{g}=(g_{\mu\nu})$ on a Riemann surface (e.g. the sphere). It must be invariant under local diffeomorphisms
\begin{equation}
\label{DiffXi}
x^\mu\to x'^\mu=x^\mu + \epsilon\, \xi^\mu\ ,\quad g_{\mu\nu}\to g_{\mu\nu} -\epsilon\, ( D_\mu \xi_\nu+ D_\nu\xi_\mu)
\end{equation}
with $\boldsymbol {\xi}=(\xi^\mu)$ a vector field, and $\boldsymbol{D}=(D_\mu)$ the covariant derivative in the metric $\boldsymbol{g}$.
In Polyakov's formulation (see the original article by Polyakov on the bosonic string  \cite{Polyakov81}, and the Les Houches lecture notes by Friedan \cite{Friedan1982} for details), the diffeomorphism local invariance is fixed by the conformal gauge. A background classical metric $\bar{\boldsymbol{g}}=(\bar g_{\mu\nu})$ is chosen and the metric are fixed to be conformal w.r.t.  $\bar{\boldsymbol{g}}$, i.e. of the form
\begin{equation}
\label{ConfMetric}
g_{\mu\nu}(x)=\Lambda(x)\,\bar g_{\mu\nu}(x)\ ,\ \Lambda(x)\ \text{conformal factor}
\end{equation}
This amounts to enforcing the local gauge fixing condition
\begin{equation}
\label{ }
\bar K^{\mu\nu}=\bar g^{\mu\rho} g_{\rho\sigma}\bar g^{\sigma\nu} - {1\over 2} \bar g^{\rho\sigma}g_{\sigma\rho}\ \bar g^{\mu\nu} = 0
\end{equation}
The variation of the gauge fixing term $\bar K$ under a general diffeomorphism \ref{DiffXi} is, when deforming a conformal metric of the form \ref{ConfMetric}
\begin{equation}
\label{ }
\bar K^{\mu\nu}=0 \ \to\ \bar K^{\mu\nu}=-\epsilon\, \Lambda\,\left(  \bar D^\mu \xi^\nu + \bar D^\nu \xi^\mu - \bar g^{\mu\nu}\, \bar D_\tau \xi^\tau\right) 
\end{equation}
with $\bar{\boldsymbol{D}} =(\bar D_\mu)$ the covariant derivative w.r.t. the background metric $\boldsymbol{g}$. It can be written as
\begin{equation}
\label{ }
-\epsilon\,(\mathbf{J}\cdot \xi)^{\mu\nu}
\end{equation}
where $\mathbf{J}$ is a differential operator which maps a vector field $\boldsymbol{\xi}$ onto a symmetric traceless tensor  (w.r.t. the background metric $\bar{\boldsymbol{g}}$).
Quantizing the metric $\boldsymbol{g}$ in the conformal gauge  gives in the functional integral a Fadeev-Popov determinant, which can be written as a Grassmann functional integral in terms of two anticommuting ghost fields,  $\boldsymbol{c}$ and $\boldsymbol{b}$, where
\begin{equation}
\label{ }
\boldsymbol{c}=(c^\mu)\quad\text{is a type $(1,0)$ tensor}
\end{equation}
and 
\begin{equation}
\label{ }
\boldsymbol{b}=(b_{\mu\nu})\quad\text{is a type $(0,2)$ symmetric traceless tensor}
\end{equation}
such that $b_{\mu\nu}=b_{\nu\mu}$  and $\bar g^{\nu\mu}b_{\mu\nu}=0$.
The Fadeev-Popov determinant reads
\begin{equation}
\label{ }
\det[\mathbf{J}] \ =\ \int \frak{D}[\boldsymbol{b},\boldsymbol{c}]\ \mathrm{e}^{\boldsymbol{b}\cdot\mathbf{J}\cdot\boldsymbol{c}}
\end{equation}
The action for the $\boldsymbol{b}\text{-}\boldsymbol{c}$ system is (here in the background metric $\bar{\boldsymbol{g}}$)
\begin{equation}
\label{ }
S_{\texttt{ghost}}[\boldsymbol{b},\boldsymbol{c}]\ =\ \boldsymbol{b}\cdot\mathbf{J}\cdot\boldsymbol{c}= \int d^2x\, \sqrt{\bar g}\,\ b_{\mu\nu} (\bar D^\mu c^\nu + \bar D^\nu c^\mu - \bar g^{\mu\nu} \bar D_\tau c^\tau)
\end{equation}
The symmetric stress-energy tensor for this ghost action is
\begin{equation}
\label{ }
T^{\mu\nu}_{\texttt{ghost}}= b^{\mu\tau}\bar D^\nu c_\tau + b^{\nu\tau} \bar D^\mu c_\tau + \bar D_\tau b^{\mu\nu} c^\tau - \bar g^{\mu\nu} b^{\tau\rho} \bar D_\tau c_\rho
\end{equation}
As shown by Polyakov in \cite{Polyakov81}, this $\boldsymbol{b}\text{-}\boldsymbol{c}$ system is a conformal theory (CFT) with central charge $c=-26$.
As a consequence, when fixing the conformal gauge \ref{ConfMetric} in the functional integral for 2D gravity, which is
\begin{equation}
\label{ }
Z\ =\ \int \frak{D}[{g}]\ \mathrm{e}^{-\int_M d^2x\,\mu_0\,\sqrt{g}}
\end{equation} 
the resulting effective action for the remaining conformal factor
\begin{equation}
\label{ }
\Lambda(x)=\exp(\varphi(x))
\qquad\text{with}\quad\varphi(x)\quad\text{the Liouville field}
\end{equation}
is the Liouville action, which defines the Liouville 2D gravity model.


\section{Proof of Lemma \ref{lemmabound}}
\label{prooflemmabound}
\begin{proof}
For $j=2, 3$ introduce interpolations $z_j(t) := tz_j + (1-t)z_1$ between $z_j$ and $z_1$. In addition
set $z(s,t) := sz_3(t) + (1-s)z_2(t)$. We start from the definition of $\nabla$
\begin{equation}
\label{DefNabla2}
\D \nabla \phi(\uf) \ = \  { \big[ \phi(z_2) - \phi(z_1) \big] \big[ \overline{z}_3 - \overline{z}_1 \big]  \, - \,
\big[ \phi(z_3) - \phi(z_1) \big] \big[ \overline{z}_2 - \overline{z}_1 \big] 
\over { -4 \imath {A(\uf)} }}
\end{equation}
where by formula \ref{Rcform} we have for the area of the triangle $\uf$
\begin{equation}
\label{Rcformbis}
4\, {A(\uf)} = | z_1 -z_2 | \, |z_2- z_3  | \, | z_3 -z_1|\, /\, R(\uf)
\end{equation}
The numerator can be expressed by
\begin{equation}
\label{numeratornabla}
\begin{array}{c}
\D \big[ \phi(z_2) - \phi(z_1) \big] \big[ \overline{z}_3 - \overline{z}_1 \big]  \, - \,
\big[ \phi(z_3) - \phi(z_1) \big] \big[ \overline{z}_2 - \overline{z}_1 \big]  \\ \\
\D = \int_0^1 \, dt \, {d \over {dt}} \Bigg[ \phi\big( z_2(t) \big) \big[ \overline{z}_3 - \overline{z}_1 \big] \, - \, \phi \big( z_3(t) \big) 
\big[ \overline{z}_2 - \overline{z}_2 \big] \Bigg] \\ \\
\D = 
\left\{
\begin{array}{cl}
\D \int_0^1 \, dt \, 
\Bigg[ \big[z_2 - z_1 \big] \big[ \overline{z}_3 - \overline{z}_1 \big] \partial \phi \big(z_2(t) \big) \, - \, 
\big[z_3 - z_1 \big] \big[ \overline{z}_2 - \overline{z}_1 \big] \partial \phi \big(z_3(t) \big)  \Bigg] 
&\D \text{$*$-integral} \\ 
\D + \\ 
\D \int_0^1 \, dt \, 
\Bigg[ \big[\overline{z}_2 - \overline{z}_1 \big] \big[ \overline{z}_3 - \overline{z}_1 \big] \Big[ \partial \phi \big(z_2(t) \big) \, - \, \partial \phi \big(z_3(t) \big) \Big] \Bigg] 
&\D \text{$**$-integral}
\end{array}
\right.
\end{array}
\end{equation}

\noindent
Apply the fundamental theorem of calculus once again, the $*$-integral in \ref{numeratornabla} can be expressed  as a double integral
\begin{equation}
\begin{array}{l}
\D -\int_0^1 \int_0^1 \, dt \, ds \ {d \over {ds}} 
\Bigg[ \partial \phi \big( z(s,t) \big) \Big( s [ z_3 - z_1 ] \big[ \overline{z}_2 - \overline{z}_1 \big]  + 
(1-s) [ z_2 - z_1] \big[ \overline{z}_3 - \overline{z}_1 \big] \Big)\Bigg] \\ \\
\D= \left\{
\begin{array}{c}
\D \int_0^1 \int_0^1 \, dt \, ds \ \partial \phi \big( z(s,t) \big) 
\underbrace{ \Big(
[ z_2 - z_1] \big[ \overline{z}_3 - \overline{z}_1 \big]  
- [ z_3 - z_1] \big[ \overline{z}_2 - \overline{z}_1 \big] \Big) }_{= \ - 4 \imath {A(\uf)} } 
\\ 
\D + \\ 
\D  \int_0^1 \int_0^1 \, t\, dt \, ds \ \partial \partial \phi \big( z(s,t) \big)  [ z_2 - z_3]
\Big( s [ z_3 - z_1] \big[ \overline{z}_2 - \overline{z}_1 \big]  + 
(1-s)  [ z_2 - z_1] \big[ \overline{z}_3 - \overline{z}_1 \big] \Big) \\ \\
\D + \\ 
\D  \int_0^1 \int_0^1 \, t\, dt \, ds \ \partial \overline{\partial} \phi \big( z(s,t) \big) \big[ \overline{z}_2 - \overline{z}_3 \big] 
\Big( s [ z_3 - z_1] \big[ \overline{z}_2 - \overline{z}_1 \big]  + 
(1-s) [ z_2 - z_1] \big[ \overline{z}_3 - \overline{z}_1 \big] \Big) \\ \\
\end{array}
\right.
\end{array}
\end{equation}

Dividing  the $\text{$*$-integral}$ in \ref{numeratornabla} by $(- 4 \frak{Im} {A(\uf)})$ we obtain a first contribution to $\nabla\phi(\uf)$, namely
\begin{equation}
\begin{array}{l}
\D
 \left\{
\begin{array}{c}
\D \int_0^1 \int_0^1 \, dt \, ds \ \partial \phi \big( z(s,t) \big)  \\ 
\D + \\ 
\D \imath R(\uf) \,  \int_0^1 \int_0^1 \, t dt \, ds \ \partial \partial \phi \big( z(s,t) \big)  
{z_2 - z_3  \over {|z_2 - z_3|}}
\Bigg( s {z_3 - z_1 \over {|z_3 - z_1|}} 
{\overline{z}_2 - \overline{z}_1 \over {|z_2 - z_1 |}}  + 
(1-s)  {z_2 - z_1 \over {|z_2 - z_1|}} 
{\overline{z}_3 - \overline{z}_1 \over {|z_3 - z_1|} } 
\Bigg) \\ 
\D + \\ 
\D  \imath R(\uf) \, \int_0^1 \int_0^1 \, t dt \, ds \ \partial \overline{\partial} \phi \big( z(s,t) \big) 
{\overline{z}_2 - \overline{z}_3 \over {|z_2 - z_3|}}
\Bigg( s {z_3 - z_1 \over {|z_3 -z_1|}} 
{\overline{z}_2 - \overline{z}_1 \over {|z_2 -z_1|}} + 
(1-s) {z_2 - z_1 \over {|z_2 -z_1|}}
{\overline{z}_3 - \overline{z}_1 \over {|z_3 - z_1|}}
\Bigg) 
\end{array}
\right.
\end{array}
\end{equation}
Again, by the fundamental theorem of calculus, we can transform the 
$**$-integral \ref{numeratornabla} and obtain
\begin{equation}
\begin{array}{l}
\D \int_0^1 \, dt \, 
\Bigg( \big[\overline{z}_2 - \overline{z}_1 \big] \big[ \overline{z}_3 - \overline{z}_1 \big] \Big[  \partial \phi \big(z_2(t) \big) \, - \, \partial \phi \big(z_3(t) \big) \Big] \Bigg) \\ 
\D= - \big[\overline{z}_2 - \overline{z}_1 \big] \big[ \overline{z}_3 - \overline{z}_1 \big] \, 
\int_0^1 \int_0^1 \, dt \,ds \, {d \over {ds}}  
\Bigg( \partial \phi \big(z(s,t) \big) \Bigg) \\ 
\D=  \big[\overline{z}_2 - \overline{z}_1 \big] \big[ \overline{z}_3 - \overline{z}_1 \big] \, 
\int_0^1 \int_0^1 \, t\, dt \, ds \, \Bigg( [z_2 - z_3] \, \partial \overline{\partial} \phi \big( z(s,t) \big) + [\overline{z}_2 - \overline{z}_3] \, \overline{\partial} \overline{\partial}
\phi \big(  z(s,t) \big) \Bigg) 
\end{array}
\end{equation}
Dividing  the $\text{$**$-integral}$ in \ref{numeratornabla} by $(- 4 \frak{Im} {A(\uf)})$ we obtain a second contribution to $\nabla\phi(\uf)$, namely

\begin{equation}
\begin{array}{l}
\D 
 \imath R(\uf) \, 
{\overline{z}_2 - \overline{z}_1 \over {|z_2 - z_1|}}
{\overline{z}_3 - \overline{z}_1 \over {|z_3 - z_1|}} \, 
\int_0^1 \int_0^1 \, t\,dt \, ds \, \Bigg(
{z_2 - z_3 \over {|z_2- z_3|}}  
\, \partial \overline{\partial} \phi \big( z(s,t) \big) + 
{\overline{z}_2 - \overline{z}_3 \over {|z_2 - z_3|}}
\, \overline{\partial} \overline{\partial}
\phi \big(  z(s,t) \big) \Bigg)
\end{array}
\end{equation}

\noindent
So we end up with
\begin{equation}
\label{thediff1}
\begin{split}
&\nabla\phi(\uf)-\int_0^1\int_0^1 dt\,ds\,\partial \phi \big( z(s,t) \big)\\
&
=\left\{
\begin{array}{c}
\D \imath R(\uf) \,  \int_0^1 \int_0^1 \, t\, dt \, ds \ \partial \partial \phi \big( z(s,t) \big)  
{z_2 - z_3  \over {|z_2 - z_3|}}
\Bigg( s {z_3 - z_1 \over {|z_3 - z_1|}} 
{\overline{z}_2 - \overline{z}_1 \over {|z_2 - z_1 |}}  + 
(1-s)  {z_2 - z_1 \over {|z_2 - z_1|}} 
{\overline{z}_3 - \overline{z}_1 \over {|z_3 - z_1|} } 
\Bigg) \\ 
\D + \\ 
\D  \imath R(\uf) \, \int_0^1 \int_0^1 \, t\, dt \, ds \ \partial \overline{\partial} \phi \big( z(s,t) \big) 
{\overline{z}_2 - \overline{z}_3 \over {|z_2 - z_3|}}
\Bigg( s {z_3 - z_1 \over {|z_3 -z_1|}} 
{\overline{z}_2 - \overline{z}_1 \over {|z_2 -z_1|}} + 
(1-s) {z_2 - z_1 \over {|z_2 -z_1|}}
{\overline{z}_3 - \overline{z}_1 \over {|z_3 - z_1|}}
\Bigg) \\ 
\D + \\ 
\D \imath R(\uf) \, 
{\overline{z}_2 - \overline{z}_1 \over {|z_2 - z_1|}}
{\overline{z}_3 - \overline{z}_1 \over {|z_3 - z_1|}} \, 
\int_0^1 \int_0^1 \, t\,dt \, ds \, \Bigg(
{z_2 - z_3 \over {|z_2- z_3|}}  
\, \partial \overline{\partial} \phi \big( z(s,t) \big) + 
{\overline{z}_2 - \overline{z}_3 \over {|z_2 - z_3|}}
\, \overline{\partial} \overline{\partial}
\phi \big(  z(s,t) \big) \Bigg)
\end{array}
\right.
\end{split}
\end{equation}

\noindent
Thus we can bound the norm of the r.h.s. of \ref{thediff1} by
\begin{equation}
\label{ }
R(\uf)\int_0^1\int_0^1 t\,dt\, ds \left(|\partial\partial\phi(z(s,t)|+2|\partial\bar\partial\phi(z(s,t)|+|\bar\partial\bar\partial\phi(z(s,t)|\right)
\end{equation}
Thus we have
\begin{equation}
\label{bound1}
\left|\nabla\phi(\uf)-\int_0^1\int_0^1 dt\,ds\,\partial \phi \big( z(s,t) \big)\right|
\le
R(\uf) \Bigg( {1 \over 2} \, \sup_{z \in \uf} \Big| \partial \partial \phi (z) \Big| 
\, + \, \sup_{z \in \uf} \Big| \partial \overline{\partial} \phi (z) \Big| \, + \, 
{1 \over 2} \, \sup_{z \in \uf} \Big| \overline{\partial} \overline{\partial} \phi (z) \Big| \Bigg)
\end{equation}
Finally we come to bound the difference 
between the $\partial \phi \big( z(s,t) \big)$ and $\partial \phi(z_\uf)$ where $z_\uf$ is the circumcenter of $\uf$.
Again, by the fundamental theorem of calculus, defining
$$z(p,s,t)=p\, z(s,t) + (1-p) z_\uf$$
we write
\begin{equation}
\label{ }
\begin{split}
 \partial \phi &\big( z(s,t) \big)-\partial \phi(z_\uf)=
\int_0^1 dp \, {d \over {dp}} \partial \phi \big( z(p,s,t) \big)
\\
=& \int_0^1 dp\, \big((z(s,t) - z_\uf)\partial \partial \phi(z(p,s,t)) + ({\bar z}(s,t) - {\bar z}_\uf)\partial \bar\partial \phi(z(p,s,t))\big)
\end{split}
\end{equation}
Since $z(s,t)$ is inside the triangle $\uf$, it is also in the disk $B_\uf$ of radius $R(\uf)$ with center $z_\uf$, 
hence $|z(s,t) - z_\uf|\le R(\uf)$
and we get the bound

\begin{equation}
|\partial \phi \big( z(s,t) \big)-\partial \phi(z_\uf)|\le R(\uf)\left(
\sup_{z \in B_\uf} \Big| \partial \partial \phi (z) \Big| \, + \,  \, \sup_{z \in B_\uf} \Big| \partial \overline{\partial} \phi (z) \Big|
\right) 
\end{equation}

\noindent
which when averaged becomes

\begin{equation}
\label{bound2}
\Bigg| \int_0^1\int_0^1 dt\,ds\,\partial \phi \big( z(s,t) \big)
- \partial \phi(z_\uf) \Bigg| \le R(\uf)\left(
\sup_{z \in B_\uf} \Big| \partial \partial \phi (z) \Big| \, + \,  \, \sup_{z \in B_\uf} \Big| \partial \overline{\partial} \phi (z) \Big|
\right) 
\end{equation}

\noindent
Combining the bounds \ref{bound1} and \ref{bound2} we get the final result of lemma~\ref{lemmabound}
\begin{equation}
\Big| \nabla \phi (\uf )   - \partial \phi (z_\uf )\Big| \leq R(\uf) \, \Bigg(  \, {3 \over 2} \, 
\sup_{z \in B_\uf}
\big| \partial^2 \phi  \big| \, + \, 
2 \,  \sup_{z \in B_\uf} \big| \partial \overline{\partial} \phi \big| \, + \,  {1 \over 2} \, \sup_{z \in B_\uf}
\big| \overline{\partial}^2  \phi \big| \, \Bigg)
\end{equation}
\end{proof}

\begin{Rem}
\label{robust-lemma}
For a general point $w \in B_\uf$ we have $|z(s,t) - w |\le 2R(\uf)$ and after modifying our estimates by
a factor of $2$ we obtain
\begin{equation}
\label{eqRobustLemma}
\Big| \nabla \phi (\uf )   - \partial \phi (w) \Big| \leq R(\uf) \, \Bigg(  \, {5 \over 2} \, 
\sup_{z \in B_\uf}
\big| \partial^2 \phi  \big| \, + \, 
3 \,  \sup_{z \in B_\uf} \big| \partial \overline{\partial} \phi \big| \, + \,  {1 \over 2} \, \sup_{z \in B_\uf}
\big| \overline{\partial}^2  \phi \big| \, \Bigg)
\end{equation}
\end{Rem}

\newpage
\section{Continuum limits of curvature anomalies: an example}
\label{tiling-by-cocyclic-quad}

In this appendix we present an example of an isoradial Delaunay graph $\uG_\mathrm{cr}$
for which the
anomalous terms
of the associated conformal Laplacian $\Deltaconf$
(as defined in formulae \ref{var-anomalous}
and examined in equations \ref{chord2chord}, and \ref{chord2edge} of Section \ref{ssConfLApDEl})
have well-defined, non-trivial
$\ell \rightarrow \infty$ scaling limits. 
Unlike the continuum limits addressed in Corollary \ref{finalOPElike1},
the limit values of the anomalous edge-to-chord, chord-to-edge, and chord-to-chord terms
computed in Proposition \ref{continuum-limits-anomalies} 
of this section
reflect features of the underlying geometry of the 
initial critical graph $\uG_\mathrm{cr}$, specifically
the choice of fundamental quadrilateral $\mathcal{Q}$
used to construct $\uG_\mathrm{cr}$. See Figure \ref{CyclicQuadTiling}.
We emphasize that this is a very specific example; for ``generic'' isoradial Delaunay graph $\uG_\mathrm{cr}$, no such continuum limit exist.

Begin with four angles $\alpha_1 < \alpha_2 < \alpha_3 < \alpha_4$ in the interval $[0, 2\pi)$ and construct the cyclic quadrilateral $\mathcal{Q}$
whose vertices are the unit complex numbers $\frak{z}_k := \exp (\mathrm{i} \alpha_k)$
with $k \in \{1,2,3,4 \}$. We will require that the origin is contained in the interior of $\mathcal{Q}$, which is
achieved whenever $\alpha_3- \alpha_1 > \pi$ or $\alpha_4 -\alpha_2 > \pi$. This constraint insures that
the tiling we are about to construct is Delaunay. 
Let $\mathcal{Q}^\mathrm{op}$ denote the quadrilateral obtained by rotating $\mathcal{Q}$ by $\mathrm{180}$ degrees.
A cyclic quadrilateral 
with associated angles $\alpha_1 = \pi/3$, $\alpha_2 = 5\pi/7$, $\alpha_3 = 13\pi/9$,
and $\alpha_4 = 21\pi / 11$ is illustrated in Figure \ref{FundamentalQuad}.

\begin{figure}[h]
\begin{center}
\raisebox{-.8in}{\includegraphics[width=2.3in]{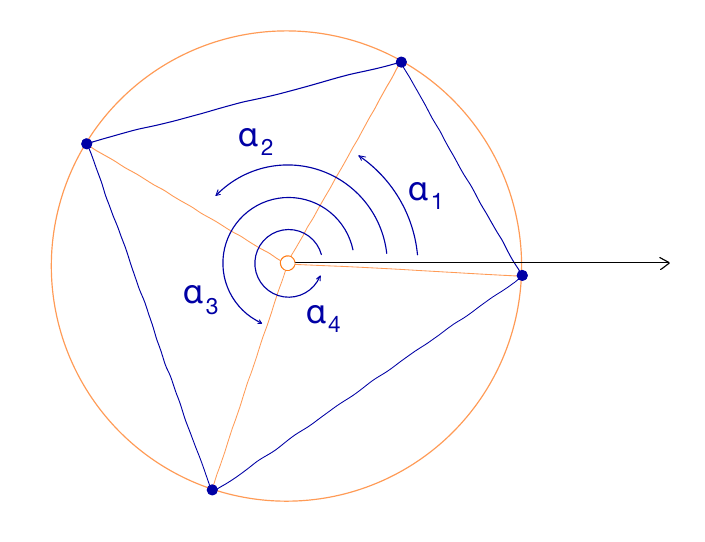}}
\caption{The fundamental quadrilateral $\mathcal{Q}$ considered in the example}
\label{FundamentalQuad}
\end{center}
\end{figure}

Construct a doubly periodic, quadrilateral tiling $\uG_\mathrm{cr}$ of the plane using translations of $\mathcal{Q}$ and $\mathcal{Q}^\mathrm{op}$. 
Clearly $\uG_\mathrm{cr}$ will be isoradial and Delaunay in the sense 
of Section \ref{sssNotations}; by construction each face of $\uG_\mathrm{cr}$ is a cyclic quadrilateral.
Figure \ref{CyclicQuadTiling} depicts such a tiling.

\begin{figure}[h]
\begin{center}
\raisebox{-.8in}{\includegraphics[width=2.3in]{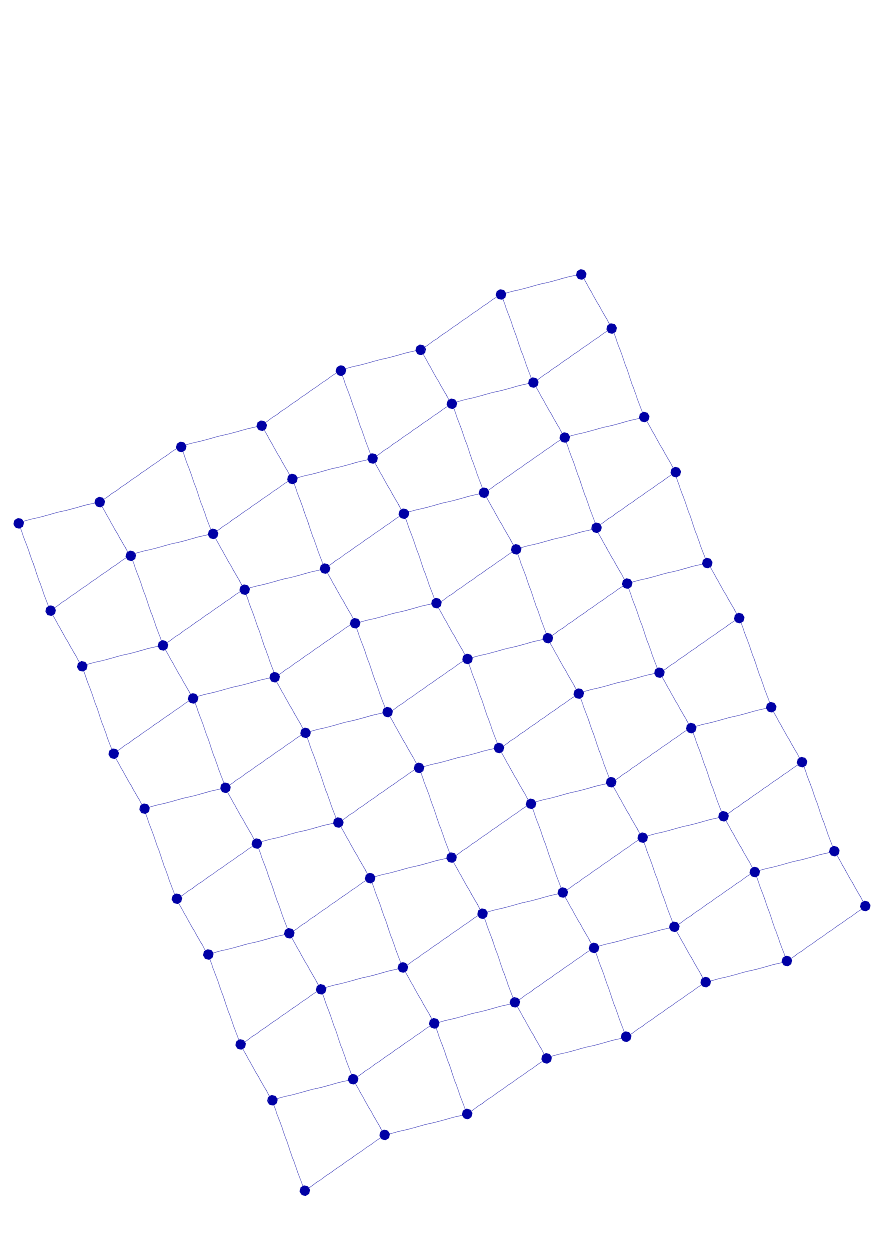}}
\caption{Fragment of a tiling $\uG_\mathrm{cr}$ by a cyclic quadrilateral $\uq$}
\label{CyclicQuadTiling}
\end{center}
\end{figure}

For each quadrilateral face $\uq$ of $\uG_\mathrm{cr}$ let $z_\uq$ denote the complex coordinate of its center;
with respect to this center, the four vertices $\mathrm{v}_\uq(k)$ of $\uq$, with $k \in \{1, 2, 3, 4 \}$, 
have complex coordinates $z(\mathrm{v}_\uq(k)) = z_\uq \, \pm \, \frak{z}_k$
where the sign is $+$ if $\uq$ is a translation of $\mathcal{Q}$ and $-$ if $\uq$ is a translation of $\mathcal{Q}^\mathrm{op}$.
Let $\mathrm{e}^+_\uq$ denote the chord
of the quadrilateral $\uq$ joining vertices $\mathrm{v}_\uq(2)$ and $\mathrm{v}_\uq(4)$
while $\mathrm{e}^-_\uq$ will denote the chord 
joining $\mathrm{v}_\uq(1)$ and $\mathrm{v}_\uq(3)$. Up to a sign, the corresponding north angles are given by
$\vartheta_+ := {\alpha_2 - \alpha_4}$ and $\vartheta_-:=  \alpha_1 - \alpha_3$ respectively. Define $z_+:= \frak{z}_2 - \frak{z}_4$
and $z_- :=  \frak{z_1} - \frak{z}_3$. Let $A_\mathcal{Q}$ denote the area of $\mathcal{Q}$.

Let $F(z)$ be a smooth complex-valued function 
with compact support together 
with deformation and scaling parameter
values $\epsilon > 0$ and $\ell > 0$.
Let $\uG_{\epsilon,\ell}$ denote
the graph obtained by 
deforming the embedding of $\uG_\mathrm{cr}$
by $z \mapsto z + \epsilon \, \ell F(z/\ell)$
and then adjoining edges
$\mathrm{e}^+_\uq$ or $\mathrm{e}^-_\uq$
to those quadrilateral faces $\uq$ of $\uG_\mathrm{cr}$
according to whether 
$\theta_{\epsilon,\ell}( \mathrm{e}^+_\uq ) > 0$ or
$\theta_{\epsilon,\ell}( \mathrm{e}^-_\uq ) > 0$
respectively;
these conditions are mutually exclusive, as the
signs of 
$\theta_{\epsilon,\ell}( \mathrm{e}^+_\uq )$
and 
$\theta_{\epsilon,\ell}( \mathrm{e}^-_\uq ) $
are opposite.
Neither edge is selected if both conformal angles are zero. 
As long as $\epsilon > 0$ lies within the 
range $0 < \epsilon < \tilde{\epsilon}_{F , \ell}$ 
prescribed by Prop.~\ref{epsilontildeF} 
the graph $\uG_{\epsilon,\ell}$ will remain Delaunay.

\bigskip
\noindent
As an example consider the following "mollified" shear of $\uG_\mathrm{cr}$. For simplicity we 
consider the case where the support of $F$ has {\bf one} connected component (in particular, it is a disk $\Bbb{D}$
with unit radius):

\[ F(z) \ := \ \left\{ 
\begin{array}{cl}
\D \exp \Bigg( i\phi \, + \, {|z|^2 \over {|z|^2 - 1}} \Bigg) \, \frak{Im}[z] 
&\D \text{if $|z| \leq 1$} \\ \\
\D 0 
&\D \text{otherwise} 
\end{array}
\right. \]

\bigskip
\noindent
Figure \ref{mollified-shear} depicts the effect of 
of the corresponding deformation $z \mapsto z + \epsilon \ell F(z/\ell)$.
The reader will notice
that the support of $F_\ell: z \mapsto \ell F(z/\ell)$ is partitioned roughly into three "unidirectional" zones 
consisting of deformed quadrilaterals whose diagonals share the same alignment.
In general, for any smooth compactly supported perturbation $z \mapsto z + \epsilon \, \ell F(z/\ell)$,
the support of $F_\ell$ will be partitioned into 
such zones of constant alignment.
If we ignore the 
quadrilaterals $\uq$ for 
which $\theta'_{0,\ell}(\mathrm{e}^+_\uq )$ vanishes
then the remaining set of quadrilaterals can be partitioned
into zones over which the sign of $\theta'_{0,\ell}(\mathrm{e}^+_\uq)$ is constant.
For $\ell >> 0$ large, the interfaces between these zones
approximate the level curves of
$\frak{Im} \big[ \, \overline{\partial} F_\ell \, \mathcal{E} \big] = 0$
within the disk $ \Bbb{D}_\ell$ of radius $\ell$ where

\[ \mathcal{E} := \ \frak{e}_{12} \, - \, \frak{e}_{23} \, + \, \frak{e}_{34} \, - \, \frak{e}_{14}
\ \  \text{and} \ \
\D \frak{e}_{mn} := {{ \overline{\frak{z}}_m - \overline{\frak{z}}_n }  \over { \frak{z}_m - \frak{z}_n }}
\ \ \text{for}  \ m, n \in \{1,2,3,4\}. \]

\noindent
The appearance of continuous interfaces is
a prodigy (of the existence) of the scaling limit for the anomaly, as
formalized in Lemma \ref{uniform-convergence} and Proposition \ref{continuum-limits-anomalies}.
In the case of the mollified-shear example, the corresponding level curves  are depicted in red by Figure \ref{mollified-shear}. 


\begin{figure}[h]
\begin{center}
\raisebox{-.8in}{\includegraphics[width=2.3in]{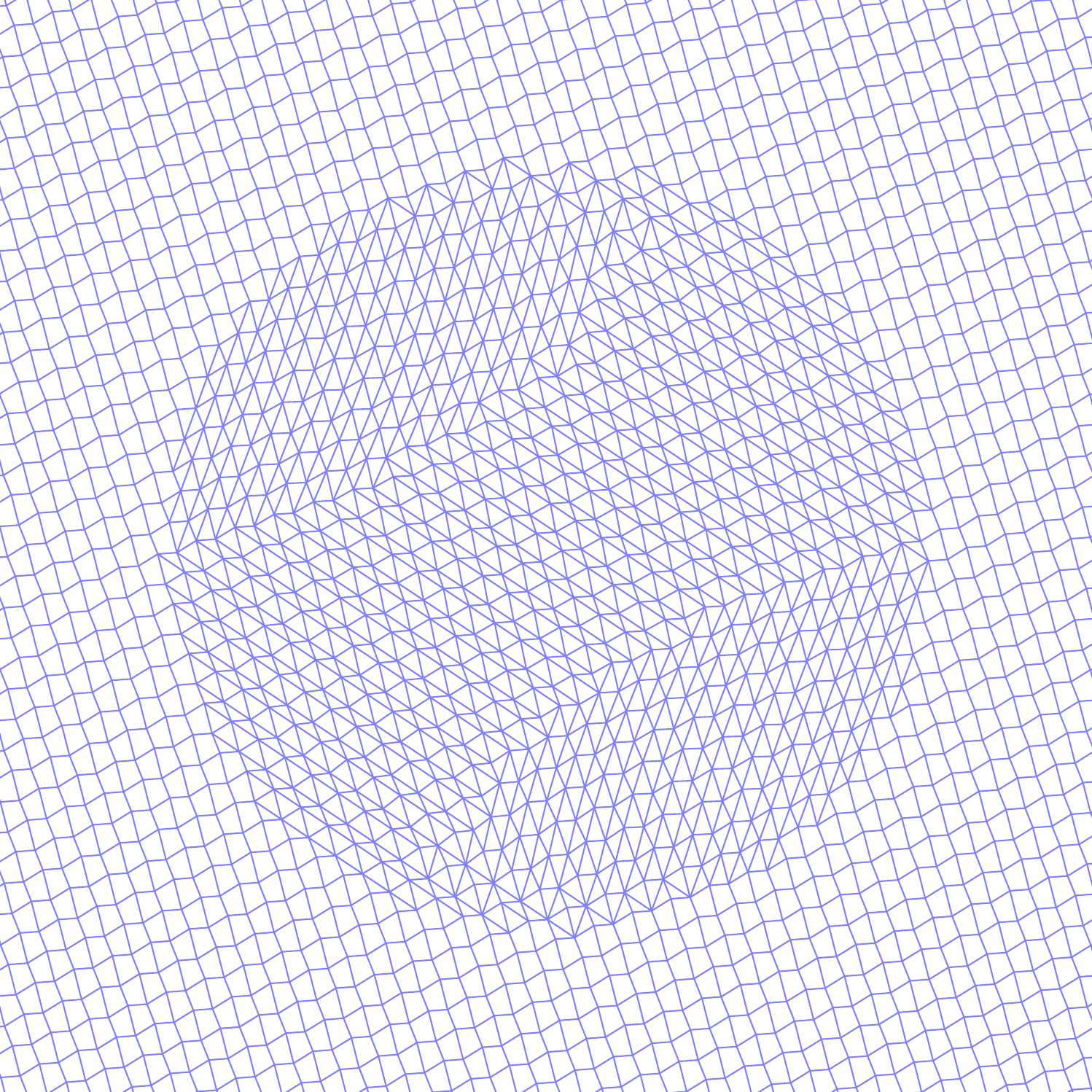}}
\raisebox{-.8in}{\includegraphics[width=2.3in]{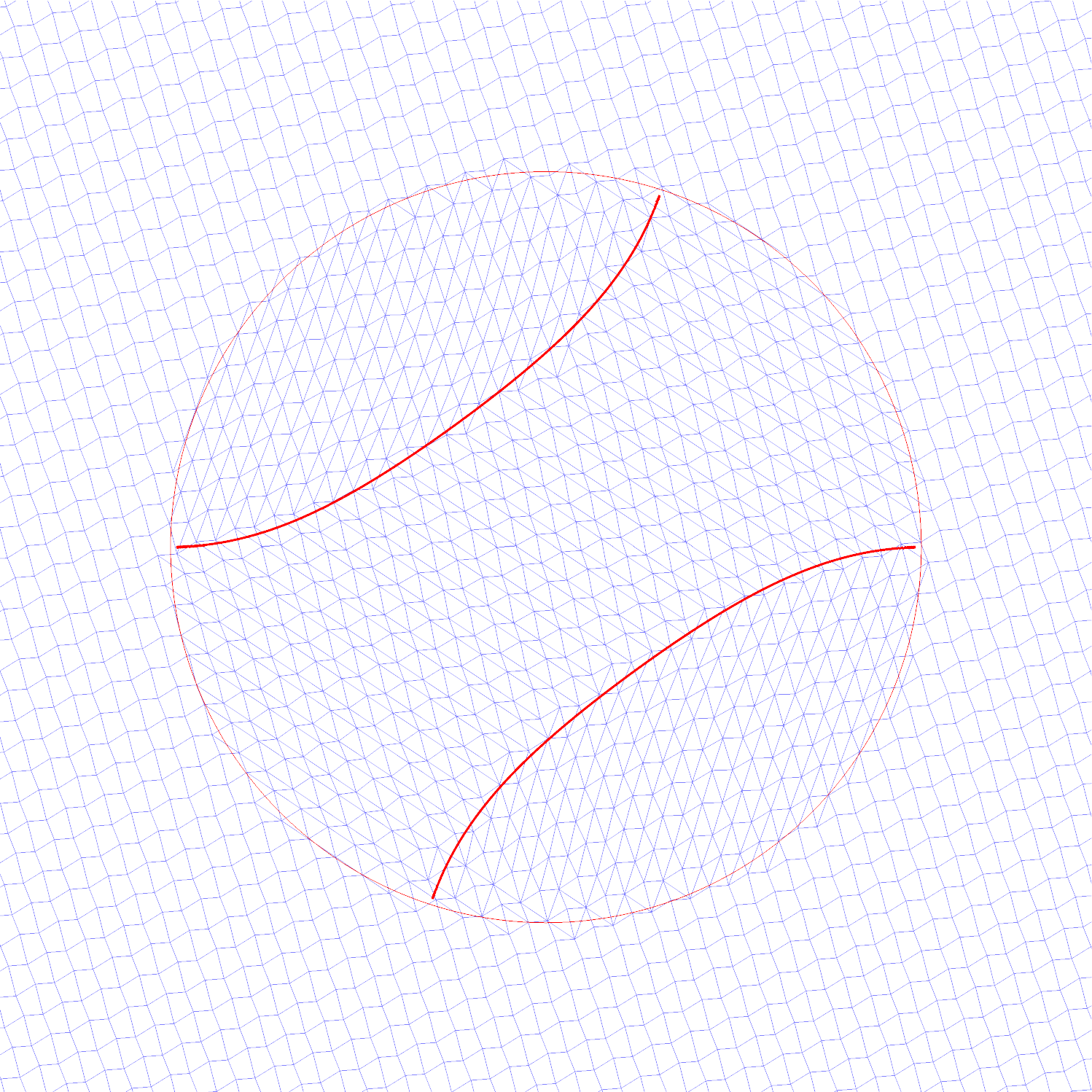}}
\caption{mollified-shear with angle value $\phi = -{\pi \over 5}$, 
deformation parameter value $\epsilon =0.1 $, and scaling parameter value $\ell = 22$}
\label{mollified-shear}
\end{center}
\end{figure}

In order to analyze the anomalous terms arising in the 
second order variation of the conformal Laplacian 
we return to using a scaled, bi-local perturbation as 
prescribed in Section \ref{sScalingLimit}.
As before $F_1(z)$ and $F_2(z)$ are complex-valued functions whose
supports $\Omega_1$ and $\Omega_2$ are compact 
and whose lattice closures $\overline{\Omega}_1$ and
$\overline{\Omega}_2$ are disjoint. In addition
$\underline{\epsilon} =(\epsilon_1, \epsilon_2 )$
is a pair of independent deformation parameters 
and $\ell > 0$ is a scaling parameter. 
Let $\uG_{\underline{\epsilon}, \ell}$ denote
the Delaunay graph associated to the
deformed embedding

$$ z_{\underline{\epsilon}, \ell} (\uv) :=  z(\uv) + \epsilon_1 F_{1; \ell}( z(\uv) ) + \epsilon_2 F_{2; \ell} ( z (\uv)) $$

\noindent
where the deformation parameters
are constrained with the range $0 \leq \epsilon_1 ,\epsilon_2 < \min( \tilde{\epsilon}_{F_1}, \tilde{\epsilon}_{F_2})$ 
whose bounds $\tilde{\epsilon}_{F_1}, \tilde{\epsilon}_{F_2}$
are specified in Prop.~\ref{epsilontildeF}.



Given $p \in \Bbb{C}$ and a value of the scaling parameter $\ell > 0$ 
center a copy of the fundamental quadrilateral $\mathcal{Q}$
about the dilated point $\ell p \in \Bbb{C}$. The 
coordinates of its vertices are $q_\ell(p;k) = \ell p + \frak{z}_k$ for $k \in \{1, 2, 3, 4 \}$.
The perturbation will displace these vertices by
$q_\ell(p;k) \mapsto q_{\underline{\epsilon},\ell}(p;k)$
where 

$$q_{\underline{\epsilon},\ell}(p;k) := q_\ell(p;k) + \epsilon_1 F_{1;\ell} \big( q_\ell(p;k)  \big) + \epsilon_2 F_{2; \ell} \big( q_\ell(p;k) \big)$$

\noindent
The conformal angle $\kappa_{\underline{\epsilon},\ell}(p)$ and its $\epsilon_i$-derivatives
$\thedelta_{\epsilon_i} \kappa_\ell(p)$ are accordingly defined by:  

\begin{equation}
\begin{array}{ll}
\D \kappa_{\underline{\epsilon},\ell}(p) 
&\D = \ \frak{Im} \log \left[
{{ \big( q_{\underline{\epsilon},\ell}(p;4) - q_{\underline{\epsilon},\ell}(p;3) \big) \, \big( q_{\underline{\epsilon},\ell}(p;2) - q_{\underline{\epsilon},\ell}(p;1) \big) } 
\over {\big( q_{\underline{\epsilon},\ell}(p;4) - q_{\underline{\epsilon},\ell}(p;1) \big) \, \big(q_{\underline{\epsilon},\ell}(p;2) - q_{\underline{\epsilon},\ell}(p;3) \big)} } \right]  
\end{array}
\end{equation}

\begin{equation}
\label{derivative-conformal-angle-at-zero}
\begin{array}{ll}
\D  \thedelta_{\epsilon_i} \kappa_\ell(p)
&\D = \ {\partial \over {\partial \epsilon_i} } \, \Bigg|_{\epsilon_i = 0} \kappa_{\underline{\epsilon},\ell}(p) \\ \\
&\D =  \
\left\{
\begin{array}{c}
\D \frak{Im} \left[ \overline{\nabla} F_i \Big( p + \frak{z}_1/\ell, \, p + \frak{z}_2/\ell, \, p + \frak{z}_4/\ell \Big) \big( \frak{e}_{12} - \frak{e}_{14} \big) \right] \\ 
\D + \\ 
\D \frak{Im} \left[ \overline{\nabla} F_i \Big(p + \frak{z}_2/\ell, \, p + \frak{z}_3/\ell, \, p + \frak{z}_4/\ell \Big) \big( \frak{e}_{34} - \frak{e}_{23} \big) \right] 
\end{array} 
\right. \\ \\
&\D = \ \frak{Im} \Big[ \, \overline{\partial} F_i(p) \, \mathcal{E} \Big] \ + \ \mathrm{O}\big( 1/ \ell \big)
\end{array}
\end{equation}

\begin{lemma}
\label{uniform-convergence}
Fix a value of the scaling parameter $\ell > 0$, then for any pair of points $p, z \in \mathrm{supp} F_i$ with $|z-p| < 1/\ell$

\begin{equation}
\Big| \thedelta_{\epsilon_i} \kappa_\ell(z) - \frak{Im} \Big[ \overline{\partial}F_i(p)  \, \mathcal{E} \Big] \Big| \leq 4/\ell \, M_i(z,\ell)
\quad \mathrm{where}
\end{equation}

\begin{equation}
M_i(z,\ell) := \, \max_{|w-z| \, < \, 1/\ell} \big| \partial^2 F_i(w) \big| \, 
+ \, 2 \max_{|w-z| \, < \, 1/\ell} \big| \partial \overline{\partial} F_i(w) \big| \,
+ \,  \max_{|w-z| \, < \, 1/\ell} \big| \overline{\partial}^2 F_i(w) \big|
\end{equation}
\end{lemma}

\begin{proof}
For brevity we'll simply write $F$ instead of either $F_1$ or $F_2$
and $\thedelta_\epsilon \kappa_\ell(z)$ instead of 
$\thedelta_{\epsilon_1} \kappa_\ell(z)$ or $\thedelta_{\epsilon_2} \kappa_\ell(z)$.
For indices  $i, j ,k \in \{1,2,3,4 \}$ we'll use the provisional notation

$$ A_{ijk} := \  \overline{\nabla} F \Big( z + \frak{z}_i /\ell, \, z + \frak{z}_j /\ell, \, z + \frak{z}_k/\ell \Big) 
 \, -  \, \overline{\partial}F(p)  $$

\noindent
By Remark \ref{robust-lemma} if $|z - p| < 1/\ell$ we have

$$ \big| A_{ijk} \big| \, \leq \,  
R \, \Bigg(  \, {5 \over 2} \, 
\max_{z \in B}
\big| \partial^2 F  \big| \, + \, 
3 \,  \max_{z \in B} \big| \partial \overline{\partial} F \big| \, + \,  {1 \over 2} \, \max_{z \in B}
\big| \overline{\partial}^2  F \big| \, \Bigg)
$$

\noindent
where $B$ is the disk of radius $R= 1/\ell$ centered at $z$. By formula
(\ref{derivative-conformal-angle-at-zero}) we have

\begin{equation}
\begin{array}{ll}
\Big| \thedelta_\epsilon \kappa_\ell(z) \, -  \, \frak{Im} \Big[ \overline{\partial}F(p) \mathcal{E} \Big] \Big|
& = \,
\Big|  \frak{Im} \Big[  A_{124} \big( \frak{e}_{12} - \frak{e}_{14} \big)  \, + \,
 A_{234}  \big( \frak{e}_{34} - \frak{e}_{23} \big)  \Big]  \Big| \\ \\
& \leq \, 
\big| A_{124} \big|  \cdot \big| \frak{e}_{12} - \frak{e}_{14} \big|
\, + \, 
\big| A_{234} \big| \cdot  \big| \frak{e}_{34} - \frak{e}_{23} \big| \\ \\
& \leq \, 
2 \Big( \big| A_{124} \big| \, + \, 
\big| A_{234} \big| \Big) \\ \\
\end{array}
\end{equation}
Accordingly we have
$$
\Big| \thedelta_\epsilon \kappa_\ell(z)   \, -  \, \frak{Im} \Big[ \overline{\partial}F(p) \mathcal{E} \Big] \Big|
\, \leq \,
4/ \ell
\Bigg(  \, {5 \over 2} \, 
\max_{z \in B}
\big| \partial^2 F  \big| \, + \, 
3 \,  \max_{z \in B} \big| \partial \overline{\partial} F \big| \, + \,  {1 \over 2} \, \max_{z \in B}
\big| \overline{\partial}^2  F \big| \, \Bigg)
$$
\end{proof}

\begin{Def}
For a fixed value of the scaling parameter $\ell >0$ and any (continuous) function $\phi: \Bbb{C} \longrightarrow \Bbb{C}$
let us introduce the following piecewise abridgment 

\begin{equation}
\big\langle \phi \big\rangle_\ell (p) := \ 
\left\{ 
\begin{array}{ll}
\D \phi \big(z_\uq/\ell \big)
&\D
\begin{array}{l}
\D \text{whenever $\ell p \in \mathrm{int} \big( \uq \big)$} \\
\D \text{for a quadrilateral $\uq$}
\end{array} \\ \\
\D {1 \over 2}  \sum_{k=1}^2 \phi \big(z_{\uq_k}/\ell \big)
&\D 
\begin{array}{l}
\D \text{whenever $\ell p \in \mathrm{int} \big( \partial \uq_1 \cap \partial \uq_2 \big)$ for} \\
\D \text{a pair of quadrilaterals $\uq_1$ and $\uq_2$}
\end{array} \\ \\
\D {1 \over 4}  \sum_{k=1}^4 \phi \big(z_{\uq_k}/\ell \big)
&\D 
\begin{array}{l}
\D \text{whenever $\ell p \in \partial \uq_1 \cap \partial \uq_2  \cap \partial \uq_3 \cap \partial \uq_4$} \\
\D \text{for quadrilaterals $\uq_1$, $\uq_2$, $\uq_3$, and $\uq_4$}
\end{array}

\end{array}
\right.
\end{equation}
\end{Def}

\begin{Rem}
Let $\chi_{i ; \ell} := \big\langle \thedelta_{\epsilon_i} \kappa_\ell  \big\rangle_\ell $
then
$\chi_{i; \ell}  \longrightarrow   \frak{Im} \big[ \, \overline{\partial} F_i \, \mathcal{E}  \big]$ uniformly
in the $\ell \rightarrow \infty$ limit. Furthermore $\chi^{\pm}_{i ; \ell} \longrightarrow 
 \frak{Im}^\pm \big[ \, \overline{\partial} F_i \, \mathcal{E}  \big]$ uniformly as $\ell \rightarrow \infty$
 where $g^+(p) := \mathrm{max} \, (g(p), 0)$ and $g^-(p) := - \,  \mathrm{min} \, (g(p),0)$
 for any real-valued function $g: \Bbb{C} \longrightarrow \Bbb{R}$.
\end{Rem}

\begin{Prop}
\label{continuum-limits-anomalies}
For signs  $\sigma, \tau \in \{+ , - \}$ define

\begin{equation}
\begin{array}{l}
\D J^{(\sigma, \tau)} := \
{  \tan^2 \vartheta_\sigma \tan^2 \vartheta_\tau \over {16 \pi^2A^2_\mathcal{Q}}}  \iint\limits_{\Omega_1 \times \Omega_2} d^2x \, d^2y \
\frak{Im}^\sigma \Big[ \, \overline{\partial}F_1(x) \, \mathcal{E} \Big] 
\Bigg[ \frak{Re} \, { z_\sigma \, z_\tau \ \over {\big(x - y \big)^2}} \Bigg]^2
\frak{Im}^\tau \Big[ \, \overline{\partial}F_2(y) \,  \mathcal{E} \Big] \\ \\
\D J_\sigma^{(1)} := \
{  \tan^2 \vartheta_\sigma \over {8 \pi^2A_\mathcal{Q}}}  \iint\limits_{\Omega_1 \times \Omega_2} d^2x \, d^2y \
\frak{Im}^\sigma \Big[ \, \overline{\partial}F_1(x) \, \mathcal{E} \Big] \,  \frak{Re}
\Bigg[ { z_\sigma^2 \, \overline{\partial}F_2(y) \ \over {\big(x - y \big)^4}} \Bigg] \\ \\
\D J_\sigma^{(2)} := \
{  \tan^2 \vartheta_\sigma \over {8 \pi^2A_\mathcal{Q}}}  \iint\limits_{\Omega_1 \times \Omega_2} d^2x \, d^2y \
\frak{Re}
\Bigg[ { \overline{\partial}F_1(x) \, z_\sigma^2 \ \over {\big(x - y \big)^4}} \Bigg] \,
\frak{Im}^\sigma \Big[ \, \overline{\partial}F_2(y) \, \mathcal{E} \Big]
\end{array}
\end{equation}

\noindent
The continuum limits of the edge-to-chord $\Bbb{A}_\ell^{\mathrm{ed}\times\mathrm{ch}}$, chord-to-edge $\Bbb{A}_\ell^{\mathrm{ch}\times\mathrm{ed}}$, and chord-to-chord  $\Bbb{A}_\ell^{\mathrm{ch}\times\mathrm{ch}}$ anomalies exist and their values are: 

\begin{equation}
\begin{array}{ll}
\D \lim_{\ell \rightarrow \infty} \Bbb{A}^{\mathrm{ed} \times \mathrm{ch}}_\ell 
&\D = \
J^{(2)}_+ \, + \, J^{(2)}_- \\ \\
\D \lim_{\ell \rightarrow \infty} \Bbb{A}^{\mathrm{ch} \times \mathrm{ed}}_\ell 
&\D = \
J^{(1)}_+ \, + \, J^{(1)}_- \\ \\
\D \lim_{\ell \rightarrow \infty} \Bbb{A}^{\mathrm{ch} \times \mathrm{ch}}_\ell 
&\D = \
J^{\scriptscriptstyle (+,+)} \, + \, J^{\scriptscriptstyle (+,-)} \, + \, J^{\scriptscriptstyle (-,+)} \, + \, J^{\scriptscriptstyle (-,-)}
\end{array}
\end{equation}
\end{Prop}

\begin{proof}
We'll verify the claim in the case of the chord-to-chord anomaly  $\Bbb{A}_\ell^{\mathrm{ch}\times\mathrm{ch}}$ and leave the remaining
cases to the reader. Begin with a pair of signs $\sigma, \tau \in \{\pm \}$.
For $(x,y) \in \Omega_1 \times \Omega_2$ let's introduce the following step-function

\begin{equation}
\D \Phi_\ell^{\sigma,\tau}(x,y) \, :=
\left\{
\begin{array}{ll}
\D \Big[  \thedelta_{\epsilon_1} \kappa_\ell (z_\ux/\ell) \Big]^\sigma
\cdot
\Bigg[ \frak{Re} \,  {z_\sigma \, z_\tau
\over {\big(z_\ux - z_\uy \big)^2}} \Bigg]^2 
\cdot \Big[  \thedelta_{\epsilon_2} \kappa_\ell (z_\uy/\ell) \Big]^\tau
&\D 
\begin{array}{l} \text{$\ell x \in \mathrm{int}(\ux)$} \\ 
\text{$\ell y \in \mathrm{int}(\uy)$} \\
\text{$\ux,\uy \in \mathrm{F}(\uG_\mathrm{cr})$}
\end{array} \\ \\
\D \text{bounded noise}
&\D 
\begin{array}{l}
\text{otherwise} 
\end{array}
\end{array}
\right.
\end{equation}

\bigskip
\noindent
Note that $\Bbb{A}_\ell^{\mathrm{ch}\times\mathrm{cr}} = \Bbb{J}_\ell^{\scriptscriptstyle (+,+)} +  \Bbb{J}_\ell^{\scriptscriptstyle (+,-)}
+  \Bbb{J}_\ell^{\scriptscriptstyle (-,+)} + \Bbb{J}_\ell^{\scriptscriptstyle (-,-)}$ where

\begin{equation}
\Bbb{J}_\ell^{( \sigma,\tau )} \ = \
{ \tan^2 \vartheta_\sigma \tan^2 \vartheta_\tau \over {16 \pi^2}} 
\sum_{\stackrel { \scriptstyle \ux \, \in \, \mathrm{F}(\uG_\mathrm{cr} )}{\ux \, \cap \, \Omega_1(\ell) \ne \emptyset }} \,
\sum_{\stackrel{ \scriptstyle \uy \in \mathrm{F}(\uG_\mathrm{cr})}
{\scriptstyle \uy \, \cap \, \Omega_2(\ell) \ne \emptyset} }
\Phi_\ell^{\sigma,\tau}\Big(z_\ux/\ell, z_\uy/\ell \Big)
\end{equation}

\bigskip
\noindent
It follows from Lemma \ref{uniform-convergence} that $\Phi_\ell^{\sigma,\tau}(x,y) \rightarrow \Phi^{\sigma,\tau}(x,y)$ converges uniformly 
on $\Omega_1 \times \Omega_2$ as $\ell \rightarrow \infty$ where

\[ 
\Phi^{\sigma,\tau}(x,y) \, := \
\frak{Im}^\sigma \Big[ \, \overline{\partial} F_1(x) \, \mathcal{E} \Big] \cdot
\Bigg[ \frak{Re} \,
{ z_\sigma \, z_\tau \ \over {\big(x - y \big)^2}} \Bigg]^2
\cdot \frak{Im}^\tau \Big[ \, \overline{\partial}F_2(y) \, \mathcal{E} \Big]
\]

\begin{equation}
\begin{array}{ll}
\D J^{(\sigma,\tau)}
&\D = \ 
{ \tan^2 \vartheta_\sigma \tan^2 \vartheta_\tau \over {16 \pi^2 A_\mathcal{Q}^2}} 
\, \iint\limits_{\Omega_1 \times \Omega_2} d^2x \, d^2y \ \Phi^{\sigma,\tau}(x,y) \\ \\
&\D = \ 
{ \tan^2 \vartheta_\sigma \tan^2 \vartheta_\tau \over {16 \pi^2 A_\mathcal{Q}^2}} 
\, \lim_{\ell \rightarrow \infty}
\, \iint\limits_{\Omega_1 \times \Omega_2} d^2x \, d^2y \ \Phi_\ell^{\sigma,\tau}(x,y) \\ \\
&\D = \
{ \tan^2 \vartheta_\sigma \tan^2 \vartheta_\tau \over {16 \pi^2}} 
\, \lim_{\ell \rightarrow \infty} 
\sum_{\stackrel { \scriptstyle \ux \, \in \, \mathrm{F}(\uG_\mathrm{cr} )}{\ux \, \cap \, \Omega_1(\ell) \ne \emptyset }} \,
\sum_{\stackrel{ \scriptstyle \uy \in \mathrm{F}(\uG_\mathrm{cr})}
{\scriptstyle \uy \, \cap \, \Omega_2(\ell) \ne \emptyset} }
\Phi_\ell^{\sigma,\tau}\Big(z_\ux/\ell, z_\uy/\ell \Big) \\ \\
&\D = \
\lim_{\ell \rightarrow \infty} \,
\Bbb{J}_\ell^{(\sigma,\tau) } 
\end{array} 
\end{equation}
\end{proof}

\bibliographystyle{alpha}
\bibliography{perturbing}

\end{document}